\documentclass[11pt]{article}

\usepackage[T1]{fontenc}
\usepackage[english]{babel}
\usepackage[letterpaper,margin=1in]{geometry}
\usepackage{enumitem}

\usepackage{nicefrac}

\usepackage{amsmath}
\usepackage{amssymb}
\usepackage{amsthm}
\usepackage{bbm}
\usepackage{thmtools}
\usepackage{graphicx}
\graphicspath{{./}}
\usepackage{physics}
\usepackage{booktabs}
\usepackage{multirow}
\usepackage{pgfplots}
\usepackage{standalone}
\usepackage[compat=newest]{yquant}
\pgfplotsset{compat=1.18}
\usepgfplotslibrary{colormaps}
\usepackage[colorlinks=true, allcolors=teal, hypertexnames=false]{hyperref}
\usepackage[capitalise]{cleveref}
\usepackage{url}
\usepackage{csquotes}
\usepackage{bm}
\usepackage{comment}
\usepackage[dvipsnames]{xcolor}
\usepackage{enumitem}
\setlist[enumerate,itemize]{itemsep=0.4ex,topsep=0.8ex,leftmargin=1.5em}
\usepackage{tikz}
\usetikzlibrary{shapes.geometric, positioning, arrows.meta}

\usepackage{stmaryrd}
\newcommand{\gcomm}[2]{\llbracket #1,#2 \rrbracket}
\definecolor{charcoal}{RGB}{47, 53, 66}       %
\definecolor{bordergray}{RGB}{128,0,128}     %
\definecolor{softgray}{RGB}{230,230,250}    %
\definecolor{crimson}{RGB}{179,27,27}       %
\definecolor{crimsonlight}{RGB}{255, 234, 234}%
\definecolor{arrowcolor}{RGB}{38,67,72}

\usepackage{titlesec}
\titleclass{\subsubsubsection}{straight}[\subsection]

\newcounter{subsubsubsection}[subsubsection]
\renewcommand\thesubsubsubsection{\thesubsubsection.\arabic{subsubsubsection}}

\titleformat{\subsubsubsection}
{\normalfont\normalsize\bfseries}{\thesubsubsubsection}{1em}{}
\titlespacing*{\subsubsubsection}
{0pt}{3.25ex plus 1ex minus .2ex}{1.5ex plus .2ex}
\makeatletter
\def\toclevel@subsubsubsection{4}
\def\l@subsubsubsection{\@dottedtocline{4}{7em}{4em}}
\makeatother
\usepackage[giveninits=true,maxbibnames=99,style=alphabetic,maxalphanames=4,minalphanames=3,isbn=false]{biblatex}

\AtEveryBibitem{%
    \clearlist{language}%
    \iffieldequalstr{eprinttype}{arxiv}{\clearfield{doi}}{}%
}
\renewbibmacro*{doi+eprint+url}{%
    \iftoggle{bbx:doi}
    {\printfield{doi}}
    {}%
    \newunit\newblock
    \ifboolexpr{togl {bbx:eprint} and test {\iffieldundef{doi}}}
    {\usebibmacro{eprint}}
    {}%
    \newunit\newblock
    \ifboolexpr{togl {bbx:url} and test {\iffieldundef{doi}}  and test {\iffieldundef{eprint}}}
    {\usebibmacro{url+urldate}}
{}}

\usepackage{algorithm}
\usepackage{algpseudocodex}

\usepackage{quantikz}
\usepackage{tikz}
\usepackage{complexity}
\usepackage{totcount}
\newtotcounter{todoitems}
\usepackage{xcolor}

\let\epsilon\varepsilon

\newcommand{\model}{\textup{BEQC}}

\NewDocumentCommand{\modelparam}{m m m}{%
  \IfBlankTF{#1}
    {\model(#2,#3)}
    {\model_{\mathcal{#1}}(#2,#3)}%
}
\newcommand{\modelname}{Bosonic Energy-Preserving Quantum Computation}
\newcommand{\thermalmodel}{ThermalBEQC}

\newcommand{\hD}{\hat{D}}

\newcommand{\hh}{\hat{H}}
\newcommand{\II}{\mathbb I}

\newcommand{\hN}{\hat{N}}

\newcommand{\hP}{\hat{p}}
\newcommand{\ph}{\hat{P}_{\hat{H}}}
\newcommand{\hQ}{\hat{q}}

\newcommand{\hS}{\hat{S}}

\newcommand{\vm}{\bm{m}}
\newcommand{\vn}{\bm{n}}
\newcommand{\LL}{\hat{L}^{(N_{\mathrm{max}})}}

\newclass{\CVBQP}{CVBQP}

\newclass{\PrecQMA}{PrecQMA}
\newclass{\TOWER}{TOWER}
\newclass{\PTOWER}{PTOWER}
\newclass{\AzPP}{A_0PP}
\newclass{\Pp}{P}
\newclass{\QPp}{QP}
\newclass{\BQPSPACE}{BQPSPACE}

\NewDocumentCommand{\BEQP}{m m}{%
  \IfBlankTF{#1}
    {\IfBlankTF{#2}
      {\class{BEQP}}                        %
      {\class{BEQP}(#2)}}                   %
    {\IfBlankTF{#2}
      {\class{BEQP}_{\mathcal{#1}}}         %
      {\class{BEQP}_{\mathcal{#1}}(#2)}}%
}

\NewDocumentCommand{\ThermalBEQP}{m m}{%
  \IfBlankTF{#1}
    {\IfBlankTF{#2}
      {\class{ThermalBEQP}}                        %
      {\class{ThermalBEQP}[#2]}}                   %
    {\IfBlankTF{#2}
      {\class{ThermalBEQP}_{\mathcal{#1}}}         %
      {\class{ThermalBEQP}_{\mathcal{#1}}[#2]}}%
}

\newclass{\PrecBEQP}{PreciseBEQP}

\newcommand{\ad}{\mathrm{ad}}

\newcommand\calH{\mathcal{H}}

\newcommand\calS{\mathcal{S}}
\newcommand\calT{\mathcal{T}}
\newcommand\calG{\mathcal{G}}
\newcommand{\GLO}{\calG_{\mathrm{LO}}}
\newcommand{\GKerr}{\calG_{\textup{Kerr}}}
\newcommand{\GcrossKerr}{\calG^{\times}_{\textup{Kerr}}}

\newcommand\calD{\mathcal{D}}

\newcommand\calC{\mathcal{C}}

\newcommand\calU{\mathcal{U}}

\newcommand{\wtH}{\widetilde{H}}

\newcommand\Span{\mathrm{span}}
\newcommand\NN{\mathbb{N}}
\newcommand\CC{\mathbb{C}}
\newcommand\RR{\mathbb{R}}
\newcommand\ZZ{\mathbb{Z}}

\newcommand\uuarr{\mathbin{\upuparrows}}

\newcommand{\bfn}{\bm{n}}
\newcommand{\bfm}{\bm{m}}
\newcommand{\bfsigma}{\bm{\sigma}}

\newcommand{\Ayes}{A_{\mathrm{yes}}}
\newcommand{\Ano}{A_{\mathrm{no}}}

\newcommand\wtC{\widetilde{C}}
\newcommand\wtcalC{\widetilde{\calC}}

\newcommand\Q{{\hat q}}
\renewcommand\P{{\hat p}}
\newcommand\N{{\hat n}} %
\renewcommand\a{{\hat a}}
\newcommand\Var{\mathrm{Var}}
\newcommand\diag{\mathrm{diag}}

\newcommand{\calN}{{\mathcal{N}}}

\renewcommand{\poly}{{\mathrm{poly}}}

\newcommand{\Dfin}{{\calD_{\mathrm{fin}}}}

\newcommand{\End}{\mathrm{End}}

\newcommand\bd{\hat{b}^\dagger}
\renewcommand\ad{\hat{a}^\dagger}

\newcommand{\Cinv}{\calC_{\mathrm{inv}}}
\newcommand\adag{\hat{a}^\dagger}

\newcommand{\adj}{\mathrm{ad}}

\renewcommand\b{{\hat{b}}}
\newcommand\sfX{\mathsf{X}}
\newcommand\Y{{\hat y}}
\newcommand\Ran{\mathrm{Ran}}

\newcommand{\Halg}{H_{\mathrm{alg}}}

\newcommand{\nb}{\bar{n}}
\newcommand{\lz}{\lambda_L}
\newcommand{\la}{\lambda_H}

\newcommand{\DATASWAP}{\textsf{DATA-SWAP}}
\newcommand{\Anm}{\hat{A}_{\vn, \vm}}
\newcommand{\Tnm}{\hat{T}_{\vn, \vm}}

\makeatletter
\newcommand{\subalign}[1]{%
  \vcenter{%
    \Let@ \restore@math@cr \default@tag
    \baselineskip\fontdimen10 \scriptfont\tw@
    \advance\baselineskip\fontdimen12 \scriptfont\tw@
    \lineskip\thr@@\fontdimen8 \scriptfont\thr@@
    \lineskiplimit\lineskip
    \ialign{\hfil$\m@th\scriptstyle##$&$\m@th\scriptstyle{}##$\hfil\crcr
      #1\crcr
    }%
  }%
}
\makeatother

\algrenewcommand{\algorithmicfunction}{\texttt{function}}
\algrenewcommand{\algorithmicif}{\texttt{if}}
\algrenewcommand{\algorithmicthen}{\texttt{then}}
\algrenewcommand{\algorithmicelse}{\texttt{else}}
\algrenewcommand{\algorithmicreturn}{\texttt{return}}
\algrenewcommand{\algorithmicend}{\texttt{end}}

\theoremstyle{plain}
\newtheorem{theorem}{Theorem}[section]

\newtheorem{claim}[theorem]{Claim}
\newtheorem{lem}[theorem]{Lemma}

\newtheorem{proposition}[theorem]{Proposition}
\newtheorem{corollary}[theorem]{Corollary}

\theoremstyle{definition}
\newtheorem{remark}[theorem]{Remark}
\newtheorem{definition}[theorem]{Definition}

\crefname{lem}{Lemma}{Lemmas}
\crefname{claim}{Claim}{Claims}
\crefname{claim}{Claim}{Claims}
\crefname{paragraph}{Paragraph}{Paragraphs}
\crefname{subsubsubsection}{Section}{Sections}

\newcommand\Ntot{\hN}
\newcommand\bfa{\bm{\hat a}}
\newcommand\be{\begin{equation}}
\newcommand\ee{\end{equation}}
\newcommand\SevSkip{\par\medskip\noindent}
\newcommand\proofstep[1]{\par\medskip\noindent\emph{#1}}

\usepackage{authblk}

\title{A physical and universal model of bosonic computations\\
with Solovay--Kitaev theorem}

\author[1]{Dorian Rudolph}
\author[2]{Arsalan Motamedi}
\author[1]{Dhruva Sambrani}
\author[1]{Hamid Reza Naeij}
\author[3]{\authorcr Ulysse Chabaud}
\author[1]{Sevag Gharibian}
\author[4]{Saeed Mehraban}

\affil[1]{\small Paderborn University and PhoQS, Warburger Stra{\ss}e 100, 33098 Paderborn, Germany}
\affil[2]{University of Toronto, Ontario, Canada}
\affil[3]{DIENS, \'Ecole Normale Sup\'erieure, PSL University, CNRS, INRIA, 45 rue d’Ulm, Paris, 75005, France}
\affil[4]{Tufts University, Medford, MA, USA}

\date{\today}

\begin{document}

\maketitle

\begin{abstract}
Bosonic quantum systems are among the leading architectures for quantum information processing, offering continuous-variable degrees of freedom with strong error-correction capabilities. However, standard bosonic quantum computation models such as the Lloyd--Braunstein~\cite{lloyd_quantum_1999} and hybrid oscillator-qubit models~\cite{brenner2025trading,liu2026hybrid} permit dramatic energy growth, leading to unphysical computational power and the breakdown of fundamental algorithmic tools such as universal and efficient compilation \cite{brenner2025factoring,CGMMNRS25}. To address this, we introduce a new model of bosonic quantum computation, \modelname\ (\model), in which energy is treated as a computational resource. Namely, energy is supplied solely through input coherent states, and all gates are generated by \textit{energy-preserving} Hamiltonians. Thus, by construction, dramatic energy growth is impossible, making the model physically grounded.
    
We next show that \model\ is a computationally robust and universal model in many respects, including: (1. Computational power) \model\ efficiently simulates all polynomial-energy %
computations in existing models, and exactly recovers $\mathsf{BQP}$ in the polynomial energy setting. It further admits several complexity-theoretic upper bounds when varying the energy, precision, and space parameters of the model. (2. Universal gate sets and state synthesis) \model\ has natural universal gate sets based on linear optics and Kerr interactions. In particular, we obtain a Solovay--Kitaev theorem which circumvents previous no-go results. We give various applications, including (a) a protocol for engineering GKP states with rigorous preparation guarantees, (b) Fock state preparation to exponential precision, and (c) native Fock space simulation of any qubit-based unitary. %
\end{abstract}

\tableofcontents

\bigskip

\begin{quote}
    \emph{The theory of computation has traditionally been studied almost entirely in the abstract, as a topic in pure mathematics. This is to miss the point of it. Computers are physical objects, and computations are physical processes. What computers can or cannot compute is determined by the laws of physics alone, and not by pure mathematics.\\
        \vspace{-7mm}
        \begin{flushright}
            David Deutsch~\cite{deutschFabricRealityScience1997}
        \end{flushright}
    }
\end{quote}
\section{Introduction}

The limits of computation are physical, most evident in the Noisy Intermediate Scale Quantum (NISQ) computing era \cite{preskill2018quantum} (transitioning into the ``early fault-tolerant'' \cite{acharyaQuantumErrorCorrection2025,dasuComputingManyEncoded2026, eisert2025mind} era), which has required increasingly close collaboration between quantum hardware and quantum algorithm designers. This, in turn, has put a spotlight on a fundamental theme in quantum computing: while most theoretical quantum algorithm design has traditionally taken place in the \emph{Discrete-Variable (DV)} setting (i.e.\ on finite-dimensional qu\emph{d}its), in practice many physical systems are of \emph{Continuous-Variable (CV)} nature (i.e.\ infinite-dimensional with continuous degrees of freedom, such as bosons). From a computational/platform perspective, this hence yields three possibilities for system design: ``Purely'' CV systems (e.g.\ quantum photonics, such as in Gaussian Boson Sampling experiments~\cite{aaronsonComputationalComplexityLinear2011,hamilton2017gaussian,zhongExperimentalGaussianBoson2019}), ``purely'' DV systems (e.g.\ superconducting qubits as in Random Circuit Sampling experiments~\cite{aruteQuantumSupremacyUsing2019,boulandComplexityVerificationQuantum2019}, which are CV systems masquerading as DV ones), and hybrid CV-DV systems (e.g.\ trapped ion, neutral atom, and even superconducting; see \cite{liu2026hybrid} for a survey). 

In this work, our focus is the first of these models --- CV systems, which, as far as \emph{physics} is concerned, aims to directly exploit the natural behavior of the underlying quantum system employed. Our goal is to define a new model, \modelname\ (\model), based on CV degrees of freedom, yet which canonically captures \emph{all three} types of models above in one physically and mathematically well-defined model.

\SevSkip
\textbf{The Lloyd--Braunstein model (LB)~\cite{lloyd_quantum_1999}.} The precursor of \model\ is the Lloyd--Braunstein CV model of CV quantum computing~\cite{lloyd_quantum_1999}. The latter is generally viewed as a standard model of universal bosonic computation in the literature~\cite{braunstein2005quantum, weedbrook2012gaussian}, defined as follows:
\begin{itemize}
    \item (State space) Quantum states live in $(\mathbb C^{\infty})^{\otimes n}\cong\ell^2 (\NN_0^{n},\CC)$, the set of square-summable complex sequences with an $n$-dimensional index set.

    \item (Initial state) The vacuum state on $n$ modes, denoted $\ket{0^n}$, analogous to the all-zeroes state on $n$ qubits.

    \item (Gates) Unitaries $e^{-iHt}$ acting non-trivially on $k\in O(1)$ modes, where $t$ is evolution time and Hamiltonian $H$ is an $O(1)$-degree polynomial in the position $\hQ$ and momentum operators $\hP$ (\cref{def:quadratures}), each monomial having constant coefficients. The total runtime of a sequence of $L$ gates $\prod_{j=1}^L e^{-it_jH_j}$ is $\sum_{j=1}^L t_j$. We assume that all gates are efficiently uniformly generated\footnote{Roughly, given input $x\in \{0,1\}^{\varsigma}$, a classical description of the sequence of bosonic gates to be applied can be generated in $\poly(\varsigma)$ time.}.

    \item (Measurement) A designated output mode is measured via the position operator $\hat q$, or the number operator $\hat n$, which counts the number of photons in said mode. The total photon-number operator is denoted $\hN$.
\end{itemize}
A common gate set in LB is the set of Gaussian gates (displacement (\cref{def:displacement}), squeezing (\cref{def:squeezing}), phase shift) and the cubic phase gate, generated by the Hamiltonian $\hQ^3$. These are in some sense analogous to Clifford versus $T$ gates in stabilizer theory: Gaussian gates\footnote{Formally, Gaussian gates are precisely those generated by any Hamiltonian of degree at most $2$. This includes the basic harmonic oscillator $\hat H_{\rm osc} = \frac{1}{2}(\hQ^2+\hP^2)$, which generates the Fourier transform operation as $e^{i\pi/2 \hat{H}_{\rm osc}}$.} are easy to both simulate (starting from the vacuum state) and experimentally implement (e.g.\ via photonics), whereas the cubic phase gate (which, being generated by a Hamiltonian of degree $3$, is non-Gaussian) is hard to implement. Taken together, these two gate families yield computations that are believed to be hard to simulate on a classical computer~\cite{CJMMM26}.
Of course, a gate being ``hard to implement'' is difficult to formally prove, and strongly hardware-dependent. Instead, we ask a more fundamental question: \emph{Is the LB model ``physical'', and if not, how can this be addressed?}

Before continuing, we remark that \model\ will capture more than just LB. In particular, hybrid models combining CV and DV degrees of freedom have recently been defined~\cite{brenner2025factoring,brenner2025trading,liu2026hybrid}, where couplings between CV and DV degrees of freedom (such as boson-fermion interactions) are used to instantiate non-Gaussian operations. Using a qubit-to-oscillator embedding developed in \cite{arzani2025can}, one can embed CV-DV interactions in purely CV computations. As a result, a comprehensive study of purely CV models also captures the power of quantum computation using either DV degrees of freedom or hybrid CV-DV degrees of freedom. Indeed, in \cref{sec:sim_existing_models} we formally show how to simulate such hybrid models.

\SevSkip\textbf{What makes a model physical?} To formally argue whether a model is ``physical'', we turn the quote of Deutsch on its head: \emph{Assuming LB can be implemented to high precision in the lab, what would it allow us to compute?} If the answer is ``unreasonable things'', one has formal evidence that said model is not physical. Underpinning our discussion will be the fact that LB is known in physics folklore to have a serious shortcoming --- it permits unbounded \emph{energy} (formally, average photon number (\cref{def:avg-photon-number})). This, in turn, manifests computationally in two ways which we focus on here: The ability to compute ``unreasonable things'', and the loss of fundamental robustness properties, such as universality and low-overhead gate compilation via fixed gate sets. Thus, LB is neither physical nor a robust computational model.

\SevSkip\emph{1. Computing ``unreasonable things''.} Already in 2005, Braunstein and van Loock~\cite{braunstein2005quantum} showed LB can attain arbitrarily high dense coding rates on fixed system sizes. More recently, \cite{CJMMM26,CGMMNRS25} showed that with the standard Gaussian + $\hQ^3$ gate set, evolution time $t$ in LB can spike the system's energy at a rate \emph{doubly exponential} in $t$. Even worse, for other polynomial gate sets, even \emph{constant} $t$ suffices to attain \emph{infinite} energy (i.e.\ formally, the average photon number diverges)~\cite{CGMMNRS25}. This energy can be formally harnessed to compute (wildly) unreasonable things ---
for example, $O(k)$-mode gates suffice to efficiently solve $\textsf{NTIME}(\exp^{(k)})$~\cite{CGMMNRS25}, i.e.\ non-deterministic computations whose runtime scales as a \emph{tower of exponentials} of height $k$.%

\SevSkip\emph{2. Loss of computational robustness.} Unbounded energy also makes obtaining universal gate sets challenging. Note that \cite{lloyd_quantum_1999} claims efficient compilation of arbitrary polynomial Hamiltonians into the Gaussian + $\Q^3$ gate set, to which we give an explicit counterexample (\cref{cor:uuarr}). The issue arises upon use of Trotter splitting, e.g.
$e^{itA}e^{itB}e^{-itA}e^{-itB} = e^{-t^2[A,B]} + O(t^3)$,
which does not generally hold for unbounded operators. If, in addition, we desire low-overhead gate compilation \emph{\`a la} Solovay--Kitaev, the situation is worse --- for the Gaussian + high-degree polynomial gates, a strong version of a bosonic Solovay--Kitaev (SK) theorem is impossible~\cite{CGMMNRS25}. In fact, in \cref{cor:uuarr} we show that even with an \emph{exponential} number of Gaussian + $\Q^3$ gates, one cannot surpass inverse exponential overlap with the output of a specific family of poly-size circuits. 

To date, universal gate sets for LB are either known only for certain subsets of gates, such as Gaussian gates~\cite{becker2021energy}, or require an infinite\footnote{Formally, the degree of the Hamiltonians in the universal set of~\cite{arzani2025can} scales with the desired approximation error~\cite{arzani2025can}.} gate set to generate all physical unitaries (i.e.\ unitaries which map finite-energy states to finite-energy states) within any desired precision~\cite{arzani2025can}. The lack of a fixed universal gate set is naturally problematic; for example, there exists an artificial gate set with which LB can simulate \textsf{BQP} (i.e.\ poly-time quantum computation), but the same is \emph{not} known for the standard Gaussian + $\Q^3$ set. This raises the question: \emph{Could bosonic computational power be fundamentally hardware-dependent?}

\subsection{Results}\label{sscn:results}
To address these shortcomings, we propose a new bosonic computational model, \modelname\ (\model), which is not only physical (i.e.\ inherently avoids energy blowups), but is also computationally powerful yet reasonable and robust (i.e.\ captures \textsf{BQP}, allows for Trotter splitting, permits natural fixed-size universal gate sets, as well as low-overhead circuit compilation via Solovay--Kitaev-like theorems, etc). 
Throughout, by ``polynomial Hamiltonian'', we mean constant degree as in the LB model, unless otherwise specified, and by energy we mean average photon number.

\subsubsection{Defining the model, \model: Energy from a resource state} \label{sssn:definingmodel}

\noindent\emph{The challenge.} \emph{A priori}, isolating the ``source'' of energy blowup in a bosonic circuit is difficult. Consider, for example, the Fourier transform $\hQ^2+\hP^2$ and cubic phase gate $\hQ^3$. Starting with a vacuum state, which has zero energy by definition, neither $\hQ^2+\hP^2$ nor $\hQ^3$ alone spikes system energy. However, simply \emph{alternating} the two does\footnote{This is the simple construction behind the doubly exponential energy lower bound~\cite{CGMMNRS25} mentioned above.}~\cite{CJMMM26,CGMMNRS25}. We also emphasize that simply imposing a promise on an energy bound is not a satisfactory solution, because (i) just reordering gates can cause a dramatic change in energy and (ii) knowing if energy diverges is itself undecidable~\cite{CGMMNRS25}. To circumvent this, we take inspiration from the study of resource ``magic'' states~\cite{knillFaultTolerantPostselectedQuantum2004}, and combine two ideas:

\SevSkip\emph{1. Energy-preserving gates.} From physics, we use the natural idea of allowing only \emph{energy-preserving} gates, which formally are photon-number-preserving unitaries $U$, meaning $[U,\hN]=0$ for photon number operator $\hN$. This is analogous to restricting a DV circuit to Clifford gates (which recall do not alter non-stabilizer magic), since energy-preserving gates leave the system energy unchanged\footnote{Another helpful analogy is performing computation based on symmetric gates, i.e.\ gates which commute with a symmetry group (cf.~\cite{lidar1998decoherence,divincenzo2000universal} where ``$\mathrm{SU}(2)$ symmetric'' exchange interactions can perform universal computation on singlet states as resource states); in our case we consider gates that commute with the group generated by total photon number $\hN$.}.

\SevSkip\emph{2. Isolating energy as an input resource.} Since gates no longer alter energy, all energy must come from input states. Thus, in addition to vacuum modes, we allow a single \emph{coherent state} (\cref{def:coherent-state}) input $\ket{\alpha}$ with amplitude $\alpha\in\mathbb{C}$. Such states are not only easy to prepare, but their expected energy has a simple formula: $\bra{\alpha} \N \ket{\alpha} = \abs{\alpha}^2$. In this sense, coherent states for us play a role analogous to resource magic states in the DV setting.

\SevSkip In sum, in \model\ the energy is fully controlled via parameter $\alpha$ at the input level, yielding a model which (1) allows precise quantification of energy and its impact on computational power, and (2) is a ``safe'' algorithmic model to work in. \model\ is formally defined in \Cref{def:model}, and an illustration is given in \cref{fig:BEQC}.

\SevSkip\emph{Remark on $\ket{\alpha}$:} Without loss of generality, one can replace $\ket{\alpha}$ with any $m$-mode coherent state $\ket{\bm{\beta}}=\ket{\beta_1}\otimes\cdots\otimes\ket{\beta_m}$, so long as the total energy is equal, i.e.\ $|\alpha|^2=\norm{\bm{\beta}}^2$ (\cref{lem:coher-gen}). Moreover, a number of our simulation results below hold even if the resource state is prepared imperfectly, e.g.\ undergoes thermal and/or displacement noise.

\begin{figure*}[t]
    \centering
    \includegraphics[width=\linewidth]{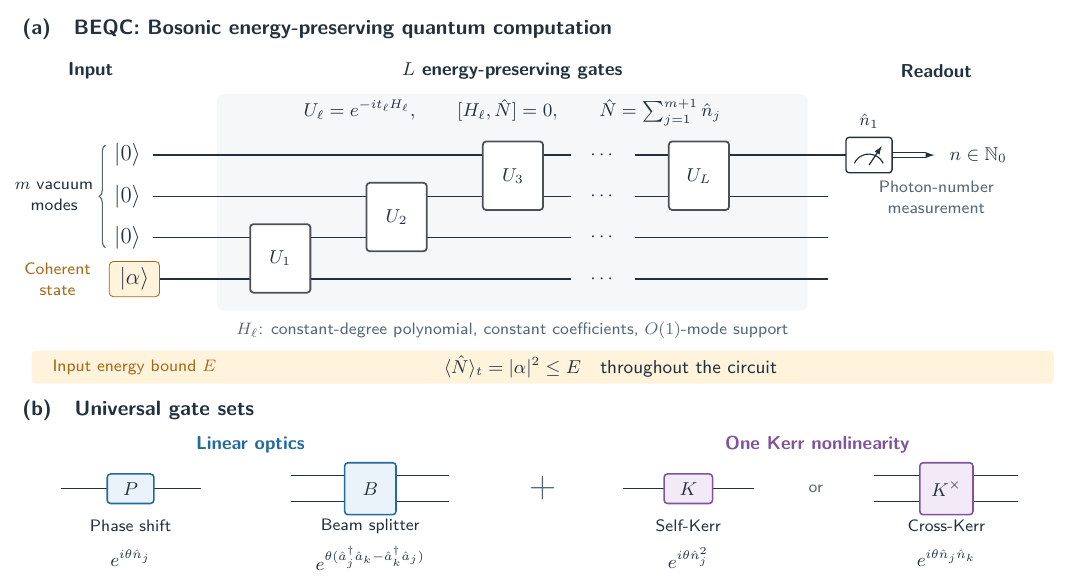}
    \caption{Schematic depiction of \model. (a) A bosonic energy-preserving computation consists of a sequence of energy-preserving gates applied to an input vacuum state, together with a coherent state supplying the energy for the computation, while the outcome is obtained by measuring the photon number of one output mode. (b) Two universal sets of gates for \model, both of which obey an energy-preserving Solovay--Kitaev theorem (\cref{thm:informal_cutoff-SK}).}
    \label{fig:BEQC}
\end{figure*}

\subsubsection{Computational properties}\label{ssscn:properties}
Having defined \model, we next show it is computationally powerful, yet reasonable and robust.

\SevSkip\textbf{a. For the computer science audience: Rigorous and reasonable computational power.} At a minimum, any universal quantum computational model should recover poly-time quantum computation in the DV model, \BQP:

\begin{theorem}[Informal; see \cref{thm:bqp-universal} and \cref{thm:in_BQP}]\label{thm:bqp-complete_informal}
    The promise class $\BEQP{}{}$ of problems solvable by \model\ with polynomial time, polynomial energy, and constant precision equals \BQP.
\end{theorem}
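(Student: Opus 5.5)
The proof splits into the two inclusions $\BQP\subseteq\BEQP{}{}$ and $\BEQP{}{}\subseteq\BQP$.

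\emph{Hardness direction ($\BQP\subseteq\BEQP{}{}$).} I would embed qubits in a dual-rail (single-photon) encoding. Each logical qubit is one photon shared between two modes, so the code space is $\{\ket{10},\ket{01}\}$. Single-qubit gates are then passive linear optics (beam splitters and phase shifters), which commute with $\hN$. Two-qubit entangling gates (e.g.\ a CZ) can be realised exactly and deterministically by a cross-Kerr interaction $e^{i\pi \hat n_a \hat n_b}$, which is also energy-preserving since $[\hat n_a\hat n_b,\hN]=0$.

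The remaining task is to produce the $n$ single photons from the resource $\ket{\alpha}$ using only energy-preserving gates. I would take $|\alpha|^2=\poly(n)$ and split it into many weak coherent modes $\ket{\beta}^{\otimes m}$ using a balanced beam-splitter network (cf.\ \cref{lem:coher-gen}). On each mode I would then apply a photon-number-dependent phase, namely a Kerr gate $e^{i\theta \hat n^2}$ combined with interferometry, to distil Fock states. Alternatively, one could work directly with coherent-state qubits $\{\ket{\beta},\ket{-\beta}\}$ prepared via Kerr cat-state generation $e^{i\pi \hat n^2/2}$.

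A more robust route is to avoid exact photon preparation. I would apply a photon-number-conditioned ``shelving'' unitary that moves one photon into each rail. Such a unitary is block-diagonal in total photon number, so it is energy-preserving. I would then rely on the universality of linear optics plus Kerr (the universal gate set in \cref{fig:BEQC}, with its Solovay--Kitaev compilation) to approximate it on the polynomial-energy subspace with polynomially many gates. Constant-error success follows because a Poisson distribution with mean $\poly(n)$ has at least $n$ photons with probability $1-e^{-\Omega(n)}$. Measuring the photon number of an output rail yields the BQP answer bit, with constant gap preserved.

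\emph{Containment direction ($\BEQP{}{}\subseteq\BQP$).} The key structural fact is that every gate commutes with $\hN$, so the total photon number distribution of the input, $\mathrm{Poisson}(|\alpha|^2)$, is conserved. I would truncate at $N_{\max}=|\alpha|^2+O(|\alpha|\sqrt{\log(1/\epsilon)})=\poly$. The discarded weight is exponentially small, and since the gates preserve each $\hN$-sector exactly, the truncation error never propagates.

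On $m$ modes with at most $N_{\max}$ photons, each mode needs only $\lceil\log(N_{\max}+1)\rceil$ qubits. Each $k$-mode gate $e^{-iHt}$ restricted to this subspace is a $\poly$-dimensional unitary on $O(k\log N_{\max})$ qubits. Its matrix entries are efficiently computable because $H$ is a constant-degree polynomial in $\a,\ad$ that is block-diagonal in $\hN$. Standard DV Solovay--Kitaev or Hamiltonian simulation then compiles each gate to inverse-polynomial precision with $\poly$ overhead. Preparing the truncated coherent state in binary encoding is an efficient state preparation of Poisson amplitudes, done by conditional rotations since the cumulative probabilities are efficiently computable. The final photon-number measurement becomes a computational basis measurement.

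\emph{Main obstacle.} The hard step is the hardness direction: generating the exact encoded resource (Fock photons) from a coherent state with only energy-preserving operations, with rigorously bounded error. I would first try the Kerr-plus-linear-optics universality route, since it cleanly reduces to the $\hN$-block-diagonal compilation statement. The containment direction is routine once sector preservation is noted.
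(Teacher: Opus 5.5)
Your hardness direction follows the paper's route. You use dual-rail encoding, linear optics for single-qubit gates, and $e^{i\pi\hat n_a\hat n_b}$ for CZ. Single photons come from the coherent resource via a photon-number-conditioned swap compiled with the energy-preserving Solovay--Kitaev theorem. That swap is exactly \cref{prop:prepare-Fock-states}, which swaps $\ket{0,j}\leftrightarrow\ket{q,j-q}$ for $q\le j\le K$; the Kerr-distillation and cat-state options can be dropped.

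One step needs repair. As written, you compile a \emph{single} shelving unitary that fills all $n$ rails from one reservoir. \cref{thm:cutoff-SK} costs $\poly(m,K,D)$ with $D=\binom{m+1+K}{K}$. This is exponential once both the number of modes $m\approx n$ and the cutoff $K\ge n$ grow, so it does not give polynomially many gates. Factor the unitary into $O(1)$-mode pieces instead, in one of two ways:
\begin{itemize}
\item Shelve sequentially with two-mode swaps $\ket{0}_i\ket{j}_R\leftrightarrow\ket{1}_i\ket{j-1}_R$. After $n$ steps the rails are in $\ket{1^n}$ exactly on the branch $j\ge n$, so your Poisson bound still applies.
\item As the paper does, split $\ket\alpha$ with \cref{lem:coher-gen} into $n$ coherent modes of energy $O(\log(n/\varepsilon))$, and apply \cref{prop:prepare-Fock-states} with $q=1$ to each.
\end{itemize}
Either way, you must also charge the upper tail $\Pr[N>K]$, because the compiled circuit is only accurate below the cutoff.

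Your containment direction is a genuinely different and more self-contained argument. The paper invokes \cite[Theorem~5.4]{CGMMNRS25}, a \BQP\ simulation of constant-locality bosonic circuits under a polynomial energy promise. It then only checks that theorem's hypothesis (efficient computability of truncated Fock entries) via \cref{lem:assumption-satisfied}. You instead use that energy-preserving gates commute with $\Pi_{\le N_{\max}}$, so a single truncation of the input stays exact for the whole circuit. You then binary-encode each mode and compile each polynomial-dimensional gate block directly. This is the same mechanism the paper uses in \cref{thm:classical_upper_bounds_precise}.

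Your route makes the energy bound automatic rather than a promise and avoids the external theorem. The paper's route reuses a result that also covers non-energy-preserving circuits. What your sketch leaves implicit is the numerics: computing $e^{-iHt}$ on a photon-number block to inverse-polynomial precision when $t$ and the block norm are polynomially large. The paper makes this rigorous by scaling and squaring in \cref{lem:mat-exp-in-photon-block}, and you would need that or an equivalent to justify ``efficiently computable.''
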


\noindent This result is robust in that it holds (1) even if the input coherent states are replaced with input thermal states (\cref{cor:thermalBQP}), or (2) if the resource state $\ket{\bm{\alpha}}:=\ket{\alpha}^{\otimes m}$ is perturbed. To avoid confusion, we also stress the difference between acronyms \model\ and $\BEQP{}{}$ --- the first is the general computational model (\emph{C} stands for computation), and the second is the promise class\footnote{It suffices at this point to think of a promise class as one where all computational problems have a YES or NO answer.} in \model\ analogous to BQP (\emph{P} stands for polynomial time and energy). %

In fact, our model is flexible beyond $\BEQP{}{}$ --- one can vary parameters for \model\ such as time, energy, space, and precision, and subsequently analyze how computational power changes. For example, $\BEQP{}{}$ with both logarithmic\footnote{As standard in complexity theory, throughout this paper resource estimates such as time, energy, and space are given by functions in the \emph{input} size of a problem. In this paper, input size is denoted $\varsigma$ (\Cref{def:model}), whereas $n$ typically denotes photon count.} energy and logarithmically many modes is in P, whereas the stronger variant of $\BEQP{}{}$ with high precision, denoted $\PrecBEQP$, satisfies $\PrecBEQP\subseteq\PSPACE$ (see \cref{thm:classical_upper_bounds_precise} and \Cref{tab:tradeoffs} in \Cref{sec:bound_complexity}). The other two parameters one can adjust are the \emph{locality} of gates allowed and \emph{degree} (recall gates are polynomials in the position and momentum operators) --- this yields a perhaps surprising jump in complexity:

\begin{theorem}[Informal, see \cref{claim:pspace-complete}]\label{thm:PSPACE-complete_informal}
    $\BEQP{}{}$ with non-local, polynomial degree Hamiltonians is $\PSPACE$-complete.
\end{theorem}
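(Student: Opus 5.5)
The plan has two halves, containment and hardness, and both rest on the fact that energy preservation confines the dynamics to finite-dimensional photon-number sectors.

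\emph{Containment in $\PSPACE$.} Every gate commutes with $\hN$. So the whole computation acts block-diagonally on the sectors $\calH_n$ of total photon number $n$. The input is a coherent state of polynomial energy $E=|\alpha|^2$ together with vacua. By Poisson tail bounds, its weight on $\bigoplus_{n\le N_{\max}}\calH_n$ with $N_{\max}=O(E+\log(1/\epsilon))=\poly(\varsigma)$ is at least $1-\epsilon$. Truncating costs only constant error, because energy-preserving unitaries never move weight between sectors. On $m=\poly$ modes the sector $\calH_n$ has dimension $\binom{n+m-1}{m-1}$, which is exponential but has $\poly$-bit indices (Fock vectors $\bm n$).

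Each gate is $e^{-itH}$ with $H$ a polynomial-degree, non-local polynomial in $\hQ,\hP$ that commutes with $\hN$. Such an $H$ is a sum of monomials $\prod_i (\adag_i)^{k_i}\a_i^{l_i}$ with $\sum_i k_i=\sum_i l_i$, so its matrix entries $\langle \bm n|H|\bm m\rangle$ can be computed in polynomial time from the succinct description. Moreover, $H$ has only polynomially many nonzero entries per row when the number of monomials is polynomial, which I take to be implicit in ``efficiently uniformly generated.''

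I would compute amplitudes by the standard $\PSPACE$ path-sum or recursive technique. Expand $e^{-itH}$ as a Taylor series truncated at degree $K=\poly$. This is valid because $\|H|_{\calH_{\le N_{\max}}}\|\le \|\text{coeffs}\|\cdot O(N_{\max})^{\deg H/2}$, which is at most $2^{\poly}$, so after splitting $t$ into small steps a polynomial $K$ suffices. Then evaluate the product of sparse, efficiently computable matrices entry by entry using depth-first recursion in polynomial space. The acceptance probability is a sum over output Fock strings with a measured mode equal to the target value, which is again polynomial-space computable. Equivalently, one can invoke the precise upper bound $\PrecBEQP\subseteq\PSPACE$ from \cref{thm:classical_upper_bounds_precise}, whose proof presumably only uses sector truncation plus sparse access, which is robust to non-locality.

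\emph{$\PSPACE$-hardness.} The step I expect to be the main obstacle is using non-local, polynomial-degree Hamiltonians to implement exponentially long computations in polynomial time. The idea is the one behind ``$\BQP$ with succinct Hamiltonians of exponential norm $=\PSPACE$'': a single gate $e^{-iH}$ with $H$ of polynomial degree can have exponentially large norm on the relevant sector (e.g.\ $\hN^{d}$ with $d=\poly$), and its spectrum can encode an exponentially long reversible computation. Concretely, I would first use the qubit-in-Fock embedding from \cref{thm:bqp-universal} (dual-rail encoding with one photon per pair of modes) to encode the tape of a polynomial-space reversible Turing machine, together with a clock register. I would then build a Feynman–Kitaev style hopping Hamiltonian $H=\sum_t (U_t\otimes|t{+}1\rangle\langle t| + \text{h.c.})$, written in the energy-preserving algebra $\adag_i\a_j$. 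Here the clock is a counter with exponentially many values, realized as photon distributions, and the increment operator is a polynomial-degree non-local monomial sum. Alternatively, one could take a single Hamiltonian whose time-$\pi$ evolution equals $U^{2^{p}}$ for a PSPACE-universal circuit $U$. This can be done via phase estimation–style exponentiation: apply $e^{-iH\cdot 2^k}$, with time absorbed into degree by using $\hN$-powers as multipliers.

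The cleanest route I would try first is the quantum-walk on a line of length $2^{p}$. Encode it in one photon on a mode index described in binary through pairs of modes, with non-local increment operators of degree $O(p)$. Then use the known fact that a perfect-state-transfer chain, with couplings $\sqrt{k(2^p-k)}$ realized as polynomial functions of number operators, transports $|0\rangle\to|2^p\rangle$ in time $\pi$. Attaching $U_t$ to each hop then makes the endpoint amplitude encode the outcome of the exponential-time reversible computation. Finally, I would check that all constructed terms commute with $\hN$, that the degree and number of terms are polynomial, and that constant precision suffices because perfect state transfer is exact.
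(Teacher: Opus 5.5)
Your containment half is correct and is essentially the paper's argument for \cref{thm:inpspace}. You truncate at polynomial total photon number (your Poisson bound even gives the sharper cutoff $O(E+\log(1/\epsilon))$ instead of the paper's Markov cutoff $E/\delta^2$). You use that energy-preserving gates commute with $\Pi_{\le K}$, and you evaluate entries of each $e^{-iHt}$ on the truncated blocks by depth-first recursion over poly-bit Fock indices. One detail: since $\|H\|_{(K)}\,t$ may be $2^{\poly}$, splitting $t$ produces exponentially many factors. These must be recombined by repeated squaring (as in \cref{lem:mat-exp-in-photon-block}) so that the recursion depth and working precision stay polynomial.

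The hardness half has a genuine gap at the step you call the cleanest route. Perfect state transfer along a line of $N=2^p$ sites needs the Krawtchouk couplings $J_k\propto\sqrt{(k+1)(N-k)}$ multiplying the increment $k\to k+1$. With the clock in binary dual-rail form, $k=\sum_j 2^j\hat n_{j,1}$, and $\sqrt{(k+1)(N-k)}$ is not a polynomial in these number operators with a polynomial-size monomial list. Its exact multilinear representation in the $p$ clock bits generically has exponentially many monomials. Nor can a polynomial of polynomial degree in $k$ approximate it to the $O(1)$ absolute accuracy that transfer over $2^p$ sites in time $\pi$ requires: near the ends of the chain this means approximating $\sqrt{x}$ on a grid of spacing $2^{-p}$ to accuracy $O(2^{-p})$, which forces exponential degree. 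In Fock space such square roots come for free only from ladder operators acting on a mode holding $k$ photons. Indeed $\tfrac12(\hat a_1^\dagger\hat a_2+\hat a_2^\dagger\hat a_1)$ on $\ket{k,N-k}$ is exactly the Krawtchouk chain, but it needs $2^p$ photons, i.e.\ exponential energy. So your Hamiltonian is either not polynomially described or not polynomial-energy, and with succinct couplings (e.g.\ uniform ones) there is no usable exact transfer time. Your alternative has the same problem in another guise: for a PSPACE-universal circuit $U$ there is no reason a polynomially described polynomial $H$ with $e^{-iH}=U$ exists, so there is nothing succinct to exponentiate.

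The paper's proof of \cref{thm:pspace-hardness} supplies three missing ingredients.
\begin{enumerate}
\item Settle for a history-state Hamiltonian with succinct, bounded couplings whose completion is guaranteed only on average over an exponentially long window $[0,t']$. The paper encodes the translation-invariant universal Hamiltonian of Bohdanowicz et al.\ in binary dual rail via \cref{thm:passive-description}; a padded Feynman--Kitaev chain on your binary counter with uniform couplings would serve the same role.
\item Realize that time average coherently with a dual-rail time register, $H_{\mathrm{coh}}=H_B\otimes R/M$ with $R=\sum_k2^k\hat n_{k,1}$, left unmeasured.
\item Compress the exponential duration into unit time with the multiplier $\hat n_c^d$, $d=\poly$, on a control mode in a Fock state $\ket{q}$ prepared from the coherent input, so that $q^d\ge t'$ with only polynomially many photons.
\end{enumerate}
Your remark about absorbing time into degree via powers of number operators is exactly the third ingredient. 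However, the power must be of a number operator with a definite photon number, not the total $\hN$, which includes the coherent reservoir. And it has to multiply a Hamiltonian that exists in succinct form, which is what the first two ingredients provide.
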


\noindent \cref{thm:PSPACE-complete_informal} aligns with the intuition that, while simulation of \emph{sparse} Hamiltonians with efficient sparse access and polynomially bounded $t\|H\|$ is in \textsf{BQP} for the DV model~\cite{aharonovAdiabaticQuantumState2003,berryEfficientQuantumAlgorithms2007}, simulating non-sparse $H$~\cite{childs2010limitations} or evolution for exponentially long times can in general require time exponential in the system size~\cite{berryEfficientQuantumAlgorithms2007}. Our construction effectively fast-forwards such long-time evolutions while keeping the photon number polynomially bounded. As a corollary, we obtain that unitaries generated via non-local, polynomial degree $H$ cannot be efficiently compiled into unitaries generated by local $O(1)$-degree $H$ unless $\BQP=\PSPACE$ (\cref{cor:no-compile}). In sum, \model\ captures a range of complexity classes, as one would expect of a robust bosonic computational framework. A natural open question is which other complexity classes can be captured via completeness results in this manner.

\SevSkip\textbf{b. For the physics audience: Any ``physical'' computation can be simulated by \model.} We next demonstrate the expressiveness of \model\ by showing it efficiently simulates any  ``physical'' computation in a wide variety of models: The LB model~\cite{lloyd_quantum_1999}, hybrid CV-DV~\cite{brenner2025factoring,liu2026hybrid}, DV/qubits, fusion-based quantum computing~\cite{bartolucci2023fusion}, and fermionic\footnote{For clarity, this is achieved here via the Jordan-Wigner~\cite{jordan1928paulische} mapping and simulation of the resulting qubit system.} computation~\cite{bravyi2002fermionic}.  For clarity, by ``physical'' computation, we mean with polynomial energy throughout the computation. In this section, we discuss only the LB model; details for other models are in \Cref{sec:sim_existing_models}. 

To simulate physical LB computations, for pedagogical reasons we use the standard physics technique of parametric approximation~\cite{mollow1967quantum,mollow1967quantum2}, which requires a polynomial bound on higher moments of the number operator. Later, however, our Solovay--Kitaev theorem (\cref{thm:informal_cutoff-SK}) will largely subsume this parametric construction, requiring only polynomial energy. For parametric approximation, we provide a rigorous treatment of the error incurred in its use for general polynomial Hamiltonians:

\begin{theorem}[Informal, see \cref{thm:passive-compilation}]\label{thm:simulation_informal}
    Let $H$ be a polynomial Hamiltonian in the LB model acting on register $A$. Then, in \model\ there exists an efficiently computable Hamiltonian $\wtH$ acting on $AB$ such that for $\alpha\geq 1$ and for all $\ket{\psi}$,
    \begin{equation}\label{eqn:1}
        \norm{\left(e^{-it(H_A\otimes\II_B)} - e^{-it\wtH}\right)\ket{\psi}_A\ket{\alpha}_B}
        \in O\left(\frac{M t}{\alpha}\right),
    \end{equation}
    for $M$ the energy moment bound (\cref{def:energy-moment}) achievable by $H$ in time interval $[0,t]$ (formally defined in \cref{eq:supsE}).
\end{theorem}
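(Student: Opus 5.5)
The idea is parametric approximation. The construction replaces every quadrature in $H$ by a photon-number-conserving operator that borrows energy from the coherent reservoir on $B$. Write $H$ as a polynomial of degree $d$ in $\hat a_j,\hat a_j^\dagger$ on register $A$. Each monomial $\prod \hat a^{\dagger\,k}\prod \hat a^{\,l}$ changes photon number by $k-l$. We multiply it by $\hat b^{\,k-l}$ (or by $\hat b^{\dagger\,(l-k)}$) and divide by $\alpha^{k-l}$, taking $\alpha$ real. The result is a term $\wtH$ with $[\wtH,\Ntot]=0$, efficiently computable from $H$, with degree at most $2d$.

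The heuristic is that $\hat b\ket{\alpha}=\alpha\ket{\alpha}$, so at time zero $\wtH\ket{\psi}\ket{\alpha}=(H\ket{\psi})\ket{\alpha}$ exactly on the monomials using $\hat b$. The $\hat b^\dagger$ terms need a rescaling: $\hat b^\dagger/\alpha$ acts on $\ket{\alpha}$ like $1$ plus a displacement-generator error. To handle this uniformly, I would first apply the displacement $D_B(-\alpha)$ conjugation: $\hat b\mapsto \hat b+\alpha$. Then
\[
\wtH' = D(\alpha)^\dagger \wtH D(\alpha) = H\otimes\II + \frac{1}{\alpha}R,
\]
where $R$ is a polynomial in which every term contains at least one factor $\hat b$ or $\hat b^\dagger$ on $B$. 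The coefficients of $R$ carry nonnegative powers of $1/\alpha$, which is harmless since $\alpha\ge 1$. The coherent state becomes the vacuum on $B$. It then suffices to bound $\norm{(e^{-itH}\otimes\II - e^{-it\wtH'})\ket{\psi}\ket{0}}$.

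Next I would use Duhamel's formula:
\[
e^{-it\wtH'}-e^{-itH}=-\frac{i}{\alpha}\int_0^t e^{-i(t-s)\wtH'}\,R\,e^{-isH}\,ds .
\]
Since $e^{-i(t-s)\wtH'}$ is unitary, the error is at most $\frac{1}{\alpha}\int_0^t\norm{R\,(e^{-isH}\ket{\psi})\ket{0}}ds$. The vector $e^{-isH}\ket{\psi}$ has $B$ in vacuum, so only the parts of $R$ containing $\hat b^\dagger$ survive, and $\hat b^{\dagger k}\ket{0}$ has norm $\sqrt{k!}=O(1)$. Hence $\norm{R\,\phi_s\ket 0}$ is bounded by a constant times $\norm{P(\hat a)\phi_s}$, where $P$ is a degree-$\le d$ monomial collection on $A$ and $\phi_s=e^{-isH}\ket\psi$. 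Finally, $\norm{P\phi_s}^2\le C\langle(\hN_A+1)^{d}\rangle_{\phi_s}$, which is bounded by the moment quantity $M$ of \cref{eq:supsE}; I would use the convention that $M$ already denotes the square-root-level moment bound, or adjust constants accordingly. Integrating gives $O(Mt/\alpha)$.

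The main obstacle is rigour with unbounded operators. Duhamel requires $\phi_s$ to lie in the domains of $H$, $\wtH'$ and $R$, and requires $e^{-it\wtH'}$ to be well defined. For the latter I would use that $\wtH$ commutes with $\Ntot$: it is then block-diagonal on finite-dimensional fixed-photon-number subspaces, hence essentially self-adjoint on finite-photon vectors. Conjugating by the displacement preserves this. I would first prove the identity on a core of finite-photon-number states where the derivative computation is legitimate, and then extend it by density together with the finite-moment assumption (finite $M$). A second subtlety is the bookkeeping of the $\hat b^\dagger$ terms that pick up additional $1/\alpha$ factors; these only help.
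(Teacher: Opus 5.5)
Your construction and central estimate are the paper's. You use the same $\wtH$: each active Wick monomial $M_k$ is paired with $(\b/\alpha)^{n_k}$ and its adjoint with $(\b^\dagger/\alpha)^{n_k}$. You pass to the same displaced frame, where the ancilla becomes vacuum and only $M_k^\dagger\otimes\bigl((1+\b^\dagger/\alpha)^{n_k}-1\bigr)$ survives, with norm $O(1/\alpha)$. Your Duhamel formula also has the same structure, with the exact evolution inside. Since only active monomials survive, the weight in your bound should be $(1+\hN)^{d/2}$ with $d$ the \emph{active} degree, as in \cref{def:energy-moment}. With the full degree of $H$, $\langle(\hN+1)^d\rangle$ is not bounded by the $M$ of \cref{eq:supsE}.

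The genuine gap is the rigorous Duhamel step, and the fix you sketch does not work. Differentiating $s\mapsto e^{is\wtH'}\bigl(\ket{\psi(s)}\otimes\ket0\bigr)$ requires $\ket{\psi(s)}\otimes\ket0\in\calD(\wtH')$, and the identity $\wtH'=H\otimes\II+R/\alpha$ on that vector.

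Taking $\ket\psi$ in a finite-photon core does not help, because $e^{-isH}$ does not preserve finite-photon vectors when $H$ is active. You also cannot pass to the limit in $\ket\psi$ afterwards: $\sup_s\norm{(1+\hN)^{d/2}e^{-isH}\ket\psi}$ is not continuous in $\ket\psi$, and may even be infinite for the approximants.

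Extending the polynomial identity to $\ket{\psi(s)}\otimes\ket0$ by density in the $(1+\hN)^{d/2}$-norm goes through if that norm controls all of $\Halg$. That means a moment bound at the full degree of $H$, which is a stronger hypothesis than \cref{eq:supsE}. At the active degree the norm controls neither the number-preserving part $H_0$ (which may have any degree) nor the arbitrary self-adjoint extension $H$. So it gives you neither the domain membership nor the identity.

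The missing idea is to work only with the \emph{difference}, in which $H_0$ cancels, and only with bounded pieces of $\wtH$.
\begin{itemize}
\item The paper extends $\ket\phi\mapsto\bigl((H\otimes\II)-\wtH\bigr)\ket{\phi,\alpha}$ from $\Dfin$ by continuity. This gives a map $\Delta$ with $\norm{\Delta\ket\phi}\le\frac{C_H}{\alpha}\norm{(1+\hN)^{d/2}\ket\phi}$.
\item It then works sector by sector for $\Ntot$ on $AB$. With $\Psi(s)=\ket{\psi(s),\alpha}$, the function $F_n(s)=e^{is\wtH_n}\Pi_n\Psi(s)$ is $C^1$, because $\wtH_n$ is a finite matrix and $\ket\psi\in\calD(H)$. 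Its derivative is $F_n'(s)=-ie^{is\wtH_n}\Pi_n\Delta\ket{\psi(s)}$.
\item To get this derivative, $\Pi_n(H\otimes\II)\Psi(s)$ is rewritten, using self-adjointness of $H$, as $\bigl((H\otimes\II)\Pi_n\bigr)^\dagger\Psi(s)$. Here $(H\otimes\II)\Pi_n$ is bounded and of finite rank. As a result, cutoff approximations of $\ket{\psi(s)}$ only need to converge in the weighted norm, not in the graph norm of $H$.
\item Summing over $n\le K$ with the finite-dimensional fundamental theorem of calculus and letting $K\to\infty$ gives $\delta(t)\le\int_0^t\norm{\Delta\ket{\psi(s)}}\,ds\le C_HMt/\alpha$.
\end{itemize}
Equivalently, you can pair with finite-photon test vectors and let the Hamiltonians act on the test vector.
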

\noindent Thus, for any finite $M$ and $t$, choosing the input energy $|\alpha|^2$ large enough in \model\ yields an arbitrarily good approximation. Note that $M$ can be uncomputable for misbehaved $H$~\cite{CGMMNRS25} (i.e.\ there exist $H$ for which $M$ can diverge in time $t$). The simulation of \Cref{thm:simulation_informal} is moreover robust to simultaneous thermal and displacement error on the coherent state (\cref{thm:robust-parametric-approximation}), with the approximation error in \cref{eqn:1} incurring an additional multiplicative $\sqrt{\bar{n}+1}$ factor, where $\bar{n}$ is the thermal state's mean photon number.

\SevSkip\textbf{c. Universal gate sets and a Solovay--Kitaev theorem.} We have so far seen that \model\ is flexible enough to capture a range of quantum computing models (\BQP). To be useful as an \emph{algorithmic framework}, however, one next requires a finite universal gate set, which can be used to compile an arbitrary local energy-preserving $H$ in the lab. 

Our main result here is to show that a  clean universal gate set exists for \model: The set of energy-preserving linear optics gates (beam splitters and phase shifters; \cref{def:linear-optics}), together with either the self-Kerr gate or the cross-Kerr gate, which have Hamiltonian $\hat n^2$ and $\hat n_1\hat n_2$, respectively. Here, Kerr gates are non-linear optical gates generated by the Kerr effect, where the refractive index of a material depends on the intensity of the light passing through it (see \cite[Chapter 4]{walls1995quantum} and \cite[Section VII.A.2]{braunstein2005quantum}); they are energy-preserving as required, since their Hamiltonians commute with $\hN$. Denoting the Kerr and cross-Kerr based-gate sets by $\GKerr$ and $\GcrossKerr$ (\cref{eq:GKerr}), respectively, we now state our Solovay--Kitaev theorem for \model:

\begin{theorem}[Informal, see \cref{thm:cutoff-SK}]\label{thm:informal_cutoff-SK}
Every energy-preserving unitary $U$ over $m$ modes can be uniformly approximated up to precision $\varepsilon$ on the space of states with total photon-number at most $K$ by a circuit $C_U$ with gates in $\GKerr$ or\/ $\GcrossKerr$. Whenever $m$ or $K$ is constant, the circuit $C_U$ has $\poly(m,K,\log(1/\varepsilon))$ gates and can be constructed in $\poly(m,K,\log(1/\epsilon))$ operations given access to the coefficients of $U$ in the number basis.
\end{theorem}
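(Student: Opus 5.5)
The plan is to exploit the block structure that energy preservation imposes. Any $U$ with $[U,\hN]=0$ decomposes as $U=\bigoplus_{n\ge 0} U_n$, where $U_n$ acts on the total-photon-number-$n$ sector $\calH_n^{(m)}$ of dimension $d_n=\binom{n+m-1}{m-1}$. Approximating $U$ uniformly on states with photon number at most $K$ therefore means approximating the finitely many blocks $U_0,\dots,U_K$ simultaneously. Every gate in $\GKerr$ and $\GcrossKerr$ is also block diagonal, so the whole problem lives in the compact group $G_K=\prod_{n\le K}\mathrm{U}(d_n)$. Up to phases, this is a finite-dimensional compact Lie group. The argument then splits into two parts: universality, meaning the gates generate a dense subgroup of the relevant part of $G_K$; and efficiency, via a Solovay--Kitaev argument run inside $G_K$.

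\emph{Step 1 (density).} I would compute the Lie algebra generated by the Hamiltonians restricted to $\bigoplus_{n\le K}\calH_n$. The linear-optics generators $\ad_i\a_j+\a_j\ad_i$, $i(\ad_i\a_j-\ad_j\a_i)$ and $\N_i$ span $\mathfrak{u}(m)$, acting on sector $n$ through the symmetric power representation $\mathrm{Sym}^n$.
\begin{itemize}
\item Adding $\N_1^2$ (or $\N_1\N_2$) and taking repeated commuters should generate $\bigoplus_n \mathfrak{su}(d_n)$ together with some abelian central part spanned by functions of $\hN$.
\item To show this, I would use the standard criterion: a Lie algebra containing the irreducible action of $\mathfrak{u}(m)$ on each $\mathrm{Sym}^n$, plus one element not in its image, generates all of $\mathfrak{su}(d_n)$. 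This covers the irreducible representations that are not symplectic or orthogonal, and $\mathrm{Sym}^n$ is not self-dual for $m\ge 2$.
\item Decoupling the different sectors $n$ follows from a Goursat-type argument, because the representations $\mathrm{Sym}^n$ are pairwise inequivalent, so no nontrivial isomorphism can tie two sectors together.
\end{itemize}
The central phases $e^{if(\hN)}$ form a commuting family. A Lagrange interpolation argument shows they can be matched using the polynomials in $\hN$ that Kerr gates generate, where $\hN^2=\sum_i\N_i^2+2\sum_{i<j}\N_i\N_j$. For the cross-Kerr case, $\sum_i\N_i^2$ is obtained from linear optics conjugations. Explicit commutator formulas with polynomial-size coefficients make the generation effective.

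\emph{Step 2 (Solovay--Kitaev).} Since $G_K$ is a compact semisimple-times-torus group, I would apply the standard Dawson--Nielsen Solovay--Kitaev argument on the semisimple part $\prod_n \mathrm{SU}(d_n)$. It needs an inverse-closed gate set; inverses come from negative times, or the inverse-free variants can be used. The torus part is handled directly: exact or near-exact phases on each sector can be produced by Kerr evolutions with computable times, approximated by powers of a fixed gate.

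This yields length $\mathrm{polylog}(1/\epsilon)$, but with constants depending on $\dim G_K=\sum_{n\le K}d_n^2$. When $m$ or $K$ is constant, $d_K=\binom{K+m-1}{m-1}$ is polynomial in the other parameter, so $\dim G_K$ is polynomial. Using the polynomial-dimension-dependence version of SK, together with the initial $\epsilon_0$-net, gives $\poly(m,K,\log 1/\epsilon)$ gates.

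The main obstacle is the $\epsilon_0$-net. A generic net over $G_K$ has size exponential in $\dim G_K$, which would break polynomiality. To avoid this, I would first exactly decompose $U$, given its number-basis coefficients, into a product of $\poly(\dim G_K)$ two-level (Givens) rotations within each sector, in the manner of standard qudit synthesis. Each such rotation is then written as a short product of linear optics and Kerr evolutions with computable continuous parameters, using the explicit commutator generation from Step 1. Finally, each elementary one-parameter gate is compiled by SK in a fixed small subgroup whose dimension is constant, so the net has constant size there. Obtaining these elementary rotations with only polynomially many generators, and with controlled error propagation across sectors, is where I expect most of the technical work.
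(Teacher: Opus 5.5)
\textbf{The phase step fails.} Your Step~1 claim that arbitrary central phases $e^{if(\hN)}$ can be matched by ``Lagrange interpolation'' is false, and so is the Step~2 claim that ``the torus part is handled directly.'' This is a genuine obstruction, not a technicality. Determinants are multiplicative, and $\det\bigl(e^{i\theta G}|_{\calH_n}\bigr)=e^{i\theta\operatorname{Tr}(G|_{\calH_n})}$. Beamsplitter generators have zero diagonal, while the traces of $\N_j$, $\N_j^2$ and $\N_j\N_k$ on $\calH_n$ are integers equal to $D_n$ times a polynomial of degree at most two in $n$. Hence every circuit over $\GKerr$ or $\GcrossKerr$ satisfies $\det C^{(n)}=\exp\bigl(iD_n(\theta_1n+\theta_2n^2)\bigr)$ on each sector. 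Because the exponents are integer vectors, the reachable determinant patterns form a closed subtorus of dimension at most two (three once the permitted global phase is included) in $\prod_{n\le K}\mathrm{U}(1)$. Commutators cannot enlarge it, since they are traceless on every sector, and closures cannot either, since the subtorus is closed. Concretely, for $m=1$ and $K=3$, the unitary $e^{i\pi\N^3/6}$ cannot be approximated below constant precision even up to global phase. The third finite difference of $\pi n^3/6$ is $\pi$, whereas that of $\gamma+\theta_1n+\theta_2n^2$ lies in $2\pi\mathbb{Z}$. So the statement you set out to prove is false on $m$ modes alone. The missing idea is built into the paper's formal version: allow the phase $\gamma$ with $U\ket{0^m}=e^{i\gamma}\ket{0^m}$ (every gate fixes the vacuum), append one vacuum ancilla mode, and compile $\widetilde U=1\oplus\bigoplus_{n\ge1}\bigl(U^{(n)}\oplus(\det U^{(n)})^{-1}\oplus\II\bigr)$, with the compensating phase placed on $\ket{0^m}\ket{n}$. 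This $\widetilde U$ has unit determinant on every sector of the $(m+1)$-mode cutoff space and satisfies $\widetilde U\ket\psi\ket0=(U\ket\psi)\ket0$. The whole problem then lives in $\mathrm{SU}(D_0,\dots,D_K)$ and no torus has to be generated. (Your density criterion is also shaky for $m=2$, where every $\mathrm{SU}(2)$ irrep is self-dual, but the constructive route below makes Step~1 unnecessary.)

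\textbf{The Solovay--Kitaev step cannot be localized.} Replacing an $\epsilon_0$-net by an exact Givens decomposition into two-level rotations is the right instinct. Running Solovay--Kitaev ``in a fixed small subgroup whose dimension is constant'' does not work, however. Any circuit you build for a two-level rotation on $\mathrm{span}\{\ket{\bm u},\ket{\bm v}\}$ lies only approximately in that $\mathrm{SU}(2)$, because it perturbs every Fock state of the cutoff space. The residual $UU_{n-1}^\dagger$ in the recursion is therefore a generic near-identity element of the whole block group, so the group-commutator step and the recursive calls must range over all of $\mathrm{SU}(D_0,\dots,D_K)$. (Compiling the elementary one-parameter gates is moot; they are exact members of the gate set.)

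The paper therefore runs SK on the full block group. This needs no net, only a coarse compiler for arbitrary targets, invoked at accuracy $\epsilon_0=\min\{10^{-5},1/D\}$. It also keeps the group-commutator constants dimension-free: if $\norm{\Delta-\II}\le\delta\le1/d$ and $\det\Delta=1$, then $\Delta=e^{iH}$ with $H$ traceless and $\norm{H}\le2\delta$. Every such $H$ equals $-i[A,B]$ for traceless Hermitian $A,B$ of norm at most $2\sqrt{2\delta}$. This is why the coarse circuits must have unit determinant on every sector.

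Finally, the coarse compiler needs a polynomial-cost way to isolate the single transition $\ket{\bm v}\to\ket{\bm u}$, $\bm u=\bm v+\bm e_j-\bm e_k$, from the hopping term $A=\hat a_j^\dagger\hat a_k$, which acts on all such pairs at once; ``explicit commutator generation'' does not provide this. The paper uses a Fourier filter. It takes a Hamiltonian $H$ quadratic in the number operators whose transition energies $H(\bm u')-H(\bm v')$ are pairwise distinct modulo a prime $q=O(D^2)$. Then $\frac1q\sum_{r=0}^{q-1}e^{-2\pi ir\omega/q}V^rAV^{-r}$, with $V=e^{2\pi iH/q}$, is a nonzero multiple of $\ket{\bm u}\bra{\bm v}$. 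Trotterizing this sum, combined with a Givens decomposition over a Gray-ordered Fock basis, yields the $O(m^6K^7D^{12}\epsilon_0^{-2})$-gate coarse compiler with unit sector determinants.
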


\noindent This result answers the question posed at the end of the introduction, showing that the computational power of \model\ is not hardware-dependent, in the sense that any local gate set can be efficiently compiled into a single, universal gate set.

Before presenting applications of this energy-preserving Solovay--Kitaev theorem, a few remarks are in order. While it is well known that linear-optical gates and Kerr gates are universal for dual-rail qubits \cite{Chuang_1995} and over any fixed photon-number sectors \cite{PhysRevLett.119.220502}, \cref{thm:informal_cutoff-SK} shows that these gates are universal over the whole space of states with less than $K$ \textit{photons} (i.e.\ jointly over all fixed photon number sectors with photon number less than $K$) but more importantly satisfy a Solovay--Kitaev theorem for energy-preserving unitaries. We note that the formal version of \cref{thm:informal_cutoff-SK} which we prove actually yields a stronger result: independently of the photon-number cutoff $K$ and the number of modes $m$, the length of the sequence of gates scales with the desired precision as $\log^{c_{\rm SK}}(1/\varepsilon)$, where $c_{\rm SK}<4$ is an absolute constant. We also show in \cref{app:finite-alphabet} that both universal gate sets $\GKerr$ and $\GcrossKerr$ can be reduced to finite universal gate sets by picking suitable irrational parameters.

\SevSkip\textbf{d. Circuit compilation beyond local and energy-preserving gates.} As a first application of \cref{thm:informal_cutoff-SK}, we obtain two unitary compilation results: one for energy-preserving unitaries that are not necessarily \emph{local} (\cref{thm:compile_new_passive}), and one for unitaries that are not necessarily \emph{energy-preserving} (\cref{thm:compile_new}). Both results assume that the target unitary $U$ possesses an efficient qubit description, $V$ (\cref{def:circdesc}). Roughly, the latter says that $U$ can be simulated by first applying a simple isometry which flattens Fock states into unary-encoded qubit states, and then applying $V$ on qubits.

\begin{theorem}[informal, \Cref{thm:compile_new_passive} and \Cref{thm:compile_new}]\label{thm:compile_more_informal}
Let $U$ be an $m$-mode unitary which preserves the space of states with total photon-number at most $K$, such that $U$ has a $\delta$-approximate qubit circuit description with $L$ gates over $q$ qubits. Then, there exist a bosonic unitary circuit $C_U$ and $\alpha\in\mathbb R$ which (1) $\varepsilon$-approximates $U$ over the space of states with total photon-number at most $K$, (2) uses $\poly(m,K,L,\log(1/\varepsilon))$ gates, and
\begin{itemize}
    \item [(a)] (\Cref{{thm:compile_new_passive}}) if $U$ is non-local and number-preserving, $C_U$ uses gates from $\GKerr$ or $\GcrossKerr$, and coherent state $\ket\alpha$ with energy $O(mK+L +\log(1/\varepsilon))$.
    \item [(b)] (\Cref{thm:compile_new}) if $U$ is local but non-number-preserving, $C_U$ uses gates generated by local constant-degree energy-preserving polynomial Hamiltonians, and coherent state $\ket\alpha$ with energy $O(q^2/\varepsilon^2)$.
\end{itemize}
\end{theorem}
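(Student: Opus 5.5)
The plan is to reduce both parts to the energy-preserving Solovay--Kitaev theorem (\cref{thm:informal_cutoff-SK}) by factoring $U$ as a small number of ``encoding'' steps followed by a qubit circuit $V$ that is emulated gate by gate. Let $\calH_{\le K}$ be the span of Fock states on $m$ modes with at most $K$ photons, and let $E$ be the flattening isometry of \cref{def:circdesc}. It maps $\ket{\bm n}$ to a unary encoding: for each mode, a block of $K$ rails holding one photon per occupied slot, i.e.\ dual-rail or single-photon qubits. We then write $U \approx E^\dagger V E$ on $\calH_{\le K}$, with error $\delta$ by assumption.

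The proof has three steps.

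\textbf{Step 1 (encoding $E$).} Implement $E$ as an energy-preserving circuit, or, in part (b), a circuit with access to ancillas drawn from the coherent state.

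\textbf{Step 2 (emulating $V$).} Each one- or two-qubit gate of $V$ acts on a constant number of rails with at most a constant number of photons. It is therefore an energy-preserving unitary on $O(1)$ modes with $O(1)$ photon cutoff, so \cref{thm:informal_cutoff-SK} (the ``$m$ or $K$ constant'' case) compiles it to precision $\varepsilon/L$ using $\poly\log(L/\varepsilon)$ gates from $\GKerr$ or $\GcrossKerr$.

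\textbf{Step 3 (error accounting).} Approximate unitaries compose by the triangle inequality. The total error is therefore $\delta$ plus $L$ times the per-gate error plus the encoding errors, which gives the $\poly(m,K,L,\log(1/\varepsilon))$ gate count.

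For part (a), the remaining issue is photon supply. The dual-rail qubits of $V$, and its ancillas, require single photons that the input does not contain: $U$ is photon-number preserving, but $V$ may use $\Theta(L)$ ancilla qubits, each needing one photon in its rail. I would harvest these photons from $\ket\alpha$. Split $\ket\alpha$ by beam splitters into many weak coherent states, which is valid by \cref{lem:coher-gen}. Then use a compiled energy-preserving ``filter'' that conditionally moves exactly one photon into each ancilla rail. Counting photons, the needed number is $mK$ (for the encoding) plus $O(L)$ (for ancillas), plus $O(\log(1/\varepsilon))$ slack so that a Poisson tail bound makes the probability of running short at most $\varepsilon^2$. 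This explains the energy $O(mK+L+\log(1/\varepsilon))$.

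At the end, the photons are returned. We apply $E^\dagger$ and uncompute the ancillas so that they leave the register unentangled. This is possible because $V$ approximates an isometry acting as $U$ with clean ancillas.

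For part (b), $U$ changes photon number, so we instead use the parametric approximation \cref{thm:simulation_informal}. Non-number-preserving local gates, e.g.\ $\hat a$ or $\hat a^\dagger$ terms, are replaced by $\hat a\hat b^\dagger/\alpha$ with a mode $b$ in the coherent state. The per-gate error is $O(Mt/\alpha)$ with $M=O(1)$, since each gate acts on unary qubits with bounded moments. Summing over the $O(q)$ local photon-creating operations and requiring a total error of $\varepsilon$ gives $\alpha = O(q/\varepsilon)$, i.e.\ energy $O(q^2/\varepsilon^2)$.

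I expect the main obstacle to be Step 1 in part (a): implementing $E$ coherently. Deciding how many photons sit in a mode and routing them into separate unary rails is a non-local, number-dependent operation. I would first try to build it mode by mode, splitting photons off with a Kerr-conditioned beam-splitter cascade that acts on one mode plus $K$ rails. Each such unit is an energy-preserving unitary on $O(K)$ modes with cutoff $K$. A cost of $\poly(K)$ per unit would need the full polynomial form of \cref{thm:informal_cutoff-SK}, which the informal statement only guarantees when $m$ or $K$ is constant. I would therefore have to check that the formal \cref{thm:cutoff-SK} gives polynomial dependence for these structured units, or else decompose each unit further into two-mode steps, each of which moves a single photon.
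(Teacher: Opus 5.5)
\paragraph{Feedback.} Your overall architecture matches the paper's. You encode Fock states in unary via $\mathcal E$, run $V$, decode, draw photons from $\ket\alpha$, and compile few-mode pieces with \cref{thm:cutoff-SK}. Your fallback for Step~1 also works. Two-mode unitaries that move one photon out of a mode holding at most $K$ photons compile at cutoff $K$ with $\poly(K,\log(1/\varepsilon))$ gates. The paper realizes the same transfer with $R_{kl}$ and fixed-point amplification.

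The genuine gap is Step~2. The image of $\mathcal E$ is a \emph{single-rail} code: qubit $\ket 0$ is vacuum and $\ket 1$ is one photon. A gate of $V$ such as $X$ or $H$ therefore changes photon number and is not energy-preserving, even though it acts on $O(1)$ modes with $O(1)$ photons.

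The missing idea is a coherent unary-to-dual-rail conversion. For every unary rail $j$, one photon is placed in a fresh partner rail \emph{if and only if} rail $j$ is empty. Only after this step do the gates of $V$ preserve photon number: single-qubit gates become linear optics and $CZ$ becomes a cross-Kerr phase. The conversion consumes $q-N$ photons, where $N$ is the input photon number. The photon source therefore necessarily ends up correlated with $N$, and how that correlation is handled is exactly what separates (a) from (b).

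\paragraph{Part (a).} The paper first turns $\ket\alpha$ into a Fock state $\ket q$ with $q\le 2L+mK$, at energy $O(q+\log(1/\varepsilon))$ (\cref{prop:prepare-Fock-states}). A three-mode number-preserving gate $J$ per rail then performs the conversion and leaves the reservoir in $\ket N$. The circuit $C_{\mathrm{enc}}^\dagger W C_{\mathrm{enc}}$ returns these photons and disentangles the reservoir only because $U$ commutes with every $\Pi_N$. The error bound is accordingly proved sector by sector. This is the real role of number preservation, not the ancilla photon budget, and your ``clean ancillas'' justification does not address it.

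Your supply mechanism also fails as stated. A weak coherent mode contains a photon only with probability roughly equal to its mean, so an $O(1)$-mode filter cannot deterministically extract one photon per rail. The Poisson tail bound controls the count of a \emph{single} reservoir mode. That mode must then be drained by few-mode transfer gates.

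\paragraph{Part (b).} Here $U$ changes $N$, so the reservoir must \emph{not} record $N$. The paper draws each conversion photon from $\ket\alpha$ by parametrically approximating the conditional rotation $e^{-i\frac{\pi}{2\alpha}(1-\hat n_j)R_{0k}}$, with error at most $1/\alpha$ per rail (\cref{lem:encodingudr}). Encoding plus decoding costs $2q/\alpha$, so $\alpha=4q/\varepsilon$ and the energy is $O(q^2/\varepsilon^2)$.

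Your plan instead parametrically approximates the non-number-preserving gates of $V$ themselves. That touches the coherent state once per gate, giving error $O(L/\alpha)$ and energy $O(L^2/\varepsilon^2)$. Your count of $O(q)$ photon-creating operations is valid only for the conversion step, which you never introduced.
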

\noindent Part (a) of \Cref{thm:compile_more_informal} allows us to compile more general number-preserving unitaries than \cref{thm:informal_cutoff-SK}, allowing us to simulate e.g.\ unitary dynamics generated by the Bose--Hubbard Hamiltonian at total photon number $K$ (discussed shortly below). Part (b) can handle even \emph{non-energy-preserving} $U$, but the price to pay is the required energy scales polynomially with the desired precision instead of logarithmically, and that the gates used include gates generated by local constant-degree Hamiltonians beyond $\GKerr$ or $\GcrossKerr$. One application of (b) is that our parametric approximation from \Cref{thm:simulation_informal} now applies in a broader setting: Instead of requiring energy moments to be reasonably bounded throughout the computation (as in \Cref{thm:simulation_informal}), it now suffices for the mean energy $E$ to be polynomially bounded throughout the computation, as long as the target unitary has an efficient qubit description. Further applications are discussed in \Cref{ssscn:nonnumberpreserving}.

\SevSkip\textbf{e. State synthesis.} As a second application of \cref{thm:informal_cutoff-SK}, we obtain deterministic state-preparation within inverse-polynomial trace distance for two fundamental families of states: finite superpositions of Fock states, and high-quality Gottesman--Kitaev--Preskill (GKP) states efficiently.  Importantly, there are no intermediate measurements or postselections required. This is in contrast with usual protocols for GKP state preparation \cite{brenner2025complexity}. We begin with Fock state preparation:

\begin{theorem}[Fock state synthesis with a fixed number of modes (informal; \Cref{thm:new-finite})]\label{thm:new-finite_informal}
Let \(|\psi\rangle\) be any $m$-mode state of total photon number at most $K$.
There is a circuit on \(m+2\) modes whose reduced target output \(\rho\)
satisfies \(\frac12\norm{\rho-|\psi\rangle\langle\psi|}_1\le\epsilon\).
The resources required are (1) energy $\poly(K,1/\epsilon)$, (2) gate count $\poly(m,K, 1/\epsilon)\poly(K,1/\epsilon)^{O(m)}$, (3) given the target amplitudes to sufficient precision,
classical compilation takes $\poly(m,K, 1/\epsilon)\poly(K,1/\epsilon)^{O(m)}
\log(1/\epsilon)$ time.
\end{theorem}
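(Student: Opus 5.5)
The plan is to reduce state synthesis to the energy-preserving Solovay--Kitaev theorem (\cref{thm:informal_cutoff-SK}) by first producing, deterministically, a state with a fixed photon number. The target $\ket\psi=\sum_{k\le K}\ket{\psi_k}$ decomposes into photon-number sectors, and no energy-preserving unitary can change the weights of these sectors. The two ancilla modes will therefore serve as an energy reservoir. We start from $\ket{0^m}\ket{\alpha}\ket{0}$ with $|\alpha|^2=\poly(K,1/\epsilon)$. The goal is to approximately implement an energy-preserving unitary $U$ on all $m+2$ modes that maps $\ket{0^m}\ket{N}\ket{0}$, for each relevant total photon number $N$, to a state of the form $\sum_k \ket{\psi_k}\otimes\ket{\phi_{N,k}}$. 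Here $\ket{\phi_{N,k}}$ is a two-mode ancilla state with $N-k$ photons. Since $U$ commutes with $\hN$, it can be defined separately on each sector $N$. We require that the ancilla states $\ket{\phi_{N,k}}$, after tracing out, are nearly independent of $k$ and of $N$ in the appropriate sense, so that the reduced state on the first $m$ modes is close to $\ket\psi\bra\psi$.

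Step 1 is the choice of ancilla states. The coherent state $\ket\alpha$ has a Poisson photon-number distribution concentrated in a window of width $O(\alpha)$ around $|\alpha|^2$. The second ancilla mode is used to make the garbage register nearly decoupled from the data. Concretely, in sector $N$ I would map $\ket{0^m}\ket N\ket 0\mapsto\sum_k\ket{\psi_k}\ket{N-k}\ket{0}$. After tracing out, this gives a mixture over $N$ with weights $p_N$ and coherences $\sum_N\sqrt{p_N p_{N-k+k'}}$ between sectors $k$ and $k'$. These coherences equal $1-O(K/\alpha)$ by a Poisson shift-overlap bound, i.e. the Hellinger affinity of Poisson distributions shifted by $\le K$. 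The phases of the coherent amplitudes, $e^{-ik\arg\alpha}$, must be corrected; taking $\alpha$ real makes them trivial. Choosing $\alpha\gtrsim K^2/\epsilon^2$ gives trace distance $\epsilon$. Under this choice the second ancilla may only be needed to absorb the sectors $N<K$ and for the tail truncation; I would keep it to guarantee that the map is a well-defined isometry on every sector. Truncating $N$ to $|N-\alpha^2|\le O(\alpha\sqrt{\log(1/\epsilon)})$ costs only $\epsilon$ of probability.

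Step 2 is compilation. On the truncated window, $U$ is an energy-preserving unitary on $m+2$ modes with cutoff $K'=\alpha^2+O(\alpha\log)=\poly(K,1/\epsilon)$. We extend it arbitrarily to a unitary on each sector; for example, a Householder-type reflection sending $\ket{0^m}\ket N\ket0$ to the target vector suffices. We then apply \cref{thm:informal_cutoff-SK} with constant... but $m$ is not constant. Instead, I would use the formal version of the theorem, whose gate count is polynomial in the dimension of the truncated Fock space, here $\binom{K'+m+2}{m+2}=\poly(K,1/\epsilon)^{O(m)}$, times $\log^{c_{\rm SK}}(1/\epsilon)$. This matches the stated gate count and classical compilation time. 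The matrix entries are computed from the given amplitudes.

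The main obstacle is Step 1's error analysis. This covers bounding the trace distance of the reduced state through the Poisson overlap uniformly over the $K+1$ sectors, and controlling the error propagated by truncating the coherent state. Both should be handled by a Cauchy--Schwarz/fidelity argument, comparing against an ideal purification in which the garbage register is exactly independent of $k$.
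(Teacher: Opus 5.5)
Your proposal is correct and is essentially the paper's proof of \cref{thm:new-finite}. The paper uses the same sector-wise map $\ket{0^m}\ket{N}\mapsto\sum_k\ket{\psi_k}\ket{N-k}$ with the identity on sectors $N<K$ (costing $2\sqrt{\Pr[X<K]}$), and a Poisson shift-overlap estimate phrased as $\norm{(B^r-\II)\ket{\alpha}}\le r/\alpha$ for the lowering shift $B$. That bound amounts to a $1-O(K^2/\alpha^2)$ coherence, so the paper can take $\alpha=8K/\epsilon$ and energy $64K^2/\epsilon^2$, whereas your choice is looser but still polynomial. It then truncates the coherent tail and compiles with \cref{thm:cutoff-SK}, as you do.

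One correction concerns the second ancilla. It cannot absorb the low sectors $N<K$, since energy preservation forbids producing more data photons than the sector holds. Its actual role is the vacuum ancilla required by \cref{thm:cutoff-SK}. You should therefore define $U$ on the $m$ data modes plus the reservoir only, because compiling a unitary on all $m+2$ modes would cost an $(m+3)$rd mode.
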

\noindent Thus, assuming a constant number of modes $m$, $\ket{\psi}$ can be prepared efficiently in \model. Next, we show how to prepare high-quality GKP states, $|G_{\kappa,\Delta}\rangle$ (see \cref{eq:new-gkp} for definition):

\begin{restatable}[Deterministic GKP-state synthesis]{theorem}{GKPprep}\label{thm:new-gkp}
For every \(0<\kappa,\Delta\le1\) and \(0<\epsilon<1/2\), a circuit
on three bosonic modes, with one coherent input and two vacuum inputs,
prepares a reduced state \(\rho\) satisfying
$
\frac12\norm{\rho-
 |G_{\kappa,\Delta}\rangle\langle G_{\kappa,\Delta}|}_1\le\epsilon.
$
It uses only linear optics and either self-Kerr or cross-Kerr gates,
with gate count polynomial in \(\kappa^{-1},\Delta^{-1},\epsilon^{-1}\).
The circuit is deterministic, and a sufficient coherent-state energy is
\begin{equation}\label{eq:new-gkp-energy}
 O\!\left(
 \frac{(\kappa^{-2}+\Delta^{-2})^2}{\epsilon^2}
 \log^2\!\frac C{\kappa\epsilon}
 \right).
\end{equation}
\end{restatable}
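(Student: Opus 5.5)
The plan is to reduce GKP synthesis to Fock-state synthesis on a constant number of modes, and then invoke \cref{thm:new-finite_informal} with $m=1$. That theorem uses $m+2=3$ modes, matching the three-mode circuit here with one coherent and two vacuum inputs.

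The first step is to truncate $|G_{\kappa,\Delta}\rangle$ in the Fock basis. The state in \cref{eq:new-gkp} is a finite-energy GKP state: a superposition of Gaussian peaks of width $\Delta$ under a Gaussian envelope of width $1/\kappa$. Its mean photon number is therefore $O(\kappa^{-2}+\Delta^{-2})$. I would go further and establish a sub-Gaussian, or at least exponential, tail bound on its photon-number distribution. One route is to bound $\langle G|e^{s\hat n}|G\rangle$ by writing the state as a damped lattice sum, i.e.\ $e^{-\Delta^2\hat n}$ applied to a grid state under a Gaussian envelope. This gives a cutoff
\[
K=O\!\big((\kappa^{-2}+\Delta^{-2})\log(C/(\kappa\epsilon))\big)
\]
with $\|(\mathbb I-\Pi_{\le K})|G_{\kappa,\Delta}\rangle\|\le\epsilon/2$. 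Renormalising the projection $|\psi\rangle=\Pi_{\le K}|G\rangle/\|\Pi_{\le K}|G\rangle\|$ then costs at most $\epsilon/2$ in trace distance by the gentle-measurement inequality.

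The second step applies \cref{thm:new-finite_informal} with $m=1$, this $K$, and precision $\epsilon/2$. This prepares $|\psi\rangle$ deterministically, with gate count $\poly(K,1/\epsilon)^{O(1)}$, using only linear optics and self- or cross-Kerr gates; that theorem relies on the Solovay--Kitaev result \cref{thm:informal_cutoff-SK} for $\GKerr$ and $\GcrossKerr$. Since $K=\poly(\kappa^{-1},\Delta^{-1})\cdot\log(1/\epsilon)$, the gate count is polynomial in $\kappa^{-1},\Delta^{-1},\epsilon^{-1}$. The triangle inequality then gives total error $\epsilon$.

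For the energy I would not use the black-box $\poly(K,1/\epsilon)$ bound. Instead I would inspect the mechanism in \cref{thm:new-finite}. A coherent state $|\alpha\rangle$ is split off, and the corresponding photon-number sector is selected coherently; this requires the sector weight near $K$ photons to be non-negligible. The cost of this mechanism scales like $|\alpha|^2\sim K^2/\epsilon^2$. Substituting the value of $K$ gives exactly \cref{eq:new-gkp-energy}.

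The main obstacle is the tail bound in the first step. It must be explicit, with the stated $\log(C/(\kappa\epsilon))$ dependence, and uniform over $0<\kappa,\Delta\le1$. I would first attempt it through the envelope-operator representation of $|G_{\kappa,\Delta}\rangle$, bounding the contribution of each lattice peak separately as a squeezed, displaced Gaussian, since each such Gaussian has explicitly computable photon statistics. A secondary difficulty is extracting the precise energy dependence from \cref{thm:new-finite} rather than a generic polynomial.
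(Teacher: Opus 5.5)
Your proposal follows the paper's proof essentially step for step: an exponential-moment tail bound computed peak by peak (the paper evaluates $\|e^{t\hat n/2}|g_{j,\Delta}\rangle\|$ exactly in the Bargmann representation, takes $t=\tfrac18\min\{\kappa^2,\Delta^2\}$, and uses the normalization bound $Z\ge c/\kappa$, which is where the $1/\kappa$ inside the logarithm comes from), followed by \cref{thm:new-finite} with $m=1$ on three modes, whose explicit energy $\mu=64K^2/\epsilon^2$ gives \cref{eq:new-gkp-energy} directly. One small correction to your reading of that theorem: the $K^2/\epsilon^2$ cost comes from the $r/\alpha$ error incurred when up to $K$ photons are removed from the coherent reservoir (\cref{cor:arbitrary states}), not from needing non-negligible reservoir weight near $K$ photons, which only requires $\alpha^2\gtrsim K$.
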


\SevSkip\textbf{f. Application to Bose--Hubbard dynamics.} Finally, as a concrete illustration of our framework, in Appendix~\ref{app:bose-hubbard}, we study the quantum simulation of Bose--Hubbard dynamics. We first show that the standard Bose--Hubbard Hamiltonian, being energy-preserving, can be efficiently decomposed into energy-preserving gates on bounded-particle sectors, with polynomial complexity in the number of modes, evolution time, energy cutoff,
and inverse precision. We further extend the construction to states with bounded expected photon number. Finally, we consider an active, coherently driven Bose--Hubbard Hamiltonian and show how its dynamics can be approximated within \model\ by introducing a coherent-state ancilla, yielding an explicit tradeoff between the supplied coherent-state energy and the simulation error.

\subsection{Open questions}
\label{sec:open-questions}

Defining a standard model of computation over infinite-dimensional bosonic Hilbert spaces is significantly more challenging than in finite dimensions. In this work, we proposed \model, a model of quantum computation built from energy-preserving bosonic operations supplemented by a single resource state providing the energy for the computation. The resulting framework satisfies many of the properties one would expect from a standard computational model, including the existence of universal gate sets, efficient compilation, and robustness under natural variations of the model. It is universal in the sense that, when supplied with sufficiently energetic resource states, it can efficiently simulate other bosonic computational models, as well as qubit-based computations, fermionic computations, and interactions between bosonic, qubit, and fermionic degrees of freedom.

Having established a well-defined framework, a number of natural questions arise regarding the computational power, limitations, and complexity-theoretic structure of bosonic quantum computation.

\begin{itemize}
\item \textbf{Fault tolerance and experimental prospects.}
The next major question raised by this work is whether \model\ can be made fault tolerant and implemented in a practically realistic architecture. Recent work has shown that fault tolerance against general noise can be achieved for continuous-variable systems using concatenated GKP--qubit error-correcting codes~\cite{matsuura2024continuous}. A key lesson from such constructions is that careful control of energy is essential: without it, small errors can populate arbitrarily high-energy sectors and undermine the assumptions required for fault-tolerance analyses. Energy control is therefore not merely a technical convenience, but an important structural requirement for scalable bosonic computation.

A natural advantage of our model is that energy is controlled by construction: computation is built from energy-preserving operations, while energy is supplied explicitly through resource states. Together with the bosonic Solovay--Kitaev theorem developed in this work, this makes the framework particularly well-suited for studying practical implementations. In particular, once a suitable experimentally accessible universal gate set is identified, the Solovay--Kitaev theorem provides a systematic way to compile arbitrary operations in the model into that linear optical and Kerr gate set with controlled accuracy and overhead. Thus, the same structure that gives a robust complexity-theoretic definition also provides a natural route toward concrete fault-tolerant and experimental architectures.

From an experimental perspective, and also theory of fault tolerance, one would ideally realize universal computation using only experimentally accessible Gaussian operations, suitable non-Gaussian resource states, and adaptive measurements. This leads to a concrete open question: can nonlinear gates such as the Kerr gate be injected using only linear-optical operations and measurements on appropriately chosen resource states? More generally, can the non-Gaussian operations required for universality be shifted entirely into offline resource-state preparation?

Answering these questions, together with the efficient compilation guarantees established here, would help connect the abstract computational model to practical architectures for fault-tolerant bosonic quantum computation and clarify which energy resource states and experimentally available operations are sufficient for scalable universal bosonic quantum computation. 
    
\item \textbf{Variants of the model.}
A desirable feature of any standard computational model is robustness under natural variations of its definition. In this work, we establish such robustness in several directions, but a number of directions remain open.
\begin{itemize}
\item \emph{Ballistic versus gate-based computation.}
Is the computational power of the model unchanged if, instead of a gate-based computation, we consider unit-time evolution under a many-body, energy-preserving Hamiltonian supplied with energy resource states, sometimes referred to as a ballistic model? In finite-dimensional qubit systems, Feynman-type constructions imply an equivalence between such Hamiltonian models and $\BQP$. For general bosonic systems with active Hamiltonians, however, one should not expect an analogous equivalence without further restrictions. It is therefore natural to ask whether the equivalence is restored in the energy-preserving setting when energy resource states are supplied explicitly.

\item \emph{Homodyne measurements and phase sensitivity.}
Does allowing homodyne measurements change the computational power of the model? In particular, when phase-sensitive measurements are available, can pure coherent resource states enable computations that are inaccessible in the thermal model, whose resource states contain no phase information? Understanding this distinction may clarify whether phase coherence constitutes a computational resource beyond energy alone.

\end{itemize}
    \item \textbf{Restricted models of bosonic computation.}
Another natural direction is to understand the computational power of bosonic models under simultaneous restrictions on resources such as circuit depth, energy, number of modes, ancillas, and noise. For instance, what is the power of shallow bosonic circuits whose total energy is bounded polynomially? Could such a model compute Parity in constant depth when supplied with long-range gates, in contrast with what is believed to be possible for qubit $\mathsf{QAC}^0$ circuits~\cite{nadimpalli2024pauli, anshu2025computational, moore1999quantum}? More generally, it would be interesting to fully characterize the tradeoffs between the fundamental resources of bosonic computation---time (or circuit depth), space (number of modes and ancillas), and energy. In particular, can additional energy or ancillary modes compensate for limited circuit depth, and conversely, what lower bounds relate these resources?

A complementary question concerns the effect of noise. How does the computational power of these restricted models change in the presence of physically motivated noise, such as photon loss or dephasing, and how do the required energy, depth, and ancillary resources scale with the noise strength? Developing such tradeoffs could provide a more refined complexity theory of bosonic computation beyond the distinction between polynomial- and exponential-time computation.

    \item \textbf{Bosonic Hamiltonian complexity.}
A broad open direction is to develop a more systematic theory of Hamiltonian complexity for bosonic systems, particularly for physically natural low-degree Hamiltonians beyond Bose--Hubbard~\cite{childs2014bose}. Basic questions include identifying the complexity of bosonic analogues of local-Hamiltonian and other $\mathsf{QMA}$-type problems, as well as understanding how these complexities depend on restrictions such as interaction degree, locality, number of modes, and energy.

Closely related questions arise in the study of finite-temperature properties. What is the complexity of approximating Gibbs states, partition functions, or thermodynamic observables for restricted families of bosonic Hamiltonians? More generally, developing such results would help clarify the broader landscape of bosonic quantum complexity theory and identify which phenomena from finite-dimensional quantum complexity extend to the infinite-dimensional setting, which require explicit energy constraints, and which are genuinely new to bosonic systems.

    \item \textbf{Energy as a computational measure:} One way to view the main message of \cite{CGMMNRS25} is that energy, as a fundamental physical observable, is also a fundamental computational resource along with time and space, which in the bosonic setting specifically requires careful management. Note that the undecidability of whether the computation consumes finite energy in a bosonic quantum computation shown in \cite{CGMMNRS25} does not immediately violate Blum's second axiom for a complexity measure which requires the value of the measure to be decidable~\cite{Blum1967,BlumAxiomsWikipedia}. The situation is very similar to time complexity: deciding if a Turing machine halts in finite time is undecidable but deciding if a specific Turing machine halts within $3$ steps is decidable.
    We, however, conjecture that, given a finite value $c$, deciding if the value of expected energy is $\leq c$ is also undecidable. Assuming this conjecture, this is yet another indication of ``unphysicality'' of the standard model by Lloyd and Braunstein because energy, which happens to be an important resource in a bosonic computation, is not a suitable complexity measure according to Blum's criteria. We note that the energy-preserving model studied in this work does not suffer the same problem, because energy is provided to the input as a resource with a well-defined value. However, one can encounter similar undecidable problems if we consider problems such as: ``does there exist an energy level above which the output of a given \model\ circuit is close to a given state such as vacuum?''
    
    \item \textbf{Fast-forwarding of energy:} An instance for which we have not pinned down the complexity has exponential energy, polynomial space, and $O(1)$ precision. We know that for one oscillator \cite{jain2026efficient} we can simulate exponential energy input states in $\BQP$. It is interesting to ask whether we can generalize this to other energy-preserving Hamiltonians over multiple modes, which we leave as an open question.

    \item \textbf{Generalising the energy-preserving SK theorem.} Our SK theorem (\cref{thm:cutoff-SK}) is phrased in terms of an approximation on a strong cutoff subspace. A natural extension would be to ask for a similar SK theorem, but valid for smooth cutoff, i.e.\ in ECDN. Here simply cutting off and applying our SK does not work, because the cutoff would need to scale as $K=E/\varepsilon^2$, leading to a polynomial dependency on $1/\varepsilon$. However, such an approach works as long as the states throughout the computation have bounded exponential-energy, i.e.\ there exists $s>1$ such that $s^{\hN}$ has bounded expectation value $s^E$, since such states can be approximated to $\varepsilon$ precision with a truncation $K=O(E+\log(1/\varepsilon))$ \cite{upreti2026exponentially}. As shown in this work, these include Gaussian + Kerr acting on coherent states, and it is an interesting open question to characterize which gate sets generate states with bounded exponential-energy, starting from coherent states. Another limitation of \Cref{thm:cutoff-SK} is that it is prescribed for compiling bosonic operators into the particular Kerr + linear optical gate set. It is desirable to generalize this theorem for other natural gate sets.
    
    \item \textbf{Connections to resource theories:} \model\ is reminiscent of the resource theory of athermality, which is a source of inspiration for considering energy conservation. This would suggest that there might be a connection between resource engines and computation. Particularly:
        \begin{itemize}
            \item Resource theory of athermality: In the thermal model, we see that we need a temperature difference for computation, and in the RT of athermality, we similarly require a temperature difference to extract work.
            \item Resource theories of coherent thermodynamics: The primary shortcoming of the RT of athermality is that it ignores the cost of losing coherence. Similarly, if \model\ is stronger than the thermal model for some other measurement protocol, this would further strengthen the concept that resource engines can also compute.
        \end{itemize}
\end{itemize}

\subsection{Acknowledgements}

We thank Dominic Berry for helpful discussions. UC acknowledges inspiring discussions with Varun Upreti, Nicolás Quesada, Lukas Brenner, Robert K\"onig, Libor Caha, Robert Salzmann, Mohammed Ayyash, and funding from the European Union's Horizon Europe Framework Programme (EIC Pathfinder Challenge project Veriqub) under Grant Agreement No.~101114899. SG thanks David Gross for guidance and feedback, and apologizes for getting him kicked out of the Deutsche Bahn waiting room. SG, DR, and DS acknowledge support from the Deutsche Forschungsgemeinschaft
(DFG), project 563388236 (Bridge-QS, SPP 2514). SG additionally acknowledges
support from DFG project 572703436 (QPUP), EU QuantERA/DFG project 583918116
(SDPCODE), and BMFTR (PhoQuant). SM acknowledges inspiring discussions with Eugene Tang, Peter Love, and funding from the National Science Foundation (NSF CCF-2013062 and CCF-2534876). DS and HN acknowledge support from the PhoQuant project, funded by BMFTR.

\medskip

\noindent \textit{AI disclosure.} The authors acknowledge the use of AI tools to assist with manuscript preparation and the development and verification of mathematical proofs. 
The research question and approach were developed by the authors, who take full responsibility for the results presented in the manuscript.

\medskip

\noindent \textit{List of contributions.} All authors contributed to the development and verification of this work. DR, AM, DS, and HN led the development of the technical details. UC, SG, and SM contributed to conceptual and technical developments and supervised the project.

\section{Preliminaries}
\label{scn:preliminaries}
In this section, we introduce the necessary preliminary material. While we employ terminology which originates from quantum optics, we note that all concepts and results are equally valid for any bosonic system beyond photonics. For background on CV quantum information and quantum optics, see~\cite{weedbrook2012gaussian, braunstein2005quantum, walls1995quantum}.

\subsection{Definitions}
\label{sscn:defs}

\begin{definition}[Bosonic mode and Fock space]\label{def:bosonic_mode}
    A single bosonic mode is characterized by an infinite-dimensional Hilbert space $\mathcal{H}$ and a pair of mutually adjoint ladder operators---the annihilation operator $\hat{a}$ and the creation operator $\hat{a}^\dagger$---which satisfy the canonical commutation relation (CCR):
    \begin{equation}
        \label{eq:ccr}
        [\hat{a}, \hat{a}^\dagger] = \II,
    \end{equation}
    where $\II$ is the identity operator. The state space and associated operators are defined as follows:
    \begin{enumerate}
        \item \textbf{Vacuum state}: The unique ground state $\ket{0} \in \mathcal{H}$ defined by the relation:
            \begin{equation}
                \label{eq:vacuum}
                \hat{a}\ket{0} = 0.
            \end{equation}

        \item \textbf{Fock states}: The orthonormal basis $\{\ket{n}\}_{n=0}^\infty$ spanning $\mathcal{H}$ ($\braket{m}{n} = \delta_{m,n}$), where each state is generated via:
            \begin{equation}
                \label{eq:Fock_generation}
                \ket{n} = \frac{(\hat{a}^\dagger)^n}{\sqrt{n!}}\ket{0}.
            \end{equation}
            The action of the ladder operators on these states is given by:
            \begin{equation}
                \label{eq:ladder_action}
                \hat{a}\ket{n} = \sqrt{n}\ket{n-1}, \quad \hat{a}^\dagger\ket{n} = \sqrt{n+1}\ket{n+1}.
            \end{equation}

        \item \textbf{Number operator}: The self-adjoint operator $\hat{n} = \hat{a}^\dagger \hat{a}$ which counts the excitations in the mode, satisfying the eigenvalue equation:
            \begin{equation}
                \label{eq:number_op}
                \hat{n}\ket{n} = n\ket{n}.
            \end{equation}
    \end{enumerate}
    An arbitrary pure single-mode bosonic state $\ket{\psi} \in \mathcal{H}$ is represented as a superposition of these states, $\ket{\psi} = \sum_{n=0}^\infty c_n \ket{n}$, subject to the normalization constraint $\sum_{n=0}^\infty |c_n|^2 = 1$.

    A multimode bosonic Hilbert space is defined by taking a tensor product of single-mode bosonic spaces. The canonical commutation relations generalize to
    \begin{equation}
        [\hat{a}_i, \hat{a}_j] = 0, \qquad [\hat{a}_i, \hat{a}_j^\dagger] = \delta_{ij}\II.
    \end{equation}
\end{definition}

\begin{definition}[Total photon number operator]\label{def:total-photon-number}
    For an $m$-mode bosonic system, the \emph{total photon number operator} is
    \begin{equation}
        \hat{N}\coloneqq \sum_{j=1}^m \hat{n}_j = \sum_{j=1}^m \hat{a}_j^\dagger \hat{a}_j,
    \end{equation}
    where $\hat{n}_j \equiv \II^{\otimes(j-1)} \otimes \hat{n} \otimes \II^{\otimes(m-j)}$ is the number operator on mode $j$.
\end{definition}
\begin{definition}[Energy]\label{def:avg-photon-number}
    The \emph{energy} (or \emph{average photon number}) of a state $\ket{\psi}$ is $E(\ket{\psi}) \coloneqq \bra{\psi}\hat{N}\ket{\psi}$ (\cref{def:total-photon-number}).
\end{definition}
\begin{definition}[Coherent state]\label{def:coherent-state}
    For $\alpha \in \mathbb{C}$, a single-mode coherent state with parameter $\alpha$ is defined as
    \begin{equation}
        \ket{\alpha} \coloneqq e^{-|\alpha|^2/2}\sum_{n=0}^\infty \frac{\alpha^n}{\sqrt{n!}} \ket{n}.
    \end{equation}
    It is the eigenstate of the annihilation operator: $\hat{a}\ket{\alpha} = \alpha\ket{\alpha}$.
    The parameter $\alpha$ is called the \emph{displacement amplitude} since $\ket{\alpha} = \hat{D}(\alpha)\ket{0}$ (\cref{def:displacement}) and $\bra{\alpha}\hat{n}\ket{\alpha} = |\alpha|^2$.
    A multimode coherent state is $\ket{\bm{\alpha}} = \bigotimes_{j} \ket{\alpha_j}$ for $\bm{\alpha} = (\alpha_1,\dots,\alpha_m)$.
\end{definition}

\begin{definition}[Position and momentum operators]
    \label{def:quadratures}
    For a single bosonic mode with annihilation and creation operators $\hat{a}$ and $\hat{a}^\dagger$, the \emph{position} (quadrature) operator $\hQ$ and \emph{momentum} (quadrature) operator $\hP$ are defined as
    \begin{equation}
        \Q \coloneqq \frac{\hat{a} + \hat{a}^\dagger}{\sqrt{2}}, \qquad \P \coloneqq \frac{\hat{a} - \hat{a}^\dagger}{i\sqrt{2}}.
    \end{equation}
    These satisfy the canonical commutation relation $[\Q, \P] = i\II$. In terms of quadratures, the number operator is $\hat{n} = (\Q^2 + \P^2 - \II)/2$.
\end{definition}

\begin{definition}[Displacement operator]
    \label{def:displacement}
    For $\alpha \in \mathbb{C}$, the \emph{displacement operator} is defined as
    \begin{equation}
        \hat{D}(\alpha) \coloneqq e^{\alpha \hat{a}^\dagger - \alpha^* \hat{a}}.
    \end{equation}
    It generates coherent states from vacuum: $\ket{\alpha} = \hat{D}(\alpha)\ket{0}$. The displacement operator is unitary and satisfies $\hat{D}(\alpha)^\dagger \hat{a} \hat{D}(\alpha) = \hat{a} + \alpha$.
\end{definition}

\begin{definition}[Squeezing operator]
    \label{def:squeezing}
    For $\xi = r e^{i\phi} \in \mathbb{C}$, the \emph{squeezing operator} is defined as
    \begin{equation}
        \hat{S}(\xi) \coloneqq e^{\frac{1}{2}(\xi^* \hat{a}^2 - \xi \hat{a}^{\dagger 2})}.
    \end{equation}
    Squeezing is a Gaussian gate that does not preserve photon number. It satisfies $\hat{S}(\xi)^\dagger \hat{a} \hat{S}(\xi) = \hat{a} \cosh r - e^{i\phi} \hat{a}^\dagger \sinh r$.
\end{definition}

\begin{definition}[Linear optics]
    \label{def:linear-optics}
    \emph{Linear optics} (also called \emph{passive Gaussian unitaries}) refers to the group of unitaries generated by Hamiltonians quadratic in creation and annihilation operators that preserve total photon number. On $m$ modes, linear optics is generated by:
    \begin{itemize}
        \item \textbf{Phase shifters}: $e^{i\theta \hat{n}_j}$ acting on mode $j$, which apply a phase $e^{i\theta}$ to the annihilation operator.
        \item \textbf{Beam splitters}: $e^{\theta(\hat{a}_j^\dagger \hat{a}_k - \hat{a}_j \hat{a}_k^\dagger)}$ acting on modes $j,k$, which mix two modes via a unitary rotation.
    \end{itemize}
    Equivalently, linear optics on $m$ modes is isomorphic to the unitary group $U(m)$, acting on the mode operators via $\hat{a}_j \mapsto \sum_k U_{jk} \hat{a}_k$ for $U \in U(m)$~\cite{kok2007linear}. Linear optics preserves the total photon number operator $\hat{N}$.
\end{definition}

\begin{definition} [Polynomial Hamiltonian]
    A multimode bosonic Hamiltonian $H$ is said to be a \emph{polynomial Hamiltonian in the creation and annihilation operators} if it can be written as
\be
H
=
\sum_{\bm r,\bm s}
h_{\bm r,\bm s}
\prod_{j=1}^{m}
\left(\hat a_j^\dagger\right)^{r_j} \hat a_j^{s_j},
\ee
where $\bm r=(r_1,\ldots,r_m)$ and
$\bm s=(s_1,\ldots,s_m)$ are vectors of nonnegative integers and the sum contains finitely many terms. The \emph{degree} of the monomial
\be
\prod_{j=1}^{m}
\left(\hat a_j^\dagger\right)^{r_j} \hat a_j^{s_j}
\ee
is its total degree
\be
|\bm r|+|\bm s|
=
\sum_{j=1}^{m}(r_j+s_j),
\ee
and the degree of $H$ is the maximum degree among its monomials with nonzero coefficient:
\be
\deg(H)
=
\max_{h_{\bm r,\bm s}\neq 0}
\sum_{j=1}^{m}(r_j+s_j).
\ee
\end{definition}

\begin{definition}[Wick (normal) ordering]
    \label{def:wick-ordering}
    An operator polynomial in creation and annihilation operators is in \emph{Wick ordered} (or \emph{normal ordered}) form if all creation operators appear to the left of all annihilation operators. For a single mode, a Wick ordered monomial has the form $\hat{a}^{\dagger k} \hat{a}^l$ for $k,l \geq 0$. For multiple modes, a Wick ordered monomial has the form $\prod_j \hat{a}_j^{\dagger k_j} \prod_j \hat{a}_j^{l_j}$. Any polynomial Hamiltonian can be brought to Wick ordered form using the canonical commutation relations.
\end{definition}

\begin{definition}[Energy-preserving Hamiltonians]
    \label{def:energy-preserving}
    A Hamiltonian $H$ on $m$ modes is \emph{energy-preserving} (or \emph{photon-number-preserving}) if it commutes with the total photon number operator: $[H, \hat{N}] = 0$. Equivalently, $H$ is energy-preserving if and only if in its Wick ordered expansion, every monomial has an equal number of creation and annihilation operators (see \cref{lem:passive-wick} below). Energy-preserving unitaries satisfy $U \hat{N} U^\dagger = \hat{N}$ and thus cannot change the energy of any state.
\end{definition}

\begin{lem}[Energy-preserving polynomial Hamiltonians]
    \label{lem:passive-wick}
    Hamiltonians of the form \be H=\sum_{m,n} c_{m,n}\hat{h}_{m,n},\ee where $\hat{h}_{m,n}$ are in normal order with $m$ creation operators and $n$ annihilation operators, are energy-preserving if and only if there are only terms of the form $\hat h_{m,m}$, i.e.\ the numbers of creation and annihilation operators are equal.
\end{lem}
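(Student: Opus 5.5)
The plan is to compute the commutator of each normal-ordered monomial with $\hat N$ explicitly and observe that it just rescales the monomial. The basic identities are $[\hat N,\hat a_j^\dagger]=\hat a_j^\dagger$ and $[\hat N,\hat a_j]=-\hat a_j$, which follow from the CCR in \cref{def:bosonic_mode}. Since $[\hat N,\cdot]$ is a derivation, applying the Leibniz rule to a monomial $\hat h_{m,n}$ with $m$ creation and $n$ annihilation operators gives
\begin{equation}
[\hat N,\hat h_{m,n}]=(m-n)\,\hat h_{m,n}.
\end{equation}
Consequently $[\hat N,H]=\sum_{m,n}(m-n)c_{m,n}\hat h_{m,n}$. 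If only diagonal terms $\hat h_{m,m}$ appear, every summand vanishes, which proves the ``if'' direction.

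For the ``only if'' direction, suppose $[\hat N,H]=0$. We must deduce that $(m-n)c_{m,n}=0$ for every term, so that only terms with $m=n$ survive. This requires linear independence of distinct normal-ordered monomials $\prod_j\hat a_j^{\dagger r_j}\prod_j\hat a_j^{s_j}$ as operators on the finite-particle domain. Here the finite sum of $\hat h_{m,n}$ is understood as a combination of such monomials with scalar coefficients, where $c_{m,n}\hat h_{m,n}$ groups all terms with the given creation and annihilation counts. Independence is the standard uniqueness of normal-ordered form. To prove it, I would argue by contradiction: take a vanishing nontrivial combination, choose a monomial with nonzero coefficient whose annihilation multi-index $\bm s$ is minimal in total degree, and apply the combination to the Fock state $\ket{\bm s}$. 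Every monomial whose annihilation part is not componentwise $\le\bm s$ kills $\ket{\bm s}$. Any other monomial of equal annihilation degree that survives must have annihilation index exactly $\bm s$, and then distinct creation indices map $\ket{\bm s}$ to distinct Fock states with nonzero coefficients. This yields a nonzero vector, a contradiction.

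An alternative for the ``only if'' direction, avoiding the independence bookkeeping, is to grade the operator by $\Delta=m-n$. The condition $[\hat N,H]=0$ means $e^{i\theta\hat N}He^{-i\theta\hat N}=H$ for all $\theta$. The left-hand side equals $\sum_\Delta e^{i\theta\Delta}H_\Delta$, where $H_\Delta$ collects the terms with $m-n=\Delta$. Averaging over $\theta$ recovers $H=H_0$, so $H_\Delta=0$ for $\Delta\neq 0$, and independence then forces the coefficients of those terms to vanish.

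I expect the main obstacle to be purely technical: $H$ is unbounded, so the commutator identities must be interpreted on the dense invariant domain of finite linear combinations of Fock states. On that domain all manipulations are algebraic and the argument goes through.
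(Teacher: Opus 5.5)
Your argument is correct and follows the paper's proof: the Leibniz-rule computation $[\hat N,\hat h_{m,n}]=(m-n)\hat h_{m,n}$, followed by linear independence of normal-ordered monomials to force $c_{m,n}=0$ for $m\neq n$. Beyond the paper, you actually prove that independence, by testing on $\ket{\bm s}$ for a minimal annihilation index, where the paper only asserts it, and you make explicit that all identities are read on the finite-particle domain.
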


\begin{proof}
    Without loss of generality, assume some arbitrary term $\hat h_{m,n}$ in normal order, $\hat h_{m,n} =\hat  A^\dagger\hat B$, where $\hat A^\dagger = \hat a^\dagger_{p_1}\hat a^\dagger_{p_2}\dots \hat a^\dagger_{p_m}$ and $\hat B = \hat a_{q_1}\hat a_{q_2}\dots \hat a_{q_n}$. If $\hat h_{m,n}$ is not in normal order form, then it can be brought to that ordering by using the commutation relations of the creation and annihilation operators.

    We will use the following identity:
\be[X, YZ] = [X, Y]Z + Y[X, Z].\ee
    First, from repeated application of the identity, we obtain
    \be[\hat N, \hat A^\dagger] = [\hat N, \hat a^\dagger_{p_1}\hat a^\dagger_{p_2}\dots \hat a^\dagger_{p_m}] = \sum_{r=1}^m \hat a^\dagger_{p_1}\dots [\hat N, \hat a_{p_r}^\dagger] \dots \hat a^\dagger_{p_m}.\ee
    Since $[\hat N, \hat a^\dagger_{x}] = \hat a^\dagger_x$, each term in the sum is equal to $\hat A^\dagger$, so

    \be[\hat N, \hat A^\dagger] = m\hat A^\dagger.\ee
    Similarly, from $[\hat N, \hat a_x] = -\hat a_x$, we get
    \be[\hat N, \hat B] = -n\hat B.\ee
    Thus, by another application of the identity,
    \be
    [\hat N, \hat h_{m,n}] = [\hat N, \hat A^\dagger \hat B] = m\hat A^\dagger \hat B - n\hat A^\dagger \hat B = (m-n)\hat h_{m,n}.
    \ee
    And thus,
    \be[\hat N, H] = \sum_{m,n} c_{m,n} (m-n) \hat h_{m,n}.\ee
    Finally, since normally ordered monomials are linearly independent,

    \be[\hat N,H] = 0 \iff c_{m,n} = 0,\quad \forall m \neq n,\ee
    for all the terms.
\end{proof}

\begin{definition}[Schwartz space]
    \label{def:schwartz-space}
    The \emph{Schwartz space} $\calS(\RR^m)$ is the space of smooth functions on $\RR^m$ that are rapidly decreasing together with all their derivatives. Formally, $f \in \calS(\RR^m)$ if $f \in C^\infty(\RR^m)$ and for all multi-indices $\bm\alpha, \bm\beta$, $\sup_{\bm x \in \RR^m} |x^{\bm\alpha} \partial^{\bm\beta} f(\bm x)| < \infty$. In the bosonic setting, we identify $\calS(\RR^m)$ with the subspace of $L^2(\RR^m)$ consisting of wavefunctions that are Schwartz functions in the position representation. Equivalently, $\calS(\RR^m) = \bigcap_{k \in \NN} \calD(\hat{N}^k)$, where $\calD(\cdot)$ denotes the domain of an operator~\cite{CGMMNRS25}. Gaussian gates preserve the Schwartz space. Whether a unitary generated by a polynomial Hamiltonian of degree $\geq 3$ preserves the Schwartz space is undecidable in general~\cite{CGMMNRS25}; the gates used in this work are chosen to preserve it.
\end{definition}

\begin{definition}[Thermal states]
    \label{def:thermal-state}
    For a single mode with mean photon number $\bar{n} \geq 0$, the \emph{thermal state} is the mixed state
    \begin{equation}
        \rho_{\mathrm{th}}(\bar{n}) \coloneqq \frac{1}{\bar{n}+1} \sum_{n=0}^\infty \left(\frac{\bar{n}}{\bar{n}+1}\right)^n \ket{n}\!\bra{n}.
    \end{equation}
    Thermal states have zero displacement and variance $\bar{n} + \frac{1}{2}$ in each quadrature. A multimode thermal state is a tensor product of single-mode thermal states.
\end{definition}

\begin{definition}[Gaussian states]
\label{def:gaussian-state}
     A Gaussian quantum state is a state which can be prepared from a thermal state using polynomial Hamiltonians of degree at most $2$. Examples include vacuum states, coherent states (\cref{def:coherent-state}), squeezed states, and thermal states (\cref{def:thermal-state}).
     In particular, pure Gaussian states over a single mode are those obtained by applying displacement (\cref{def:displacement}) and squeezing (\cref{def:squeezing}) operations to the vacuum state.
\end{definition}

\begin{definition}[Gaussian gates]
    \label{def:gaussian-gates}
    \emph{Gaussian gates} are unitaries generated by Hamiltonians that are polynomials of degree at most $2$ in the creation and annihilation operators (equivalently, in the quadrature operators $\Q, \P$). On $m$ modes, Gaussian gates include: displacement (\cref{def:displacement}), squeezing (\cref{def:squeezing}), phase shift, beam splitters (\cref{def:linear-optics}), and the Fourier transform $e^{i\frac\pi4(\Q^2 + \P^2)}$. Compositions of these operations generate arbitrary Gaussian unitaries. Gaussian gates map Gaussian states to Gaussian states and preserve the Schwartz space.
\end{definition}

\begin{definition}[Kerr gate]
    \label{def:kerr-gate}
    The \emph{Kerr gate} (or \emph{self-Kerr gate}) on mode $j$ is the unitary $e^{i\theta \hat{n}_j^2}$ generated by the squared number operator $\hat{n}_j^2$, where $\theta \in \RR$ is the evolution time. The Kerr gate is energy-preserving since $[\hat{n}_j^2, \hat{N}] = 0$. The \emph{cross-Kerr gate} on modes $j, k$ is $e^{i\theta \hat{n}_j \hat{n}_k}$, which is also energy-preserving.
\end{definition}

\begin{definition}[Cutoff operator seminorm]\label{def:cutoff-norm}
    Let \(\Pi_{\le K}\) be the spectral projector onto the subspace with total photon number at most \(K\). For an operator \(A\), the \emph{cutoff operator seminorm} (or \emph{energy-cutoff seminorm}) is
    \begin{equation}
        \|A\|_{(K)}
        :=
        \|\Pi_{\leq K} A \Pi_{\leq K}\|_\infty.
    \end{equation}
    This quantity is the operator norm of the compression of $A$ to the
    subspace of states with total photon number at most $K$, i.e., $\Pi_{\leq K} A \Pi_{\leq K}$. Note that $\|\bullet\|_{(K)}$ is not a norm because it can send nonzero $A$ to zero. It is, however, a norm for operators supported on the cut-off subspace.
\end{definition}

We always write $\Pi_{\le K}$ for the photon-cutoff projector, reserving $\Pi_n$ for the projector onto exactly $n$ photons; thus $\Pi_{\le K}=\sum_{n=0}^K\Pi_n$. Hereafter, we use $K$ to denote the particle-number cutoff imposed by this projector. For a mean-energy bound we instead use $E$, i.e., to impose an energy bound $E$ on a density matrix $\rho$ we write $\Tr (\rho \hat N) \leq E$.

\begin{definition}[Trotter splitting]
    \label{def:trotter}
    For Hamiltonians $A$ and $B$, the \emph{Trotter product formula} (or \emph{Trotter splitting}) approximates $e^{-it(A+B)}$ by alternating short evolutions under $A$ and $B$:
    \begin{equation}
        e^{-it(A+B)} = \lim_{n \to \infty} \left(e^{-itA/n} e^{-itB/n}\right)^n.
    \end{equation}
    For bounded operators or under suitable domain conditions, the \emph{first-order Trotter error} scales as $O(t^2/n)$ for $n$ steps. Higher-order formulas such as
    \begin{equation}
        e^{-it(A+B)} \approx \left (e^{-itA/(2n)} e^{-itB/n} e^{-itA/(2n)}\right )^n
    \end{equation}
    achieve $O(t^3/n^2)$ error. For unbounded operators, convergence may fail in operator norm but can hold in strong operator topology under appropriate conditions~\cite{Tamura2000,burgarth2023state}.
\end{definition}

\begin{definition}[Dual-rail encoding~\cite{knill2001scheme}]
    \label{def:dual-rail}
    \emph{Dual-rail encoding} represents a qubit using two bosonic modes: the logical states are $\ket{0}_L = \ket{0}\ket{1}$ and $\ket{1}_L = \ket{1}\ket{0}$ in the Fock basis, i.e., a single photon in the second or first mode, respectively. This encoding preserves total photon number (one photon per logical qubit) and allows implementation of single-qubit gates via linear optics. It has been used in many previous works, including \cite{knill2001scheme}, to achieve universal quantum computation using bosons. An alternative \emph{single-rail encoding} uses $\ket{0}_L = \ket{0}$ and $\ket{1}_L = \ket{1}$ on a single mode.
\end{definition}

\paragraph{Basic notation:} We specify natural numbers with $\NN = \{1,2, \ldots\}$ and $\NN_0$ to denote $\{0\} \cup \NN$.
We use italic bold symbols, such as $\bm n$, for vectors and multi-indices. For an occupation vector, $|\bm n|=\sum_j n_j$ denotes total photon number; for a bit string, it denotes Hamming weight. The identity operator is $\II$. An unsubscripted norm denotes the Hilbert-space norm for vectors and the operator norm for operators (also written $\norm{\cdot}_\infty$). We write $\norm{\cdot}_{\mathrm F}$ for the Frobenius norm, $\norm{\cdot}_1$ for the trace norm, and $\frac12\norm{\rho-\sigma}_1$ for trace distance.
We use Knuth's double up-arrow notation to denote tetration. In particular,
$2 \uuarr n$
denotes a tower of \(n\) copies of \(2\), defined recursively by
\be
2 \uuarr 1 = 2,
\qquad
2 \uuarr (n+1) = 2^{\,2 \uuarr n}.
\ee

\subsection{Energy-constrained norms}\label{sscn:energynorm}

\begin{definition}[Energy-constrained diamond norm]\label{def:ecd-norm}
    For a linear Hermiticity-preserving map \(\Phi\) acting on trace-class operators, the \emph{energy-constrained diamond norm} (ECDN) is
    \begin{equation}
        \|\Phi\|_{\diamond,E}
        :=
        \sup_{\rho_{RA}:\ \Tr(\hat N_A\rho_A)\le E}
        \left\|
        (\mathrm{id}_R\otimes \Phi)(\rho_{RA})
        \right\|_1 ,
    \end{equation}
    where \(\hat N_A\) denotes the total photon-number operator on the bosonic system \(A\), and the supremum is over all auxiliary systems \(R\) and all states \(\rho_{RA}\) whose marginal on \(A\) has expected total photon number at most \(E\). The ECDN avoids pathologies of the standard diamond norm in infinite dimensions~\cite{winter2017energy}.
\end{definition}

We need the following definition when we study parametric approximations in \Cref{sec:parametric-approx2}.
\begin{definition}[Energy moment bound]
    \label{def:energy-moment}
    For a state $\ket{\psi}$ evolving under a polynomial Hamiltonian $H$ as $\ket{\psi(s)} = e^{-isH}\ket{\psi}$, we define the \emph{energy moment bound} over time interval $[0,t]$ as
    \begin{equation}
        M(t)  \coloneqq \sup_{s \in [0,t]} \norm{(1+\hat{N})^{d/2}\ket{\psi(s)}},
    \end{equation}
    where $d$ is the active degree of the Hamiltonian (maximum degree of an active monomial) and $\hat{N}$ is the total photon number operator. This bound controls the growth of energy during evolution and appears in simulation error bounds.
\end{definition}

The following elementary lemma relates the two notions. It is often useful because the ECDN allows states with high-energy tails, whereas
\(\|\cdot\|_{(K)}\) only controls the operator on a sharp cutoff subspace. Throughout we assume integer cut-off $K$.

\begin{lem}[From cutoff norm to energy-constrained diamond norm]
    \label{lem:cutoff-to-ec-diamond}
    Let \(U,V\) be unitaries, and let
    \be
        \mathcal U(\rho)=U\rho U^\dagger,
        \qquad
        \mathcal V(\rho)=V\rho V^\dagger .
    \ee
    Then for every \(K>E\),
    \begin{equation}
        \|\mathcal U-\mathcal V\|_{\diamond,E}
        \le
        2\|(U-V)\Pi_{\le K}\|_\infty
        +
        4\sqrt{\frac{E}{K+1}} .
    \end{equation}
    In particular, if \(U\) and \(V\) preserve the cutoff subspace \(\Ran(\Pi_{\le K})\), then
    \be
        \|(U-V)\Pi_{\le K}\|_\infty
        =
        \|\Pi_{\le K}(U-V)\Pi_{\le K}\|_\infty
        =
        \|U-V\|_{(K)},
    \ee
    and hence
    \begin{equation}
        \|\mathcal U-\mathcal V\|_{\diamond,E}
        \le
        2\|U-V\|_{(K)}
        +
        4\sqrt{\frac{E}{K+1}} .
    \end{equation}
\end{lem}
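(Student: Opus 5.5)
We purify and split the input state according to whether it lies in the cutoff subspace. Fix an auxiliary system $R$ and a state $\rho_{RA}$ with $\Tr(\hat N_A\rho_A)\le E$. By convexity of the trace norm and linearity of $\mathrm{id}_R\otimes(\mathcal U-\mathcal V)$, it suffices to treat pure states $\ket{\psi}_{RA}$: purifying $\rho_{RA}$ into a larger reference leaves the $A$-marginal unchanged and can only increase the trace norm. For a pure state, write $\ket{\psi}=\ket{\psi_\le}+\ket{\psi_>}$ with $\ket{\psi_\le}=(\II_R\otimes\Pi_{\le K})\ket\psi$ and $\ket{\psi_>}=(\II_R\otimes(\II-\Pi_{\le K}))\ket\psi$.

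The first key step is a Markov-type tail bound. Since $\hat N_A\ge (K+1)(\II-\Pi_{\le K})$ as operators, we have
\[
\norm{\psi_>}^2=\Tr\bigl((\II-\Pi_{\le K})\rho_A\bigr)\le \frac{E}{K+1}.
\]

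The second step reduces to vectors. For unit vectors $x,y$ we use $\norm{\ketbra{x}-\ketbra{y}}_1\le 2\norm{x-y}$; this is the standard pure-state bound, since the trace distance of two pure states equals $\sqrt{1-|\braket{x}{y}|^2}\le\norm{x-y}$. Apply it with $x=(\II\otimes U)\ket\psi$ and $y=(\II\otimes V)\ket\psi$. Then
\[
\norm{x-y}\le \norm{(\II\otimes (U-V)\Pi_{\le K})\ket{\psi_\le}}+\norm{(\II\otimes(U-V))\ket{\psi_>}}\le \norm{(U-V)\Pi_{\le K}}_\infty+2\norm{\psi_>}.
\]
Here the first term uses $\norm{\psi_\le}\le1$ and $\norm{\II_R\otimes X}_\infty=\norm{X}_\infty$. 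The second uses $\norm{U-V}_\infty\le2$, which holds because $U$ and $V$ are unitary. Combining this with the tail bound gives
\[
\norm{x-y}\le \norm{(U-V)\Pi_{\le K}}_\infty+2\sqrt{E/(K+1)},
\]
and doubling yields the claimed inequality. The hypothesis $K>E$ only ensures the bound is nontrivial.

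For the "in particular" clause, suppose $U$ and $V$ preserve $\Ran(\Pi_{\le K})$. Then $(U-V)\Pi_{\le K}=\Pi_{\le K}(U-V)\Pi_{\le K}$, which by definition has operator norm $\norm{U-V}_{(K)}$.

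The main subtlety is the reduction to pure states together with the trace-norm-to-vector-norm inequality. It may be cleaner to avoid the pure-state formula and instead bound directly:
\[
\ketbra{x}-\ketbra{y}=\ket{x}(\bra x-\bra y)+(\ket x-\ket y)\bra y,
\]
whose trace norm is at most $2\norm{x-y}$ by the triangle inequality on rank-one operators. I would use this version. The infinite-dimensional setting causes no issues, since all operators involved are bounded.
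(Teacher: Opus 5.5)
Your proof is correct and gives exactly the stated constants. It shares its skeleton with the paper's proof: cut off at $K$, apply the Markov bound $\Tr((\II-\Pi_{\le K})\rho_A)\le E/(K+1)$, and pay $\norm{U-V}_\infty\le 2$ on the tail.

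The difference is the level at which you split.
\begin{itemize}
\item The paper stays with mixed states. It compares $\rho_{RA}$ with $\rho_K=P\rho_{RA}P$ via the gentle measurement lemma, charges $2\norm{\rho_{RA}-\rho_K}_1$ for the tail using only that $\mathcal U,\mathcal V$ are channels, and bounds the truncated part through $(U-V)\rho_K U^\dagger+V\rho_K(U^\dagger-V^\dagger)$.
\item You purify first and split the vector $\ket\psi=\ket{\psi_\le}+\ket{\psi_>}$. The gentle measurement lemma is then replaced by a vector triangle inequality followed by $\norm{\ketbra{x}{x}-\ketbra{y}{y}}_1\le 2\norm{x-y}$, which is the rank-one analogue of the paper's final identity.
\end{itemize}
Your route is more self-contained, since it effectively proves the needed gentle-measurement estimate inline. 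The paper's route avoids purification, and its tail step applies verbatim to any pair of channels, not only unitary conjugations.

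One wording point: the reduction to pure states comes from purification together with trace-norm contractivity of the partial trace over the purifying system, not from convexity. A convex decomposition into pure states would not preserve the energy constraint componentwise, although Jensen's inequality for $\sqrt{\cdot}$ would rescue that version.
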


\begin{proof}
    Let \(\rho_{RA}\) be any state satisfying
    \(\Tr(\hN_A\rho_A)\le E\), and define
    \be
        P:=\II_R\otimes \Pi_{\leq K},
        \qquad
        \rho_K:=P\rho_{RA}P .
    \ee
    Since the spectrum of \(\hN\) is contained in \(\{0,1,2,\ldots\}\), Markov's inequality (see~\cite{CJMMM26}) gives
    \be
        \Tr[(\II-\Pi_{\le K})\rho_A]
        \le
        \frac{E}{K+1}.
    \ee
    By the gentle measurement lemma~\cite{wilde2013quantum, winter2017energy},
    \begin{equation}
        \|\rho_{RA}-\rho_K\|_1
        \le
        2\sqrt{\frac{E}{K+1}} .
    \end{equation}
    Now write \(\Delta:=\mathrm{id}_R\otimes(\mathcal U-\mathcal V)\). Since \(\mathcal U\) and
    \(\mathcal V\) are channels, \(\|\Delta(X)\|_1\le 2\|X\|_1\). Therefore
    \be
        \|\Delta(\rho_{RA})\|_1
        \le
        \|\Delta(\rho_K)\|_1
        +
        2\|\rho_{RA}-\rho_K\|_1 .
    \ee
    For the first term, using
    \be
        U\rho_K U^\dagger - V\rho_K V^\dagger
        =
        (U-V)\rho_K U^\dagger
        +
        V\rho_K(U^\dagger-V^\dagger),
    \ee
    and the fact that \(\rho_K\) is supported on \(\Ran(\Pi_{\le K})\) on the system register, we get
    \be
        \|\Delta(\rho_K)\|_1
        \le
        2\|(U-V)\Pi_{\le K}\|_\infty \|\rho_K\|_1
        \le
        2\|(U-V)\Pi_{\le K}\|_\infty .
    \ee
    Combining the two estimates gives
    \be
        \|\Delta(\rho_{RA})\|_1
        \le
        2\|(U-V)\Pi_{\le K}\|_\infty
        +
        4\sqrt{\frac{E}{K+1}} .
    \ee
    Taking the supremum over all \(\rho_{RA}\) with \(\Tr(\hN_A\rho_A)\le E\) proves the claim.
\end{proof}

\begin{lem}[Cutoff norm of energy-preserving monomials]
    \label{lem:cutoff-monomial-bound}
    Let
    \be    M_{\bm\alpha,\bm\beta}=(\hat a^\dagger)^{\bm\alpha} \hat a^{\bm\beta}
    \ee
    be energy-preserving, with $\bm\alpha,\bm\beta \in \mathbb{Z}_{\geq 0}^m$ with degree $d = 2r$, i.e., \(|\bm\alpha|=|\bm\beta|=r\). Then
    \begin{equation}
        \|M_{\bm\alpha,\bm\beta}\|_{(K)}
        =
        \|\Pi_{\le K} M_{\bm\alpha,\bm\beta}\Pi_{\le K}\|_\infty
        \le
        K^r ,
    \end{equation}
    for all $K\in\mathbb N$. Consequently, if
    \be
        A=\lambda M_{\bm\alpha,\bm\beta}+\overline{\lambda}M_{\bm\alpha,\bm\beta}^\dagger ,
    \ee
    for $\lambda\in\mathbb C$, then
    \begin{equation}
        \|A\|_{(K)}
        \le
        2|\lambda|K^r
        =
        O\!\left(|\lambda| K^{d/2}\right).
    \end{equation}
\end{lem}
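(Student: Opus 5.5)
Since $M_{\bm\alpha,\bm\beta}$ has $|\bm\alpha|=|\bm\beta|=r$, it maps each fixed photon-number sector $\Ran(\Pi_n)$ into itself. Hence it commutes with every $\Pi_n$, and $\Pi_{\le K}M_{\bm\alpha,\bm\beta}\Pi_{\le K}=\bigoplus_{n=0}^K \Pi_n M_{\bm\alpha,\bm\beta}\Pi_n$. The compressed norm is therefore $\max_{n\le K}\|M_{\bm\alpha,\bm\beta}\Pi_n\|_\infty$, and it suffices to show $\|M_{\bm\alpha,\bm\beta}\Pi_n\|_\infty\le n^r$ for each $n\le K$. For $n<r$ the operator $\hat a^{\bm\beta}$ annihilates the sector, so the bound is trivial.

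To bound a single sector, I factor $M_{\bm\alpha,\bm\beta}\Pi_n = \big((\hat a^\dagger)^{\bm\alpha}\Pi_{n-r}\big)\big(\hat a^{\bm\beta}\Pi_n\big)$, since $\hat a^{\bm\beta}$ maps sector $n$ into sector $n-r$. The two factors are handled as follows.

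\emph{Annihilation factor.} Write $\hat a^{\bm\beta}$ as an ordered product of $r$ single-mode annihilation operators. The $j$-th operator acts on sector $n-j+1$, where $\|\hat a_i\Pi_{k}\|^2=\|\Pi_k\hat n_i\Pi_k\|\le k$. The product therefore has norm at most $\prod_{j=0}^{r-1}\sqrt{n-j}\le n^{r/2}$.

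\emph{Creation factor.} Similarly, $\|\hat a_i^\dagger\Pi_k\|^2=\|\Pi_k(\hat n_i+1)\Pi_k\|\le k+1$. Applying $r$ creation operators starting from sector $n-r$ gives norm at most $\prod_{j=1}^{r}\sqrt{n-r+j}\le n^{r/2}$.

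Multiplying the two factors gives $n^r\le K^r$. The only care needed is the index bookkeeping: the sector labels must be tracked so that the creation operators, each of which contributes $\sqrt{k+1}$, never push the bound above $\sqrt n$. Since the intermediate sector labels stay between $n-r$ and $n-1$, this works out.

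The consequence follows directly. $M^\dagger_{\bm\alpha,\bm\beta}=M_{\bm\beta,\bm\alpha}$ is again energy-preserving of the same degree, so it obeys the same bound. The compression to the cutoff subspace is linear, so the triangle inequality gives $\|A\|_{(K)}\le|\lambda|K^r+|\bar\lambda|K^r=2|\lambda|K^{d/2}$. I expect no real obstacle here; the main point is to make the block-diagonal reduction precise, which follows from $[M_{\bm\alpha,\bm\beta},\hat N]=0$ as in \cref{lem:passive-wick}.
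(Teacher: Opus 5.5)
Your proof is correct and is essentially the paper's argument: bound the annihilation string using $\|\hat a_i\Pi\|\le\sqrt{k}$ on states with at most $k$ photons, use that it lowers the photon number by $r$, and bound the creation string acting on the lowered space by another factor $K^{r/2}$. The only difference is that you first split into fixed-$n$ sectors via $[M_{\bm\alpha,\bm\beta},\hat N]=0$, whereas the paper works with $\Pi_{\le K}$ directly and so needs no block-diagonal reduction; your sector-wise bookkeeping also gives the slightly sharper per-sector bound $n!/(n-r)!\le n^r$.
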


\begin{proof}
    It suffices to bound
    \(\|M_{\bm\alpha,\bm\beta}\Pi_{\le K}\|_\infty\). We first note that, on the subspace
    with total photon number at most \(K\), each annihilation operator contributes at
    most a factor of order \(\sqrt K\). More precisely,
    \be
        \|\hat a_j\Pi_{\le K}\|_\infty \le \sqrt K,
    \ee
    and hence, for a product of \(r\) annihilation operators,
    \be
        \|\hat a^{\bm\beta} \Pi_{\le K}\|_\infty \le K^{r/2}.
    \ee
    After applying \(\hat a^{\bm\beta}\), the total photon number is still at most \(K-r\). (If $r > K$, the annihilation product vanishes on the cutoff, so the upper bound trivially holds.)
    Applying \(r\) creation operators can increase the total photon number by at most
    \(r\), so
    \be
        \|(\hat a^\dagger)^{\bm\alpha} \Pi_{\le J}\|_\infty
        \le
        (J+r)^{r/2},
    \ee
    for any $J\ge0$. Therefore
    \be
        \|M_{\bm\alpha,\bm\beta}\Pi_{\le K}\|_\infty
        =
        \|(\hat a^\dagger)^{\bm\alpha} \hat a^{\bm\beta} \Pi_{\le K}\|_\infty
        \le
        K^{r/2}K^{r/2}
        \le
        K^r .
    \ee
    Since
    \be
        \|\Pi_{\le K} M_{\bm\alpha,\bm\beta}\Pi_{\le K}\|_\infty
        \le
        \|M_{\bm\alpha,\bm\beta}\Pi_{\le K}\|_\infty,
    \ee
    we obtain the first claim. The bound for
    \(A=\lambda M_{\bm\alpha,\bm\beta}+\overline{\lambda}M_{\bm\alpha,\bm\beta}^\dagger\)
    then follows from the triangle inequality and the same estimate applied to
    \(M_{\bm\alpha,\bm\beta}^\dagger\).
\end{proof}

\section{Why energy-preserving gates? The issues of active gate decompositions}\label{sec:pathology}

Lloyd and Braunstein \cite{lloyd_quantum_1999} showed that linear optics, the Gaussian gates (displacement, squeezing, phase shift), and a single higher-order gate such as $\N^2$ or $\Q^3$ generate the entire Lie algebra of symmetric polynomials in position and momentum operators $\Q_j,\P_j$.
They then apply the equation
\begin{equation}\label{eq:LB99-2}
    e^{iAt}e^{iBt}e^{-iAt}e^{-iBt} = e^{-t^2[A,B]} + O(t^3),
\end{equation}
to argue that any operator in the Lie algebra can be approximated using the ``Gaussian + $\N^2$ or $\Q^3$'' gate set.
For unbounded operators such as polynomials in $\Q$ and $\P$, this formula does not generally hold with an $O(t^3)$ error in operator norm.
A related issue arises for Trotter splitting: \cref{sec:no-convergence} gives an example where it fails to converge in operator norm, even for linear-optical Hamiltonians.
Additionally, one needs to take into account the \emph{energy} of the output state of a circuit.
We give an example in \cref{sec:not-enough-energy} where a circuit of $k$ gates achieves energy at least $2\uuarr\Omega(k)$ (where $\uuarr$ is Knuth's double up-arrow notation; see \cref{scn:preliminaries}), whereas each Gaussian gate of bounded evolution time can increase energy $E$ to at most $O(E+1)$.
In this example, most of the amplitude weight is concentrated on states with huge photon number.

These results do not rule out efficient compilation under suitable energy bounds.
In \cref{sec:cutoff-sk}, we give efficient compilation results below a photon-number cutoff using linear optics and Kerr gates. Energy-preserving Hamiltonians are particularly nice here, because they can be written as a direct sum of finite-dimensional operators acting on the $N$-photon subspace for all $N\in\NN_0$.
In \cref{sec:parametric-approx}, we show how to simulate non-energy-preserving circuits under energy-moment bounds using energy-preserving gates supplied with energy resource states.

This leaves the question of whether an energy-constrained variant of \cite{lloyd_quantum_1999} can be made rigorous.
Recently, \cite{Becker2025} have shown rigorous convergence rates for Trotter splitting of operators $e^{A+L}$, where $L$ is relatively $A$-bounded, as well as for energy-limited unitary dynamics.
It would be interesting to extend these methods to commutator formulas such as \cref{eq:LB99-2}.
Trotter splitting is also used implicitly in \cite{lloyd_quantum_1999} to write elements of the Lie algebra in terms of single commutator monomials, which are then decomposed with \cref{eq:LB99-2}.\footnote{At least that is our interpretation since there is no explicit discussion of Trotter splitting.}

Unfortunately, relative boundedness is non-trivial and does not hold in general.
There are also domain issues: $A+B$ can be essentially self-adjoint even when $A$ individually is not.
Note that (essential) self-adjointness is undecidable for general polynomial Hamiltonians \cite{CGMMNRS25}.
Even when the Trotter formula converges, one needs to be careful to avoid pathological cases.
In \cref{sec:too-much-energy}, we give an example where the Trotter formula converges in strong operator topology, but the energy of the output state is doubly exponential in the number of steps, while the exact evolution has bounded energy.
This means that almost all the energy must live in heavy tails, so it might be possible to gently truncate between Trotter steps by inserting measurements.

In conclusion, formulating an energy-constrained version of the LB model will require considerable technical effort to avoid pathological cases.
In our paper we are able to circumvent these issues by taking all energy from a single source of coherent light on which we act only via energy-preserving operations (see \cref{scn:model-definition}).

\begin{remark}
    Prior work (e.g.\ \cite{kalajdzievski2021exact,CJMMM26,CGMMNRS25}) relies on the Campbell identity
    \begin{equation}
        e^{X} Y e^{-X} = \sum_{n=0}^\infty\frac{\adj^n_X(Y)}{n!}, \qquad \adj_X(Y) \coloneq [X,Y]
    \end{equation}
    to work with bosonic gates in the Heisenberg picture, which is a special case of the Baker--Campbell--Hausdorff formula.
    Using such identities with unbounded operators requires great care due to issues of domains, continuity, and differentiability.
    For Gaussian gates they are generally unproblematic, but once gates like $\Q^3$ are added, we should justify the formula more rigorously.
    In \cref{app:heisenberg}, we provide such a justification.
    In summary, as long as we restrict ourselves to the Schwartz space, the Heisenberg identities in the following and in prior works can be justified.
    This is sufficient as all gates we use here preserve the Schwartz space,\footnote{Though generally, it is \emph{undecidable} whether a given gate preserves the Schwartz space \cite{CGMMNRS25}.}
    apart from the number-controlled squeezing gate in \cref{prop:uuarr}.
\end{remark}

\subsection{No convergence in operator norm}\label{sec:no-convergence}

It is well known that the Trotter product formula does not generally converge in operator norm for unbounded operators (see e.g.~\cite{Tamura2000}).
A bosonic example is provided by \cite{burgarth2023state} with $A= \frac12(\Q^2+\P^2)$ and $B = \frac12(\Q\P+\P\Q)$, where the Trotter error approaches $2$, the maximum distance of two unitaries in operator norm.
Here, we provide an example with just linear optics.

\begin{proposition}
    Let $A = \a_1^\dagger \a_1 - \a_2^\dagger \a_2$ and $B = \a_1^\dagger\a_2 + \a_2^\dagger \a_1$.
    Define
    \begin{equation}
        \calU \coloneq e^{-i(A+B)},\qquad \calT_n \coloneq \bigl(e^{-iA/n}e^{-iB/n}\bigr)^n.
    \end{equation}
    Then for all $\ket{\psi}\in L^2(\RR^2)$,
    \begin{equation}\label{eq:lim:TnUpsi}
        \lim_{n\to\infty} (\calT_n - \calU)\ket{\psi} = 0,
    \end{equation}
    but
    \begin{equation}\label{eq:lim:TnU}
        \lim_{n\to\infty}\norm{\calT_n - \calU} = 2.
    \end{equation}
\end{proposition}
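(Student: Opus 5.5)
The plan is to exploit the fact that both $A$ and $B$ are passive, so everything decomposes over the finite-dimensional photon-number sectors. Then we use the Schwinger (two-mode) realization of $\mathfrak{su}(2)$ to reduce both claims to statements about $2\times 2$ matrices.

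\textbf{Reduction to $\mathrm{SU}(2)$.} Set $J_z=\tfrac12(\a_1^\dagger\a_1-\a_2^\dagger\a_2)$, $J_x=\tfrac12(\a_1^\dagger\a_2+\a_2^\dagger\a_1)$, $J_y=\tfrac{1}{2i}(\a_1^\dagger\a_2-\a_2^\dagger\a_1)$. These satisfy the $\mathfrak{su}(2)$ commutation relations and commute with $\hN$, and we have $A=2J_z$, $B=2J_x$. On the $N$-photon sector $\Ran(\Pi_N)$, which has dimension $N+1$, they act as the spin-$j=N/2$ irreducible representation $D^{j}$. Since linear optics realizes $U(2)$ via $\hat a_j\mapsto\sum_k U_{jk}\hat a_k$ (\cref{def:linear-optics}), we get
\[
\calU|_{\Ran(\Pi_N)}=D^{N/2}(u),\qquad \calT_n|_{\Ran(\Pi_N)}=D^{N/2}(t_n),
\]
where $u=e^{-i(\sigma_z+\sigma_x)}$ and $t_n=(e^{-i\sigma_z/n}e^{-i\sigma_x/n})^n$ are $2\times2$ matrices in $\mathrm{SU}(2)$.

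\textbf{Strong convergence \eqref{eq:lim:TnUpsi}.} On each finite sector, the finite-dimensional Lie--Trotter formula gives $D^{N/2}(t_n)\to D^{N/2}(u)$; this also follows directly from $t_n\to u$ and continuity of the representation. Hence $(\calT_n-\calU)\ket\psi\to0$ for every $\ket\psi$ with finite support in photon number. Such vectors are dense in $L^2(\RR^2)$, and $\norm{\calT_n-\calU}\le2$ uniformly in $n$. A standard $\varepsilon/3$ argument then extends the convergence to all $\ket\psi$.

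\textbf{Norm limit \eqref{eq:lim:TnU}.} Since both operators are block diagonal, $\norm{\calT_n-\calU}=\sup_N\norm{D^{N/2}(t_nu^{-1})-\II}$. The matrix $w_n:=t_nu^{-1}\in\mathrm{SU}(2)$ is a rotation $e^{-i\phi_n\,\bm m_n\cdot\bm\sigma/2}$ with some angle $\phi_n$. In the spin-$j$ representation its eigenvalues are $e^{-i\phi_n k}$ for $k\in\{-j,\dots,j\}$. Taking $k=j=N/2$ gives
\[
\norm{D^{N/2}(w_n)-\II}\ge|e^{-i\phi_nN/2}-1|.
\]
The key point is that $\phi_n\ne0$, i.e.\ $w_n\ne\pm\II$. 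By BCH in finite dimensions, $t_n=u\,e^{\frac{1}{2n}[\sigma_z,\sigma_x]+O(n^{-2})}$ up to conjugation. Since $[\sigma_z,\sigma_x]=2i\sigma_y\neq0$, this gives $\phi_n=\Theta(1/n)$, so $\phi_n\neq 0$ for all large $n$. Then $\sup_N|e^{-i\phi_nN/2}-1|\ge2\cos(\phi_n/4)$, because the points $e^{-i\phi_nN/2}$ step around the circle in increments of $\phi_n/2$ and so come within $\phi_n/4$ of $-1$. Combined with the trivial upper bound $2$, this yields the limit $2$.

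I expect the main obstacle to be this last step: rigorously certifying that $w_n$ is a nontrivial rotation with $\phi_n\to0$ but $\phi_n\ne0$. I would handle it by an explicit expansion of $t_n u^{-1}$ to first order in $1/n$. Alternatively, the exact $\mathrm{SU}(2)$ composition formula gives $\cos(\phi_n/2)=\tfrac12\Tr w_n$, and one checks that $\tfrac12\Tr w_n<1$.
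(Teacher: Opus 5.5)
Your argument is the paper's argument in representation-theoretic language, with one exception. The paper also reduces to the $2\times2$ matrices $u=e^{-i(\sigma_z+\sigma_x)}$ and $t_n$ (its $U$, $U_n$, obtained from the Heisenberg action on $\bfa^\dagger$). Its amplifying state $(\bfa^\dagger v_n)^{K_n}\ket{0}/\sqrt{K_n!}$, built from an eigenvector $v_n$ of $u^{-1}t_n$, is the extremal-weight vector whose eigenvalue you use; your $t_nu^{-1}$ is conjugate to $u^{-1}t_n$. Your strong-convergence argument and the amplification step are fine.

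The gap is the one step with real content: showing $t_n\neq u$. Your BCH justification does not work as stated. With $M=\sigma_z+\sigma_x$ one has $t_n=\exp\bigl(-iM-\tfrac{1}{2n}[\sigma_z,\sigma_x]+O(n^{-2})\bigr)$. By Duhamel, $t_nu^{-1}=\II-\tfrac{1}{2n}\int_0^1 e^{-isM}[\sigma_z,\sigma_x]e^{isM}\,ds+O(n^{-2})$. The first-order term is an \emph{average} of conjugates of $[\sigma_z,\sigma_x]$, not a single conjugate. So $[\sigma_z,\sigma_x]\neq0$ does not by itself give $\phi_n=\Theta(1/n)$. Because $M$ anticommutes with $\sigma_y$, the average equals $2i\sigma_y\int_0^1 e^{2isM}\,ds$. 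This is nonzero only because $2\sqrt2\notin2\pi\mathbb Z$; for total evolution time $\pi/\sqrt2$ instead of $1$ it vanishes identically. Your fallback, ``one checks $\tfrac12\Tr w_n<1$'', is left unexecuted, and in practice it needs a closed form for $t_n$.

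The paper closes exactly this point. It computes $S=e^{-i\sigma_z/n}e^{-i\sigma_x/n}=\cos^2\tfrac1n\,\II-i\bigl(\cos\tfrac1n\sin\tfrac1n\,(\sigma_x+\sigma_z)+\sin^2\tfrac1n\,\sigma_y\bigr)$ and writes it as $\cos\alpha\,\II-i\sin\alpha\,\bm m\cdot\bm\sigma$ with $\bm m$ a unit vector. Then $t_n=S^n=\cos(n\alpha)\II-i\sin(n\alpha)\,\bm m\cdot\bm\sigma$ has a nonzero $\sigma_y$ coefficient because $0<n\alpha\le2<\pi$, while $u$ has none.

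A shorter fix fits your setup. $S$ commutes with $S^n$, so if $S^n=u$, then (since $\sin\sqrt2\neq0$) $S$ would commute with $\sigma_x+\sigma_z$. It would then lie in $\mathrm{span}\{\II,\sigma_x+\sigma_z\}$, contradicting its nonzero $\sigma_y$ coefficient $-i\sin^2\tfrac1n$. Either argument gives $w_n\neq\II$ for every $n$. Together with $w_n\to\II$ (finite-dimensional Lie--Trotter), this yields $0<\phi_n\to0$, and the rest of your proof goes through unchanged.
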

\begin{proof}
    Convergence in strong operator topology \cref{eq:lim:TnUpsi} is straightforward since $A,B$ preserve photon number and are bounded in each sector.

    To prove the failure of operator-norm convergence in \cref{eq:lim:TnU}, we compute the action of $\calU$ and $\calT_n$ in the Heisenberg formalism.
    We first show that the formula never holds exactly in the single-photon subspace, and then we turn a tiny error for one photon into an asymptotically maximal error for many photons.

    Define the row vector $\bfa^\dagger \coloneq (\ad_1,\ad_2)$.
    Recall the CCR $[\a_j,\ad_k] = \delta_{jk}$ and $[\a_j,\a_k]=0$.
    The commutators with $A$ and $B$ are
    \begin{equation}
        [A,\ad_1] = \ad_1,\qquad [A,\ad_2]=-\ad_2,\qquad [B,\a_1^\dagger] = \a_2^\dagger,\qquad[B,\a_2^\dagger]=\a_1^\dagger.
    \end{equation}
    Consider $Y(s) \coloneq e^{-isA}\bfa^\dagger e^{isA}$.
    Then
    \begin{equation}\label{eq:ODE:Ys}
        \frac{d}{ds} Y(s) = -i \left(e^{-isA}[A,\a_1^\dagger]e^{isA}, e^{-isA}[A,\a_2^\dagger]e^{isA}\right) = -i Y(s) \sigma_z,\qquad Y(0) = \bfa^\dagger,
    \end{equation}
    where, as always,
    \begin{equation}
        \sigma_x =
        \begin{pmatrix}
            0&1\\1&0
        \end{pmatrix},\qquad\sigma_y =
        \begin{pmatrix}
            0&-i\\i&0
        \end{pmatrix},\qquad \sigma_z =
        \begin{pmatrix}
            1&0\\0&-1
        \end{pmatrix}.
    \end{equation}
    The solution to the ODE in \cref{eq:ODE:Ys} is given by
    \begin{equation}
        e^{-isA}\bfa^\dagger e^{isA} =  \bfa^\dagger e^{-is\sigma_z}.
    \end{equation}
    With the same argument, we can also compute the Heisenberg evolutions of $B$ and $A+B$:\footnote{This proof does not require \cref{app:heisenberg} since we only need to consider finite-dimensional subspaces.}
    \begin{equation}\label{eq:UT-Heisenberg}
        \calU \bfa^\dagger \calU^\dagger = \bfa^\dagger U,\quad U \coloneq e^{-i(\sigma_z+\sigma_x)},\qquad\qquad \calT_n \bfa^\dagger \calT_n^\dagger = \bfa^\dagger U_n,\quad U_n \coloneq \bigl(e^{-i\sigma_z/n}e^{-i\sigma_x/n}\bigr)^n.
    \end{equation}
    Writing the exponentials in terms of Paulis gives
    \begin{equation}\label{eq:exp-pauli}
        e^{-it\sigma_z} = \cos t\,\II - i\sin t\,\sigma_z,\qquad e^{-it\sigma_x} = \cos t\,\II - i\sin t\,\sigma_x,
    \end{equation}
    and since $M \coloneq \sigma_x+\sigma_z$ satisfies $M^2 = 2\II$,
    \begin{equation}
        \begin{aligned}\label{eq:u}
            U &= e^{-iM} = \sum_{k=0}^\infty \frac{(-1)^k2^k}{(2k)!}\II - i\sum_{k=0}^\infty\frac{(-1)^k2^k}{(2k+1)!}M\\
            &= \cos \sqrt{2}\,\II - \frac{i\sin\sqrt2}{\sqrt2}(\sigma_z+\sigma_x).
        \end{aligned}
    \end{equation}
    Next, let $S_t \coloneq e^{-it\sigma_z}e^{-it\sigma_x}$, so $U_n = S_{1/n}^n$.
    Then, by \cref{eq:exp-pauli} and $\sigma_z\sigma_x = i\sigma_y$, we have
    \begin{equation}
        \begin{aligned}
            S_t &= (\cos t\,\II - i\sin t\, \sigma_z)(\cos t\,\II - i\sin t\,\sigma_x) \\
            &= \cos^2t\,\II - i\bigl(\cos t\sin t\,(\sigma_x + \sigma_z) + \sin^2t\,\sigma_y\bigr).
        \end{aligned}
    \end{equation}
    The next step is to write $S_t$ in the form of \cref{eq:exp-pauli} so that we can easily compute its $n$-th power.
    Define
    \begin{equation}
        \alpha_t \coloneq \arccos(\cos^2 t),\qquad \bfm_t \coloneq \frac{(\cos t,\sin t,\cos t)}{\sqrt{1+\cos^2 t}},\qquad \bfsigma \coloneq (\sigma_x,\sigma_y,\sigma_z).
    \end{equation}
    For $0<t\le1$, we have $\sin\alpha_t = \sqrt{1-\cos^4t}=\sin t\,\sqrt{1+\cos^2t}$, so
    \begin{equation}
        S_t = \cos \alpha_t \,\II - i\sin\alpha_t \, (\bfm_t \cdot \bfsigma),
    \end{equation}
    where $\bfm_t \cdot \bfsigma$ is the dot-product of two row vectors.
    Since the Pauli matrices anticommute,
    \begin{equation}
        (\bfm_t\cdot\bfsigma)^2 =\norm{\bfm_t}^2\II = \frac{2\cos^2 t + \sin^2 t}{1+\cos^2 t}\II = \II.
    \end{equation}
    Thus, we have analogously to \cref{eq:exp-pauli},
    \begin{equation}\label{eq:Un-pauli}
        U_n = S_{1/n}^n = \cos(n\alpha_{1/n})\,\II - i\sin(n\alpha_{1/n})\, (\bfm_{1/n}\cdot\bfsigma).
    \end{equation}
    Using the expansion $\arccos(\cos^2 t) = \sqrt{2}t + O(t^3)$ as $t\downarrow0$, with $t=1/n$,
    \begin{equation}
        n\alpha_{1/n}\ \to\ \sqrt{2},\qquad \bfm_{1/n}\ \to\ \frac{(1,0,1)}{\sqrt 2},
    \end{equation}
    and, recalling \cref{eq:u}, we also have
    \begin{equation}
        U_n\ \to \ \cos\sqrt{2}\,\II -\frac{i\sin\sqrt{2}}{\sqrt2}(\sigma_x + \sigma_z) = U
    \end{equation}
    without ever achieving equality for any finite $n$ since $U_n$ in \cref{eq:Un-pauli} has a nonzero $\sigma_y$ component. Indeed, $\cos^2t\ge\cos(2t)$ gives $0<n\alpha_{1/n}\le2<\pi$, so $\sin(n\alpha_{1/n})\ne0$.

    Observe that the computation so far was effectively in the single-particle subspace.
    It remains to turn a tiny error for one photon into a large error for many photons.
    Let $R_n = U^\dagger U_n$.
    Then $R_n\to\II$, but $R_n\ne\II$ for all $n\in\NN$.
    Hence, we can pick a sequence $0\ne\theta_n\to0$, such that $e^{i\theta_n}$ is an eigenvalue of $R_n$ with (unit) eigenvector $\bm v_n\in\CC^2$.
    Define the supermode creation operator
    \begin{equation}
        \b_n^\dagger \coloneq \bfa^\dagger\bm v_n = v_{n,1} \ad_1 + v_{n,2}\ad_2,
    \end{equation}
    and define the $K_n$-photon state
    \begin{equation}
        \ket{\psi_n} \coloneq \frac{\b_n^{\dagger K_n}}{\sqrt{K_n!}}\ket{0},\qquad K_n \coloneq \left\lfloor\frac{\pi}{\abs{\theta_n}}\right\rfloor.
    \end{equation}
    The choice of $K_n$ is so that
    \begin{equation}\label{eq:limKn}
        \lim_{n\to\infty} \abs{K_n\theta_n} = \pi,
    \end{equation}
    which follows from $K_n\abs{\theta_n} \le \pi < K_n\abs{\theta_n}+\abs{\theta_n}$ and $\theta_n\to0$.
    Using $\calU\ket{0}=\calT_n\ket{0}=\ket{0}$ and \cref{eq:UT-Heisenberg}, we obtain
    \begin{equation}
        \calU \ket{\psi_n} = \frac{(\calU\b_n^\dagger\calU^\dagger)^{K_n}}{\sqrt{K_n!}}\ket{0} = \frac{(\bfa^\dagger U\bm v_n)^{K_n}}{\sqrt{K_n!}}\ket{0},\qquad \calT_n\ket{\psi_n} = \frac{(\bfa^\dagger U_n\bm v_n)^{K_n}}{\sqrt{K_n!}}\ket{0}.
    \end{equation}
    Since $\bm v_n$ is an eigenvector of $R_n=U^\dagger U_n$, we have
    \begin{equation}
        U_n\bm v_n = U R_n\bm v_n = e^{i\theta_n} U\bm v_n.
    \end{equation}
    Thus,
    \begin{equation}
        \calT_n\ket{\psi_n} = e^{iK_n\theta_n}\frac{(\bfa^\dagger U\bm v_n)^{K_n}}{\sqrt{K_n!}}\ket{0} = e^{iK_n\theta_n}\calU\ket{\psi_n},
    \end{equation}
    and therefore
    \begin{equation}
        \norm{\calT_n - \calU} \ge  \norm{(\calT_n - \calU)\ket{\psi_n}} = \abs{e^{iK_n\theta_n} - 1} \to 2,
    \end{equation}
    as by \cref{eq:limKn} $e^{iK_n\theta_n}\to e^{\pm i\pi}=-1$.
\end{proof}

\subsection{Doubly exponential energy despite convergence}\label{sec:too-much-energy}

In the following example, Trotter splitting converges in strong operator topology, but energy diverges extremely fast.
This is similar to \cite[Lemma 3.4]{CGMMNRS25}, where alternating Fourier transforms with $\Q^3$ gates yields doubly exponential energy.
Here, we use $\Q^4$ instead, so that $\Q^2+\P^2+\Q^4$ is semibounded, giving a simple proof for essential self-adjointness.
Note that the Gaussian and cubic gate set can synthesize $\Q^4$ exactly, as shown in \cite{kalajdzievski2021exact}.

\begin{proposition}
    Let $t>0$, $A = \Q^2 + \P^2$, $B = \Q^4$, and $H = A + B$.
    Trotter splitting converges, i.e., for all $\ket{\psi}\in L^2(\RR)$
    \begin{equation}\label{eq:HAB-trotter}
        \lim_{n\to\infty} \left(e^{itA/n}e^{itB/n}\right)^n\ket{\psi} = e^{it(A+B)}\ket{\psi}.
    \end{equation}
    However, energy with vacuum state input diverges doubly exponentially fast in $n$, i.e.,
    \begin{equation}
        \ev{\N}{\psi_n} \ge 2^{2^{\Omega(n)}},\qquad \ket{\psi_n}\coloneq \left(e^{itA/n}e^{itB/n}\right)^n\ket{0},
    \end{equation}
    whereas $\bra{0}e^{-itH}\N e^{itH}\ket{0} = O(1)$ for all $t$.
\end{proposition}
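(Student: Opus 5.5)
The proof has three independent parts: Trotter convergence, the energy bound for the exact evolution, and the doubly exponential lower bound for the Trotterized state.

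\emph{Convergence.} We use the Trotter--Kato / Chernoff product formula for self-adjoint operators bounded from below. Here $A=\Q^2+\P^2\ge 0$ and $B=\Q^4\ge 0$ are self-adjoint. Their sum is essentially self-adjoint on the Schwartz space, for instance because $A+B$ is a semibounded Schr\"odinger operator $-\partial_x^2+x^2+x^4$ with a polynomial potential tending to $+\infty$. We also need that $\calD(A)\cap\calD(B)$ is a core for the closure of $A+B$. With these facts, Trotter's theorem gives strong convergence of \cref{eq:HAB-trotter} for every $\ket\psi\in L^2(\RR)$. The hypotheses of the theorem are routine to check.

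\emph{Bounded exact energy.} $H$ commutes with $e^{itH}$, so $\ev{H}{0}$ is conserved along the exact evolution. Since $H\ge A=2\N+\II$, we get
\[
2\bra{0}e^{-itH}\N e^{itH}\ket0+1\le\ev{H}{0}=\ev{\Q^2+\P^2+\Q^4}{0}=O(1).
\]
Conservation of $\ev{H}{\cdot}$ needs $e^{itH}\ket0$ to stay in the form domain, which holds because the form domain is invariant under the unitary group generated by $H$.

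\emph{Doubly exponential Trotter energy.} Take $\tau=t/n$, and first suppose $t/n=\pi/4$, i.e.\ $n$ is a multiple fixed by $t$. Then $e^{i\tau A}$ is the Fourier transform $\Q\mapsto\pm\P$, $\P\mapsto\mp\Q$ up to a global phase. The shear $e^{i\tau\Q^4}$ acts as $\P\mapsto\P+4\tau\Q^3$ (up to sign) in the Heisenberg picture, which is justified on the Schwartz space by \cref{app:heisenberg}. Alternating the two gives the recursion of \cite[Lemma 3.4]{CGMMNRS25} with cubic replaced by quartic: after $k$ steps the relevant quadrature is a polynomial of degree $3^k$ in the original $\Q,\P$. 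I would track the leading term and use vacuum moments $\ev{\Q^{2d}}{0}=(2d-1)!!/2^d$ to show that $\ev{\Q^2}{\psi_k}$ or $\ev{\P^2}{\psi_k}$ grows like $2^{c3^k}$, whence $\ev{\N}{\psi_n}\ge 2^{2^{\Omega(n)}}$.

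For general fixed $t$, $\tau=t/n\to0$ and $e^{i\tau A}$ is only a small rotation. In that case I would track the covariance-type quantities $(\ev{\Q^{2}}{},\ev{\P^{2}}{})$ together with higher moments, and show that a quartic shear followed by a rotation by angle $2\tau$ amplifies the largest even moment super-linearly: degree multiplies by $3$ while the coefficient is at least $\tau^{O(1)}$. Since $\tau^{n}$ factors only cost $2^{O(n\log n)}$, they cannot cancel a $2^{3^{\Omega(n)}}$ growth.

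\emph{Main obstacle.} The hard step is making this leading-term argument rigorous for small rotation angles. Naive lower bounds on a single moment can be cancelled by lower-order terms, so I would look for a monotone quantity, such as the top coefficient of the Heisenberg-evolved $\Q$ in a suitable ordering, whose vacuum expectation dominates $\ev{\N}{}$. As a fallback I would restate the result for the specific choice $\tau=\pi/4$, where the clean Fourier/shear alternation makes this monotonicity easy.
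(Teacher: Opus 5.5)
Your first two parts are fine and follow the paper's route. The paper also applies the Trotter product theorem (Reed--Simon VIII.31) once $A+B$ is essentially self-adjoint on $\mathcal{D}(A)\cap\mathcal{D}(B)$; it gets this from Fock states being Stieltjes vectors rather than from Schr\"odinger-operator theory. It likewise bounds the exact energy via $H\ge\N$ and conservation of $\langle H\rangle$.

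The gap is in the third part. The proposition fixes $t$ and lets $n\to\infty$, so $\tau=t/n\to0$. The Fourier case $\tau=\pi/4$ therefore occurs for at most one $n$. Your fallback of fixing $\tau$ makes $t=n\tau$ grow, so the products are no longer approximants of a convergent Trotter formula, and that contrast is the whole point of the proposition.

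For small $\tau$ you leave the argument open, and your heuristic accounting there is off. The per-step factor $\gamma=-4\tau\sin(2\tau)=\Theta(n^{-2})$ is cubed together with the operator at every step, because the leading coefficient obeys $\alpha_{k+1}=\gamma\alpha_k^3$. Hence $\alpha_n=\gamma^{(3^n-1)/2}$, and the second moment carries a factor $|\gamma|^{3^n-1}=2^{-\Theta(3^n\log n)}$, not $2^{O(n\log n)}$. A growth of only $2^{c3^n}$, as you predict, would lose against this.

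The missing idea is a way to isolate the leading term that cannot be cancelled. Set $c=\cos 2\tau$ and $s=\sin 2\tau$. The Heisenberg recursion
\[
\Q_{k+1}=c\Q_k-s\P_k-4\tau s\Q_k^3,\qquad \P_{k+1}=c\P_k+s\Q_k+4\tau c\Q_k^3
\]
gives $\Q_n=\alpha_n\Q^{3^n}+f_n(\Q,\P)$ with $\deg f_n<3^n$, for every angle with $s\neq0$. No special angle is needed.

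A polynomial of degree $<D$ in $\hat a,\hat a^\dagger$ maps $\ket{0}$ into the span of Fock states with fewer than $D$ photons. So projecting onto $\ge 3^n$ photons kills $f_n\ket{0}$ and every term of $\Q^{3^n}\ket{0}$ except $(\hat a^\dagger/\sqrt2)^{3^n}\ket{0}$:
\[
\langle\psi_n|\Q^2|\psi_n\rangle=\norm{\Q_n\ket{0}}^2\ge\norm{\Pi_{\ge 3^n}\Q_n\ket{0}}^2=\abs{\gamma}^{3^n-1}\frac{(3^n)!}{2^{3^n}}.
\]
Orthogonality of photon-number sectors replaces your moment comparison. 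What beats the $|\gamma|^{3^n}$ loss is the factorial $(3^n)!=2^{\Theta(n3^n)}$, giving $2^{\Theta(n3^n)}=2^{2^{\Omega(n)}}$; then use $\langle\N\rangle\ge\frac12\langle\Q^2\rangle-\frac12$.

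Even at $\tau=\pi/4$, your plan via $\langle\Q^{2d}\rangle_0$ would need this step. Expanding $\langle 0|\Q_k^2|0\rangle$ produces cross terms between $\alpha_k\Q^{3^k}$ and $f_k$, and you have not controlled them.
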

\begin{proof}
    By \cite[Theorem 7.15]{Schmdgen2012}, any semibounded symmetric operator is essentially self-adjoint if the span of its Stieltjes vectors \cite[Definition 7.1]{Schmdgen2012} is dense in the Hilbert space.
    Here, $A,B,H$ are clearly bounded from below and symmetric.
    It remains to prove that Fock states are Stieltjes vectors, i.e., for any Fock state $\ket{k}$,
    \begin{equation}\label{eq:stieltjes}
        \sum_{n=1}^\infty\norm*{H^n\ket{k}}^{-1/2n} = \infty.
    \end{equation}
    For any $k$, there exists a constant $C_k$, such that
    \begin{equation}
        \norm{H^n\ket{k}} \le C_k^n n^{2n},
    \end{equation}
    which follows from the fact that $H$ has degree $4$ and
    \begin{equation}
        \norm*{\a^{\dagger 4n}\ket{k}} \le \sqrt{k+4n}^{4n}.
    \end{equation}
    Since $(C_k^n n^{2n})^{-1/2n} = 1/(\sqrt{C_k}n)$, the series in \cref{eq:stieltjes} diverges.
    By the same argument, $A, B$ are also essentially self-adjoint on $\Dfin$.

    Hence, $A+B$ is also essentially self-adjoint on $\calD(A)\cap\calD(B)$, which yields \cref{eq:HAB-trotter} with \cite[Theorem VIII.31]{Reed1981-kj}.
    Since $H \ge \N$ as a quadratic form, we have
    \begin{equation}
        \bra{0}e^{-itH}\N e^{itH}\ket{0} \le \bra{0}e^{-itH}H e^{itH}\ket{0} = \ev{H}{0} = O(1).
    \end{equation}
    It remains to prove the energy divergence in Trotterization.

    On the Schwartz space $\calS(\RR)$, we have for any $\tau=t/n$
    \begin{equation}\label{eq:update-B}
        e^{-i\tau \Q^4} \Q e^{i\tau \Q^4} = \Q,\qquad e^{-i\tau \Q^4} \P e^{i\tau \Q^4} = \P + 4\tau\Q^3.
    \end{equation}
    The latter follows because Baker--Campbell--Hausdorff terminates with $[\Q^4,\Q^3]=0$, and
    \begin{equation}
        e^{-i\tau \Q^4} \P e^{i\tau \Q^4} = \P + (-i\tau)[\Q^4,\P] = \P + 4\tau\Q^3.
    \end{equation}
    $\tau A$ acts like a rotation with angle $\theta\coloneq 2\tau$ \cite[Eq. (34)]{weedbrook2012gaussian}:
    \begin{equation}\label{eq:update-A}
        e^{-i\tau A} \Q e^{i\tau A} = \Q \cos\theta - \P \sin\theta,\qquad  e^{-i\tau A} \P e^{i\tau A} = \P\cos\theta + \Q\sin\theta.
    \end{equation}
    Combining \cref{eq:update-B} and \cref{eq:update-A}, we get a recurrence relation for updating the observables at each Trotter step.
    See \cref{thm:Heisenberg} for a formal justification of this Heisenberg evolution.

    Let $c=\cos\theta$ and $s=\sin\theta$. Let $\Q_k,\P_k$ be the observables after $k$ Trotter steps with $\Q_0=\Q$ and $\P_0 = \P$.
    Each step first applies Rule \cref{eq:update-A} and then Rule \cref{eq:update-B}:
    \begin{equation}
        \Q_{k+1} = c\Q_{k} - s\P_k - 4\tau s\Q_k^3,\qquad \P_{k+1} = c\P_k + s\Q_k + 4\tau c\Q_k^3.
    \end{equation}
    After $n$ steps, we have
    \begin{equation}\label{eq:Xn}
        \Q_n = \alpha_n \Q^{3^n} + f_n(\Q,\P),\qquad \alpha_n = \gamma^{(3^n-1)/2},\qquad \gamma = -4\tau s,
    \end{equation}
    where $f_n(\Q,\P)$ is a polynomial of degree $<3^n$,
    and $\alpha_k$ solves the recurrence $\alpha_1=\gamma,\alpha_{k+1}=\gamma\alpha_k^3$.

    Then we use
    \begin{equation}
        \ev*{\N}{\psi_n} \ge \frac12 \ev*{\Q^2}{\psi_n}-\frac12.
    \end{equation}
    \cref{eq:Xn} gives
    \begin{equation}
        \ev*{\Q^2}{\psi_n}  = \ev*{\Q_{n}^2}{0} = \norm{\Q_n\ket{0}}^2 \ge  \norm{\Pi_{\ge 3^n} \Q_n\ket{0}}^2,
    \end{equation}
    where $\Pi_{\ge 3^n}$ is the projector onto the subspace spanned by number states with $\ge 3^n$ photons.
    This projector kills every term except $\a^{\dagger 3^n}$, so
    \begin{equation}
        \norm{\Pi_{\ge 3^n} \Q_n\ket{0}}^2 = \abs{\gamma}^{3^n-1} \frac{(3^n)!}{2^{3^n}} \in 2^{2^{\Omega(n)}},
    \end{equation}
    using Stirling's formula and noting that $s \ge \tau = t/n$ for large $n$, so $\abs{\gamma} \in \Theta(1/n^2)$ for fixed $t$.
\end{proof}

\subsection{A no-go for efficient approximate synthesis of active gates}\label{sec:not-enough-energy}

We can give a rigorous no-go for synthesizing arbitrary bosonic circuits with Gaussian and $\Q^3$ or $\N^2$ gates efficiently, when no energy bounds are promised a priori.
For that, note that $n$ Gaussian and $\N^2$ gates with bounded evolution time can achieve energy at most $2^{O(n)}$ when starting in the vacuum state \cite{CGMMNRS25}.
With the $\Q^3$ gate, at most doubly exponential energy $2^{2^{O(n)}}$ can be achieved \cite{CJMMM26}.
In contrast, \cite{CGMMNRS25} gives an example where iterated number-controlled squeezing gates produce photon numbers at least as large as a power tower of height $\Omega(n)$ with high probability.

We write $\bm1_S$ for the indicator function of $S$ (and the corresponding spectral projector when applied to an operator). In particular, $\bm1_S(\N_j)$ projects onto states with photon number in $S$ in mode $j$. For mode $j$, define $\Pi_{\le K}^{(j)}\coloneq\bm1_{[0,K]}(\N_j)$ and $\Pi_{\ge K}^{(j)}\coloneq\bm1_{[K,\infty)}(\N_j)$, acting as the identity on all other modes. Thus, $\ev{\Pi_{\le K}^{(j)}}{\psi}$ is the probability of measuring at most $K$ photons in mode $j$ in the state $\ket{\psi}$.

\begin{proposition}[{\cite[Lemma 4.23]{CGMMNRS25}}]\label{prop:uuarr}
    There exists a family of circuits $C_n$ consisting of $n$ number-controlled two-mode squeezing gates of the form $\exponential(\N_0(\a_1\a_2-\a_1^\dagger\a_2^\dagger))$ and Gaussian gates of total runtime $O(n)$, such that the output state $\ket{\psi_n}\coloneq C_n\ket{0}$ on vacuum input satisfies
    \begin{equation}
        \ev{\Pi_{\ge 2\uuarr\Omega(n)}^{\mathrm{out}}}{\psi_n} \ge 1-2^{-\Omega(n)}.
    \end{equation}
    In words, the output mode has at least $2\uuarr\Omega(n)$ photons with probability at least $1-2^{-\Omega(n)}$.
\end{proposition}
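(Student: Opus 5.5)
The construction iterates number-controlled two-mode squeezing, with each stage using the photon number produced by the previous stage as its squeezing parameter. Since photon number grows exponentially in the squeezing parameter, each stage exponentiates the photon count, giving a tower.

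\emph{Basic gate.} For $\exponential(\N_0(\a_1\a_2-\a_1^\dagger\a_2^\dagger))$, write the state in the Fock basis of mode $0$. Because $\N_0$ commutes with the generator, on the sector $\N_0=k$ the gate is the two-mode squeezer with $r=k$. Acting on vacuum in modes $1,2$ it produces the two-mode squeezed vacuum
\[
\frac{1}{\cosh k}\sum_{j\ge0}(-\tanh k)^j\ket{j,j}.
\]
The photon number $j$ in mode $1$ is geometric with mean $\sinh^2 k$. Hence
\[
\Pr\bigl[j< e^{k}\bigr] \;\le\; 1-(\tanh k)^{2e^{k}} \;=\; O\bigl(e^{k}\cdot e^{-2k}\bigr) \;=\; O(e^{-k}).
\]
So with probability $1-O(e^{-k})$ mode $1$ holds at least $e^k$ photons. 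This small-failure estimate is the key quantitative step.

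\emph{Iteration.} Use a chain of modes $0,1,2,\dots$. Gaussian gates, a displacement plus a constant-time squeezing, prepare mode $0$ with a photon number $k_0$ that is a large constant with constant failure probability. To get the $2^{-\Omega(n)}$ bound, first bootstrap: apply $O(1)$ stages with an $O(n)$-time displacement so that $k_0\ge cn$ with failure $2^{-\Omega(n)}$, e.g.\ by displacing to amplitude $\Theta(\sqrt n)$ and applying a Poisson tail bound. Then apply the controlled squeezer with control mode $i$ and targets $i+1$ and a fresh ancilla. This can be chained as $\exponential(\N_i(\cdots))$ by relabeling modes with beam-splitter swaps, which are Gaussian and of $O(1)$ runtime, to match the stated gate form with control on mode $0$.

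By induction, conditioned on mode $i$ holding $k_i$ photons, mode $i+1$ holds $k_{i+1}\ge e^{k_i}$ except with probability $O(e^{-k_i})$. A union bound gives total failure
\[
\sum_i O(e^{-k_i}) \;\le\; O(e^{-k_0}) \;=\; 2^{-\Omega(n)},
\]
since the $k_i$ grow super-exponentially and the sum is dominated by its first term. After $n-O(1)$ stages, $k_n\ge 2\uuarr\Omega(n)$. The last mode is declared the output.

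\emph{Main obstacle.} The conditioning argument has to be made rigorous for a pure state in superposition over control values. The approach is to work sector by sector: each gate acts diagonally in the Fock basis of its control mode. Writing
\[
\ket{\psi_n}=\sum_{\bm k}\ket{\bm k}\otimes(\cdot),
\]
the measurement statistics of the photon numbers $(k_0,\dots,k_n)$ along the chain therefore form a Markov chain whose transition laws are exactly the geometric distributions above. The only Gaussian gates used after the initialization are swaps, which permute modes and do not break this structure.

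The probability bound then follows directly from these Markov-chain tail estimates. The delicate part is choosing the initialization so that the first failure probability is already $2^{-\Omega(n)}$ while the runtime stays $O(n)$.
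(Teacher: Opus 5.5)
Your argument is correct and takes the same route as the paper, which gives no proof of its own and simply cites \cite[Lemma 4.23]{CGMMNRS25}, describing it as iterated number-controlled squeezing: on each photon-number sector of the control, the gate is a two-mode squeezer with $r=k$; its geometric marginal gives an $O(e^{-k})$ chance of landing below $e^{k}$ photons; and because everything is Fock-diagonal, the photon counts along the chain form a Markov chain to which your union bound applies. Two small clean-ups: the ``$O(1)$ bootstrap stages'' are superfluous, since displacing mode $0$ to amplitude $\Theta(\sqrt n)$ in $O(\sqrt n)$ time already gives $k_0\ge cn$ with failure $e^{-\Omega(n)}$ by a Poisson Chernoff bound; and the threshold should be $\lceil e^{k}\rceil$, which by Bernoulli's inequality still gives $1-(\tanh k)^{2\lceil e^{k}\rceil}\le 2\lceil e^{k}\rceil(1-\tanh k)=O(e^{-k})$.
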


Hence, the circuit family $C_n$ of \cref{prop:uuarr} cannot be approximated by even $\exp(n)$ Gaussian and $\Q^3$ gates of constant runtime, since they can only produce states of energy $2^{2^{O(\exp(n))}}$.

\begin{corollary}\label{cor:uuarr}
    For any circuit family $\wtC_n$ of $\exp(n)$ Gaussian and $\Q^3$ gates of bounded evolution time, we have
    \begin{equation}
        \abs{\ev{C_n^\dagger \wtC_n}{0}} < 2^{-\Omega(n)},
    \end{equation}
    i.e.\ the overlap of the output states of $C_n$ and $\wtC_n$ on vacuum input is exponentially small.
\end{corollary}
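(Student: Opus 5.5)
We split the state $\wtC_n\ket{0}$ into a low-photon part and a high-photon part on the output mode. Set $K \coloneq 2\uuarr c n$, with $c>0$ the constant from \cref{prop:uuarr}, so that $\ev{\Pi^{\mathrm{out}}_{\ge K}}{\psi_n}\ge 1-2^{-\Omega(n)}$. Write $P\coloneq \Pi^{\mathrm{out}}_{\ge K}$ and $\ket{\phi_n}\coloneq \wtC_n\ket{0}$. Then
\begin{equation}
\abs{\braket{\psi_n}{\phi_n}} \le \abs{\bra{\psi_n}(\II-P)\ket{\phi_n}} + \abs{\bra{\psi_n}P\ket{\phi_n}} \le \norm{(\II-P)\ket{\psi_n}} + \norm{P\ket{\phi_n}},
\end{equation}
by Cauchy--Schwarz, using that $P$ is an orthogonal projector. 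The first term is at most $2^{-\Omega(n)}$ by \cref{prop:uuarr}.

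For the second term we use Markov's inequality on the energy of $\ket{\phi_n}$. The bound quoted before the corollary, from \cite{CJMMM26}, says that $L$ Gaussian and $\Q^3$ gates of bounded evolution time, started from vacuum, reach energy at most $2^{2^{O(L)}}$. With $L = 2^{O(n)}$ gates this gives $E_n\coloneq\ev{\hN}{\phi_n}\le 2^{2^{2^{O(n)}}}$, a tower of height $3$ with top exponent $O(n)$. Since $P\le \hN/K$ as operators (because $\N_{\mathrm{out}}\le \hN$), we get
\begin{equation}
\norm{P\ket{\phi_n}}^2 \le E_n/K .
\end{equation}
It remains to compare towers. We have $K = 2\uuarr cn$, while $E_n \le 2\uuarr (3 + \log^* (O(n)))$, or more crudely $E_n\le 2\uuarr(O(1)+\log n)$. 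For large $n$, $K$ exceeds any fixed number of iterated exponentials of $E_n$. In particular $E_n/K \le 1/(2\uuarr (cn-1)) \le 2^{-n}$, so $\norm{P\ket{\phi_n}}\le 2^{-\Omega(n)}$ and the claim follows. If a sharper exponent is wanted, one can instead choose $K$ a few levels below $2\uuarr cn$ and argue the same way.

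The main obstacle is making the energy bound for $\wtC_n$ rigorous. This requires that the Heisenberg-picture bound of \cite{CJMMM26} apply to circuits of exponential length, with constants uniform in the gate parameters; the hypothesis of bounded evolution time supplies that uniformity. It also requires that the $\Q^3$ gates preserve the Schwartz space, so that \cref{thm:Heisenberg} justifies the moment computations. A further subtlety is that the corollary implicitly places $\wtC_n$ on the same modes as $C_n$, with the same designated output mode. If $\wtC_n$ uses extra ancilla modes, we take the overlap on the joint space with ancillas in vacuum. The total photon number $\hN$ still dominates $\N_{\mathrm{out}}$, so the Markov step is unchanged.
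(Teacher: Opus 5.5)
Your proposal is correct and takes the same route as the paper, which proves the corollary in one line by combining \cref{prop:uuarr} with a Markov bound on the photon-number distribution of $\wtC_n\ket{0}$, using the $2^{2^{O(\exp(n))}}$ energy bound for exponentially many Gaussian and $\Q^3$ gates. Your projector splitting, Cauchy--Schwarz step, and tower comparison spell out that argument explicitly.
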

\begin{proof}
    The claim follows from \cref{prop:uuarr} and a Markov bound on the number distribution of $\wtC_n\ket{0}$.
\end{proof}

This means that even a circuit of exponential depth can only achieve exponentially small overlap with the output state of $C_n$.

A critical reader might now object that the number-controlled squeezing gate is not well motivated physically.
While this example rules out efficient compilation of arbitrary polynomial gates to the cubic gate set, one could still hope to efficiently synthesize the more natural $\Q^3$ gate using Gaussian and Kerr gates.
Unfortunately, we can also give a counterexample to that case.

\cite{CGMMNRS25} shows that Gaussian and $\Q^3$ gates can produce states with doubly exponential energy.
However, we do not know whether this energy just comes from a fat tail as in our example in \cref{sec:too-much-energy}, or the state has constant probability mass on high occupation numbers as in \cref{prop:uuarr}.
Therefore, our next example shows how to construct a state that has most of its support on doubly exponential photon numbers using just Gaussian and $\Q^3$ gates.

\begin{theorem}\label{thm:repeated-squaring}
    There exists a circuit $\calC_n$ of $O(n^2)$ Gaussian and $\Q^3$ gates of bounded evolution time on $n+1$ modes whose output state $\ket{\psi_n} \coloneq \calC_n \ket{0}^{\otimes(n+1)}$ satisfies
    \begin{equation}\label{eq:N-tailbound}
        \ev{\Pi_{\le 2^{2^n-O(1)}}^{(n)}}{\psi_n} \le e^{-\Omega(n^2)}.
    \end{equation}
    In words, the probability of measuring at most $2^{2^n-O(1)}$ photons in the output mode $n$ is at most $e^{-\Omega(n^2)}$.
\end{theorem}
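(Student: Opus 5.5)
The plan is a \emph{repeated squaring} circuit in the position quadrature. Mode $0$ is a register holding a well-localized position value $x_0$, and each subsequent mode $j+1$ receives the square of the position of mode $j$. The elementary gate is the cubic gate $G_j\coloneq e^{-i\Q_j^2\P_{j+1}}$. On $\calS(\RR^{n+1})$, justified by \cref{thm:Heisenberg} (all gates used are Gaussian or cubic and preserve the Schwartz space), it acts in the Heisenberg picture as
\begin{equation*}
G_j^\dagger\Q_{j+1}G_j=\Q_{j+1}+\Q_j^2,
\end{equation*}
leaving $\Q_j$ and the $\Q_k$ with $k\ne j+1$ unchanged. It is therefore a controlled displacement by $q_j^2$.

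\textbf{Compiling $G_j$.} The first step is to synthesize $G_j$ exactly from Gaussian and $\Q^3$ gates of bounded time. Since $\Q_j$ and $\P_{j+1}$ commute, all cubes $(a\Q_j+b\P_{j+1})^3$ commute with one another. Such a cube is obtained by conjugating $\Q^3$ on one mode by a beam splitter followed by a Fourier transform. By polarization, $\Q_j^2\P_{j+1}$ is a fixed real linear combination of $O(1)$ such cubes. Hence $G_j$ is an exact product of $O(1)$ cubic gates and Gaussians, and the $n$ squaring steps cost $O(n)$ gates.

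\textbf{Preparing mode $0$.} I then prepare mode $0$ by displacing to mean position $x_0=2$ (or a slightly larger constant, absorbed into the $O(1)$ in the exponent) and squeezing in $\Q$ so that the position spread is $\sigma_0\le 2^{-cn}$. Squeezing with parameter $r=\Theta(n)$ needs $O(n)$ squeezing gates of bounded time. To reach the $e^{-\Omega(n^2)}$ tail I will in fact take $r=\Theta(n)$ with a large constant, or spend up to $O(n^2)$ gates on it; this is one place where the $O(n^2)$ gate budget can be used. The other modes start in vacuum, with position noise $\xi_j\sim\mathcal N(0,1/2)$. After all gates, $\Q_n$ is a polynomial in the initial positions only, since the $G_j$ commute with every $\Q_k$. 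Its distribution is therefore the pushforward of the product Gaussian through the classical map
\begin{equation*}
y_0=q_0,\qquad y_{j+1}=\xi_{j+1}+y_j^2 .
\end{equation*}

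\textbf{Classical error propagation.} Write $y_j=x_0^{2^j}(1+\delta_j)$. Then
\begin{equation*}
\delta_{j+1}=2\delta_j+\delta_j^2+\xi_{j+1}x_0^{-2^{j+1}} .
\end{equation*}
On the good event $E$ where $|q_0-x_0|\le n\sigma_0$ and $|\xi_j|\le n$ for all $j$, induction gives $|\delta_j|\le 2^{j}\,n\sigma_0+\sum_k n\,2^{j-k}x_0^{-2^k}$. This stays below $1/2$ provided the early steps are handled; for small $j$ one can start at a larger constant $x_0$ or simply absorb those steps into the $O(1)$. On $E$ we then get $|y_n|\ge \tfrac12 x_0^{2^n}$. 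The Gaussian tails give $\Pr[E^c]\le (n+1)e^{-\Omega(n^2)}$, which is exactly where the $e^{-\Omega(n^2)}$ in \cref{eq:N-tailbound} comes from.

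\textbf{From position to photon number (main obstacle).} The remaining step converts ``$|q_n|\ge R\coloneq 2^{2^n-O(1)}$ with probability $1-e^{-\Omega(n^2)}$'' into the photon-number bound. Let $K=R^2/C$, $P=\Pi^{(n)}_{\le K}$ and $Q=\bm 1_{[-R,R]}(\Q_n)$. I would use
\begin{equation*}
\ev{P}{\psi}=\bra{\psi}PQ\ket{\psi}+\bra{\psi}P(\II-Q)\ket{\psi}\le\sqrt{\ev{Q}{\psi}}+\norm{(\II-Q)P}.
\end{equation*}
The first term is $e^{-\Omega(n^2)}$ by the previous paragraph. For the second term, apply Markov in operator form: $\norm{(\II-Q)P}\le R^{-m}\norm{\Q_n^{m}P}\le R^{-m}(2(K+m))^{m/2}$, using that each of $\a,\a^\dagger$ costs at most $\sqrt{K+m}$ on the cutoff space, as in \cref{lem:cutoff-monomial-bound}. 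With $K=R^2/C$, $C$ a large constant, and $m=\Theta(n^2)\ll K$, this is at most $C'^{-m}=e^{-\Omega(n^2)}$. Combining the two terms gives \cref{eq:N-tailbound}, with the loss from $R^2$ to $K$ absorbed into the $O(1)$ in the exponent $2^n-O(1)$.

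I expect the delicate part to be keeping the error recursion's doubling of relative error under control while meeting exactly the stated tail and gate count. The fallback is to increase the squeezing at mode $0$ using the $O(n^2)$ gate budget.
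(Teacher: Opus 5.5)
Your overall architecture matches the paper's: repeated squaring with $e^{-i\Q_j^2\P_{j+1}}$, pushforward of a product Gaussian, an $n$-sigma good event, then conversion from position to photon number. But there is a genuine gap: you leave the ancilla modes $1,\dots,n$ in vacuum. At step $j$ the injected position noise $\xi_j$ is of order $1$ (and up to $n$ on your event $E$), while the signal $x_0^{2^j}$ is $O(1)$ for small $j$, and every later squaring roughly doubles the relative error. Your own bound shows this: its $k=1$ term is $n2^{n-1}x_0^{-2}$, exponentially large for any constant $x_0$, so the induction $|\delta_j|\le 1/2$ never closes.

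This is not an artifact of the analysis; the circuit itself fails. With constant probability $\xi_1\approx -x_0^2$, so $y_1\approx 0$ and the chain restarts from an $O(1)$ value. The reduced state of mode $n$ is a mixture of coherent states displaced in position by $Y_{n-1}^2$. On that branch the photon number of mode $n$ is therefore typically only $2^{c\,2^n}$ for some $c<1$, so $\Pr[\hat n_n\le 2^{2^n-O(1)}]=\Omega(1)$ rather than $e^{-\Omega(n^2)}$.

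None of your remedies helps. A larger constant $x_0$ only makes this failure probability $e^{-\Theta(x_0^4)}$, which is still a constant. A constant relative error at step $1$ multiplies $y_n$ by $(1+\delta_1)^{2^{n-1}}$, i.e.\ it costs $\Theta(2^n)$ in the exponent, so it cannot be absorbed into the $O(1)$. Extra squeezing of mode $0$ does nothing to the ancilla noise.

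The missing idea is to squeeze every mode, not just mode $0$. The paper prepares each ancilla with position spread $\sigma=\Theta(3^{-n}/n^2)$. On the $n$-sigma event, the relative error injected at step $j$ is then at most $b_j=3^{j-n}/(16(n+1))$. This is exactly small enough to survive the $3^{n-j}$ amplification in $\Delta_{j+1}\le 3\Delta_j+b_{j+1}$, giving $\Delta_n\le 1/8$. Achieving this needs squeezing parameter $\Theta(n)$, i.e.\ $O(n)$ bounded-time squeezing gates on each of the $n+1$ modes. That, not extra squeezing of mode $0$, is where the $O(n^2)$ gate count comes from.

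A different repair would keep vacuum ancillas but displace mode $0$ to $x_0=\Theta(n)$ and track $\log_2|y_j|$ instead of relative error. On $E$ each squaring then loses at most one bit, giving $|y_n|\ge 2^{2^n}$.

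The remaining pieces are sound. Your exact synthesis of $G_j$ via $x^2y=\frac16\bigl[(x+y)^3-(x-y)^3-2y^3\bigr]$ for the commuting pair $\Q_j,\P_{j+1}$ is a valid self-contained replacement for the paper's citation. Your higher-moment Markov bound on the cutoff space is correct and even yields the stronger cutoff $K\sim R^2$, whereas the paper uses a second-moment bound with $K=\lfloor\mu_n/100\rfloor$. For the Heisenberg step, the mixed generator $\Q_j^2\P_{j+1}$ is covered by \cref{cor:Heisenberg-mixed}, not by \cref{thm:Heisenberg} itself.
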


This immediately yields the following obstruction to synthesis with Gaussian and Kerr gates.

\begin{corollary}
    For any circuit family $\wtcalC_n$ of $2^{o(n)}$ Gaussian and Kerr gates of bounded evolution time, we have
    \begin{equation}
        \abs{\ev{\calC_n^\dagger \wtcalC_n}{0}} < 2^{-\Omega(n)},
    \end{equation}
    i.e.\ the overlap of the output states of $\calC_n$ and $\wtcalC_n$ on vacuum input is exponentially small.
\end{corollary}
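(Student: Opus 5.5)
The plan mirrors \cref{cor:uuarr}: combine the tail bound of \cref{thm:repeated-squaring} with an upper bound on the photon-number distribution of any short Gaussian + Kerr circuit, then use a Cauchy--Schwarz split of the overlap. Write $\ket{\phi_n}\coloneq\wtcalC_n\ket{0}$ and $\ket{\psi_n}\coloneq\calC_n\ket{0}$. Fix a threshold $T_n\coloneq 2^{2^n-c}$, with $c$ the $O(1)$ constant in \cref{eq:N-tailbound}, and let $P\coloneq\Pi^{(n)}_{\le T_n}$ act on the output mode $n$. Then
\begin{equation}
\abs{\braket{\psi_n}{\phi_n}} \le \abs{\bra{\psi_n}P\ket{\phi_n}} + \abs{\bra{\psi_n}(\II-P)\ket{\phi_n}} \le \norm{P\ket{\psi_n}} + \norm{(\II-P)\ket{\phi_n}}.
\end{equation}
By \cref{thm:repeated-squaring}, the first term is at most $e^{-\Omega(n^2)}$, which is already far below $2^{-\Omega(n)}$.

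For the second term I would bound the energy of $\ket{\phi_n}$. The step, cited above from \cite{CGMMNRS25}, is that each Gaussian or Kerr gate of bounded evolution time maps a state of energy $E$ to one of energy at most $O(E+1)$. Kerr gates in fact preserve energy exactly. For Gaussian gates, the symplectic action on $(\Q,\P)$ is bounded, so $\ev{\hN}{}$ grows at most by a constant factor plus an additive constant; displacements add $O(\sqrt{E}+1)$. Iterating over $L=2^{o(n)}$ gates gives energy $E_n\le C^{L}=2^{2^{o(n)}}$. Markov's inequality applied to $\N_n\le\hN$ then gives
\begin{equation}
\norm{(\II-P)\ket{\phi_n}}^2 = \Pr[\N_n> T_n] \le \frac{E_n}{T_n} \le 2^{2^{o(n)}-2^n+c},
\end{equation}
which is doubly exponentially small, and in particular $<2^{-\Omega(n)}$.

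The main obstacle is making the per-gate energy bound rigorous for the Kerr gate, i.e.\ interleaving $\N^2$ evolutions with Gaussians. The Kerr step itself is harmless, since $e^{i\theta\N_j^2}$ commutes with $\hN$ and leaves the photon-number distribution unchanged. The Gaussian step is controlled by $\ev{\hN}{}$ alone, because the Heisenberg action of a Gaussian unitary is affine in the quadratures, and $\hN$ is a positive quadratic form in them. This gives $E\mapsto \lambda E+ \mu\sqrt{E}+\nu$ with constants depending only on the evolution-time bound. So the composition is legitimate, and I would invoke \cref{thm:Heisenberg} to justify the Heisenberg computation on the Schwartz space. Both gate types preserve that space, and the bound extends to general inputs by density. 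Everything is then just the arithmetic above.
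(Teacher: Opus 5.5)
Your proposal is correct and matches the argument the paper intends. The paper states the corollary as an immediate consequence of three ingredients: the tail bound of \cref{thm:repeated-squaring}, the $2^{O(L)}$ energy bound for $L$ bounded-time Gaussian and Kerr gates acting on vacuum (cited from \cite{CGMMNRS25}), and a Markov bound. Your split $\abs{\braket{\psi_n}{\phi_n}}\le\norm{P\ket{\psi_n}}+\norm{(\II-P)\ket{\phi_n}}$ simply makes this explicit.

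Two small remarks on your justification. First, the vacuum input already lies in the Schwartz space, so no density argument is needed. Second, \cref{thm:Heisenberg} does not literally cover every Gaussian generator, such as squeezers or beam splitters. The standard affine Heisenberg action of Gaussian unitaries is enough for your recursion $E\mapsto\lambda E+\mu\sqrt{E}+\nu$.
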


\begin{proof}[Proof of \cref{thm:repeated-squaring}]
    To obtain an obstruction using only Gaussian and cubic gates, we replace the number-controlled squeezing gate of \cref{prop:uuarr} by a position-controlled displacement gate
    \begin{equation}
        G_{jk} \coloneq \exp(-i\Q_j^2 \P_k)
    \end{equation}
    which can be synthesized exactly using a constant number of bounded-time Gaussian and cubic gates \cite{kalajdzievski2021exact}.
    $G_{jk}$ displaces the target mode $k$ by $x^2$ if the control mode $j$ has position $x$.
    This gives doubly exponential growth in the number of steps.

    \begin{figure}[t]
        \centering
        \includegraphics[]{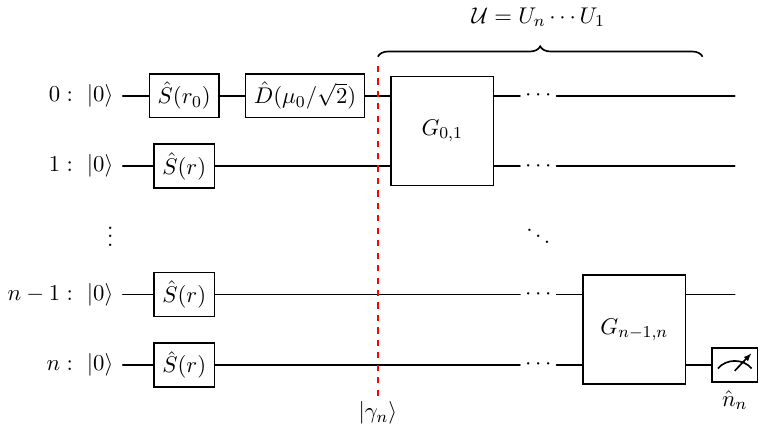}
        \caption{Circuit $\calC_n$ from \cref{thm:repeated-squaring}. The Gaussian preparation stage first squeezes all modes, then displaces mode $0$.
        $\calU$ then applies a sequence of $G_{j-1,j}=\exponential(-i\Q_{j-1}^2\P_j)$ to perform repeated squaring on position, as $G_{j-1,j}$ displaces mode $j$ by the squared position of mode $j-1$.}
        \label{fig:repeated-squaring}
    \end{figure}
    We construct the circuit $\calC_n$ as depicted in \cref{fig:repeated-squaring}.
    Label the modes $0,\dots,n$.
    Define $\calU \coloneq U_n\dotsm U_1$ with $U_j = G_{j-1,j}$.
    $\calU$ essentially performs repeated squaring on the position.
    To satisfy the tail bound in \cref{eq:N-tailbound}, the full circuit $\calC_n$ first prepares a position-squeezed input state $\ket{\gamma_n}$ so that the position of each mode has small variance.

    We choose the input parameters as follows. The $n+1$ modes are initialized with position distributions
    \begin{equation}\label{eq:initial-distribution}
        Q_0 \sim \calN(\mu_0,\sigma_0^2),\quad \mu_j\coloneq 2^{2^j} \quad(j \in \{0,\dots,n\}),\qquad Q_j\sim\calN(0,\sigma^2)\quad (j\in[n]),
    \end{equation}
    where the variance parameters are
    \begin{equation}
        \sigma_0 \coloneq \frac{\mu_0b_0}{n},\qquad \sigma \coloneq \frac{\mu_1 b_1}{n},\qquad b_0 \coloneq \frac{3^{-n}}{16},\qquad b_{j}\coloneq\frac{3^{j-n}}{16(n+1)}\quad(j\in[n]),
    \end{equation}
    and the $b_j$ are bounds that will be used later.
    Recall that the squeezed vacuum state $\hS(r)\ket{0}$ with $\hS(r) = \exponential(\frac{r}2(\a^2-\a^{\dagger2}))$ has variance \cite[Eq. (2.1.35a)]{Schumaker1986}
    \begin{equation}
        \Var(\Q) = \frac{e^{-2r}}2,
    \end{equation}
    so we need $r = -\frac12\log(2\sigma^2)$ to achieve variance $\sigma^2$.
    Here we need a squeezing parameter of $O(n)$ on each mode, achieved using $O(n)$ constant-strength squeezing gates per mode and hence $O(n^2)$ gates in total.
    Since mode $0$ is the first control, we increase position $\Q_0$ with a displacement gate $\hD(\mu_0/\sqrt{2})=e^{-i\mu_0\P_0}$.

    We now analyze $\calU$ in the Heisenberg picture. The action of $U_j$ is as follows, with $A_j \coloneq i\Q_{j-1}^2\P_j$:
    \begin{equation}
        U_j^\dagger \, \Q_j\, U_j = e^{A_j}\, \Q_j\, e^{-A_j} = \Q_j + [A_j,\Q_j] + \frac12[A_j,[A_j,\Q_j]]+\dotsm = \Q_j + \Q_{j-1}^2
    \end{equation}
    as $[\P_j,\Q_j] = -i$ and thus $[A_j,\Q_j] = \Q_{j-1}^2$, with all higher commutators vanishing. The identity is justified on the Schwartz space by \cref{cor:Heisenberg-mixed}.
    Let $V_j=U_j\cdots U_1$, with $V_0=\II$, and define $\Y_0=\Q_0$ and $\Y_{j+1}=\Y_j^2+\Q_{j+1}$. Since $V_j$ leaves mode $j+1$ untouched, induction gives
    \begin{equation}
        V_{j+1}^\dagger\Q_{j+1}V_{j+1}
        =V_j^\dagger(\Q_{j+1}+\Q_j^2)V_j
        =\Q_{j+1}+\Y_j^2=\Y_{j+1}.
    \end{equation}
    Later gates leave $\Q_j$ unchanged, so $\calU^\dagger\Q_j\calU=\Y_j$. In the position representation, $\Y_j$ therefore acts by multiplication by $F_j(x)$, where $F_0(x)=x_0$ and $F_{j+1}(x)=x_{j+1}+F_j(x)^2$. The input position density $\abs{\gamma_n(x)}^2$ defines independent Gaussian coordinates $Q_0,\ldots,Q_n$ with distributions given in \cref{eq:initial-distribution}. Setting $Y_j=F_j(Q_0,\ldots,Q_n)$, we obtain $Y_0=Q_0$ and $Y_{j+1}=Q_{j+1}+Y_j^2$.
    For any Borel set $S$,
    \begin{equation}
        \ev{\bm1_S(\Y_j)}{\gamma_n}
        =\int_{\RR^{n+1}}\bm1_S(F_j(x))\abs{\gamma_n(x)}^2\,dx
        =\Pr_{\gamma_n}[Y_j\in S],
    \end{equation}
    where $\Pr_{\gamma_n}$ denotes probability with respect to the joint position distribution with density $|\gamma_n(x)|^2$.

    For $R_n\coloneq \frac78\mu_n$, using $\ket{\psi_n}=\calU\ket{\gamma_n}$, define
    \begin{equation}
        \begin{aligned}
            p_n &\coloneq \ev{\bm1_{[R_n,\infty)}(\abs{\Q_n})}{\psi_n}
            = \ev{\calU^\dagger\bm1_{[R_n,\infty)}(\abs{\Q_n})\calU}{\gamma_n} \\
            &= \ev{\bm1_{[R_n,\infty)}(\abs*{\calU^\dagger\Q_n\calU})}{\gamma_n}
            = \ev{\bm1_{[R_n,\infty)}(\abs{\Y_n})}{\gamma_n}
            = \Pr_{\gamma_n}[\abs{Y_n}\ge R_n].
        \end{aligned}
    \end{equation}
    Thus, $p_n$ is the probability of measuring position of absolute value $\ge R_n$ in the output mode.

    \begin{claim}\label{claim:pn-bound}
       We have $p_n \ge 1-e^{-n^2/3}$ for sufficiently large $n$.
    \end{claim}
    \begin{proof}
        Probabilities involving $Q_j$ and $Y_j$ below are with respect to the joint distribution in $\ket{\gamma_n}$.
        By the standard Gaussian tail bound for $Z\sim\calN(0,1)$ \cite[Proposition 2.1.2]{vershynin2026high},
        \begin{equation}
            \Pr[\abs{Z}\ge t] \le 2e^{-t^2/2}.
        \end{equation}
        So we have
        \begin{equation}\label{eq:bound-pr-X0}
            \Pr[\frac{\abs{Q_0-\mu_0}}{\mu_0}>b_0] = \Pr[\frac{\abs{Y_0-\mu_0}}{\sigma_0}>\frac{\mu_0b_0}{\sigma_0}] = \Pr[\abs{Z}>n] \le 2e^{-n^2/2},
        \end{equation}
        and analogously for $j\in[n]$
        \begin{equation}\label{eq:bound-pr-Xj}
            \Pr[\frac{\abs{Q_j}}{\mu_j}>b_j] = \Pr[\frac{\abs{Q_j}}{\sigma}>\frac{\mu_jb_j}{\sigma}] \le \Pr[\abs{Z} > n] \le 2e^{-n^2/2}.
        \end{equation}
        Define the ``good'' event 
        \begin{equation}
            E_n \coloneq \left\{ \frac{\abs{Q_0-\mu_0}}{\mu_0}\le b_0,\ \frac{\abs{Q_j}}{\mu_j}\le b_j\;\forall j\in[n]  \right\},
        \end{equation}
        where the position variables are not ``too far'' from their expectation.
        Then by the union bound, $\Pr[E_n] \ge 1-e^{-n^2/3}$ for sufficiently large $n$.
        So it only remains to verify that $E_n$ implies $\abs{Y_n}\ge R_n$.

        To track the relative error, define for $j\in\{0,\dots,n\}$,
        \begin{equation}
            \Delta_j \coloneq \frac{\abs{Y_j - \mu_j}}{\mu_j}.
        \end{equation}
        Conditioned on $E_n$, we will prove the bounds $\Delta_j \le \delta_j$ with 
        \begin{equation}
            \delta_j \coloneq 3^j b_0 + \sum_{k=1}^j 3^{j-k}b_k
            =3^{j-n}\left(\frac{1}{16}+\frac{j}{16(n+1)}\right)\le\frac18
            \qquad(0\le j\le n)
        \end{equation}
        by induction.
        The base case is $\Delta_0\le b_0=\delta_0$. Assuming $\Delta_j\le\delta_j\le1/8$, we have $\Delta_j\le1$, so $\abs{Y_j}\le(1+\Delta_j)\mu_j$ and
        \begin{equation}
            \abs{Y_j^2 - \mu_j^2} = \abs{Y_j - \mu_j}\cdot\abs{Y_j+\mu_j} \le \Delta_j\mu_j\cdot (2+\Delta_j)\mu_j \le 3\Delta_j\mu_j^2 = 3\Delta_j\mu_{j+1}.
        \end{equation}
        Thus, for $j<n$,
        \begin{equation}
            \Delta_{j+1} = \frac{\abs{Y_{j+1} - \mu_{j+1}}}{\mu_{j+1}} = \frac{\abs*{(Y_{j}^2 + Q_{j+1}) - \mu^2_{j}}}{\mu_{j+1}} \le \frac{\abs*{Y_j^2 -\mu_j^2}}{\mu_{j+1}} + \frac{\abs{Q_{j+1}}}{\mu_{j+1}} \le 3\Delta_{j} + b_{j+1},
        \end{equation}
        conditioned on $E_n$.
        Since $3\delta_j+b_{j+1}=\delta_{j+1}$, this gives $\Delta_{j+1}\le\delta_{j+1}\le1/8$, completing the induction.
        In particular, $Y_n \ge \frac{7}{8}\mu_n = R_n$ conditioned on $E_n$.
    \end{proof}

    The last step of the proof is to turn the position tail bound of \cref{claim:pn-bound} into a number tail bound.
    \begin{claim}
        Let $K_n \coloneq \lfloor\mu_n/100\rfloor$. Then
        \begin{equation}
            \ev{\Pi_{\le K_n}^{(n)}}{\psi_n} \le e^{-\Omega(n^2)}.
        \end{equation}
    \end{claim}
    \begin{proof}
        Write
        \begin{equation}
            \ket{\psi_n} = \ket{\phi} + \ket{\chi_n},\qquad \ket{\phi} \coloneq \Pi_{\le K_n}^{(n)}\ket{\psi_n},\quad\ket{\chi_n} \coloneq \Pi_{>K_n}^{(n)}\ket{\psi_n}.
        \end{equation}
        Then
        \begin{equation}\label{eq:sqrtpn}
            \sqrt{p_n} = \norm{\bm1_{[R_n,\infty)}(\abs{\Q_n})\ket{\psi_n}} \le \norm{\bm1_{[R_n,\infty)}(\abs{\Q_n})\ket\phi} + \norm{\ket{\chi_n}},
        \end{equation}
        where $\norm{\ket{\chi_n}}^2 = \ev{\Pi_{>K_n}^{(n)}}{\psi_n}$.
        Additionally, since $\Q_n^2 \le \Q_n^2+\P_n^2 =2\N_n+1$ as quadratic forms, and $\ket{\phi} \in \Ran(\Pi_{\le K_n}^{(n)})$, we have
        \begin{equation}
            \norm{\bm1_{[R_n,\infty)}(\abs{\Q_n})\ket\phi}^2 \le \frac{\ev{\Q_n^2}{\phi}}{R_n^2} \le \frac{2K_n + 1}{R_n^2} \eqcolon a_n.
        \end{equation}
        The first inequality is Markov's inequality for the position distribution, obtained by applying the spectral theorem to the scalar bound
        \(\bm1_{[R_n,\infty)}(|x|)\le x^2/R_n^2\), valid for all \(x\in\mathbb R\).
        Inserting into \cref{eq:sqrtpn} gives $\norm{\ket{\chi_n}} \ge \sqrt{p_n} - \sqrt{a_n}$ and thus
        \begin{equation}
            \ev{\Pi_{>K_n}^{(n)}}{\psi_n} = \norm{\ket{\chi_n}}^2 \ge (\sqrt{p_n} -\sqrt{a_n})^2.
        \end{equation}
        We bound
        \begin{equation}
            a_n \le \frac{2\mu_n/100+1}{(7\mu_n/8)^2} \le \frac{2}{\mu_n}.
        \end{equation}
        Thus, $\sqrt{a_n} \le 2^{-2^{n-1}+1/2} \le \frac14 e^{-n^2/3}$ for large $n$ and
        \begin{equation}
            \sqrt{p_n} -\sqrt{a_n} \ge \sqrt{1-e^{-n^2/3}} - \frac14 e^{-n^2/3} \ge 1 - \frac54 e^{-n^2/3}.
        \end{equation}
        Finally, we have
        \begin{equation}
            \ev{\Pi_{\le K_n}^{(n)}}{\psi_n} \le 1-\left(1 - \frac54 e^{-n^2/3}\right)^2 \le 3 e^{-n^2/3}.
        \end{equation}
    \end{proof}
    \cref{eq:N-tailbound} follows with $K_n = \lfloor\mu_n/100\rfloor=2^{2^n-O(1)}$, which completes the proof.
\end{proof}

\section{\modelname\ (\model)\ and its \BQP\ analogue, \texorpdfstring{$\BEQP{}{}$}{BEQP}}
\label{scn:model-definition}
Having discussed the shortcomings of non-energy-preserving bosonic computation as in the LB model (\cref{sec:pathology}), we now define our energy-preserving model, \model\ (\cref{sscn:modeldef}), along with its corresponding complexity class $\BEQP{}{}$ analogous to \BQP\ (\cref{sscn:BEQ}).

\subsection{\modelname\ (\model)}
\label{sscn:modeldef}
For the following, we utilize the definitions of \cref{scn:preliminaries}.

\begin{definition}[\modelname\ ($\modelparam{G}{E}{L}$)]\label{def:model}~
\begin{itemize}
    \item (Input) String $x\in \{0,1\}^*$, whose length is denoted $\varsigma$.
    \item (Parameters) 
    \begin{itemize}
        \item Gate set $\calG$ of energy-preserving polynomial Hamiltonians. If not specified, assumed to be $O(1)$ degree with constant coefficients and acting non-trivially on $O(1)$ modes.
        \item Energy bound $E$ on coherent state energy. If not specified, assumed to be $\poly(\varsigma)$.
        \item Circuit depth $L$. If not specified, assumed to be $\poly(\varsigma)$.
    \end{itemize}
    
    \item (State space) $(\mathbb C^{\infty})^{\otimes(m+1)}\cong\ell^2 (\NN_0^{m+1},\CC)$.

    \item (Initial state\footnote{Unlike the qubit setting, preparing the Fock state $\ket{1}$ is non-trivial in the CV setting. Thus, instead of encoding the input $x$ directly into the initial state as done in DV computation, without loss of generality we assume $x$ is input to the efficient algorithm for uniformly generating the gates in our circuit. If needed, one can use \Cref{prop:prepare-Fock-states} to explicitly prepare a Fock/binary encoding of the input.}) $\ket{0}^{\otimes m}\otimes \ket{\alpha}$ with $\ket{0}$ the vacuum state and $\ket{\alpha}$ a coherent state with amplitude $\alpha\in\CC$ such that $\abs{\alpha}^2\leq {E}$. We assume that the amplitude is efficiently uniformly generated given input $x$. Having a single coherent state $\ket{\alpha}$ is without loss of generality (\cref{lem:coher-gen}).

    \item (Gates) Unitaries $e^{-iHt}$ for $H\in\calG$, where $t\ge0$ is evolution time. The total runtime of a sequence of $L$ gates $\prod_{j=1}^L e^{-it_jH_j}$ is $\sum_{j=1}^{L} t_j$. We assume that the generator and time descriptions of all gates are efficiently uniformly generated given input $x$. When no explicit runtime $t_j$ is specified, we assume $t\in \Theta(1)$, in which case we are more interested in the depth $L$ of the circuit.

    \item (Measurement) The first mode is the designated output mode, and is measured at the end of the computation via the number operator $\hat n$.
\end{itemize}
\end{definition}
We use ``gate set'' interchangeably for a family of Hamiltonian generators and its associated unitary gates; circuit descriptions specify the generators and evolution times.
\noindent \emph{Remarks:} (1) As desired, the energy throughout a $\modelparam{G}{E}{L}$ is, by construction, bounded by $\abs{\alpha}^2\leq E$. (2) We will make heavy use of the following gate sets in this work: 
\begin{equation}\label{eq:GKerr}
    \begin{aligned}
        \GLO &\coloneq \Bigl\{e^{i\theta\hat n_j}\Bigm|\theta\in\RR\Bigr\}
        \cup\Bigl\{e^{\theta(\ad_j\a_k-\ad_k\a_j)}\Bigm|\theta\in\RR,\ j<k\Bigr\},\\
        \GKerr &\coloneq \GLO\cup\Bigl\{e^{i\theta\hat n_j^2}\Bigm|\theta\in\RR\Bigr\},\\
        \GcrossKerr &\coloneq \GLO\cup\Bigl\{e^{i\theta\hat n_j\hat n_k}\Bigm|\theta\in\RR,\ j<k\Bigr\}.
    \end{aligned}
\end{equation}
(3) The following lemma justifies our use of a single coherent state input:

\begin{lem}[Input energy redistribution]
\label{lem:coher-gen}
Let $\ket{\bm{\alpha}}$ with $\bm{\alpha}\in\mathbb C^{m_1}$ be the input coherent state over $m_1$ modes. Any coherent state $\ket{\bm\beta}$ with $\bm\beta\in\mathbb C^{m_2}$ over arbitrary $m_2$ modes can be generated using only linear optics (\cref{def:linear-optics}) with the condition that $\norm{\bm{\alpha}} =\norm{\bm{\beta}}$. Equivalently, coherent states of equal total energy are equivalent up to a linear-optical transformation. The construction is also efficient.
\end{lem}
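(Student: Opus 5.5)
The plan is to use the standard fact that a linear-optical unitary acts on coherent states by a linear map of the amplitude vector. Concretely, pad both vectors with zeros to a common length $M\coloneq\max(m_1,m_2)$, adjoining vacuum modes as needed. Set $\tilde{\bm\alpha}=(\bm\alpha,0,\dots,0)$ and $\tilde{\bm\beta}=(\bm\beta,0,\dots,0)$ in $\CC^M$. Since $\ket{0}$ is the coherent state with amplitude $0$, the input $\ket{\bm\alpha}\otimes\ket{0}^{\otimes(M-m_1)}$ is exactly $\ket{\tilde{\bm\alpha}}$. Padding does not change norms, so $\norm{\tilde{\bm\alpha}}=\norm{\tilde{\bm\beta}}$.

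First, I will show that for a linear-optical unitary $\hat U$ corresponding to $W\in U(M)$ we have $\hat U\ket{\bm\gamma}=\ket{W\bm\gamma}$, or $\ket{W^{T}\bm\gamma}$ depending on convention. Write $\ket{\bm\gamma}=\hat D(\bm\gamma)\ket{0}$ with $\hat D(\bm\gamma)=\exp(\sum_j\gamma_j\ad_j-\gamma_j^*\a_j)$. Linear optics fixes the vacuum, since $\hat U\ket{0}=\ket{0}$ by photon-number preservation and a global phase is irrelevant. Conjugation maps $\ad_j\mapsto\sum_k W_{kj}\ad_k$. Hence $\hat U\hat D(\bm\gamma)\hat U^\dagger=\hat D(W\bm\gamma)$. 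The exponent transforms linearly, and the conjugation passes through the exponential because $\hat U$ is unitary. This can be verified either directly in the Fock basis on each finite photon-number sector (avoiding domain issues, as in the footnote accompanying \cref{eq:UT-Heisenberg}), or by checking the eigenvalue property $\a_j\hat U\ket{\bm\gamma}=(W\bm\gamma)_j\hat U\ket{\bm\gamma}$. Coherent states are the unique normalized joint eigenvectors of all $\a_j$, up to phase, and the phase is fixed by the vacuum overlap $e^{-\norm{\bm\gamma}^2/2}$, which is invariant under $W$.

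Second, I will pick $W\in U(M)$ with $W\tilde{\bm\alpha}=\tilde{\bm\beta}$. This exists because the unitary group acts transitively on spheres of fixed radius. If $\norm{\bm\alpha}=0$, take $W=\II$. Otherwise, complete $\tilde{\bm\alpha}/\norm{\tilde{\bm\alpha}}$ and $\tilde{\bm\beta}/\norm{\tilde{\bm\beta}}$ to orthonormal bases and map one basis to the other. A Householder reflection combined with a phase also works and is explicit.

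Finally, for efficiency, I will decompose $W$ into $O(M^2)$ beam splitters and phase shifters from $\GLO$ using the Reck/Clements decomposition, with angles computable in $\poly(M)$ arithmetic operations. In fact a cheaper route exists: a sequence of $M-1$ beam splitters plus phases concentrates all amplitude of $\tilde{\bm\alpha}$ into mode 1, giving $(\norm{\bm\alpha},0,\dots,0)$. The inverse of the analogous sequence for $\tilde{\bm\beta}$ then spreads it out, for $O(M)$ gates in total. The only mildly delicate step is the first one, the rigorous Heisenberg-picture identity for the displacement operator with unbounded generators. I would handle it sectorwise: both sides expand as $e^{-\norm{\bm\gamma}^2/2}\sum_n (\bm\gamma\cdot\bm{\hat a}^\dagger)^n\ket{0}/n!$, and $\hat U$ acts on each finite-dimensional $n$-photon term exactly.
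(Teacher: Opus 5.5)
Your proposal is correct and follows the paper's route: pad the shorter register with vacuum modes (so that for $m_1>m_2$ the output carries $m_1-m_2$ extra vacuum modes), then use the standard fact that a linear-optical network with mode matrix $W\in U(M)$ maps $\ket{\bm\gamma}$ to $\ket{W\bm\gamma}$, together with transitivity of $U(M)$ on spheres of fixed radius. The paper delegates that fact to a citation, whereas you prove it, justifying $\hat U\hat D(\bm\gamma)\hat U^\dagger=\hat D(W\bm\gamma)$ sector by sector, and you also give an explicit $O(M)$ phase-shifter and beam-splitter construction for efficiency; all of this is sound.
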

\begin{proof}
This is a standard result in quantum optics \cite{he2008coherent}. Note that in the case where $m_1>m_2$, we end up with additional $m_1-m_2$ vacuum modes next to $\ket{\bm\beta}$, and if $m_1<m_2$, we pad $\ket{\bm\alpha}$ with extra vacuum ancillas.
\end{proof}

\subsection{The complexity class \texorpdfstring{$\BEQP{}{}$}{BEQP}}
\label{sscn:BEQ}

 Similar to the definition of the class \CVBQP\ \cite{CGMMNRS25} for the LB model, a central focus of this work is the promise class version of $\modelparam{G}{E}{\poly}$ (i.e. polynomial circuit depth), which we denote $\BEQP{}{}$:

\begin{definition}[$\BEQP{G}{E}$]
    \label{def:cc-beq}
    Let $\mathcal{G}$ be a family of energy-preserving polynomial Hamiltonians as in \cref{def:model}, and $x\in\{0,1\}^*$ an arbitrary string of length $\varsigma$. A promise problem $A=(\Ayes,\Ano) \in \BEQP{G}{E}$ if there exists a poly-time uniform family of \model\ circuits $\{C_x\}$ over $\mathcal G$ of total runtime $\le \poly(\varsigma)$, as well as efficiently computable integer bounds $0<a<b\le \poly(\varsigma)$ satisfying the following. Denote the output state by
    \be
    \ket{\psi_x} = C_x \ket{\bm{0}}\ket{\alpha},
    \ee
    with $\abs{\alpha}^2\leq E$. If $E$ is not specified, we assume $E\in O(\poly(\varsigma))$. Letting \be\text{Pr}\left[\hat{n}_1 = n\right] = \norm{(\bra{n}_1\otimes\II_{\mathrm{rest}})\ket{\psi_x}}^2\ee be the probability of measuring $n$ photons in the first mode of $\ket{\psi_x}$:
    \begin{itemize}
        \item (Completeness) If $x\in\Ayes$, then $\text{Pr}\left[\hat{n}_1 \in [a, b] \right] \ge 2/3$.
        \item (Soundness) If $x\in\Ano$, then $\text{Pr}\left[\hat{n}_1 \le a-1 \right] \ge 2/3$.
    \end{itemize}
    With polynomial overhead and repetition, we may improve the completeness and soundness parameters to exponentially close to $1$ and $0$ as usual, with the catch that energy $E$ will also need to increase polynomially for such parallel runs\footnote{Strictly speaking, since the model is required to output a single bit, we need to implement a majority vote which can be done using the compilation tools developed in \cref{sec:compilation}.}.
\end{definition}
\noindent \emph{Remarks:} (1) If a parameter, e.g. $\mathcal G$, is not specified, the default definitions of $\model$ are assumed. (2) For clarity, we write $\BEQP{}{}$ to indicate $\BEQP{}{\poly}$, i.e. with polynomial energy. (3) When we wish to additionally constrain the number of modes, such as in \Cref{thm:classical_upper_bounds_precise}, we write $\BEQP{}{}$ with $O(f(\varsigma))$ modes, where $\varsigma$ is the input size. If not specified, the number of modes is assumed to be $\poly(\varsigma)$. (4) We define the precise version, $\PrecBEQP$, as $\BEQP{}{}$ but with $c(\varsigma) - s(\varsigma) = \frac{1}{\exp}$, where $c$ and $s$ are the completeness and soundness parameters, computable to $p$ bits of precision in time $\poly(\varsigma,p)$. (5) Unlike $\CVBQP_\poly$~\cite{CJMMM26,CGMMNRS25}, in \cref{thm:cutoff-SK} we show that $\BEQP{}{}$ has universal gate sets (Kerr and linear optics $\GKerr$ or $\GcrossKerr$) obeying a Solovay--Kitaev theorem, and we therefore drop the specific gate set when assuming universality.

\section{Parametric approximation of active gates by energy-preserving gates}
\label{sec:parametric-approx}

In this section, we show that \model\ with an ancillary coherent state of high enough energy can simulate arbitrary gates. The model is defined formally in \cref{scn:model-definition}.

Let us briefly sketch the idea behind simulating arbitrary gates in \model, which is a folklore technique in the quantum optics literature known as \emph{parametric approximation}~\cite{mollow1967quantum,mollow1967quantum2}. We write any given Hamiltonian in the Wick-ordered form (\cref{def:wick-ordering}). This is the form where each term in the Hamiltonian has the form $\prod_j (\hat a_j^\dagger)^{k_j} \prod_j \hat{a}_j^{\ell_j}$. We then add additional annihilation/creation factors of the form $\hat b/\alpha$ and $\hat b^\dag/\alpha$ (for some large and real $\alpha$) to keep each monomial energy-preserving. For instance, the Hamiltonian of a displacement operator $H_{\mathrm{disp}}=\hat a+\hat a^\dag$ will be replaced by $\widetilde H=(\hat a\hat b^\dag+\hat a^\dag\hat b)/\alpha$, which in this case is the Hamiltonian of a beamsplitter. We then show that applying $\widetilde H$ on the input state with the ancilla mode $\ket\alpha$ for the same time produces the desired dynamics up to an error of $O(1/\alpha)$, where we have hidden factors related to energy for simplicity. This simulation strategy is proven in \cref{sec:parametric-approx2}.

Later, in \cref{sec:robus-parametric}, we show that this model is robust against thermal noise in the input states. In other words, we show that replacing coherent states with displaced thermal states (with large enough displacement) can still simulate arbitrary bosonic dynamics.

\subsection{Parametric approximation}\label{sec:parametric-approx2}

In this section, we show that \model\ allows us to efficiently simulate quantum computations with polynomial moment bounds on energy throughout the evolution, as specified in \cref{eq:supsE}. These can include CV, DV, and hybrid CV-DV quantum computations.

Concretely, let $\Halg \in \CC[\a_1,\a_1^\dagger,\dots,\a_m,\a_m^\dagger]$ be symmetric on $\Dfin$ (the finite-particle subspace) and not number-preserving.
Fix the Wick-ordered (\cref{def:wick-ordering}) monomial expansion
\begin{equation}
\label{eq:Halg}
\Halg = H_0 + \sum_{k}\left(M_k + M_k^\dagger\right),
\end{equation}
where $H_0$ is the energy-preserving part and each $M_k$ is a Wick monomial of $\Halg$ (including coefficient) that increases particle number by $n_k\ge1$.
Let $d \coloneq \max_k \deg(M_k)$. Let $H$ be a self-adjoint extension of $\Halg$\footnote{Note that Gaussian and $\Q^k$ Hamiltonians are already essentially self-adjoint and therefore they have a unique self-adjoint extension $H$. However, $\Halg$ does not need to be essentially self-adjoint here.} on $L^2(\RR^m)$, so that $\Dfin \subseteq \calD(H)$.
For $\alpha \ge 1$, define
\begin{equation}
\label{eq:preserving_H}
\wtH_{\mathrm{alg}} \coloneq H_0 \otimes\II + \sum_{k} \left(M_k \otimes \biggl(\frac{\b}{\alpha}\biggr)^{n_k} + M_k^\dagger \otimes \biggl(\frac{\b^\dagger}{\alpha}\biggr)^{n_k}\right),
\end{equation}
on $\Dfin$, where $\b^\dagger$ is the creation operator on an auxiliary mode initialized to the coherent state $\ket{\alpha}$. We denote its self-adjoint closure by $\wtH$, as justified below. In the following theorem, we show that $e^{-i(H\otimes\II)t}\ket{\psi,\alpha}\approx e^{-it\wtH}\ket{\psi,\alpha}$, where the approximation error decays as $O(1/\alpha)$, under mild conditions on the maximum energy acquired in the evolution of $\ket\psi$ under $H$.

\begin{lem}
$\wtH_{\mathrm{alg}}$ is essentially self-adjoint on $\Dfin$, i.e., it has a unique self-adjoint extension.
\end{lem}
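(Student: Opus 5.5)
The key observation is that $\wtH_{\mathrm{alg}}$ is energy-preserving on the joint system of the $m$ modes and the auxiliary mode $\b$. By \cref{lem:passive-wick}, it commutes with the total photon number $\hN_{\mathrm{tot}} = \hN + \b^\dagger\b$. Each monomial $M_k\otimes \b^{n_k}$ raises the system photon number by $n_k$ and lowers the ancilla photon number by $n_k$. The energy-preserving part $H_0\otimes\II$ leaves both unchanged. Hence $\wtH_{\mathrm{alg}}$ maps each finite-dimensional sector $\calH_N := \Ran(\Pi_N)$ into itself. Here $\Pi_N$ projects onto total photon number exactly $N$, and $\Dfin = \bigoplus_N^{\mathrm{alg}} \calH_N$ is the algebraic direct sum. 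I plan to exploit this block-diagonal structure directly rather than invoke Nelson-type commutator theorems.

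First, I verify that $\wtH_{\mathrm{alg}}$ is symmetric on $\Dfin$. This holds because $\Halg$ is symmetric on $\Dfin$, so $H_0$ is symmetric. Moreover, the pairs $M_k\otimes(\b/\alpha)^{n_k}$ and $M_k^\dagger\otimes(\b^\dagger/\alpha)^{n_k}$ are formal adjoints of each other on $\Dfin$ (as $\alpha$ is real). Second, I restrict to each sector: $h_N := \wtH_{\mathrm{alg}}|_{\calH_N}$ is a symmetric operator on a finite-dimensional space, hence a Hermitian matrix. Let $h_N$ have orthonormal eigenvectors $\{v_{N,j}\}_j$ with real eigenvalues $\lambda_{N,j}$. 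Then the union over $N$ of these bases is an orthonormal basis of the full Hilbert space $\ell^2(\NN_0^{m+1})$, consisting of eigenvectors of $\wtH_{\mathrm{alg}}$ lying in $\Dfin$.

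Third, I apply the standard criterion that a symmetric operator possessing a complete orthonormal set of eigenvectors in its domain is essentially self-adjoint. Concretely, I will show $\Ran(\wtH_{\mathrm{alg}}\pm i)$ is dense. Each $v_{N,j}$ equals $(\wtH_{\mathrm{alg}}\pm i)\,v_{N,j}/(\lambda_{N,j}\pm i)$, so the range contains the span of all $v_{N,j}$, which is dense. The basic criterion for essential self-adjointness (e.g.\ Reed--Simon Thm.~VIII.3 corollary) then gives a unique self-adjoint extension $\wtH = \bigoplus_N h_N$. This extension is the diagonal multiplication operator in this basis, with domain $\{\psi : \sum_N\|h_N\Pi_N\psi\|^2<\infty\}$.

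The only genuinely delicate point is the first step: checking that the Wick-ordered $H_0$ and the dressed monomials really commute with $\hN_{\mathrm{tot}}$ and preserve $\Dfin$, so that no domain issues arise. Since every operator involved is a polynomial in ladder operators, each maps a finite linear combination of Fock states to another finite combination. The commutation then follows from the counting argument in \cref{lem:passive-wick} applied to the $(m+1)$-mode system. I expect no further obstacles.
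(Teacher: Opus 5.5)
Your proof is correct and takes essentially the same route as the paper. Both arguments rest on $\wtH_{\mathrm{alg}}$ commuting with the joint total photon number, so that it is a direct sum of Hermitian matrices on the finite-dimensional sectors $\calH_N$; you finish via density of $\Ran(\wtH_{\mathrm{alg}}\pm i)$ using the complete eigenbasis in $\Dfin$, whereas the paper shows directly, via finite sector truncations, that the self-adjoint operator $\bigoplus_N \wtH_N$ is the closure of $\wtH_{\mathrm{alg}}$.
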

\begin{proof}
Observe that $\wtH_{\mathrm{alg}}$ commutes with $\Ntot$ (total number operator on all $m+1$ modes) on $\Dfin$.
    Therefore, $\wtH_{\mathrm{alg}}$ preserves each $n$-photon number sector
    \begin{equation}
        \calH_n \coloneq \ker(\Ntot - n),\qquad n\in\NN_0.
    \end{equation}
    Define the restrictions
    \begin{equation}
        \wtH_{n} \coloneq \wtH_{\mathrm{alg}}|_{\calH_n},
    \end{equation}
    which are finite-dimensional Hermitian operators.
    Thus, we can define
    \begin{equation}
        \wtH \coloneq \bigoplus_{n=0}^\infty \wtH_n
    \end{equation}
    which is self-adjoint on the domain
    \begin{equation}
        \label{eq:calD}
        \calD(\wtH) = \left\{\ket{\phi}=\sum_{n=0}^\infty \ket{\phi_n} \ \middle|\ \ket{\phi_n}\in\calH_{n},\; \sum_{n=0}^\infty \norm{\wtH_n\ket{\phi_n}}^2<\infty \right\}.
    \end{equation}
    For every $\phi\in\calD(\wtH)$, its finite number-sector truncations converge to $\phi$, and their images under $\wtH_{\mathrm{alg}}$ converge to $\wtH\phi$. Therefore, $\wtH$ is the closure of $\wtH_{\mathrm{alg}}$.
    Since $\wtH$ is also self-adjoint, it is the unique self-adjoint extension.
\end{proof}

\begin{theorem}[Parametric approximation of polynomial Hamiltonians]\label{thm:passive-compilation}
Let $\ket{\psi}\in\calD(H)$ and define $\ket{\psi(s)} = e^{-isH}\ket{\psi}$ such that
    \begin{equation}\label{eq:supsE}
        \sup_{s\in[0,t]} \norm{(1+\hN)^{d/2}\ket{\psi(s)}} \le M.
    \end{equation}
    Identifying $H$ with $H\otimes\II$ on the system with ancilla, there exists a constant $C_H$ depending on the Wick coefficients of $\Halg$, such that
    \begin{equation}\label{eq:deltat}
        \delta(t) \coloneq \norm{\left(e^{-itH} - e^{-it\wtH}\right)\ket{\psi,\alpha}} \le \int_{0}^t \norm{(H-\wtH)\ket{\psi(s),\alpha}}\,ds
        \le \frac{C_H M t}{\alpha},
    \end{equation}
    where the term $(H-\wtH)\ket{\psi(s),\alpha}$ is understood via the continuous extension $\Delta$ constructed in the proof.
\end{theorem}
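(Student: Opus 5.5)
The proof has two parts: a Duhamel (variation-of-constants) argument giving the first inequality in \cref{eq:deltat}, and an explicit computation of $(H-\wtH)$ on product states $\ket{\phi}\ket{\alpha}$ giving the second.

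\textbf{Step 1: Duhamel bound.} Set $\ket{\Psi(s)} := \ket{\psi(s),\alpha} = e^{-isH}\ket{\psi}\otimes\ket{\alpha}$, where $H$ acts trivially on the ancilla, and consider
\begin{equation}
f(s) := e^{is\wtH}\ket{\Psi(s)},
\end{equation}
so that $\delta(t) = \norm{f(t)-f(0)}$. Formally,
\begin{equation}
f'(s) = e^{is\wtH}\, i(\wtH - H)\ket{\Psi(s)}.
\end{equation}
Integrating and using unitarity of $e^{is\wtH}$ gives the first inequality.

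The domain issue is real: $\ket{\Psi(s)}$ need not lie in $\calD(\wtH)$. I would handle it as follows.
\begin{itemize}
\item Define $\Delta$ on $\Dfin\otimes\Dfin$ by
\begin{equation}
\Delta := \sum_k \Bigl( M_k\otimes\bigl(\II-(\b/\alpha)^{n_k}\bigr) + M_k^\dagger\otimes\bigl(\II-(\b^\dagger/\alpha)^{n_k}\bigr) \Bigr),
\end{equation}
which is what $H-\wtH$ equals on $\Dfin$.
\item Show that $\Delta$ extends continuously to product vectors $\ket{\phi}\ket{\alpha}$ with $(1+\hN)^{d/2}\ket\phi$ finite, using the bound in Step 2.
\item Prove the identity in weak form. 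Pair $f(s)$ against $e^{is\wtH}\ket{\chi}$ with $\chi\in\Dfin$, which lies in $\calD(\wtH)$ since $\wtH$ is block diagonal by the preceding lemma. Differentiate $\langle \chi, e^{-is\wtH}\cdot\rangle$ applied to $\ket{\Psi(s)}$, using $\psi(s)\in\calD(H)$ and $\wtH\chi\in\Dfin$.
\item Integrate and take the supremum over $\chi$.
\end{itemize}
Symmetry of $H$ and $\wtH$ on the relevant vectors makes the cross terms cancel, leaving $\langle\chi, e^{-is\wtH}\Delta\Psi(s)\rangle$.

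\textbf{Step 2: Error per monomial.} Consider $M_k^\dagger\otimes(\II-(\b^\dagger/\alpha)^{n_k})$ acting on $\ket{\phi}\ket{\alpha}$.

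The annihilation terms are easy. Since $\b\ket\alpha=\alpha\ket\alpha$, we have $(\b/\alpha)^{n_k}\ket\alpha = \ket\alpha$, so the $M_k$ terms contribute exactly zero. This is where the sign convention, with $M_k$ raising system number and paired with $\b^{n_k}$, matters.

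For the creation terms, bound
\begin{equation}
\norm{M_k^\dagger\phi}\cdot\norm{\bigl(\II-(\b^\dagger/\alpha)^{n}\bigr)\ket\alpha}.
\end{equation}
\begin{itemize}
\item \emph{System factor.} A Wick monomial of degree $\le d$ satisfies $\norm{M_k^\dagger\phi}\le c_k\norm{(1+\hN)^{d/2}\phi}$. This follows from $\norm{\a_j^\sharp (1+\hN)^{-1/2}}\le \sqrt2$ and commuting number shifts, analogous to \cref{lem:cutoff-monomial-bound}.
\item \emph{Ancilla factor.} Compute it in the displaced frame. Write $\b^\dagger = \hD(\alpha)(\b^\dagger+\alpha)\hD(\alpha)^\dagger$, so that $(\b^\dagger/\alpha)^n\ket\alpha = \hD(\alpha)(1+\b^\dagger/\alpha)^n\ket0$. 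Then
\begin{equation}
\norm{\bigl((1+\b^\dagger/\alpha)^n - 1\bigr)\ket0}^2 = \sum_{j=1}^n \binom{n}{j}^2 \frac{j!}{\alpha^{2j}} = O_n(1/\alpha^2)
\end{equation}
for $\alpha\ge1$. The ancilla factor is therefore $\le c'_{n_k}/\alpha$.
\end{itemize}

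\textbf{Step 3: Assemble.} Summing over $k$ gives
\begin{equation}
\norm{\Delta\ket{\psi(s),\alpha}} \le C_H\, \frac{\norm{(1+\hN)^{d/2}\psi(s)}}{\alpha} \le \frac{C_H M}{\alpha},
\end{equation}
and integrating over $[0,t]$ gives $C_H M t/\alpha$.

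I expect the main obstacle to be the rigorous Duhamel step in Step 1. $H$ is an arbitrary self-adjoint extension, and $\ket{\psi(s)}$ need not be in $\Dfin$, so one cannot simply differentiate. The weak formulation with test vectors $e^{is\wtH}\chi$, together with strong continuity of $s\mapsto(1+\hN)^{d/2}\psi(s)$ (or at least boundedness and measurability, which suffice for the Bochner integral), is what I would try first.
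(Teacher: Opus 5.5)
Your proposal is correct and follows the paper's proof. It uses the same formal difference $\Delta$, the same observation that the $M_k$ terms vanish on $\ket{\alpha}$, and the same displaced-frame bound $\norm{((1+\b^\dagger/\alpha)^{n_k}-1)\ket{0}}\le c_k/\alpha$; your Duhamel step is equivalent to the paper's sector-wise differentiation of $F_n(s)=e^{is\wtH_n}\Pi_n\Psi(s)$, and the test-vector form you propose is exactly what the paper does in the proof of \cref{thm:robust-parametric-approximation}. One slip to fix: pair $f(s)$ against $\chi$ itself, i.e.\ differentiate $\langle e^{-is\wtH}\chi,\Psi(s)\rangle$ with $e^{-is\wtH}\chi\in\Dfin$, since pairing against $e^{is\wtH}\chi$ as written just returns $\langle\chi,\Psi(s)\rangle$.
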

\begin{proof}
First, note that since $\deg(M_k)\le d$, there exists a constant $C_H$ such that\footnote{See \cite[Lemma A.4]{CGMMNRS25} for bounding polynomial operators in terms of the number operator.}
\begin{equation}\label{eq:Mkphi-bound}
\sum_{k}\norm{M_k^\dagger\ket{\phi}} \le C_H \norm{(1+\hN)^{d/2}\ket{\phi}} \qquad \forall\ket\phi\in\calD((1+\hN)^{d/2}).
\end{equation}
Here the monomials are extended from $\Dfin$ by continuity in the norm $\norm{(1+\hN)^{d/2}\ket{\phi}}$.
    Next, we analyze the error in the displaced frame (i.e., we apply the change of frame defined by $\hD(\alpha)^\dagger$).
    Recall that for real $\alpha$,
    $\hD(\alpha) \coloneq e^{\alpha\b^\dagger - \alpha\b}$
    satisfies $\hD(\alpha)\ket{0}=\ket{\alpha}$ and $\hD(\alpha)^\dagger \b\,\hD(\alpha) = \b+\alpha$ on the Schwartz space (see \cref{app:heisenberg}), which includes $\Dfin$.
    The same polynomial bound, applied to the full degree of $\wtH$, and closedness of $\wtH$ show that its domain contains the Schwartz space: approximate a Schwartz vector by finite-particle cutoffs in the corresponding number-weighted norm.
    Since $\hD(\alpha)$ preserves the Schwartz space, $\ket{\phi,0}\in\calD(\hD(\alpha)^\dagger \wtH\,\hD(\alpha))$, or equivalently $\ket{\phi,\alpha}\in\calD(\wtH)$, for all $\phi\in\Dfin$.
    Thus, on $\Dfin$, $\hD(\alpha)^\dagger \wtH\,\hD(\alpha)$ is again a finite polynomial in creation and annihilation operators, and
    \begin{equation}\label{eq:DwtHD}
        \hD(\alpha)^\dagger \wtH\, \hD(\alpha) = H_0 \otimes\II + \sum_{k}\left(M_k\otimes \left(1+\frac{\b}{\alpha}\right)^{n_k} + M_k^\dagger\otimes \left(1+\frac{\b^\dagger}{\alpha}\right)^{n_k} \right).
    \end{equation}
    Additionally, we have on all of $\calD(H\otimes\II)$
    \begin{equation}\label{eq:DHD}
        \hD(\alpha)^\dagger (H\otimes\II)\hD(\alpha) = H\otimes\II
    \end{equation}
    because $\hD(\alpha)$ only acts non-trivially on the last mode.
    Combining \cref{eq:DwtHD} and \cref{eq:DHD} gives for all $\ket{\phi}\in\Dfin$
    \begin{equation}
        \begin{aligned}
            \hD(\alpha)^\dagger \bigl((H\otimes\II)-\wtH\bigr)\hD(\alpha)\ket{\phi,0} &=-\sum_k M_k^\dagger\ket{\phi}\otimes \ket{v_{k,\alpha}},\\
            \ket{v_{k,\alpha}} &\coloneq \left(\left(1+\frac{\b^\dagger}{\alpha}\right)^{n_k}-1\right)\ket{0} = \sum_{j=1}^{n_k}\binom{n_k}{j}\alpha^{-j}\sqrt{j!}\ket{j},
        \end{aligned}
        \label{eq:diffrence_H}
    \end{equation}
    where the $M_k$ terms vanish as $\b\ket{0}=0$.
    Due to the assumption $\alpha \ge 1$,
    \begin{equation}
        \norm{\ket{v_{k,\alpha}}}^2 = \sum_{j=1}^{n_k} \binom{n_k}{j}^2 j! \alpha^{-2j} \le \frac{c_k^2}{\alpha^2},\qquad c_k \coloneq \sqrt{\sum_{j=1}^{n_k}\binom{n_k}{j}^2j!}.
    \end{equation}
    Since $\hD(\alpha)$ is unitary and using \cref{eq:Mkphi-bound},
    \begin{equation}\label{eq:diff:HtH}
        \norm{\bigl((H\otimes\II) - \wtH\bigr)\ket{\phi,\alpha}} \le \sum_{k} \norm{M_k^\dagger\ket{\phi}\otimes \ket{v_{k,\alpha}}} \le \frac{1}{\alpha}\sum_{k} c_k\norm{M_k^\dagger\ket{\phi}} \le \frac{C_H}{\alpha}\norm{(1+\hN)^{d/2}\ket{\phi}}
    \end{equation}
    for all $\ket{\phi}\in\Dfin$ and constant $C_H$ depending only on the coefficients of $M_k$.
    Since $\Dfin$ is a core for $(1+\hN)^{d/2}$, we can extend the map
    \begin{equation}\label{eq:Delta}
        \ket{\phi} \longmapsto \bigl((H\otimes\II) - \wtH\bigr)\ket{\phi,\alpha}
    \end{equation}
    uniquely by continuity in the norm $\norm{(1+\hN)^{d/2}\ket{\phi}}$ to a linear map
$
\Delta : \calD((1+\hN)^{d/2})\to L^2(\RR^{m+1}),
$
such that
\begin{equation}\label{eq:Deltaphi-bound}
\norm{\Delta\ket{\phi}} \le \frac{C_H}{\alpha}\norm{(1+\hN)^{d/2}\ket{\phi}}.
\end{equation}
    So far, we have introduced the machinery to justify the second inequality of \cref{eq:deltat}.
    It remains to justify the first inequality, for which we also need $\Delta$.

Since $\ket{\psi}\in\calD(H)$ and $H$ is self-adjoint, $\ket{\psi(s)}$ is continuously differentiable \cite[Lemma II.1.3(ii)]{EN2000} with
\begin{equation}
\frac{d}{ds}\ket{\psi(s)} = -iH\ket{\psi(s)},
\end{equation}
and $\Psi(s) \coloneq\ket{\psi(s),\alpha}$ has $\Psi'(s) = -i(H\otimes\II)\Psi(s)$.
Now we would like to use Duhamel's principle with the variation-of-constants formula, but the continuous differentiability of the difference $(e^{-itH}-e^{-it\wtH})\ket{\psi,\alpha}$ is a priori not clear.\footnote{Also, we would like to avoid formally introducing the Bochner integral here to integrate over the infinite-dimensional Hilbert space.}
Instead, we will argue sector-wise.
Let $\Pi_n$ denote the projector on $\calH_n$ and define
\begin{equation}
F_n(s) \coloneqq e^{is\wtH_n}\Pi_n \Psi(s) \in\calH_n.
\end{equation}
Since $\Psi$ is continuously differentiable and $\wtH_n$ is bounded on $\calH_n$, $F_n$ is continuously differentiable with
\begin{subequations}
\begin{align}
F_n'(s) &= i\wtH_n e^{is\wtH_n}\Pi_n \Psi(s) + e^{is\wtH_n}\Pi_n\Psi'(s)\label{eq:Fn':a}\\
&= i\wtH_n e^{is\wtH_n}\Pi_n \Psi(s) -i e^{is\wtH_n}\Pi_n(H\otimes\II)\Psi(s)\label{eq:Fn':b}\\
&= i e^{is\wtH_n}\left(\wtH_n\Pi_n\Psi(s) - \Pi_n(H\otimes\II)\Psi(s)\right)\label{eq:Fn':c}\\
&= -i e^{is\wtH_n}\Pi_n \Delta \ket{\psi(s)}\label{eq:Fn':d}.
\end{align}
\end{subequations}
Here \cref{eq:Fn':a} is the product rule and \cref{eq:Fn':c} uses that $\wtH_n$ commutes with its exponential.
To justify \cref{eq:Fn':d}, note that $(H\otimes\II)\Pi_n$ is bounded, since $\Pi_n$ has finite-dimensional range contained in $\calD(H\otimes\II)$.
For $\ket{\phi}\in\Dfin$, \cref{eq:Delta} and self-adjointness give
\be
\Pi_n\Delta\ket{\phi}
=\left(\bigl((H\otimes\II)\Pi_n\bigr)^\dag - \wtH_n\Pi_n\right)\ket{\phi,\alpha}.
\ee
The operators on the right are bounded. Approximating $\ket{\phi}\in\calD((1+\hN)^{d/2})$ by finite-particle cutoffs in the $(1+\hN)^{d/2}$-weighted norm therefore extends this identity to all such $\ket{\phi}$.
If also $\ket{\phi}\in\calD(H)$, then $\bigl((H\otimes\II)\Pi_n\bigr)^\dag\ket{\phi,\alpha}=\Pi_n(H\otimes\II)\ket{\phi,\alpha}$.
Applying this to $\ket{\phi}=\ket{\psi(s)}$ proves \cref{eq:Fn':d}, without requiring the cutoffs to converge in the graph norm of $H$.
So
\begin{equation}
\sum_{n=0}^\infty \norm{F_n'(s)}^2 = \sum_{n=0}^\infty \norm{\Pi_n\Delta\ket{\psi(s)}}^2 = \norm{\Delta\ket{\psi(s)}}^2 \le \frac{C_H^2M^2}{\alpha^2},
\end{equation}
where the last step uses \cref{eq:Deltaphi-bound} and \cref{eq:supsE}.
Moreover, each $\Pi_{\le K}\Delta\ket{\psi(s)}$ is continuous in $s$ by \cref{eq:Fn':d}, so $\norm{\Delta\ket{\psi(s)}}$ is measurable as the pointwise limit of $\norm{\Pi_{\le K}\Delta\ket{\psi(s)}}$.
The displayed bound consequently makes its scalar integral finite.

Let $\Pi_{\le K} = \sum_{n=0}^K \Pi_n$, and $F(s) \coloneq \bigoplus_{n=0}^\infty F_n(s)=e^{is\wtH}\Psi(s)$. Then
\begin{equation}
\begin{aligned}
\delta(t) &= \norm{\Psi(t) - e^{-it\wtH}\Psi(0)} = \norm{e^{-it\wtH}F(t) - e^{-i t\wtH}F(0)} =
\norm{F(t) - F(0)}\\
&= \lim_{K\to\infty} \norm{\Pi_{\le K}\bigl(F(t) - F(0)\bigr)}
\end{aligned}
    \end{equation}
    as $\Pi_{\le K}\to\II$ strongly.
    Then, by the fundamental theorem of calculus,
    \begin{equation}
        \begin{aligned}
            \Pi_{\le K} \bigl(F(t) - F(0)\bigr) &= \sum_{n=0}^K \bigl(F_n(t) - F_n(0)\bigr) = -i\sum_{n=0}^K \int_{0}^t e^{is\wtH_n}\Pi_n\Delta\ket{\psi(s)}\,ds \\
            &= -i\int_{0}^t\sum_{n=0}^K e^{is\wtH_n}\Pi_n \Delta\ket{\psi(s)}\,ds = -i\int_{0}^t \Pi_{\le K} e^{is\wtH} \Delta\ket{\psi(s)}\,ds.
        \end{aligned}
    \end{equation}
    All vector integrals here are finite-dimensional.
    Taking norms and then the limit gives
    \begin{equation}
        \delta(t) = \lim_{K\to\infty}\norm{\int_{0}^t \Pi_{\le K} e^{is\wtH} \Delta\ket{\psi(s)}\,ds} \le \int_{0}^t\norm{\Delta \ket{\psi(s)}}\,ds \le \frac{C_H M t}{\alpha},
    \end{equation}
    which is \cref{eq:deltat}.
\end{proof}

Explicit examples for common active gates can be found in \cref{app:examples}.

\subsection{Robust parametric approximation}\label{sec:robus-parametric}

The simulation scheme above treats the ancillary coherent state as an energy resource that mediates effective active dynamics through an energy-preserving Hamiltonian.
For physical implementations, however, this resource state will not be prepared perfectly and may contain thermal or other preparation noise. We therefore show that the construction is stable under such imperfections: replacing the ideal coherent resource by a noisy displaced state only perturbs the simulated evolution by a controlled error. The overall message of this section is that perturbing the input or changing the choice of inputs does not affect the computational model. Informally, we prove that the coherent ancilla states can be replaced by any low-energy states that are displaced. The remark below provides some intuition on this.

\begin{remark}[Displaced Fock states as approximate resource states]
    We define the $\alpha$-displaced $n$-photon state
    \begin{equation}
        \ket{\alpha,n}:=\hat D(\alpha)|n\rangle,
    \end{equation}
    where $\hat D(\alpha)$ is the displacement operator on the resource mode and
    take $\alpha>0$.
    So, we have
    \begin{equation}
        \hat b \ket{\alpha,n} =\hat{b} \hat{D}(\alpha)\ket{n}=\hat{D}(\alpha)(\hat b+\alpha)\ket n=
        \alpha \ket{\alpha,n}+\sqrt n \ket{\alpha,n-1},
    \end{equation}
    \begin{equation}
        \hat b^\dagger \ket{\alpha,n}=\hat{b}^\dagger \hat{D}(\alpha) \ket{n} =\hat D(\alpha)(\hat b^\dagger +\alpha)\ket n=
        \alpha \ket{\alpha,n}+\sqrt{n+1}\ket{\alpha,n+1},
    \end{equation}
    where we used $\II=\hat{D}(\alpha) \hat{D}^{\dagger} (\alpha)$ and $\hat{D}^\dagger (\alpha)\hat{b} \hat{D}(\alpha)=\hat{b}+\alpha$. Thus, displaced Fock states with $n\ll |\alpha|^2$ are approximate resource states. It means that they are not exactly the resource coherent state, but they can be used in its place with a small error when the added Fock excitation $n$ is much smaller than the coherent state energy.
\end{remark}

Let us denote the computational system by $S$ and the resource ancilla by $A$.
The system $S$ consists of the original $m$ bosonic modes, while $A$ is the
single bosonic mode used in the construction of $\wtH$.
We also allow an optional reference system $R$: this represents any external
system initially entangled with $S$, and neither Hamiltonian acts on it.
Tensor factors below are identified by their subsystem labels and, where
necessary, canonically reordered into $S\otimes A\otimes R$.
We use $\norm{\cdot}_1$ for the trace norm. Consider the same setup as introduced in \cref{eq:Halg} and \cref{eq:preserving_H}.
Our goal is to show that one can replace the coherent state $\ket\alpha$ with the displaced state $\sigma_\alpha=D(\alpha) \sigma D(\alpha)^\dag$, if $\alpha$ is large enough and $\sigma$ has low energy. In other words, for any (possibly entangled) state $\rho_{SR}$, we have
$(e^{-it \wtH}\otimes\II)(\rho_{SR}\otimes \sigma_\alpha)(e^{it\wtH}\otimes\II) \approx (e^{-itH}\rho_{SR}e^{iHt})\otimes \sigma_\alpha$. We will use $\rho_{SR}(t)=(e^{-itH}\otimes\II_R)\rho_{SR}(e^{itH}\otimes\II_R)$ for the evolution that we want to simulate, and $\widetilde\rho_{SAR}(t)
=((e^{-it\wtH})_{SA}\otimes\II_R)
(\rho_{SR}\otimes\sigma_\alpha)((e^{it\wtH})_{SA}\otimes\II_R)$ for the simulated evolution. The ideal joint output is $\rho_{SAR}^{\mathrm{id}}(t)=\rho_{SR}(t)\otimes\sigma_\alpha$, with the tensor factors reordered as above.

\begin{theorem}[Robustness to low-energy resource noise]
\label{thm:robust-parametric-approximation}
Assume $\bar n=\operatorname{Tr}(\sigma\N_A)<\infty$ and
\begin{equation}\label{eq:robust-system-moment}
\sup_{s\in[0,t]}
\left(\operatorname{Tr}\bigl[\rho_S(s)(1+\hN_S)^d\bigr]\right)^{1/2}
\le M.
\end{equation}
Then there is a constant $C_H$, depending only on the fixed active monomials
of $\Halg$, such that
\begin{equation}\label{eq:finite-mean-resource-bound}
\norm{\widetilde\rho_{SAR}(t)-\rho_{SAR}^{\mathrm{id}}(t)}_1
\le C_H(1+tM)\frac{\sqrt{\bar n+1}}{\alpha}.
\end{equation}
\end{theorem}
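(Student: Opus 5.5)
We first reduce to pure states and then rerun the proof of \cref{thm:passive-compilation} with a non-vacuum ancilla. Let $\ket{\Psi}_{SRR'}$ purify $\rho_{SR}$, and let $\ket{\varphi}_{AR''}$ purify $\sigma$. Then $\hD(\alpha)\ket{\varphi}$ purifies $\sigma_\alpha$. Neither Hamiltonian acts on $RR'R''$, and the trace norm is contractive under partial traces. It therefore suffices to bound
\begin{equation*}
\delta(t)\coloneq\norm{\bigl(e^{-itH}-e^{-it\wtH}\bigr)\ket{\Psi}\hD(\alpha)\ket{\varphi}}
\end{equation*}
and use $\norm{\ket{u}\!\bra{u}-\ket{v}\!\bra{v}}_1\le 2\norm{\ket u-\ket v}$. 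Conjugating by $\hD(\alpha)$, exactly as in \cref{eq:DwtHD} and \cref{eq:DHD}, we compare $H\otimes\II$ with $\hD(\alpha)^\dagger\wtH\hD(\alpha)$ acting on $\ket{\Psi}\ket{\varphi}$. The difference is
\begin{equation*}
\Delta=\sum_k\Bigl(M_k\otimes P_k(\b/\alpha)+M_k^\dagger\otimes P_k(\b^\dagger/\alpha)\Bigr),\qquad P_k(x)=(1+x)^{n_k}-1 .
\end{equation*}
This has no constant term, but now the annihilation parts do not vanish, since $\varphi$ is not the vacuum.

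The main obstacle is that only the mean $\bar n$ of $\sigma$ is controlled. Terms such as $\b^{j}/\alpha^{j}$ with $j\ge2$ would naively require higher moments of $\sigma$. We plan to remove this with a cutoff at $K\coloneq\lfloor\alpha^2\rfloor$ on the ancilla. Let $\ket{\varphi_K}$ be the normalized projection of $\ket\varphi$ onto ancilla photon number $\le K$. By Markov's inequality and the gentle measurement lemma (as in \cref{lem:cutoff-to-ec-diamond}), $\norm{\ket\varphi-\ket{\varphi_K}}\le 2\sqrt{\bar n/(K+1)}\lesssim\sqrt{\bar n}/\alpha$. Both evolutions are unitary, so replacing $\varphi$ by $\varphi_K$ at time $0$ changes both output states by at most this amount. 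This contributes the $t$-independent ``$1$'' term in $(1+tM)$.

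On the truncated state we use $\norm{\b\Pi_{\le K}}\le\sqrt K\le\alpha$. This gives, for every $j\ge1$,
\begin{equation*}
\alpha^{-j}\norm{\b^j\varphi_K}\le\alpha^{-1}\norm{\b\varphi_K}\le\sqrt{\bar n'}/\alpha,
\end{equation*}
where $\bar n'\le(1+o(1))\bar n$ is the mean photon number of $\varphi_K$. Similarly, $\alpha^{-j}\norm{\b^{\dagger j}\varphi_K}\le c_j\sqrt{\bar n'+1}/\alpha$, using $(K+j)^{(j-1)/2}\le c_j\alpha^{j-1}$. Consequently $\norm{P_k(\b/\alpha)\varphi_K}$ and $\norm{P_k(\b^\dagger/\alpha)\varphi_K}$ are both $\le c_k\sqrt{\bar n+1}/\alpha$.

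In the displaced frame the ideal evolution is $\ket{\Psi(s)}\ket{\varphi_K}$, because $H$ does not touch the ancilla. By the tensor-product structure and \cref{eq:Mkphi-bound},
\begin{equation*}
\norm{\Delta\ket{\Psi(s)}\ket{\varphi_K}}\le\frac{C_H\sqrt{\bar n+1}}{\alpha}\,\norm{(1+\hN_S)^{d/2}\Psi(s)} .
\end{equation*}
Since $\rho_S(s)$ is the $S$-marginal of $\ket{\Psi(s)}$, the last norm equals $\Tr[\rho_S(s)(1+\hN_S)^d]^{1/2}\le M$ by \cref{eq:robust-system-moment}.

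It remains to integrate this pointwise bound. We will run the sector-wise Duhamel argument from the proof of \cref{thm:passive-compilation} verbatim. Since $\wtH$ commutes with $\hN_{SA}$, we project onto total-photon sectors of $SA$ and define $F_n(s)$ with the displaced $\wtH$. The bounded reference systems ride along. We extend $\Delta$ by continuity from $\Dfin$ in the $(1+\hN_S)^{d/2}$-weighted norm, with the ancilla vector $\varphi_K$ fixed and of finite support. This makes the same domain arguments apply. Integrating gives an error $\le C_H tM\sqrt{\bar n+1}/\alpha$. Adding the cutoff error and doubling for the trace norm yields \cref{eq:finite-mean-resource-bound}.
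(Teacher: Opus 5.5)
Your proposal is correct and follows essentially the paper's route: purify, pass to the displaced frame, truncate the resource at $K=\lfloor\alpha^2\rfloor$ so that only $\bar n$ enters the ladder estimates, bound the residual by $C_H M\sqrt{\bar n+1}/\alpha$ using the tensor-product structure, integrate in time, and pay $O(\sqrt{\bar n}/\alpha)$ for the truncation, which produces the ``$1$'' in $(1+tM)$. One small fix is needed in the integration step. Since $\hD(\alpha)^\dagger\wtH\hD(\alpha)$ does not commute with $\hN_{SA}$, the sector-wise Duhamel argument must use the photon-number sectors of the undisplaced, number-preserving $\wtH$, with ancilla vector $\hD(\alpha)\ket{\varphi_K}$. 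Also, the theorem does not assume $\ket{\Psi}\in\calD(H\otimes\II)$, so you should differentiate only the projected trajectory, either via the bounded operator $((H\otimes\II)\Pi_n)^\dagger$ or via the paper's test-vector argument, rather than copying the proof of \cref{thm:passive-compilation} verbatim.
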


\begin{corollary}[Thermal resource noise]
\label{cor:thermal-resource-noise}
Under the assumptions of \cref{thm:robust-parametric-approximation}, suppose
$\sigma$ is a thermal state of mean photon number $\bar n$ and
$\alpha\ge\sqrt{\bar n+1}$.
Then the bound improves to
\begin{equation}\label{eq:thermal-resource-bound}
\norm{\widetilde\rho_{SAR}(t)-\rho_{SAR}^{\mathrm{id}}(t)}_1
    \le C_HtM\frac{\sqrt{\bar n+1}}{\alpha},
\end{equation}
where $C_H$ again depends only on the fixed active monomials of $\Halg$.
\end{corollary}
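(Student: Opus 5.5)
The plan is to avoid going through the general bound of \cref{thm:robust-parametric-approximation}, whose additive ``$1$'' term comes from the ancilla not being exactly an eigenvector of $\b$. Instead, I would use the fact that a displaced thermal state is a Gaussian mixture of coherent states. For a thermal state this gives
\begin{equation*}
\sigma_\alpha=\hD(\alpha)\rho_{\mathrm{th}}(\bar n)\hD(\alpha)^\dagger=\int_{\CC} P_{\bar n}(\beta)\,\ket{\alpha+\beta}\!\bra{\alpha+\beta}\,d^2\beta,
\qquad P_{\bar n}(\beta)=\frac{1}{\pi\bar n}e^{-|\beta|^2/\bar n},
\end{equation*}
which is the Glauber $P$-representation, a genuine probability density. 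Since both the ideal map and the simulated map are linear in the ancilla state, the trace norm is convex over this mixture. It therefore suffices to bound, for each fixed $\gamma=\alpha+\beta$, the error with the pure coherent ancilla $\ket{\gamma}$, and then average over $\beta$.

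\textbf{Step 1: fixed coherent ancilla.} For fixed $\gamma$, I would purify $\rho_{SR}$ to $\ket{\Psi}_{SRR'}$. Neither Hamiltonian acts on $RR'$, so the purified evolution $\ket{\Psi(s)}$ satisfies $\norm{(1+\hN_S)^{d/2}\ket{\Psi(s)}}^2=\Tr[\rho_S(s)(1+\hN_S)^d]\le M^2$ by \cref{eq:robust-system-moment}. I would then rerun the proof of \cref{thm:passive-compilation} with $\hD(\alpha)$ replaced by $\hD(\gamma)$. Here $\hD(\gamma)^\dagger \b\,\hD(\gamma)=\b+\gamma$, so the analogue of \cref{eq:diffrence_H} becomes
\begin{equation*}
\hD(\gamma)^\dagger\bigl(H\otimes\II-\wtH\bigr)\hD(\gamma)\ket{\phi,0}
=-\sum_k\Bigl[M_k\ket{\phi}\otimes\Bigl(\bigl(\tfrac{\gamma}{\alpha}\bigr)^{n_k}-1\Bigr)\ket 0+M_k^\dagger\ket{\phi}\otimes\Bigl(\bigl(\tfrac{\bar\gamma}{\alpha}+\tfrac{\b^\dagger}{\alpha}\bigr)^{n_k}-1\Bigr)\ket 0\Bigr].
\end{equation*}
Writing $\gamma/\alpha=1+\beta/\alpha$ and expanding binomially, every term carries at least one factor of $\beta/\alpha$ or $\b^\dagger/\alpha$. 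Hence, for $\alpha\ge1$, the ancilla vectors have norm at most $c\,\alpha^{-1}(1+|\beta|)\bigl(1+|\beta|/\alpha\bigr)^{n-1}$, with $n=\max_k n_k$. Combining this with \cref{eq:Mkphi-bound} (applied to both $M_k$ and $M_k^\dagger$) and the sector-wise Duhamel argument of \cref{thm:passive-compilation} gives
\begin{equation*}
\bigl\|(e^{-it H}-e^{-it\wtH})\ket{\Psi,\gamma}\bigr\|\le \frac{C tM}{\alpha}\,(1+|\beta|)\Bigl(1+\frac{|\beta|}{\alpha}\Bigr)^{n-1}.
\end{equation*}
Doubling this bounds the trace-norm error of the pure states, and hence of the reduced states on $SAR$.

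\textbf{Step 2: averaging.} Integrating the Step 1 bound against $P_{\bar n}$ requires Gaussian moment bounds of the form $\mathbb E|\beta|^j\le c_j\bar n^{j/2}$. The hypothesis $\alpha\ge\sqrt{\bar n+1}$ gives $\mathbb E(|\beta|/\alpha)^j\le c_j$, and Cauchy--Schwarz then gives $\mathbb E\bigl[(1+|\beta|)(1+|\beta|/\alpha)^{n-1}\bigr]\le c'\sqrt{\bar n+1}$. Since $n$ is fixed by $\Halg$, the constants get absorbed into $C_H$, which yields \cref{eq:thermal-resource-bound}. No additive term appears because every summand in the Duhamel integrand already vanishes as $t\to0$.

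\textbf{Main obstacle.} The expected hard part is making Step 1 rigorous for complex $\gamma$ with $|\beta|$ possibly much larger than $\alpha$. This needs the domain and extension arguments of \cref{thm:passive-compilation} (continuity of $\Delta$ and the sector-wise derivative) to go through with $\hD(\gamma)$, and I expect them to transfer verbatim. It also needs measurability in $\beta$ so that the operator-valued integral and the convexity step are justified. For the latter, I would truncate to $|\beta|\le B$, where the map is norm-continuous in $\beta$, and then let $B\to\infty$ using dominated convergence together with the Gaussian tail of $P_{\bar n}$.
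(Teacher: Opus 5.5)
Your proof is correct, but it takes a different route from the paper.

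\textbf{The paper's route.} The paper never decomposes the ancilla. It purifies $\sigma$ to $\ket{\xi}$ and uses the exact thermal factorial moments $\norm{\b^j\ket\xi}^2=j!\,\bar n^j$ and $\norm{(\b^\dagger)^j\ket\xi}^2=j!\,(\bar n+1)^j$. Since all of these are finite, the displaced-frame residual \cref{eq:resource-polynomial-residual} and the Duhamel comparison from the proof of \cref{thm:robust-parametric-approximation} apply to $\ket\xi$ directly. No resource cutoff $Q_K$ is needed, and it was that cutoff which produced the additive $1$. Each residual term is then at most $\sqrt{j!}\,\varepsilon^j\le\sqrt{j!}\,\varepsilon$, with $\varepsilon=\sqrt{\bar n+1}/\alpha\le 1$.

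\textbf{Your route.} You use positivity of the Glauber $P$-function to write $\sigma_\alpha$ as a Gaussian mixture of coherent states $\ket{\alpha+\beta}$. You then rerun the product-ancilla argument of \cref{thm:passive-compilation} with the mismatched complex amplitude $\gamma=\alpha+\beta$, and average using $\alpha\ge\sqrt{\bar n+1}$. Your expansions of $(1+\beta/\alpha)^{n_k}-1$ and of $\bigl((1+\bar\beta/\alpha)+\b^\dagger/\alpha\bigr)^{n_k}-1$ on $\ket 0$ are correct. So is the resulting per-$\gamma$ factor $c\,\alpha^{-1}(1+|\beta|)(1+|\beta|/\alpha)^{n-1}$.

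\textbf{What each buys.}
\begin{itemize}
\item The paper's argument is shorter and involves no integral over a continuous family of states.
\item It uses only moment information, so it extends to any resource whose factorial moments scale like $(\bar n+1)^{j/2}$. That includes states without a nonnegative $P$-function, such as squeezed or displaced-Fock noise. Your decomposition cannot reach those.
\item In return, your argument never purifies the ancilla.
\item It also makes the physics transparent: thermal noise acts as a random amplitude jitter $\beta$. The error is the intrinsic $1/\alpha$ parametric error plus the expected relative jitter $\mathbb{E}|\beta|/\alpha\sim\sqrt{\bar n}/\alpha$.
\end{itemize}

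\textbf{Two minor remarks.} First, the truncation to $|\beta|\le B$ is unnecessary. The map $\gamma\mapsto\ket\gamma$ is norm-continuous and both unitaries are fixed, so the integrand is already continuous in $\beta$ and the trace-class integral is well defined. Second, the case $\bar n=0$, where $P_{\bar n}$ degenerates to a point mass, should be split off; there it is simply \cref{thm:passive-compilation}.
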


\begin{proof}[Proof of \cref{thm:robust-parametric-approximation}]
Write $W=(1+\hN_S)^{d/2}$.
Throughout the proof, constants denoted by $C_H$ may be enlarged and depend
only on the fixed active monomials. Similar to the proof of \cref{thm:passive-compilation},
we have
\begin{equation}\label{eq:robust-monomial-bound}
\sum_k\left(\norm{M_k\ket\phi}+\norm{M_k^\dagger\ket\phi}\right)
\le C_H\norm{W\ket\phi},\qquad \ket\phi\in\calD(W).
\end{equation}
The monomials here are extended from $\Dfin$ by continuity in the
$W$-weighted norm. We can purify $\rho_{SR}$ to $\ket\psi$ on $SRF_S$ and $\sigma$ to $\ket\xi$
on $AF_A$.
Set $\ket{\psi(s)}=(e^{-isH}\otimes\II_{RF_S})\ket\psi$.
Then \cref{eq:robust-system-moment} gives
$\norm{(W\otimes\II_{RF_S})\ket{\psi(s)}}\le M$.
Now, for an integer $K\ge0$, define
\begin{equation}\label{eq:resource-cutoff-tail}
    Q_K=\sum_{n=0}^{K}\ket{n}\bra{n}_A,\qquad
    p_K=\operatorname{Tr}\bigl[\sigma(\II_A-Q_K)\bigr],\qquad
    \ket{\xi_K}=(Q_K\otimes\II_{F_A})\ket\xi.
\end{equation}
Thus $p_K$ is the photon-number tail probability of the undisplaced resource,
and $\norm{\ket\xi-\ket{\xi_K}}=\sqrt{p_K}$. Intuitively, the truncation is done as a bound on the first moment does not necessarily bound higher moments (this will become clearer in the rest of the proof).
Note that we do not normalize $\ket{\xi_K}$.
For $j\ge1$, the squared norms of $\b^j\ket{\xi_K}$ and
$(\b^\dagger)^j\ket{\xi_K}$ are, respectively, the truncated falling and
rising factorial moments of $\N_A$.
On $0\le n\le K$, both factorials are at most
\begin{equation}
    (n+1)\cdots(n+j)
    \le j(n+1)(K+j)^{j-1},
\end{equation}
where the falling factorial is understood to be $0$ for $n<j$.
Therefore,
\begin{equation}\label{eq:resource-truncated-ladder-bound}
    \max\left\{
        \norm{\b^j\ket{\xi_K}},
        \norm{(\b^\dagger)^j\ket{\xi_K}}
    \right\}
    \le\sqrt{j(\bar n+1)}(K+j)^{(j-1)/2}.
\end{equation}
Identities on the purifying factors are suppressed here and below. Similar to the proof of \cref{thm:passive-compilation}, we go to the displaced frame, and write the polynomial difference on the Schwartz space
\begin{equation}\label{eq:resource-polynomial-residual}
\begin{aligned}
\hD(\alpha)^\dagger(\wtH-H\otimes\II_A)\hD(\alpha)
=\sum_k\sum_{j=1}^{n_k}\binom{n_k}{j}\alpha^{-j}
\left(M_k\otimes\b^j+M_k^\dagger\otimes(\b^\dagger)^j\right).
\end{aligned}
\end{equation}
The polynomial bounds extend this expression continuously to
$\calD(W\otimes(1+\N_A)^{d/2})$ in the corresponding weighted norm.
Since $\ket{\xi_K}$ has bounded resource photon number, this extension
acts on $\ket{\psi(s)}\otimes\ket{\xi_K}$.
Choose
\begin{equation}
    K=\lfloor\alpha^2\rfloor.
\end{equation}
Then $K+j\le(j+1)\alpha^2$, and
\cref{eq:resource-truncated-ladder-bound} implies
\begin{equation}
    \alpha^{-j}
    \max\left\{
        \norm{\b^j\ket{\xi_K}},
        \norm{(\b^\dagger)^j\ket{\xi_K}}
    \right\}
    \le\sqrt{j}(j+1)^{(j-1)/2}\frac{\sqrt{\bar n+1}}{\alpha}.
\end{equation}
Combining this with \cref{eq:robust-monomial-bound}, and absorbing the
finitely many binomial and degree-dependent factors into $C_H$, gives
\begin{equation}\label{eq:resource-residual-norm}
\norm{(\wtH-H\otimes\II_A)\hD(\alpha)
\bigl(\ket{\psi(s)}\otimes\ket{\xi_K}\bigr)}\le C_HM\frac{\sqrt{\bar n+1}}{\alpha}.
\end{equation}

We highlight that the evolution comparison can be justified by testing against vectors
with finite photon-number support in $SA$, with arbitrary reference
factors. We suppress identities on reference systems. For any vector $v$
with finite photon-number support in $SA$, consider
\begin{align}
    \left\langle v,\,
       e^{is\wtH}\hD(\alpha)
       (\ket{\psi(s)}\otimes\ket{\xi_K})
    \right\rangle.
\end{align}
Since $\wtH$ preserves total photon number, $e^{-is\wtH}v$
has finite photon-number support and belongs to the domains of
both $\wtH$ and $H\otimes\II_A$. We can therefore differentiate
this scalar overlap by letting the Hamiltonians act on the test
vector. The polynomial identity
\cref{eq:resource-polynomial-residual} extends to these pairings
by approximation in the weighted norm used above.
Integrating, applying the residual bound, and taking the supremum
over unit test vectors gives
\begin{equation}\label{eq:resource-vector-comparison}
    \norm{
       \bigl(e^{-it\wtH}-e^{-itH}\otimes\II_A\bigr)
       \hD(\alpha)(\ket\psi\otimes\ket{\xi_K})
    }
    \le C_HtM\frac{\sqrt{\bar n+1}}{\alpha}.
\end{equation}
Here we used the density of finite photon-number test vectors.
\end{proof}

\begin{proof}[Proof of \cref{cor:thermal-resource-noise}]
For a purification $\ket\xi$ of a thermal state of mean photon number
$\bar n$, its factorial moments give
\begin{equation}
    \norm{\b^j\ket\xi}^2=j!\bar n^j,
    \qquad
    \norm{(\b^\dagger)^j\ket\xi}^2=j!(\bar n+1)^j.
\end{equation}
In particular, all resource moments are finite, so the preceding domain
and finite-sector comparison argument applies directly to $\ket\xi$,
without a resource cutoff.
Since $\varepsilon=\sqrt{\bar n+1}/\alpha\le1$, for every $j\ge1$,
\begin{equation}
    \alpha^{-j}\max\left\{
        \norm{\b^j\ket\xi},\norm{(\b^\dagger)^j\ket\xi}
    \right\}
       \le\sqrt{j!}\,\varepsilon^j
       \le\sqrt{j!}\,\varepsilon.
\end{equation}
\cref{eq:robust-monomial-bound} and
\cref{eq:resource-polynomial-residual} therefore bound the residual norm
by $C_HM\varepsilon$.
Integrating over $[0,t]$, converting vector distance to trace norm distance,
and discarding the purifying systems yields
\cref{eq:thermal-resource-bound}.
\end{proof}

\begin{remark}[Why the time dependences differ]
Finite mean photon number alone does not imply an error linear in $t$.
For example, with $H=\hat a^{\dagger2}+\hat a^2$ and vacuum system input, the pure
resource vector proportional to $\sum_{n\ge1}n^{-3/2}\ket n$ has finite
mean photon number, but the joint-output error is not $O(t)$ near zero at
fixed $\alpha$.
Thus the result of \cref{cor:thermal-resource-noise} (linear dependence of error on time) cannot hold for all
finite-mean resource states.
\end{remark}

\subsection{Simulating existing models within \model}
\label{sec:sim_existing_models}

In this section, we show that the model of computation we introduce (\model) can reproduce a variety of quantum computational models, either DV or CV. In particular, we show that these various models can be simulated with polynomial overheads in \model, if we impose a polynomial energy-moment bound, i.e., assuming $M=\poly(\varsigma)$ ($\varsigma$ the input size) in \eqref{eq:supsE}, for all gates we want to simulate. However, we highlight that alternatively, \cref{thm:compile_new} provides
a compilation route under a polynomial mean-energy bound (provided
suitable finite-cutoff approximations of the gates have efficiently
constructible qubit circuit descriptions).

Firstly, as discussed in \cref{sec:complexity}, \model\ with polynomial energy and constant-degree gates is equivalent to $\mathsf{BQP}$. Going to the CV models, we have the cubic+Gaussian computation introduced by Lloyd and Braunstein \cite{lloyd_quantum_1999} and formalized in \cite{CJMMM26}. We highlight that both Gaussians and cubic gates can be simulated within \model, using \cref{thm:passive-compilation}. In particular, some examples of Gaussian gates are discussed in \cref{app:examples}, and the cubic gate can be simulated by writing
\begin{align}
\hat q^3 = \frac{1}{2\sqrt2}\left[ (\hat a^\dag)^3 + 3(\hat a^\dag)^2 \hat a + 3\hat a^\dag \hat a^2 + \hat a^3 + 3(\hat a^\dag + \hat a)\right].
\end{align}
We then replace this Hamiltonian (which generates the cubic unitary $V(t) = \exp(it\hat q^3)$) by
\begin{align}
\widetilde{H}_{\mathrm{cubic}} = \frac{1}{2\sqrt 2}\left[(\hat a^\dag)^3 \hat b^3/\alpha^3 + 3(\hat a^\dag)^2 \hat a \hat b/\alpha + 3\hat a^\dag \hat a^2 \hat b^\dag/\alpha + \hat a^3(\hat b^\dag)^3/\alpha^3 + 3(\hat a^\dag \hat b+ \hat a \hat b^\dag)/\alpha\right].
\end{align}
According to \cref{thm:passive-compilation}, it suffices to choose $\alpha$ large enough, so that $\exp(it\widetilde H)$ acting on $\ket{\psi,\alpha}$ simulates the action of $V(t)$ on $\ket\psi$.

We also have hybrid CV-DV models, studied in \cite{brenner2025factoring} and \cite{liu2026hybrid}. Table III in \cite{liu2026hybrid} lists a comprehensive set of CV gates, which can all be simulated via \cref{thm:passive-compilation} as well. For the DV gates, we can simply employ dual-rail encoding, encoding qubit states $\ket0, \ket1$ into two-mode single-photon states $\ket{10},\ket{01}$. Linear optical gates and self-Kerr gates can realize arbitrary DV gates in this model. To see this, note that we can encode Paulis as $\bar Z = \hat c^\dag \hat c - \hat d^\dag \hat d$, $\bar X = \hat c^\dag \hat d + \hat d^\dag \hat c$, and $\bar Y = -i\hat c^\dag \hat d + i\hat c \hat d^\dag$. For CV-DV operations, we can use this DV encoding together with \cref{thm:passive-compilation}. Table IV in \cite{liu2026hybrid} lists a set of CV-DV interactions. We simulate these gates by using dual-rail encoding to encode the qubit. For instance, take the conditional rotation gate
\begin{align}\label{eq:CR}
\mathrm{CR}(\theta) = \exp(-i\frac{\theta}2 \hat \sigma_z \hat a^\dag \hat a).
\end{align}
Using the dual-rail encoding with modes that have annihilation operators $\hat c,\hat d$, we get that
\begin{align}
\exp(-i\frac{\theta}2 (\hat c^\dag \hat c-\hat d^\dag \hat d) \hat a^\dag \hat a),
\end{align}
exactly simulates the conditional rotation gate in \cref{eq:CR}, and is an energy-preserving operation. Similarly, the Jaynes--Cummings unitary (defined as $\mathrm{JC}(\theta,\phi)=\exp(-i\theta(e^{i\phi}\hat \sigma_- \hat a^\dag + e^{-i\phi}\hat \sigma_+\hat a))$) can be simulated in \model\ via parametric approximation. For a slightly less trivial example, we can consider the conditional displacement gate
\begin{align}\label{eq:cd}
\mathrm{CD}(\beta) = \exp(\hat\sigma_z(\beta \hat a^\dag - \beta^\ast \hat a)).
\end{align}
Using $\hat c,\hat d$ for modes encoding the qubit, we get that $\widetilde{H} = i(\hat c^\dag \hat c-\hat d^\dag \hat d)(\beta \hat a^\dag \hat b-\beta^\ast \hat a \hat b^\dag)/\alpha$ can achieve dynamics close enough to \cref{eq:cd} in unit time, by choosing $\alpha$ large enough. Overall, the gates generated by polynomial Hamiltonians are covered by \cref{thm:passive-compilation} under the assumptions stated there, and we can handle CV-DV interactions via dual-rail encodings. More general gates can instead be
treated using \cref{thm:compile_new} when their finite-cutoff versions
admit efficient qubit circuit descriptions. Our explanation here also trivially covers the computational model introduced in \cite{knill2001scheme}.

Fusion-based quantum computation (FBQC) \cite{bartolucci2023fusion} can also be simulated within \model. Its constant-sized resource states can be prepared using the dual-rail gates described above and the approximate single-photon preparation of \cref{prop:prepare-Fock-states}. Fusion measurements are implemented by linear optics followed by photon counting. To remove the intermediate measurements, we retain the modes that would have been detected and use their photon numbers to control subsequent operations (i.e., we use the principle of deferred measurement). For concreteness, consider an ordinary fusion of two dual-rail qubits. Since its input contains exactly two photons and its interferometer preserves photon number, each detector mode has occupation in $\{0,1,2\}$. As a simple example, suppose that detecting exactly one photon in a mode $d$ should trigger the phase shift $U=e^{-i\theta\hat n_b}$ on another mode $b$, while detecting zero or two photons should trigger no operation. This conditional operation is implemented, without any measurement, by evolving for unit time under
\begin{align}
\widetilde H=\theta\,\hat n_d(2-\hat n_d)\hat n_b.
\end{align}
Indeed, $n(2-n)$ equals one for $n=1$ and zero for $n=0,2$. Thus,
\begin{align}
e^{-i\widetilde H}
\left(
|0\rangle_d|\psi_0\rangle
+|1\rangle_d|\psi_1\rangle
+|2\rangle_d|\psi_2\rangle
\right)
=
|0\rangle_d|\psi_0\rangle
+|1\rangle_dU|\psi_1\rangle
+|2\rangle_d|\psi_2\rangle.
\end{align}
Here the unnormalized states \(|\psi_n\rangle\) describe the remaining modes, including \(b\). Leaving \(d\) untouched thereafter and tracing it out gives
$
\rho_{\mathrm{out}}
=
|\psi_0\rangle\langle\psi_0|
+U|\psi_1\rangle\langle\psi_1|U^\dagger
+|\psi_2\rangle\langle\psi_2|
$, which exactly reproduces photon counting followed by the prescribed feed-forward. The Hamiltonian $\widetilde H$ is energy-preserving and has degree six, so it is allowed in \model. More generally, the polynomials
\begin{align}
P_0(x)=\frac{(x-1)(x-2)}{2},\qquad
P_1(x)=x(2-x),\qquad
P_2(x)=\frac{x(x-1)}{2}
\end{align}
select the three possible occupations and therefore allow arbitrary outcome-dependent energy-preserving gates. Applying this construction to every detector, retaining the outcome modes, and implementing polynomial-time classical processing reversibly on dual-rail workspace removes all intermediate measurements with polynomial overhead. The decoded output bit can then be read by \model’s single final number measurement.

Later in \cref{sec:gkp}, we demonstrate how one can prepare finite-energy approximate GKP states of high quality with a polynomial-size circuit and a coherent state ancilla of polynomial energy in \model.

Finally, we note that we can reproduce fermionic computations~\cite{bravyi2002fermionic} by simulating qubit computations instead via the Jordan--Wigner mapping~\cite{jordan1928paulische}.

\section{Circuit compilation and state synthesis}

Having introduced \model\ and given evidence of its relevance using parametric approximation, we now turn to its algorithmic properties. In \cref{sec:cutoff-sk}, we show that \model\ has universal gate sets, which satisfy an energy-preserving Solovay--Kitaev theorem. \cref{sec:compilation,sec:synthesis} are devoted to applications of this result for circuit compilation and state synthesis, respectively.

\subsection{Efficient Solovay--Kitaev compilation with linear optics and Kerr gates}
\label{sec:cutoff-sk}

This section proves a Solovay--Kitaev theorem for energy-preserving unitaries.
Specifically, we can approximate any energy-preserving $U$ on $m=O(1)$ modes with error $\epsilon$ (according to the cutoff operator seminorm $\norm{\bullet}_{(K)}$) using $\poly(K,\log(\frac 1 \epsilon))$ Kerr and linear optical gates. We extend this result to unitaries that are not energy-preserving in the next section (see \cref{thm:compile_new}).

\subsubsection{Setup and main theorem}

We fix a unitary $U$ on $m\ge1$ modes specified up to total-photon-number cutoff $K\ge 1$ as in \cref{def:cutoff-norm}.
$U$ effectively acts on the Hilbert space
\be
    \calH_{\leq K}^{(m)}
    \coloneqq
    \Span\Biggl\{
        \ket{n_1,\ldots,n_m}\Biggm|
        n_j\in\NN_0,\;
        \sum_{j=1}^m n_j\leq K
    \Biggr\}.
\ee
For classical compilation, assume a classical description of $U|_{\calH_{\leq K}^{(m)}}$ in the Fock basis is supplied to sufficient precision.

Our construction works with either self-Kerr or cross-Kerr gates together with linear optics.
For convenience, we restate the following sets of elementary unitary gates, where $j,k$ range over all available modes, including any ancillas:
\begin{equation}
    \begin{aligned}
        \GLO &\coloneq \Bigl\{e^{i\theta\hat n_j}\Bigm|\theta\in\RR\Bigr\}
        \cup\Bigl\{e^{\theta(\ad_j\a_k-\ad_k\a_j)}\Bigm|\theta\in\RR,\ j<k\Bigr\}\\
        \GKerr &\coloneq \GLO\cup\Bigl\{e^{i\theta\hat n_j^2}\Bigm|\theta\in\RR\Bigr\},\\
        \GcrossKerr &\coloneq \GLO\cup\Bigl\{e^{i\theta\hat n_j\hat n_k}\Bigm|\theta\in\RR,\ j<k\Bigr\}.
    \end{aligned}
\end{equation}
Although our model allows continuously parametrized gates, a finite gate set also suffices:
Appendix~\ref{app:finite-alphabet} shows that each of $\GKerr$ and $\GcrossKerr$
contains a finite subset, independent of cutoff and precision, that retains
the Solovay--Kitaev guarantees of \cref{thm:cutoff-SK}.

Each element of these sets counts as one gate. We distinguish a gate $e^{i\theta H}$ from its generating Hamiltonian $H$.
The construction requires one extra ancilla initialized to the vacuum state.
The gate set $\GcrossKerr$ also assumes $m\ge2$ so that at least $3$ modes are available in total.
We will work in the Hilbert space $\calH^{(m+1)}_{\le K}$ which has dimension
\be
    D\coloneqq\dim\calH_{\leq K}^{(m+1)}=\binom{m+1+K}{K}.
    \label{eq:D}
\ee

\begin{theorem}[Energy-preserving Solovay--Kitaev theorem]\label{thm:cutoff-SK}
Every energy-preserving unitary $U$ can be approximated by a circuit
$C_U$ satisfying
\be
    \max_{\substack{\ket{\psi}\in \calH_{\le K}^{(m)}\\\norm{\ket\psi}=1}}\norm{C_U\ket{\psi}\ket{0} -e^{-i\gamma}(U\ket{\psi})\otimes \ket{0}}
    \leq\epsilon
\ee
for $U\ket{0^m} = e^{i\gamma}\ket{0^m}$, using
\be
    O\!\left(
        m^6K^7D^{14}
        \log^{c_{\rm SK}}\frac{1}{\epsilon}
    \right)
\ee
gates in $\GKerr$ or\/ $\GcrossKerr$, where $0<\epsilon<1/2$ and
$c_{\rm SK}=\log 5/\log(3/2)<4$.
$C_U$ can be constructed deterministically in $\poly(m,K,D,\log(1/\epsilon))$
bit operations.
\end{theorem}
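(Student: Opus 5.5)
Since $U$ commutes with $\hN$, it decomposes as $\bigoplus_{n\le K} U_n$ with $U_n\in U(\calH_n^{(m)})$. With one vacuum ancilla appended, the target is the block-diagonal unitary $\bigoplus_n (U_n\otimes \ket{0}\!\bra{0})$, extended to $\calH^{(m+1)}_{\le K}$ by some completion acting on the states where the ancilla is occupied. The vacuum sector is one-dimensional and fixed up to the global phase $e^{i\gamma}$, which explains the $e^{-i\gamma}$ in the statement. The circuits generated by $\GKerr$ or $\GcrossKerr$ all lie in $G:=\prod_{n\le K}U(\calH_n^{(m+1)})$, which acts on a space of total dimension $D$. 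The plan has three steps: (i) exact decomposition of the target into a polynomial number of "elementary" energy-preserving unitaries, (ii) proof that the gate set is dense (as a Lie algebra) in a large enough subgroup containing these elementary pieces, with explicitly bounded word length for generating a basis of the Lie algebra, and (iii) application of a standard Solovay--Kitaev argument (Dawson--Nielsen, giving $c_{\rm SK}=\log 5/\log(3/2)$) to each elementary piece.

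\textbf{Step (i), exact decomposition.} I would not compile the whole block unitary directly, because its determinant phases per sector are not all reachable. Instead, fixing the phase by $e^{-i\gamma}$, I would write the target as a product of $O(D^2)$ two-level (Givens) rotations, each acting inside a single sector $\calH_n^{(m+1)}$ on two Fock states $\ket{\bm n},\ket{\bm n'}$ of equal total photon number, together with sector-wise diagonal phases. Per-sector determinant freedom comes from choosing how to complete $U_n$ on the ancilla-occupied subspace; this makes each completed block lie in $SU(\calH_n)$, or at least lets the phase be absorbed into $e^{i\theta\hN}$-type gates. To reach an arbitrary pair $\bm n,\bm n'$, I would route through the ancilla mode. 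Conjugating by beam splitters and Kerr-generated permutation-like unitaries reduces each Givens rotation to a rotation between two "standard" Fock states differing by moving photons among at most three modes. This is the reason for the extra ancilla, and for the requirement $m\ge2$ in the cross-Kerr case.

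\textbf{Step (ii), the Lie-algebraic core and the main obstacle.} I need to show that the Lie algebra generated by $i\hat n_j$, $i(\hat a^\dagger_j\hat a_k+\mathrm{h.c.})$, the anti-Hermitian beam-splitter generator, and $i\hat n_j^2$ (respectively $i\hat n_j\hat n_k$), restricted to $\calH^{(m+1)}_{\le K}$, contains the generators of every Givens rotation from step (i). It also needs to contain enough diagonal and sector-traceless elements. The first concrete calculation would be nested commutators such as $[\hat n_j^2,\hat a_j^\dagger\hat a_k]=(2\hat n_j+1)\hat a_j^\dagger\hat a_k$ and their iterates, giving $p(\hat n_j)\hat a_j^\dagger\hat a_k$ for polynomials $p$ of degree up to $K$. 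Using Vandermonde/Lagrange interpolation on $\{0,\dots,K\}$, one can then isolate projectors $\bm1_{\{r\}}(\hat n_j)\,\hat a_j^\dagger\hat a_k\,\bm1_{\{s\}}(\hat n_k)$, i.e. single transitions between Fock states. This is the main obstacle, for two reasons. First, the interpolation coefficients can be exponentially large in $K$, which would destroy the polynomial bound. Second, commutator-based synthesis yields group elements only approximately, and converting Lie-algebra elements into group words costs more levels. To avoid both problems, I would aim for \emph{exact} constructions: a controlled phase $e^{i\theta \bm1_{\{r\}}(\hat n_j)}$ can be written as a product of $O(K)$ Kerr and phase gates by diagonal Fourier analysis over $\hat n_j\in\{0..K\}$ modulo a period. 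More precisely, $e^{i\theta f(\hat n)}$ for arbitrary $f$ needs $K+1$ rotations $e^{i c_\ell\hat n^2}e^{i d_\ell \hat n}$ conjugated by fixed beam splitters, which is where the polynomial factors $K^7D^{14}$ would come from. Conjugating a photon-number-dependent phase by a $50{:}50$ beam splitter gives an exact two-level rotation up to fixable spectator terms. This yields each elementary piece with $\poly(m,K,D)$ exact gates, and each such piece forms a two-dimensional $SU(2)$-like family that is exactly reachable.

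\textbf{Step (iii), Solovay--Kitaev.} Each elementary unitary from step (i) is a rotation in a fixed $SU(2)$ (or $U(1)$) subgroup whose generators are exactly implementable by step (ii) with angle as a continuous parameter. It can therefore be realized to error $\epsilon/O(D^2)$. With continuous parameters this is even exact. In the finite-alphabet case, Dawson--Nielsen Solovay--Kitaev inside each $SU(2)$ family costs $\log^{c_{\rm SK}}(D^2/\epsilon)$ gates. The errors of the $O(D^2)$ pieces add by the triangle inequality in the cutoff seminorm. Multiplying the piece count $O(D^2)$ by the per-piece cost $\poly(m,K,D)$ from step (ii), and absorbing $\log D$ into the polynomial prefactor, gives the stated bound. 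The classical description is produced by Givens QR in $\poly(D)$ time plus the explicit interpolation and Solovay--Kitaev routines, each requiring $\poly(m,K,D,\log(1/\epsilon))$ bit operations.
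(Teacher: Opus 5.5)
Your use of the vacuum ancilla to pad sector determinants is the right idea; the paper uses it to reduce to determinant-one blocks. However, Step (ii) has a genuine gap. The claimed \emph{exact} $\poly(m,K,D)$-size implementation of two-level Fock rotations does not follow from the mechanisms you describe:
\begin{itemize}
\item Kerr and phase gates on one mode commute, so any product of them is $e^{i(a\hat n_j+b\hat n_j^2)}$. For generic $\theta$ and $K\ge3$ this is never $e^{i\theta\bm 1_{\{r\}}(\hat n_j)}$, even up to a global phase.
\item Conjugating Kerr gates by beam splitters gives $e^{icN_\phi^2}$, with $N_\phi$ the number operator of a rotated mode. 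These are not diagonal in the original Fock basis and do not commute. Combining them yields a diagonal $f(\hat n_j)$ only at the level of Hamiltonians, as in the paper's identity expressing $\hat n_j\hat n_k$ through $\sum_r N_{r\pi/2}^2$. Passing to unitaries then requires Trotterization, i.e.\ approximation.
\item $e^{i\theta f(\hat n_j)}$ conjugated by a $50{:}50$ beam splitter is $e^{i\theta f(\hat c^\dagger\hat c)}$. On each two-mode sector this is diagonal in the rotated basis and mixes all Fock states of the sector, so it is not a two-level rotation up to ``fixable spectators''.
\end{itemize}
A sanity check: with continuous parameters, an exact Step (ii) would give an $\epsilon$-independent gate count, strictly stronger than the theorem. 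Your routing in Step (i) through ``permutation-like'' unitaries relies on the same unavailable exact constructions. The paper avoids routing by Gray-ordering the Fock states of each sector, so that Givens QR only produces rotations between adjacent states. Your fallback of absorbing sector phases into $e^{i\theta\hN}$-type gates also cannot work, because the sector determinants of any Kerr/linear-optics circuit depend on only two real parameters.

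Once the elementary rotations are only approximate, your Step (iii) breaks. The paper obtains them by isolating $\ketbra{\bm u}{\bm v}$ from $\hat a_j^\dagger\hat a_k$ via the Fourier sum $\frac1q\sum_r e^{-2\pi ir\omega/q}V^r\hat a_j^\dagger\hat a_kV^{-r}$. Here $V=e^{2\pi iH/q}$, and $H$ is quadratic in the number operators with all transition energies distinct modulo a prime $q=O(D^2)$. It then Trotterizes both $V^r$ and the sum, at cost $O(m^6K^7D^6\eta^{-2})$ per rotation. These approximating circuits leave the $\mathrm{SU}(2)$ family, so Solovay--Kitaev cannot be run inside it. Composing $O(D^2)$ approximate pieces directly only gives $\poly(1/\epsilon)$ cost.

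The missing idea is to treat the whole construction as a coarse compiler for the block-diagonal group $\mathrm{SU}(D_0,\dots,D_K)$ at fixed accuracy $\varepsilon_0=\min\{10^{-5},1/D\}$, whence $O(D^{12}\varepsilon_0^{-2})=O(D^{14})$. One then runs the Dawson--Nielsen recursion in that group, decomposing group commutators blockwise. This requires every coarse circuit to be an honest element of $\mathrm{SU}(D_0,\dots,D_K)$ on the cutoff space. It needs unit determinant in every sector, which holds because beam-splitter generators have zero diagonal and the Trotterized $V^r$ occur in inverse pairs. It also needs no entanglement with an ancilla. That is why the paper first proves an ancilla-free theorem for determinant-one blocks. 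It invokes the vacuum ancilla only once, at the outer level, through $\widetilde U=1\oplus\bigoplus_n\bigl(U^{(n)}\oplus(\det U^{(n)})^{-1}\oplus\II\bigr)$.
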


\subsubsection{On the ancilla}\label{sec:ancilla}

\cref{thm:cutoff-SK} is stated with a vacuum ancilla, which is rather inconvenient for a Solovay--Kitaev theorem.
The general way to achieve polylogarithmic dependency on error is to first build a \emph{coarse compiler} that realizes unitaries up to a fixed precision $\epsilon_0$ and then recursively refine the error.
However, a coarse version of \cref{thm:cutoff-SK} does not directly permit recursive refinement since the input could become slightly entangled with the ancilla and so there would no longer be a unitary correction term on the data modes.

Unfortunately, ancillas cannot be avoided if we want to compile arbitrary energy-preserving unitaries acting on several photon-number sectors.
To see this formally, we decompose the unitaries into their $n$-photon sectors, writing
\begin{equation}
    U = \bigoplus_{n=0}^K U^{(n)},
\end{equation}
where $U^{(n)}$ acts on the $n$-photon subspace, and $U^{(0)} = e^{i\gamma}$.
Suppose $C_U = G_L\dotsm G_1$.
Then we have
\begin{equation}\label{eq:CUN}
\det C_U^{(n)} = \prod_{j=1}^L \det G_j^{(n)} = \exp(i\,D_n (\theta_1 n + \theta_2 n^2)),
\end{equation}
where $\theta_1,\theta_2$ are fixed linear functions of the phase shift and Kerr gate parameters (see \cref{eq:GKerr}), and $D_n$ is the dimension of the $n$-photon subspace.
Beamsplitters have determinant $1$ and can therefore be ignored here. Therefore the pattern of determinants on each block of \cref{eq:CUN} only depends on two real parameters $\theta_1, \theta_2$. For general patterns we cannot produce the correct global phase value on each block and therefore inclusion of an additional ancilla (to be traced out) is necessary.

For non-negative integers $d_1,\ldots,d_K$, write
\be
    \mathrm{SU}(d_1,\ldots,d_K)
    \coloneq \left\{\bigoplus_{j=1}^K U_j : U_j\in\mathrm{SU}(d_j)\right\},
\ee
with zero-dimensional blocks omitted.
It turns out that we \emph{can} avoid ancillas for $U\in\mathrm{SU}(D_0,\ldots,D_K)$, where $D_n=\dim\calH_n^{(m)}$.
Equivalently, $\det U^{(n)}=1$ on every photon-number sector, including $U^{(0)}=1$ on the vacuum.
As long as $\theta_1=\theta_2=0$, \cref{eq:CUN} causes no contradiction.
The following theorem implies \cref{thm:cutoff-SK} almost directly and permits recursive refinements because \cref{eq:CUU} bounds the operator norm of the entire circuit.

\begin{theorem}\label{thm:cutoff-SK2}
    Let $U\in\mathrm{SU}(D_0,\ldots,D_K)$ on $\calH^{(m)}_{\le K}$, where $D_n=\dim\calH_n^{(m)}$.
    Then $U$ can be approximated by a circuit $C_U$ satisfying
    \begin{equation}\label{eq:CUU}
        \norm{C_U - U}_{(K)} \le \epsilon
    \end{equation}
    using gates in $\GKerr$ (assuming $m\ge 2$) or $\GcrossKerr$ (assuming $m\ge 3$)
    with complexity parameters as in \cref{thm:cutoff-SK}, taking $D=\sum_{n=0}^K D_n$.
\end{theorem}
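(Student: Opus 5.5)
We would follow the standard Solovay--Kitaev template on the compact group $G\coloneq\mathrm{SU}(D_0,\ldots,D_K)$, acting on $\calH^{(m)}_{\le K}$ with no ancilla. There are two ingredients. The first is a \emph{coarse compiler}: every $V\in G$ is approximated to some fixed constant precision $\epsilon_0$ in $\norm{\cdot}_{(K)}$ by $\poly(m,K,D)$ gates. The second is the \emph{group-commutator refinement}. Since all gates in $\GKerr$ and $\GcrossKerr$ preserve photon number, every circuit is block diagonal. The cutoff seminorm of a block-diagonal operator equals the operator norm of its restriction to $\calH^{(m)}_{\le K}$. Consequently $\norm{C_U-U}_{(K)}$ is a genuine bi-invariant metric on block-diagonal unitaries, and the usual Dawson--Nielsen argument goes through verbatim inside $G$.

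\textbf{Refinement step.} Given $V\in G$ close to the identity, we write $V\approx WXW^\dagger X^\dagger$ with $W,X\in G$ at distance $O(\norm{V-\II}^{1/2})$ from $\II$. This is done blockwise via the standard balanced commutator decomposition in each $\mathrm{SU}(D_n)$, using one common scale so that the constants are uniform in $n$. The determinants stay $1$ in every block, so the factors remain in $G$. Recursing with $\epsilon_{k+1}=c\,\epsilon_k^{3/2}$ and gate count multiplying by $5$ per level gives $\log^{c_{\rm SK}}(1/\epsilon)$ with $c_{\rm SK}=\log5/\log(3/2)$. The constant $c$ grows at most polynomially in $D$, which is absorbed by choosing $\epsilon_0=1/\poly(D)$; this is where the $D^{14}$-type prefactor originates. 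Inverses of gates are gates (negate $\theta$), so commutators are available.

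\textbf{Coarse compiler.} This is the main obstacle. We would first establish exact controllability on $\calH^{(m)}_{\le K}$: the real Lie algebra generated by $i\hat n_j$, $\ad_j\a_k-\ad_k\a_j$, $i(\ad_j\a_k+\ad_k\a_j)$ and $i\hat n_1^2$ (respectively $i\hat n_j\hat n_k$) contains $\bigoplus_n\mathfrak{su}(D_n)$. Linear optics gives $\mathfrak u(m)$ acting on each sector through the symmetric representations. Commutators with the Kerr term produce higher-degree energy-preserving polynomials, distinguishing sectors by the polynomial in $n$ appearing as a multiplier. Using Vandermonde-type separation of the scalars $n,n^2,\ldots$ for $n\le K$, we would isolate each block algebra, and the irreducibility of $\mathfrak u(m)$ plus a non-linear element gives all of $\mathfrak{su}(D_n)$. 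This is the known result of \cite{PhysRevLett.119.220502} extended to all sectors simultaneously.

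To make the coarse compiler constructive and efficient, rather than merely existential, we would use an explicit two-level decomposition. Any $V\in G$ factors into $O(D^2)$ Givens rotations between Fock basis states within a sector, plus diagonal phases with unit determinant per sector. Each Givens rotation between adjacent Fock states (differing by moving one photon between modes) would be realized by conjugating a Kerr phase with beam splitters and using group commutators. This requires selectively addressing a pair $\ket{\bm n},\ket{\bm n'}$, which we would do with number-selective phases built from products of Kerr/cross-Kerr evolutions via Lagrange interpolation in $\hat n_j$ (degree $\le K$, costing $\poly(K)$ gates to constant precision). Here $m\ge2$ or $m\ge3$ supplies a spare mode to route photons, mimicking the role of the removed ancilla.

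The polynomial counts $m^6K^7$ arise from these interpolation and routing costs. For the determinant-one diagonal parts, note that $\det$ in sector $n$ of the selective phases can be arranged to vanish by pairing $\pm$ phases. We expect the hardest point to be verifying that every constructed generator lies in $G$, rather than merely in $\bigoplus_n\mathrm U(D_n)$; this is exactly the obstruction \cref{eq:CUN} forbids. We would handle it by building only from commutators, which automatically have unit determinant in every block.
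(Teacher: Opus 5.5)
\textbf{Verdict: the refinement step is fine, but the coarse compiler has a genuine gap.}

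Your refinement step matches the paper's blockwise Dawson--Nielsen argument (\cref{thm:block-sk}). One difference: the paper's commutator constant is absolute. The condition $\epsilon_0\le 1/\max_n D_n$ is needed only so that each block's small Hermitian logarithm is traceless.

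The gap is in the coarse compiler, at the step that carries all the difficulty. You need an $\mathrm{SU}(2)$ rotation on one adjacent pair $\ket{\bm u},\ket{\bm v}$ that acts approximately as the identity on every other Fock state of $\calH^{(m)}_{\le K}$, using $\poly(m,K,D)$ gates. Your tool, number-selective phases from ``products of Kerr/cross-Kerr evolutions via Lagrange interpolation in $\hat n_j$'', does not exist as stated:
\begin{itemize}
\item Products of the diagonal gates in $\GKerr$ or $\GcrossKerr$ only produce phases $e^{i(\sum_r b_r n_r+\sum_{r\le s}c_{rs}n_rn_s)}$, at most quadratic in the occupation numbers. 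A degree-$K$ interpolation polynomial is unreachable without interleaving beam splitters, and you give no such construction.
\item The Lie-algebraic route needs nested commutators of depth growing with $K$. Each group-commutator level multiplies the gate count by a constant and degrades precision, so it does not give $\poly(K)$. Controllability is in any case only existential.
\item Selecting on one $\hat n_j$ cannot isolate the pair. The operator $\hat a_j^\dagger\hat a_k$ couples all pairs $(\bm v'+\bm e_j-\bm e_k,\bm v')$ with the same $v'_j$, so you would need selectivity on the full occupation vector.
\item Minor: the Fock basis of each sector needs a Gray-code order, so that Givens elimination on consecutive indices involves only adjacent states.
\item Minor: the ``spare mode'' remark is off. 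In this ancilla-free statement all $m$ modes carry data; $m\ge3$ is needed only for the Hamiltonian identity expressing $\hat n_j^2$ through cross-Kerr terms.
\end{itemize}

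The missing idea is an exact finite Fourier filter built from quadratic diagonal Hamiltonians only (\cref{claim:fourier-sum,claim:quadratic-H}).
\begin{itemize}
\item Choose a prime $q=O(D^2)$ and integer coefficients so that $H=\sum_r b_r\hat n_r+\sum_{r\le s}c_{rs}\hat n_r\hat n_s$ gives pairwise distinct residues $H(\bm u')-H(\bm v')\bmod q$ over all transitions of $\hat a_j^\dagger\hat a_k$ in the cutoff space. This works because the quadratic label of a state determines each transition, and a greedy choice of coefficients avoids at most $T^2<q$ forbidden residues.
\item With $V=e^{2\pi iH/q}$ one has exactly $\frac1q\sum_{r=0}^{q-1}e^{-2\pi i r\omega/q}V^r\hat a_j^\dagger\hat a_k V^{-r}=\alpha\ketbra{\bm u}{\bm v}$. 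Hence $X_{\bm u,\bm v}$ and $Y_{\bm u,\bm v}$ are sums of $q$ conjugated beam-splitter generators.
\item $e^{itP}$ is obtained by Trotterizing this sum, and each $V^r$ by Trotterizing $O(m^2)$ terms, each implementable with $O(1)$ gates.
\end{itemize}
This uses linear combinations of conjugates rather than nested commutators, which is why the cost stays polynomial. Unit determinants on every sector come for free: the beam-splitter generators have zero Fock-diagonal, and the factors $Q_r,Q_r^\dagger$ cancel.
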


\begin{proof}
    Choose $\varepsilon_0=\min\{10^{-5},1/D\}$.
    The coarse compiler of \cref{lem:coarse-compiler} has gate count
    \begin{equation}
        L(\varepsilon_0)
        =O(m^6K^7D^{12}\varepsilon_0^{-2})
        =O(m^6K^7D^{14}).
    \end{equation}
    Applying \cref{thm:block-sk} gives
    $O(m^6K^7D^{14}\log^{c_{\rm SK}}(1/\epsilon))$ gates.
    Together with the polynomial-time coarse compiler, this gives
    $\poly(m,K,D,\log(1/\epsilon))$ bit operations.
\end{proof}

\begin{proof}[Proof of \cref{thm:cutoff-SK}]
    Assume without loss of generality $\gamma=0$, i.e., $U\ket{0^m}=\ket{0^m}$.
    For $n\ge1$, decompose the $n$-photon subspace as
    \begin{equation}
    \mathcal H_n^{(m+1)}
    =
    \left(\mathcal H_n^{(m)}\otimes\operatorname{span}\{|0\rangle\}\right)
    \oplus
    \operatorname{span}\{|0^m\rangle\otimes|n\rangle\}
    \oplus
    \bigoplus_{r=1}^{n-1}
    \left(\mathcal H_{n-r}^{(m)}\otimes\operatorname{span}\{|r\rangle\}\right),
    \end{equation}
    where the first sector has vacuum ancilla, the second sector has all $n$ photons in the ancilla, and the third sector contains the remaining states.
    Let
    \be
   \widetilde U
=
1\oplus\bigoplus_{n=1}^{K}
\left(
U^{(n)}
\,\oplus\,
\bigl(\det U^{(n)}\bigr)^{-1}
\,\oplus\,
\II
\right),
    \ee
    where the first summand is the vacuum sector.
    Then $\det \widetilde{U}^{(n)}=1$ for all $n$ and $\widetilde U\ket{\psi,0} =(U\ket{\psi})\ket0$ for all $\ket{\psi}$.
    Conclude the proof by applying \cref{thm:cutoff-SK2} to $\widetilde{U}$.
\end{proof}

\subsubsection{Proof outline}
The proof of \cref{thm:cutoff-SK} proceeds in the following steps:

\begin{enumerate}[label=(\arabic*)]
    \item Decompose $U\in\mathrm{SU}(D_0,\dots,D_K)$ into a sequence of $2$-level unitaries $U=U_L\dots U_1$ with $L=O(D^2)$, so that each $U_i$ acts non-trivially only on one adjacent pair of Fock states. Two Fock states are \emph{adjacent} if one can be obtained from the other by moving a single photon from one mode to another.
    For this step, we first order the Fock states within each photon-number sector so that consecutive states are adjacent \cite{klingsberg1982gray}, and then use the Givens QR decomposition in \cref{lem:adjacent-decomposition}.
    \item Obtain independent $\mathrm{SU}(2)$ control on adjacent Fock states $\ket{\bm u},\ket{\bm v}$.
    By \cref{lem:2level-rotations}, we can approximate any $\mathrm{SU}(2)$ unitary on $\Span\{\ket{\bm u},\ket{\bm v}\}$ while leaving the other Fock states approximately invariant.
    The proof uses an Euler decomposition into $X$ and $Y$ rotations and Fourier filtering to isolate the desired transition from a beamsplitter.
    \item Combine steps (1) and (2) to build a \emph{coarse compiler} for energy-preserving unitaries with determinant one on each sector (\cref{lem:coarse-compiler}). It achieves a sufficiently small precision $\epsilon_0$ and serves as the base case in the recursive Solovay--Kitaev algorithm.
    \item Apply the recursive Solovay--Kitaev refinement of \cref{thm:block-sk} to the coarse compiler to obtain \cref{thm:cutoff-SK2}.
    Here, it is essential that the coarse compiler works on the entire cutoff space without creating entanglement with the environment.
    \cref{thm:cutoff-SK} then follows via the vacuum ancilla as discussed in \cref{sec:ancilla}.
\end{enumerate}

\subsubsection{Rotations on adjacent Fock states}

We begin with step (2) of the proof outline, since that is where we actually deal with the continuous-variable gates.
In particular, the below lemma approximately implements an arbitrary $\mathrm{SU}(2)$ unitary on an \emph{adjacent} pair of Fock states $\ket{\bm u}$ and $\ket{\bm v}$ while leaving the remaining inputs approximately invariant.
We say $\ket{\bm u}$ and $\ket{\bm v}$ are adjacent if they only differ in the position of a single photon.

\begin{lem}\label{lem:2level-rotations}
    Let $\ket{\bm u},\ket{\bm v} \in \calH_{\le K}^{(m)}$ be adjacent Fock states.
    Let $U$ act as an arbitrary $\mathrm{SU}(2)$ unitary on $\Span\{\ket{\bm u},\ket{\bm v}\}$ and as the identity on its orthogonal complement.
    For $0<\eta<1$, there is a circuit $C_U$ satisfying
    \begin{equation}
        \norm{C_U-U}_{(K)} \le \eta,\qquad \det C_U^{(n)} = 1 \ \forall n\in\{0,\dots, K\},
    \end{equation}
    using $O(m^6K^7 D^6\eta^{-2})$ gates from $\GKerr$ (for $m\ge2$) or $\GcrossKerr$ (for $m\ge3$).
    The circuit can be constructed in time $\poly(m,K,D,\eta^{-1})$.
\end{lem}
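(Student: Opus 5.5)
The plan is to realize the two-level rotation by \emph{Fourier filtering} a single beamsplitter, and then to assemble the $\mathrm{SU}(2)$ element from such filtered rotations by an Euler decomposition. Write $\ket{\bm u}=\ket{\bm w+\bm e_j}$ and $\ket{\bm v}=\ket{\bm w+\bm e_k}$, where $\bm w$ is an $(n-1)$-photon occupation vector and $j\neq k$; adjacency guarantees this form. The phase-rotated beamsplitter generator $B_\varphi=e^{i\varphi}\ad_j\a_k+e^{-i\varphi}\ad_k\a_j$ couples exactly the pairs $\ket{\bm x+\bm e_j}\leftrightarrow\ket{\bm x+\bm e_k}$ for all $\bm x$ with $|\bm x|\le K-1$. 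By Euler decomposition it suffices to implement $X$- and $Y$-type rotations $e^{i\beta(\ket{\bm u}\bra{\bm v}e^{i\varphi}+\mathrm{h.c.})}$ for $\varphi\in\{0,\pi/2\}$, with identity elsewhere. Such generators are traceless and purely off-diagonal inside the $n$-photon sector. Every gate we use has determinant one on each sector: beamsplitters do, and each diagonal conjugation cancels against its inverse. Hence $\det C_U^{(n)}=1$ will hold exactly.

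\emph{Step 1: a diagonal label Hamiltonian.} First I would construct an energy-preserving, integer-valued, diagonal Hamiltonian $G$, quadratic in the $\hat n_i$. It should satisfy the following: the frequency $G(\bm x+\bm e_j)-G(\bm x+\bm e_k)$ equals a prescribed $\Delta_0$ only for $\bm x=\bm w$, among all $\bm x$ with $|\bm x|\le K-1$. For $G=\sum_i c_i\hat n_i^2+\sum_{i<l}c_{il}\hat n_i\hat n_l$, this difference is affine in $\bm x$:
\[
c_j(2x_j+1)-c_k(2x_k+1)+\sum_{l\ne j,k}(c_{jl}-c_{kl})x_l+c_{jk}(x_k-x_j).
\]
I would choose the coefficients to be integer powers of $(2K+1)$, so that this difference is a base-$(2K+1)$ encoding of $\bm x$ and is therefore injective. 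Its integer range is then $\mathrm{poly}$ in $K^m$, which is bounded by a power of $D$; this is where the $D$-dependence enters. With cross-Kerr the terms $\hat n_i\hat n_l$ are native. With self-Kerr only ($m\ge2$ counting the ancilla), I would obtain $\hat n_i\hat n_l$ by conjugating $\hat n^2$ with a balanced beamsplitter and subtracting the self-Kerr terms. This uses the identity $(\hat n_i+\hat n_l+\ad_i\a_l+\ad_l\a_i)^2/4$, and averaging over the phase of the beamsplitter kills the off-diagonal pieces. I expect this part to be the main obstacle. The cross terms produced this way must be exactly diagonal, or at least correct up to errors controlled by the same filtering, and the exponent size of the coefficients must be kept polynomial in $m$, $K$ and $D$.

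\emph{Step 2: filtering.} Put $\omega=2\pi/T$, with $T$ exceeding twice the range of frequencies from Step 1. Then
\[
\frac1T\sum_{t=0}^{T-1}e^{it\omega G}B_{\varphi-t\omega\Delta_0}e^{-it\omega G}=e^{i\varphi}\ket{\bm u}\bra{\bm v}+\mathrm{h.c.}
\]
holds exactly on $\calH_{\le K}^{(m)}$, because all other matrix elements carry nonzero frequencies modulo $T$. Each summand is implemented with gates $e^{it\omega G}$, a phase shifter, and a beamsplitter. I would then Trotterize: take $r$ rounds of $\prod_t e^{i(\beta/(rT))\,e^{it\omega G}B e^{-it\omega G}}$. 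Each factor is a conjugated beamsplitter whose generator has $\norm{\cdot}_{(K)}\le 2K$ by \cref{lem:cutoff-monomial-bound}. Because all operators are bounded on the cutoff space, which they preserve, the standard first-order product formula gives error $O((\beta K)^2 T/r)$. Choosing $r\sim K^2T\eta^{-1}$ achieves error $\eta$.

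\emph{Step 3: counting.} The count is $O(1)$ Euler factors times $rT$ conjugated beamsplitters, and each conjugation costs $O(m^2)$ gates for $G$ (times the beamsplitter overhead when self-Kerr is used). Combining the $T=\mathrm{poly}(K,D)$ bound with the Trotter estimate should land within $O(m^6K^7D^6\eta^{-2})$. The squared $\eta^{-2}$ leaves slack for the second-order Euler/Trotter bookkeeping. All angles are explicit, so the circuit is computable in $\mathrm{poly}(m,K,D,\eta^{-1})$ time.
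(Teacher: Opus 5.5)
Your skeleton (Euler decomposition, Fourier filtering of a phase-rotated beamsplitter by a diagonal quadratic $G$, first-order Trotter, cancelling determinants) is the same as the paper's. However, Step 1 does not deliver the stated count. With coefficients that are powers of $2K+1$, the frequency range is $(2K+1)^{\Theta(m)}$, and so are $T$ and the coefficients of $G$. This is not bounded by a power of $D=\binom{m+K}{K}$ unless $m$ is constant. For $K=2$, for example, $D=O(m^2)$ while $T=2^{\Omega(m)}$. Since $T$ multiplies the number of conjugated beamsplitters, and the coefficient size enters any simulation of $e^{i\theta G}$, you get a bound exponential in $m$ rather than $O(m^6K^7D^6\eta^{-2})$. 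This fails precisely in the regime of $K$ fixed and $m$ growing.

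The missing idea is to hash rather than encode. Work modulo a prime $q=O(D^2)$ and choose the $O(m^2)$ coefficients in $\{0,\dots,q-1\}$ greedily, so that the at most $T\le D$ transition frequencies along modes $j,k$ are pairwise distinct and nonzero mod $q$. Each requirement has the form $\bm a\cdot\bm c\not\equiv 0 \pmod q$, where $\bm a$ is a nonzero integer vector with $\norm{\bm a}_\infty\le 4K-2<q$. It therefore forbids one value of the coordinate at the last nonzero entry of $\bm a$, and there are fewer than $q$ such requirements.

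The second gap is implementing $e^{i\theta G}$, which you flag but do not resolve; this is where most of the stated cost comes from. In $\GcrossKerr$ the terms $\hat n_i^2$ are not native, and you cannot drop them. With only linear and cross terms, the frequency depends on $\bm x$ only through $x_k-x_j$ and $(x_l)_{l\ne j,k}$, so $\bm x$ and $\bm x+\bm e_j+\bm e_k$ collide. They must be synthesized from beamsplitter-conjugated cross-Kerr gates using a third mode, which is why the lemma needs $m\ge3$.

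In both gate sets, your phase-averaging identity is exact only at the level of Hamiltonians. The averaged pieces do not commute: with $B_\varphi$ as in your proposal, $[B_0^2,B_{\pi/2}^2]\neq0$ as soon as the two modes hold three photons. So $e^{i\theta G}$ is not an exact $O(m^2)$-gate product, and it must itself be Trotterized. Write $V^r$ for an exact conjugating unitary, $Q_r$ for its Trotterized version with $\norm{Q_r-V^r}_{(K)}\le\xi$, and $W_r=e^{i\tau B_{\varphi_r}}$ for one short beamsplitter step. Each $Q_r$ costs $O(m^6q^2K^4/\xi)$, because the pieces have total cutoff norm $O(m^2qK^2)$.

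Naive telescoping charges $2\xi$ per conjugation over all $rT$ of them, forcing $\xi\sim\eta/(rT)$ and overshooting the bound. The paper instead uses
$\norm{Q_rW_rQ_r^\dagger-V^rW_rV^{-r}}_{(K)}=\norm{Q_r(W_r-\II)Q_r^\dagger-V^r(W_r-\II)V^{-r}}_{(K)}\le 2\xi\norm{W_r-\II}_{(K)}\le 4\xi|\tau|K$.
This makes the total replacement error $O(\xi K)$, independent of the number of factors, so $\xi\sim\eta/K$ suffices. This inner Trotterization is the source of the $m^6$ and of the second power of $\eta^{-1}$; that factor is not slack.

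A minor point: the filtered operator is $\sqrt{(w_j+1)(w_k+1)}$ times your two-level generator, so the evolution time must be rescaled.
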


An Euler decomposition reduces \cref{lem:2level-rotations} to implementing the rotations
\be
    X_{\bm u,\bm v} = \ketbra{\bm u}{\bm v}+\ketbra{\bm v}{\bm u},\qquad Y_{\bm u,\bm v} = -i(\ketbra{\bm u}{\bm v}-\ketbra{\bm v}{\bm u}).
\ee
For that, we isolate the $\bm u$-$\bm v$ transition from a beamsplitter hopping term.
Suppose $\bm u = \bm v+\bm e_j-\bm e_k$, where $\bm e_j,\bm e_k$ are binary vectors with a single $1$ in modes $j,k$, respectively.
Then the hopping operator
\begin{equation}\label{eq:hop-cutoff}
    A \coloneq \ad_j\a_k\Big|_{\calH_{\le K}^{(m)}} = \sum_{\subalign{w_k&>0\\|\bm w|&\le K}}\sqrt{(w_j+1)w_k}\ketbra{\bm w+\bm e_j-\bm e_k}{\bm w}
\end{equation}
does couple $\ket{\bm u}$ and $\ket{\bm v}$, but it still acts non-trivially on other Fock states.

Damas et al.~\cite{damas2026engineered} also use Kerr nonlinearities to selectively couple Fock states, analyzing the suppression of unwanted transitions through a rotating-frame expansion.
Here, we explicitly construct a number-preserving Hamiltonian $H$, quadratic in the number operators, for which the transitions generated by $\ad_j\a_k$ in the cutoff space have distinct energy differences.
Then we cancel all unwanted transitions exactly at the Hamiltonian level inside the cutoff space through the finite Fourier sum in \cref{eq:fourier-sum}.
Thus, we only need to account for the errors of the Trotter decompositions used to implement the resulting Hamiltonian evolutions inside the cutoff space, which we bound rigorously.
A similar Fourier argument was used earlier by Wocjan et al.~\cite{wocjan2002simulating} for coupled harmonic oscillators and by Nielsen et al.~\cite{nielsen2002universal} for finite-dimensional quantum systems.

\begin{claim}\label{claim:fourier-sum}
    Let $H$ be a diagonal integer Hamiltonian on $\calH_{\le K}^{(m)}$, such that
    \be
        H(\bm u') - H(\bm v')\pmod q
    \ee
    is unique among the directed pairs $(\bm u',\bm v')$ with $\bm u'=\bm v'+\bm e_j-\bm e_k$ appearing in $A$.
    Set
    \be
        V = e^{2\pi i H/q},\qquad \omega = H(\bm u) - H(\bm v) \pmod q, \qquad \alpha = \sqrt{(v_j+1)v_k}.
    \ee
    Then
    \begin{equation}\label{eq:fourier-sum}
        \frac1q\sum_{r=0}^{q-1} e^{-2\pi i r\omega/q} V^r A V^{-r} = \alpha \ketbra{\bm u}{\bm v}.
    \end{equation}
\end{claim}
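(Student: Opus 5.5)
The plan is to expand $A$ in its defining sum of rank-one transitions and conjugate each term by $V^r$ separately. Since $H$ is diagonal in the Fock basis of $\calH_{\le K}^{(m)}$, we have $V\ket{\bm w}=e^{2\pi i H(\bm w)/q}\ket{\bm w}$. For every term $\ketbra{\bm w+\bm e_j-\bm e_k}{\bm w}$ of $A$ in \cref{eq:hop-cutoff} this gives
\begin{equation}
    V^r\ketbra{\bm w+\bm e_j-\bm e_k}{\bm w}V^{-r}
    = e^{2\pi i r\,(H(\bm w+\bm e_j-\bm e_k)-H(\bm w))/q}\,\ketbra{\bm w+\bm e_j-\bm e_k}{\bm w}.
\end{equation}
Write $\omega_{\bm w}\coloneq H(\bm w+\bm e_j-\bm e_k)-H(\bm w)$, which is an integer because $H$ is an integer Hamiltonian. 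The left-hand side of \cref{eq:fourier-sum} then becomes
\begin{equation}
    \sum_{\bm w}\sqrt{(w_j+1)w_k}\left(\frac1q\sum_{r=0}^{q-1}e^{2\pi i r(\omega_{\bm w}-\omega)/q}\right)\ketbra{\bm w+\bm e_j-\bm e_k}{\bm w}.
\end{equation}

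Next, the inner sum is a finite geometric sum over $q$-th roots of unity. It equals $1$ if $\omega_{\bm w}\equiv\omega\pmod q$ and $0$ otherwise. By hypothesis, the residue $\omega_{\bm w}\bmod q$ is unique among the directed pairs appearing in $A$, so the only surviving term is $\bm w=\bm v$, whose partner is $\bm v+\bm e_j-\bm e_k=\bm u$. All other transitions cancel exactly. The surviving coefficient is $\sqrt{(v_j+1)v_k}=\alpha$, which gives $\alpha\ketbra{\bm u}{\bm v}$ as claimed.

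Everything here is finite-dimensional on the cutoff space, so no domain issues arise. The only step needing care is bookkeeping. Conjugating by $V^r$ must be done as an operator identity on $\calH_{\le K}^{(m)}$, where $A$ is the truncated hopping operator; this is consistent because $V$ preserves each Fock vector and hence the cutoff space. The pair $(\bm u,\bm v)$ must also be checked to actually appear in $A$, which holds since $v_k=u_k+1>0$ and $|\bm u|=|\bm v|\le K$. I expect no real obstacle beyond stating the roots-of-unity orthogonality cleanly.
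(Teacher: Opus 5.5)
Your proposal is correct and follows the paper's proof essentially verbatim: you conjugate each rank-one term of $A$ by $V^r$ to pick up the phase $e^{2\pi i r(H(\bm u')-H(\bm v'))/q}$, use orthogonality of $q$-th roots of unity to turn the averaged sum into the indicator of $H(\bm u')-H(\bm v')\equiv\omega \pmod q$, and invoke uniqueness so that only $\alpha\ketbra{\bm u}{\bm v}$ survives. Your extra check that $(\bm u,\bm v)$ actually appears in $A$ (since $v_k\ge 1$) is a small detail the paper leaves implicit.
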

\begin{proof}
    Since $H$ is diagonal, conjugation by $V^r$ gives
    \be
        V^r\ketbra{\bm u'}{\bm v'}V^{-r} = e^{2\pi i r(H(\bm u')-H(\bm v'))/q}\ketbra{\bm u'}{\bm v'}.
    \ee
    Consequently, for $\zeta= e^{2\pi i (H(\bm u')-H(\bm v')-\omega)/q}$,
    \begin{equation}
        \frac1q\sum_{r=0}^{q-1}
        e^{-2\pi i r\omega/q}
        V^r\ketbra{\bm u'}{\bm v'}V^{-r}
        = \left(\frac1q\sum_{r=0}^{q-1} \zeta^r\right) \ketbra{\bm u'}{\bm v'}
        = \begin{cases}
            \ketbra{\bm u'}{\bm v'},&
            H(\bm u')-H(\bm v')\equiv\omega\pmod q,\\
            0,&\text{otherwise}.
        \end{cases}
    \end{equation}
    Since $H(\bm u')-H(\bm v') \pmod q$ is unique among the pairs appearing in $A$, only $\ketbra{\bm u}{\bm v}$ survives from $A|_{\calH_{\le K}^{(m)}}$ (see \cref{eq:hop-cutoff}) and thus \cref{eq:fourier-sum} holds.
\end{proof}

We construct a suitable Hamiltonian $H$ from Kerr nonlinearities.

\begin{claim}\label{claim:quadratic-H}
    Fix distinct modes $j,k$. There exists a Hamiltonian
    \begin{equation}\label{eq:quadratic-H}
    H = \sum_r b_r\hat n_r + \sum_{r\le s} c_{rs} \hat n_r\hat n_s,\qquad b_r,c_{rs}\in\{0,\dots,q-1\},
    \end{equation}
    for some prime $q = O(D^2)$, such that $H(\bm u) - H(\bm v)\pmod q$ is unique among the pairs $(\bm u,\bm v)$ of Fock states in $\calH_{\le K}^{(m)}$ with $\bm u=\bm v+\bm e_j-\bm e_k$.
    $H$ can be computed in $\poly(m,D)$ time.
\end{claim}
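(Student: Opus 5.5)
The approach is to reduce the claim to choosing a linear functional mod $q$ that is injective on a set of at most $D$ occupation vectors, and then to realize any such functional through the coefficients $c_{rs}$.

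\textbf{Reduction to an affine map.} Every directed pair appearing in $A$ (see \cref{eq:hop-cutoff}) has the form $(\bm u',\bm v')=(\bm v'+\bm\delta,\bm v')$ with $\bm\delta\coloneq\bm e_j-\bm e_k$. Such a pair is determined by $\bm v'$, which ranges over
\[
S\coloneq\{\bm v'\in\NN_0^m: v'_k>0,\ |\bm v'|\le K\},
\qquad |S|\le D.
\]
For $H$ as in \cref{eq:quadratic-H}, write $H(\bm w)=\bm b\cdot\bm w+\bm w^{\mathsf T}C\bm w$, where $C$ is the symmetrized coefficient matrix: $C_{rr}=c_{rr}$ and $C_{rs}=C_{sr}=c_{rs}/2$ for $r<s$. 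Expanding gives
\[
H(\bm v'+\bm\delta)-H(\bm v')=\bm g\cdot\bm v'+\kappa,
\qquad \bm g=2C\bm\delta,\quad \kappa=\bm b\cdot\bm\delta+\bm\delta^{\mathsf T}C\bm\delta .
\]
So the required uniqueness mod $q$ is equivalent to the map $\bm v'\mapsto \bm g\cdot\bm v' \pmod q$ being injective on $S$. The linear coefficients $b_r$ play no role, so I set them to $0$.

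\textbf{Realizing an arbitrary $\bm g$.} Next I check that every target vector $\bm g\in\ZZ_q^m$ is attainable with integer coefficients in $\{0,\dots,q-1\}$, taking $q$ an odd prime. Computing $\bm g=2C\bm\delta$ componentwise:
\begin{itemize}
\item for $r\notin\{j,k\}$, $g_r=c_{\{r,j\}}-c_{\{r,k\}}$, so I set $c_{\{r,j\}}=g_r$ and $c_{\{r,k\}}=0$;
\item $g_j=2c_{jj}-c_{\{j,k\}}$ and $g_k=c_{\{j,k\}}-2c_{kk}$, so I set $c_{\{j,k\}}=0$, $c_{jj}=g_j\cdot 2^{-1}$ and $c_{kk}=-g_k\cdot 2^{-1}\bmod q$.
\end{itemize}
The last step uses that $2$ is invertible mod $q$. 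All remaining coefficients are set to $0$. Every entry of the resulting $H$ is computable in $\poly(m)$ arithmetic operations mod $q$.

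\textbf{Choosing $\bm g$ (the main step).} It remains to find $\bm g$ that separates $S$. Fix a prime $q$ with $D^2<q\le 2D^2$, which exists by Bertrand's postulate, so $q=O(D^2)$. For distinct $\bm v',\bm v''\in S$, the difference $\bm v'-\bm v''$ has entries of absolute value at most $K<q$. Hence it is nonzero mod $q$, and a uniformly random $\bm g\in\ZZ_q^m$ satisfies $\bm g\cdot(\bm v'-\bm v'')\equiv0$ with probability exactly $1/q$. A union bound over at most $\binom{D}{2}<q$ pairs shows that a separating $\bm g$ exists.

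For a deterministic $\poly(m,D)$-time construction, I derandomize by the method of conditional expectations. I fix $g_1,\dots,g_m$ one coordinate at a time, each time choosing the value in $\ZZ_q$ that minimizes the conditional expected number of colliding pairs. This conditional expectation is easy to compute: a pair whose difference is still nonzero in some free coordinate collides with probability $1/q$, and otherwise its collision status is already determined. The expectation starts below $1$ and never increases, so the final $\bm g$ has zero collisions. The total cost is $O(mqD^2)=\poly(m,D)$.

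The step I expect to be the main obstacle is keeping $q=O(D^2)$ while staying deterministic. A simpler structured choice such as $\bm g=(1,t,\dots,t^{m-1})$ would force $q\gtrsim mD^2$, which is why I use conditional expectations rather than a moment-curve search. I also need to double-check the symmetric-coefficient bookkeeping, in particular the factor $2$ on the diagonal entries $c_{jj}$ and $c_{kk}$, which is why $q$ must be odd.
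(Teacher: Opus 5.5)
Your proof is correct, but it reaches the claim by a different route than the paper.

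The paper keeps all $O(m^2)$ coefficients as unknowns. It attaches to each transition the integer label difference $\bm d_{\bm u,\bm v}=\bm f(\bm u)-\bm f(\bm v)\in\ZZ^{O(m^2)}$ and proves these are pairwise distinct by recovering $\bm v$ from the quadratic coordinates ($2v_j+1=u_j^2-v_j^2$, etc.). It bounds their entries by $4K-2<q$, then chooses $\bm c$ greedily: each condition $\bm a\cdot\bm c\not\equiv0$ is handled at its last nonzero coordinate, where it forbids exactly one residue, so fewer than $q$ conditions always leave an admissible value.

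You instead observe that, for the fixed shift $\bm\delta=\bm e_j-\bm e_k$, the transition energy is affine in the source vector, $\bm g\cdot\bm v'+\kappa$, with a common offset $\kappa$. This reduces everything to an injective linear hash on $S\subset\NN_0^m$, whose pairwise differences are trivially nonzero mod $q$. The price is the realizability check for an arbitrary slope $\bm g=2C\bm\delta$, which needs $2^{-1}\bmod q$; this is automatic here since $q>D^2\ge4$.

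Your conditional-expectation derandomization is the same mechanism as the paper's greedy, phrased probabilistically: at step $i$ only pairs whose last nonzero coordinate is $i$ change status. Neither is stronger on that point, though the paper's counting version is simpler to state.

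What your route buys:
\begin{itemize}
\item It avoids the label-injectivity argument.
\item It dispenses with the extra condition $\bm c\cdot\bm d_{\bm u,\bm v}\not\equiv0$, which the claim does not require.
\item It yields a sparse $H$ with only $O(m)$ nonzero monomials, all involving mode $j$ or $k$. Fed into the proof of \cref{lem:2level-rotations}, this would cut the number of Trotter terms from $O(m^2)$ to $O(m)$ and $\Lambda$ from $O(m^2qK^2)$ to $O(mqK^2)$. That lowers the $m^6$ factor in the rotation gate count to $m^3$.
\end{itemize}
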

\begin{proof}
    Let $d$ be the number of monomials in $H$.
    For each Fock state $\bm w$, assign a quadratic label
    \be \bm f(\bm w) \coloneq \bigl( (w_r)_r, (w_rw_s)_{r\le s} \bigr) \in \ZZ^d,\ee
    and for a transition $(\bm u,\bm v)$, let $\bm d_{\bm u,\bm v} \coloneq \bm f(\bm u) - \bm f(\bm v)$.
    We argue that $\bm d_{\bm u,\bm v}$ uniquely determines each adjacent transition.
    If $\bm u=\bm v+\bm e_j-\bm e_k$, the linear coordinates identify $j$ and $k$, and the quadratic coordinates recover the occupation numbers of $\bm v$:
    \begin{align}
        2v_j + 1 &= u_j^2 - v_j^2\\
        1-2v_k &= u_k^2 - v_k^2\\
        v_\ell &= u_ju_\ell - v_j v_\ell \quad (\ell\notin\{j,k\}).
    \end{align}
    For the fixed modes $j,k$, there are $T\leq D$ transitions, since each source state $\bm v$ determines at most one target $\bm u=\bm v+\bm e_j-\bm e_k$.
    Choose a prime $q = O(K+T^2)$ such that $q > \max\{4K-2,2T^2\}$.
    Such a prime exists by Bertrand's postulate.
    Write $\bm c=((b_r)_r, (c_{rs})_{r\le s})$, so that $H(\bm w) = \bm c\cdot\bm f(\bm w)$.
    It remains to pick $\bm c\in\{0,\dots,q-1\}^d$ so that for all distinct pairs $(\bm u,\bm v)$ and $(\bm x,\bm y)$ among these $T$ transitions:
    \begin{subequations}
    \begin{align}
        \bm c\cdot\bm d_{\bm u,\bm v} &\ne 0 \pmod q\label{eq:cduv}\\
        \bm c\cdot (\bm d_{\bm u,\bm v}-\bm d_{\bm x,\bm y}) &\ne 0\pmod q.\label{eq:cduvdxy}
    \end{align}
    \end{subequations}
    Each of these conditions is of the form $\bm a\cdot\bm c \ne 0\pmod q$ with $\bm a \ne 0$ over the integers and $\norm{\bm a}_\infty \le 4K-2$.
    To see this, consider first \cref{eq:cduv} with $\bm a=\bm d_{\bm u,\bm v}$.
    A straightforward calculation using $|\bm u|,|\bm v|\le K$ gives $\norm{\bm d_{\bm u,\bm v}}_\infty \le 2K-1$.
    Moreover, $\bm d_{\bm u,\bm v} \ne 0$ because its linear coordinates include $+1$ and $-1$.
    In case \cref{eq:cduvdxy}, we have $\bm a=\bm d_{\bm u,\bm v}-\bm d_{\bm x,\bm y}$ and thus $\norm{\bm a}_\infty \le 4K-2$.
    Since $\bm d_{\bm u,\bm v}$ is unique for each adjacent pair, $\bm a \ne 0$.
    The choice $q>4K-2$ therefore ensures $\bm a\ne 0\pmod q$.

    We give a simple deterministic algorithm to assign $c_1,\dots,c_d$ so that \cref{eq:cduv} and \cref{eq:cduvdxy} are satisfied.
    To choose $c_i$, suppose $c_1,\dots,c_{i-1}$ have already been chosen.
    Consider only the conditions $\bm a \cdot\bm c \ne 0\pmod q$ for which $a_i$ is the \emph{last nonzero entry} of $\bm a$ modulo $q$.
    Let $s = \sum_{j<i}a_jc_j$.
    We need $s+a_ic_i\ne 0\pmod q$.
    Since $q$ is prime, each condition forbids exactly one value, namely $c_i=-a_i^{-1}s\pmod q$.
    There are at most $T+T(T-1)=T^2<q$ conditions, so we can always choose the smallest allowed value of $c_i$.
    Since $K+1\le D$ and $T\leq D$, we have $q=O(D^2)$.
    With $d=O(m^2)$ and at most $T^2$ conditions, the algorithm takes $\poly(m,D)$ time.
\end{proof}

Next, we argue how to decompose the quadratic Hamiltonians from \cref{claim:quadratic-H} into a sum of implementable Hamiltonians in either gate set, so that we can approximate their evolution later via a Trotter decomposition.

\begin{claim}\label{claim:quadratic-decomposition}
    Let $H$ be of the form \cref{eq:quadratic-H}, with real coefficients.
    For either $\GKerr$ or $\GcrossKerr$ (assuming $m\ge3$ for the latter),
    there is a decomposition
    \be
        H=\sum_{\ell=1}^{L}H_\ell,\qquad L=O(m^2),
    \ee
    such that each $e^{itH_\ell}$ can be implemented using $O(1)$ gates
    from the chosen gate set.
\end{claim}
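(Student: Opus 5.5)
The plan is to treat the three kinds of terms in \cref{eq:quadratic-H} separately: the linear terms $b_r\hat n_r$, the diagonal quadratic terms $c_{rr}\hat n_r^2$, and the cross terms $c_{rs}\hat n_r\hat n_s$ with $r<s$. Each will be written as a real combination of operators of the form $B^\dagger G B$, where $G$ is a generator from the chosen gate set and $B$ is a fixed linear-optical unitary. Each such summand is then implementable with $O(1)$ gates via $e^{itB^\dagger G B}=B^\dagger e^{itG}B$. Since $\theta\in\RR$ is arbitrary, real (possibly negative) coefficients are absorbed into the evolution time. The linear terms are phase shifters and need nothing further. For $\GKerr$ the diagonal terms are self-Kerr gates directly, and for $\GcrossKerr$ the cross terms are cross-Kerr gates directly. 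The remaining work is to produce the missing family in each gate set.

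The tool I will use is the Schwinger representation on a pair of modes $r,s$. Set $N=\hat n_r+\hat n_s$, $Z=\hat n_r-\hat n_s$, $X=\ad_r\a_s+\ad_s\a_r$ and $Y=-i(\ad_r\a_s-\ad_s\a_r)$. Then $X^2+Y^2+Z^2=N(N+2)$, and every one of these operators commutes with $N$. Hence all identities below can be checked sector by sector on the finite-dimensional spaces of fixed $N$ (spin-$N/2$ representations), which avoids any domain issues. Let $B_X$ and $B_Y$ be $50{:}50$ beam splitters, composed with phase shifters as needed, that rotate $Z$ to $X$ and to $Y$ respectively; each conjugation fixes $N$. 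From $\hat n_{r,s}=(N\pm Z)/2$ we get
\[
\hat n_r^2+\hat n_s^2=\tfrac12(N^2+Z^2),\qquad \hat n_r\hat n_s=\tfrac14(N^2-Z^2).
\]

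\emph{Case $\GKerr$ (cross terms from self-Kerr).} Conjugating $\hat n_r^2+\hat n_s^2$ by $\II$, $B_X$ and $B_Y$ and summing gives $\tfrac12(3N^2+X^2+Y^2+Z^2)=2N^2+N$. Hence $N^2$ equals half of this sum minus $N/2$. This is a combination of six conjugated self-Kerr generators and two phase-shift generators. Then $\hat n_r\hat n_s=\tfrac12(N^2-\hat n_r^2-\hat n_s^2)$, so each cross term costs $O(1)$ summands, and this case needs $m\ge2$.

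\emph{Case $\GcrossKerr$ (self-Kerr terms from cross-Kerr).} Conjugating $\hat n_r\hat n_s$ by $\II$, $B_X$ and $B_Y$ and summing gives $\tfrac14(3N^2-N(N+2))=\tfrac12(N^2-N)$. Therefore $\hat n_r^2+\hat n_s^2=N^2-2\hat n_r\hat n_s$ is an $O(1)$ combination of conjugated cross-Kerr and phase-shift generators. This yields only sums of two squares, not an individual $\hat n_r^2$. To isolate it I pick a third mode $t$ and use
\[
\hat n_r^2=\tfrac12\bigl[(\hat n_r^2+\hat n_s^2)+(\hat n_r^2+\hat n_t^2)-(\hat n_s^2+\hat n_t^2)\bigr].
\]
This is exactly where $m\ge3$ is needed.

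Counting summands over all $O(m)$ linear terms, $O(m)$ diagonal terms and $O(m^2)$ cross terms, each contributing $O(1)$ summands, gives $L=O(m^2)$. The summands need not commute with one another, and the claim does not require it, since they are combined later by Trotterization inside the cutoff space. I expect the main obstacle to be getting the beam-splitter phase conventions right, so that $B_X$ and $B_Y$ really conjugate $Z$ to $\pm X$ and $\pm Y$ using only the gates in $\GLO$. Signs do not matter, because only squares appear. Beyond that, everything reduces to the finite-dimensional $\mathfrak{su}(2)$ Casimir identity in each $N$-sector.
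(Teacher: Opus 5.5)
Your proposal is correct and takes essentially the same approach as the paper, which also conjugates the self- or cross-Kerr generator by balanced beam splitters with phases $0$ and $\pi/2$ (its virtual modes $\hat c_\phi,\hat d_\phi$) and obtains your identities by computing $AA^\dagger+A^\dagger A$ directly; that computation is your Casimir relation in different notation, and your extra unconjugated term cancels, leaving the same decomposition. In the cross-Kerr case the paper isolates $\hat n_j^2$ with a third mode exactly as you do, via $\hat n_j^2=\hat n_j+R_{jk}+R_{j\ell}-R_{k\ell}$.
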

\begin{proof}
    Phase-shift terms $\hat n_j$ are already available in both gate sets.
    It remains to express cross-Kerr terms using self-Kerr and vice versa.
    For distinct modes $j,k$, define ``virtual modes''
    \be
    \hat c_\phi = \frac{\a_j + e^{i\phi} \hat a_k}{\sqrt2},\qquad \hat d_\phi = \frac{\hat a_j - e^{i\phi}\hat a_k}{\sqrt2}.
    \ee
    Write $N_\phi \coloneq \hat c_\phi^\dagger \hat c_\phi$ and $M_\phi \coloneq \hat d_\phi^\dagger \hat d_\phi$ for the number operators in the virtual modes.
    Both modes $\hat c_\phi$ and $\hat d_\phi$ are obtained via linear optical conjugation from $\a_j,\a_k$.
    \begin{equation}
        W \a_j W^\dagger = \hat c_\phi,\qquad W \a_k W^\dagger = -\hat d_\phi, \qquad W \coloneq e^{-i\phi \hat n_k} e^{\pi(\ad_k\a_j - \ad_j\a_k)/4}.
    \end{equation}
    Since conjugation respects products and exponentials, we have
    \begin{equation}\label{eq:kerr-conjugate}
        e^{it N_\phi^2} = W e^{it\hat n_j^2} W^\dagger,\qquad e^{it N_\phi M_\phi} = W e^{it\hat n_j\hat n_k} W^\dagger.
    \end{equation}
    \proofstep{Self-Kerr gate set.}
    Put $S = \hat n_j + \hat n_k$ and $A = \hat a_j^\dagger \hat a_k$.
    Then 
    \be N_\phi = \frac12\bigl(S + e^{i\phi} A + e^{-i\phi}A^\dagger\bigr).\ee
    Then we square and sum over four phases to eliminate the cross terms to obtain
    \be
    \sum_{r=0}^3 N_{r\pi/2}^{\,2} = S^2 + AA^\dagger + A^\dagger A = \hat n_j^2 + \hat n_k^2 + 4\hat n_j \hat n_k + \hat n_j + \hat n_k,
    \ee
    where we use $AA^\dagger = \hat n_j(\hat n_k+1)$ and $A^\dagger A = \hat n_k(\hat n_j+1)$.
    Rearranging gives
    \be
    \hat n_j \hat n_k  = \frac14\left(\sum_{r=0}^3 N_{r\pi/2}^{\,2} -\hat n_j^2 - \hat n_k^2 - \hat n_j - \hat n_k\right),
    \ee
    where the evolution of each term is implementable with $O(1)$ gates in $\GKerr$ (see \cref{eq:kerr-conjugate} for $N_\phi^2$).

    \proofstep{Cross-Kerr gate set.}
    Put $R_{jk} = N_0 M_0 + N_{\pi/2}M_{\pi/2}$, where $N_\phi,M_\phi$ refer to the pair of modes $j,k$.
    Write
    \be
    N_\phi=\frac12(S+B_\phi),\qquad
    M_\phi=\frac12(S-B_\phi),\qquad
    B_\phi=e^{i\phi}A+e^{-i\phi}A^\dagger.
    \ee
    Then
    \be
        N_{\phi}M_\phi = \frac14\left(S+B_\phi\right)\left(S- B_\phi\right) = \frac14\left(S^2 - B_\phi^2\right).
    \ee
    Expanding the two phases in $R_{jk}$, we have 
    \be
    B_0^2=A^2+AA^\dagger+A^\dagger A+(A^\dagger)^2,\qquad
    B_{\pi/2}^2=-A^2+AA^\dagger+A^\dagger A-(A^\dagger)^2.
    \ee
    Hence,
    \begin{align}
    R_{jk}
    &=N_0M_0+N_{\pi/2}M_{\pi/2}
    =\frac12\left(S^2-AA^\dagger-A^\dagger A\right)\notag\\
    &=\frac12\left[
    (\hat n_j+\hat n_k)^2
    -\hat n_j(\hat n_k+1)-\hat n_k(\hat n_j+1)
    \right]
    =\frac12\left[
    \hat n_j(\hat n_j-1)+\hat n_k(\hat n_k-1)
    \right].
    \end{align}
    Choose a third mode $\ell\notin\{j,k\}$, which is possible since $m\ge3$.
    Applying the same identity to the pairs $(j,\ell)$ and $(k,\ell)$ gives
    \be
        R_{jk}+R_{j\ell}-R_{k\ell}=\hat n_j(\hat n_j-1),
    \ee
    and therefore
    \be
        \hat n_j^2=\hat n_j+R_{jk}+R_{j\ell}-R_{k\ell}.
    \ee
    Each $R_{ab}$ is a sum of two Hamiltonians whose evolutions can each be implemented using $O(1)$ gates from $\GcrossKerr$ by \cref{eq:kerr-conjugate}.
    In either gate set, replacing each monomial of $H$ by the corresponding constant-size decomposition yields $L=O(m^2)$.
\end{proof}

Next, we recall the first-order Trotter formula.
Its application to energy-preserving Hamiltonians in the cutoff norm is straightforward.

\begin{lem}[Trotter approximation {\cite[Proposition~9]{ChildsEtAl2021}}]
\label{lem:cutoff-trotter}
    Let $H=\sum_{\ell=1}^L H_\ell$, where each $H_\ell$ is an
    energy-preserving Hamiltonian on $m$ modes.
    For every $t\in\RR$ and integer $r\ge1$,
    \begin{equation}
        \norm{
            e^{itH}
            -\left(\prod_{\ell=1}^L e^{itH_\ell/r}\right)^r
        }_{(K)}
        \le \frac{t^2}{2r}
            \sum_{1\le j<\ell\le L}
            \norm{[H_j,H_\ell]}_{(K)}
        \le \frac{t^2}{2r}
            \left(\sum_{\ell=1}^L\norm{H_\ell}_{(K)}\right)^2.
    \end{equation}
\end{lem}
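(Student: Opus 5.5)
The plan is to reduce the statement to the finite-dimensional Trotter bound of \cite[Proposition~9]{ChildsEtAl2021}, using energy preservation to make every operator block diagonal over photon-number sectors. Each $H_\ell$ commutes with $\hN$, and so does each exponential $e^{itH_\ell/r}$. Both $e^{itH}$ and the product formula therefore commute with $\Pi_{\le K}$ and leave $\calH^{(m)}_{\le K}$ invariant. I will write $P=\Pi_{\le K}$ and $h_\ell\coloneqq PH_\ell P$ for the compressions. Since $P$ commutes with $H_\ell$, we have $h_\ell=H_\ell P$, a bounded Hermitian operator on the finite-dimensional space $\Ran P$.

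The first step is to check that compression commutes with exponentiation and with products. Working sector by sector, $H_\ell$ restricted to $\calH_n$ is a finite Hermitian matrix and $e^{itH_\ell}$ acts there as its exponential. Consequently $Pe^{itH_\ell/r}P=e^{ith_\ell/r}$ on $\Ran P$. Because $P$ commutes with every factor, it passes through the whole product:
\[
P\Bigl(\prod_\ell e^{itH_\ell/r}\Bigr)^rP=\Bigl(\prod_\ell e^{ith_\ell/r}\Bigr)^r P .
\]
The same argument gives $Pe^{itH}P=e^{it\sum_\ell h_\ell}P$. For the unbounded $H$ itself, self-adjointness should be understood as the direct sum over sectors, exactly as in the essential self-adjointness lemma of \cref{sec:parametric-approx2}.

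The second step applies the bounded-operator result of \cite[Proposition~9]{ChildsEtAl2021} to the matrices $h_\ell$ on $\Ran P$. This gives the first inequality with $\norm{[h_j,h_\ell]}$. Since $P$ commutes with everything, $[h_j,h_\ell]=P[H_j,H_\ell]P$, so its norm is $\norm{[H_j,H_\ell]}_{(K)}$.

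The third step gives the second inequality. From $\norm{[h_j,h_\ell]}\le 2\norm{h_j}\norm{h_\ell}$ we get
\[
\sum_{j<\ell}2\norm{h_j}\norm{h_\ell}\le\Bigl(\sum_\ell\norm{h_\ell}\Bigr)^2 ,
\]
and $\norm{h_\ell}=\norm{H_\ell}_{(K)}$.

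The only delicate point is the first step, namely justifying the identification of $Pe^{itH_\ell}P$ with the matrix exponential for unbounded generators. I expect the sector-wise direct-sum description to settle this directly.
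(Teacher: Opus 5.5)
Your proposal is correct and takes the same route as the paper: energy preservation makes every factor block diagonal over photon-number sectors, so on the finite-dimensional range of $\Pi_{\le K}$ the statement reduces to the bounded-operator first-order Trotter bound of \cite{ChildsEtAl2021}. The identities $[PH_jP,PH_\ell P]=P[H_j,H_\ell]P$ and $\norm{[A,B]}\le 2\norm{A}\norm{B}$ then give the two inequalities. The paper also rederives the finite-dimensional bound in \cref{lem:first-order-bound}. It first proves a single-step estimate by differentiating $e^{-is(A+B)}e^{isB}e^{isA}$, then telescopes with $\norm{U^r-V^r}\le r\norm{U-V}$ over the $r$ steps. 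That telescoping is the one step you leave implicit when you invoke Proposition~9.
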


We now have the tools to prove \cref{lem:2level-rotations}.
\cref{claim:fourier-sum} realizes the $\ketbra{\bm u}{\bm v}$ and $\ketbra{\bm v}{\bm u}$ transitions via conjugation with an $m$-mode energy-preserving Hamiltonian $H$.
\cref{claim:quadratic-H} explicitly constructs $H$ to be quadratic in the number operators.
\cref{claim:quadratic-decomposition} gives a decomposition of $H$ into Hamiltonians whose evolution is implementable with both $\GKerr$ and $\GcrossKerr$ gate sets.
\cref{lem:cutoff-trotter} gives the error of splitting $H$ into the implementable terms.
The remaining proof does the rather technical work of assembling these steps rigorously.

\begin{proof}[Proof of \cref{lem:2level-rotations}]
    By the Euler decomposition for $\mathrm{SU}(2)$ \cite[Lemma~3.3]{hamada2014minimum}, write
    \be
        U=e^{iaX_{\bm u,\bm v}}e^{ibY_{\bm u,\bm v}}e^{icX_{\bm u,\bm v}},
        \qquad |a|,|b|,|c|\le2\pi.
    \ee
    It suffices to approximate each factor to error $\eta/3$: the product then approximates $U$ to error $\eta$ and preserves the sector determinants.
    We therefore construct a circuit for $e^{itP}$ with $P\in\{X_{\bm u,\bm v},Y_{\bm u,\bm v}\}$ and $|t|\le2\pi$.

    Write $\bm u=\bm v+\bm e_j-\bm e_k$ and choose $H,q$ from \cref{claim:quadratic-H} for this pair of modes.
    Recall
    \be
        A=\ad_j\a_k,\qquad V=e^{2\pi iH/q},\qquad
        \omega=H(\bm u)-H(\bm v)\pmod q,\qquad
        \alpha=\sqrt{(v_j+1)v_k}\ge1.
    \ee
    Set $B_\phi=e^{i\phi}A+e^{-i\phi}A^\dagger$, so $\norm{B_\phi}_{(K)}\le2K$.
    With $\phi=2\pi\omega/q$, \cref{claim:fourier-sum} states that
    \be
        \ketbra{\bm u}{\bm v}=\frac1{\alpha q}\sum_{r=0}^{q-1}
        e^{-ir\phi}V^rAV^{-r}.
    \ee
    Together with the adjoint, we obtain
    \begin{alignat}{3}
        X_{\bm u,\bm v}
        &= \ketbra{\bm u}{\bm v}+\ketbra{\bm v}{\bm u}&&=\frac1{\alpha q}\sum_{r=0}^{q-1}
        V^r\left(e^{-ir\phi}A+e^{ir\phi}A^\dagger\right)V^{-r}
        &&=\frac1{\alpha q}\sum_{r=0}^{q-1}V^rB_{-r\phi}V^{-r},\\
        Y_{\bm u,\bm v}
        &=-i(\ketbra{\bm u}{\bm v}-\ketbra{\bm v}{\bm u})&&=\frac1{\alpha q}\sum_{r=0}^{q-1}
        V^r\left(-ie^{-ir\phi}A+ie^{ir\phi}A^\dagger\right)V^{-r}
        &&=\frac1{\alpha q}\sum_{r=0}^{q-1}V^rB_{-r\phi-\pi/2}V^{-r}.\notag
    \end{alignat}
    Thus, setting $\theta=0$ for $P=X_{\bm u,\bm v}$ and $\theta=\pi/2$ for $P=Y_{\bm u,\bm v}$, we have
    \begin{equation}\label{eq:hermitian-fourier-sum}
        P=\frac1{\alpha q}\sum_{r=0}^{q-1}V^rB_{-r\phi-\theta}V^{-r}.
    \end{equation}

    We next approximate the conjugating unitaries $V^r$.
    We write the Hamiltonian $H$ constructed in \cref{claim:quadratic-H,claim:quadratic-decomposition}
    as $H=\sum_{\ell=1}^L H_\ell$ with $L=O(m^2)$.
    Each of the $H_\ell$ terms can be implemented exactly with $\GKerr$ or $\GcrossKerr$ gate sets, and so it remains to estimate the error of the Trotter decomposition via \cref{lem:cutoff-trotter}.
    We introduce the bound
    \be
        \Lambda\coloneq\sum_{\ell=1}^L\norm{H_\ell}_{(K)}
        =O(m^2qK^2),
    \ee
    which holds because the coefficients of $H$ are bounded by $q$ and there are $O(m^2)$ monomials of degree $2$ in the number operators, which contribute $O(K^2)$ each.
    We want to approximate each $V^r$ to error at most $\xi$, where $0<\xi<1$ will be chosen below.
    By \cref{lem:cutoff-trotter}, at time $2\pi r/q<2\pi$ it suffices to use
    \be
        J=\left\lceil\frac{2\pi^2\Lambda^2}{\xi}\right\rceil
    \ee
    Trotter steps.
    This produces energy-preserving circuits
    \be
        Q_r=\left(\prod_{\ell=1}^L
        e^{i(2\pi r/q)H_\ell/J}\right)^J,
        \qquad \norm{Q_r-V^r}_{(K)}
        \le\frac{(2\pi r/q)^2\Lambda^2}{2J}
        \le\frac{2\pi^2\Lambda^2}{J}\le\xi,
    \ee
    each using $O(LJ)$ gates, since each of the $J$ Trotter steps contains $L$ evolutions costing $O(1)$ gates each. Thus the gate count for each $Q_r$ is
    $
        O(LJ)=O(m^2\Lambda^2/\xi)=O(m^6q^2K^4/\xi).
    $

    Now that we can approximate the $V^r$ unitaries, we need to apply \cref{lem:cutoff-trotter} again to decompose \cref{eq:hermitian-fourier-sum}.
    Its summand norms sum to at most $2K/\alpha\le2K$.
    We allocate $\eta/6$ of each rotation's $\eta/3$ error budget to this Trotter approximation.
    Since $|t|\le2\pi$, the Trotter error is at most $8\pi^2K^2/M$ for $M$ steps.
    To make this at most $\eta/6$, choose
    \be
        M=\left\lceil\frac{48\pi^2K^2}{\eta}\right\rceil,
        \qquad \tau=\frac{t}{\alpha qM}.
    \ee
    This gives
    \be
        C=\left(\prod_{r=0}^{q-1}
        V^r e^{i\tau B_{-r\phi-\theta}}V^{-r}\right)^M,
        \qquad
        \norm{C-e^{itP}}_{(K)}
        \le\frac{t^2}{2M}(2K)^2
        \le\frac{8\pi^2K^2}{M}\le\eta/6.
    \ee
    Replace each $V^r$ by $Q_r$ and each $V^{-r}$ by $Q_r^\dagger$, obtaining
    \be
        \widetilde{C}=\left(\prod_{r=0}^{q-1}
        Q_r e^{i\tau B_{-r\phi-\theta}}Q_r^\dagger\right)^M.
    \ee
    To bound the replacement error, write $W_r=e^{i\tau B_{-r\phi-\theta}}$ to obtain
    \begin{align}
        \norm{Q_rW_rQ_r^\dagger-V^rW_rV^{-r}}_{(K)}
        &=\norm{Q_r(W_r-\II)Q_r^\dagger-V^r(W_r-\II)V^{-r}}_{(K)}\notag\\
        &\le2\xi\norm{W_r-\II}_{(K)}
        \le4\xi|\tau|K,
    \end{align}
    where we use the fact that $W_r \approx\II$ for small $|\tau|$ for a minor improvement.
    Telescoping over the $qM$ factors of $C$ and choosing $\xi=\eta/(48\pi K)$ yields
    \be
        \norm{\widetilde{C}-C}_{(K)}
        \le4qM\xi|\tau|K
        =\frac{4|t|K\xi}{\alpha}\le\eta/6.
    \ee
    The triangle inequality therefore gives $\norm*{\widetilde{C}-e^{itP}}_{(K)}\le\eta/3$.

    The gate count for each rotation is
    \be
        O\!\left(Mq\,\frac{m^6q^2K^4}{\xi}\right)
        =O(m^6q^3K^7\eta^{-2})
        =O(m^6K^7D^6\eta^{-2}),
    \ee
    using $q=O(D^2)$.
    To see that $\det \widetilde{C}^{(n)}=1$ for all $0\le n\le K$, note that $B_\phi$ has zero diagonal in the Fock basis.
    Hence,
    \be
        \det e^{i\tau B_\phi^{(n)}}=e^{i\tau\Tr B_\phi^{(n)}}=1.
    \ee
    By multiplicativity of the determinant, the determinants of the $Q_r,Q_r^\dagger$ terms cancel to $1$.

    Finally, the construction is efficient, given $U$ to sufficient precision.
    To accommodate numerical error, run the construction with tolerance $\eta/2$ and reserve the remaining $\eta/2$ for computing the Euler angles and rounding gate parameters.
    For a circuit of $G$ gates, parameter accuracy $O(\eta/(GK^2))$ suffices, since the elementary generators have cutoff norm $O(K^2)$.
    The Euler decomposition and gate parameters can be computed efficiently to sufficient precision using standard elementary-function algorithms.
    These computations require only polynomially many bits and $\poly(m,K,D,\eta^{-1})$ time.

\end{proof}

\subsubsection{Coarse compiler}

We now combine adjacent two-level decomposition with \cref{lem:2level-rotations} to obtain a coarse compiler.

\begin{lem}[Adjacent two-level decomposition]\label{lem:adjacent-decomposition}
    Let $d\ge2$, $U\in U(d)$, and $0<\varepsilon<1$.
    Given $U$ to sufficient precision, one can compute in time $\poly(d,\log(1/\varepsilon))$ a sequence of $N=O(d^2)$ two-level unitaries $V_1,\ldots,V_N$, each acting non-trivially only on $\Span\{\ket j,\ket{j+1}\}$ for some $j$ and as the identity on its orthogonal complement, such that
    \be
        \norm{U-V_1\cdots V_N}\le\varepsilon.
    \ee
    If $U\in\mathrm{SU}(d)$, each $V_i$ can be chosen to act as an $\mathrm{SU}(2)$ unitary on its two-dimensional subspace.
\end{lem}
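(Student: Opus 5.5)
The plan is the standard Givens (QR-type) elimination, restricted so that every rotation acts on an adjacent pair $\ket{j},\ket{j+1}$. First compute, from the given description of $U$, an approximation $\widetilde U$ with entries accurate to about $\varepsilon/(c\,d^2)$. Then reduce it to diagonal form column by column.

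For column $1$, eliminate entries from the bottom upwards. Pick a two-level rotation $G$ on $\Span\{\ket{d-1},\ket{d}\}$ that sets entry $(d,1)$ to zero. Next pick one on $\Span\{\ket{d-2},\ket{d-1}\}$ that zeroes entry $(d-1,1)$, and continue up to $\Span\{\ket1,\ket2\}$. Each rotation only mixes rows $i$ and $i+1$, so previously created zeros in column $1$ (rows below $i+1$) stay untouched.

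Each $G$ can be taken in $\mathrm{SU}(2)$: if $(x,y)$ are the two entries, use
\[
G=\frac{1}{r}\begin{pmatrix}\bar x&\bar y\\-y&x\end{pmatrix},\qquad r=\sqrt{|x|^2+|y|^2},
\]
and use $G=\II$ when $r=0$. After this pass column $1$ is $(\lambda_1,0,\ldots,0)^T$ with $|\lambda_1|=1$. By unitarity row $1$ then also vanishes off the diagonal, so later rotations, which act only on rows $\ge2$, never disturb it.

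Repeat for columns $2,\ldots,d-1$. This uses $N\le d(d-1)/2$ rotations and gives
\[
G_N\cdots G_1\widetilde U=\Lambda=\diag(\lambda_1,\ldots,\lambda_d).
\]
Hence $\widetilde U=G_1^\dagger\cdots G_N^\dagger\Lambda$, and each $G_i^\dagger$ is an adjacent $\mathrm{SU}(2)$ two-level unitary.

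The diagonal $\Lambda$ must also be written as adjacent two-level factors. Take $\diag(\lambda_1,\bar\lambda_1)$ on $\{\ket1,\ket2\}$. This leaves $\diag(1,\lambda_1\lambda_2,\lambda_3,\ldots)$. Continue with $\diag(\mu,\bar\mu)$ on $\{\ket2,\ket3\}$ where $\mu=\lambda_1\lambda_2$, and so on up to $\{\ket{d-1},\ket d\}$. The last entry is left equal to $\det\Lambda$, which is at most one further factor.

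If $U\in\mathrm{SU}(d)$, then $\det\widetilde U$ is close to $1$, so we should first rescale it exactly to determinant $1$. This adds only $O(\varepsilon/d^2)$ error, for example by multiplying the last diagonal phase by $(\det\widetilde U)^{-1}$. Then $\det\Lambda=1$, every factor is in $\mathrm{SU}(2)$, and $N=O(d^2)$. For general $U\in U(d)$ the leftover phase $\det\Lambda$ on $\ket d$ is absorbed into the final two-level factor on $\{\ket{d-1},\ket d\}$, which is then a general $U(2)$ element.

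The step I expect to need the most care is the error and precision bookkeeping, not the algebra. The elimination is performed on an approximate, non-exactly-unitary $\widetilde U$, so the computed $\Lambda$ is only approximately diagonal and unitary. The plan for handling this:
\begin{itemize}
\item Project $\widetilde U$ to the nearest unitary, via polar decomposition computed to precision $\poly(d)$ bits plus $\log(1/\varepsilon)$.
\item Compute each rotation with $O(\log(d/\varepsilon))$ bits, so the per-factor error is $\le\varepsilon/(4N)$.
\item Telescope: $\|U-V_1\cdots V_N\|$ is at most the sum of the per-factor errors plus the initial approximation error, since all factors are unitary.
\end{itemize}
Each rotation requires $O(d)$ arithmetic operations on rows, which gives the $\poly(d,\log(1/\varepsilon))$ running time.
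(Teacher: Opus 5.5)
Your exact-arithmetic construction is the paper's: bottom-to-top elimination by determinant-one Givens rotations on consecutive rows, the fact that the remaining unitary triangular factor is diagonal, and the factorization of that diagonal through the partial products $\lambda_1\cdots\lambda_j$ (the paper's $z_j$). Absorbing the leftover phase $\det\Lambda$ into the last factor for general $U\in U(d)$ is also fine.

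The step that does not work as written is the precision bookkeeping. Your telescoping compares each computed $V_i$ with the factor that exact arithmetic would produce. Working precision alone cannot make those close. The map $(x,y)\mapsto G$ is discontinuous at $(0,0)$, with Lipschitz constant of order $1/r$. Vanishing pairs are typical for unitaries: for a permutation matrix, entries that should be $0$ arrive as noise of size $2^{-b}$, and the computed rotation is then arbitrary while the ideal one is $\II$. So a sum of per-factor errors is not small. Projecting $\widetilde U$ onto the unitaries by a polar decomposition does not fix this, and in finite precision it does not return an exactly unitary matrix anyway.

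What works is a backward-error argument; this is what the paper's appeal to backward stability of Givens QR amounts to.
\begin{itemize}
\item Take each $V_i$ to be the \emph{exactly} unitary rotation defined by the rounded parameters.
\item At each step the only errors are the rounding of the two updated rows and the residual left in the entry you declare zero.
\item That residual is the second entry of $(\widehat G-G)(x,y)^T$, so it is at most $\|\widehat G-G\|\,r$. This is $O(2^{-b})$ however ill-conditioned $(x,y)$ is: either $\widehat G$ is accurate, or $r$ is below working precision and you may take $\widehat G=\II$.
\item Since the $V_i$ are exactly unitary, these errors simply add. Hence $\widehat Q^\dagger\widetilde U=T+E$, where $\widehat Q$ is the product of the rotations, $T$ is exactly upper triangular, and $\|E\|\le\poly(d)\,2^{-b}$.
\end{itemize}

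You then still need a step your plan only alludes to: an upper-triangular $T$ within $\eta$ of a unitary $W$ is within $\poly(d)\,\eta$ of a diagonal unitary. Apply this with $W=\widehat Q^\dagger U$ and $\eta=\|\widetilde U-U\|+\|E\|$. One proof: $P_{ij}=|W_{ij}|^2$ is doubly stochastic, so for each $k$ the mass of $P$ in the upper-right $k\times(d-k)$ block equals the mass in the lower-left block. Summing over $k$, the strictly upper mass is at most $d$ times the strictly lower mass. The strictly lower mass is $O(d^2\eta^2)$, because every strictly lower entry of $W$ has modulus at most $\eta$.

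Finally, your determinant correction in the $\mathrm{SU}(d)$ case costs about $d$ times the operator-norm error of $\widetilde U$. With your entrywise precision that is order $\varepsilon/c$, not $O(\varepsilon/d^2)$; this is harmless once $c$ is taken large.
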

\begin{proof}
    We apply the Givens QR decomposition to $U$ to write it as $U=QR$, where $Q$ is a product of two-level rotations and $R$ is upper triangular.
    Bottom-to-top elimination uses at most $d(d-1)/2$ complex determinant-one rotations on consecutive rows \cite[Section~5.1.13 and Algorithm~5.2.4]{golubMatrixComputations2013}.
    Since $U$ and $Q$ are unitary, $R$ is also unitary and hence diagonal.
    Write $R=\operatorname{diag}(\lambda_1,\ldots,\lambda_d)$ and decompose into $d-1$ additional adjacent two-level factors.
    For $U\in\mathrm{SU}(d)$, we also have $\det R=1$, and these factors can be chosen with determinant one as follows.
    Put $z_j=\prod_{k=1}^j\lambda_k$.
    The product, for $j=1,\ldots,d-1$, of the adjacent diagonal blocks $\operatorname{diag}(z_j,z_j^{-1})$ equals $R$.
    Thus all factors can be chosen in $\mathrm{SU}(2)$.

    Standard backward stability of Givens QR, extended to complex arithmetic \cite[Theorem~19.10 and Section~3.6]{Higham22}, together with rounding the resulting rotation parameters, gives error $\poly(d)2^{-b}$ at $b$ working bits.
    Hence $b=O(\log(d/\varepsilon))$ suffices, and the computation takes $\poly(d,\log(1/\varepsilon))$ time.
\end{proof}

\begin{lem}[Coarse compiler]\label{lem:coarse-compiler}
    Let $U\in\mathrm{SU}(D_0,\ldots,D_K)$ on $\calH_{\le K}^{(m)}$, where $D_n=\dim\calH_n^{(m)}$.
    For $0<\epsilon_0<1$, there is a circuit $C_U$ such that
    \be
        \norm{C_U-U}_{(K)}\le\epsilon_0,
        \qquad \det C_U^{(n)}=1\quad(0\le n\le K),
    \ee
    using $O(m^6K^7D^{12}\epsilon_0^{-2})$ gates from $\GKerr$ (for $m\ge2$) or $\GcrossKerr$ (for $m\ge3$). Here $D = \sum_{j=0}^K D_j$.
    Given $U$ to sufficient precision, the circuit can be computed in $\poly(m,K,D,\epsilon_0^{-1})$ time.
    In particular, for fixed $\epsilon_0$ this gives a polynomial-size coarse compiler.
\end{lem}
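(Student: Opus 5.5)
The plan is to combine the adjacent two-level decomposition (\cref{lem:adjacent-decomposition}) on each photon-number sector with the rotation lemma (\cref{lem:2level-rotations}), and then add up the errors with a telescoping bound.

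\emph{Step 1: reduce to adjacent two-level factors.} Write $U=\bigoplus_{n=0}^K U^{(n)}$ with $U^{(n)}\in\mathrm{SU}(D_n)$. In each sector $\calH_n^{(m)}$, order the Fock states by a Gray-code ordering \cite{klingsberg1982gray}, so that consecutive states differ by moving a single photon, i.e.\ they are adjacent. Applying \cref{lem:adjacent-decomposition} to $U^{(n)}$ in this ordering, with precision $\epsilon_0/2$, gives $O(D_n^2)$ factors. Each factor acts as an $\mathrm{SU}(2)$ unitary on a pair of adjacent Fock states and as the identity elsewhere, including on all other sectors. Because the sectors are orthogonal and each factor touches only one sector, the factors from different sectors commute. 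The full product therefore approximates $U$ in $\norm{\cdot}_{(K)}$ to within $\epsilon_0/2$: the error is a direct sum, so it is bounded by the maximum over sectors. The total number of factors is $N=\sum_n O(D_n^2)=O(D^2)$.

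\emph{Step 2: compile each factor.} Replace each two-level factor $V_i$ by the circuit from \cref{lem:2level-rotations} with precision $\eta=\epsilon_0/(2N)=\Theta(\epsilon_0/D^2)$. Every such circuit, like $V_i$ itself, preserves the cutoff space $\calH^{(m)}_{\le K}$, since all gates are energy-preserving. This lets us telescope in the cutoff seminorm: $\norm{\prod_i C_i-\prod_i V_i}_{(K)}\le\sum_i\norm{C_i-V_i}_{(K)}\le N\eta=\epsilon_0/2$. The one point to check is that $\norm{\cdot}_{(K)}$ is submultiplicative on operators that preserve $\Ran\Pi_{\le K}$; this holds because $\Pi_{\le K}$ commutes with such operators. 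The triangle inequality then gives total error $\epsilon_0$. Each $C_i$ has $\det C_i^{(n)}=1$ on every sector, so the product does as well.

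\emph{Step 3: count gates and time.} There are $O(D^2)$ factors, each costing $O(m^6K^7D^6\eta^{-2})=O(m^6K^7D^{10}\epsilon_0^{-2})$ gates. The total is $O(m^6K^7D^{12}\epsilon_0^{-2})$, as claimed. Regarding runtime: the Gray orderings are computable in $\poly(D)$, the QR decomposition runs in $\poly(D,\log(1/\epsilon_0))$, and each rotation circuit is built in $\poly(m,K,D,\eta^{-1})$, which together give $\poly(m,K,D,\epsilon_0^{-1})$.

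The main obstacle is mild. It is to make sure the Gray-code ordering really yields adjacent consecutive states within each fixed-$n$ sector, which is the content of \cite{klingsberg1982gray} for compositions of $n$ into $m$ parts. One must also check that the $\mathrm{SU}(2)$ determinant-one choice in \cref{lem:adjacent-decomposition} carries over, so that each block keeps determinant one. Everything else is bookkeeping with the error budget.
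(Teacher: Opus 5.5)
Your proposal is correct and matches the paper's proof step for step. You Gray-order each sector, apply the adjacent two-level decomposition at tolerance $\epsilon_0/2$ to get $N=O(D^2)$ factors in $\mathrm{SU}(2)$, compile each factor with \cref{lem:2level-rotations} at accuracy $\epsilon_0/(2N)$, and telescope in $\norm{\cdot}_{(K)}$ to reach $O(m^6K^7D^{12}\epsilon_0^{-2})$ gates; your remarks on the direct-sum structure of the sector errors and on submultiplicativity for cutoff-preserving operators are correct details the paper leaves implicit.
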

\begin{proof}
    Order the Fock states in each nonvacuum sector so that consecutive states differ by moving one photon, using the efficient construction of a Gray ordering of compositions into nonnegative parts given by Klingsberg~\cite{klingsberg1982gray}.
    Apply \cref{lem:adjacent-decomposition} to each $U^{(n)}$ with tolerance $\epsilon_0/2$ to obtain a product $W=V_1\cdots V_N$ with $N=O(D^2)$ satisfying
    $
        \norm{U-W}_{(K)}\le\epsilon_0/2
    $,
    where each $V_j$ acts as an $\mathrm{SU}(2)$ unitary on adjacent Fock states and as the identity elsewhere.

    For $N\ge1$, apply \cref{lem:2level-rotations} to each $V_j$ with accuracy $\eta=\epsilon_0/(2N)$, obtaining circuits $C_j$.
    Telescoping gives
    \be
        \norm{C_1\cdots C_N-W}_{(K)}
        \le\sum_{j=1}^N\norm{C_j-V_j}_{(K)}\le\epsilon_0/2.
    \ee
    Thus $C_U=C_1\cdots C_N$ has the required accuracy, and its sector determinants are one by multiplicativity.
    Its gate count is
    \be
        O\!\left(Nm^6K^7D^6\left(\frac{N}{\epsilon_0}\right)^2\right)
        =O(m^6K^7D^{12}\epsilon_0^{-2}).
    \ee
    The Gray ordering, numerical decomposition, and individual rotation compilations all take polynomial time.
\end{proof}

\subsubsection{Solovay--Kitaev details}
\label{sec:SK-details}

We use the block-diagonal group $\mathrm{SU}(d_1,\ldots,d_K)$ defined above. Let $L, T : \mathbb{R}_{\geq 0} \rightarrow \mathbb N$. We say a universal gate set $\mathcal{G}$ has a $(L, T)$-coarse compiler if for any $U \in \mathrm{SU}(d_1,\ldots,d_K)$ and $\epsilon_0 > 0$ there exists a circuit $U_0\in\mathrm{SU}(d_1,\ldots,d_K)$ of length $L(\varepsilon_0)$ computable in time $T(\varepsilon_0)$ such that
\be
\norm{U - U_0} \leq \varepsilon_0.
\ee
For the classical runtime bounds below, we additionally assume that the block matrices of any circuit returned by the coarse compiler can be computed to $p$ bits of precision in $T(\varepsilon_0)\poly(p+\log d_{\max})$ bit operations, where $d_{\max}=\max_j d_j$.
Let
\be
c_{\rm SK} := \log_{3/2} (5) \approx 3.97.
\ee

\begin{theorem}\label{thm:block-sk}
    Let $U \in \mathrm{SU} (d_1, \ldots, d_K)$ and let $\mathcal{G}$ be a gate set that is universal for $\mathrm{SU} (d_1, \ldots, d_K)$ containing inverses with an $(L,T)$ coarse compiler. Let $0<\varepsilon<1/2$. For any $0 < \varepsilon_0 \leq \min\{10^{-5}, 1/d_1,\ldots, 1/d_K\}$, there exists a circuit $\widehat U$ as a sequence of $L_\varepsilon = O( L(\varepsilon_0) \log^{c_{\rm SK}} (1/\varepsilon))$ gates from $\mathcal{G}$, such that \be
    \norm{U - \widehat U} \leq \varepsilon.
    \ee
    Assuming $1/\varepsilon_0 \leq \poly(\max_j d_j)$, there is furthermore a classical algorithm with running time $t_\varepsilon = \widetilde O(T(\varepsilon_0) + \sum_j d_j^4) \mathrm{polylog}(1/\varepsilon)$ which produces a succinct representation of $\widehat U$.
\end{theorem}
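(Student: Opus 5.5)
We run the standard Dawson--Nielsen recursion for Solovay--Kitaev, but in the block-diagonal group $G=\mathrm{SU}(d_1,\ldots,d_K)$. Its Lie algebra $\mathfrak g=\bigoplus_j\mathfrak{su}(d_j)$ is a direct sum of simple (or zero) summands. Define $\widehat U_0$ as the output of the coarse compiler at precision $\varepsilon_0$. Recursively, given $\widehat U_{n-1}$ with $\|U-\widehat U_{n-1}\|\le\varepsilon_{n-1}$, set $\Delta=U\widehat U_{n-1}^\dagger\in G$. We then write $\Delta$ (to second order) as a balanced group commutator $VWV^\dagger W^\dagger$ with $V,W\in G$ and $\|V-\II\|,\|W-\II\|=O(\sqrt{\varepsilon_{n-1}})$. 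Recursively approximate $V,W$ by $\widehat V_{n-1},\widehat W_{n-1}$, and output $\widehat U_n=\widehat V_{n-1}\widehat W_{n-1}\widehat V_{n-1}^\dagger\widehat W_{n-1}^\dagger\widehat U_{n-1}$. Because $\mathcal G$ contains inverses, each level multiplies the length by $5$. The standard estimate then gives
\[
\varepsilon_n\le c\,\varepsilon_{n-1}^{3/2}.
\]
With $c\varepsilon_0\le 1$ (enforced by $\varepsilon_0\le10^{-5}$), we have $\varepsilon_n\to0$ doubly exponentially. Stopping at depth $n=O(\log\log(1/\varepsilon))$ gives length $5^nL(\varepsilon_0)=O(L(\varepsilon_0)\log^{\log_{3/2}5}(1/\varepsilon))$, matching $c_{\rm SK}$.

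The key step is the balanced commutator decomposition, done blockwise. Write $\Delta=\bigoplus_j\Delta_j$ and $\Delta_j=e^{iH_j}$ with traceless Hermitian $H_j$ and $\|H_j\|=O(\varepsilon_{n-1})$. For each block we use the $\mathfrak{su}(d)$ version of the Dawson--Nielsen lemma: there are traceless Hermitian $F_j,G_j$ with $[F_j,G_j]=-iH_j$ (or the correct normalization) and $\|F_j\|,\|G_j\|\le c_d\sqrt{\|H_j\|}$. The construction diagonalizes $H_j=\sum_k h_k\ketbra kk$. It takes $G_j$ diagonal with spread-out entries and $F_j$ off-diagonal with entries $\propto h$-dependent ratios, exactly as in the qudit SK proofs.

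The main obstacle is that the constant $c_d$ must not blow up with $d$, since the recursion only closes when $c\,\varepsilon_0<1$. This is why the hypothesis involves $\varepsilon_0\le1/d_j$. I expect to show $\|F_j\|\|G_j\|\lesssim d_j\|H_j\|$ (or $\|F_j\|,\|G_j\|\lesssim\sqrt{d_j\|H_j\|}$). Then $\varepsilon_n\le C d_{\max}^{1/2}\varepsilon_{n-1}^{3/2}$ with an absolute $C$, and $\varepsilon_0\le 1/d_{\max}$ makes $Cd_{\max}^{1/2}\varepsilon_0^{1/2}\le C\varepsilon_0^{0}\cdot$ small enough given the $10^{-5}$ factor.

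Choosing $V=\bigoplus_j e^{iF_j}$ and $W=\bigoplus_j e^{iG_j}$ then works simultaneously on all blocks, and the operator norm is a max over blocks. Hence every estimate is blockwise and independent of $K$. Third-order BCH remainders are handled as usual by $\|VWV^\dagger W^\dagger-\Delta\|=O(\varepsilon_{n-1}^{3/2})$, again uniformly in the block sizes once the $\sqrt d$ factor is absorbed.

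For the classical runtime, each recursion level requires computing $\Delta$ to $O(\log(1/\varepsilon))$ bits. This means multiplying block matrices and diagonalizing each block, costing $\widetilde O(\sum_j d_j^3)$ per node, or $d_j^4$ with bit precision, $\mathrm{polylog}(1/\varepsilon)$. The product is evaluated along the recursion tree. Memoizing identical recursive calls keeps the number of distinct subproblems at $O(3^n)=\mathrm{polylog}(1/\varepsilon)$. The succinct representation is the recursion DAG with the coarse-compiled leaves, costing $T(\varepsilon_0)$ each. Together these give $t_\varepsilon=\widetilde O(T(\varepsilon_0)+\sum_jd_j^4)\,\mathrm{polylog}(1/\varepsilon)$.
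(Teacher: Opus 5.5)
There is a genuine gap at the step you yourself flag as the main obstacle. First, the construction as you describe it cannot work. If $G_j$ is diagonal in the eigenbasis of $H_j$, then $[F_j,G_j]_{kk}=F_{kk}g_k-g_kF_{kk}=0$ for every $F_j$. So the commutator can never equal the diagonal matrix $-iH_j$ unless $H_j=0$. The usual qudit route first rotates $H_j$ to zero diagonal, and that route gives constants that grow with $d$.

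Second, your plan to absorb that growth does not close the recursion. If $\|F_j\|,\|G_j\|\lesssim\sqrt{d\,\varepsilon_{n-1}}$, the third-order remainder $\|\gcomm{e^{iF}}{e^{iG}}-e^{-[F,G]}\|=O(\max\{\|F\|,\|G\|\}^3)$ is of order $d^{3/2}\varepsilon_{n-1}^{3/2}$, not $d^{1/2}\varepsilon_{n-1}^{3/2}$. The recursion $\varepsilon_n\le Cd^{3/2}\varepsilon_{n-1}^{3/2}$ then converges only if $\varepsilon_0\lesssim d^{-3}$. Even your claimed $\varepsilon_n\le Cd^{1/2}\varepsilon_{n-1}^{3/2}$ would need $C^2d\,\varepsilon_0<1$. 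But $\varepsilon_0\le\min\{10^{-5},1/d\}$ only gives $d\,\varepsilon_0\le1$: the two constraints enter through a minimum, not a product, so for $d\ge 10^5$ no small factor is left over. Relatedly, convergence of $\varepsilon_n\le c\,\varepsilon_{n-1}^{3/2}$ needs $c^2\varepsilon_0<1$, not $c\,\varepsilon_0\le 1$.

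The missing idea is that the balanced decomposition can be made dimension-free. The paper does this via Kitaev's construction (\cref{lem:Ra-Ra}):
\begin{itemize}
\item In the eigenbasis of the traceless $H$, order the eigenvalues so that all partial sums $s_k$ lie in $[0,2\|H\|]$.
\item Put $T=\sum_{k<d}\sqrt{s_k/2}\,\ketbra{k}{k+1}$, $A=i(T-T^\dagger)$ and $B=T+T^\dagger$.
\item Then $[A,B]=2i[T,T^\dagger]=iH$ and $\|A\|,\|B\|\le 2\sqrt{\|H\|}$ for every $d$.
\end{itemize}
Combined with \cref{lem:GC-approx} and \cite[Lemma~1]{dawson2005solovay}, this gives $\varepsilon_n\le A\,\varepsilon_{n-1}^{3/2}$ with an absolute constant $A$, and $\varepsilon_0\le 10^{-5}<A^{-2}$ suffices.

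The hypothesis $\varepsilon_0\le 1/d_j$ therefore plays a different role from the one you assign it. It is what justifies your unargued claim that $\Delta_j=e^{iH_j}$ with $H_j$ traceless and $\|H_j\|=O(\varepsilon_{n-1})$. The principal eigenphases satisfy $|\delta_k|\le 2\delta$ and sum to a multiple of $2\pi$. The condition $\delta\le 1/d$ forces $|\sum_k\delta_k|\le 2<2\pi$, so the sum is $0$ (\cref{lem:log-Delta-branch}). Since $\varepsilon_{n-1}\le\varepsilon_0$, this holds at every level. Without such a condition the claim is false: $e^{2\pi i/d}\II\in\mathrm{SU}(d)$ lies within $2\pi/d$ of $\II$, yet every traceless logarithm of it has norm at least $2\pi(1-1/d)$.

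Your runtime accounting (three recursive calls per level, working precision growing like $(3/2)^n$) matches the paper's and is fine once the decomposition step is repaired.
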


\begin{algorithm}[tbp]
\caption{Recursive Solovay--Kitaev compilation}
\label{alg:solovay-kitaev}
\begin{algorithmic}%
    \Function{\texttt{SK}}{$U, n$}
        \If{$n = 0$}
            \State \Return the coarse approximation for $U$
        \Else
            \State $U_{n-1} \gets \texttt{SK}(U, n-1)$
            \State $(V, W) \gets \texttt{GC-Decompose}(U U^\dagger_{n-1})$
            \State $V_{n-1} \gets \texttt{SK}(V, n-1)$
            \State $W_{n-1} \gets \texttt{SK}(W, n-1)$
            \State $U_n \gets V_{n-1} W_{n-1} V^\dagger_{n-1} W^\dagger_{n-1} U_{n-1}$
            \State \Return $U_n$
        \EndIf
    \EndFunction
\end{algorithmic}
\end{algorithm}

\begin{proof}
We follow the Solovay--Kitaev algorithm outlined in \cite{dawson2005solovay}, shown in \cref{alg:solovay-kitaev}.
Let $\varepsilon_n$ be an upper bound on the approximation error obtained for level $n$ input to the algorithm, $L_n$ be the size of the corresponding circuit and $t_n$ be the time complexity for performing this algorithm.

For $n = 0$, apply the assumed $(L,T)$-coarse compiler at accuracy $\varepsilon_0$ to obtain a circuit ${U}_0$ such that
\be
\norm{U - U_0} \leq \varepsilon_0.
\ee
This is the coarse estimator in the first step. 

If $n > 0$, the algorithm recursively calls the procedure. It first obtains ${U}_{n-1}$ satisfying
\be
\norm{U - {U}_{n-1}} \leq \varepsilon_{n-1}.
\ee
Let $\Delta_{n-1} := U U^\dagger_{n-1}$ which satisfies $\norm{\Delta_{n-1} - \II} \leq \varepsilon_{n-1}$.
In exact arithmetic, \texttt{GC-Decompose} takes $\Delta_{n-1}$ and outputs a pair of matrices $V, W \in \mathrm{SU} (d_1, \ldots, d_K)$ such that $\norm{V - \II} ,\norm{W-\II} \leq 2 \sqrt{2\varepsilon_{n-1}}$ and
\be
\norm{\Delta_{n-1} - V W V^\dagger W^\dagger} \leq C \varepsilon_{n-1}^{3/2}.
\ee
This is achieved in \cref{thm:GC-Decompose}.

Next we use two additional calls to the \texttt{SK} procedure to obtain $V_{n-1}$ and $W_{n-1}$ which satisfy
\be
\norm{V - V_{n-1}} \leq \varepsilon_{n-1}, \quad \norm{W - W_{n-1}} \leq \varepsilon_{n-1}.
\ee
We finally set $U_n = V_{n-1} W_{n-1} V^\dagger_{n-1} W^\dagger_{n-1}U_{n-1}$. We can now see
\begin{align}
\begin{split}
\norm{U - U_{n}} &= \norm{\Delta_{n-1} - V_{n-1} W_{n-1} V^\dagger_{n-1} W^\dagger_{n-1}}\\
&\leq \norm{\Delta_{n-1} - VWV^\dagger W^\dagger} +\norm{VWV^\dagger W^\dagger- V_{n-1} W_{n-1} V^\dagger_{n-1} W^\dagger_{n-1}}\\
&\leq C \varepsilon_{n-1}^{3/2} + 21 \varepsilon_{n-1}^{3/2},
\end{split}
\end{align}
where in the last line we have used \cite[Lemma 1]{dawson2005solovay} and assumed $\varepsilon_{n-1} \leq 1$.

Choose $b$ in \cref{lem:GC-decomp} so that $d_{\max}2^{-b}\leq c\varepsilon_{n-1}^{3/2}$, where $d_{\max}=\max_j d_j$ and $c>0$ is a sufficiently small absolute constant. The decomposition and rounding errors then add at most $\varepsilon_{n-1}^{3/2}$ to the bound on $\norm{U-U_n}$. Let $A = C + 22$ (which can be taken to be $A = 294$ using the value of $C$ indicated in \cref{lem:GC-decomp}). We will have
\be
\varepsilon_n \coloneq A \varepsilon_{n-1}^{3/2}.
\ee
Therefore
\be
\varepsilon_n \leq A^{-2} (A^2 \varepsilon_0)^{(3/2)^n}.
\ee
This procedure works for any 
\be
\varepsilon_0 \leq 10^{-5} < 1/A^2.
\ee
We, however, need to ensure $\varepsilon_0 \leq \frac{1}{\max_j d_j}$ alongside the assumption above in order to make sure we can use \cref{thm:GC-Decompose} in every iteration $n > 0$. The length of the circuit satisfies $L_n \leq 5 L_{n-1}$. Therefore
\be
L_n \leq L(\varepsilon_0) 5^n.
\ee
Standard reasoning implies the dependency $L_\varepsilon$ stated in the theorem.

We finally analyze the classical running time $t_n$. Let
\begin{equation}
    d_{\max}:=\max_j d_j.
\end{equation}
At recursion level $n$, the \texttt{GC-Decompose} routine is applied
blockwise to a matrix $\Delta$ satisfying
\begin{equation}
    \norm{\Delta-\II}\leq\varepsilon_{n-1}.
\end{equation}
We choose the working precision so that the numerical errors contribute at most $\varepsilon_{n-1}^{3/2}$ to the error budget above.

By \cref{lem:GC-decomp}, it suffices to compute the matrices produced by
\texttt{GC-Decompose} to
\begin{equation}
    p_n
    =
    O\left(
        \log d_{\max}
        +
        \log\frac{1}{\varepsilon_{n-1}}
    \right)
\end{equation}
bits of output precision. The additional working precision required
internally by \cref{lem:GC-decomp} changes this quantity only by a
constant factor and an additive $O(\log d_{\max})$ term.

Since
\begin{equation}
    \log\frac{1}{\varepsilon_n}
    =
    O\left(
        (3/2)^n\log\frac{1}{\varepsilon_0}
    \right),
\end{equation}
we may choose
\begin{equation}
    p_n
    =
    \Theta\left(
        (3/2)^n
        \log\frac{d_{\max}}{\varepsilon_0}
    \right).
\end{equation}

By \cref{lem:GC-decomp}, the blockwise implementation of
\texttt{GC-Decompose} therefore has running time
\begin{equation}
    T_{\rm GC}(n)
    =
    \left(\sum_{j=1}^K d_j^4\right)
    \poly(p_n+\log d_{\max}).
\end{equation}
Let $r\geq1$ be a fixed exponent covering the bit complexity of
commutator decomposition and matrix evaluation of the circuits returned by recursive calls.
The running time consequently satisfies
\begin{equation}
    t_n
    \leq
    3t_{n-1}
    +
    \left(T(\varepsilon_0)+\sum_{j=1}^K d_j^4\right)
    \widetilde O(p_n^r).
\end{equation}
Iterating the recurrence gives
\begin{align}
    t_n
    &\leq
    3^n T(\varepsilon_0)
    +
    \left(T(\varepsilon_0)+\sum_{j=1}^K d_j^4\right)
    \widetilde O\left(
        \sum_{k=1}^n
        3^{n-k}(3/2)^{rk}
    \right)
    \log^r\left(\frac{d_{\max}}{\varepsilon_0}\right).
\end{align}
The geometric sum satisfies
\begin{equation}
    \sum_{k=1}^n
    3^{n-k}(3/2)^{rk}
    =
    \widetilde O\left(
        \max\{3^n,(3/2)^{rn}\}
    \right).
\end{equation}
Using
\begin{equation}
    \frac{1}{\varepsilon_0}
    \leq
    \poly(d_{\max}),
\end{equation}
the factor
\begin{equation}
    \log^r\left(\frac{d_{\max}}{\varepsilon_0}\right)
\end{equation}
can be absorbed into the $\widetilde O$ notation. Hence
\begin{equation}
    t_n
    =
    \widetilde O\left(
        T(\varepsilon_0)+\sum_{j=1}^K d_j^4
    \right)
    \max\{3^n,(3/2)^{rn}\}.
\end{equation}

Since
\begin{equation}
    (3/2)^n
    =
    O\left(\log\frac{1}{\varepsilon}\right),
\end{equation}
we have
\begin{equation}
    3^n
    =
    O\left(
        \log^{\log_{3/2}3}\frac{1}{\varepsilon}
    \right)
\end{equation}
and
\begin{equation}
    (3/2)^{rn}
    =
    O\left(
        \log^r\frac{1}{\varepsilon}
    \right).
\end{equation}
Therefore, setting
\begin{equation}
    c_{\rm CL}
    :=
    \max\left\{
        \log_{3/2}3,r
    \right\},
\end{equation}
we obtain
\begin{equation}
    t_\varepsilon
    =
    \widetilde O\left(
        \left(
            T(\varepsilon_0)+\sum_jd_j^4
        \right)
        \log^{c_{\rm CL}}(1/\varepsilon)
    \right).
\end{equation}

 \end{proof}

For unitary matrices we use the notation $\gcomm{V}{W} := V W V^{-1} W^{-1}$ to denote the group commutator. 
\begin{theorem}
Suppose $\Delta \in \mathrm{SU} (d_1, \ldots, d_K)$ and $\norm{\Delta - \II} \leq \delta$ and suppose $\delta \leq \frac{1}{\max_j d_j}$. Then there exist $V, W\in\mathrm{SU}(d_1,\ldots,d_K)$ and an absolute constant $C >0$ such that $\norm{V - \II}, \norm{W - \II} \leq 2 \sqrt{2\delta}$ and
\be
\norm{\Delta - \gcomm{V}{W}} \leq C \delta^{3/2}.
\ee
\label{thm:GC-Decompose}
\end{theorem}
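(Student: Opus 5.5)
The plan is to follow the balanced group commutator construction of Dawson--Nielsen~\cite{dawson2005solovay}, carried out blockwise. The block structure is harmless because everything is applied separately to each $\mathrm{SU}(d_j)$ block. The only subtle point is keeping $V,W$ inside $\mathrm{SU}$, not merely $\mathrm{U}$, on every block; this is where the hypothesis $\delta\le 1/\max_j d_j$ is used.

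\textbf{Step 1 (logarithm).} On each block write $\Delta_j=e^{-iH_j}$, where $H_j$ is Hermitian with eigenvalues in $(-\pi,\pi]$. The principal logarithm satisfies $\|H_j\|\le c\,\delta$ once $\|\Delta_j-\II\|\le\delta$ is small, with $c=\pi/2$ via $|e^{i\theta}-1|\ge 2|\theta|/\pi$. Since $\det\Delta_j=1$, $\Tr H_j\in 2\pi\ZZ$. But $|\Tr H_j|\le d_j\|H_j\|\le c\,d_j\delta\le c<2\pi$, so $\Tr H_j=0$. Thus $H_j$ is traceless, and this is exactly where $\delta\le 1/d_j$ enters.

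\textbf{Step 2 (exact commutator of generators).} For a traceless Hermitian $H$ with $\|H\|\le c\delta$, the aim is to construct traceless Hermitian $F,G$ with $-i[F,G]=H$ and $\|F\|,\|G\|\le\sqrt{\|H\|/2}$ (up to a constant). Following~\cite{dawson2005solovay}, diagonalize $H=T\,\mathrm{diag}(h_1,\dots,h_d)\,T^\dagger$ with $\sum_k h_k=0$. Next, rotate by a unitary that makes the diagonal of $H$ zero; a traceless Hermitian matrix is unitarily similar to one with zero diagonal. Then take $G$ diagonal with distinct entries $g_k=k\sqrt{\|H\|}/d$ (or similar), and set $F_{kl}=iH'_{kl}/(g_k-g_l)$ for $k\ne l$ and $F_{kk}=0$. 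With this choice both $F$ and $G$ are traceless ($G$ shifted to be traceless), Hermitian, and satisfy $-i[F,G]=H$.

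This norm bound is where I expect the main obstacle. Dividing by $g_k-g_l$ naively produces dimension factors, and the statement allows only the absolute constant $2\sqrt{2\delta}$. I would first try the Dawson--Nielsen route. It sets $V=e^{i\theta F}$ and $W=e^{i\theta G}$, where $F,G$ are rotation generators on a suitable two-dimensional structure, obtained by decomposing $H$ into commuting $\mathfrak{su}(2)$-type pieces via its eigenbasis (pairing positive and negative eigenvalues, which tracelessness allows). Inside each pair one solves the $\mathrm{SU}(2)$ balanced commutator exactly, with $\|\cdot\|\le\sqrt{2\|H\|}$. Since the pieces act on orthogonal subspaces, they commute and the norm bounds do not accumulate. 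If exact pairing fails, because the eigenvalues do not cancel in pairs, I would split eigenvalues (a greedy chain pairing giving a sum of $d-1$ traceless two-level terms). I would then check that these can still be arranged on a Gray-code-like set of mutually orthogonal two-level pieces, at worst two commuting families, and absorb the resulting factor $2$ into the $2\sqrt{2\delta}$.

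\textbf{Step 3 (BCH error).} With $V=e^{iF}$ and $W=e^{iG}$, and $\|F\|,\|G\|=O(\sqrt\delta)$ bounded operators on a finite-dimensional space, the standard estimate gives $\gcomm{V}{W}=e^{-[F,G]+O(\|F\|\|G\|(\|F\|+\|G\|))}=e^{-iH}+O(\delta^{3/2})$. This yields $\|\Delta-\gcomm{V}{W}\|\le C\delta^{3/2}$ with an absolute constant; tracking it gives the $C$ used in \cref{thm:block-sk}. The bound $\|V-\II\|\le\|F\|\le 2\sqrt{2\delta}$ follows from $\|e^{iF}-\II\|\le\|F\|$. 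Determinant one holds on every block because $F$ and $G$ are traceless, so $V,W\in\mathrm{SU}(d_1,\dots,d_K)$.
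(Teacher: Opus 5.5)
Your Steps 1 and 3 are sound and essentially match the paper. The paper uses $\norm{H}\le 2\delta$, the same $\det=1$ plus $\delta\le 1/d_j$ argument for tracelessness, the same blockwise direct sum, and an explicit $12s^3$ bound on the group commutator.

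The gap is in Step 2, which is the real content of the theorem. You never exhibit a single pair of traceless Hermitian $F,G$ with $[F,G]$ a fixed multiple of $iH$ and dimension-free norms.

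Pairing positive with negative eigenvalues on orthogonal two-dimensional blocks fails in general; the spectrum $(2,-1,-1)$ admits no such pairing. Your fallback is the chain $H=\sum_{j=1}^{d-1}s_j\bigl(\ketbra{j}{j}-\ketbra{j+1}{j+1}\bigr)$ with $s_j=\lambda_1+\dots+\lambda_j$. It cannot be rearranged into mutually orthogonal two-level pieces, because adjacent links share a basis vector.

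Splitting into two commuting families (odd and even links) does not fix this. If you solve each family separately, you get $\Delta\approx\gcomm{V_1}{W_1}\gcomm{V_2}{W_2}$, a product of two group commutators rather than the single one that \cref{thm:block-sk} needs. If instead you sum generators, $F=F_1+F_2$ and $G=G_1+G_2$, you get cross terms $[F_1,G_2]+[F_2,G_1]$ proportional to $\ketbra{j}{j+2}+\ketbra{j+2}{j}$. These vanish only for a special, consistent choice of the per-link $\mathrm{SU}(2)$ solutions, so the factor $2$ cannot simply be absorbed.

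The missing idea is Kitaev's construction (\cref{lem:Ra-Ra}): the overlapping chain works directly, with no orthogonality needed, under two conditions.

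\begin{itemize}
\item Order the eigenvalues so that every partial sum satisfies $s_j\in[0,2\norm{H}]$. This is possible because $\Tr H=0$: greedily append a negative eigenvalue whenever the running sum stays nonnegative, and a nonnegative one otherwise.
\item Put the same real coefficient on the $X$-type and $Y$-type generator of each link.
\end{itemize}

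Concretely, take the weighted shift $T=\sum_{j=1}^{d-1}\sqrt{s_j/2}\,\ketbra{j}{j+1}$ and set $A=i(T-T^\dagger)$, $B=T+T^\dagger$. Then $[A,B]=2i[T,T^\dagger]=i\,\diag(\lambda_1,\dots,\lambda_d)$ exactly, for two reasons:
\begin{itemize}
\item the $T^2$ and $(T^\dagger)^2$ contributions, which are precisely your adjacent-link cross terms, cancel between the two orderings;
\item $TT^\dagger$ and $T^\dagger T$ are diagonal, with $k$-th entries $s_k/2$ and $s_{k-1}/2$.
\end{itemize}
Nonnegativity of the $s_j$ is essential here, since $TT^\dagger$ can only produce nonnegative weights.

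Because $T$ is a weighted shift, $\norm{T}=\max_j\sqrt{s_j/2}\le\sqrt{\norm{H}}$. Hence $\norm{A},\norm{B}\le 2\sqrt{\norm{H}}\le 2\sqrt{2\delta}$, with no dimension factor. Both $A$ and $B$ are traceless since they have zero diagonal. Conjugate back to the original basis, write $\Delta=e^{iH}$, and take $V=e^{iB}$, $W=e^{iA}$. Then $\gcomm{V}{W}\approx e^{[A,B]}=\Delta$, which feeds straight into your Step 3.
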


\begin{proof}
    We first prove this for $K = 1$. If $d_1=1$, take $V=W=\II$. Otherwise, since $\norm{\Delta - \II} \leq \delta$ and $\delta \leq 1/d_1$, \cref{lem:GC-decomp} gives $V,W\in\mathrm{SU}(d_1)$ with $\norm{V-\II},\norm{W-\II}\leq 2\sqrt{2\delta}$ and
    \be
    \norm {\Delta - \gcomm{V}{W}} \leq C \delta^{3/2}
    \ee
    for some constant $C$ which can be taken to be $272$.

    The same bound naturally holds for $K > 1$. To see this, suppose $\Delta = \Delta_1 \oplus \ldots \oplus \Delta_K$ satisfies the conditions of the theorem. Using the $K =1$ case, we can find $V_1, \ldots, V_K$ and $W_1, \ldots, W_K$ such that
    \be
    \norm {V_j - \II} \leq 2 \sqrt{2\delta}, \quad \norm {W_j - \II} \leq 2 \sqrt{2\delta}, \quad 1 \leq j \leq K
    \ee
    and 
    \be
    \norm{\Delta_j - \gcomm{V_j}{W_j}} \leq C \delta^{3/2}.
    \ee
    Now set $V = V_1 \oplus \ldots \oplus V_K$, $W = W_1 \oplus \ldots \oplus W_K$. Then since $\norm{\bigoplus_j A_j} = \max_j \norm{A_j}$, we obtain exactly the same bounds for $V, W, \Delta$.

\end{proof}

\begin{lem}
    Suppose $A, B$ are Hermitian matrices satisfying $\norm{A} , \norm{B} \leq s$. Then
    \be
    \norm{\gcomm{e^{iA}}{e^{iB}} - e^{-[A,B]}} = O(s^3).
    \ee
    \label{lem:GC-approx}
    We can choose the constant factor $12$.
\end{lem}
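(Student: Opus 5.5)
We prove \cref{lem:GC-approx} by a direct power-series comparison. Both sides are close to $\II$ up to second order: write $V=e^{iA}$, $W=e^{iB}$, and expand each factor to second order with explicit remainders. Concretely, set $e^{iA}=\II+iA-\tfrac12A^2+R_A$ with $\norm{R_A}\le \sum_{k\ge3}s^k/k!\le s^3 e^{s}/6$, and similarly for $e^{-iA}$, $e^{iB}$, $e^{-iB}$. The statement is only meaningful for $s$ bounded (for $s\ge1$ both sides have norm at most $1+e^{s^2}$, so the bound $O(s^3)$ holds trivially once the constant is adjusted), so we first reduce to $s\le 1$. In that regime every remainder is at most $c\,s^3$.

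\textbf{Expanding the commutator.} Multiply the four truncated expansions
\[(\II+iA-\tfrac12A^2)(\II+iB-\tfrac12B^2)(\II-iA-\tfrac12A^2)(\II-iB-\tfrac12B^2)\]
and collect terms by degree.
\begin{itemize}
\item The degree-one terms cancel: $iA+iB-iA-iB=0$.
\item The degree-two terms are $-\tfrac12(2A^2+2B^2)+(iA)(iB)+(iA)(-iA)+(iA)(-iB)+(iB)(-iA)+(iB)(-iB)+(-iA)(-iB)$. This simplifies to $-AB+BA=-[A,B]$.
\end{itemize}
The product therefore equals $\II-[A,B]+R$, where $R$ collects all monomials of total degree $\ge3$ and all cross terms involving a remainder. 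Bounding each factor by $1+s+s^2/2\le e^{s}$, we get $\norm{R}\le c's^3$ for $s\le1$.

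\textbf{Expanding the right-hand side.} Since $\norm{[A,B]}\le 2s^2$, we have $e^{-[A,B]}=\II-[A,B]+R'$ with $\norm{R'}\le \sum_{k\ge2}(2s^2)^k/k!\le 4s^4e^{2s^2}/2=O(s^4)\le O(s^3)$ for $s\le1$. Subtracting the two expansions gives the claim.

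\textbf{The constant.} The only real work is tracking numerical constants carefully enough to reach the stated factor $12$. A cleaner route to a good constant is a telescoping/integral argument: define $f(t)=\gcomm{e^{itA}}{e^{itB}}$ and compare it to $g(t)=e^{-t^2[A,B]}$ via Taylor's theorem with integral remainder at third order, bounding the third derivatives by sums of at most $\sim 8\cdot$(products of three factors of norm $\le s$). I expect this constant bookkeeping, together with handling the $s>1$ regime, to be the fiddliest part. If $12$ is not met exactly, one would state $O(s^3)$ with an explicit but larger constant, which suffices for \cref{thm:GC-Decompose}.
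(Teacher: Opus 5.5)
Your second-order bookkeeping is correct: the linear terms cancel and the quadratic terms sum to $BA-AB=-[A,B]$. So for $s\le1$ you do get $\norm{\gcomm{e^{iA}}{e^{iB}}-e^{-[A,B]}}\le c\,s^3$ for \emph{some} constant $c$. Two things do not go through, however.

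First, your reduction for large $s$ is wrong as written. A bound like $1+e^{s^2}$ (really $1+e^{\norm{[A,B]}}\le 1+e^{2s^2}$) is not $O(s^3)$. The missing observation is that $[A,B]$ is anti-Hermitian, so $e^{-[A,B]}$ is unitary, just like the group commutator. The difference therefore always has norm at most $2$, and this is the trivial bound the paper uses outside the small-$s$ regime.

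Second, and more seriously, the explicit constant $12$ is part of the statement, and your scheme cannot reach it. The constant is used downstream: it gives $272\approx 12\cdot 16\sqrt2$ in \cref{lem:GC-decomp}, and hence the threshold $\varepsilon_0\le 10^{-5}<1/294^2$ in \cref{thm:block-sk}. Since $12s^3<2$ exactly when $s<6^{-1/3}\approx0.55$, you must establish the bound $12s^3$ on that whole range. Your term-by-term majorization of the expanded product yields $(1+s+\tfrac{s^2}{2}+\tfrac{s^3e^s}{6})^4-1-4s-8s^2$, which at $s=6^{-1/3}$ is about $3.7>2$. The culprit is that replacing $\II+iX-\tfrac12X^2$ by $1+s+\tfrac{s^2}{2}$ throws away the fact that $e^{iX}$ has norm one.

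The missing idea is to keep the unitary factors intact and Taylor-expand in $t$. Set $F(t)=e^{itA}e^{itB}e^{-itA}e^{-itB}$. Then $F(0)=\II$, $F'(0)=0$ and $F''(0)=-2[A,B]$. Each $\frac{d^k}{dt^k}e^{\pm itX}=(\pm iX)^ke^{\pm itX}$ has norm at most $s^k$, so the multinomial Leibniz rule gives $\norm{F'''(t)}\le(4s)^3$ uniformly on $[0,1]$. Hence $\norm{F(1)-\II+[A,B]}\le\frac{32}{3}s^3$.

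Then handle the exponential separately rather than through your $g(t)=e^{-t^2[A,B]}$. Its third derivative $12tC^2g-8t^3C^3g$ adds $8s^4+\frac{32}{3}s^6$ to the crude bound, which is again too large near $s\approx 0.55$. Instead, anti-Hermiticity gives $\norm{e^{-[A,B]}-\II+[A,B]}\le\frac12\norm{[A,B]}^2\le 2s^4$. Since $\frac{32}{3}s^3+2s^4\le 12s^3$ for $s\le 2/3$, and $12s^3>2$ for $s>2/3$, the constant $12$ follows.
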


\begin{proof}
    Let 
    \be
    F(t) = e^{i t A} e^{i t B}e^{-i t A} e^{-i t B}.
    \ee
    Therefore
    \be
    F (0) = \II, \quad F'(0) = 0, \quad F^{''}(0) = - 2 [A,B].
    \ee
    Therefore 
    \be
    \gcomm{e^{iA}}{e^{iB}} = F(1) = \II - [A,B] + R_3,
    \ee
    such that 
    \be
    \norm{R_3} \leq 1/6 \sup_{0\leq t \leq 1} \norm{F^{'''}(t)}.\ee
    Using the multinomial product rule, we have for any $t$
    \be
    \|F^{'''} (t)\| \leq (\norm{A} + \norm{B} + \norm{A} + \norm{B})^3 \leq (4s)^3.
    \ee
    Therefore 
    \be
    \norm {\gcomm{e^{iA}}{e^{iB}} - (\II - [A,B])} \leq \frac{32}{3} s^3.
    \ee
    Finally 
    \be
    \norm{e^{- [A,B]} - (\II - [A,B])} \leq \frac{1}{2} \norm{[A,B]}^2 \leq 2s^4.
    \ee
    Here we used that $[A,B]$ is anti-Hermitian. Using the triangle inequality, for $s\leq 2/3$ we obtain the upper bound $12s^3$. For $s\geq 2/3$, the bound follows from the trivial upper bound $2$ on the distance between unitaries.
\end{proof}

The following lemma is an essential tool in establishing the Solovay--Kitaev theorem over qudits.

\begin{lem} [The cost of group commutator decomposition]
    Given $\Delta \in \mathrm{SU} (d)$ for $d \geq 2$, satisfying $\norm{\Delta - \II} \leq \delta \leq 1/d$, we can find matrices $V,W\in\mathrm{SU}(d)$ with $\norm{V - \II}, \norm{W -\II} \leq 2\sqrt{2\delta}$ such that
    \be
    \norm{\Delta - \gcomm{V}{W}} \leq 272 \delta^{3/2}.
    \ee

    Furthermore, for $b\geq1$, if the entries of $\Delta$ are supplied to
$2b+O(\log d)$ bits of precision, one can compute entrywise $b$-bit
approximations to $V^{(b)},W^{(b)}\in\mathrm{SU}(d)$ satisfying
\begin{align}
    \norm{V^{(b)}-\II},\norm{W^{(b)}-\II}&\leq2\sqrt{2\delta}+2^{-b},
    \label{eq:gc-numerical-factor-bounds}\\
    \norm{\Delta-\gcomm{V^{(b)}}{W^{(b)}}}&\leq272\delta^{3/2}+2^{-b},
    \label{eq:gc-numerical-commutator-bound}
\end{align}
in $d^4\poly(b+\log d)$ bit operations.
    \label{lem:GC-decomp}
\end{lem}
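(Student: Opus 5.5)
The plan is to follow the standard Dawson--Nielsen group-commutator construction, carrying the constants through explicitly. First I diagonalize $\Delta = U_\Delta\, e^{-i\Theta}\, U_\Delta^\dagger$ with $\Theta=\diag(\theta_1,\dots,\theta_d)$ real. Since $\det\Delta=1$, the phases can be chosen so that $\sum_j\theta_j=0$. Each $|\theta_j|$ satisfies $|e^{-i\theta_j}-1|\le\delta$, so $|\theta_j|\le \frac{\pi}{2}\delta$ for small $\delta$. Writing $H\coloneq U_\Delta\Theta U_\Delta^\dagger$ gives a traceless Hermitian matrix with $\norm{H}\le 2\delta$ and $\Delta=e^{-iH}$.

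The core step is to write $H$ as a commutator $i[A,B]$ of two traceless Hermitian matrices with $\norm{A},\norm{B}\le\sqrt{2\delta}$ or similar. To do this, I pass to the Fourier basis $F$, in which the diagonal traceless $\Theta$ becomes a matrix with zero diagonal. I then take $B_0$ to be the diagonal matrix $\diag(b_1,\dots,b_d)$ with distinct entries $b_j$ spaced so that $\norm{B_0}\le 1$. I define $A_0$ entrywise by $(A_0)_{jk}=i(F\Theta F^\dagger)_{jk}/(b_k-b_j)$ for $j\neq k$ and $0$ on the diagonal, which gives $[A_0,B_0]=-iF\Theta F^\dagger$. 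Here the hypothesis $\delta\le 1/d$ is used: with spacing $|b_j-b_k|\ge |j-k|/d$, I bound $\norm{A_0}$ through the Frobenius norm, or better through a Hilbert-matrix/Toeplitz argument, by $O(d\,\norm{\Theta})\le O(d\delta)\le O(1)$. Rescaling $A=\lambda A_0$, $B=\lambda^{-1}B_0$ balances the norms at $\sqrt{\norm{A_0}\norm{B_0}}$, and I then conjugate back by $U_\Delta F^\dagger$. Getting the clean constant $2\sqrt{2\delta}$ may need care; I would try a real-Fourier or Hadamard-type choice so that $\norm{A_0}$ is controlled by $\sqrt{d}\,\norm{\Theta}_{\mathrm F}/\mathrm{gap}$.

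With $V=e^{iA}$ and $W=e^{iB}$, both are in $\mathrm{SU}(d)$ since $A,B$ are traceless. We have $\norm{V-\II}\le\norm{A}\le 2\sqrt{2\delta}$, and similarly for $W$. \cref{lem:GC-approx} with $s=2\sqrt{2\delta}$ gives $\norm{\gcomm VW - e^{-[A,B]}}\le 12 s^3 = 12\cdot 16\sqrt2\,\delta^{3/2}\approx 272\,\delta^{3/2}$, and $e^{-[A,B]}=e^{-iH}=\Delta$. This arithmetic also matches the constant $272$.

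For the numerical part, I compute the eigendecomposition of $\Delta$ (a unitary, hence normal) by a backward-stable method in $d^3\poly(b+\log d)$ operations. I then form $A,B$ by $O(d^2)$ entrywise divisions and matrix products, and evaluate $e^{iA}$, $e^{iB}$ by truncated Taylor series, since the norms are $\le 1$ and $O(b)$ terms suffice, each costing one matrix product. After that I reunitarize and fix the determinant with a small phase. I propagate perturbation bounds: eigenvector sensitivity can be bad when eigenvalues cluster, but $H$ is reconstructed as $\frac{i}{1}\log\Delta$ via a series $\sum (-1)^{k+1}(\Delta-\II)^k/k$ directly, avoiding the eigenvector conditioning, and only the basis change $U_\Delta$ needs to be approximately unitary. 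The input precision $2b+O(\log d)$ absorbs the loss from the division by gaps of size $1/d$ and from squaring in the commutator. The main obstacle is the operator-norm bound on $A_0$ giving exactly $\norm{A},\norm{B}\le2\sqrt{2\delta}$ uniformly in $d$ under $\delta\le1/d$; everything else is routine error propagation totalling $d^4\poly(b+\log d)$.
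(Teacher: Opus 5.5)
The gap is the step you flag as the main obstacle, and with your choices it cannot be closed. You fix the Fourier basis and a diagonal $B_0$ independently of $\Theta$, then define $A_0$ by divided differences. This produces factors of size $\sqrt{d\delta}$, not $\sqrt{\delta}$.

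Take $\theta_l=h\cos(2\pi l/d)$ with $2\sin(h/2)=\delta$. This $\Theta$ is traceless, gives $\norm{\Delta-\II}=\delta$ and $h\ge\delta$, and satisfies $|(F\Theta F^\dagger)_{j,j\pm1}|=h/2$ (cyclically). For your $B_0$, with entries increasing in $j$, some adjacent gap satisfies $|b_{j+1}-b_j|\le 2\norm{B_0}/(d-1)$. Hence $\norm{A_0}\,\norm{B_0}\ge h(d-1)/4$. The rescaling $A=\lambda A_0$, $B=\lambda^{-1}B_0$ preserves this product, so $\max\{\norm{A},\norm{B}\}\ge\tfrac12\sqrt{(d-1)\delta}>2\sqrt{2\delta}$ once $d\ge34$.

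A Hadamard-type basis does not help either. Requiring zero diagonal for every traceless $\Theta$ forces $|F_{jl}|^2=1/d$. Then, for the pair $(j,k)$ with the smallest gap, one can always pick a traceless $\Theta$ with $|\theta_l|\le h$ and $|(F\Theta F^\dagger)_{jk}|\ge h/2$.

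This matches your own estimate $\norm{A_0}=O(d\norm{\Theta})$. Invoking $\delta\le 1/d$ only makes the factors $O(\sqrt{d\delta})=O(1)$, not $O(\sqrt\delta)$. So \cref{lem:GC-approx} gives only $\norm{\Delta-\gcomm{V}{W}}=O((d\delta)^{3/2})$, whereas the statement and the recursion in \cref{thm:block-sk} need the $d$-independent constants $2\sqrt2$ and $272$.

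The missing idea is the weighted-shift construction of \cref{lem:Ra-Ra} (Kitaev's Problem 8.15), which avoids small denominators entirely. In the eigenbasis of $H$, order the eigenvalues $\lambda_1,\dots,\lambda_d$ so that every partial sum $s_k=\sum_{j\le k}\lambda_j$ lies in $[0,2\norm{H}]$; a greedy ordering achieves this because $\Tr(H)=0$. Set $T=\sum_{j=1}^{d-1}\sqrt{s_j/2}\,\ket{j}\bra{j+1}$. Then $[T,T^\dagger]=\tfrac12\diag(\lambda_1,\dots,\lambda_d)$. So $A=i(T-T^\dagger)$ and $B=T+T^\dagger$ are traceless Hermitian with $[A,B]=iH$ and $\norm{A},\norm{B}\le 2\max_j\sqrt{s_j/2}\le 2\sqrt{\norm{H}}\le 2\sqrt{2\delta}$, uniformly in $d$. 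Your final step via \cref{lem:GC-approx} with $s=2\sqrt{2\delta}$ then goes through, ordering $V,W$ to match your sign convention for $H$.

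The hypothesis $\delta\le 1/d$ is needed elsewhere. It forces the small eigenphases of $\Delta$ to sum to exactly zero, since $|\sum_j\theta_j|\le 2d\delta<2\pi$. Your phrase ``the phases can be chosen'' silently relies on this (\cref{lem:log-Delta-branch}).

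Your numerical outline matches the paper's use of \cref{lem:log-Delta-branch,thm:finite-precision-diagonalization}: a log series for $H$ to sidestep eigenvector conditioning, then a backward-stable diagonalization where only near-unitarity of the basis matters. In the paper, the factor $2$ in the input precision $2b+O(\log d)$ comes from the square roots $\sqrt{s_j/2}$.
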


\begin{proof}
    Using \cref{lem:log-Delta-branch}, we can find a traceless Hermitian matrix $H$ such that $\norm{H} \leq 2 \delta$ and $\Delta = e^{i H}$. Next, using \cref{lem:Ra-Ra}, there exist traceless Hermitian matrices $A$ and $B$ such that $i H  = [A, B]$ and $\norm{A}, \norm{B} \leq 2 \sqrt{2 \delta}$. Therefore $\Delta = e^{ [A, B]}$. Finally, letting $W = e^{i A}$ and $V = e^{iB}$ and using \cref{lem:GC-approx}, we can show
    \be
    \norm{\Delta - \gcomm{V}{W}} \leq 272 \delta^{3/2}.
    \ee
    Furthermore, suppose the eigenvalues of $A$ are $\lambda_1, \ldots , \lambda_d$. Then
    \be
    \norm{e^{iA} - \II} \leq 2 \max_j |\sin (\lambda_j/2)|  \leq \max_j |\lambda_j| \leq 2 \sqrt{2 \delta}.
    \ee
    We can similarly show $\norm{e^{iB} - \II} \leq 2 \sqrt{2 \delta}$.

We next study the computational complexity of the construction.
Fix a target output precision of $b$ bits and choose
\begin{equation}
    q=2b+O(\log d).
\end{equation}
We first use \cref{lem:log-Delta-branch} to compute a traceless Hermitian
matrix $\widetilde H\in\mathbb Q(i)^{d\times d}$ satisfying
\begin{equation}
    \norm{H-\widetilde H}\leq 2^{-q}.
    \label{eq:gc-log-approx}
\end{equation}
To achieve this, it suffices to take
\begin{equation}
    m=\left\lceil\frac{q+O(1)}{\log_2 d}\right\rceil
\end{equation}
and use $q+O(\log d)$ bits of precision in
\cref{lem:log-Delta-branch}. Since $d\geq2$, we have
$m=O(q)$, and hence this step requires
\begin{equation}
    d^3\poly(q+\log d)
\end{equation}
bit operations.

We then apply \cref{thm:finite-precision-diagonalization} to
$\widetilde H$. This produces a matrix $\widehat U$ and a real diagonal
matrix
\begin{equation}
    D=\diag(\lambda_1,\ldots,\lambda_d)
\end{equation}
such that
\begin{equation}
    \norm{\widehat U^\dagger\widehat U-\II}\leq 2^{-q},
    \qquad
    \norm{\widetilde H-\widehat U D\widehat U^\dagger}
    \leq 2^{-q}.
    \label{eq:gc-finite-diag}
\end{equation}
Since $\widetilde H$ is traceless, we may replace $D$ by its traceless
part
\begin{equation}
    D_0
    :=
    D-\frac{\Tr(D)}{d}\II.
\end{equation}
The bounds in \cref{eq:gc-finite-diag} imply
\begin{equation}
    \norm{\widetilde H-\widehat U D_0\widehat U^\dagger}
    =O(2^{-q}).
    \label{eq:gc-traceless-diag}
\end{equation}

We now apply the construction of \cref{lem:Ra-Ra} to the diagonal
entries of $D_0$. After ordering the diagonal entries as in that lemma,
relabel them as $\lambda_1,\ldots,\lambda_d$ and let
\begin{equation}
    s_j=\sum_{k=1}^j\lambda_k,
    \qquad
    \alpha_j=\sqrt{\frac{s_j}{2}},
    \qquad
    1\leq j\leq d-1,
\end{equation}
and define
\begin{equation}
    \Lambda
    =
    \sum_{j=1}^{d-1}
    \alpha_j\ket{e_j}\bra{e_{j+1}}.
\end{equation}
The construction of \cref{lem:Ra-Ra} then gives
\begin{equation}
    A_0=i(\Lambda-\Lambda^\dagger),
    \qquad
    B_0=\Lambda+\Lambda^\dagger,
\end{equation}
with
\begin{equation}
    [A_0,B_0]=iD_0.
\end{equation}
Set
\begin{equation}
    A=\widehat U A_0\widehat U^\dagger
    -\frac{\Tr(\widehat U A_0\widehat U^\dagger)}{d}\II,
    \qquad
    B=\widehat U B_0\widehat U^\dagger
    -\frac{\Tr(\widehat U B_0\widehat U^\dagger)}{d}\II.
\end{equation}
The trace subtractions do not change the commutator. Writing $E=\widehat U^\dagger\widehat U-\II$, we have
\begin{equation}
\begin{aligned}
    [A,B]
    &=\widehat U\bigl(A_0(\widehat U^\dagger\widehat U)B_0
      -B_0(\widehat U^\dagger\widehat U)A_0\bigr)\widehat U^\dagger\\
    &=\widehat U[A_0,B_0]\widehat U^\dagger
      +\widehat U(A_0EB_0-B_0EA_0)\widehat U^\dagger\\
    &=i\widehat U D_0\widehat U^\dagger
      +\widehat U(A_0EB_0-B_0EA_0)\widehat U^\dagger.
\end{aligned}
\end{equation}
Consequently,
\begin{equation}
\begin{aligned}
    \norm{[A,B]-iH}
    &\leq\norm{\widehat U D_0\widehat U^\dagger-H}
      +2\norm{\widehat U}^2\norm{A_0}\norm{B_0}\norm{E}\\
    &\leq\norm{\widehat U D_0\widehat U^\dagger-\widetilde H}
      +\norm{\widetilde H-H}
      +2\norm{\widehat U}^2\norm{A_0}\norm{B_0}\norm{E}\\
    &=O(2^{-q}),
\end{aligned}
\end{equation}
since $\norm{E}\leq2^{-q}$ and
$\norm*{\widehat U},\norm{A_0},\norm{B_0}=O(1)$.
Moreover, $\norm*{\widehat U^\dagger\widehat U-\II}\leq2^{-q}$ implies
$\norm*{\widehat U^{-1}}^2\leq(1-2^{-q})^{-1}$. Using
\cref{eq:gc-traceless-diag,eq:gc-log-approx} and $\norm{H}\leq2\delta$, we obtain
\begin{equation}
    \norm{D_0}
    =\norm{\widehat U^{-1}(\widehat U D_0\widehat U^\dagger)(\widehat U^\dagger)^{-1}}
    \leq\norm{\widehat U^{-1}}^2\norm{\widehat U D_0\widehat U^\dagger}
    \leq\frac{2\delta+O(2^{-q})}{1-2^{-q}}
    =2\delta+O(2^{-q}).
\end{equation}
The trace corrections have norm $O(2^{-q})$, since $A_0,B_0$ are traceless. Thus
$\norm{A},\norm{B}\leq2\sqrt{2\delta}+O(2^{-q/2})$.
For Hermitian $X,Y$, Duhamel's identity gives
\begin{equation}
    e^{iX}-e^{iY}
    =i\int_0^1 e^{i(1-t)X}(X-Y)e^{itY}\,dt,
    \qquad
    \norm{e^{iX}-e^{iY}}\leq\norm{X-Y}.
\end{equation}
Applying this with
$X=H$, $Y=-i[A,B]$, and using \cref{lem:GC-approx}, we obtain
\begin{equation}
\begin{aligned}
    \norm{\Delta-\gcomm{e^{iB}}{e^{iA}}}
    &\leq\norm{\Delta-e^{[A,B]}}
      +\norm{e^{[A,B]}-\gcomm{e^{iB}}{e^{iA}}}\\
    &\leq\norm{iH-[A,B]}+12\max\{\norm{A},\norm{B}\}^3\\
    &\leq272\delta^{3/2}+O(2^{-q/2}).
\end{aligned}
\end{equation}

All quantities above are computed to finite precision. Suppose the
diagonal entries of $D_0$ are known to additive error $O(2^{-q})$.
Then each partial sum $s_j$ is known to additive error
\begin{equation}
    O(d2^{-q}).
\end{equation}
Using
\begin{equation}
    |\sqrt{x}-\sqrt{y}|
    \leq \sqrt{|x-y|}
    \qquad
    (x,y\geq0),
\end{equation}
the corresponding coefficients $\alpha_j$ can be computed with error
\begin{equation}
    O\left(\sqrt{d}\,2^{-q/2}\right).
\end{equation}
Consequently,
\begin{equation}
    \norm{\Lambda-\Lambda^{(q)}}
    =
    O\left(\sqrt{d}\,2^{-q/2}\right),
\end{equation}
and hence the finite-precision matrices $A^{(q)}$ and $B^{(q)}$,
made Hermitian and traceless after rounding, satisfy
\begin{equation}
    \norm{A-A^{(q)}},
    \norm{B-B^{(q)}}
    =
    O\left(\poly(d)2^{-q/2}\right).
    \label{eq:gc-AB-rounding}
\end{equation}
Choosing
\begin{equation}
    q=2b+O(\log d)
\end{equation}
therefore makes both errors in \cref{eq:gc-AB-rounding} at most
$O(2^{-b})$.

Finally, set
\begin{equation}
    V^{(b)}:=e^{iB^{(q)}},
    \qquad
    W^{(b)}:=e^{iA^{(q)}}.
\end{equation}
These belong to $\mathrm{SU}(d)$ because $A^{(q)},B^{(q)}$ are
traceless Hermitian. By Duhamel's identity and
\cref{eq:gc-AB-rounding}, replacing $e^{iB},e^{iA}$ by
$V^{(b)},W^{(b)}$ changes the preceding norm and commutator
bounds by at most $O(\poly(d)2^{-q/2})$.
Choosing $q=2b+O(\log d)$ sufficiently large therefore gives
\cref{eq:gc-numerical-factor-bounds,eq:gc-numerical-commutator-bound}.
The algorithm returns their entries to $b$ bits of precision,
with operator-norm error $O(d2^{-b})$.

The dominant step is the finite-precision diagonalization of
$\widetilde H$, which by
\cref{thm:finite-precision-diagonalization} takes
$d^4\poly(q+\log d)
    =
    d^4\poly(b+\log d)$
bit operations. All remaining matrix multiplications, square-root
computations, and matrix exponentials require at most
$d^3\poly(b+\log d)$
bit operations. Thus the entire construction can be performed in
\begin{equation}
    d^4\poly(b+\log d)
\end{equation}
bit operations.

\end{proof}

Let 
    \be R_\delta = \{ X \in \mathbb{C}^{d\times d} : \Tr (X) = 0, X^\dagger = X, \quad \|X\|_\infty \leq \delta\}
    \ee 
    be the ball of radius $\delta$ with respect to operator norm for Hermitian $d \times d$ complex matrices.
 For sets of operators $A,B$, we have defined $[A , B] = \{[x,y] : x \in A , y \in B\}$.

\begin{lem}[Problem 8.15 of \cite{kitaev2002classical}]
\label{lem:Ra-Ra}
    \be
    i R_{a^2/4} \subseteq   [R_a, R_a].
    \ee
\end{lem}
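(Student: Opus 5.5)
The statement is that every traceless Hermitian $X$ with $\norm{X}\le a^2/4$ satisfies $iX=[A,B]$ for some $A,B\in R_a$. The plan is to reduce to the case where $X$ is diagonal with a controlled ordering of its eigenvalues. Then we write $iX$ as a commutator of a ``shift-like'' pair, using the explicit construction that the proof of \cref{lem:GC-decomp} already relies on.

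First, conjugation by a unitary $U$ preserves $R_a$ (trace, Hermiticity and operator norm are invariant) and satisfies $U[A,B]U^\dagger=[UAU^\dagger,UBU^\dagger]$. So we may assume $X=D=\diag(\lambda_1,\ldots,\lambda_d)$ with $\sum_j\lambda_j=0$ and $|\lambda_j|\le a^2/4$.

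Next, we order the eigenvalues so that the partial sums $s_j=\sum_{k\le j}\lambda_k$ stay nonnegative and small. The greedy choice is: when the current partial sum is positive, append a negative eigenvalue, and otherwise append a nonnegative one. We want every $s_j\in[0,\max_k|\lambda_k|]$. If we start with a nonnegative entry, each step moves $s$ toward or across zero by at most $\max|\lambda_k|$. A small fix is needed to keep $s_j\ge 0$ rather than merely $|s_j|\le\max|\lambda_k|$: the trick is to sort the eigenvalues suitably, or to allow signed $s_j$ and absorb signs into the coefficients. We also have $s_d=0$.

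Then we define $\Lambda=\sum_{j=1}^{d-1}\alpha_j\ket{e_j}\bra{e_{j+1}}$ with $\alpha_j=\sqrt{s_j/2}$, and set $A=i(\Lambda-\Lambda^\dagger)$ and $B=\Lambda+\Lambda^\dagger$. Both are Hermitian and traceless, since they have zero diagonal. A direct computation gives
\[
[A,B]=2i[\Lambda,\Lambda^\dagger]=2i\,\diag(\alpha_1^2,\ \alpha_2^2-\alpha_1^2,\ \ldots,\ -\alpha_{d-1}^2),
\]
whose entries are $i(s_j-s_{j-1})=i\lambda_j$, as required.

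For the norms, $\Lambda\pm\Lambda^\dagger$ is a tridiagonal Hermitian matrix with off-diagonal entries bounded by $\max_j\alpha_j$. Its norm is therefore at most $2\max_j\alpha_j\le 2\sqrt{(a^2/4)/2}<a$, so $A,B\in R_a$.

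The main obstacle is the ordering step that keeps all partial sums in $[0,\norm{X}]$, since the square roots $\sqrt{s_j}$ need nonnegative arguments. I would argue it by the greedy rule above, which always has an eligible element of the needed sign available while $s\ne 0$ because the total sum is zero. If nonnegativity proves awkward, the fallback is to take $\alpha_j=\sqrt{|s_j|/2}$ and flip the sign of the corresponding term in $B$ only. This changes $[\Lambda,\Lambda^\dagger]$ consistently, provided we use independent coefficients for the two shift matrices, i.e.\ $A=i(\Lambda_1-\Lambda_1^\dagger)$ and $B=\Lambda_2+\Lambda_2^\dagger$ with $\Lambda_1,\Lambda_2$ having entries $\sqrt{|s_j|/2}$ and $\operatorname{sign}(s_j)\sqrt{|s_j|/2}$.
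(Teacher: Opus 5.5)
Your construction is the same as the paper's: reduce to a diagonal $D$, order the eigenvalues, then take the weighted shift $\Lambda$ with $A=i(\Lambda-\Lambda^\dagger)$ and $B=\Lambda+\Lambda^\dagger$. Your commutator and norm computations are correct. The gap is the ordering step, which you rightly call the crux but do not settle.

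The claim that the partial sums can be kept in $[0,\max_k|\lambda_k|]$ is false. Take $\lambda=(1,1,-\tfrac23,-\tfrac23,-\tfrac23)$. Any ordering with partial sums in $[0,1]$ must begin $1,-\tfrac23$. After that $s=\tfrac13$, and each remaining value pushes $s$ out of $[0,1]$. Your greedy rule only gives $|s_j|\le\max_k|\lambda_k|$.

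The sign-absorbing fallback does not repair this. With real shift coefficients $\beta_j,\gamma_j$ for $\Lambda_1,\Lambda_2$ one gets
\[
[A,B]=2i[\Lambda_1,\Lambda_2^\dagger]+i\bigl([\Lambda_1,\Lambda_2]+[\Lambda_1,\Lambda_2]^\dagger\bigr),\qquad
[\Lambda_1,\Lambda_2]=\sum_j(\beta_j\gamma_{j+1}-\gamma_j\beta_{j+1})\ket{e_j}\bra{e_{j+2}}.
\]
The first term has the intended diagonal $i(s_j-s_{j-1})=i\lambda_j$. But with $\gamma_j=\operatorname{sign}(s_j)\beta_j$, the second-neighbour coefficient is $\beta_j\beta_{j+1}\bigl(\operatorname{sign}(s_{j+1})-\operatorname{sign}(s_j)\bigr)$. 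This is nonzero whenever consecutive partial sums have opposite signs, so $[A,B]$ is not diagonal. In the single-shift construction the diagonal of $[\Lambda,\Lambda^\dagger]$ is $|\alpha_j|^2-|\alpha_{j-1}|^2$, so nonnegative partial sums cannot be avoided.

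The correct target, which is what the paper's proof uses, is $s_j\in[0,2h]$ with $h=\max_k|\lambda_k|\le a^2/4$. One rule that achieves it: with current sum $s\ge 0$, append a remaining negative eigenvalue if that keeps the sum nonnegative; otherwise append a nonnegative one. The remaining eigenvalues always sum to $-s$, which gives the following cases.
\begin{itemize}
\item If no negatives remain, the remaining values are nonnegative with sum $\le 0$, hence all zero.
\item If only negatives remain, each is $\ge -s$, so the first option applies.
\item If you are forced to append a nonnegative value while negatives remain, every remaining negative lies in $[-h,-s)$. Hence $s<h$ and the new sum is $<2h$.
\end{itemize}
Then $\alpha_j=\sqrt{s_j/2}\le\sqrt h$, so $\norm{A},\norm{B}\le 2\sqrt h\le a$. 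This uses the whole budget, so your strict bound $<a$ does not follow. However, $\le a$ is all the lemma needs.
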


\begin{proof}
    Let $H$ be a Hermitian matrix with $\norm{H} = h$. We will produce traceless Hermitian matrices $A, B \in R_{2 \sqrt{h}}$ such that $iH = [A, B]$. Consider the decomposition
    \be
    H = Q \diag (\lambda_1, \ldots, \lambda_d) Q^\dagger
    \ee
    and let $\ket {j}$ be the eigenbasis of $H$. Since $\Tr (H) = 0$, we can always order the eigenvalues in such a way that
    \be
    s_k := \sum_{j = 1}^{k} \lambda_j \in [0 , 2h], \quad \quad s_0 = s_d = 0.
    \ee
    Let
    \be
    T := \sum_{j =1}^{d-1} \sqrt{\frac{s_j}{2}} \ket{j}\bra{j+1}, \quad  A = i(T-T^\dagger),\quad B = T +T^\dagger.
    \ee
    We can show
    \be
    [T , T^\dagger] = 1/2 \diag (\lambda_1, \ldots, \lambda_d), \quad [A , B] = i H.
    \ee
    Now observe that $\norm{A}, \norm{B} \leq 2\sqrt{h}$.
\end{proof}

\begin{lem}[Finding a traceless operator logarithm]
    Let $\Delta$ be a $d\times d$ unitary matrix with unit determinant, and let
    $0\leq \delta\leq 1/d$. Suppose
    \be
        \norm{\Delta-\II}\leq \delta.
    \ee
    Then there exists a traceless Hermitian matrix $H$ such that
    \be
        \Delta=e^{iH},
        \qquad
        \norm{H}\leq 2\delta.
    \ee

    Furthermore, for every $m,p\geq 1$, one can deterministically compute a traceless
Hermitian matrix $\widetilde H\in\mathbb Q(i)^{d\times d}$, whose entries
are specified to $p$ bits of precision, such that
\begin{equation}
    \norm{H-\widetilde H}
    =
    O\left(\frac{1}{d^m}+\frac{d}{2^p}\right).
\end{equation}
The computation can be performed using
\begin{equation}
    m d^3 \poly(p+\log d+\log m)
\end{equation}
bit operations.
    \label{lem:log-Delta-branch}
\end{lem}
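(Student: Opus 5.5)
Two claims must be proved: existence of a traceless Hermitian logarithm with $\norm{H}\le2\delta$, and an efficient deterministic algorithm computing it. For existence, diagonalize $\Delta=\sum_j e^{i\phi_j}\ketbra{v_j}{v_j}$ with principal arguments $\phi_j\in(-\pi,\pi]$. Since $\norm{\Delta-\II}\le\delta\le 1/d\le 1/2$, each eigenvalue obeys $|e^{i\phi_j}-1|=2|\sin(\phi_j/2)|\le\delta$. Hence $|\phi_j|\le 2\arcsin(\delta/2)\le\frac{\pi}{3}\delta\le 2\delta$ (using $\arcsin x\le \frac{\pi}{2}x$ on $[0,1]$). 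Define $H=\sum_j\phi_j\ketbra{v_j}{v_j}$, the principal logarithm, so that $\Delta=e^{iH}$ and $\norm{H}\le2\delta$.

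Tracelessness uses $\det\Delta=1$, which gives $\sum_j\phi_j\in2\pi\ZZ$. Combined with $|\sum_j\phi_j|\le d\cdot 2\arcsin(\delta/2)\le d\cdot\frac{\pi}{3}\delta\le\frac{\pi}{3}<2\pi$ (from $\delta\le1/d$), this forces $\sum_j\phi_j=0$. This is exactly why the hypothesis $\delta\le1/d$ is imposed.

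For the algorithm, I avoid eigendecomposition and use the power series $H=-i\log(\II+X)=-i\sum_{k\ge1}(-1)^{k+1}X^k/k$ with $X=\Delta-\II$. Because $X$ is normal with spectrum on the arc near $1$, this series equals the principal logarithm. Truncating after $m'$ terms leaves a tail bounded by $\sum_{k>m'}\delta^k/k\le 2\delta^{m'+1}\le 2d^{-(m'+1)}$, so taking $m'=m$ terms yields error $O(d^{-m})$. Each term needs one matrix multiplication, for $O(m)$ products of cost $d^3\poly(p+\log d+\log m)$ in $p$-bit fixed-point arithmetic. The entries of $X^k$ stay bounded by $\norm{X}^k\le1$, so no blow-up occurs, and accumulated rounding is $O(m\,d\,2^{-p})$ per entry. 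That is $O(d2^{-p})$ in operator norm up to a factor absorbed by using $p+O(\log m)$ working bits.

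Finally, symmetrize by setting $\widetilde H\leftarrow(\widetilde H+\widetilde H^\dagger)/2$ and then subtracting $\Tr(\widetilde H)/d\cdot\II$. These are projections onto the real subspace of traceless Hermitian matrices, which contains $H$, so they do not increase the distance to $H$ by more than a constant factor. The main obstacle is the precise accounting of rounding through $m$ sequential multiplications and checking that the tail bound yields $d^{-m}$ rather than $\delta^m$. Since $\delta\le1/d$, the latter is immediate, so I expect this step to be bookkeeping.
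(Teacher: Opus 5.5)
Your proposal is correct and follows the paper's proof essentially step for step. The existence part is the eigenvalue-branch argument, where $\det\Delta=1$ together with $\delta\le 1/d$ forces the eigenphases to sum to zero. The algorithm evaluates the truncated series for $\log(\II+E)$, with tail $O(d^{-m})$, by $m$ matrix products in fixed precision with logarithmically many guard bits, and then applies the Hermitian and traceless projections, each costing at most a constant factor in operator norm.

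One small slip: $\arcsin x\le\frac{\pi}{2}x$ only gives $|\phi_j|\le\frac{\pi}{2}\delta$; the constant $\frac{\pi}{3}$ needs convexity of $\arcsin$ on $[0,\frac12]$. This is harmless, since $\norm{H}\le 2\delta$ and $|\sum_j\phi_j|\le\frac{\pi}{2}<2\pi$ still hold.
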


\begin{proof}
    Let $e^{i\theta_1},\ldots,e^{i\theta_d}$ be the eigenvalues of $\Delta$.
    Since $\norm{\Delta-\II}\leq\delta$,
    we have
    \be
        |e^{i\theta_j}-1|\leq\delta
    \ee
    for every $j$. Since $\delta\leq 1$, for each $j$ there exists
    $\delta_j\in[-\pi,\pi]$ such that
    \be
        e^{i\theta_j}=e^{i\delta_j},
        \qquad
        |\delta_j|\leq 2\delta.
    \ee
    Since $\det(\Delta)=1$,
    \be
        1=\prod_{j=1}^d e^{i\delta_j}
        =e^{i\sum_{j=1}^d\delta_j}.
    \ee
    Hence $\sum_{j=1}^d\delta_j=2\pi n$
    for some $n\in\mathbb Z$. On the other hand,
    \be
        \left|\sum_{j=1}^d\delta_j\right|
        \leq\sum_{j=1}^d|\delta_j|
        \leq 2d\delta
        \leq 2<2\pi.
    \ee
    Therefore $n=0$, and thus $\sum_{j=1}^d\delta_j=0$.
Writing
    \be
        \Delta
        =U\,\diag(e^{i\delta_1},\ldots,e^{i\delta_d})\,U^\dagger,
    \ee
    define
    \be
        H
        =U\,\diag(\delta_1,\ldots,\delta_d)\,U^\dagger.
    \ee
    Then $H$ is Hermitian,
    \be
        \Delta=e^{iH},
        \qquad
        \Tr(H)=\sum_{j=1}^d\delta_j=0,
    \ee
    and
    \be
        \norm{H}
        =\max_j|\delta_j|
        \leq 2\delta.
    \ee

   Next, we prove the second part of the lemma about the computational complexity of computing $\widetilde H$. Write
\begin{equation}
    \Delta=\II+E,
    \qquad
    \norm{E}\leq\delta<1.
\end{equation}
Since the power series for the logarithm converges absolutely for
$\norm{E}<1$,
\begin{equation}
    iH
    =
    \log(\II+E)
    =
    \sum_{k=1}^{\infty}
    \frac{(-1)^{k+1}}{k}E^k.
\end{equation}
Define the truncated logarithm
\begin{equation}
    iH_m
    :=
    \sum_{k=1}^{m}
    \frac{(-1)^{k+1}}{k}E^k.
\end{equation}
Then
\begin{align}
    \norm{H-H_m}
    &\leq
    \sum_{k=m+1}^{\infty}\frac{\delta^k}{k}
    \notag\\
    &\leq
    \frac{\delta^{m+1}}{(m+1)(1-\delta)}
    \leq
    \frac{1}{d^m}.
    \label{eq:truncated-log-error}
\end{align}

We next compute $H_m$ using finite-precision arithmetic. Let
\begin{equation}
    w=p+O(\log d+\log m)
\end{equation}
be the working precision. Performing all scalar arithmetic with $w$ bits
of precision and rounding after each arithmetic operation gives a matrix
$H_{p,m}\in\mathbb Q(i)^{d\times d}$ satisfying (see \Cref{lem:b-precision}):
\begin{equation}
    \norm{H_m-H_{p,m}}
    =
    O\left(\frac{d}{2^p}\right).
    \label{eq:log-rounding-error}
\end{equation}
Indeed, $H_m$ can be evaluated iteratively by computing
$E,E^2,\ldots,E^m$. Since $\norm{E}<1$, all these matrices have operator
norm at most $1$. There are $O(md^3)$ scalar arithmetic operations, and
choosing $O(\log m+\log d)$ guard bits ensures that the accumulated
rounding error is $O(d2^{-p})$.

We now enforce Hermiticity. Define
\begin{equation}
    \widetilde H_{p,m}
    :=
    \frac{H_{p,m}+H_{p,m}^\dagger}{2}.
\end{equation}
Since $H$ is Hermitian,
\begin{align}
    \norm{H-\widetilde H_{p,m}}
    &=
    \frac12
    \norm{
        (H-H_{p,m})
        +(H-H_{p,m})^\dagger
    }
    \notag\\
    &\leq
    \norm{H-H_{p,m}}
    \notag\\
    &\leq
    \norm{H-H_m}
    +
    \norm{H_m-H_{p,m}}
    \notag\\
    &=
    O\left(
        \frac{1}{d^m}
        +
        \frac{d}{2^p}
    \right).
    \label{eq:hermitianized-log-error}
\end{align}

Finally, we enforce tracelessness by setting
\begin{equation}
    \widetilde H
    :=
    \widetilde H_{p,m}
    -
    \frac{\Tr(\widetilde H_{p,m})}{d}\II.
\end{equation}

Since $\Tr (H) = 0$, it follows that
\begin{equation}
    \Tr(\widetilde H_{p,m})
    =
    \Tr(\widetilde H_{p,m}-H).
\end{equation}
Using $|\Tr(A)|\leq d\norm{A}$, we obtain
\begin{align}
    \norm{\widetilde H_{p,m}-\widetilde H}
    &=
    \frac{|\Tr(\widetilde H_{p,m})|}{d}
    \notag\\
    &\leq
    \norm{\widetilde H_{p,m}-H}.
\end{align}
Therefore,
\begin{align}
    \norm{H-\widetilde H}
    &\leq
    \norm{H-\widetilde H_{p,m}}
    +
    \norm{\widetilde H_{p,m}-\widetilde H}
    \notag\\
    &\leq
    2\norm{H-\widetilde H_{p,m}}
    \notag\\
    &=
    O\left(
        \frac{1}{d^m}
        +
        \frac{d}{2^p}
    \right).
\end{align}

Computing the truncated logarithm requires $m-1$ matrix
multiplications, each costing $O(d^3)$ scalar arithmetic operations.
At working precision
\begin{equation}
    w=p+O(\log d+\log m),
\end{equation}
each scalar operation requires $\poly(w)$ bit operations.
Hermitianization and trace subtraction require only $O(d^2)$ additional
scalar operations. Hence the total running time is
\begin{equation}
    m d^3\poly(p+\log d+\log m).
\end{equation}
\end{proof}

\begin{lem}
    Suppose $M$ is a $d \times d$ matrix and let $M_b$ be $M$ with each entry specified using $b$ bits of precision (in the binary representation). Then
    \be
    \norm{M - M_b} \leq \frac{d}{2^b}.
    \ee
    \label{lem:b-precision}
\end{lem}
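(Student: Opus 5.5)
The plan is to bound the error entrywise and then pass from entrywise error to operator norm. Write $E \coloneq M - M_b$. Specifying an entry to $b$ bits of precision means rounding its real and imaginary parts to the grid $2^{-b}\ZZ$. Under truncation or round-to-nearest, each entry then satisfies $|E_{jk}| \le 2^{-b}$. We fix the convention that $|E_{jk}|\le 2^{-b}$ is what ``$b$ bits of precision'' guarantees for a complex entry. If real and imaginary parts are rounded separately with error $2^{-b}$ each, we instead get $\sqrt2\, 2^{-b}$, which still fits if rounding is to nearest (error $2^{-b-1}$ per part).

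Next we convert the entrywise bound into an operator-norm bound. I would use the Frobenius bound:
\be
\norm{E}_\infty \le \norm{E}_{\mathrm F} = \Bigl(\sum_{j,k}|E_{jk}|^2\Bigr)^{1/2} \le \sqrt{d^2\, 2^{-2b}} = \frac{d}{2^b}.
\ee
Alternatively, the Schur test gives $\norm{E}\le\sqrt{\norm{E}_1\norm{E}_{\infty\text{-row}}}$, where the maximum column sum and the maximum row sum are each at most $d2^{-b}$. This yields the same bound $d/2^b$.

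The only point needing care is the rounding convention for complex entries: with it the constant is exactly $1$, and otherwise it is an absolute constant absorbed into the rounding convention. Everything else is immediate.
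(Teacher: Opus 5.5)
Your proof is correct and is essentially the paper's argument: bound each entry of $E = M - M_b$ by $2^{-b}$ and use $\norm{E} \le \norm{E}_{\mathrm F} \le d/2^b$. The paper implicitly assumes $|E_{jk}| \le 2^{-b}$; your explicit discussion of the complex-rounding convention is a reasonable clarification of that assumption.
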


\begin{proof}
    Let $E = M - M_b$. Then
    \be
    \norm{E} \leq \norm{E}_{\mathrm F} \leq \sqrt{\sum_{j,k} |E_{j,k}|^2} \leq d/2^b.
    \ee
\end{proof}

\subsection{Compiling general unitaries in Fock space}
\label{sec:compilation}

The Solovay--Kitaev theorem (\cref{thm:cutoff-SK}) derived in the previous section allows us to efficiently compile energy-preserving unitary operations acting on a constant number of modes into a sequence of linear-optical and Kerr/cross-Kerr gates. Combined with parametric approximation (\cref{thm:passive-compilation}), it also allows us to efficiently compile unitary gates acting on a constant number of modes generated by polynomial Hamiltonians of constant degree, under the energy-moment assumptions of that theorem.

In this section, we extend this result to more general unitary operations that act on more than a constant number of modes and are not necessarily energy-preserving. Concretely, we consider any unitary on Fock space that admits an efficient qubit quantum circuit description. To define this class of operations formally, we denote by
\be
    \mathcal H_{m,K}\coloneqq\mathrm{span}\{\ket{k_1\dots k_m}\,|\,0\le k_j\le K\}
\ee
the Hilbert space of states with at most $K$ photons per mode, and we introduce for $q\ge mK$:
\be
    \mathcal E:\;\ket{k_1\dots k_m}\mapsto\bigotimes_{j=1}^m\left(\ket1^{\otimes k_j}\otimes\ket0^{\otimes(K-k_j)}\right)\otimes\ket0^{\otimes q-mK}.
\ee
This is an isometry from $\mathcal H_{m,K}$ to the space of $q$ qubits, which flattens Fock states into a unary encoding of their photon number. We now define the qubit quantum circuit description of a unitary operation over Fock space:

\begin{definition}[Qubit circuit description of Fock unitary operations]\label{def:circdesc}
    Given a unitary $U$ over Fock space preserving $\mathcal H_{m,K}$ and $\delta>0$, we say that $U$ has a $\delta$-approximate circuit description on $\mathcal H_{m,K}$ with $L$ gates over $q\ge mK$ qubits if there exists a unitary qubit circuit $V$ of $L$ single-qubit gates and $CZ$ gates over $q$ qubits such that
    \be
    \sup_{\substack{\ket\psi\in\mathcal H_{m,K}\\\norm{\ket\psi}=1}}\|(V\mathcal E-\mathcal EU)\ket\psi\|\le\delta.
    \ee
\end{definition}

With this definition in place, we show in \cref{thm:compile_new_passive} how to efficiently compile number-preserving unitaries admitting an efficient qubit circuit description. Compared to \cref{thm:cutoff-SK}, this includes unitaries such as those generated by constant-degree number-preserving polynomial Hamiltonians that preserve $\mathcal H_{m,K}$. Then, we extend this result to non-number-preserving unitaries in \cref{thm:compile_new}.

\subsubsection{Encoding gadgets}

Given a target unitary $U$, the proofs of \cref{thm:compile_new_passive,thm:compile_new} involve the following steps: mapping Fock states to a unary encoding in which $U$ has an efficient qubit circuit description, then mapping that unary encoding to a dual-rail encoding where the circuit can be efficiently compiled using gates from $\GKerr$ or $\GcrossKerr$, and finally decoding back from dual-rail to Fock space.

To that end, we first introduce energy-preserving unitary compilation gadgets for the Fock-to-unary encoding $\mathcal E$ (\cref{lem:encodingfu}), as well as for two unary-to-dual-rail encodings (\cref{lem:encodingudr}) defined as
\be\label{eq:Fisometry}
    \mathcal F:\;\ket{\bm x}\mapsto\bigotimes_{j=1}^q(\ket{1-x_j}\otimes\ket{x_j}),
\ee
and
\be\label{eq:Jisometry}
    \mathcal J:\;\ket{\bm x}\mapsto\mathcal F\ket{\bm x}\otimes\ket{|\bm x|},
\ee
for all $\bm x\in\{0,1\}^q$, where $|\bm x|$ denotes the Hamming weight of $\bm x$.
These gadgets may be of independent interest, since they provide simple tools to import qubit-based results to the bosonic setting.

For the Fock-to-unary encoding:

\begin{lem}[Efficient Fock-to-unary encoding]\label{lem:encodingfu}
    Let $0<\eta<1$ and $q\ge mK$. There exists a bosonic unitary circuit $C_{\mathcal E}$ such that for all unit $\ket\psi\in\mathcal H_{m,K}$,
    \be
        \|C_{\mathcal E}(\ket\psi\otimes\ket0^{\otimes(q+1)})-\mathcal E\ket\psi\otimes \ket0^{\otimes(m+1)}\|\le\frac\eta2.
    \ee
    Moreover, $C_{\mathcal E}$ has $\poly(q,\mathrm{log}(\frac1\eta))$ gates from $\GKerr$ or $\GcrossKerr$.
\end{lem}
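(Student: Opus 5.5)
The plan is to build $C_{\mathcal E}$ as a sequence of $mK$ energy-preserving gadgets, each acting on only three modes, and to compile each gadget with \cref{thm:cutoff-SK}. The modes are the $m$ data modes, $q$ fresh vacuum modes that will carry the unary register, and one extra vacuum mode. The extra mode serves as the ancilla required by \cref{thm:cutoff-SK} and is returned to the vacuum.

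\emph{Elementary gadget.} For data mode $j$ and slot $i\in\{1,\dots,K\}$ of its block of unary modes, define the two-mode unitary $P$ on (data mode, slot mode) as follows. Within each photon-number sector $n\ge1$ it swaps $\ket{n,0}\leftrightarrow\ket{n-1,1}$ and acts as the identity on all other Fock states; on the vacuum it is the identity. $P$ is a permutation of Fock states inside each sector, so it is energy-preserving and satisfies $P\ket{0,0}=\ket{0,0}$, giving $\gamma=0$ in \cref{thm:cutoff-SK}. Apply $P$ for $i=1,\dots,K$ in order, starting from $\ket{k_j}\ket{0}^{\otimes K}$. By induction, slot $i$ receives a photon exactly when $k_j\ge i$, so the data mode ends in the vacuum and the block holds $\ket1^{\otimes k_j}\ket0^{\otimes(K-k_j)}$.

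Doing this for all $j$ realizes $\mathcal E$ exactly. The padding modes are untouched and stay $\ket0^{\otimes(q-mK)}$, and the now-empty data modes together with the ancilla give the $\ket0^{\otimes(m+1)}$ factor. Any reordering of output modes is a relabeling, or can be done with $\pi/2$ beam splitters and phase shifters.

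\emph{Compilation and error.} Each gadget acts on a constant number of modes (two, plus the ancilla) with cutoff $K$, since the data mode together with its slot holds at most $K$ photons on the ideal trajectory. By \cref{thm:cutoff-SK}, each gadget can be approximated to error $\eta/(2mK)$ using $\poly(K)\log^{c_{\rm SK}}(mK/\eta)$ gates. Here $D=O(K^2)$ for three modes, and for $\GcrossKerr$ one more idle mode is borrowed. Write $U_N\cdots U_1$ for the ideal gadget sequence, with the ancilla in the vacuum, and $G_N\cdots G_1$ for the compiled one. The error bound is the telescoping estimate
\[
\norm{(G_N\cdots G_1-U_N\cdots U_1)\ket{\Psi}}\le\sum_{i=1}^{N}\norm{(G_i-U_i)\,U_{i-1}\cdots U_1\ket{\Psi}} .
\]
On the right, every ideal intermediate state has a vacuum ancilla and at most $K$ photons on the modes acted on by $G_i$. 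The per-gadget guarantee of \cref{thm:cutoff-SK} therefore applies to each term, and the total error is at most $\eta/2$. The total gate count is $mK\cdot\poly(K,\log(1/\eta))=\poly(q,\log(1/\eta))$, using $mK\le q$.

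\emph{Main obstacle.} The delicate point is this error bookkeeping. Compiled gates may leak small amplitude into the shared ancilla or into states with more local photons, where the cutoff guarantee says nothing. The telescoping against ideal intermediate states avoids ever applying a guarantee to those leaked components, so the argument has to be organized in exactly that form. The remaining ingredient is that \cref{thm:cutoff-SK} controls the operator on the whole local cutoff subspace tensored with the vacuum ancilla. That suffices because every ideal intermediate state is a superposition of such vectors with arbitrary states on the other modes, and the gadget acts trivially on those other modes.
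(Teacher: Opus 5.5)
Your proposal is correct, and it reaches the result by a simpler route than the paper's.

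\textbf{What the two proofs share.} Both move the photons of data mode $j$ one at a time into $K$ fresh slot modes. Both then compile everything into $\GKerr$ or $\GcrossKerr$ via \cref{thm:cutoff-SK}, reusing a single vacuum ancilla.

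\textbf{Where they differ.} The difference is how the single-photon transfer $\ket{n,0}\mapsto\ket{n-1,1}$ is realized.
\begin{itemize}
\item The paper builds it from the degree-4 energy-preserving Hamiltonian $R_{kl}=i[\hat a_k\hat a_l^\dagger(1-\hat n_l)-(1-\hat n_l)\hat a_l\hat a_k^\dagger]$. Its evolution rotates by the $n$-dependent angle $t\sqrt n$, so a single evolution cannot transfer the photon for all $n$ at once. To make the transfer work uniformly for $1\le n\le K$, the paper runs Grover's fixed-point amplitude amplification with $3^s=\Theta(K\log(m/\eta))$ and then corrects the $n$-dependent phases. Only afterwards are these native gates recompiled with \cref{thm:cutoff-SK}.
\item You instead observe that the exact transfer is a permutation of Fock states within each photon-number sector. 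It is therefore an energy-preserving unitary fixing the vacuum, and you hand it directly to \cref{thm:cutoff-SK}. This removes the amplification and phase-correction layers and their error term entirely.
\end{itemize}

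\textbf{What each route buys.} Your telescoping over ideal intermediate states makes rigorous the shared-ancilla bookkeeping that the paper compresses into ``recompiling all gates reusing the same vacuum ancilla''. In exchange, the paper's route gives an explicit gadget made of native \model\ gates (constant-degree energy-preserving evolutions and phase shifts) that does not depend on Solovay--Kitaev at all. That matters under the default gate set of \cref{def:model}. Your route instead uses \cref{thm:cutoff-SK} as a black box for the transfer logic itself.

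\textbf{Minor slips.} Neither affects the conclusion.
\begin{itemize}
\item For the three modes involved, $D=\binom{K+3}{3}=\Theta(K^3)$, not $O(K^2)$.
\item No extra mode needs to be borrowed for $\GcrossKerr$. Your $P$ acts on two modes, and the theorem's own ancilla makes three, which already meets its requirement.
\end{itemize}
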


\begin{proof}[Proof of \cref{lem:encodingfu}]
Let us define the 2-mode degree-4 energy-preserving Hamiltonian:
\be\label{eq:Hspread}
R_{kl}\coloneqq i\left[\hat a_k\hat a_l^\dag(1-\hat n_l)-(1-\hat n_l)\hat a_l\hat a_k^\dag\right].
\ee
For all $n\ge1$, we have:
\begin{align}
    R_{kl}\ket n_k\otimes\ket0_l&=i\sqrt n\ket{n-1}_k\otimes\ket1_l,\\
    R_{kl}\ket{n-1}_k\otimes\ket1_l&=-i\sqrt n\ket n_k\otimes\ket0_l,
\end{align}
so that
\be\label{eq:Hspread2}
    e^{-itR_{kl}}\ket n_k\otimes\ket0_l=\cos(t\sqrt n)\ket n_k\ket0_l+\sin(t\sqrt n)\ket{n-1}_k\otimes\ket1_l.
\ee
We now apply Grover's fixed-point quantum search algorithm~\cite{grover2005fixed} in order to amplify the amplitude of the state $\ket{n-1}\otimes\ket1$: let $T_0=e^{-i\frac\pi{4\sqrt K}R_{kl}}$ and define recursively
\be
T_{s}=T_{s-1}(e^{-i\frac\pi3\hat n_l})T_{s-1}^\dag(e^{i\frac\pi3\hat n_l})T_{s-1},
\ee
so that $T_s$ has $2\cdot3^s-1$ elementary energy-preserving gates. For $s=0$ and all $n\ge1$, we have
\be
    T_0\ket n\otimes\ket0=\cos\left(\frac\pi4\sqrt{\frac nK}\right)\ket n\otimes\ket0+\sin\left(\frac\pi4\sqrt{\frac nK}\right)\ket{n-1}\otimes\ket1,
\ee
and since $\sin x\ge\frac2\pi x$ for $x\in[0,\frac\pi2]$, the probability for $\ket{n-1,1}$ satisfies
\be
    \sin^2\!\left(\frac\pi4\sqrt{\frac nK}\right)\ge\frac n{4K}\;,
\ee
for $1\le n\le K$. Following~\cite{grover2005fixed}, the application of $T_s$ amplifies this probability exponentially and we obtain
\be
    T_s\ket n\otimes\ket0=\alpha_n(s)\ket n\otimes\ket0+\beta_n(s)\ket{n-1}\otimes\ket1,
\ee
for all $n\ge1$, where $|\alpha_n(s)|^2+|\beta_n(s)|^2=1$ and $|\alpha_n(s)|^2\le e^{-3^sn/(4K)}$ for all $1\le n\le K$. We also have $T_s\ket{0,0}=\ket 0\otimes\ket0$ since $R_{kl}\ket 0\otimes\ket0=0$.

Up to an exponentially small error in $s$ and a global phase, the sequence $T_s$ transfers exactly one photon from a given Fock state to an auxiliary vacuum mode. Hence, given an input Fock state $\ket n$ with $1\le n\le K$, repeatedly applying $T_s$ using $K$ different auxiliary vacuum modes and correcting the phase by applying
\be
    \prod_{j=1}^Ke^{-i\arg(\beta_j(s))\hat n_j}
\ee
allows us to approximately map $\ket n\otimes\ket0^{\otimes K}$ to $\ket0\otimes\ket1^{\otimes n}\otimes\ket0^{\otimes(K-n)}$. In doing so, the total number of elementary gates is given by $K(2\cdot3^s-1)+K=2K\cdot3^s$, and the total error satisfies
\be
\sqrt2\sum_{k=1}^ne^{-\frac{3^sk}{8K}}\le\frac{\sqrt2}{e^{\frac{3^s}{8K}}-1}.
\ee
To reach a precision of $\eta/{3m}$, we thus take $s$ satisfying
\be
    3^s=\Theta\left(K\log(\frac{m}\eta)\!\right).
\ee
Repeating this construction across the $m$ input modes and recompiling all gates reusing the same vacuum ancilla mode via $\GKerr$ or $\GcrossKerr$ using \cref{thm:cutoff-SK} completes the proof.
\end{proof}

For the unary-to-dual-rail encodings, we introduce two different constructions. The first one is used in the proof of \cref{thm:compile_new_passive} and implements the isometry $\mathcal J$ (\cref{eq:Jisometry}) using an auxiliary Fock state, while the second one, used in the proof of \cref{thm:compile_new}, implements the isometry $\mathcal F$ (\cref{eq:Fisometry}) using an auxiliary coherent state.

\begin{lem}[Efficient unary-to-dual-rail encodings]\label{lem:encodingudr}
    Let $0<\eta<1$ and $q\ge1$.
    There exists a bosonic unitary circuit $C_{\mathcal J}$ with $\poly(q,\mathrm{log}(\frac1\eta))$ gates from $\GKerr$ or $\GcrossKerr$ such that for all unit $\ket\phi\in\mathcal H_{q,1}$,
    \be\label{eq:F1}
        \|C_{\mathcal J}(\ket\phi\otimes\ket q\otimes\ket0^{\otimes(q+1)})-(\mathcal J\ket\phi)\otimes\ket0\|\le\frac\eta2,
    \ee
    where $\ket q$ denotes a Fock state with $q$ photons.
    
    Moreover, for every $\alpha\ge1$, there exists a bosonic unitary circuit $C_{\mathcal F}$ with $\poly(q,\mathrm{log}(\frac1\eta))$ gates generated by local constant-degree Hamiltonians such that for all unit $\ket\phi\in\mathcal H_{q,1}$,
    \be\label{eq:F2}
        \|C_{\mathcal F}(\ket\phi\otimes\ket\alpha\otimes\ket0^{\otimes(q+1)})-(\mathcal F\ket\phi)\otimes\ket\alpha\otimes\ket0\|\le\frac{q}{\alpha}+\frac\eta2.
    \ee
\end{lem}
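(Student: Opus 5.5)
For $C_{\mathcal J}$, I will process the $q$ unary qubits one at a time. The $q+1$ vacuum ancillas consist of $q$ ``zero-rail'' modes $z_1,\dots,z_q$ plus one spare mode. The reservoir mode $r$ starts in $\ket q$. At step $j$ I want the map
\be
\ket{x_j}\ket{0}_{z_j}\ket{n}_r\mapsto\ket{x_j}\ket{1-x_j}_{z_j}\ket{n-(1-x_j)}_r .
\ee
In words: if qubit mode $j$ is empty, move one photon from the reservoir into $z_j$; otherwise do nothing. After all $q$ steps the reservoir holds $q-\sum_j(1-x_j)=|\bm x|$ photons. The pair (zero-rail, qubit mode) then carries $\ket{1-x_j}\ket{x_j}$, which is exactly $\mathcal F\ket{\bm x}\otimes\ket{|\bm x|}=\mathcal J\ket{\bm x}$ after relabeling modes. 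The leftover spare ancilla stays in vacuum and plays the role of the final $\ket0$.

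Each step is a controlled version of the single-photon transfer in \cref{lem:encodingfu}. I will use the energy-preserving three-mode Hamiltonian
\be
R'_{j}\coloneqq i\bigl[\hat a_r\hat a_{z_j}^\dagger(1-\hat n_{z_j})-(1-\hat n_{z_j})\hat a_{z_j}\hat a_r^\dagger\bigr](1-\hat n_j).
\ee
This is a 2-mode degree-4 term multiplied by the projector $(1-\hat n_j)$, which equals $\ket{0}\!\bra0$ on a mode with at most one photon. On $\ket{n}_r\ket0_{z_j}$ it produces the rotation \cref{eq:Hspread2} with frequency $\sqrt n$, where $1\le n\le q$.

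\textbf{Main obstacle.} The frequency $\sqrt n$ depends on the unknown reservoir occupation, so a single fixed rotation cannot transfer exactly one photon. I will handle this as in \cref{lem:encodingfu}. First apply $T_0=e^{-i\frac{\pi}{4\sqrt q}R'_j}$. Then run Grover's fixed-point recursion, using $e^{\mp i\frac\pi3\hat n_{z_j}}$ as the reflection phases, for $3^s=\Theta(q\log(q/\eta))$ rounds, which makes the failure amplitude at most $\eta/(4q)$ for every $n\le q$. The residual $n$-dependent phase $\arg\beta_n(s)$ must also be removed. I will correct it with a diagonal energy-preserving unitary that is a function of $\hat n_r$ and $\hat n_{z_j}$. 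On the relevant finite set of occupations such a unitary is energy-preserving and acts on $O(1)$ modes, so it can be compiled with \cref{thm:cutoff-SK}.

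Finally I will compile every gate used, each of which acts on $O(1)$ modes with cutoff $O(q)$, into $\GKerr$ or $\GcrossKerr$ via \cref{thm:cutoff-SK}. I allocate precision $\eta/\poly(q)$ to each gate and sum the errors by telescoping. The total gate count is then $\poly(q,\log(1/\eta))$, because $D=\poly(q)$ for a constant number of modes at cutoff $O(q)$.

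For $C_{\mathcal F}$, the ideal operation runs the same $q$ controlled transfers with the reservoir replaced by the coherent mode $b$. Using parametric approximation, I substitute $\hat b\approx\alpha$, which gives the effective Hamiltonian
\be
H_j=(1-\hat n_j)\,\frac{\hat b\,\hat a_{z_j}^\dagger+\hat b^\dagger\hat a_{z_j}}{\alpha}.
\ee
On $\ket{x_j=0}\ket0_{z_j}$ this is a beamsplitter-type coupling at frequency $1$, so running it for time $\pi/2$ approximately creates one photon in $z_j$ while leaving $\ket\alpha$ approximately unchanged. I will apply \cref{thm:passive-compilation} to each step. The simulated Hamiltonian is a degree-2 active term acting on at most one photon, so the energy-moment bound $M=O(1)$ and each step costs error $O(1/\alpha)$; summed over the $q$ steps this gives $q/\alpha$ after the constant is absorbed (if needed, by rescaling $\alpha$ by a constant factor). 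The resulting unitaries are local constant-degree Hamiltonian evolutions. The remaining $\eta/2$ budget covers exact phase bookkeeping. I expect the only delicate point to be confirming that the constant in \cref{thm:passive-compilation} is at most $1$ per step, or absorbing it into the stated bound.
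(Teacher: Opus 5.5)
Your construction of $C_{\mathcal J}$ works, but it is more roundabout than it needs to be. Each step is simply the three-mode permutation unitary $J:\ket0\ket0\ket r\leftrightarrow\ket0\ket1\ket{r-1}$ for $1\le r\le q$. This is energy-preserving and fixes the vacuum, so the paper compiles it directly with \cref{thm:cutoff-SK} at cutoff $2q$ and error $\eta/(2q)$, reusing the spare vacuum mode as the SK ancilla. Your Hamiltonian, fixed-point Grover, and phase-correction layer only rebuild an approximation of $J$ before calling the same theorem, so the ``main obstacle'' is not really an obstacle.

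The genuine gap is in $C_{\mathcal F}$. Substituting $\hat b\to\alpha$ in your $H_j$ gives $(1-\hat n_j)(\hat a_{z_j}+\hat a_{z_j}^\dagger)$, which is a displacement generator. Since $\hat a^\dagger\ket1=\sqrt2\ket2$, nothing confines the evolution to $\mathrm{span}\{\ket0,\ket1\}$, and it is not a two-level rotation of frequency $1$. For the same reason your claim that $M=O(1)$ because the term ``acts on at most one photon'' does not hold. Concretely, $H_j$ is a beamsplitter between $b$ and $z_j$, mapping $\ket\alpha\ket0$ to $\ket{\alpha\cos\theta}\ket{-i\alpha\sin\theta}$ with $\theta=\pi/(2\alpha)$. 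The zero rail therefore ends in a coherent state $\ket\beta$ with $|\beta|\approx\pi/2$, whose squared overlap with $\ket1$ is $|\beta|^2e^{-|\beta|^2}\le e^{-1}$ (about $0.21$ as $\alpha\to\infty$). The per-step error is $\Omega(1)$ for every $\alpha$: \cref{thm:passive-compilation} would faithfully simulate the wrong target.

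The missing ingredient is the nonlinear truncation. The target single-mode Hamiltonian must be $H_Y=i[\hat a^\dagger(1-\hat n)-(1-\hat n)\hat a]$. It preserves $\mathrm{span}\{\ket0,\ket1\}$ and acts there as Pauli $Y$, so $e^{-i\frac\pi2H_Y}\ket0=\ket1$. Its energy-preserving lift is $\frac1\alpha R_{bz_j}$ from \cref{eq:Hspread}, which is exactly your $R'_j$ with the reservoir replaced by $b$. The gate is then $e^{-i\frac{\pi}{2\alpha}(1-\hat n_j)R_{bz_j}}$, and no amplification is needed because $\hat b\ket\alpha=\alpha\ket\alpha$ fixes the rate.

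Your fallback of absorbing a constant by rescaling $\alpha$ is also unavailable, since $\alpha$ is given and the bound $q/\alpha$ is explicit. The paper gets the sharp constant from a direct Duhamel estimate instead of \cref{thm:passive-compilation}. With $\ket{\psi(s)}=e^{-isH_Y}\ket0=\cos s\ket0+\sin s\ket1$, one finds $\|(\frac1\alpha R_{bz}-\II\otimes H_Y)\ket\alpha\ket{\psi(s)}\|=\frac{|\sin s|}{\alpha}\|(\hat b^\dagger-\alpha)\ket\alpha\|=\frac{|\sin s|}{\alpha}$. This integrates to $1/\alpha$ over $[0,\pi/2]$. Each ideal step returns $\ket\alpha$ unchanged, so telescoping over the $q$ steps gives $q/\alpha$ without using the $\eta/2$ slack.
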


\begin{proof}[Proof of \cref{lem:encodingudr}]

We first prove \cref{eq:F1}.
On a unary mode, an ancillary mode, and a ``reservoir'' mode containing the ancillary Fock state, define a number-preserving unitary $J$ by
\be
    \begin{aligned}
    &J\ket0\otimes\ket0\otimes\ket r=\ket0\otimes\ket1\otimes\ket{r-1},\\
    &J\ket0\otimes\ket1\otimes\ket{r-1}=\ket0\otimes\ket0\otimes\ket r,
    \end{aligned}
\ee
for all $1\le r\le q$, and acting as the identity on all other Fock basis states. This unitary preserves photon number and maps the vacuum to itself.
We apply $J$ successively to each unary mode and a new ancillary vacuum mode, always using the same reservoir mode initially in $\ket q$. Each $\ket0$ state in the unary mode transfers one photon from the reservoir mode to the ancillary mode, while each $\ket1$ state leaves the ancillary mode in the vacuum. On input $\ket{\bm x}$ in the unary modes, the reservoir mode thus ends in the state $\ket{q-(q-|\bm x|)}=\ket{|\bm x|}$, and, up to reordering, the ancillary modes end up in the state $\mathcal F\ket{\bm x}$.
We then compile each application of $J$ via gates from $\GKerr$ or $\GcrossKerr$, using \cref{thm:cutoff-SK} with a total-photon cutoff $2q$ and an error $\frac\eta{2q}$ and reusing the same vacuum ancilla mode, leading to a total error of at most $\frac\eta2$. The Fock basis matrix of $J$ is a permutation matrix of polynomial dimension, so each compilation costs $\poly(q,\log(q/\eta))$, and the final reordering of the modes costs $O(q)$ linear-optical gates.

To prove \cref{eq:F2}, we first apply a parametric approximation to the Hamiltonian $H_Y=i[\hat a^\dag(1-\hat n)-(1-\hat n)\hat a]$, which implements a Pauli $Y$ gate on $\mathrm{span}\{\ket0,\ket1\}$. The corresponding $2$-mode energy-preserving Hamiltonian is given by $\frac1\alpha R_{kl}$, with the $2$-mode degree-$4$ Hamiltonian $R_{kl}$ defined in \cref{eq:Hspread}. Using \cref{thm:passive-compilation} with $\ket\psi=\ket0$ we thus obtain $\|e^{-i\frac\pi{2\alpha}R_{kl}}\ket\alpha_k\otimes\ket0_l-\ket\alpha_k\otimes\ket1_l\|\le\frac C{\alpha}$, for some constant $C$. To get a tighter bound, we write $\ket{\psi(s)}=e^{-isH_Y}\ket0$ and we have
\be
\left\|\left(\frac1\alpha R_{kl}-\II\otimes H_Y\right)\ket\alpha_k\otimes\ket{\psi(s)}_l\right\|=\frac{|\sin s|}\alpha,
\ee
so a direct use of Duhamel's formula gives
\be
\label{eq:para0to1}
\left\|e^{-i\frac\pi{2\alpha}R_{kl}}\ket\alpha_k\otimes\ket0_l-\ket\alpha_k\otimes\ket1_l\right\|\le\frac 1\alpha.
\ee

To map unary data to dual-rail data, we need to do the following: on input $\ket1$, we append a register in state $\ket0$, while on input $\ket0$, we append a register in state $\ket1$. By \cref{eq:para0to1}, this can be achieved approximately by appending a register in state $\ket\alpha_0\otimes\ket0_k$ and applying $CR=e^{-i\frac{\pi}{2\alpha}(1-\hat n_j)R_{0k}}$, where $j$ denotes the unary input register and $j,k$ the two output registers. Repeating this operation for $j=1\dots q$ using the same coherent state in register $0$ and different auxiliary output registers for $k=q+j$, we obtain an approximate implementation of the isometry $\mathcal F$, with a total error of $O(\frac q{|\alpha|})$ and using $q$ gates. The final reordering of the modes costs $O(q)$ linear-optical gates, and all gates are generated by local energy-preserving polynomial Hamiltonians of
constant degree.
\end{proof}

\subsubsection{Compiling number-preserving unitaries}

With these encoding results in place, we now consider the compilation of number-preserving unitaries with an efficient qubit circuit description:

\begin{theorem}[Efficient compilation of number-preserving unitaries]\label{thm:compile_new_passive}
Let $0<\varepsilon<1/2$. Let $U$ be a number-preserving unitary over $\mathcal H_{m,K}$, and suppose that $U$ has a $\delta$-approximate qubit circuit description with $L$ gates over $q\ge mK$ qubits as in \cref{def:circdesc}, with $\delta\le\varepsilon/4$. Then, there exist a bosonic unitary circuit $C_U$ and $\alpha\in\mathbb R$ such that
\be
    \sup_{\substack{\ket\psi\in\mathcal H_{m,K}\\\norm{\ket\psi}=1}}\|C_U(\ket\psi\otimes\ket\alpha\otimes\ket0^{\otimes(2q+2)})-(U\ket\psi)\otimes\ket\alpha\otimes\ket0^{\otimes(2q+2)}\|\le\varepsilon.
\ee
Moreover, $C_U$ has $\poly(m,K,L,\log(1/\varepsilon))$ gates from $\GKerr$ or $\GcrossKerr$ and the coherent state $\ket\alpha$ has energy $O(mK+L +\log(1/\varepsilon))$.
\end{theorem}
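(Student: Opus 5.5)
The plan is to conjugate the qubit circuit $V$ by the encoding gadgets of \cref{lem:encodingfu,lem:encodingudr}, so that $C_U$ is a product of five stages, each approximately exact on the relevant subspace. The stages are:
\begin{itemize}
\item[(i)] \emph{Energy extraction.} Borrow $q$ photons from $\ket\alpha$ to produce the reservoir Fock state $\ket q$ needed by $C_{\mathcal J}$.
\item[(ii)] \emph{Encoding.} Apply $C_{\mathcal E}$ (Fock $\to$ unary), followed by $C_{\mathcal J}$ (unary $\to$ dual rail, with the Hamming weight stored in the reservoir).
\item[(iii)] \emph{Logical circuit.} Run a bosonic compilation $\widetilde V$ of $V$ on the $q$ dual-rail qubits.
\item[(iv)] \emph{Decoding.} Apply $C_{\mathcal J}^\dagger$ and then $C_{\mathcal E}^\dagger$.
\item[(v)] \emph{Energy return.} Apply the inverse of stage (i), which returns the borrowed photons to the coherent mode and restores $\ket\alpha$.
\end{itemize}
The errors are budgeted as follows: $\varepsilon/4$ for $\delta$, $\varepsilon/4$ for the compiled $V$, $\varepsilon/4$ for the two encoding gadgets (each used twice, with $\eta=\varepsilon/8$), and $\varepsilon/4$ for the energy extraction and its inverse. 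The total is then at most $\varepsilon$ by the triangle inequality and unitarity.

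\textbf{Stage (iii).} Single-qubit gates on the dual-rail pairs are linear optics. This follows from the encoded Paulis $\bar X,\bar Y,\bar Z$ listed in \cref{sec:sim_existing_models}, together with the fact that linear optics realizes all of $U(2)$ on the one-photon sector of two modes. $CZ$ is exactly $e^{i\pi\hat n_a\hat n_b}$ on the appropriate rails. With $\GcrossKerr$ this is a single gate; with $\GKerr$ I would compile it via \cref{thm:cutoff-SK} on $O(1)$ modes at cutoff $K=2$ with error $\varepsilon/(4L)$. This costs $O(L\log^{c_{\rm SK}}(L/\varepsilon))$ gates in total.

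\textbf{Why decoding works.} Stage (iv) is valid because $U$ is number-preserving. On the encoded subspace, $V\mathcal E\approx\mathcal E U$ maps unary strings of Hamming weight $n$ to unary strings of the same weight. Consequently the reservoir value $\ket{|\bm x|}$ written by $C_{\mathcal J}$ remains consistent with the dual-rail data after $\widetilde V$, so $C_{\mathcal J}^\dagger$ returns the reservoir to $\ket q$ and clears the ancillas. I will check this carefully, since the $\delta$-error of $V$ leaks amplitude outside $\mathrm{Ran}(\mathcal J)$. The fix is to bound $\|(\widetilde V\mathcal J-\mathcal J W)\mathcal E\ket\psi\|$, where $W$ denotes the exact encoded action, and to absorb the leakage into the $\delta\le\varepsilon/4$ budget. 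Here $\mathcal F$ commutes with $\mathcal J$ on weight-preserving data, so the leakage is controlled by $\delta$ alone.

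\textbf{Stage (i), the main obstacle.} The coherent mode must be returned unentangled at the end. I would first extract $q$ single photons from $\ket\alpha$ into $q$ fresh vacuum modes, using the fixed-point photon-transfer gadget $T_s$ from the proof of \cref{lem:encodingfu}. In that gadget the cutoff $K$ is replaced by a truncation $K'=O(\alpha^2)$ of the coherent state. The gadget moves one photon whenever the source mode is nonempty, with failure amplitude $e^{-\Omega(3^s n/K')}$.

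I expect the next step to be the hardest: showing that the coherent mode, after losing $q$ photons, is returned exactly by the inverse sequence once the dual-rail photons are restored. The approach is to take $\alpha^2=\Theta(mK+L+\log(1/\varepsilon))$. With this choice, the Poisson mass below $q$ photons, together with the tail above the cutoff $K'$, is at most $(\varepsilon/8)^2$. This accounts for the stated energy bound: the $mK$ term covers $q\ge mK$, and the $L$ and $\log(1/\varepsilon)$ terms absorb tail and union-bound factors. On the complementary event every transfer succeeds deterministically up to the fixed-point error. Moreover, each transfer is a unitary acting only on the coherent mode and the target mode. Because stage (v) applies exactly these inverse gadgets in reverse order to the same single-photon registers, the net action on the coherent mode is the identity up to the accumulated fixed-point error, giving $\ket\alpha$ back.

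The $q$ single photons are then merged into the reservoir $\ket q$ by compiled two-mode permutation unitaries via \cref{thm:cutoff-SK}, as in the construction of $J$. Alternatively, they can serve directly as the per-step photon source of $J$. All gadgets are compiled into $\GKerr$ or $\GcrossKerr$ with $\poly(q,\log(1/\varepsilon))$ gates. The total gate count is therefore $\poly(m,K,L,\log(1/\varepsilon))$.
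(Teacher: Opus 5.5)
Your architecture is the paper's. It also sets $C_U=P^\dagger C_{\rm enc}^\dagger W C_{\rm enc}P$ with $C_{\rm enc}=C_{\mathcal J}C_{\mathcal E}$, bounds the middle part by $\delta+2\eta$ through a three-term telescoping, and lets $P^\dagger$ restore $\ket\alpha$.

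The one ingredient you implement differently is $P$. The paper uses a single two-mode permutation $\ket{0,j}\leftrightarrow\ket{q,j-q}$, $q\le j\le K'$, compiled once with \cref{thm:cutoff-SK} (\cref{prop:prepare-Fock-states}). Your $q$ fixed-point transfers followed by merges also work. The $n$-dependent phases they leave on the residual coherent mode are harmless, because stage (v) is the exact inverse and only the product form $P(\ket\alpha\ket0\cdots)\approx\ket{\chi}\ket q\ket0\cdots$ matters. The price is that this route is far more expensive than a single compilation.

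The genuine gap is in how you obtain the stated energy and gate count. Your extraction needs $\alpha^2\gtrsim q+\log(1/\varepsilon)$, since $\Pr[N<q]\le 2^qe^{-\alpha^2/2}$. Your gadgets also cost $\poly(q,\log(1/\varepsilon))$. The hypothesis, however, only gives the \emph{lower} bound $q\ge mK$. So ``the $mK$ term covers $q\ge mK$'' is backwards. The $L$ term also cannot come from tail or union-bound factors, which contribute only $\log(1/\varepsilon)$. Without an upper bound on $q$, both claimed bounds fail.

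The missing step is the reduction $q\le mK+2L$, which the paper makes explicitly. The $q-mK$ padding qubits enter $\mathcal E$ in $\ket0$. Any padding qubit that $V$ never touches is therefore $\ket0$ in both $V\mathcal E\ket\psi$ and $\mathcal EU\ket\psi$, and can be deleted without changing the bound in \cref{def:circdesc}. The $L$ gates touch at most $2L$ qubits, which gives $q\le mK+2L$.

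Two further corrections. First, your alternative of using the extracted photons directly as per-step sources for $J$ fails. Source mode $j$ would end with $x_j$ photons, leaving a full copy $\ket{\bm x}$ instead of $\ket{|\bm x|}$. Since $V$ changes $\bm x$, this record decoheres the logical computation, so merging into a single reservoir is essential.

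Second, ``$\mathcal F$ commutes with $\mathcal J$'' is not the mechanism behind decoding. Write $W$ for the dual-rail circuit with $W\mathcal F=\mathcal FV$ (your $\widetilde V$). What you need is $\mathcal J\mathcal E\Pi_N\ket\psi=(\mathcal F\mathcal E\Pi_N\ket\psi)\otimes\ket N$ together with $U\Pi_N=\Pi_NU$. These give $\|(W\mathcal J\mathcal E-\mathcal J\mathcal EU)\ket\psi\|^2=\sum_N\|\mathcal F(V\mathcal E-\mathcal EU)\Pi_N\ket\psi\|^2\le\delta^2$. Leakage of $V$ out of Hamming weight $N$ is harmless because the reservoir reads $N$ on both sides. $C_{\rm enc}^\dagger$ is then applied to $\mathcal J\mathcal EU\ket\psi$, which costs only another $\eta$.
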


\noindent \cref{thm:compile_new_passive} allows us to compile more general number-preserving unitaries compared to \cref{thm:cutoff-SK}. For instance, the unitary dynamics generated by the Bose--Hubbard Hamiltonian at total photon number $K$ (extended by the identity to an energy-preserving unitary on $\mathcal H_{m,mK}$) can be compiled to exponential precision with a polynomial-size circuit over $\GKerr$ or $\GcrossKerr$, since the corresponding unitary has an efficient qubit circuit description via Hamiltonian simulation \cite{berry2015hamiltonian}.

\begin{proof}[Proof of \cref{thm:compile_new_passive}]
We first assume that we can use an ancillary mode with a Fock state $\ket q$. \cref{prop:prepare-Fock-states} below shows that this is without loss of generality.

Let $V_1$ denote the $\delta$-approximate qubit circuit description of $U$, with $L$ gates over $q$ qubits, which by \cref{def:circdesc} satisfies
\be\label{eq:UtoV1passive}
\|V_1\mathcal E-\mathcal EU\|\le\delta.
\ee
Since each of the $L$ gates of $V_1$ acts on at most $2$ qubits, we have $q\le 2L+mK$ without loss of generality, by removing the qubits on which $V_1$ acts as the identity.

Setting $C_\mathrm{enc}=C_\mathcal JC_\mathcal E$ from \cref{lem:encodingfu,lem:encodingudr} up to identity operators and a reordering of the ancillary registers, we obtain for all $0<\eta<1$, $q\ge mK$, and unit $\ket\psi\in\mathcal H_{m,K}$,
\be\label{eq:FEtoCencpassive}
    \|C_\mathrm{enc}(\ket\psi\otimes\ket q\otimes\ket0^{\otimes(2q+1)})-(\mathcal J\mathcal E\ket\psi)\otimes\ket0^{\otimes(m+1)}\|\le\eta,
\ee
and $C_\mathrm{enc}$ has $\poly(m,K,L,\logarithm(\frac1\eta))$ gates from $\GKerr$ or $\GcrossKerr$.

The isometry $\mathcal F$ (\cref{eq:Fisometry}) defines a logical unitary circuit $V_2\mathcal F=\mathcal FV_1$, where $V_2$ acts on dual-rail qubits instead of unary qubits and can be reproduced exactly by linear-optical gates (for single-qubit unitaries) and cross-Kerr gates (for $CZ$ gates). Let $W$ denote the corresponding bosonic circuit. By construction, on the dual-rail subspace,
\be\label{eq:V1toWpassive}
    W\mathcal F=V_2\mathcal F=\mathcal FV_1,
\ee
and $W$ has $O(L)$ gates.

For all $\ket\psi\in\mathcal H_{m,K}$, we let
\be
\mathcal A:\;\ket\psi\mapsto\ket\psi\otimes\ket q\otimes\ket0^{\otimes(2q+1)}
\ee
be the isometry which appends the ancillary states to the input $\ket\psi$. We have (omitting identity operators on the $\ket q$ register for brevity)
\be\label{eq:comberrpassive}
C_\mathrm{enc}^\dag WC_\mathrm{enc}\mathcal A-\mathcal AU=C_\mathrm{enc}^\dag[W(C_\mathrm{enc}\mathcal A-\mathcal J\mathcal E)+(W\mathcal J\mathcal E-\mathcal J\mathcal EU)+(\mathcal J\mathcal E-C_\mathrm{enc}\mathcal A)U].
\ee
Since $C_\mathrm{enc}^\dag$, $W$ and $U$ are unitary, the first and last terms are bounded by $\eta$ in operator norm by \cref{eq:FEtoCencpassive}. Moreover, for all $\ket\psi\in\mathcal H_{m,K}$ we have $\ket\psi=\sum_N\Pi_N\ket\psi$, where $\Pi_N$ is the projector onto the subspace of states with total photon number $N$, so
\begin{align}
    \nonumber\mathcal J\mathcal E\ket\psi&=\sum_N\mathcal J\mathcal E\Pi_N\ket\psi\\
    &=\sum_N\mathcal F\mathcal E\Pi_N\ket\psi\otimes\ket N.\label{eq:JE}
\end{align}
Hence, the second term in \cref{eq:comberrpassive} satisfies, for all unit $\ket\psi\in\mathcal H_{m,K}$
\begin{align}
\|(W\mathcal J\mathcal E-\mathcal J\mathcal EU)\ket\psi\|^2&=\sum_N\|(W\mathcal J\mathcal E-\mathcal J\mathcal EU)\Pi_N\ket\psi\|^2\notag\\
&=\sum_N\|W\mathcal J\mathcal E\Pi_N\ket\psi-\mathcal J\mathcal E\Pi_NU\ket\psi\|^2\notag\\
&=\sum_N\|W\mathcal F\mathcal E\Pi_N\ket\psi\otimes\ket N-\mathcal F\mathcal E\Pi_NU\ket\psi\otimes\ket N\|^2\notag\\
&=\sum_N\|(W\mathcal F\mathcal E-\mathcal F\mathcal EU)\Pi_N\ket\psi\otimes\ket N\|^2\notag\\
&=\sum_N\|\mathcal F(V_1\mathcal E-\mathcal EU)\Pi_N\ket\psi\|^2\notag\\
&\le\delta^2\sum_N\|\Pi_N\ket\psi\|^2\notag\\
&=\delta^2,
\end{align}
where we used the orthogonality of the photon-number sectors in the first and last lines, the fact that $U$ is energy-preserving in the second and fourth lines, \cref{eq:JE} in the third line, \cref{eq:V1toWpassive} in the fifth line, the fact that $\mathcal F$ is an isometry and \cref{eq:UtoV1passive} in the sixth line. Summing the error terms with the triangle inequality, we obtain
\be\label{eq:compilerr}
\|C_\mathrm{enc}^\dag WC_\mathrm{enc}\mathcal A-\mathcal AU\|\le\delta+2\eta.
\ee
By construction, $C_\mathrm{enc}^\dag WC_\mathrm{enc}$ is a bosonic circuit with local energy-preserving gates of constant degree. Moreover, $C_\mathrm{enc}$ has $\poly(m,K,L,\log(1/\eta))$ gates, with $q\le 2L+mK$, while $W$ has $O(L)$ gates, so $C_\mathrm{enc}^\dag WC_\mathrm{enc}$ also has $\poly(m,K,L,\log(1/\eta))$ gates. Finally, we may compile all the gates with \cref{thm:cutoff-SK} using gates from $\GKerr$ or $\GcrossKerr$.

To conclude the proof, we show that the ancillary Fock state $\ket q$ may be efficiently prepared to precision $\eta$ using a coherent state with energy $O(q+\log(1/\eta))$:

\begin{proposition}\label{prop:prepare-Fock-states}
For any $q\in\mathbb N$ and $0<\eta<1$, there exist $\alpha\in\mathbb R$, a normalized single-mode state $\ket \chi$, and a three-mode energy-preserving unitary circuit $P$ such that
\begin{align}
\norm{P(\ket0\otimes\ket\alpha\otimes\ket0)
-\ket q\otimes\ket\chi\otimes\ket0}&\leq\eta.
\label{eq:preparation}
\end{align}
Moreover, $P$ has $\poly(q,\log(1/\eta))$ gates from $\mathcal G_{\mathrm{Kerr}}$ or $\mathcal G_{\mathrm{Kerr}}^{\times}$, and the coherent state has energy $|\alpha|^2=O(q+\log(1/\eta))$.
Discarding the last two modes yields a state $\rho$ satisfying
\begin{align}
\frac12\norm{\rho-\ket q\bra q}_1&\leq\eta.
\label{eq:reduced}
\end{align}
\end{proposition}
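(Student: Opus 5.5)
Choose $\alpha$ with $|\alpha|^2=q+c\log(1/\eta)$ and transfer exactly $q$ photons from the coherent mode into the (initially vacuum) first mode, reusing the single-photon extraction trick from the proof of \cref{lem:encodingfu}. The coherent state has Poisson photon statistics with mean $|\alpha|^2$. A Chernoff bound gives $\Pr[\hat n<q]\le e^{-\Omega(\log(1/\eta))}\le \eta^2/4$ once $c$ is a large enough constant, and likewise the weight above a cutoff $K_{\max}=O(q+\log(1/\eta))$ is at most $\eta^2/4$. So up to error $\eta/2$ in norm we may replace $\ket\alpha$ by its normalized truncation $\ket{\alpha'}=\sum_{n=q}^{K_{\max}}c_n\ket n$, supported on photon numbers in $[q,K_{\max}]$.

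\textbf{Building the number-preserving map.} Define a three-mode number-preserving unitary $J$ on (target, reservoir, ancilla) by
\[
J\ket0\ket n\ket0=\ket q\ket{n-q}\ket0 \quad\text{for } q\le n\le K_{\max},
\]
extended by any unitary completion within each photon-number sector; one choice is to swap the two basis vectors and act as the identity elsewhere. Its Fock matrix is a permutation on the total-photon cutoff space with $K\coloneqq K_{\max}$. This matrix has dimension $\binom{K+3}{3}=\poly(q,\log(1/\eta))$, which is polynomial because the number of modes is constant. Applying $J$ to $\ket0\ket{\alpha'}\ket0$ gives exactly
\[
\ket q\otimes\Bigl(\sum_n c_n\ket{n-q}\Bigr)\otimes\ket0,
\]
a product state with $\ket\chi\coloneqq\sum_n c_n\ket{n-q}$. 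We then compile $J$ via \cref{thm:cutoff-SK} with cutoff $K$ and precision $\eta/2$. That theorem needs one vacuum ancilla, which the third mode provides, and it fixes the vacuum up to a phase; a permutation matrix sends vacuum to vacuum with $\gamma=0$. The result is $\poly(K,\log(1/\eta))=\poly(q,\log(1/\eta))$ gates from $\GKerr$ or $\GcrossKerr$. For $\GKerr$ with $m=2$ data modes the hypotheses are met; for $\GcrossKerr$ the third mode makes at least 3 modes available, as required.

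\textbf{Error accounting and the reduced state.} Let $P$ be the compiled circuit. The triangle inequality gives
\[
\norm{P\ket{0,\alpha,0}-\ket q\ket\chi\ket0}\le\norm{\ket\alpha-\ket{\alpha'}}+\norm{(P-J)\ket{0,\alpha',0}}\le\eta/2+\eta/2,
\]
where the second term uses that $\ket{0,\alpha',0}$ lies in the cutoff space and applies the uniform bound of \cref{thm:cutoff-SK}. For \cref{eq:reduced}, the trace distance between pure states is at most their vector distance. Partial trace is contractive, so $\frac12\norm{\rho-\ketbra{q}{q}}_1\le\eta$. If desired, constants can be rescaled so that $\eta$ bounds both norms.

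\textbf{Main obstacle.} The main care point is the tail bound: it must yield energy $O(q+\log(1/\eta))$ rather than $O(q\log(1/\eta))$. The plan is to use the Poisson lower-tail bound $\Pr[\hat n\le\lambda-t]\le e^{-t^2/(2\lambda)}$ with $\lambda=q+c\log(1/\eta)$ and $t=\lambda-q$. This gives exponent $\ge c^2\log^2(1/\eta)/(2(q+c\log(1/\eta)))$, which can be too weak when $q\gg\log(1/\eta)$. In that regime I would instead take $\lambda=2q+c\log(1/\eta)$, which is still $O(q+\log(1/\eta))$ energy; then $t\ge\lambda/2$ and the exponent is $\Omega(\lambda)\ge\Omega(\log(1/\eta))$. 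The upper-tail cutoff $K_{\max}=O(\lambda)$ follows the same way.
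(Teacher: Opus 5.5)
Your proposal is correct and is essentially the paper's proof. It uses the same number-preserving swap $\ket{0,n}\leftrightarrow\ket{q,n-q}$ for $q\le n\le K$ on (target, reservoir), compiled with \cref{thm:cutoff-SK} with the third mode as its vacuum ancilla, and coherent energy $\Theta(q+\log(1/\eta))$. The differences are minor: the paper takes $|\alpha|^2=4(q+\log(8/\eta))$ and bounds both Poisson tails by exponential Markov with $\mathbb E[2^{\pm N}]$ rather than Chernoff. It also pays an extra $2\sqrt{\Pr[N>K]}$ because it applies the compiler to the untruncated input, whereas you apply it to the normalized truncation.

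Three details to tidy. Commit to $|\alpha|^2=2q+c\log(1/\eta)$ from the outset, since your opening choice $q+c\log(1/\eta)$ does not push the lower tail below $\eta^2$ uniformly in $q$. Define $J$ as a two-mode unitary, so the third mode is genuinely free for the ancilla. Finally, $\norm{\ket\alpha-\ket{\alpha'}}\le\sqrt{2p}$, so your $\eta/2$ budget needs total tail mass $p\le\eta^2/8$.
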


\begin{proof}
The case $q=0$ is immediate by taking $\alpha=0$, $\ket \chi=\ket0$, and $P=\II$.
Assume $q\geq1$.
The idea is to choose a coherent ancilla whose photon number lies almost entirely between $q$ and a polynomial cutoff $K$, and to move $q$ photons to the output mode.

We pick $\alpha=2\sqrt{q+\log(8/\eta)}$ and $K=\lceil4|\alpha|^2\rceil$, and we define the set $\mathcal S=\{q,\ldots,K\}$ and the two-mode unitary $Q$ which swaps the states
\begin{align}
\ket{0,j}&\longleftrightarrow\ket{q,j-q},
\label{eq:unitary}
\end{align}
for all $j\in \mathcal S$, and leaves invariant all the other Fock basis states.
These swaps act in distinct photon-number sectors, so $Q$ is energy-preserving and thus maps the vacuum state to itself.
Write $\Pi_{\mathcal S}=\sum_{j\in \mathcal S}\ket j\bra j$, and $p=\norm{(\II-\Pi_\mathcal S)\ket\alpha}^2$,
and define the normalized state
\begin{align}
\ket \chi&\coloneqq
\frac{e^{-|\alpha|^2/2}}{\sqrt{1-p}}
\sum_{j=q}^{K}\frac{\alpha^j}{\sqrt{j!}}\ket{j-q}.
\label{eq:residual}
\end{align}
The unitary $Q$ maps $\ket0\otimes\Pi_{\mathcal S}\ket\alpha$ to $\sqrt{1-p}\ket q\otimes\ket \chi$, whereas the remaining part of the state is orthogonal to $\ket q\otimes\ket \chi$.
Hence,
\begin{align}
\norm{Q(\ket0\otimes\ket\alpha)-\ket q\otimes\ket \chi}^2=2-2\sqrt{1-p}\leq2p.
\label{eq:ideal-error}
\end{align}
It remains to bound $p$: for $N\sim\operatorname{Poisson}(|\alpha|^2)$, the coherent-state expansion gives
\begin{align}
p&=\sum_{j\notin \mathcal S}e^{-|\alpha|^2}\frac{|\alpha|^{2j}}{j!}
=\Pr[N<q]+\Pr[N>K].
\label{eq:poisson}
\end{align}
Since $\mathbb E[2^N]=e^{|\alpha|^2}$ and $\mathbb E[2^{-N}]=e^{-|\alpha|^2/2}$, Markov's inequality implies
\begin{align}
\Pr[N<q]&\leq2^q e^{-|\alpha|^2/2}\leq\frac{\eta^2}{64},\\
\Pr[N>K]&\leq2^{-K}e^{|\alpha|^2}\leq\frac{\eta^2}{64}.
\label{eq:tails}
\end{align}
Thus $p\leq\eta^2/32$, and \cref{eq:ideal-error} bounds the ideal preparation error by $\eta/4$.

We compile the unitary $Q$ using \cref{thm:cutoff-SK} with cutoff $K=O(q+\log(1/\eta))$ and error $\eta/2$, using one additional vacuum ancilla, and denote the resulting circuit by $P$.
Separating the state into two components based on its photon number, we obtain
\begin{align}
\norm{\bigl(P-Q\otimes\II\bigr)
(\ket0\otimes\ket\alpha\otimes\ket0)}
&\leq\frac\eta2+2\sqrt{\Pr[N>K]}.
\label{eq:compilation}
\end{align}
Combining \cref{eq:ideal-error,eq:tails,eq:compilation} gives
\begin{align}
\norm{P(\ket0\otimes\ket\alpha\otimes\ket0)
-\ket q\otimes\ket \chi\otimes\ket0}
&\leq\sqrt{2p}+\frac\eta2+2\sqrt{\Pr[N>K]}\nonumber\\
&\leq\frac\eta4+\frac\eta2+\frac\eta4=\eta.
\label{eq:total-error}
\end{align}
Finally, the trace distance between pure states is bounded by their $2$-norm distance, and partial trace cannot increase the trace distance, completing the proof of \cref{eq:reduced}.
\end{proof}

\noindent With \cref{prop:prepare-Fock-states}, we set $C_U=P^\dag C_\mathrm{enc}^\dag WC_\mathrm{enc}P$. Picking $\eta=\varepsilon/12$ and combining the compilation error in \cref{eq:compilerr} with the state preparation error from \cref{prop:prepare-Fock-states} completes the proof of \cref{thm:compile_new_passive}.

\end{proof}

\subsubsection{Compiling non-number-preserving unitaries}\label{ssscn:nonnumberpreserving}

Next, we consider non-number-preserving unitaries with an efficient qubit circuit description:

\begin{theorem}[Efficient compilation of non-number-preserving unitaries]\label{thm:compile_new}
Let $0<\varepsilon<1/2$. Let $U$ be a unitary over Fock space, and suppose that $U$ has a $\delta$-approximate qubit circuit description over $\mathcal H_{m,K}$ with $L$ gates over $q\ge mK$ qubits as in \cref{def:circdesc}, with $\delta\le\varepsilon/4$. Then, there exist a bosonic unitary circuit $C_U$ and $\alpha\in\mathbb R$ such that
\be
    \sup_{\substack{\ket\psi\in\mathcal H_{m,K}\\\norm{\ket\psi}=1}}\norm{C_U(\ket\psi\otimes\ket\alpha\otimes\ket0^{\otimes(2q+1)})-(U\ket\psi)\otimes\ket\alpha\otimes\ket0^{\otimes(2q+1)}}\le\varepsilon.
\ee
Moreover, $C_U$ has $\poly(m,K,L,\log(1/\varepsilon))$ gates generated by local constant-degree energy-preserving polynomial Hamiltonians and the coherent state $\ket\alpha$ has energy $O(q^2/\varepsilon^2)$.
\end{theorem}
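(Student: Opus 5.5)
The construction mirrors the proof of \cref{thm:compile_new_passive}, with the unary-to-dual-rail map $\mathcal J$ (which needs an ancillary Fock state $\ket q$) replaced by the coherent-state-assisted map $\mathcal F$ of \cref{lem:encodingudr}. The reservoir then carries no record of the photon number, so $U$ need not preserve it. Concretely, set $C_{\mathrm{enc}}=C_{\mathcal F}C_{\mathcal E}$, with $C_{\mathcal E}$ from \cref{lem:encodingfu} and $C_{\mathcal F}$ from \cref{lem:encodingudr}, up to reordering of ancilla registers. Let $W$ be the linear-optics plus cross-Kerr circuit realizing $V_2$ with $W\mathcal F=\mathcal F V_1$ on the dual-rail subspace. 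The output circuit is $C_U=C_{\mathrm{enc}}^\dagger W C_{\mathrm{enc}}$, and the appending isometry becomes $\mathcal A:\ket\psi\mapsto\ket\psi\otimes\ket\alpha\otimes\ket0^{\otimes(2q+1)}$. As before, one may assume $q\le 2L+mK$ by deleting idle qubits.

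The error analysis uses the same telescoping identity as \cref{eq:comberrpassive}:
\begin{equation}
C_{\mathrm{enc}}^\dagger WC_{\mathrm{enc}}\mathcal A-\mathcal AU=C_{\mathrm{enc}}^\dagger\bigl[W(C_{\mathrm{enc}}\mathcal A-\mathcal B)+(W\mathcal B-\mathcal BU)+(\mathcal B-C_{\mathrm{enc}}\mathcal A)U\bigr],
\end{equation}
where $\mathcal B\ket\psi:=(\mathcal F\mathcal E\ket\psi)\otimes\ket\alpha\otimes\ket0^{\otimes(m+1)}$ after reordering.

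The middle term is now easier than in the passive case. The $\ket\alpha$ register factors out, so $\norm{W\mathcal B-\mathcal BU}=\norm{\mathcal F(V_1\mathcal E-\mathcal EU)}\le\delta$ by \cref{def:circdesc} and isometry of $\mathcal F$; no sector-wise argument is needed. The first term is bounded by combining \cref{lem:encodingfu} (error $\eta/2$) with \cref{eq:F2} (error $q/\alpha+\eta/2$). Applying the latter requires $\mathcal E\ket\psi$ to be a unit vector in $\mathcal H_{q,1}$, and $C_{\mathcal F}$ acts linearly, so the bound holds uniformly. The third term needs the same bound applied to $U\ket\psi$, which requires $U\ket\psi\in\mathcal H_{m,K}$.

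This last point is the main obstacle. For a non-number-preserving $U$ it holds only if $U$ preserves $\mathcal H_{m,K}$, which I will take as implicit in \cref{def:circdesc} (``$U$ over Fock space preserving $\mathcal H_{m,K}$''). Otherwise I would rewrite the third term as $\mathcal B U-C_{\mathrm{enc}}\mathcal A U$ and bound it on the range of $U$ restricted to $\mathcal H_{m,K}$.

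Summing the three terms gives a total error $\le\delta+2\eta+2q/\alpha$. With $\delta\le\varepsilon/4$, choosing $\eta=\varepsilon/8$ and $\alpha=8q/\varepsilon$ gives total error $\le\varepsilon$. The energy is then $\alpha^2=O(q^2/\varepsilon^2)$, matching the claim. The gate count is $\poly(q,\log(1/\eta))+O(L)=\poly(m,K,L,\log(1/\varepsilon))$.

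Finally, the gates are local and of constant degree. $C_{\mathcal E}$ and $W$ use gates from $\GKerr$ or $\GcrossKerr$. $C_{\mathcal F}$ uses the controlled gate $e^{-i\frac{\pi}{2\alpha}(1-\hat n_j)R_{0k}}$, generated by a three-mode energy-preserving polynomial Hamiltonian of degree $6$. Since $C_{\mathcal F}$ is not number-preserving on the data alone, I would not try to recompile it into Kerr gates; this is why the statement only claims gates generated by local constant-degree energy-preserving Hamiltonians.

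One further check: the coherent register must be returned approximately as $\ket\alpha$ after $C_{\mathrm{enc}}^\dagger$. This already follows from the bound on the first term, since that bound includes the $\ket\alpha$ register in the target state $\mathcal B\ket\psi$.
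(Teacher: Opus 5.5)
Your proposal is correct and follows essentially the same route as the paper. The paper also takes $C_U=C_{\mathrm{enc}}^\dagger W C_{\mathrm{enc}}$ with $C_{\mathrm{enc}}=C_{\mathcal F}C_{\mathcal E}$, uses the same three-term bound $\delta+2\eta+2q/\alpha$, and sets $\eta=\varepsilon/8$. It chooses $\alpha=4q/\varepsilon$ instead of your $8q/\varepsilon$, which leaves the $O(q^2/\varepsilon^2)$ energy unchanged. Your reliance on $U$ preserving $\mathcal H_{m,K}$ is justified: it is built into \cref{def:circdesc}, and the paper uses it implicitly when bounding the third term.
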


\noindent Before proving the result, we discuss its significance. \cref{thm:compile_new} implies that unitary operations with an efficient qubit circuit description can be compiled efficiently, using a single coherent state as the source of energy. Compared to \cref{thm:compile_new_passive}, \cref{thm:compile_new} includes unitaries which are not necessarily energy-preserving, but the price to pay is that the required energy scales polynomially with the desired (inverse) precision instead of logarithmically, and that the gates used include gates generated by local constant-degree Hamiltonians beyond $\GKerr$ or $\GcrossKerr$.

On the other hand, \cref{thm:compile_new} allows us to compile a larger class of non-number-preserving unitaries compared to a simple combination of \cref{thm:passive-compilation,thm:cutoff-SK}. A concrete example is a photon-number multi-controlled gate $e^{i\prod_{j=1}^m\hat n_j H_0}$ for which parametric approximation is not efficient since the total degree is growing with $m$, where $H_0$ is a well-behaved non-number-preserving Hamiltonian of constant degree.

Recall that we showed in \cref{sec:sim_existing_models} that computations from existing models of quantum computations can be reproduced efficiently by \model\ using parametric approximation via \cref{thm:passive-compilation}, as long as energy moments are reasonably bounded throughout the computation. \cref{thm:compile_new} allows us to strengthen this conclusion, which now remains valid when only assuming that the mean energy $E$ remains polynomially bounded throughout the computation.\footnote{We assume that a suitable cutoff-preserving approximation to the target evolution is supplied through an efficient approximate qubit circuit description, as in \cref{def:circdesc}. For a constant number of modes, supplying its Fock-basis matrix to sufficient precision suffices. We do not claim an efficient procedure for constructing this description from the original Hamiltonian in general.} To see this, truncate each state throughout the computation to a total photon number $EL^2/\varepsilon^2$ and apply \cref{thm:compile_new}, which leads to a simulation of the computation using $\poly(m,L,E,1/\varepsilon)$ resources. Since all gates used in \cref{thm:compile_new} are generated by local energy-preserving constant-degree Hamiltonians, we can further apply \cref{thm:cutoff-SK} at total-photon cutoff $O((mK+|\alpha|^2)/\varepsilon^2)$ to recompile all gates using $\GKerr$ or $\GcrossKerr$, using once again $\poly(m,L,E,1/\varepsilon)$ resources.

\begin{proof}[Proof of \cref{thm:compile_new}]
The proof is similar to that of \cref{thm:compile_new_passive}, using the unary-to-dual-rail isometry $\mathcal F$ instead of $\mathcal J$.

Let $V_1$ denote the $\delta$-approximate qubit circuit description of $U$, with $L$ gates over $q$ qubits, which by \cref{def:circdesc} satisfies
\be\label{eq:UtoV1}
\|V_1\mathcal E-\mathcal EU\|\le\delta.
\ee
The unitary $U$ is not necessarily energy-preserving and $V_1$ acts on a unary encoding of Fock states, so $V_1$ is also not necessarily energy-preserving, when seen as a unitary acting on $\mathcal H_{q,1}$. To compile the target unitary $U$ over $\mathcal H_{m,K}$ using only energy-preserving unitary gates, we further encode each unary qubit that $V_1$ acts on as a dual-rail qubit in Fock space via the isometry $\mathcal F$ (see \cref{eq:Fisometry}).
This isometry defines a logical unitary circuit $V_2\mathcal F=\mathcal FV_1$, where $V_2$ acts on dual-rail qubits instead of unary qubits. Seen as a unitary acting on the dual-rail subspace of $\mathcal H_{2q,1}$, $V_2$ is now energy-preserving and can be reproduced exactly gate-by-gate by linear-optical gates (for single-qubit unitaries) and cross-Kerr gates (for $CZ$ gates). We denote by $W$ the corresponding bosonic circuit. By construction, on the dual-rail subspace,
\be\label{eq:V1toW}
    W\mathcal F=V_2\mathcal F=\mathcal FV_1,
\ee
and $W$ has $O(L)$ gates.

We now use the fact that the combination of both encoding isometries ($\mathcal E$ from Fock space to unary and $\mathcal F$ from unary to dual-rail) can be implemented approximately using a coherent state and a fixed set of unitary gates generated by local energy-preserving polynomial Hamiltonians of constant degree: setting $C_\mathrm{enc}=C_\mathcal FC_\mathcal E$ up to identity operators and a reordering of the ancillary registers and combining \cref{lem:encodingfu,lem:encodingudr}, we obtain for all $0<\eta<1$, $q\ge mK$, $\alpha\in\mathbb R$ with $\alpha\ge1$, and unit $\ket\psi\in\mathcal H_{m,K}$,
\be\label{eq:FEtoCenc}
    \|C_\mathrm{enc}(\ket\psi\otimes\ket\alpha\otimes\ket0^{\otimes(2q+1)})-(\mathcal F\mathcal E\ket\psi)\otimes\ket\alpha\otimes\ket0^{\otimes(m+1)}\|\le\eta+\frac{q}{\alpha}.
\ee
Moreover, $C_\mathrm{enc}$ has $\poly(m,K,L,\log(1/\eta))$ energy-preserving gates generated by local polynomial Hamiltonians of constant degree (using once again $q\le 2L+mK$ without loss of generality).

For all $\ket\psi\in\mathcal H_{m,K}$, we let
\be
\mathcal A:\;\ket\psi\mapsto\ket\psi\otimes\ket\alpha\otimes\ket0^{\otimes(2q+1)}
\ee
be the isometry which appends the ancillary states to the input $\ket\psi$. We have (omitting identity operators on the $\ket\alpha$ register for brevity)
\be
C_\mathrm{enc}^\dag WC_\mathrm{enc}\mathcal A-\mathcal AU=C_\mathrm{enc}^\dag[W(C_\mathrm{enc}\mathcal A-\mathcal F\mathcal E)+(W\mathcal F\mathcal E-\mathcal F\mathcal EU)+(\mathcal F\mathcal E-C_\mathrm{enc}\mathcal A)U].
\ee
Since $C_\mathrm{enc}^\dag$, $W$ and $U$ are unitary, the first and last terms are bounded by $\eta+\frac{q}{|\alpha|}$ in operator norm by \cref{eq:FEtoCenc}. Moreover, the second term satisfies
\begin{align}
\|W\mathcal F\mathcal E-\mathcal F\mathcal EU\|&=\|\mathcal FV_1\mathcal E-\mathcal F\mathcal EU\|\notag\\
&\le\|V_1\mathcal E-\mathcal EU\|\notag\\
&\le\delta,
\end{align}
where we used \cref{eq:V1toW} in the first line, the fact that $\mathcal F$ is an isometry in the second line, and \cref{eq:UtoV1} in the last line. Setting $C_U=C_\mathrm{enc}^\dag WC_\mathrm{enc}$ and summing the error terms with the triangle inequality, we obtain
\be
\|C_U\mathcal A-\mathcal AU\|\le\delta+2\left(\eta+\frac{q}{\alpha}\right)=\delta+2\eta+\frac{2q}{\alpha}.
\ee
By construction, $C_U=C_\mathrm{enc}^\dag WC_\mathrm{enc}$ is a bosonic circuit with local energy-preserving gates of constant degree. Moreover, $C_\mathrm{enc}$ has $\poly(m,K,L,\log(1/\eta))$ gates, while $W$ has $O(L)$ gates, so $C_U$ also has $\poly(m,K,L,\log(1/\eta))$ gates generated by local constant-degree Hamiltonians. Picking $\eta=\varepsilon/8$ and $\alpha=4q/\varepsilon$ concludes the proof.
\end{proof}

\subsection{State synthesis}
\label{sec:synthesis}

We give deterministic state-preparation results to inverse-polynomial accuracy in trace distance. We show how one can prepare finite superpositions of Fock states, and high-quality Gottesman--Kitaev--Preskill (GKP) states efficiently via \model.
All gates that we use
belong to either $\GKerr$ or
$\GcrossKerr$, as defined in \cref{sec:cutoff-sk}.
Importantly, there are no intermediate measurements or
postselections required. This is in contrast with usual protocols for GKP state preparation \cite{brenner2025complexity}.

\subsubsection{Preparing finite Fock superpositions}

States of finite Fock cutoff form a rich subset of states (indeed, they form a dense subset of the entire Hilbert space). Here, we show that a state of up to total photon number $K$, over $O(1)$ many modes, can be prepared with $\poly(K,\frac{1}{\varepsilon})$ many linear-optical and Kerr or cross-Kerr gates. Note that we keep the number of modes constant (this also allows the state to be efficiently describable).

Write \(\mathcal H^{(m)}_{\le K}=\operatorname{span}\{
|\bm n\rangle:|\bm n|\le K\}\).
The following construction proves state synthesis directly from
\cref{thm:cutoff-SK}. First, we introduce an auxiliary lemma, which is a direct generalization of \cref{prop:prepare-Fock-states}:

\begin{lem}[Preparing a state using a coherent reservoir]\label{cor:arbitrary states}
Let \(K\ge1\), let \(|\psi\rangle\in\mathcal H^{(m)}_{\le K}\)
be normalized, and let \(\alpha>0\) be real. There is an energy-preserving
unitary \(U\) on the \(m\) target modes and one reservoir mode such that
\begin{equation}\label{eq:new-lifting-error}
\big\|U|0^m,\alpha\rangle-|\psi\rangle|\alpha\rangle\big\|
\le \frac{\sqrt{\langle\psi|\hat N^2|\psi\rangle}}{\alpha}
       +2\sqrt{\Pr\{X<K\}},
\end{equation}
where $X\sim\operatorname{Poisson}(\alpha^2)$.
In particular, the first term is at most \(K/\alpha\), and whenever $\alpha^2\ge4K\log2$, we have that $\Pr\{ X<K\}\le e^{-\alpha^2/4}$.
\end{lem}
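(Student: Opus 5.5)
The plan is to generalize the swap construction of \cref{prop:prepare-Fock-states}: let the reservoir donate exactly the right number of photons, sector by sector. Write $|\psi\rangle=\sum_{|\bm n|\le K}c_{\bm n}|\bm n\rangle$ and $|\alpha\rangle=\sum_j a_j|j\rangle$ with $a_j=e^{-\alpha^2/2}\alpha^j/\sqrt{j!}$.

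\emph{Construction of $U$.} For each total photon number $j\ge K$, the vector $|0^m,j\rangle$ is sent to
\[
|\Phi_j\rangle\coloneqq\sum_{\bm n}c_{\bm n}|\bm n\rangle|j-|\bm n|\rangle .
\]
This vector lies in the $j$-photon sector of $m+1$ modes. It is normalized because the $|\bm n\rangle|j-|\bm n|\rangle$ are orthonormal and $\sum|c_{\bm n}|^2=1$. We complete this assignment to a unitary on each such sector. On sectors with $j<K$ we set $U$ to the identity. Then $U=\bigoplus_j U^{(j)}$ is energy-preserving by construction.

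\emph{Splitting the error.} The output is $U|0^m,\alpha\rangle=\sum_{j<K}a_j|0^m,j\rangle+\sum_{j\ge K}a_j|\Phi_j\rangle$. Define the idealized vector $|\Omega\rangle\coloneqq\sum_{j\ge0}a_j|\Phi_j\rangle$, where for $j<K$ we only keep the terms with $|\bm n|\le j$. Then
\[
\big\|U|0^m,\alpha\rangle-|\Omega\rangle\big\|\le 2\sqrt{\Pr\{X<K\}},
\]
since both discrepancies live on the sectors $j<K$, which carry total weight $\Pr\{X<K\}$.

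\emph{Main term.} Regrouping by $\bm n$ gives
\[
|\Omega\rangle-|\psi\rangle|\alpha\rangle=\sum_{\bm n}c_{\bm n}|\bm n\rangle\otimes\big(|\alpha^{(N)}\rangle-|\alpha\rangle\big),
\qquad N=|\bm n|,
\qquad |\alpha^{(N)}\rangle\coloneqq\sum_k a_{k+N}|k\rangle .
\]
The $|\bm n\rangle$ are orthogonal, so
\[
\big\||\Omega\rangle-|\psi\rangle|\alpha\rangle\big\|^2=\sum_{\bm n}|c_{\bm n}|^2\,\big\||\alpha^{(N)}\rangle-|\alpha\rangle\big\|^2 .
\]
It therefore suffices to show $\||\alpha^{(N)}\rangle-|\alpha\rangle\|\le N/\alpha$; summing then gives $\langle\psi|\hat N^2|\psi\rangle/\alpha^2$. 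Since $\|\alpha^{(N)}\|\le1$, we have $\||\alpha^{(N)}\rangle-|\alpha\rangle\|^2\le 2-2\langle\alpha|\alpha^{(N)}\rangle$, and the overlap is
\[
\langle\alpha|\alpha^{(N)}\rangle=\mathbb E\!\left[\frac{\alpha^N}{\sqrt{(X+1)\cdots(X+N)}}\right],\qquad X\sim\operatorname{Poisson}(\alpha^2).
\]

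\emph{Main obstacle: bounding this overlap from below.} The function $x\mapsto x^{-1/2}$ is convex, so Jensen's inequality gives
\[
\langle\alpha|\alpha^{(N)}\rangle\ge \frac{\alpha^N}{\sqrt{\mathbb E[(X+1)\cdots(X+N)]}} .
\]
The rising factorial moment is $\mathbb E[(X+1)\cdots(X+N)]=\sum_i\binom Ni\frac{N!}{i!}\alpha^{2i}$. I will compare this termwise with the binomial expansion of $(\alpha^2+N)^N$ to obtain $\mathbb E[(X+1)\cdots(X+N)]\le(\alpha^2+N)^N$. This yields
\[
\langle\alpha|\alpha^{(N)}\rangle\ge(1+N/\alpha^2)^{-N/2}\ge 1-\frac{N^2}{2\alpha^2},
\]
and hence $\||\alpha^{(N)}\rangle-|\alpha\rangle\|^2\le N^2/\alpha^2$, which is the required bound.

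\emph{Remaining claims.} The bound $\langle\psi|\hat N^2|\psi\rangle\le K^2$ is immediate from $|\bm n|\le K$. The Poisson lower tail follows as in \cref{prop:prepare-Fock-states}. By Markov's inequality applied to $\mathbb E[2^{-X}]=e^{-\alpha^2/2}$, we get $\Pr\{X<K\}\le 2^Ke^{-\alpha^2/2}$, and this is at most $e^{-\alpha^2/4}$ whenever $\alpha^2\ge4K\log2$.
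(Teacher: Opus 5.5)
Your proof is correct, and its structure matches the paper's: the same sector-wise unitary, and the same \(2\sqrt{\Pr\{X<K\}}\) cost for the sectors \(j<K\). Your vector \(|\Omega\rangle\) is exactly the paper's intermediate vector \(\sum_{\bm n}c_{\bm n}|\bm n\rangle B^{|\bm n|}|\alpha\rangle\), where \(B=\sum_r|r\rangle\langle r+1|\) is the lowering shift; indeed \(|\alpha^{(N)}\rangle=B^N|\alpha\rangle\).

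The difference lies in the key estimate \(\|B^N|\alpha\rangle-|\alpha\rangle\|\le N/\alpha\). The paper proves only a one-step bound. It uses \(B\sqrt{\hat n}=\hat a\) to write \((B-\II)|\alpha\rangle=B(\II-\sqrt{\hat n}/\alpha)|\alpha\rangle\). This gives \(\|(B-\II)|\alpha\rangle\|^2\le\mathbb E(1-\sqrt X/\alpha)^2\le\mathbb E(X-\alpha^2)^2/\alpha^4=1/\alpha^2\), and the paper then telescopes using that \(B\) is a contraction. You instead compute the overlap \(\langle\alpha|B^N|\alpha\rangle\) exactly as a Poisson expectation. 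You then lower-bound it via Jensen, the rising-factorial moment formula, and the termwise comparison \(N!/i!\le N^{N-i}\).

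The paper's route is shorter and needs only the Poisson variance. Yours is more computational but gives the explicit fidelity bound \(\langle\alpha|B^N|\alpha\rangle\ge(1+N/\alpha^2)^{-N/2}\). That bound stays meaningful when \(N^2/\alpha^2\) is not small, since your intermediate squared error \(2-2(1+N/\alpha^2)^{-N/2}\) never reaches \(2\).
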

\begin{proof}
Write \(|\psi\rangle=\sum_{|\bm n|\le K}
 c_{\bm n}|\bm n\rangle\).
In each total-number sector \(j\ge K\), prescribe
\begin{equation}\label{eq:new-sector-map}
 U|0^m,j\rangle=
 \sum_{|\bm n|\le K}c_{\bm n}
 |\bm n,j-|\bm n|\rangle.
\end{equation}
The right-hand side is a unit vector in the same sector, so this prescription
extends to a unitary on that finite-dimensional sector. Take the identity
on sectors \(j<K\). The direct sum is an energy-preserving unitary fixing
the global vacuum.

Let \(B=\sum_{r\ge0}|r\rangle\langle r+1|\) be the lowering shift on
the reservoir. Since \(B\sqrt{\hat n}=\hat a\) and \(\hat a|\alpha\rangle
=\alpha|\alpha\rangle\),
\begin{align}
\|(B-\II)|\alpha\rangle\|^2
&=\|B(\II-\sqrt{\hat n}/\alpha)|\alpha\rangle\|^2\notag\\
&\le \mathbb E(1-\sqrt X/\alpha)^2
 \le \frac{\mathbb E(X-\alpha^2)^2}{\alpha^4}
 =\frac1{\alpha^2}.\label{eq:new-shift}
\end{align}
Here \(|\sqrt{x}-\alpha|\le |x-\alpha^2|/\alpha\), and \(B\) is a
contraction. Telescoping gives
\(\|(B^r-\II)|\alpha\rangle\|\le r/\alpha\).
Put \(P=\sum_{j\ge K}|j\rangle\langle j|\) and
\(p=\|(\II-P)|\alpha\rangle\|^2\). \cref{eq:new-sector-map} gives
\be
 U(|0^m\rangle P|\alpha\rangle)
 =\sum_{\bm n}c_{\bm n}|\bm n\rangle
 B^{|\bm n|}P|\alpha\rangle.
\ee
Replacing \(P|\alpha\rangle\) by \(|\alpha\rangle\) on the right costs
at most \(\sqrt p\), by orthogonality of the target Fock states and
contractivity of each \(B^r\). Restoring the omitted input costs another
\(\sqrt p\). Finally, the squared norm of the remaining difference is
at most \(\sum_{\bm n}|c_{\bm n}|^2
 |\bm n|^2/\alpha^2\). This proves the claim.
\end{proof}

\begin{theorem}[Approximate synthesis with a fixed number of modes]
\label{thm:new-finite}
Let \(K\ge1\), \(0<\epsilon<1/2\), and
\(|\psi\rangle\in\mathcal H^{(m)}_{\le K}\).
There is a circuit on \(m+2\) modes whose reduced target output \(\rho\)
satisfies \(\frac12\norm{\rho-|\psi\rangle\langle\psi|}_1\le\epsilon\).
One can take a single coherent input of mean photon number
\begin{equation}\label{eq:new-energy}
 \mu=\alpha^2=64K^2/\epsilon^2
\end{equation}
and all other inputs vacuum. Let
$K_{\mathrm{joint}}=\left\lceil4\mu+4\log(16/\epsilon)\right\rceil$.
A sufficient gate count is
\begin{equation}\label{eq:new-gate-count}
 O\!\left(\poly(m,K_{\mathrm{joint}},\frac{1}{\varepsilon}) \, K_{\mathrm{joint}}^{O(m)}\right).
\end{equation}
Thus, for fixed \(m\), the gate count and input energy are polynomial in
\(K,\epsilon^{-1}\). Given the target amplitudes to sufficient precision,
classical compilation takes \(\poly(m,K_{\mathrm{joint}}) \, K_{\mathrm{joint}}^{O(m)}
\log(1/\epsilon)\) time.
\end{theorem}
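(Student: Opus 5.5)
The plan is to combine \cref{cor:arbitrary states} with \cref{thm:cutoff-SK} and glue the two with a truncation of the coherent input. The $m+2$ modes are the $m$ target modes, one reservoir mode carrying $\ket\alpha$, and the vacuum ancilla needed by \cref{thm:cutoff-SK}.

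\textbf{Step 1: ideal unitary.} Apply \cref{cor:arbitrary states} with $\alpha^2=\mu=64K^2/\epsilon^2$. This gives an energy-preserving unitary $U$ on $m+1$ modes with $U\ket{0^m,0}=\ket{0^m,0}$, so we may take $\gamma=0$. Its error is at most $K/\alpha+2\sqrt{\Pr\{X<K\}}$. We have $K/\alpha=\epsilon/8$. Since $\alpha^2\ge 4K\log 2$ for $\epsilon<1/2$, the second term is at most $2e^{-\mu/8}$, which is at most $\epsilon/8$ (this uses $\mu\ge 256K^2$).

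\textbf{Step 2: truncation.} Let $\Pi_{\le K_{\mathrm{joint}}}$ be the projector onto total photon number at most $K_{\mathrm{joint}}$. The input $\ket{0^m,\alpha}$ has total photon number distributed as $\operatorname{Poisson}(\mu)$. The Chernoff/Markov bound $\Pr[X>K_{\mathrm{joint}}]\le 2^{-K_{\mathrm{joint}}}e^{\mu}$, already used in \cref{prop:prepare-Fock-states}, with $K_{\mathrm{joint}}\ge 4\mu+4\log(16/\epsilon)$ gives a tail probability of order $(\epsilon/16)^{2}$. Hence the discarded component has norm $O(\epsilon/16)$.

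\textbf{Step 3: compilation.} Apply \cref{thm:cutoff-SK} to $U$ restricted to $\mathcal H^{(m+1)}_{\le K_{\mathrm{joint}}}$, with precision $\epsilon/8$. This yields a circuit $C$ on $m+2$ modes. For the error, split the input into the truncated part and the tail.
\begin{itemize}
\item On the truncated part, $C$ agrees with $U\otimes\II$ up to $\epsilon/8$.
\item On the tail, both $C$ and $U$ are unitaries that preserve photon-number sectors, so their outputs differ by at most twice the tail norm.
\end{itemize}
The total vector error is therefore at most $\epsilon/8+\epsilon/8+\epsilon/8+O(\epsilon/8)\le\epsilon$ for suitable constants. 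Trace distance between pure states is bounded by this norm difference, and partial trace over the reservoir and ancilla is contractive, so $\frac12\norm{\rho-\ket\psi\bra\psi}_1\le\epsilon$.

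\textbf{Resource count and main obstacle.} The dimension is $D=\binom{m+2+K_{\mathrm{joint}}}{K_{\mathrm{joint}}}=K_{\mathrm{joint}}^{O(m)}$. The gate count $O(m^6K_{\mathrm{joint}}^7D^{14}\log^{c_{\rm SK}}(1/\epsilon))$ then gives the stated bound, and the classical cost $\poly(m,K_{\mathrm{joint}},D,\log(1/\epsilon))$ gives the stated compilation time.

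The main obstacle is that \cref{thm:cutoff-SK} needs the Fock-basis coefficients of $U$ on the cutoff space, and \cref{cor:arbitrary states} only defines $U$ abstractly, by sector-wise unitary completion. I would make $U$ explicit. In each sector $j\ge K$, build a Householder reflection (with a phase fix) that maps $\ket{0^m,j}$ to the prescribed unit vector $\sum_{\bm n}c_{\bm n}\ket{\bm n,j-|\bm n|}$. On sectors $j<K$ set $U$ to the identity. This is computable to any precision from the amplitudes $c_{\bm n}$ in time $\poly(D)$, and it preserves both energy and the vacuum.
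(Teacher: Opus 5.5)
Your proposal is correct and follows the paper's proof essentially step for step. Both lift with \cref{cor:arbitrary states} at $\alpha=8K/\epsilon$, bound both Poisson tails by exponential Markov, compile the sector-wise completion with \cref{thm:cutoff-SK} at cutoff $K_{\mathrm{joint}}$ (with $\gamma=0$), split the input at that cutoff, and conclude by contractivity of the partial trace. The differences are cosmetic: you allot $\epsilon/8$ rather than $\epsilon/4$ to compilation, and you make the unitary completion explicit via Householder reflections. One correction: the step $2e^{-\mu/8}\le\epsilon/8$ rests on $\mu\ge 64/\epsilon^2\ge 8\log(16/\epsilon)$, not on $\mu\ge256K^2$ alone.
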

\begin{proof}
Apply \cref{cor:arbitrary states} with \(\alpha=8K/\epsilon\).
The shift error is at most \(\epsilon/8\). For
\(X\sim\operatorname{Poisson}(\mu)\), exponential Markov bounds give
\be
 p_-:=\Pr\{X<K\}\le 2^K e^{-\mu/2},\qquad
 p_+:=\Pr\{X>K_{\mathrm{joint}}\}\le 2^{-K_{\mathrm{joint}}}e^{\mu}.
\ee
Our choices imply \(2\sqrt{p_-}\le\epsilon/8\) and
\(2\sqrt{p_+}\le\epsilon/8\). For example, \(\mu\ge256K^2\), so
\(K\log2-\mu/2\le-\mu/4\), and
\(\mu\ge8\log(16/\epsilon)\); the tail bound follows immediately
from the definition of \(K_{\mathrm{joint}}\).

Use \cref{thm:cutoff-SK} to approximate \(U\) on the total-number cutoff \(K_{\mathrm{joint}}\)
with vector error \(\epsilon/4\), adding its one vacuum ancilla. The
vacuum phase in that theorem is zero here. Splitting the coherent input
at \(K_{\mathrm{joint}}\), the error on the full input is at most
\(\epsilon/4+2\sqrt{p_+}\). Combined with
\cref{eq:new-lifting-error}, this is less than \(\epsilon\).
Pure-state trace distance is at most vector distance, and partial trace
contracts trace distance. The gate count is precisely the bound in
\cref{thm:cutoff-SK} applied to \(m+1\) modes. A unitary completion in each sector
can be obtained by finite-dimensional linear algebra in polynomial time.
\end{proof}

\subsubsection{Preparing Gottesman--Kitaev--Preskill states}\label{sec:gkp}
For \(0<\kappa,\Delta\le1\), define
\begin{equation}\label{eq:new-gkp}
 |G_{\kappa,\Delta}\rangle
 =Z^{-1/2}\sum_{j\in\mathbb Z}e^{-\kappa^2j^2/2}|g_{j,\Delta}\rangle,
 \qquad
 g_{j,\Delta}(x)=(\pi\Delta^2)^{-1/4}
                e^{-(x-j)^2/(2\Delta^2)},
\end{equation}
where \(Z\) is for normalization. We first show that these states are highly concentrated in the low photon number sector.

\begin{lem}[Exponential photon-number tail]\label{lem:new-gkp-tail}
There is a universal constant \(C\ge1\) such that, with
\(R=\kappa^{-2}+\Delta^{-2}\),
\begin{equation}\label{eq:new-gkp-tail}
 \langle G_{\kappa,\Delta}|(\II-\Pi_{\le K})|G_{\kappa,\Delta}\rangle
 \le \frac C\kappa\exp\!\left(-\frac K{8R}\right)
 \qquad(K\ge0).
\end{equation}
\end{lem}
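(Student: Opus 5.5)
We bound the tail by an exponential moment. By Markov's inequality, for any $s>0$,
\[
\langle G|(\II-\Pi_{\le K})|G\rangle\le e^{-sK}\langle G|e^{s\hat n}|G\rangle .
\]
We take $s=1/(8R)$, so it suffices to show $\langle G|e^{s\hat n}|G\rangle\le C/\kappa$.

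\emph{Reduction to a single peak.} Each peak $|g_{j,\Delta}\rangle$ is a displaced squeezed vacuum, $\hat D(j/\sqrt2)\hat S(r)|0\rangle$ with $e^{-r}\propto\Delta$ (up to the convention $\Var(\hQ)=\Delta^2/2$). Write $|G\rangle=Z^{-1/2}\sum_j w_j|g_j\rangle$ with $w_j=e^{-\kappa^2j^2/2}$. Cauchy--Schwarz on the quadratic form of the positive operator $e^{s\hat n}$ gives
\[
\langle G|e^{s\hat n}|G\rangle\le Z^{-1}\Bigl(\sum_j w_j\|e^{s\hat n/2}g_j\|\Bigr)^2 .
\]
Normalization, $Z\ge\sum_j w_j^2\ge c/\kappa$, holds because the peak overlaps are nonnegative. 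Since $\sum_j w_j\sim\sqrt{2\pi}/\kappa$, the ratio $(\sum_j w_j)^2/Z$ is $O(1/\kappa)$. The problem therefore reduces to showing $\sup_j w_j^2\,\langle g_j|e^{s\hat n}|g_j\rangle^{\pm}$ is bounded, or more carefully that
\[
w_j\|e^{s\hat n/2}g_j\|\le C'\,w_j^{1/2}.
\]
The latter form keeps $\sum_j$ of order $1/\kappa$ after squaring and dividing by $Z$. An alternative that avoids the crude Cauchy--Schwarz loss is to use $\hat n=(\hQ^2+\hP^2-1)/2$ and bound the exponential moment directly in the Gaussian picture.

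\emph{Single Gaussian computation.} This step is where the real work lies. $e^{s\hat n}$ is (up to normalization) a Gaussian operator, an imaginary-time harmonic oscillator, so $\langle\phi|e^{s\hat n}|\phi\rangle$ is explicitly computable for Gaussian $\phi$. Using Mehler's kernel, or the identity $\langle\beta|e^{s\hat n}|\beta\rangle=\exp\bigl((e^s-1)|\beta|^2\bigr)$ for coherent states combined with the squeezing, one obtains for $g_{j,\Delta}$ a formula of the form
\[
\langle g_j|e^{s\hat n}|g_j\rangle=A(s,\Delta)\exp\bigl(B(s,\Delta)\,j^2\bigr).
\]
One then checks that for $s\le 1/(8R)$:
\begin{itemize}
\item $A=O(1)$, since $s\Delta^{-2}\le 1/8$ keeps the squeezed Gaussian integral convergent with bounded prefactor;
\item $B\le\kappa^2/4$, since $B\approx (e^{s}-1)/2\lesssim s\le\kappa^2/8$.
\end{itemize}
Then $w_j^2e^{Bj^2}\le e^{-3\kappa^2j^2/4}$, and summing over $j$ gives $O(1/\kappa)$, completing the bound.

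The main obstacle is the explicit squeezed-state exponential moment and verifying that $A$ stays bounded uniformly under the coupling $s\le\Delta^2/8$. I would compute it via the position-space Gaussian integral of the Mehler kernel at imaginary time, tracking constants only up to universal factors. The factor $8$ in $8R$ leaves slack for all of these estimates.
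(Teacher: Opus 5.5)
Your proposal is correct and is essentially the paper's argument. It uses exponential Markov with $s\le\frac18\min\{\kappa^2,\Delta^2\}$, the triangle inequality over peaks with $Z\ge\sum_j e^{-\kappa^2j^2}\ge c/\kappa$ from nonnegative overlaps, and a single-peak moment $A\,e^{Bj^2}$ with $A=O(1)$ and $B\le(e^s-1)/2$. The paper carries out the computation you defer in the Bargmann representation, where $e^{s\hat n/2}$ acts as $f(z)\mapsto f(e^{s/2}z)$. This gives $A=\sqrt{(1-a^2)/(1-a^2e^{2s})}$ and $B=(e^s-1)/[(1+\Delta^2)-(\Delta^2-1)e^s]$ with $a=(\Delta^2-1)/(\Delta^2+1)$, which confirms both of your bounds. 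The paper also notes that the summable per-peak estimate justifies applying $e^{s\hat n/2}$ termwise to the infinite sum; you should make that point explicit.
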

\begin{proof}
We bound an exponential number moment. In the Bargmann representation,
\(|g_{j,\Delta}\rangle\) has the entire function
\be
 f_j(z)=A_j e^{a z^2/2+b_jz},\quad
 a=\frac{\Delta^2-1}{\Delta^2+1},\quad
 b_j=\frac{\sqrt2j}{1+\Delta^2},\quad
 A_j=\sqrt{\frac{2\Delta}{1+\Delta^2}}
       e^{-j^2/[2(1+\Delta^2)]}.
\ee
The Bargmann norm is
\(\|f\|^2=\pi^{-1}\int_{\mathbb C}|f(z)|^2e^{-|z|^2}\,d^2z\).
The map \(e^{t\hat n/2}\) acts as \(f(z)\mapsto f(e^{t/2}z)\).
Integrating separately over the real and imaginary parts of \(z\) yields
\begin{equation}\label{eq:new-gauss-mgf}
 \|e^{t\hat n/2}|g_{j,\Delta}\rangle\|^2
 =\sqrt{\frac{1-a^2}{1-a^2e^{2t}}}
 \exp\!\left(
 \frac{j^2(e^t-1)}{(1+\Delta^2)-(\Delta^2-1)e^t}\right),
\end{equation}
provided \(|a|e^t<1\).

For \(0\le t\le\Delta^2/8\), note that
\(1-a^2=4\Delta^2/(1+\Delta^2)^2\ge\Delta^2\) and
\(e^{2t}-1\le3t\). Thus
\(1-a^2e^{2t}\ge\tfrac12(1-a^2)\).
The denominator in the exponent in \cref{eq:new-gauss-mgf} is at
least \(2\), and \(e^t-1\le2t\). Consequently,
\begin{equation}\label{eq:new-gauss-mgf-bound}
 \|e^{t\hat n/2}|g_{j,\Delta}\rangle\|
 \le2^{1/4}e^{tj^2/2}.
\end{equation}
Take \(t=\tfrac18\min\{\kappa^2,\Delta^2\}\).
The overlaps \(\langle g_{j,\Delta}|g_{k,\Delta}\rangle
=e^{-(j-k)^2/(4\Delta^2)}\) are nonnegative, so
\be
 Z\ge\sum_j e^{-\kappa^2j^2}\ge c/\kappa
\ee
for a universal \(c>0\). The last inequality follows, for example,
by retaining all \(|j|\le1/\kappa\).
On the other hand, the triangle inequality and
\cref{eq:new-gauss-mgf-bound} give
\be
 \|e^{t\hat n/2}|G_{\kappa,\Delta}\rangle\|
 \le\frac{2^{1/4}}{\sqrt Z}
       \sum_j e^{-(\kappa^2-t)j^2/2}
 \le\frac{C'}{\sqrt\kappa}.
\ee
Here \(\kappa^2-t\ge7\kappa^2/8\), and the Gaussian sum is at most
\(1+\int_{\mathbb R}e^{-7\kappa^2x^2/16}\,dx=O(1/\kappa)\).
The same summable estimate shows convergence in the graph norm of
\(e^{t\hat n/2}\), justifying its application to the infinite sum.
Therefore \(\langle e^{t\hat n}\rangle\le C/\kappa\).
Exponential Markov gives a tail at most
\((C/\kappa)e^{-tK}\), and
\(t\ge1/(8R)\), proving \cref{eq:new-gkp-tail}.
\end{proof}

\GKPprep*
\begin{proof}
Choose
\be
 K=\left\lceil 8R\log\frac{16C}{\kappa\epsilon^2}\right\rceil,
 \qquad
 |G^{(K)}\rangle=
 \frac{\Pi_{\le K}|G_{\kappa,\Delta}\rangle}
      {\|\Pi_{\le K}|G_{\kappa,\Delta}\rangle\|}.
\ee
\cref{lem:new-gkp-tail} implies an omitted probability at most
\(\epsilon^2/16\). The exact pure-state trace-distance formula gives
\be
 \frac12\norm{|G^{(K)}\rangle\langle G^{(K)}|-
            |G_{\kappa,\Delta}\rangle\langle G_{\kappa,\Delta}|}_{1}
 =\sqrt{1-\|\Pi_{\le K}|G_{\kappa,\Delta}\rangle\|^2}
 \le\epsilon/4.
\ee
Apply \cref{thm:new-finite} with \(m=1\) and synthesis error
\(\epsilon/2\), allowing another \(\epsilon/4\) for numerical
specification of the target. Its coherent energy is \(O(K^2/\epsilon^2)\),
which proves \cref{eq:new-gkp-energy}. So a gate bound
is \(O(\mathrm{poly}(K_{\rm joint}\mathrm{polylog}(1/\epsilon)))\), where
\(K_{\mathrm{joint}}=O(K^2/\epsilon^2)\).

To verify that classical state specification is also efficient, first
truncate the peak index in \cref{eq:new-gkp} to \(|j|\le J\).
Because each peak is normalized and \(Z\ge c/\kappa\),
\be
 \frac1{\sqrt Z}\left\|\sum_{|j|>J}
 e^{-\kappa^2j^2/2}|g_{j,\Delta}\rangle\right\|
 \le C_1\kappa^{-1/2}e^{-\kappa^2J^2/4}.
\ee
Indeed, split each Gaussian weight into two equal exponential factors
and bound the remaining Gaussian sum by \(O(1/\kappa)\).
Thus \(J=O(\kappa^{-1}\sqrt{\log(C_2/(\kappa\epsilon))})\)
makes this error as small as any prescribed constant multiple of
\(\epsilon\). Orthogonal projection does not increase it, and
normalization changes it by at most another constant factor, since the
projected target has norm bounded below.

For each retained peak, its Fock coefficients \(u_{j,n}\) are computed
from the Bargmann function above using
\be
 u_{j,0}=A_j,\quad u_{j,1}=b_jA_j,\qquad
 u_{j,n+1}=\frac{b_ju_{j,n}+a\sqrt n\,u_{j,n-1}}{\sqrt{n+1}}.
\ee
Summing \(e^{-\kappa^2j^2/2}u_{j,n}\) over \(|j|\le J\), for
\(0\le n\le K\), and normalizing gives the required vector. This takes
\(O(JK)\) arithmetic steps. Since \(|a|\le1\) and
\(|b_j|\le\sqrt2J\), a deliberately conservative roundoff amplification
bound is \([C_3(J+1)]^{K+1}\). Hence polynomially many guard bits in
\(K,\log(J+1),\log(1/\epsilon)\), together with the bits needed to
specify the parameters, suffice. Elementary functions and square roots
can be evaluated to those precisions in polynomial time. As usual, the
uniformity statement assumes \(\kappa,\Delta\) are supplied as rational
parameters or are efficiently computable to the requested precision.
\end{proof}

Compared to the GKP preparation protocol of Brenner, Caha,
Coiteux-Roy, and K\"onig~\cite{brenner2025complexity}, we highlight that we do not need a coherent implementation of $\lfloor\hat q\rfloor$, which would be required to implement their protocol non-adaptively. As a downside, the circuit depth is polynomial in our case, whereas \cite{brenner2025complexity} has a logarithmic-depth state preparation circuit.

\section{Complexity of energy-preserving bosonic computations}\label{sec:complexity}

Having established algorithmic primitives for \model, we now turn to its complexity-theoretic foundations. Recall that the complexity class $\BEQP{}{E}$ associated to \model\ with energy $E$ and its variants are defined in \cref{def:cc-beq}. We first show that $\BEQP{}{}=\BQP$ in \cref{sec:eqBQP}. Then, we show in \cref{sec:disptherm,sec:therm} that the same result holds when replacing the input coherent state in \model\ by noisy variants, namely displaced thermal states or simply thermal states. Finally, we obtain several complexity upper bounds in \cref{sec:bound_complexity} when varying the parameters of \model.

\subsection{\texorpdfstring{$\BQP$}{BQP}-completeness for coherent states}
\label{sec:eqBQP}

In this section, we show that the power of bosonic energy-preserving computations with input coherent states of polynomial energy ($\BEQP{}{}$) is equivalent to $\BQP$.

We first simulate $\BQP$ in $\BEQP{}{}$:
\begin{theorem}\label{thm:bqp-universal}
$\BQP\subseteq\BEQP{}{}$.
\end{theorem}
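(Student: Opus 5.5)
The plan is to simulate a $\BQP$ circuit gate-by-gate in the dual-rail encoding (\cref{def:dual-rail}). The only non-trivial resource is the initial logical state $\ket{0}_L^{\otimes n}$, which needs $n$ single photons; these will be manufactured from the coherent input using the earlier state-preparation gadgets.

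\emph{Setup and gates.} Let $A=(\Ayes,\Ano)\in\BQP$, and fix a uniform family of circuits on $n=\poly(\varsigma)$ qubits with $L=\poly(\varsigma)$ gates from $\{H,T,CZ\}$. After standard amplification, the acceptance probability is $\ge 9/10$ on yes instances and $\le 1/10$ on no instances.
\begin{itemize}
\item Each qubit $i$ is encoded in modes $(c_i,d_i)$ as in \cref{def:dual-rail}.
\item Single-qubit gates are implemented exactly by beamsplitters and phase shifters from $\GLO$, acting on the one-photon subspace of each pair as $\mathrm{U}(2)$.
\item $CZ$ between qubits $i$ and $j$ is implemented exactly by the cross-Kerr gate $e^{i\pi\hat n_{c_i}\hat n_{c_j}}$. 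On the dual-rail subspace this equals $(-1)^{x_ix_j}$.
\end{itemize}
Hence the logical circuit is reproduced exactly by a uniformly generated circuit $W$ over $\GcrossKerr$ with $O(L)$ gates. Every gate is energy-preserving, local, and of constant degree, as required by \cref{def:model}.

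\emph{Preparing the input photons.} Next we prepare the $n$ single photons.
\begin{enumerate}
\item Apply \cref{prop:prepare-Fock-states} with $q=n$ and accuracy $\eta$. This produces $\ket n$ on one mode from $\ket 0\otimes\ket\alpha\otimes\ket0$, with $\abs{\alpha}^2=O(n+\log(1/\eta))$ and $\poly(n,\log(1/\eta))$ gates.
\item Apply the Fock-to-unary gadget of \cref{lem:encodingfu} with $m=1$ and $K=n$. This maps $\ket n\otimes\ket0^{\otimes(n+1)}$ to $\ket{1}^{\otimes n}$ on $n$ fresh modes, up to error $\eta/2$, again with $\poly(n,\log(1/\eta))$ gates from $\GcrossKerr$.
\item Designate these $n$ modes as the $c_i$ and pair each with a fresh vacuum mode $d_i$. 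The result is exactly the logical state $\ket{x_i=1}$ on every rail.
\item Fix the basis convention with a layer of logical $X$ gates, implemented as beamsplitter swaps of $c_i$ and $d_i$.
\end{enumerate}
The total state error is $O(\eta)$. It propagates unchanged through the unitary $W$, so it changes every outcome probability by at most $2\cdot O(\eta)$. Choosing $\eta$ a small constant (or $1/\poly$) keeps completeness $\ge 2/3$ and soundness intact.

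\emph{Measurement and parameters.} Swap the rail mode carrying the logical output value into the first mode using a beamsplitter. On the encoded subspace, measuring $\hat n_1$ returns $1$ exactly when the logical output is $1$. We therefore take $a=b=1$ in \cref{def:cc-beq}: on yes instances $\Pr[\hat n_1\in[1,1]]\ge 9/10-O(\eta)$, and on no instances $\Pr[\hat n_1\le 0]\ge 9/10-O(\eta)$. The energy is $\abs\alpha^2=O(n)=\poly(\varsigma)$. The total gate count is $\poly(\varsigma)$, and all parameters are computable in polynomial time from $x$.

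\emph{Main obstacle.} The step I expect to be delicate is the input preparation. The issue is not the algebra but ensuring that the Solovay--Kitaev compilation inside \cref{prop:prepare-Fock-states} and \cref{lem:encodingfu} stays polynomial. That compilation is applied on a constant number of modes (three for the preparation gadget, two or three per step of the unary spreading) with cutoff $K=O(n+\log(1/\eta))$. Here $D=\binom{O(1)+K}{K}=\poly(n)$, so \cref{thm:cutoff-SK} yields $\poly(n)$ gates. I would check this explicitly, together with the fact that the reused vacuum ancillas return to vacuum up to the stated error, so that the leftover reservoir state $\ket\chi$ simply factors out and does not disturb the measured mode.
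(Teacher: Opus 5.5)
Your proof is correct, but it produces the $n$ input photons differently from the paper. The paper notes that the statement is a direct corollary of \cref{thm:compile_new}. Its direct proof then prepares each single photon independently with \cref{prop:prepare-Fock-states} at $q=1$, to accuracy $\varepsilon/n$. Each photon uses its own coherent ancilla of energy $O(\log(n/\varepsilon))$, split off the single input by linear optics (\cref{lem:coher-gen}). You instead prepare one Fock state $\ket n$ and fan it out into $n$ modes with the Fock-to-unary gadget of \cref{lem:encodingfu}.

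Your route uses less energy, $O(n+\log(1/\eta))$ versus $O(n\log(n/\varepsilon))$, and needs only a single reservoir mode. The paper's route is more elementary, since it avoids the fixed-point amplification gadget entirely. It also runs every Solovay--Kitaev compilation at cutoff $O(\log(n/\varepsilon))$ rather than $\Theta(n)$. Since the bound of \cref{thm:cutoff-SK} scales like $K^7D^{14}$ with $D=\Theta(K^3)$ here, the paper's polynomial gate count is far smaller, though yours is still polynomial. One point in your favour: you only ever feed the gadget the fixed Fock state $\ket n$. Any photon-number-dependent phases from its successive transfers are therefore global, and you never rely on how it behaves on superpositions of different photon numbers.

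One cosmetic fix: \cref{def:cc-beq} requires $0<a<b$, so take $[a,b]=[1,2]$ rather than $a=b=1$. On the encoded subspace $\hat n_1\in\{0,1\}$, and soundness is still $\Pr[\hat n_1\le 0]$, so nothing else changes.
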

\begin{proof}
This result is a direct corollary of \cref{thm:compile_new}. For the sake of completeness, we include a direct proof below.

Simulation of a \BQP\ circuit has been proposed with linear optics, Kerr non-linearities and dual-rail encoding before in \cite{Chuang_1995}, which we briefly review:
\begin{enumerate}
    \item Every two modes encode a qubit. Logical-zero is $\ket{01}$ and logical-one is $\ket{10}$ written in the Fock basis. We label the modes' annihilation operators by $\hat a_1$ and $\hat a_2$. If there is another qubit involved, we use operators $\hat b_1, \hat b_2$ to address the annihilation operators of the second pair of modes.
    \item Single-qubit gates are linear optical elements: an arbitrary $\mathrm{SU}(2)$ operation on the $\ket{10}$ and $\ket{01}$ subspace can be achieved via beamsplitters and phase shifters as their representation in this 2-dimensional subspace is given by
        \begin{align}
            (\mathrm{BS}(\theta, \phi))_{[01,10]} =
            \begin{bmatrix}
                \cos(\theta) & e^{i\phi} \sin(\theta)\\
                -e^{-i\phi}\sin(\theta) & \cos(\theta)
            \end{bmatrix},
        \end{align}
        and the phase shifter on the first rail is
        \begin{align}
            \mathrm{PS}_1(\theta) = e^{i\theta\hat n_1}.
        \end{align}
        For example, a $50:50$ beamsplitter followed by a $PS(\pi)_1$ applies the Hadamard gate in dual-rail encoding.
    \item Consider two logical qubits. It is the case that $\overline{CZ} = \exp(i \pi \hat a_1^\dag \hat a_1 \cdot \hat b_1^\dag \hat b_1)$.
\end{enumerate}
Also note that these gates implement the qubit gates exactly on the logical subspace. As a result, having the capability of generating $\ket{1}$ to $\frac{1}{\poly}$ error enables us to perform arbitrary qubit computations within \model, so invoking \cref{prop:prepare-Fock-states} completes the proof.

For a $\BQP$ simulation, we need to simulate a qubit quantum computation with $n$ qubits and depth $L = \poly(n)$ to error $\varepsilon = \frac{1}{\poly(n)}$. Therefore, we need to prepare each Fock state to error $\frac{\varepsilon}{n}$, which requires (from \cref{prop:prepare-Fock-states}) $\poly(n)$ ancilla coherent states with energy $O(\log(n/\varepsilon))$.
Therefore $\BQP$ can be simulated within \model\ with $\poly(n)$ modes with $\poly(n)$ energy, and polynomial depth, implying that $\BQP\subseteq\BEQP{}{}$.
\end{proof}

We now prove the reverse statement:

\begin{theorem}\label{thm:in_BQP}
$\BEQP{}{}\subseteq\BQP$.
\end{theorem}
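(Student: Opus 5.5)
The plan is to simulate a $\BEQP{}{}$ circuit on qubits by exploiting energy preservation. Every gate commutes with $\hN$, so each total-photon-number sector evolves independently. The only source of photons is the input coherent state. Hence, after truncating the coherent input to a polynomial photon-number cutoff, the whole computation lives exactly inside $\calH^{(m+1)}_{\le K}$, and this space can be encoded in $\poly(\varsigma)$ qubits.

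\emph{Step 1 (truncation).} Given the uniform family $\{C_x\}$ with $\abs{\alpha}^2\le E=\poly(\varsigma)$, choose $K=\lceil 4E+4\log(1/\eta)\rceil$ for a target error $\eta=1/\poly(\varsigma)$. As in \cref{prop:prepare-Fock-states}, the exponential Markov bound gives $\Pr[N>K]\le 2^{-K}e^{E}\le\eta^2$ for $N\sim\operatorname{Poisson}(\abs{\alpha}^2)$. Replace $\ket\alpha$ by the normalized truncation $\ket{\alpha_K}\propto\Pi_{\le K}\ket\alpha$, which is $O(\eta)$-close in vector norm. Because $C_x$ is energy-preserving, $C_x\ket{\bm 0}\ket{\alpha_K}$ is supported on total photon number $\le K$. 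In particular, each mode has occupation $\le K$, and by unitarity the output distribution of $\hat n_1$ changes by at most $O(\eta)$ in total variation.

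\emph{Step 2 (encoding and gates).} Encode each of the $m+1=\poly(\varsigma)$ modes in binary with $\lceil\log_2(K+1)\rceil$ qubits. Preparing $\ket{\alpha_K}$ on one register is a state preparation of an explicitly computable $(K+1)$-dimensional amplitude vector, doable with $\poly(K)$ gates (for instance, a generic unitary on $O(\log K)$ qubits). Each gate $e^{-itH}$ acts on $k=O(1)$ modes, and $H$ commutes with the local photon number on those modes. On the reachable subspace, the local photon number is $\le K$, so the gate restricted to $\calH^{(k)}_{\le K}$ is an exact block of dimension $D=\binom{k+K}{K}=K^{O(1)}$; states of the encoding outside this range are padded with the identity. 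The Hamiltonian has $\norm{H}_{(K)}\le\poly(K)$ by \cref{lem:cutoff-monomial-bound}, and $t=O(1)$ (or, more generally, polynomial total runtime). The matrix $e^{-itH}|_{\calH^{(k)}_{\le K}}$ can therefore be computed classically to $1/\poly$ precision in $\poly(K)$ time, via a truncated Taylor series or diagonalization. The result is a unitary on $O(\log K)$ qubits, which standard exact two-level decomposition plus the qubit Solovay--Kitaev theorem compiles into $\poly(K,\log(1/\eta'))$ gates.

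\emph{Step 3 (errors and output).} With $L=\poly(\varsigma)$ gates, each compiled to error $\eta/L$, the total error is $O(\eta)$ by telescoping. Measuring the register of mode $1$ in the computational basis reproduces $\Pr[\hat n_1=n]$ up to $O(\eta)$. The BQP circuit then accepts iff the outcome lies in $[a,b]$. Since the original gap is constant ($2/3$ versus $1/3$), choosing $\eta$ a small constant preserves the promise gap, and standard amplification finishes. The construction is uniform because the gate descriptions and $\alpha$ are efficiently generated from $x$.

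The main obstacle I expect is Step 2's precision and uniformity bookkeeping: making sure the local blocks are exact restrictions, which needs local number conservation and not just global conservation. This follows from \cref{lem:passive-wick} applied to the $k$-mode Hamiltonian. A second concern is that computing the $K^{O(1)}$-dimensional matrix exponentials to sufficient bits is polynomial time even when the per-gate runtime $t_j$ is polynomially large. For the latter, I would first split $e^{-it_jH_j}$ into $\lceil t_j\rceil$ unit-time factors, each with norm bounded by $\poly(K)$.
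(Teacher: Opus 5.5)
Your proof is correct, but it follows a different route from the paper's. The paper does not build the qubit simulation itself. It invokes \cite[Theorem 5.4]{CGMMNRS25}, which places constant-locality bosonic circuits under a polynomial-energy promise in $\BQP$, provided the truncated Fock-basis entries of the gates are efficiently computable. It then only verifies that hypothesis in \cref{lem:assumption-satisfied}, using block-diagonality in photon number and a rigorous scaling-and-squaring bit-complexity analysis (\cref{lem:mat-exp-in-photon-block}).

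You instead reprove that black box in the energy-preserving special case. Here the cutoff subspace is \emph{exactly} invariant, so the only errors are the initial Poisson truncation and gate compilation, with no leakage or energy-promise bookkeeping. For the truncation, your exponential tail bound gives the sharper $K=O(E+\log(1/\eta))$ rather than a Markov-type cutoff. This makes your argument self-contained and conceptually transparent.

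What the paper's route buys is a more robust entry computation. \cref{lem:mat-exp-in-photon-block} handles local gates of polynomial degree, polynomial evolution time and exponentially large coefficients, and it is reused for \cref{thm:thermal-inbqp} and \cref{cor:no-compile}. In that regime your device of splitting $e^{-it_jH_j}$ into unit-time factors of norm $\poly(K)$ would fail, since $\norm{H}_{(K)}$ can then be of order $C_{\max}K^{d/2}=2^{\poly(\varsigma)}$. Scaling and squaring needs only $O(\log(\abs{t}\norm{H}_{(K)}))$ squarings. For the default constant-degree, constant-coefficient gate set in the statement, your argument suffices.
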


\begin{proof}
    It was shown in \cite[Theorem 5.4]{CGMMNRS25} that any bosonic computation with constant-locality gates and the promise of polynomial energy (throughout) can be simulated in $\mathsf{BQP}$, if the Fock entries of the gate set, up to photon cutoff $K$, can be computed in time $\poly(K, \frac{1}{\varepsilon})$ to accuracy $\varepsilon$. Hence, it is left to show that this assumption holds. This is straightforward to show, as energy-preserving gates are block-diagonal in the Fock basis.

\begin{lem}\label{lem:assumption-satisfied}
    Considering a fixed $m=O(1)$-mode energy-preserving Hamiltonian $H$ of degree at most $d\ge1$, with coefficient magnitudes bounded by $C_{\mathrm{max}}>1$,\footnote{\label{fn:rational-inputs}Here, we assume for simplicity that the coefficients have rational real and imaginary parts and that $t$ is rational. Otherwise, we need to compute sufficiently accurate approximations.} the entry $(\exp(iHt))_{\bm k,\bm\ell}$ can be computed to precision $2^{-p}$ in
    \be
        O\!\left((1+n^{3m})\bigl(p + \log(1+|t|) + \log C_{\mathrm{max}} + \log d + d\log(1+n)\bigr) + n^m d^{2m}\right)
    \ee
    arithmetic steps for $n = |\bm k| = |\bm\ell|\ge0$, using numbers of
    \be
        O\!\left(p + \log(1+|t|) + \log C_{\mathrm{max}} + (m+d)\log(1+n) + (2m+1)\log d\right)
    \ee
    bits. Entries with $|\bm\ell|\ne|\bm k|$ are zero.

    Therefore, for constant locality and $d = \poly(n), t = \poly(n), C_{\text{max}} = \exp(n), p = \poly(n)$ and polynomial-length rational inputs, the total runtime and space are bounded by $\poly(n)$.
\end{lem}
\begin{proof}
    This can be shown by noting that the $n$-photon block has dimension $O(1+n^m)$, and the exponentiation can be done via a \textit{scaling and squaring} method~\cite{Higham22,higham05scaling,fasihigham19}. For Hermitian Hamiltonians, we make the complexity and forward-error analysis rigorous in \cref{lem:mat-exp-in-photon-block} in \cref{app:assumption-satisfied-long}.
    From this result, setting $m$ as fixed $O(1)$ gives the stated bounds.
\end{proof}

Therefore, by \cref{lem:assumption-satisfied} the amplitudes of the gate set (up to polynomial energy) can be computed in polynomial time.
\end{proof}

\subsection{\texorpdfstring{$\BQP$}{BQP}-completeness for displaced thermal states}
\label{sec:disptherm}

In this section, we probe how robust $\BEQP{}{}$ is to thermal noise in the ancilla states. We show that we can concentrate the signal-to-noise ratio by taking polynomially many copies of noisy ancillas with polynomial thermal noise and just one vacuum mode, and improve the error to $\frac{1}{\poly}$.

\begin{theorem}[Coherent amplitude concentration]
Given $N$ single-mode displaced thermal states with mean thermal photon number $\bar{n}$ and amplitude $\alpha$, with $|\alpha| > 0$, and one vacuum mode, there exists a sequence of energy-preserving operations that can ``concentrate'' the coherent state purity into one mode, with amplitude $\alpha$ and mean thermal photon number $\frac{\bar{n}}{N}$.
\label{thm:coherent-distillation}
\end{theorem}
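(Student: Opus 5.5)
The plan is to use a single linear-optical transformation, namely the discrete Fourier transform (DFT) interferometer on the $N$ noisy modes, as the energy-preserving operation. First I will fix the statement's interpretation. The input is $\bigotimes_{j=1}^N \hD(\alpha)\rho_{\mathrm{th}}(\bar n)\hD(\alpha)^\dagger$, and "concentrate into one mode with amplitude $\alpha$ and thermal photon number $\bar n/N$" should mean that the output mode is the displaced thermal state $\hD(\alpha)\rho_{\mathrm{th}}(\bar n/N)\hD(\alpha)^\dagger$. Such a state has mean energy $|\alpha|^2+\bar n/N$, which is at most the input energy $N(|\alpha|^2+\bar n)$. This is consistent with energy preservation, provided the remaining energy is dumped into the other modes.

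The core tool is that passive linear optics $\hat a_j\mapsto\sum_k U_{jk}\hat a_k$ maps Gaussian states to Gaussian states, transforming the first moments by $\bm d\mapsto U\bm d$ and the (complex) covariance by $\Sigma\mapsto U\Sigma U^\dagger$. The input has mean vector $\alpha(1,\dots,1)$ and covariance $(\bar n+\tfrac12)\II$, i.e.\ a product of isotropic thermal noise. Because $\II$ is unitarily invariant, the noise stays a product of identical thermal states with mean $\bar n$ in every mode after any $U\in U(N)$. Meanwhile, taking $U$ to be the DFT moves the displacement to $\sqrt N\alpha$ in mode $1$ and $0$ elsewhere. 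This gives $\hD(\sqrt N\alpha)\rho_{\mathrm{th}}(\bar n)\hD(\sqrt N\alpha)^\dagger\otimes\rho_{\mathrm{th}}(\bar n)^{\otimes(N-1)}$. I will justify this rigorously by computing the characteristic function, or equivalently via the Heisenberg action of $U$ on displacement operators: $U\hD(\bm\beta)U^\dagger=\hD(U\bm\beta)$, and the thermal product $\rho_{\mathrm{th}}^{\otimes N}\propto e^{-\lambda\hN}$ commutes with $U$ because $U$ preserves $\hN$. The second route avoids any Gaussian machinery and is essentially exact.

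Next I use the vacuum mode. A beamsplitter with transmissivity $1/N$ acts on mode $1$ and the vacuum mode, that is, $\hat a_1\mapsto \tfrac{1}{\sqrt N}\hat a_1+\sqrt{1-\tfrac1N}\,\hat a_0$. Thermal noise of mean $\bar n$ mixed with vacuum becomes thermal noise with mean $\bar n/N$ in the output mode; this is the standard fact that a pure-loss channel of transmissivity $\eta$ maps $\rho_{\mathrm{th}}(\bar n)$ to $\rho_{\mathrm{th}}(\eta\bar n)$. The displacement becomes $\sqrt N\alpha/\sqrt N=\alpha$. The reduced output mode is therefore exactly $\hD(\alpha)\rho_{\mathrm{th}}(\bar n/N)\hD(\alpha)^\dagger$. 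All operations used are beamsplitters and phase shifters, which lie in $\GLO$, and the DFT on $N$ modes decomposes into $O(N^2)$ such gates (e.g.\ by the Reck/Clements decomposition), so the whole procedure is efficient.

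The main obstacle is making the thermal-state bookkeeping rigorous without invoking unproven Gaussian-state lemmas. I will handle it by writing the displaced thermal state as a Gaussian mixture $\rho=\int \frac{d^2\beta}{\pi\bar n}e^{-|\beta-\alpha|^2/\bar n}\ketbra{\beta}{\beta}$ (the Glauber–Sudarshan $P$-representation, valid for $\bar n>0$; for $\bar n=0$ the result is trivial). Linear optics maps product coherent states to product coherent states via $\bm\beta\mapsto U\bm\beta$. The claim then reduces to a statement about linear maps of i.i.d.\ complex Gaussian random vectors. The output amplitude of the target mode is $\tfrac1N\sum_j\beta_j$ after the DFT followed by the $1/N$ beamsplitter. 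This random variable has mean $\alpha$ and variance $\bar n/N$, and its $P$-function is exactly that of the claimed state. Tracing out the other modes is then immediate.
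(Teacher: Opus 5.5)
Your proposal is correct and follows the same route as the paper: a linear-optical network moves the displacement $\sqrt N\alpha$ into one mode while leaving the isotropic noise $(\bar n+\tfrac12)\II$ unchanged, then a transmissivity-$1/N$ beamsplitter couples that mode to the vacuum and the vacuum mode is traced out. The only difference is bookkeeping: the paper tracks the displacement vector and covariance matrix, implicitly using that Gaussian states are determined by their first two moments, whereas your argument via $U\hD(\bm\beta)U^\dagger=\hD(U\bm\beta)$, $[U,\hN]=0$, and the $P$-representation makes each step exact without that Gaussian machinery.
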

\begin{proof}[Proof]
    Suppose we start with $N$ displaced thermal states $\rho(i) = \rho(\alpha, \bar{n})$. Let $\bm{\alpha} = [\alpha \dots \alpha]^T\in \mathbb{C}^N$.
Tracking the moments gives
\be \bm d_i = \bm{\alpha}, \sigma_i = (\bar{n} + 1/2)\II.\ee
From \cref{lem:coher-gen}, there exists a linear-optical network $A$, such that $\bm{\psi}_1 = [\sqrt{N}\alpha, 0 \dots 0]^T \in \mathbb{C}^N$ is the displacement vector of the output state. Under this operation,
$\bm d_1 = \bm\psi_1, \sigma_1 = A\sigma_iA^T = (\bar{n} + 1/2) A\II A^T = \sigma_i$.
Therefore, we have pumped the target mode with all the energy, while not increasing the noise. In the next step, we will dump the excess energy and the noise into a vacuum mode. Using a beam splitter with transmittance $1/N$, we couple the target mode with the clean vacuum state. The resulting state is:
\be \bm d_0 = B
\begin{bmatrix}
    \sqrt{N}\alpha\\
    0
\end{bmatrix},\quad
\sigma_0 = B
\begin{bmatrix}
    \bar{n} + 1/2 & 0\\ 0 & 1/2
\end{bmatrix} B^T,\ee
where
\be B =
\begin{bmatrix}
    \frac{1}{\sqrt{N}} & \sqrt{\frac{N-1}{N}} \\
    -\sqrt{\frac{N-1}{N}} & \frac{1}{\sqrt{N}} \\
\end{bmatrix}.
\ee
Solving these, and tracing out the vacuum mode, we get a displaced thermal state $\rho(\alpha, \bar{n}/N)$ as required.

\end{proof}

\begin{theorem}
    Energy-preserving bosonic computation on displaced thermal states of polynomial thermal photon number is $\BQP$-complete.
\end{theorem}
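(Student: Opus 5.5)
The plan is to prove the two inclusions separately, reusing the coherent-state results together with \cref{thm:coherent-distillation}.

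\emph{Upper bound.} The class with displaced thermal inputs is contained in $\BQP$ by essentially the argument of \cref{thm:in_BQP}. Write a displaced thermal state $\rho(\alpha,\bar n)$ as a Gaussian mixture of coherent states $\ket{\beta}$ with $\beta$ drawn from a complex Gaussian centred at $\alpha$ with variance $\bar n$. A $\BQP$ machine can sample $\beta$ classically and then simulate the energy-preserving circuit on $\ket{\beta}$. The concern is that $|\beta|^2$ could be large, but Gaussian tails give $|\beta|^2\le 2|\alpha|^2+O(\bar n\log(1/\delta))$ except with probability $\delta$, which is polynomial. Alternatively, one can note that $\rho(\alpha,\bar n)$ has mean photon number $|\alpha|^2+\bar n=\poly$. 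Because the gates are energy-preserving, the energy stays polynomial throughout, so \cite[Theorem 5.4]{CGMMNRS25} applies with mixed polynomial-energy inputs, and \cref{lem:assumption-satisfied} provides the efficient gate entries. The input state can be prepared on qubits after truncating at photon number $\poly/\varepsilon$, with Markov's inequality controlling the tail. Acceptance probabilities then change by at most $1/\poly$, which the constant gap absorbs.

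\emph{Lower bound.} For $\BQP$-hardness it suffices to produce the Fock resources needed in \cref{thm:bqp-universal}, namely approximate single photons, and the natural route is through the coherent-state case. Take $N$ copies of $\rho(\alpha,\bar n)$ with $N=\poly$, together with a vacuum mode. By \cref{thm:coherent-distillation} we obtain $\rho(\alpha,\bar n/N)$, and by choosing $N\ge \bar n\cdot\poly$ the residual thermal noise becomes $\bar n'=1/\poly$. A displaced thermal state with $\bar n'$ small is $O(\bar n')$-close in trace distance to $\ket\alpha$, because $\rho_{\rm th}(\bar n')$ has weight $1-O(\bar n')$ on $\ket0$ and displacement is unitary. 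Repeating this for each required resource, using fresh copies each time, gives $\poly$ many approximate coherent states of energy $O(\log(n/\varepsilon))$. We may need to rescale the amplitude to the value required by \cref{prop:prepare-Fock-states}. This is done by combining several distilled modes via \cref{lem:coher-gen} to increase the amplitude, or by splitting off part of it with a beam splitter into vacuum to decrease it. Both operations keep the noise small, since beam-splitting a displaced thermal state into vacuum reduces both displacement and noise. We then run the circuit of \cref{thm:bqp-universal}. The total error is a sum of per-resource trace-distance errors, and by contractivity this is at most $\poly\cdot(1/\poly)$, which we make $\le 1/10$.

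\emph{Main obstacle.} The difficult step is the lower-bound bookkeeping when $|\alpha|$ is small or when $\bar n$ is large compared with $|\alpha|^2$. In those regimes the distillation has to both raise the amplitude and suppress the noise. I would first merge $N$ copies to amplitude $\sqrt N\alpha$ with unchanged noise $\bar n$. Next I would beam-split to the target amplitude $\alpha'$, which leaves noise $\bar n|\alpha'|^2/(N|\alpha|^2)$. Finally I would choose $N=\poly$ with $N\ge \bar n|\alpha'|^2\poly/|\alpha|^2$, which requires $|\alpha|\ge 1/\poly$. I expect this to be an implicit assumption of the statement, since the input must carry some displacement for this argument to work.
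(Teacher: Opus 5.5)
Your proposal is correct and follows the paper's proof. Containment comes from the argument of \cref{thm:in_BQP}, since the input energy $|\alpha|^2+\bar n$ is polynomial and preserved by the gates; hardness comes from applying \cref{thm:coherent-distillation} with $N=\bar n\cdot\poly$ copies, using that $\rho(\alpha,\bar n')$ is $\bar n'/(1+\bar n')$-close to $\ket{\alpha}$ in trace distance, and then running the construction of \cref{thm:bqp-universal}. Your amplitude-rescaling step and the caveat $|\alpha|\ge 1/\poly$ are extra care that the paper omits, because it implicitly lets the algorithm choose $\alpha$ as in \cref{def:model}.
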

\begin{proof}
    From the definition of thermal states \cref{def:thermal-state}, 
    \be
    \frac12\norm{\rho(\alpha, \bar{n})-\ketbra{\alpha}{\alpha}}_1 = \frac12\norm{\rho(0, \bar{n})-\ketbra{0}{0}}_1 = \frac{\bar{n}}{1 + \bar{n}}
    \ee

    From \Cref{thm:coherent-distillation}, we can use an extra vacuum input and $\poly$ many copies of ancillas with $\poly$ thermal noise to get a state within $1/\poly$ error to the required input state.
    Therefore, we need only $2n$ vacuum states ($n$ for the data, $n$ as a energy dumping state), and $\poly(n)$ copies of noisy ancillas to simulate the $\BEQP{}{}$ output probability to $1/\poly$ error.
    The proof of completeness then follows exactly as in \cref{thm:bqp-universal} and \cref{thm:in_BQP}.
\end{proof}

\subsection{\texorpdfstring{$\BQP$}{BQP}-completeness for thermal states}
\label{sec:therm}

While the preceding results use displaced states as coherent states, can we similarly supply the energy resource directly through thermal states?
Can we still simulate $\BQP$ in this new model, where the input coherent states are replaced by thermal states of possibly different temperatures?
In this section we prove that thermal states of differing mean photon number are enough to maintain $\BQP$-completeness. The main idea is to encode the $\BQP$ computation into pairs of thermal states with different energies. The trade-off is that we either need polynomial-degree Hamiltonians, or error correction with constant-degree gates and a constant energy separation.
Throughout this section, ``energy'' of a thermal mode means its mean photon number, as in \cref{def:thermal-state}.

\begin{definition}[\thermalmodel: \model~with thermal states]
    \label{defn:thermal-model}
    We define the thermal model by replacing all the input states and ancillas of \model\ (\cref{scn:model-definition}) with thermal modes of average photon numbers $\bar{n}_i$. The corresponding promise class is denoted $\ThermalBEQP{G}{\bar{n}_i}$, defined analogously to \cref{def:cc-beq}. When we wish to utilize gates of polynomial degree, we write $\ThermalBEQP{\poly}{\bar{n}_i}$.
\end{definition}

\subsubsection{Energy alone is not enough}
If all $\bar{n}_i$ are equal in the thermal model (\cref{defn:thermal-model}), then the resulting state is a Gibbs state of the total photon number operator. Since the operations are energy-preserving, the input state commutes with the operations. Therefore, the input state is invariant under energy-preserving operations, and the photon number of a single mode can be calculated from the probability distribution of a single thermal state. Therefore, thermal inputs all at the same temperature are trivially simple.

\subsubsection{Energy difference is enough}
\label{sec:thermal-poly}

We now show that a polynomial separation of thermal temperatures supplies a reliable logical input.

\begin{restatable}[\thermalmodel~with two temperatures of thermal states is $\BQP$-complete]{claim}{thermalbqp}
    \label{thm:thermal-poly-bqp}
    Let
    \be
        \bar n_L\leq b=O(1),\qquad \bar n_H=\poly(N)
    \ee
    be such that $\bar n_H - \bar n_L = \poly(N)$.
    Then, $\ThermalBEQP{\poly}{\nb_L, \nb_H}$ is $\BQP$-complete. %
\end{restatable}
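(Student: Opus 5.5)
Both inclusions are needed. The upper bound $\ThermalBEQP{\poly}{\nb_L,\nb_H}\subseteq\BQP$ should follow as in \cref{thm:in_BQP}. Thermal inputs of polynomial mean photon number are mixtures of Fock states with exponentially decaying tails, so truncating each input mode at photon number $\poly(N)\cdot\log(1/\varepsilon)$ loses only $\varepsilon$ in trace distance. Each thermal input can be purified, or sampled classically, as a Fock state $\ket{k}$ with geometric probabilities; preparing this state is easy in $\BQP$ on a binary register. The energy-preserving gates then act block-diagonally, and \cref{lem:assumption-satisfied} shows that their Fock entries are computable in polynomial time even for polynomial degree $d$, provided locality is $O(1$). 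Invoking \cite[Theorem 5.4]{CGMMNRS25} then gives the simulation. The polynomial degree is harmless here precisely because the lemma tolerates $d=\poly(n)$.

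For hardness, I would encode each logical bit into the photon number of a thermal mode, and compare against a threshold. I would pair one ``hot'' mode ($\nb_H$) with one ``cold'' mode ($\nb_L$) to produce an approximately pure reference $\ket{1}$ Fock state, which the dual-rail simulation of \cref{thm:bqp-universal} can then consume. The plan is as follows.

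First, take a cold mode with $\nb_L\le b$. Its vacuum weight is $1/(\nb_L+1)\ge 1/(b+1)$, which is constant but not close to 1. Use a polynomial-degree energy-preserving gate that moves photons conditionally. For example, the Hamiltonian $R_{kl}$ from \cref{eq:Hspread}, controlled by polynomial indicator functions as in the fusion-based discussion in \cref{sec:sim_existing_models}, can transfer exactly one photon from the hot mode into an empty target mode. This works whenever the hot mode holds at least $1$ photon, which happens with probability $1-1/(\nb_H+1)$. Concretely, I would implement the two-mode energy-preserving permutation $\ket{j,0}\leftrightarrow\ket{j-1,1}$ for $1\le j\le J$, with $J=\poly(N)$. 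This is analogous to $Q$ in \cref{prop:prepare-Fock-states}, and written as a polynomial in $\hat n$ it has polynomial degree.

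Second, the target mode must start in vacuum, but only thermal inputs are available. I would purify it by using many cold modes to "cool" a target. Counting, via polynomial indicator Hamiltonians, whether a cold mode is empty and swapping an empty one into the target position is a reversible sorting step. It succeeds unless all $\poly(N)$ cold modes are occupied, which happens with probability $(\nb_L/(\nb_L+1))^{\poly(N)}$, exponentially small because $\nb_L=O(1)$. The gap $\nb_H-\nb_L=\poly(N)$ is what makes the hot mode reliably nonempty, so together these steps yield $\ket{1}$ states with $1/\poly$ error. Residual entanglement with the leftover thermal modes is harmless, since those modes are simply traced out. Repeating this for all $n$ data qubits and running the dual-rail circuit of \cref{thm:bqp-universal} gives $\BQP\subseteq\ThermalBEQP{\poly}{\nb_L,\nb_H}$.

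The main obstacle is making the conditional "find an empty cold mode" step a genuine unitary that is energy-preserving, local enough, and bounded in degree. The concern is that it must act correctly on the whole, unbounded thermal support and not just on a cutoff space. To handle this, I would define every gate as a direct sum over photon-number sectors, acting as the identity above a cutoff $J=\poly(N)\log(1/\varepsilon)$. The gates are then written as $\poly$-degree polynomials in $\hat n_j$ via Lagrange interpolation, like $P_0,P_1,P_2$ in \cref{sec:sim_existing_models}, and I would bound the tail mass above $J$ by a geometric estimate.
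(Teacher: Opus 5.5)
Your containment argument is fine and is essentially the paper's \cref{thm:thermal-inbqp}. The hardness direction has a genuine gap, and it sits in your second step.

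\textbf{Why the cooling step fails.} All the cold modes have the same temperature. Their joint input state is therefore $(1-\lambda_L)^{M}\lambda_L^{\hat N_{\mathrm{cold}}}$, where $\hat N_{\mathrm{cold}}$ is their total photon number. This operator commutes with every energy-preserving unitary on those modes. Hence no circuit of the kind you describe can change the target's marginal: its vacuum probability stays exactly $1/(1+\bar n_L)$. This is the paper's ``energy alone is not enough'' observation.

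\textbf{Why the swap is not a gate.} The ``find an empty mode and swap it into the target'' map is not a unitary at all. Take two modes in the one-photon sector. The input $\ket{1,0}$ should be swapped to $\ket{0,1}$, while $\ket{0,1}$ (target already empty) should stay $\ket{0,1}$, so two inputs collide. Making the step reversible requires recording which mode was empty, that is, clean ancillas. That is exactly the resource you are trying to produce. Your failure estimate $(\bar n_L/(\bar n_L+1))^{\poly(N)}$ therefore describes an irreversible operation that is not available.

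\textbf{Consequence for the first step.} Your photon transfer $\ket{j,0}\leftrightarrow\ket{j-1,1}$ yields $\ket{1}$ only if the target starts empty. It maps $\ket{j-1,1}$ back to $\ket{j,0}$, so a thermal target gives $\ket{1}$ only with constant probability. The whole Fock-resource preparation collapses.

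\textbf{Hot modes do not rescue the simple scheme.} With one cold and one hot mode, bound the achievable vacuum probability sector by sector using the largest eigenvalue in each sector. This shows the best vacuum probability of either mode is still $1/(1+\bar n_L)$. Any working cooling protocol would need a genuine many-mode compression, built from $O(1)$-local gates and with its own analysis. Your proposal does not supply one.

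\textbf{The missing idea.} No pure resource is needed. In the paper, each logical qubit is the (cold, hot) pair itself, read through the order of its photon numbers: $\mathcal{C}_0=\{a<b\}$ and $\mathcal{C}_1=\{a>b\}$ (\cref{defn:multi-photon-dual-rail}). The input lies in $\mathcal{C}_0$ except with probability at most $p_{\mathrm{init}}\le(1+\bar n_L)/(1+\bar n_L+\bar n_H)+\lambda_L^{d+1}+\lambda_H^{d+1}$.

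The logical gates act correctly on the whole multiphoton code space:
\begin{itemize}
\item $U_X$ swaps the two modes;
\item $U_T$ and $U_{CZ}$ use an interpolating polynomial $p_1(\hat n_1-\hat n_2)$;
\item $U_H$ comes from \cref{thm:passive-description}.
\end{itemize}
All of them leave $\mathcal{C}_{\mathrm{inv}}$ invariant (\cref{lem:code-space-inv}), so badly initialized pairs never spread errors. The total error is then at most $N p_{\mathrm{init}}$ plus the threshold-readout error $\lambda_L^A+A/(1+\bar n_H)$. The gap $\bar n_H-\bar n_L=\poly$ makes this inverse-polynomial (\cref{thm:thermal-bqphard}).
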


To prove this claim, we first encode the qubit state in the thermal state input. We use a pair of cold ($\bar{n}_L$) and hot ($\bar{n}_H$) modes and a multi-photon dual-rail encoding, which is described below:

\begin{definition}[Multiphoton dual-rail encoding]
    \label{defn:multi-photon-dual-rail}
    For integer $d$, identify

    \begin{subequations}\label{eq:code}
        \be
        \calC_0 = \{\ket{a,b} \mid a<b, a,b\in\{0,\dots,d\}\}
        \ee
        \be
        \calC_1 = \{\ket{a,b} \mid a>b, a,b\in\{0,\dots,d\}\}
        \ee
        \be
        \Cinv = \{\ket{a,b} \mid a=b \vee a>d \vee b>d\}
        \ee
    \end{subequations}

    The implementation of the $\sigma_x$ gate is trivial:
    \begin{align}
    U_X = e^{-i \pi (\hat{n}_1+\hat{n}_2) / 2} e^{i\frac{\pi}{2}(\a_1\adag_2+\adag_1\a_2)}.
    \end{align}
    For the $T$-gate, pick a polynomial $p_1(n)$ via interpolation, such that
    \begin{subequations}
    \be
    \forall n\in\{1,\dots,d\}\colon p_1(n) = 1
    \ee
    \be
    \forall n\in\{-d,\dots,0\}\colon p_1(n)=0.
    \ee
    \end{subequations}
    Then we can realize the $T$-gate on the codespace \cref{eq:code} as
    \begin{equation}
    U_T = e^{i\frac{\pi}{4}p_1(\hat n_1-\hat n_2)}.
    \end{equation}
    Similarly, we realize the $CZ$ gate with modes $1,2$ for control, and $3,4$ for target, as
    \begin{equation}
    U_{CZ} = e^{i\pi p_1(\hat n_1-\hat n_2)p_1(\hat n_3-\hat n_4)}.
    \end{equation}
    For the Hadamard gate, we can use the idea of \cite[Theorem 2]{arzani2025can} and \cref{sec:passive-description} to implement
    \begin{equation}
    U_H^{(d)} = \frac{1}{\sqrt2}\sum_{0\le a<b\le d}\Bigl(\bigl(\ket{a,b}+\ket{b,a}\bigr)\bra{a,b} + \bigl(\ket{a,b}-\ket{b,a}\bigr)\bra{b,a} \Bigr) + \sum_{a=0}^d \ketbra{a,a}{a,a}.
    \end{equation}

    Conditioned on all pairs being valid and below the cutoff, we measure the first mode of the logical circuit by measuring only the first mode of the physical circuit. For a fixed integer $1\le A\le d$, we output $0$ if $\hat n_1<A$ and $1$ otherwise, i.e.\ the interval test of $\ThermalBEQP{}{}$ (\cref{defn:thermal-model}) with $[a,b]=[A,d]$. Since the logical gates preserve the unordered pair $\{a,b\}$, this reads out the logical value correctly whenever the output pair is initialized with $a<A\le b\le d$.
\end{definition}

To show that the multiphoton encoding is valid for \thermalmodel, we prove the following lemma:

\begin{lem}\label{lem:hadamard-passive}
$U_H^{(d)}$ can be constructed as an energy-preserving gate.
\end{lem}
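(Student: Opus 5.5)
The approach is to write $U_H^{(d)}$ as $e^{-iG}$ for an explicit Hermitian $G$ that is a polynomial in $\a_1,\a_2,\adag_1,\adag_2$ of degree $O(d)$ and commutes with $\hN=\hat n_1+\hat n_2$. This places $U_H^{(d)}$ in the polynomial-degree energy-preserving gate set allowed in $\ThermalBEQP{\poly}{\cdot}$, in the spirit of \cite[Theorem 2]{arzani2025can}.

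\emph{Energy preservation and the target unitary.} Every term of $U_H^{(d)}$ maps $\ket{a,b}$ into $\Span\{\ket{a,b},\ket{b,a}\}$, which lies in the sector of total photon number $a+b$. On each two-dimensional block with $a<b$ it acts as the Hadamard matrix, and it fixes $\ket{a,a}$. We extend it by the identity on all Fock states outside $\{0,\dots,d\}^2$; the definition leaves this freedom, since those states lie in $\Cinv$. The result is a unitary that is a direct sum over finite-dimensional pieces of $\hN$-sectors, so it commutes with $\hN$. On each block write $\mathrm{Had}=e^{-iG_{ab}}$ with $G_{ab}=\tfrac{\pi}{2}(\II-(X_{ab}+Z_{ab})/\sqrt2)$, where $X_{ab}=\ketbra{a,b}{b,a}+\ketbra{b,a}{a,b}$ and $Z_{ab}=\ketbra{a,b}{a,b}-\ketbra{b,a}{b,a}$. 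Set $G=\sum_{a<b\le d}G_{ab}$, acting as zero elsewhere. Then $e^{-iG}=U_H^{(d)}$, and it remains to show that $G$ is a polynomial.

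\emph{Realizing $G$ as an energy-preserving polynomial.} For $0\le c\le d$ let $p_c$ be the Lagrange interpolant with $p_c(c)=1$ and $p_c(c')=0$ for $c'\in\{0,\dots,2d\}\setminus\{c\}$; it has degree $2d$.
\begin{itemize}
\item The diagonal parts of $G$ become sums of terms $\lambda\,p_a(\hat n_1)p_b(\hat n_2)$.
\item The off-diagonal part $\ketbra{b,a}{a,b}$ with $k=b-a>0$ is replaced by $T_{ab}=p_b(\hat n_1)p_a(\hat n_2)\,(\adag_1)^k\a_2^k\,p_a(\hat n_1)p_b(\hat n_2)$, and we add its adjoint.
\end{itemize}
All of these are normal-orderable polynomials with equal numbers of creation and annihilation operators, so they commute with $\hN$ by \cref{lem:passive-wick}. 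The off-diagonal coefficient is a nonzero combinatorial factor that we divide out. The problem is that $p_c$ is nonzero at occupations $>2d$, so the terms could act on states outside the box. The step I expect to be the main obstacle is checking that no matrix element connects the box $\{0,\dots,d\}^2$ to its complement, since a nonzero such element would spoil the identity extension.

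\emph{Handling the leakage.} For $T_{ab}$, consider any Fock state $\ket{a',b'}$ that maps to a nonzero multiple of $\ket{a'+k,b'-k}$. The left interpolants force $a'+k\in\{b\}\cup\{>2d\}$ and $b'-k\in\{a\}\cup\{>2d\}$. If the output lies in the box, then $a'+k=b$ and $b'-k=a$, so the input is exactly $\ket{a,b}$. Symmetrically, by the right interpolants, a source in the box forces the output $\ket{b,a}$. Hence each $T_{ab}$ has exactly one matrix element touching the box, namely $\ket{b,a}\leftarrow\ket{a,b}$. The analogous statement for the diagonal terms is immediate. It follows that $G=G_{\rm box}\oplus G_{\rm out}$, where $G_{\rm out}$ acts only on states with some occupation $>2d$; all such states lie in $\Cinv$.

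\emph{Conclusion.} Since $G$ commutes with $\hN$, it is a direct sum of finite Hermitian matrices over the $\hN$-sectors, so $e^{-iG}$ is well defined and energy-preserving. On the box, including the code spaces $\calC_0,\calC_1$, it equals $U_H^{(d)}$, and it maps $\Cinv$ to itself. That invariance is all the encoding in \cref{defn:multi-photon-dual-rail} requires, so $U_H^{(d)}$ is an energy-preserving gate generated by a degree-$O(d)$ Hamiltonian.
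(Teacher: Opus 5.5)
Your proposal is correct and follows the paper's route: the paper's proof just invokes \cref{thm:passive-description} with $m=2$, $N_{\mathrm{max}}=d+1$ and $H=\log(U_H^{(d)})/i$. That theorem's construction is the same as yours: Lagrange-interpolated number projectors sandwich a photon-number-conserving ladder monomial, giving an energy-preserving polynomial Hamiltonian that is block diagonal between the box $\{0,\dots,d\}^2$ and its complement. Your version specializes it (an explicit logarithm, interpolation nodes up to $2d$, and the hopping monomial $(\adag_1)^k\a_2^k$ in place of $(\adag)^{\bm n}\a^{\bm m}$), and it writes out the no-leakage check that the paper asserts only ``by construction''.
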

\begin{proof}
    Since we require that $U_H^{(d)}$ only operate correctly on the $\Pi_{\le d}^{\otimes 2}$ subspace, we can use \cref{thm:passive-description}
with parameters $H = \frac{\log(U_H)}{i}$, $m = 2$, $N_{\mathrm{max}} = d+1$ to find an energy-preserving polynomial Hamiltonian $\hat{P}_{H}$ of degree at most $12(d+1)$ such that $\hat{P}_H=H_H^{d} \oplus H'^{\perp}$.
Then $U_H^{(d)} = (e^{i\hat{P}_H})^{(d)}$.
\end{proof}

Since we initialize the input pair to thermal states, we may initialize to a state which lies outside $\calC_0$. We bound this error in the following lemma:

\begin{lem}
\label{lem:thermal-error-estimate}
For the state $\rho = (\rho_L\otimes\rho_H)$, where $\rho_L,\rho_H$ are thermal states of mean photon numbers $0\le \nb_L\ll \nb_H$, the probabilities of measuring the state in the different code spaces are
\begin{subequations}
\be
P(\calC_0) = \la(1-\la^d) - (1-\la)\lz\la\frac{1-(\lz\la)^d}{1-\lz\la}
\ee
\be
P(\calC_1) = \lz(1-\lz^d) - (1-\lz)\lz\la\frac{1-(\lz\la)^d}{1-\lz\la}
\ee
\be
P(\Cinv) = \frac{(1-\lz)(1-\la)}{1-\lz\la}\left(1-(\lz\la)^{d+1}\right) + \lz^{d+1} + \la^{d+1} - (\lz\la)^{d+1},
\ee
\end{subequations}
where $\lambda_k = \frac{\nb_k}{\nb_k + 1}$.
\end{lem}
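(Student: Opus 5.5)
The plan is a direct computation with geometric series. Write the thermal photon-number distributions explicitly:
\[
P(\hat n_1=a)=(1-\lz)\lz^a,\qquad P(\hat n_2=b)=(1-\la)\la^b .
\]
Since $\rho_L\otimes\rho_H$ is diagonal in the Fock basis, the probability of each code space is just the sum of the joint weight $(1-\lz)(1-\la)\lz^a\la^b$ over the corresponding index set. So everything reduces to evaluating double geometric sums over finite triangles and their complements.

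\textbf{$P(\calC_0)$.} Sum over $0\le a<b\le d$. For fixed $a$, the inner sum over $b\in\{a+1,\dots,d\}$ is
\[
(1-\la)\sum_{b=a+1}^{d}\la^b=\la^{a+1}-\la^{d+1}.
\]
Summing over $a\in\{0,\dots,d-1\}$ with weight $(1-\lz)\lz^a$ gives two pieces:
\[
(1-\lz)\la\sum_{a=0}^{d-1}(\lz\la)^a
\qquad\text{and}\qquad
\la^{d+1}\bigl(1-\lz^{d}\bigr).
\]
The first piece equals $(1-\lz)\la\,\frac{1-(\lz\la)^d}{1-\lz\la}$. I would then rearrange algebraically to match the stated form $\la(1-\la^d)-(1-\la)\lz\la\frac{1-(\lz\la)^d}{1-\lz\la}$. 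A convenient check is the identity
\[
(1-\lz)\la=\la(1-\lz\la)-\lz\la(1-\la),
\]
which converts $(1-\lz)\la/(1-\lz\la)$ into $\la-(1-\la)\lz\la/(1-\lz\la)$.

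\textbf{$P(\calC_1)$.} This follows by the symmetry $a\leftrightarrow b$, $\lz\leftrightarrow\la$, which swaps $\calC_0$ and $\calC_1$. No new computation is needed.

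\textbf{$P(\Cinv)$.} This is best obtained from a disjoint decomposition rather than as $1-P(\calC_0)-P(\calC_1)$. Split $\Cinv$ into the diagonal part $\{a=b\le d\}$ and the out-of-box part $\{a>d \text{ or } b>d\}$. The diagonal contributes
\[
(1-\lz)(1-\la)\sum_{a=0}^{d}(\lz\la)^a=\frac{(1-\lz)(1-\la)}{1-\lz\la}\bigl(1-(\lz\la)^{d+1}\bigr).
\]
For the out-of-box part, use inclusion--exclusion with $P(a>d)=\lz^{d+1}$ and $P(b>d)=\la^{d+1}$; since $a$ and $b$ are independent, the intersection has probability $(\lz\la)^{d+1}$. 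This reproduces exactly the stated three-term expression.

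\textbf{Main obstacle and consistency check.} The only real hazard is bookkeeping: off-by-one errors in the ranges, and the diagonal $a=b>d$ being counted in both parts. The latter is avoided by restricting the diagonal to $a=b\le d$, as above. As a sanity check, I would verify that $P(\calC_0)+P(\calC_1)+P(\Cinv)=1$, for instance at $d=0$, where $\calC_0$ and $\calC_1$ are empty and $P(\Cinv)=1$, and symbolically by combining terms. The hypothesis $\nb_L\ll\nb_H$ plays no role in the identities themselves; it only matters later, when these probabilities are interpreted.
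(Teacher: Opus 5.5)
Your proposal is correct and matches the paper's proof. The paper also reduces everything to finite geometric sums. It computes $P(\calC_1)$ directly, whereas you compute $P(\calC_0)$ and obtain $P(\calC_1)$ by the $\lz\leftrightarrow\la$ symmetry. It splits $\Cinv$ the same way, into the diagonal $\{a=b\le d\}$ plus the out-of-box event, evaluating the latter as $1-P(a\le d)P(b\le d)$ where you use inclusion--exclusion. One small point: your second piece $\la^{d+1}(1-\lz^{d})$ enters with a minus sign, and summing the inner index over the smaller occupation (as the paper does) gives the stated form directly, without your rearrangement identity.
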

\begin{proof}
For $k\in\{L,H\}$, define $P_k(n)=(1-\lambda_k)\lambda_k^n$ as the probability of measuring $n$ photons in the cold ($k=L$) or hot ($k=H$) thermal state. Therefore, the probability of measuring the state in $\calC_1$ is
\begin{subequations}
\begin{align}
    P(\calC_1) &= \sum_{a=1}^d \sum_{b=0}^{a-1} P_L(a) P_H(b) = \sum_{a=1}^d P_L(a) \sum_{b=0}^{a-1} P_H(b)\\
    &= \sum_{a=1}^d P_L(a) \sum_{b=0}^{a-1} (1-\la)\la^b = \sum_{a=1}^d (1-\lz)\lz^a (1-\la^a)\\
    &= (1-\lz) \left(\lz\frac{1-\lz^d}{1-\lz} - \lz\la \frac{1- (\lz\la)^d}{1-\lz\la}\right)\\
    &\le \frac{\lz (1- \la)}{1 - \lz\la} = \frac{\nb_L}{\nb_L + \nb_H + 1}.
\end{align}
\end{subequations}
To calculate $P(\Cinv)$, first note $P(\Cinv) = P(a=b \le d) + P(a>d \lor b>d)$
\begin{subequations}
\begin{align}
    P(a = b \le d) &= \sum_{a=0}^d P_L(a) P_H(a) = \frac{(1-\lz)(1-\la)}{1-\lz\la} (1- (\lz\la)^{d+1})\\
    P(a \le d) &= \sum_{a=0}^d P_L(a) = (1- \lz^{d+1})\\
    P(b \le d) &= \sum_{b=0}^d P_H(b) = (1- \la^{d+1})\\
    P(a>d \lor b>d) &= 1 - P(a\le d)P(b\le d)\\
    P(\Cinv) &= P(a = b \le d) + P(a>d \lor b>d) \notag\\
             &= \frac{(1-\lz)(1-\la)}{1-\lz\la}\left(1-(\lz\la)^{d+1}\right) + \lz^{d+1} + \la^{d+1} - (\lz\la)^{d+1} \notag\\
             &\le \frac{1}{\nb_L +\nb_H + 1} + \lz^{d+1} + \la^{d+1}.
\end{align}
\end{subequations}
\end{proof}

Therefore the probability that a logical qubit is initialized incorrectly is

\be
    \label{eqn:thermal-error-estimate}
    p_{\mathrm{init}} \le \frac{1+\nb_L}{1+\nb_L +\nb_H} + \lz^{d+1} + \la^{d+1}.
\ee

\begin{lem}
\label{lem:thermal-readout-error}
Let the output pair be initialized in $\rho = (\rho_L\otimes\rho_H)$ as in \cref{lem:thermal-error-estimate}, and let $1\le A\le d$. Then the probability that the output pair is initialized with $a\ge A$ or $b<A$, so that the first-mode readout of \cref{defn:multi-photon-dual-rail} may err, is
\be
    \label{eqn:thermal-readout-error}
    p_{\mathrm{read}} \le \lz^A + (1-\la^A) \le \lz^A + \frac{A}{1+\nb_H}.
\ee
\end{lem}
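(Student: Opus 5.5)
The plan is a union bound over the two failure events, computing each tail exactly from the geometric photon-number distributions already used in the proof of \cref{lem:thermal-error-estimate}. Write $P_k(n)=(1-\lambda_k)\lambda_k^n$ for $k\in\{L,H\}$, with the cold mode carrying $a$ and the hot mode carrying $b$. The readout can only err on the event $\{a\ge A\}\cup\{b<A\}$. Since $\rho_L\otimes\rho_H$ is a product state, it suffices to bound each marginal event separately:
\be
p_{\mathrm{read}}\le \Pr[a\ge A]+\Pr[b<A].
\ee

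For the first term I would sum the geometric tail:
\be
\Pr[a\ge A]=\sum_{n\ge A}(1-\lz)\lz^n=\lz^A.
\ee
For the second term, the analogous finite geometric sum gives
\be
\Pr[b<A]=\sum_{n=0}^{A-1}(1-\la)\la^n=1-\la^A.
\ee
Combining these already yields the first inequality in \cref{eqn:thermal-readout-error}.

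For the second inequality, note that $1-\la=\frac{1}{1+\nb_H}$. By Bernoulli's inequality, $\la^A=(1-(1-\la))^A\ge 1-A(1-\la)$, which is valid since $0\le 1-\la\le 1$ and $A\ge 1$. Hence
\be
1-\la^A\le \frac{A}{1+\nb_H}.
\ee
Alternatively, one can use the telescoping bound $1-x^A=(1-x)\sum_{j<A}x^j\le A(1-x)$.

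I do not expect any step to be a real obstacle. The only point needing care is the logical justification that the readout is correct outside this event. \cref{defn:multi-photon-dual-rail} asserts that the logical gates preserve the unordered pair $\{a,b\}$, so on the complement $a<A\le b$ the first output mode holds either the smaller value (below $A$) or the larger value (in $[A,d]$), according to the logical bit. The lemma only bounds the probability of the initialization event, so no interaction with the circuit is needed beyond citing that remark. Note also that $b\le d$ is not required here, since it is already accounted for by $\Cinv$ in \cref{lem:thermal-error-estimate}.
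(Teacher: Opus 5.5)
Your proof is correct and matches the paper's argument essentially step for step. Both compute the exact geometric tails $\Pr[a\ge A]=\lambda_L^A$ and $\Pr[b<A]=1-\lambda_H^A$, combine them with a union bound, and finish with Bernoulli's inequality using $1-\lambda_H=1/(1+\bar n_H)$. One small remark: the union bound does not need the product structure of $\rho_L\otimes\rho_H$, although noting it does no harm.
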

\begin{proof}
For a thermal mode, $P(n\ge A) = \sum_{n=A}^\infty (1-\lambda_k)\lambda_k^n = \lambda_k^A$. Hence $P(a\ge A) = \lz^A$ and $P(b<A) = 1-\la^A$, and the first inequality follows from the union bound. For the second, Bernoulli's inequality gives $\la^A = (1-(1-\la))^A \ge 1 - A(1-\la) = 1 - \frac{A}{1+\nb_H}$.
\end{proof}

\begin{theorem}
    \label{thm:thermal-bqphard}
    $\ThermalBEQP{\poly}{\{\nb_L, \nb_H\}}$ is $\BQP$-hard.
\end{theorem}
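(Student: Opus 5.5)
The plan is to simulate an arbitrary $\BQP$ circuit on $N$ qubits with depth $L=\poly(N)$ inside the thermal model. The simulation uses the multiphoton dual-rail encoding of \cref{defn:multi-photon-dual-rail}, with each logical qubit carried by one cold mode ($\nb_L$) and one hot mode ($\nb_H$).

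\textbf{Step 1: initialization.} Placing the cold mode first and the hot mode second initializes the pair, with high probability, in $\calC_0$, i.e.\ logical $\ket 0$. The ordering $a<b$ is what matters; the actual values $(a,b)$ act as a classical random ``which-pair'' label. Because every logical gate preserves the unordered pair $\{a,b\}$, the label is never disturbed.

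\textbf{Step 2: gates.} A universal gate set is $U_X$, $U_T$, $U_{CZ}$ and $U_H^{(d)}$. For each fixed valid pair label, these act exactly as the ideal qubit gates on $\Span\{\ket{a,b},\ket{b,a}\}$. This follows from the interpolation properties of $p_1$ and from \cref{lem:hadamard-passive}, which realizes $U_H^{(d)}$ as an energy-preserving polynomial Hamiltonian of degree $O(d)$. All generators commute with $\hN$ and have degree $\poly(d)$, so they are admissible in $\ThermalBEQP{\poly}{\cdot}$.

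Conditioned on every input pair being valid ($a<b\le d$), the mixed input is therefore a convex combination over labels. On each branch the circuit implements the ideal logical circuit, so the logical output distribution equals the $\BQP$ output distribution exactly.

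\textbf{Step 3: readout.} The designated output pair is read by thresholding $\hat n_1$ at $A$, as in \cref{defn:multi-photon-dual-rail}. This is correct whenever that pair also satisfies $a<A\le b$.

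\textbf{Step 4: error budget.} A union bound over the $N$ pairs, combined with \cref{eqn:thermal-error-estimate,eqn:thermal-readout-error}, gives a total failure probability of at most
\[
N\Bigl(\tfrac{1+\nb_L}{1+\nb_L+\nb_H}+\lz^{d+1}+\la^{d+1}\Bigr)+\lz^{A}+\tfrac{A}{1+\nb_H}.
\]
The parameters are chosen as follows.
\begin{itemize}
\item Since $\nb_L\le b=O(1)$, we have $\lz\le b/(b+1)<1$ constant. Taking $A=\Theta(\log N)$ makes $\lz^A\le 1/\poly(N)$.
\item Since $\nb_H-\nb_L=\poly(N)$, $\nb_H$ is a large enough polynomial that both $N(1+\nb_L)/(1+\nb_H)$ and $A/(1+\nb_H)$ are $\le 1/\poly(N)$.
\item $d$ must satisfy $\la^{d+1}\le 1/(N\poly(N))$. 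Because $\la=1-1/(\nb_H+1)$, this needs $d=\Theta(\nb_H\log N)=\poly(N)$. With this $d$, $\lz^{d+1}$ is negligible.
\end{itemize}
The resulting total error is, say, below $1/10$. Completeness and soundness with constant gap then follow from the $\BQP$ amplification.

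\textbf{Main obstacle.} The delicate step is the tension in choosing $d$. It must grow like $\nb_H\log N$ so that the hot mode rarely exceeds the cutoff, yet the gates then have degree $O(d)=\poly(N)$. It must also be checked that $U_H^{(d)}$, whose Hamiltonian comes from \cref{thm:passive-description}, is efficiently uniformly describable with $\poly(N)$ coefficients and degree. It must further be confirmed that, outside $\Pi_{\le d}^{\otimes 2}$, the gates at worst send invalid branches to arbitrary states; this is harmless because those branches are already counted as failures.

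I would also verify that the binary threshold readout is legitimately expressible as the interval test $[a,b]=[A,d]$ of \cref{def:cc-beq}. Soundness outcomes above $d$ are likewise absorbed into the failure probability.
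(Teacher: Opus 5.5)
Your proposal is correct and follows the paper's proof essentially step for step. It uses the same multiphoton dual-rail encoding and gate set $\{U_X,U_H^{(d)},U_T,U_{CZ}\}$, the same union-bound budget from \cref{eqn:thermal-error-estimate,eqn:thermal-readout-error}, and the same parameter choices: $A=O(\log N)$, $\nb_H$ chosen as a sufficiently large polynomial (as the paper also does, rather than taking an arbitrary given polynomial), and $d=\Theta(\nb_H\log N)$. The only difference is how bad initializations are handled: the paper invokes invariance of $\Cinv$ (\cref{lem:code-space-inv}) to bound their effect by $Np_{\mathrm{init}}$, whereas you use linearity over the Fock-diagonal thermal mixture, so invalid branches contribute at most their probability weight whatever the gates do to them; if anything, this is the more direct justification of that bound.
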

\begin{proof}

    While we have an expression for the input error, we also need a bound on the output error.
    However, we show that the physical gates leave the $\Cinv$ codespace invariant (\cref{lem:code-space-inv}) and therefore, the output error due to the encoding is upper bounded by $Np_{\mathrm{init}}$.

    \begin{lem}
    \label{lem:code-space-inv}
    $\Cinv$ is invariant under the action of $U_X, U_H, U_T, U_{CZ}$
    \end{lem}
    \begin{proof}
    $U_X \ket{a,b} = \ket{b,a}$, which implies that if $a=b \vee a>d \vee b>d$, the resulting state is also in $\Cinv$.
    $U_T$ and $U_{CZ}$ only act to add a phase, since they are functions of $\hat{n}_1$ and $\hat{n}_2$, and $\ket{a,b}$ is an eigenvector of these.
    The polynomial construction of $U_{H}$ specifically has a block diagonal structure on the $\le d$ and $> d$ local photon number subspaces. For states $\ket{a,a}$, $U_H$ acts as the identity by construction.
    \end{proof}

    In addition, by \cref{lem:thermal-readout-error}, the first-mode readout adds an error of at most $\lz^A + \frac{A}{1+\nb_H}$.

    For some error $\varepsilon$ of the output probability of the ideal $\BQP$ circuit, we need

    \be
        N \left(\frac{1+\nb_L}{1+\nb_L +\nb_H} + \lz^{d+1} + \la^{d+1}\right) + \lz^A + \frac{A}{1+\nb_H} \le \varepsilon
    \ee

    Now for a $\BQP$ problem of input size $\varsigma$, we need $\varepsilon = \frac{1}{\poly(\varsigma)} \equiv N\varsigma^{-c}$, and $N = \poly(\varsigma)$.

    Therefore

    \begin{subequations}
        \be
        \frac{1 + \nb_L}{1+ \nb_L + \nb_H} \le \varsigma^{-c}
        \ee

        \be
        \nb_H = \Omega(\varsigma^c (1 + \nb_L))
        \ee
    \end{subequations}

    These equations force $\nb_H = \poly(\varsigma)$, and we can pick $\nb_L = O(1)$.

    Further, from the definition of the thermal state,

    \be
        \la^{d+1} = \frac{\nb_H}{1+\nb_H}^{d+1} \le \varsigma^{-c}.
    \ee

    Since $\log(1+1/\nb_H)\ge 1/(1+\nb_H)$, we have

    \begin{subequations}
        \begin{align}
            \left(d+1\right)\ge (1+\nb_H)\,c\log\varsigma
            &\implies \frac{d+1}{1+\nb_H}\ge c\log\varsigma \\
            &\implies \left(1+\frac{1}{\nb_H}\right)^{d+1}\ge \varsigma^c,
        \end{align}
    \end{subequations}
    so some $d= \poly(\varsigma)$ suffices.

    Finally, $A = \lceil c\log\varsigma/\log(1/\lz)\rceil = O(\log\varsigma)$ gives $\lz^A\le \varsigma^{-c}$, and $\frac{A}{1+\nb_H}\le \varsigma^{-c}$ holds for $\nb_H = \Omega(\varsigma^c\log\varsigma)$. Both are compatible with $A\le d$.

    Therefore, if we pick $d = \poly(\varsigma),\ \nb_H = \poly(\varsigma),\ \nb_L = O(1),\ A = O(\log\varsigma)$, we get the required error scaling, and the qubit universality of $\{U_X, U_H, U_T, U_{CZ}\}$ implies we can simulate a $\BQP$ circuit to $\frac{1}{\poly}$ error.
\end{proof}

\begin{theorem}
    \label{thm:thermal-inbqp}
    $\ThermalBEQP{\poly}{\{\nb_L, \nb_H\}}$ is in $\BQP$.
\end{theorem}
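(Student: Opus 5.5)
The goal is to show that $\ThermalBEQP{\poly}{\{\nb_L,\nb_H\}}\subseteq\BQP$. We will reduce to the setting of \cref{thm:in_BQP}, i.e.\ \cite[Theorem 5.4]{CGMMNRS25} combined with \cref{lem:assumption-satisfied}. The one new ingredient is the mixed thermal input. Since every thermal state is diagonal in the Fock basis, we will write the input as a classical mixture of Fock states,
\be
\rho_{\mathrm{in}}=\sum_{\bm n}\prod_i (1-\lambda_i)\lambda_i^{n_i}\ket{\bm n}\bra{\bm n},
\ee
a product of geometric distributions with $\lambda_i=\nb_i/(\nb_i+1)$. A $\BQP$ machine samples $\bm n$ classically and then simulates the circuit on the pure input $\ket{\bm n}$. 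By linearity, the output photon-number distribution in the first mode equals the average over $\bm n$, so sampling reproduces the acceptance probability exactly.

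\emph{Step 1: truncation.} We truncate the sampled occupation numbers at a cutoff $K_0=\poly(\varsigma)$. Each mode satisfies $\Pr[n_i>K_0]=\lambda_i^{K_0+1}\le e^{-(K_0+1)/(\nb_H+1)}$. With $\nb_H=\poly(\varsigma)$ and polynomially many modes, a union bound shows that $K_0=O((\nb_H+1)\log(\varsigma/\varepsilon))$ makes the total discarded probability at most $1/\poly$. Conditioned on no truncation, the input is a Fock state $\ket{\bm n}$ with total photon number $|\bm n|\le \poly(\varsigma)$. Because every gate is energy-preserving, the state remains in the sector $|\bm n|$ throughout the circuit. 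This gives the polynomial energy promise needed by \cite[Theorem 5.4]{CGMMNRS25}, and it holds exactly, with no tail issues.

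\emph{Step 2: simulating the circuit on $\ket{\bm n}$.} We invoke the simulation of \cref{thm:in_BQP}. Preparing $\ket{\bm n}$ in the qubit encoding is trivial since it is a classical basis state. For gate entries, \cref{lem:assumption-satisfied} explicitly allows degree $d=\poly(n)$, coefficients up to $\exp(n)$ and time $\poly(n)$ while keeping runtime polynomial, provided locality stays $O(1)$. That covers the polynomial-degree gates of $\ThermalBEQP{\poly}{}$, such as $U_T$, $U_{CZ}$ and $\hat P_H$, which are local on $O(1)$ modes. The simulation then works on the truncated Fock space of dimension $\poly$ per $O(1)$-mode block. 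Errors add up to $1/\poly$ over $\poly$ gates, and the final measurement of $\hat n_1$ with the interval test $[a,b]$ is a classical readout of the binary photon count.

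\emph{Main obstacle.} The delicate point is that \cite[Theorem 5.4]{CGMMNRS25} was stated for constant-degree gates. We must check that its qubitization of each local gate only needs the Fock-basis entries within the cutoff. Because energy-preserving gates are block diagonal, truncation to total photon number $\le |\bm n|$ introduces no error, so only \cref{lem:assumption-satisfied}'s entry-computation guarantee, which handles polynomial degree, is required. We then compile each $O(1)$-mode block unitary on a $\poly$-dimensional space into qubit gates by standard synthesis, in the manner of Solovay--Kitaev on $\mathrm{poly}$-size qudits. Finally, the sampling error and the amplification of the completeness/soundness gap by repetition are routine.
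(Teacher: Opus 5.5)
Your proposal is correct and follows essentially the same route as the paper: truncate the geometric Fock tails of the thermal inputs at a polynomial cutoff, use energy preservation to confine the computation to a polynomial-photon subspace, and reuse the simulation behind \cref{thm:in_BQP}, with \cref{lem:assumption-satisfied} supplying the Fock entries of the $O(1)$-local, polynomial-degree gates. Your explicit classical sampling of the Fock configuration, together with a per-mode tail bound that makes the photon-number bound hold exactly rather than in expectation, is simply a more detailed version of the paper's statement that the truncated thermal inputs are efficiently preparable polynomial-dimensional mixed states.
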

\begin{proof}
    Containment follows from the simulation argument of \cref{thm:in_BQP}. Indeed, the total expected input energy is $\poly(N)$ and is preserved throughout the circuit. Thermal states have geometric Fock distributions with efficiently computable probabilities, so after truncating their tails at a polynomial cutoff they are efficiently preparable polynomial-dimensional mixed states. Every gate acts on $O(1)$ modes and has polynomial degree, and its truncated Fock entries are efficiently computable by \cref{lem:assumption-satisfied}. Thus the proof of \cref{thm:in_BQP} applies even for thermal states, giving containment in $\BQP$.
\end{proof}

\cref{thm:thermal-bqphard} and \cref{thm:thermal-inbqp} prove the claim, which we restate below.

\thermalbqp*

\begin{remark}[Constant-degree gates]
\label{rem:thermal-poly-SK}
The polynomial degree of $U_H$, $U_T$ and $U_{CZ}$ is not essential. Each acts on $m\le4$ modes which, on runs where every pair is initialized in $\calC_0$, hold at most $K=4d$ photons. Hence \cref{thm:cutoff-SK} compiles each into $\poly(d,\log(1/\epsilon))$ gates from $\GKerr$ that are $\epsilon$-accurate on $\calH^{(m)}_{\le K}$. So a circuit of $L$ logical  gates incurs at most $L\epsilon$ additional error, and $\epsilon=1/\poly$ suffices.

However, \cref{thm:cutoff-SK} requires a vacuum ancilla, which the thermal model does not provide unless $\nb_L=0$. We therefore use the ancilla-free \cref{thm:cutoff-SK2}, which requires unit determinant on every photon-number sector. To this end, replace each gate $U$ by
\be
    U' = U\,e^{-i\varphi(\hat N)},\qquad \varphi(n) = \frac{\arg\det U^{(n)}}{D_n},
\ee
where $\hat N$ is the total photon number of its $m$ modes, so that $U'\in\mathrm{SU}(D_0,\dots,D_K)$. Every gate preserves the photon number of each pair, and the thermal input is diagonal in these numbers. The correction $e^{-i\varphi(\hat N)}$ therefore acts as a global phase on each branch of the initial mixture and leaves the output distribution unchanged. Consequently, \cref{thm:thermal-poly-bqp} also holds with constant-degree gates from $\GKerr$ (or $\GcrossKerr$), at polynomial overhead.
\end{remark}

\subsubsection{Thermal energy-preserving with constant per-mode energy is \texorpdfstring{$\BQP$}{BQP}-complete}

In \cref{sec:thermal-poly}, a $\frac{1}{\poly}$ error is achieved by letting the hot mean photon number grow with the problem size, $\nb_H=\Omega(n^c\log n)$, together with a cutoff $d=\poly(n)$. Consequently, both the energy per mode and the logical gates depend on the input size: every gate must act correctly on a Fock-space truncation that grows with $n$, whether it is implemented directly with polynomial degree or compiled into $\GKerr$ via \cref{rem:thermal-poly-SK}.

It is therefore natural to ask whether two \emph{fixed} temperatures already suffice, with $\nb_L$, $\nb_H$ and the cutoff $d$ all independent of $n$. In this regime the energy per mode is constant, and the gates form a single fixed family acting on a fixed truncation, independent of the problem size.

A constant energy difference would lead to a constant error rate (\cref{eqn:thermal-error-estimate}), which cannot simulate the $\BQP$ circuit accurately enough. However, we can use error correction schemes to bound the error growth back to the required $O\left(\frac{1}{\poly}\right)$.

In this section, we show that there exists a constant threshold temperature difference above which an error correction scheme exists. We first state a corollary of \cref{lem:thermal-error-estimate}:

\begin{corollary}
\label{cor:arbitrary-error}
For an $\varepsilon > 0$ with $\varepsilon = O(1)$, there exists a pair of mean photon numbers $\nb_H = O(1),\nb_L =O(1)$, such that $P(\calC_1) + P(\Cinv) \le \varepsilon$.
\end{corollary}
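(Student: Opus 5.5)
The plan is to read the bound directly off the closed-form expressions of \cref{lem:thermal-error-estimate}. Their intermediate inequalities give
\be
P(\calC_1)+P(\Cinv)\le \frac{\nb_L}{\nb_L+\nb_H+1}+\frac{1}{\nb_L+\nb_H+1}+\lz^{d+1}+\la^{d+1}=\frac{1+\nb_L}{1+\nb_L+\nb_H}+\lz^{d+1}+\la^{d+1},
\ee
which is exactly the quantity $p_{\mathrm{init}}$ in \cref{eqn:thermal-error-estimate}. It therefore suffices to choose constants $\nb_L,\nb_H$, and a constant cutoff $d$ (permitted, since $d$ is a free parameter of the encoding in \cref{defn:multi-photon-dual-rail}), so that each of the three terms is at most $\varepsilon/3$.

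The parameters are chosen in order. First set $\nb_L=0$, or more generally any fixed $\nb_L=O(1)$; then $\lz<1$ is a constant. Next take $\nb_H$ large enough that $\frac{1+\nb_L}{1+\nb_L+\nb_H}\le\varepsilon/3$, for instance $\nb_H=\lceil 3(1+\nb_L)/\varepsilon\rceil$. This is $O(1)$ because $\varepsilon=O(1)$ is a fixed constant. Finally, since $\la=\nb_H/(1+\nb_H)<1$ is now a fixed constant, choose
\be
d+1\ge \max\left\{\frac{\log(3/\varepsilon)}{\log(1/\la)},\frac{\log(3/\varepsilon)}{\log(1/\lz)}\right\}.
\ee
This ensures $\la^{d+1},\lz^{d+1}\le\varepsilon/3$, and the bound is again a constant; using $\log(1/\la)\ge 1/(1+\nb_H)$, it is $O(\nb_H\log(1/\varepsilon))$. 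If $\nb_L=0$, the $\lz$ term vanishes.

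The only step that needs care is the order in which the parameters are fixed. The cutoff $d$ must be chosen after $\nb_H$, because $\la\to1$ as $\nb_H$ grows, so the $\la^{d+1}$ tail forces $d$ to scale like $\nb_H$. Since $\nb_H$ itself depends only on $\varepsilon$, all three quantities stay constant, which is the claim. One should also confirm that the inequalities quoted from \cref{lem:thermal-error-estimate} hold for every $d$: each closed form is a finite geometric truncation that is monotonically bounded by the untruncated expression.
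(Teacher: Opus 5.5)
Your proposal is correct and is essentially the paper's argument. The paper gives no separate proof; it reads the corollary off \cref{lem:thermal-error-estimate} via the bound $p_{\mathrm{init}}$ in \cref{eqn:thermal-error-estimate}, fixing $\nb_L$ first, then $\nb_H$, then a constant cutoff $d$ — the same ordering it uses when it invokes this corollary in \cref{cor:thermalBQP}. Your explicit choices ($\nb_H=\lceil 3(1+\nb_L)/\varepsilon\rceil$ and $d+1\ge(1+\nb_H)\log(3/\varepsilon)$) simply make that one-line justification quantitative.
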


When the state is in $\mathcal{C}_{\mathrm{inv}}$, primarily in the local photon $> d$ subspace, the gates are not well defined on the computational space. We interpret this as the logical qubit ``failing.'' This physical behavior motivates a mapping to a formal noise model where qubits are initialized with a static error status.

\begin{definition}[Static Initialization Noise Model]
    \label{def:sine-model}
Let $Q$ be an arbitrarily large pool of physical qubits initialized at time $t = 0$. Each physical qubit in $Q$ is assigned an i.i.d. status $s \in \{\text{Fail}, \text{Ideal}\}$ with probabilities $p$ and $1 - p$.

All gates are assumed to execute with perfect fidelity, subject only to the initial state of the target qubits. For an arbitrary ideal $\Delta$-local operation $\mathcal{G}$ acting on a subsystem $S \subseteq Q$:
\begin{enumerate}
    \item If $s_q \neq \text{Fail}$ for all $q \in S$, then $\mathcal{G}$ is executed perfectly.
    \item If there exists $q \in S$ such that $s_q = \text{Fail}$, then $\mathcal{G}$ is replaced with an adversarial CPTP map $\mathcal{E}$ acting on $S$, with the constraint that $\mathcal{E}$ does not change the status of any qubit. Explicitly, the CPTP maps may be implemented via a shared environment, allowing for correlated noise.
\end{enumerate}
\end{definition}

To prevent the spatial propagation and accumulation of these failures, we introduce a randomized routing gadget that periodically swaps the computational state onto fresh qubits.

\begin{definition}[\DATASWAP]
\be \DATASWAP_{1,2} = (\II\otimes H) CZ_{1,2} (H\otimes H) CZ_{1,2} (H\otimes H) CZ_{1,2} (\II\otimes H)\ee
\end{definition}

\begin{definition}[Routing Gadget $\mathcal{T}$]
    \label{def:routing-gadget}
At each time step $t$, if we are to apply the $m$-local operation $\mathcal{G}_t$ on the active target register $\bm{q}_{t-1}$, we replace $\mathcal{G}_t$ with the routing gadget $\mathcal{T}$:
\begin{enumerate}
    \item For each active qubit $q_{t-1} \in \bm{q}_{t-1}$:
        \begin{enumerate}
            \item Draw a fresh qubit $a$ from the pool $Q$.
            \item Apply the $\DATASWAP$ gate\footnote{We emphasize that $\text{DATA-SWAP}$ must be implemented using compiled polynomial constructions rather than a simple linear-optical swap using two beam splitters. Because the compiled gates leave the high-photon-number subspace $\mathcal{C}_{\mathrm{inv}}$ invariant (as guaranteed by \cref{lem:code-space-inv}), a failed mode with an arbitrarily high photon number in $\mathcal{C}_{\mathrm{inv}}$ cannot leak photons into the fresh mode. A beamsplitter swap, by contrast, would physically mix the modes and allow high photon counts to propagate to the rest of the circuit, violating the non-poisoning property of the noise model.} between $q_{t-1}$ and $a$.
            \item Permanently discard $q_{t-1}$ and label $a$ as the active register qubit $q_{t-1}$.
        \end{enumerate}
    \item Apply $\mathcal{G}_t$ on the fresh register $\bm{q}_{t-1}$.
\end{enumerate}
\end{definition}

We note that if all qubits participating in the gadget are Ideal, then the gadget emulates the action of $\mathcal{G}$. If any qubit is Failing, then an arbitrary gate is applied, but the status of the participating qubits is randomly, independently set. Thus, the effect of a failing qubit is limited to only two rounds that it participates in. Therefore, we have mapped static initialized noise to independently sampled wire section noise. However, the gate noise is still correlated, as the two ends of the wire both fail.

To apply standard quantum fault-tolerance threshold theorems, we must show that the correlated distribution of gate failures is stochastically dominated by a distribution of completely independent gate failures.

We formalize the quantum circuit under the Static Initialization Noise Model (\cref{def:sine-model}) equipped with the Routing Gadget as a graph $G = (V,E)$.

\begin{definition}[Circuit Graph]
Let $G = (V,E)$ be a graph representing the execution of the quantum circuit.
\begin{itemize}
    \item \textbf{Vertices ($V$):} Each vertex $v \in V$ corresponds to a Routing Gadget.
    \item \textbf{Edges ($E$):} Each edge $e \in E$ corresponds to a physical qubit drawn from the pool $Q$. Thus, $e = \{u, v\}$ connects operations $u$ and $v$, if the qubit participates in these gates.
    \item \textbf{Maximum Degree ($\Delta$):} The maximum number of physical qubits involved in any single gate operation, corresponding to the maximum degree of $G$.
\end{itemize}
\end{definition}

\begin{definition}[Random Wire Error Noise Model]
    For each edge $e \in E$, let $E_e \in \{0, 1\}$ be the indicator that the physical qubit $e$ is initialized to \texttt{Fail}. By \cref{def:routing-gadget} on the static initialization noise model, the family $(E_e)_{e \in E}$ consists of independent and identically distributed (i.i.d.) Bernoulli random variables with parameter $p$:
\be E_e \sim \text{Ber}(p) \ee
A gate operation $v \in V$ is executed perfectly if and only if all incident qubits are \texttt{Ideal} (${E}_e = 0$). If any incident qubit fails, the gate undergoes an adversarial CPTP map. Let $F_v \in \{0, 1\}$ indicate the physical failure of gate $v$:
\be F_v = \max_{e \ni v} E_e \ee
Note that the random variables $(F_v)_{v \in V}$ are correlated because adjacent gates share edges.
\end{definition}

\begin{theorem}[Stochastic domination of correlated errors]
    \label{def:stoc-dom}
Let $G=(V,E)$ be a circuit graph of maximum degree $\Delta$, subject to the physical edge-error distribution $E_e \sim \text{Ber}(p)$ i.i.d. There exists an i.i.d.\ virtual gate error distribution $(V_v)_{v \in V}$ with local failure rate $p_{\mathrm{gate}}$ such that:
\begin{enumerate}
    \item The variables $V_v$ are mutually independent.
    \item The local failure rate is exactly $p_{\text{gate}} = 1 - (1 - \sqrt{p})^\Delta$.
    \item The virtual error model stochastically dominates the physical error model, meaning $F_v \le V_v$ for all $v \in V$.

        For small $p$, $p_{\text{gate}} = \Delta \sqrt{p} + O(p)$.
\end{enumerate}
\end{theorem}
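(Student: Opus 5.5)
The plan is to construct an explicit coupling on a common probability space. On that space the physical edge errors $E_e$ have exactly the prescribed i.i.d.\ $\mathrm{Ber}(p)$ law, the virtual gate errors $V_v$ are built from disjoint families of independent coins, and $F_v\le V_v$ holds pointwise. This shows at once that the correlated physical model is dominated by an independent one.

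\emph{Why the rate must be $\sqrt p$.} Splitting each edge error as an OR of two independent halves does not work. An edge failure must force both of its endpoint gates to fail, so an edge cannot fail ``only on one side.'' Moreover, if the $V$'s are independent with rate $r$ and $E_e=1$ forces $V_u=V_v=1$, then $p\le \Pr[V_u=V_v=1]=r^2$. So $r\ge\sqrt p$ is necessary, and this explains the $\sqrt p$ in the statement.

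\emph{Construction.} For every edge $e=\{u,v\}$, attach to each endpoint an independent coin $B_{e,u},B_{e,v}\sim\mathrm{Ber}(\sqrt p)$. Make all coins over all (edge, endpoint) pairs mutually independent. Define $E_e\coloneq B_{e,u}\wedge B_{e,v}$, so $\Pr[E_e=1]=p$. Distinct edges use disjoint sets of coins, so the $(E_e)_{e\in E}$ are i.i.d.\ $\mathrm{Ber}(p)$ and reproduce the Random Wire Error Noise Model exactly. A qubit that participates in only one gadget (a dangling edge) is handled by giving it an auxiliary dummy coin $B'_e\sim\mathrm{Ber}(\sqrt p)$ that is not assigned to any vertex. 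Next set
\be
V_v \coloneq \max\Bigl(\max_{e\ni v}B_{e,v},\; Z_v\Bigr),
\ee
where $Z_v$ is an extra independent padding coin. Its rate is chosen so that, for vertices of degree $\deg v<\Delta$, $\Pr[V_v=1]=1-(1-\sqrt p)^{\Delta}$ exactly: take $\Pr[Z_v=0]=(1-\sqrt p)^{\Delta-\deg v}$.

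\emph{Verifying the three claims.} Each $V_v$ depends only on the coins $\{B_{e,v}\}_{e\ni v}\cup\{Z_v\}$. These sets are disjoint for distinct $v$, so the $V_v$ are mutually independent, which is item~1. The complement event is $\{\text{all coins at }v\text{ are }0\}$, which gives rate $p_{\mathrm{gate}}=1-(1-\sqrt p)^{\Delta}$, which is item~2. For domination, note that if $F_v=1$ then some incident edge has $E_e=1$. This forces $B_{e,v}=1$ and hence $V_v=1$, so $F_v\le V_v$ pointwise, which is item~3. The expansion $1-(1-\sqrt p)^\Delta=\Delta\sqrt p-\binom{\Delta}{2}p+\dots=\Delta\sqrt p+O(p)$ is a binomial expansion.

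\emph{Main obstacle.} The only delicate step is recognising that the naive ``split the edge error'' coupling fails, and replacing it with the AND-of-endpoint-coins coupling above. A secondary point is to check that the gadget structure of \cref{def:routing-gadget} really yields a graph: every physical qubit must touch at most two gadgets, which holds because each fresh qubit is used in one DATA-SWAP and one subsequent gate before being discarded. If some qubit touched more gadgets, the same idea would still work for hyperedges with an AND over all incident endpoint coins, but the rate would change to $p^{1/k}$.
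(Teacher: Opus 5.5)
Your construction is the same coupling as the paper's. The paper starts from the given $E_e$ and draws the two endpoint ``socket'' bits conditionally on it: $(1,1)$ if $E_e=1$, otherwise $(1,0)$, $(0,1)$, $(0,0)$ with probabilities $\frac{\sqrt p}{1+\sqrt p}$, $\frac{\sqrt p}{1+\sqrt p}$, $\frac{1-\sqrt p}{1+\sqrt p}$. This is exactly the conditional law of your two independent $\mathrm{Ber}(\sqrt p)$ coins given their AND. The paper then sets $V_v$ to the OR of its sockets and $\Delta-d(v)$ dummy $\mathrm{Ber}(\sqrt p)$ bits, which is equivalent to your single padding coin $Z_v$. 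Sampling the endpoint coins first, as you do, makes their independence automatic, whereas the paper verifies it by direct computation; the paper's direction instead makes explicit that the coupling extends the given physical noise. Either way, your proof is correct.
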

\begin{proof}

We establish stochastic domination via an explicit marking on vertices mapped from the marking on edges.

For every edge $e = (u, v)$, construct two ``socket'' variables $S_{u, e}, S_{v, e}$ conditioned on the value of $E_e$.

If $E_e = 1$, we define $(S_{u, e}, S_{v, e}) = (1, 1) \text{ w.p. } 1$.

If $E_e = 0$, we define $(S_{u, e}, S_{v, e}) \sim
\begin{cases}
    (1, 0) & \text {wp }\frac{\sqrt{p}}{1 + \sqrt{p}}\\
    (0, 1) & \text {wp }\frac{\sqrt{p}}{1 + \sqrt{p}}\\
    (0, 0) & \text {wp }\frac{1 - \sqrt{p}}{1 + \sqrt{p}}\\
\end{cases}$.

It is easy to see these are well-defined probability distributions. Let us calculate the marginals of the socket variables.
\begin{subequations}
\begin{align}
    P(S_{u,e} = 1) &= P(S_{u,e} = 1 | E_e = 0) P(E_e = 0) + P(S_{u,e} = 1 | E_e = 1)P(E_e = 1) \\
    &= \frac{\sqrt{p}}{1 + \sqrt{p}} \cdot (1 - p) + 1 \cdot p \\
    &= \sqrt{p}.
\end{align}
\end{subequations}
From symmetry, $S_{v, e}$ has a similar marginal.

We now show that $S_{u,e}$ and $S_{v, e}$ are independent.

\be P(S_{u,e} = 1) P(S_{v,e} = 1) =  \sqrt{p} \sqrt{p} = p = P(E_e = 1) = P(S_{u,e} = 1, S_{v,e} = 1)\ee
\be P(S_{u,e} = 0) P(S_{v,e} = 1) =  (1 - \sqrt{p}) \sqrt{p} = (1-p) \frac{\sqrt{p}}{1+\sqrt{p}} = \frac{\sqrt{p}}{1+\sqrt{p}} P(E_e = 0) = P(S_{u,e} = 0, S_{v,e} = 1)\ee
\be P(S_{u,e} = 1) P(S_{v,e} = 0) =  (1 - \sqrt{p}) \sqrt{p} = (1-p) \frac{\sqrt{p}}{1+\sqrt{p}} = \frac{\sqrt{p}}{1+\sqrt{p}} P(E_e = 0) = P(S_{u,e} = 1, S_{v,e} = 0)\ee
\be P(S_{u,e} = 0) P(S_{v,e} = 0) =  (1 - \sqrt{p})^2  = (1-p) \frac{1 - \sqrt{p}}{1+\sqrt{p}} = \frac{1-\sqrt{p}}{1+\sqrt{p}} P(E_e = 0) = P(S_{u,e} = 0, S_{v,e} = 0)\ee

Further, due to the independence of $E_e$ and $E_{e'}$, the socket variables are all mutually independent.

Now let us define $V_v$ for each vertex $v$. Let $d(v)$ be the number of edges incident on $v$. Then define dummy socket variables $D_{j, v} \sim \text{Ber}(\sqrt p)$ for $1 \le j \le \Delta - d(v)$. Then define

\be V_v = \max(\max_{e \ni v} S_{v, e}, \max_j D_{j, v})\ee

Firstly, due to the mutual independence of the underlying socket and dummy socket variables, $V_v$ are mutually independent.

Since $V_v = 0$ if and only if all the underlying variables are $0$,

\be P(V_v = 1) = 1 - (1- \sqrt{p})^\Delta = p_{\text{gate}}\ee

If $V_v$ has to stochastically dominate $F_v$, we need to show that $F_v = 1 \implies V_v = 1$. If $F_v = 1$, then by definition, $\exists e \text{ such that } (e = (u, v) \lor e = (v, u)) \land E_e = 1$.
Then by construction, $S_{v, e} = 1$. Therefore, $V_v = 1$, and therefore $F_v \le V_v$ under the new mapping, completing our proof.
Because we have constructed a valid marking where the physical errors are a subset of the independent virtual errors, the correlated physical error model is stochastically dominated by the i.i.d.\ virtual error model.
\end{proof}

We now show that the thermal computation is an instance of the Static Initialization Noise Model (\cref{def:sine-model}). Since the thermal states are diagonal in the Fock basis, the input is a classical mixture over the initial occupations sampled from the thermal distribution (\cref{def:thermal-state}). If the resulting state $\ket{a, b}\in \calC_0$, we mark it Ideal, or Fail otherwise. Furthermore, the gates we have constructed explicitly leave the $\Cinv$ invariant (\cref{lem:code-space-inv}). Therefore, the modes do not leak into or out of $\Cinv$, allowing us to keep the status static. If all the modes a gate operates on are Ideal, then the gate operates correctly on the logical subspace. If any mode is Failing, then the action of the gate is unknown. Therefore, the error model of the logical computation encoded by the thermal computation is an instance of the Static Initialization Noise Model.

To complete the mapping from edge error to gate error, we need to construct an adversary that can simulate the actual bosonic computation under the following constraints:
\begin{enumerate}
\item The adversary has access to all the qubits participating in a failing gate.
\item The adversary has access to an environment.
\item The adversary can apply an arbitrary joint unitary on the qubits participating in failing gates at that time step and the environment.
\end{enumerate}

The adversary fixes a large photon cutoff $K_{\mathrm{env}} > d$. The adversary starts the computation by sampling the thermal states by sampling photon numbers $0\le a,b\le K_{\mathrm{env}}$ from the thermal distribution. The adversary then stores the photon numbers in its environment. For every failing gate, the adversary simulates the underlying physical computation by evaluating the appropriate polynomial that generates the gate, and then updating the corresponding entries in its environment and the qubits. The $K_{\mathrm{env}}$ truncation costs at most $Q(\lz^{K_{\mathrm{env}}+1}+\la^{K_{\mathrm{env}}+1})$ in trace distance for $Q$ pairs. At fixed temperatures, $K_{\mathrm{env}} = O(\log(Q/\varepsilon))$ suffices.

Therefore, through a sequence of mappings, we have converted the highly correlated Static Initialization Noise model to an Uncorrelated Gate Model which is stochastically dominant (\cref{fig:thermal_schematic}). If we are able to correct the gate error model for some values of $\lambda_L, \lambda_H, d$, then we can correct the underlying physical computation, allowing us to simulate $\BQP$ circuits.

\begin{corollary}
\label{cor:threshold-thm}
Let $p_{\mathrm{th}}$ be the threshold of \cite{AB08}. If $p_{\mathrm{gate}} = 1 - (1 - \sqrt p)^\Delta < p_{\mathrm{th}}$, there exists a fault-tolerant quantum error correction protocol that corrects Static Initialization Noise Error (\cref{def:sine-model}) with the routing gadgets (\cref{def:routing-gadget}). Since $p_{\mathrm{gate}} \le \Delta \sqrt p$, it suffices that
\be p < \left(\frac{p_{\mathrm{th}}}{\Delta}\right)^2. \ee
\end{corollary}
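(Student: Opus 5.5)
The plan is to reduce to the threshold theorem of \cite{AB08} in the independent (local stochastic) gate-noise model, using the two reductions already set up above: the routing gadget of \cref{def:routing-gadget} and the stochastic domination of \cref{def:stoc-dom}.

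\textbf{Step 1: make the noise independent per wire segment.} Start from the Static Initialization Noise Model of \cref{def:sine-model}. Rewrite the fault-tolerant circuit of \cite{AB08} so that every logical gate location is replaced by a routing gadget. Each physical qubit drawn from the pool then appears as one edge of the circuit graph $G$, and it participates in at most two gadgets. Because fresh qubits are drawn independently, the indicators $E_e$ are i.i.d.\ $\mathrm{Ber}(p)$. Because failures never change a qubit's status, a failing qubit can only corrupt its two incident gadgets.

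\textbf{Step 2: dominate by independent gate noise.} Apply \cref{def:stoc-dom} to obtain independent virtual indicators $V_v\ge F_v$ with rate $p_{\mathrm{gate}}=1-(1-\sqrt p)^\Delta$.

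\textbf{Step 3: couple the adversarial channels.} On locations with $V_v=1$ but $F_v=0$, let the adversary apply the ideal gate; this is an admissible choice. On locations with $F_v=1$, the adversarial CPTP maps may be correlated through a shared environment. This is allowed in the \cite{AB08} model, which permits arbitrary (even adversarially correlated) faults on the faulty set, provided the faulty set is chosen by independent sampling. Hence the physical process is an instance of the \cite{AB08} noise model at rate $p_{\mathrm{gate}}$.

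\textbf{Step 4: apply the threshold theorem.} If $p_{\mathrm{gate}}<p_{\mathrm{th}}$, the threshold theorem yields polylogarithmic-overhead fault tolerance. It remains to note the elementary bound $1-(1-x)^\Delta\le\Delta x$ (by Bernoulli's inequality) at $x=\sqrt p$, which gives the sufficient condition $p<(p_{\mathrm{th}}/\Delta)^2$.

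\textbf{Main obstacle.} The delicate point is Step 3. One must check that the error model of \cite{AB08} genuinely allows the faults on the sampled set to be arbitrary joint operations with a shared environment, rather than only independent Pauli channels. One must also check that the fault set is determined by $V$ and not by the quantum state. If \cite{AB08} only covers local stochastic noise with a tail condition, I would instead verify that condition directly: for any set $S$ of $k$ locations, $\Pr[\text{all of } S \text{ faulty}]\le p_{\mathrm{gate}}^k$. This follows from independence of the $V_v$ together with $F\le V$.

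A secondary concern is that the gadget itself contains several gates (\DATASWAP\ plus $\mathcal G_t$). I would treat each gadget as a single location of the circuit of \cite{AB08}, with $\Delta$ counting all qubits touched. Alternatively, I would inflate $\Delta$ to a constant absorbing the gadget size, which only changes constants in the threshold condition.
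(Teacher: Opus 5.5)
Your proposal is correct and follows essentially the same route as the paper. You dominate the correlated gadget failures by the i.i.d.\ virtual pattern of \cref{def:stoc-dom}. Conditioned on a virtual fault path, the physical process is an admissible adversary (acting ideally on gadgets that are virtually but not physically faulty), so the threshold theorem holds for the mixture over fault paths. The i.i.d.\ pattern meets the condition of \cite[Eq.~(10.1)]{AB08} with $c=1$ and $\eta=p_{\mathrm{gate}}$. Finally, your Bernoulli bound $1-(1-\sqrt p)^\Delta\le\Delta\sqrt p$ gives the stated sufficient condition $p<(p_{\mathrm{th}}/\Delta)^2$.
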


\begin{proof}
    By \cref{def:stoc-dom}, there is an i.i.d.\ virtual fault pattern $V_v(u)$ with rate $p_\mathrm{gate}$ such that every physically faulty gadget is virtually faulty. Conditioned on a fault path $\bm{V}$, the physical computation done by the adversary is an allowed strategy, and locations outside the fault path are ideal. Since the threshold theorem holds for every such adversary, it also holds for a mixture over the adversaries conditioned on all fault paths. Since $V_v$ is i.i.d., its distribution satisfies \cite[Section~10.1, Eq.~(10.1)]{AB08} with $c = 1$ and $\eta = p_{\mathrm{gate}}$, satisfying all the conditions required for the threshold theorem to hold.
\end{proof}

Now we can combine \cref{cor:arbitrary-error} and \cref{cor:threshold-thm}:

\begin{corollary}\label{cor:thermalBQP}
    $\ThermalBEQP{\poly}{\nb_L = O(1), \nb_H = O(1)}$ is $\BQP$-complete.
\end{corollary}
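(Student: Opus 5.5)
Containment and hardness are proved separately.

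\emph{Containment.} Containment in $\BQP$ follows from \cref{thm:thermal-inbqp}. That result already treats thermal inputs whose mean photon numbers are at most polynomial. Constant $\nb_L,\nb_H$ are a special case, and the Fock entries of each gate are efficiently computable by \cref{lem:assumption-satisfied}. Only the hardness direction needs new work.

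\emph{Hardness.} The plan has four steps.

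First, fix a $\BQP$ circuit and compile it into a fault-tolerant circuit in the sense of \cite{AB08}, using the universal logical gates $\{U_X,U_H,U_T,U_{CZ}\}$ and the \DATASWAP\ gadget. We insert the routing gadget $\mathcal T$ of \cref{def:routing-gadget} before every location. The resulting circuit graph has maximum degree $\Delta=O(1)$, since gadgets act on a constant number of qubits.

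Second, by \cref{cor:arbitrary-error} we choose constants $\nb_L,\nb_H$ and a constant cutoff $d$ such that
\[
p \;=\; P(\calC_1)+P(\Cinv) \;<\; (p_{\mathrm{th}}/\Delta)^2 .
\]
With $d$ constant, each of $U_H,U_T,U_{CZ}$ and \DATASWAP\ is a single fixed energy-preserving polynomial gate of constant degree. By \cref{rem:thermal-poly-SK} these could even be compiled to $\GKerr$ with constant overhead.

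Third, we identify the physical thermal computation with the Static Initialization Noise Model of \cref{def:sine-model}. The thermal input is diagonal in Fock space, so it is a classical mixture over occupations $\ket{a,b}$. Pairs lying in $\calC_1$ or $\Cinv$ are marked Fail, and pairs in $\calC_0$ are marked Ideal. Strictly, $\calC_1$ pairs are still valid code states, but they encode a flipped logical bit; labelling them Fail is conservative. By \cref{lem:code-space-inv} no gate changes a pair's status. An adversary holding a $K_{\mathrm{env}}=O(\log(Q/\varepsilon))$-truncated environment reproduces the true bosonic action on failed locations, so the physical process is one admissible adversarial strategy.

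Fourth, \cref{def:stoc-dom} dominates the correlated failures by i.i.d.\ gate faults of rate $p_{\mathrm{gate}}<p_{\mathrm{th}}$. \cref{cor:threshold-thm} then gives logical error $1/\poly$ with polylogarithmic overhead.

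\emph{Main obstacle: readout.} Measuring a single photon-number mode does not by itself read out an error-corrected logical bit. The single-mode readout of \cref{lem:thermal-readout-error} has a constant error at constant temperatures. My plan is to decode the fault-tolerant code coherently, that is, classical decoding done reversibly inside the circuit as part of the fault-tolerant computation. The logical answer is then copied redundantly into many output pairs, each readable with constant error. Finally, a majority vote is computed fault-tolerantly and the result is stored in one output pair.

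That final pair is still freshly initialized with constant error. To handle it, I would not read from a fresh pair. Instead, I would conditionally apply $U_X$ to a pair that has already been verified through the fault-tolerant syndrome machinery, such as an ancilla prepared and checked within the code. Its status is then known to be Ideal with probability $1-1/\poly$, and the interval test $[A,d]$ applied to that pair gives the output.

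Making this final step rigorous is where I expect the difficulty. If it fails, the fallback is to define acceptance via the probability that $\hat n_1\in[a,b]$ and to amplify the gap across many parallel final copies. This must be done while staying within the single-mode measurement convention of \cref{def:cc-beq}.
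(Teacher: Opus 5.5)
Your containment argument and your first four hardness steps match the paper: routing gadgets, \cref{cor:arbitrary-error}, identification with the Static Initialization Noise Model, stochastic domination, and \cref{cor:threshold-thm}. The gap is the readout. You leave it unresolved because you treat a constant readout error as something that must be driven to $1/\poly$. It need not be. $\BQP$ only needs total error below $1/3$, and the readout error is a constant you are free to make small.

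The paper's argument runs as follows. Start from a $\BQP$ circuit with error at most $1/6$. Decode the logical answer into one output pair, as you propose. Apply the interval test $[A,d]$ to that pair's first mode. By \cref{eqn:thermal-error-estimate} and \cref{lem:thermal-readout-error}, this errs with probability at most $p+\lz^A+A/(1+\nb_H)$, where $p$ is the static failure probability of that pair.

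Now choose the constants in order:
\begin{enumerate}
\item fix $\nb_L$;
\item take $A$ with $\lz^A$ small;
\item take $\nb_H$ large, so that $(1+\nb_L)/(1+\nb_L+\nb_H)$ and $A/(1+\nb_H)$ are small;
\item take $d\ge A$ with $\la^{d+1}$ small.
\end{enumerate}
The last choice must come after $\nb_H$, because $\la\to1$ as $\nb_H$ grows. This gives simultaneously $p<(p_{\mathrm{th}}/\Delta)^2$ and readout error below $1/12$. Keep the fault-tolerant simulation and $K_{\mathrm{env}}$-truncation errors below $1/12$ as well. The total error is then below $1/6+1/12+1/12=1/3$. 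So your Step~2 only needs the readout condition added, with the constants chosen in this order.

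Your proposed workarounds would not have closed the gap anyway. The measured mode belongs to one fixed physical pair. That pair's status is drawn at initialization (Fail with probability $p$) and, by \cref{lem:code-space-inv}, never changes. With no intermediate measurement or postselection, no preceding syndrome processing can make that pair ``verified''. Similarly, a coherent majority vote over parallel copies must be deposited into a single physical pair before the one-mode measurement of \cref{def:cc-beq}, so it reproduces the same kind of constant error. Some constant error from the final pair is inherent to this construction; the point is only that it can be tuned below any fixed threshold.
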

\begin{proof}
Start with a $\BQP$ circuit whose error is at most $1/6$, using the standard equivalent bounded-error convention.

Measure the first output mode using the interval $[A,d]$, as in \cref{sec:thermal-poly}. This reads the logical bit correctly whenever the output pair is \texttt{Ideal} and $a<A\le b$. Fix $\nb_L$, then choose constants $A$, $\nb_H$, and $d\ge A$, in that order, so that $p_{\mathrm{gate}}<p_{\mathrm{th}}$ and the readout error satisfies
\be p+\lz^A+\frac{A}{1+\nb_H}<\frac1{12}. \ee
Such choices exist by \cref{eqn:thermal-error-estimate,lem:thermal-readout-error}.

Apply \cref{cor:threshold-thm}, choosing the simulation and truncation errors to sum to less than $1/12$. The total error is then less than $1/6+1/12+1/12=1/3$, meeting the promise inequalities of $\BQP$. Therefore, we simulate a $\BQP$ circuit, and show $\BQP$-hardness. Containment follows from \cref{thm:thermal-inbqp}.

Since the construction uses polynomially many pairs and gates, with constant per-mode input energy and gate degree, %
$\ThermalBEQP{\poly}{\nb_L = O(1), \nb_H = O(1)} = \BQP$.
\end{proof}

\begin{figure}[htbp]
\centering
\resizebox*{!}{0.8\textheight}{\input{figure_thermal-passive-constant}}
\caption{A schematic of the sequence of mappings from the Static Initialization Noise Error (SINE) model (\cref{def:sine-model}) to the Uncorrelated Gate Error model. Red components signify failing components. The Routing Gadgets make the qubit failures into wire section failures. Virtual sockets are then introduced, which are marked failing if the edge they connect to fails. Finally, gates are marked such that the gate must fail if any of its sockets fail. Note the failure of $T_{G_{10}}$, which occurs in spite of all the sockets passing, indicative of the stochastic dominance. The virtual socket failure rate is $\sqrt p$; the resulting virtual gate failure rate is $1-(1-\sqrt p)^\Delta$, where $\Delta$ is the maximum gate degree.}
\label{fig:thermal_schematic}
\end{figure}

\subsection{Computations with non-local energy-preserving gates of polynomial degree are \texorpdfstring{$\PSPACE$}{PSPACE}-complete}

In this section, we explore the computational power of nonlocal energy-preserving gates, i.e.\ gates that can operate on a number of modes that scales with the input size. We retain exactly the input model of \cref{scn:model-definition}: the computation starts with vacuum modes and product coherent states, and the total expected input photon number is polynomial. We prove that allowing polynomially described, nonlocal gates generated by polynomial-degree energy-preserving Hamiltonians makes this model $\PSPACE$-complete. This class of gates includes a single evolution generated by a sum of constant-local Hamiltonians, such as those in nearest-$k$-neighbor interaction models.

We previously saw that circuits composed of polynomially many constant-local gates of polynomial degree can be simulated in $\BQP$.
Therefore, an implication of the results of this section is that there is no efficient decomposition of nonlocal unitaries to constant-local bosonic gates unless $\PSPACE = \BQP$ (\cref{cor:no-compile}).
This result is interesting for physical models of the form $\exp (i (\sum_j h_j) )$ where $h_j$ is a constant-local Hamiltonian of polynomial degree.

\newcommand{\NLP}{\mathsf{NLP}}
\begin{theorem}
    \label{claim:pspace-complete}
    Let $\NLP$ (non-local $\poly$) be the set of polynomially described (polynomial-bit coefficients and polynomial-size monomial list), possibly nonlocal, energy-preserving polynomial Hamiltonians of degree $\poly(n)$. Then $\BEQP{\NLP}{}=\PSPACE$.
\end{theorem}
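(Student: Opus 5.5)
Both inclusions are needed.

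\emph{Upper bound} $\BEQP{\NLP}{}\subseteq\PSPACE$. I would reuse the truncation argument behind \cref{thm:in_BQP} and \cref{lem:cutoff-to-ec-diamond}. The input $\ket{0}^{\otimes m}\ket{\alpha}$ has energy $E=\poly$, and every gate commutes with $\hN$. So the photon-number distribution is Poissonian throughout, and truncating at total photon number $K=\poly(\varsigma)$ costs only exponentially small error. I would then work sector by sector, with $n\le K$. A basis state of the $n$-photon sector of $m=\poly$ modes is an occupation vector of $\poly$ bits. Hence each sector has dimension at most $2^{\poly}$, but every index fits in polynomial space.

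Each gate $e^{-itH}$ with $H\in\NLP$ acts within a sector as a sparse-access matrix: a basis vector $\ket{\bm n}$ is mapped by each of the polynomially many monomials to a single $\ket{\bm n'}$, with a coefficient computable in $\poly$ time. Its norm is bounded by $\|H\|_{(K)}\le \poly\cdot K^{\deg/2}=2^{\poly}$, by \cref{lem:cutoff-monomial-bound}.

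The output probability $\Pr[\hat n_1\in[a,b]]$ is a sum over exponentially many paths of products of matrix entries of the gates. Each matrix entry of $e^{-itH}$ can be computed in polynomial space to $2^{-\poly}$ precision by a truncated Taylor series. Using scaling and squaring, this needs $\poly$ nested levels, each a sum over intermediate basis states, in the usual Feynman-path/\PSPACE\ style. Nested sums with a recomputation strategy use space equal to the sum of the depths, which is polynomial. I would phrase this as: exponential-dimensional matrix powers with succinct sparse entries are computable in \PSPACE, via the standard result that iterated multiplication of succinctly given $2^{\poly}$-dimensional matrices lies in \PSPACE. The coherent-state amplitudes $e^{-|\alpha|^2/2}\alpha^j/\sqrt{j!}$ are likewise computable, and comparing the result against the thresholds is easy.

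\emph{Lower bound} $\PSPACE\subseteq\BEQP{\NLP}{}$. I would use $\PSPACE=\PrecBQP$ or, more directly, a reversible space-bounded computation. The plan is to encode a $\poly$-space reversible Turing machine (or a qubit circuit $V$ of exponential length $T=2^{p}$ generated by repeating a fixed $\poly$-size step $W$) on dual-rail qubits prepared via \cref{prop:prepare-Fock-states}. The goal is to produce $W^{T}$ with a single gate $e^{-itH}$, where $H$ is polynomially described, energy preserving, and of polynomial degree.

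My first attempt uses the exact unitary $W$ as a permutation of computational basis states of a reversible classical machine, with $W=e^{-iG}$ for a succinct $G$. I would then use a clock register built from binary counters, avoiding a Feynman clock since its $T$ would require exponential time. Instead, the fast-forwarding comes from choosing a generator whose spectrum makes $e^{-i2^{p}G}$ computable.

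Concretely, I would build the Hamiltonian in stages. First, take $\log$-depth repeated squaring $W^{2^{j}}$ as controlled operations conditioned on photon numbers in a counter mode, a polynomial of degree $\poly$ in $\hat n_{\mathrm{ctr}}$ (similar to $p_1$ in \cref{defn:multi-photon-dual-rail}). Then a single mode holding $\hat n=T'$ photons with Hamiltonian $\hat n\cdot G$ applies $W^{T'}$. Preparing a Fock state with exponentially many photons is not allowed, however, so I would instead use a polynomial-degree interpolation $f(\hat n_{\mathrm{bits}})$ of binary-encoded counters: $H=\sum_j 2^{j}\hat n_{c_j}G$ with $c_j$ dual-rail bits set to $1$. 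The coefficients $2^{j}$ have polynomially many bits, which is allowed by $\NLP$. This yields $e^{-iH}=W^{2^{p}}$ if $G$ generates $W$ exactly and the spectrum of $G$ lies in $2\pi\mathbb Z$-compatible values, as for a permutation with rational eigenphases.

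The main obstacle is this step: constructing a succinct, energy-preserving, polynomial-degree $G$ with $e^{-iG}$ equal to one step of a \PSPACE-complete reversible computation. The step must be exact, because errors are multiplied by $2^{p}$. I would try to take the $\PSPACE$-complete problem as a permutation of order dividing $2^{q}$ (for instance, a counter-driven reversible machine whose step permutation has cycles of power-of-two length), and write $G$ via the logarithm of a cyclic permutation, expressed as polynomials in number operators and dual-rail hopping terms. Degree $\poly$ should suffice because each hopping term is a polynomial in the bits. The readout then uses a single dual-rail output bit, as in \cref{thm:bqp-universal}.
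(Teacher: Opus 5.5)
Your upper bound is essentially the paper's argument in \cref{thm:inpspace}: truncation at a polynomial total photon number, occupation-vector indices of $\poly$ bits, and depth-first recursive evaluation of both the scaling-and-squaring Taylor products and the path sums at $\poly$-bit precision. That direction is fine.

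The lower bound has a genuine gap, exactly at the step you flag. Everything rests on a polynomially described, energy-preserving $G$ with $e^{-iG}=W$ for one step $W$ of a $\PSPACE$-complete reversible computation, holding exactly or to $2^{-\poly}$ accuracy. You never construct such a $G$, and the route you sketch does not work as stated.
\begin{itemize}
\item A Hamiltonian in $\mathsf{NLP}$ is $\poly$-sparse in the Fock basis. Each normal-ordered monomial $(\hat a^\dagger)^{\bm\alpha}\hat a^{\bm\beta}$ maps a Fock state to a multiple of a single Fock state.
\item Taking the logarithm of $W$ cycle by cycle, as you propose, gives on each cycle of length $L$ the logarithm of a cyclic shift. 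With the principal branch this is a dense $L\times L$ circulant, because the sawtooth of eigenphases has nonzero Fourier coefficients at every frequency.
\item Long cycles are unavoidable here. A polynomial-length orbit of the start configuration could be followed classically in polynomial time.
\item You give no argument that some other branch or decomposition yields a $\poly$-sparse generator.
\end{itemize}
So the obstruction is the number of monomials, not the degree. A degree-$\poly$ polynomial in $\poly$ dual-rail bits can need exponentially many monomials, and ``each hopping term is a polynomial in the bits'' does not bound that count.

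The paper never needs an exact logarithm. Its construction has four parts.
\begin{itemize}
\item It takes the one-dimensional translation-invariant universal Hamiltonian $H_m=\frac1m\sum_l h_{l,l+1}$ of Bohdanowicz and Brand\~ao. Evolving under it for time $t'=\poly(t,n)$ simulates $t$ steps of a circuit $V$, with error that does not grow with $t$.
\item It encodes each qudit in dual-rail binary and turns each two-site term into an $O(1)$-degree energy-preserving polynomial via \cref{thm:passive-description}.
\item Completion is guaranteed only on average over evolution times in $[0,t']$. So it adds a uniform dual-rail time register and evolves under $H_{\mathrm{coh}}=H_B\otimes R/M$, which realizes that average coherently.
\item It fast-forwards with $H_{\mathrm{FF}}=H_{\mathrm{coh}}\otimes\hat n_c^{d}$ acting on a small Fock state $\ket{q}$, so that $q^{d}\ge t'=2^{\poly}$ is reached in unit time.
\end{itemize}

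Your reason for discarding a clock construction (``its $T$ would require exponential time'') is exactly what fast-forwarding removes. Your own $2^{j}$-weighted counter, or simply a $\poly$-bit coefficient $2^{p}$ in front of the Hamiltonian, scales a clock Hamiltonian just as well. What you are missing is a succinct clock Hamiltonian whose error does not grow with the number of steps and whose completion time is controlled. With that ingredient, your binary-weighted fast-forwarding would be a legitimate substitute for the paper's $\hat n_c^{d}$ device.
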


\begin{theorem}
\label{thm:pspace-hardness}
    The simulation of $\NLP$ in \cref{claim:pspace-complete}, with vacuum and coherent states as its only input states, is $\PSPACE$-hard.
\end{theorem}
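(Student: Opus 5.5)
We reduce from a standard $\PSPACE$-complete problem: deciding whether an exponentially long qubit computation accepts. Concretely, $\PSPACE=\mathsf{BQPSPACE}$, so it suffices to simulate a polynomial-space circuit $V=G_T\cdots G_1$ on $q=\poly(n)$ qubits with $T=2^{\poly(n)}$ gates, each uniformly generated from a classical polynomial-space description. The idea hinted at earlier (``fast-forwarding'') is to realize the whole computation as a single evolution $e^{-itH}$ under a polynomially described energy-preserving Hamiltonian. That way the exponential length is absorbed into an exponentially large evolution time or coefficient (both of which have polynomial bit length), rather than into circuit depth.

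\textbf{Step 1: encoding.} Encode the $q$ qubits in dual rail using $2q$ modes with one photon each. The required $q$ single photons are prepared from coherent inputs via \cref{prop:prepare-Fock-states}, with inverse-polynomial error and $O(q\log(q/\varepsilon))$ energy. In addition, encode a clock register holding a $\poly(n)$-bit binary counter, also in dual rail. The total photon number is then $\poly(n)$, and all subsequent Hamiltonians act inside the $\poly(n)$-photon sector.

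\textbf{Step 2: a reversible counter-driven circuit.} Choose a universal circuit $V$ whose gate at step $s$ is a fixed, polynomial-time computable function of $s$. For example, iterate a fixed poly-size layer $W$ (a reversible $\PSPACE$ machine step) $2^{p(n)}$ times, so that $V=W^{2^{p}}$. It then suffices to implement $W^{2^p}$. Write $W=e^{-iH_W}$ with $H_W$ a sum of poly many $O(1)$-qubit terms, translated to dual rail as an energy-preserving Hamiltonian of constant degree. Use single-qubit Pauli encodings $\bar X,\bar Y,\bar Z$ as bilinears $\hat c^\dagger\hat d$, and so on, with products of these for multi-qubit terms. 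Then $e^{-i2^pH_W}=W^{2^p}$ holds exactly, with time $t=2^p$ written in $p$ bits. The catch is that $H_W=i\log W$ is not local in general. So instead I would pick $W$ with a known sparse generator, and the cleanest route is a Feynman--Kitaev-style clock. Let $H=\sum_s(\ket{s+1}\bra{s}_{\rm clk}\otimes U_s+\mathrm{h.c.})$ over a cyclic binary clock, where the clock increment is written as a polynomial-degree product of dual-rail ladder bilinears conditioned on carry bits. This is the nonlocal, polynomial-degree part, and it is energy-preserving because every factor is a balanced bilinear. On the cycle, $H$ is unitarily equivalent to a hopping chain of length $2^p$, and its evolution can be made to transfer perfectly (via the Christandl et al.\ engineered couplings $\sqrt{s(N-s)}$), or the history state can be measured. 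The cleanest option is spectral: with a cyclic clock and $\prod_s U_s=V$ ``twisted'' into phases, $e^{-itH}$ acts as a quantum walk whose return amplitude encodes $\bra{0}V\ket{0}$. Output is read through an ancilla qubit copied in dual rail and measured by photon number, matching \cref{def:cc-beq}.

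\textbf{Step 3: the main obstacle.} The crux is making the counter-controlled hopping, with exponentially many clock states, implementable by a polynomially described polynomial-degree Hamiltonian while keeping perfect or constant-probability arrival at the final clock time. I would first try the engineered-coupling perfect state transfer on the binary clock, realizing the coefficient $\sqrt{s(N-s)}$ by a spin-$J$ encoding. Store the clock as $J=N/2$ photons split across two modes, $\ket{s,N-s}$. Then the beamsplitter $\hat a^\dagger\hat b+\hat b^\dagger\hat a$ gives exactly these couplings and perfect transfer at time $\pi/2$ with $N$ photons. However, this needs exponential energy, which is forbidden, so the clock must be binary. The fallback is to accept measurement of the history state with success $1/\poly$ after amplification, using Marriott--Watrous style arguments. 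Containment follows by simulating in $\PSPACE$: energy preservation confines everything to a $\poly$-photon sector of dimension $2^{\poly}$, and $e^{-itH}$ is computed by the scaling-and-squaring of \cref{lem:assumption-satisfied} in polynomial space.
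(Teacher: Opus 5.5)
Your overall strategy is the right one: embed an exponentially long reversible computation in a single polynomially described, high-degree, energy-preserving clock Hamiltonian and fast-forward it. Your binary-counter Feynman--Kitaev Hamiltonian is a legitimate $\mathsf{NLP}$ element. But there is a genuine gap exactly where you place the ``crux'': you never give a working way to read out the computation's result with constant probability.

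Perfect state transfer is, as you note, ruled out by energy. The fallback does not work either:
\begin{itemize}
\item Nothing in your circuit prepares the history state. It is the ground state of a Hamiltonian whose gap is exponentially small in the clock length.
\item Without padding, the weight on the final clock value is $2^{-p}$, not $1/\poly$.
\item Marriott--Watrous amplification is about reusing a QMA witness and is irrelevant here. The model also allows no intermediate measurements or postselection.
\end{itemize}
The ``twisted cycle'' return-amplitude idea is likewise unsubstantiated. The return probability to a fixed clock value is exponentially small, and on a cyclic clock a constant fraction of the walk moves backwards through inverse gates.

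The missing idea is that you do not need to arrive at a particular clock time. Pad the computation with idle steps so that the output is stable on all but a $\delta$-fraction of the clock. Then you only need the clock to lie past the end of the computation with probability at least $1-\delta$.

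The paper gets this from the time-averaged completion guarantee of the Bohdanowicz et al.\ universal Hamiltonian, encoded in dual-rail binary via \cref{thm:passive-description}. It implements the time average coherently, using a uniform-superposition time register $M^{-1/2}\sum_j\ket{j}$ and $H_{\mathrm{coh}}=H_B\otimes(R/M)$. Within your own framework, a line clock (not a cycle) would also work. Conjugating by $\sum_s\ket{s}\bra{s}\otimes U_s\cdots U_1$ turns it into a free walk, whose ballistic spreading puts most of the weight past a short padded computation at a single suitable time.

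There is also a second problem: the exponential factor cannot be supplied as evolution time ``$t=2^p$'', since \cref{def:cc-beq} caps total runtime at $\poly(\varsigma)$. The paper obtains the factor from the degree instead. It uses $H_{\mathrm{FF}}=H_{\mathrm{coh}}\otimes\hat n_c^{\,d}$ on a Fock control $\ket{q}$, prepared from the coherent input via \cref{cor:arbitrary states}. Then $\hat n_c^{\,d}\ket{q}=q^d\ket{q}$ with $q^d\ge t'$, so the runtime is $T=t'/q^d\le1$ at polynomial energy.
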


We prove this by exponentially fast forwarding a Universal Hamiltonian simulation introduced in \cite[Theorem 3]{bohdanowicz2017universalhamiltoniansexponentiallylong} of a $\BQP$ circuit for an exponential amount of time using $\poly$ energy and $\poly$ degree, allowing us to show that we can solve $\PSPACE$-hard problems.

\begin{proof}
Let $V$ be a polynomial-size circuit on $n$ qubits, and let $t$ be an evolution time. From \cite[Theorem 3]{bohdanowicz2017universalhamiltoniansexponentiallylong}, there exists a family of translationally invariant Hamiltonians $\{H_m\}$, and $\poly$-size unitaries $D$ and $E$ on $m$ qudits of dimension $14580$, such that
\be \norm{V^t - \left(\II^{\otimes n} \otimes \bra {0^{m-n}}\right) D e^{i H_m t'} E \left(\II^{\otimes n} \otimes \ket{0^{m-n}}\right)} < 1/\poly(n),\ee
where $m = \poly(n, \log t)$ and $t' = \poly(t,n)$, and
\be \hat{H}_m = \frac{1}{m} \sum_{l=1}^{m-1} \hat{h}_{l,l+1}.\ee

However, their construction does not provide a deterministic way to find $D, E$ and $t'$, and rather provides a run and sample construction. We use the time-averaged completion guarantee in the reference, and construct a coherent time register which results in an average over all runtimes, rather than assuming an efficiently computable successful individual evolution time.

We encode each qudit in the dual-rail encoding of its binary representation. For example, the level $\ket{21}$, whose binary representation is $0010101$, is encoded as $\ket{01011001100110}$. The encoded register has $2\lceil\log(14580)\rceil m=\poly(n,\log t)$ modes and exactly $\lceil\log(14580)\rceil m$ photons.

Each two-qudit term $\hat h_{l,l+1}$ induces an energy-preserving Hamiltonian $\hat h_{B_{l,l+1}}$ on the corresponding $4\lceil\log(14580)\rceil=O(1)$ bosonic modes. On the encoded subspace every local occupation number is at most one. Applying \cref{thm:passive-description} on these $4\lceil\log(14580)\rceil$ modes with $N_{\mathrm{max}}=2$ therefore gives an $O(1)$-degree energy-preserving polynomial description of each encoded term. Their sum
\be
    H_B=\frac{1}{m}\sum_{l=1}^{m-1}h_{B_{l,l+1}}
\ee
has a polynomial-size description and implements $H_m$ on the encoded subspace. While the encoding itself is exact, $H_m$ has algebraic irrational coefficients, which need to be approximated to $\poly$ bits of precision for exponentially long applications.

The construction in \cite{bohdanowicz2017universalhamiltoniansexponentiallylong} guarantees high completion probability averaged over $t \in[0,t']$,\cite[Lemma 7]{bohdanowicz2017universalhamiltoniansexponentiallylong} rather than at an explicitly specified individual time. Fix a constant $0<\delta<1/12$, the error in the simulation.

Choose $M=2^r$ and prepare an $r$-qubit dual-rail register in
$M^{-1/2}\sum_{j=0}^{M-1}\ket j$. Define
\begin{subequations}
    \be
    R=\sum_{k=0}^{r-1}2^k\hat n_{k,1},
    \ee
    \be
    H_{\mathrm{coh}}
      =H_B\otimes(R/M),
    \ee
\end{subequations}
where $\hat n_{k,1}$ counts photons in the logical-one rail.
When evolved for a time $t'$, this produces
\[
    \frac1{\sqrt M}\sum_{j=0}^{M-1}
    e^{iH_Bt'j/M}\ket{\psi_0}\otimes\ket j.
\]
Leaving the time register unmeasured therefore implements the discrete time average. If $\|H_B\| \le B$ on the encoding space ($H_B$ may be unbounded on the whole fock space), the discretization error is at most $Bt'/M$. Thus $M\ge Bt'/\delta$ suffices, requiring only $r=\poly(n)$ qubits when $t'=O(\exp(n))$. $H_{\mathrm{coh}}$ remains energy-preserving constant degree and polynomially described.

By the run-time analysis in \cite[Section 5.4, 5.5] {bohdanowicz2017universalhamiltoniansexponentiallylong}, the simulator parameters can be chosen so that evolution for a uniformly sampled time in $[0,t']$ has probability at least $1-\delta$ on completed computations, where $t'\le\poly(t,n)$.
For $t=2^{\poly(n)}$, this gives
\[
    t'\le\poly(2^{\poly(n)},n)=2^{\poly(n)}.
\]
We use this $t'$ as the averaging cutoff in the coherent time-averaging construction. The discrete average then completes with probability at least $1-2\delta$. On completed histories the work register contains the desired output; unfinished histories have total weight at most $2\delta$. Consequently, measuring only the logical-one rail of the output qubit changes the ideal acceptance probability by at most $2\delta$.

Introduce a control mode and fix an integer $q\geq2$. For $d=\poly(n)$ define
\be
H_{\mathrm{FF}}=H_{\mathrm{coh}} \otimes \hat n_c^d.
\ee
This Hamiltonian is energy-preserving, has degree $\poly(n)$, and has a polynomial-size description. Crucially, the control must be the number state $\ket q$, not a coherent state: since $\hat n_c^d\ket q=q^d\ket q$,
\be
    e^{iTH_{\mathrm{FF}}}
    \left(
    \ket{\psi_0}\otimes
    \frac1{\sqrt M}\sum_{j=0}^{M-1}\ket j\otimes\ket q
    \right)
    =
    \frac1{\sqrt M}\sum_{j=0}^{M-1}
    e^{iH_Bt'j/M}\ket{\psi_0}\otimes\ket j\otimes\ket q.
\ee
Choosing $d$ so that $q^d\geq t'$ and setting $T=t'/q^d\leq1$ implements $e^{iH_{\mathrm{coh}}t'}$. For $t=2^{\poly(n)}$, both $m$ and $d$ are polynomial, while the encoded register and control contain only $\poly(n)$ photons.

It remains to reconcile these finite Fock states with the input rule in \cref{scn:model-definition}. Start with vacuum modes and a coherent ancilla of polynomial energy. We can directly use \cref{cor:arbitrary states} to prepare the initial data, time registers, ancillas and the control state mode-by-mode to inverse-polynomial trace distance. Thus, the simulation circuit is an instance of $\BEQP{\NLP}{}$. The preparation circuit, time register, fast-forwarded evolution, together form a polynomial-size circuit of energy-preserving gates.

Finally, choose $V$ to be the quantum circuit implementing an exponentially long, polynomial quantum space computation and measure the output qudit. The state-preparation error changes this acceptance probability by at most inverse polynomial, so the usual constant completeness--soundness gap is preserved. Since every $\PSPACE$ computation has such an exponential-time, polynomial-space circuit, the coherent-input simulation problem is $\PSPACE$-hard.
\end{proof}

\begin{corollary}
\label{cor:no-compile}
Non-local unitaries, even if generated by $k$-local $\poly$-degree energy-preserving Hamiltonians, cannot be compiled into unitaries generated by local $\poly$-degree polynomials to constant additive precision efficiently, even when restricted to $\poly$ energy states, unless $\BQP = \PSPACE$.
\end{corollary}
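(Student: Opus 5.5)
The plan is a proof by contradiction, combining \cref{thm:pspace-hardness} with the $\BQP$ upper bound of \cref{thm:in_BQP}. Suppose there were an efficient compilation procedure. Concretely, every non-local unitary $e^{-itH}$, with $H\in\NLP$ a sum of $k$-local $\poly$-degree energy-preserving terms, would admit a $\poly$-size circuit of unitaries generated by local (constant-mode) $\poly$-degree polynomials, and this circuit would be uniformly generated in polynomial time. We also require that it approximate $e^{-itH}$ to constant additive precision $\delta$ on all states of polynomial energy, say in the energy-constrained diamond norm $\|\cdot\|_{\diamond,E}$ or on the cutoff space $\calH_{\le K}$ with $K=\poly$. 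The goal is then to show $\PSPACE\subseteq\BQP$.

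\textbf{Step 1: apply the compiler.} Take any $\PSPACE$ language and the $\BEQP{\NLP}{}$ circuit from the proof of \cref{thm:pspace-hardness}. That circuit has three parts: local preparation gates from \cref{cor:arbitrary states}, a single non-local fast-forwarding gate $e^{iTH_{\mathrm{FF}}}$, and a final photon-count measurement. It has a constant completeness--soundness gap, which we first amplify to, say, $2/3$ versus $1/3$ with room to spare. Replace the non-local gate by its compiled local circuit at precision $\delta$ below a quarter of the gap. Energy preservation of the compiled gates is not needed.

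\textbf{Step 2: control the error.} The state fed into the non-local gate is prepared from a coherent state of polynomial energy by energy-preserving gates, so its energy stays polynomial. Constant-precision compilation on polynomial-energy states therefore changes the acceptance probability by at most $O(\delta)$; this is the step where \cref{lem:cutoff-to-ec-diamond} is used if the guarantee is stated in cutoff norm. The gap is preserved.

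\textbf{Step 3: simulate in $\BQP$.} The resulting circuit is polynomial size, and all of its gates are local with $\poly$ degree. By \cref{thm:in_BQP} together with \cref{lem:assumption-satisfied}, whose runtime bound explicitly allows $d=\poly$, such circuits are simulable in $\BQP$, provided the energy stays polynomial throughout the computation. This would give $\PSPACE\subseteq\BQP$, contradicting the hypothesis $\BQP\neq\PSPACE$.

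The main obstacle is the energy promise in Step 3. The compiled local gates need not be energy-preserving, so the polynomial-energy promise required by \cite[Theorem 5.4]{CGMMNRS25} is not automatic. I would resolve this by building it into the hypothesis being refuted: "efficient compilation restricted to $\poly$ energy states" means the compiled circuit keeps intermediate energies polynomial. If the compiled gates happen to be energy-preserving, the promise holds automatically. Failing that, I would truncate at a polynomial cutoff and pay a Markov-type error using \cref{lem:cutoff-to-ec-diamond}, applied to each intermediate state.
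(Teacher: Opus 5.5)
Your argument is the paper's argument. You assume that $e^{iTH_{\mathrm{FF}}}$ from the proof of \cref{thm:pspace-hardness} compiles efficiently into local $\poly$-degree gates. The $\PSPACE$-hard circuit then becomes a polynomial-size circuit of local gates acting on polynomial-energy states, which \cref{lem:assumption-satisfied} and \cref{thm:in_BQP} place in $\BQP$. Your bookkeeping of the gap and the precision via \cref{lem:cutoff-to-ec-diamond} is more careful than the paper's one-line proof, which does not do it.

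The one caveat is your fallback for compiled gates that are not energy-preserving. There the obstruction is not only the energy promise, which you correctly build into the hypothesis. It is also Fock-entry computability. \cref{lem:assumption-satisfied} depends on the block-diagonal structure of energy-preserving gates. Truncating and paying a Markov error does not make the cutoff entries $\Pi_{\le K}e^{iHt}\Pi_{\le K}$ of an active $\poly$-degree gate computable; for such $H$ the unitary need not even be well defined. So that branch needs the additional Fock-entry computability hypothesis of \cite[Theorem 5.4]{CGMMNRS25}. The paper avoids the issue by tacitly taking the compiled gates to be energy-preserving, which is exactly your main case.
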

\begin{proof}
Let us assume that there is an efficient decomposition of $H_{\mathrm{FF}}$ to local $\poly$ degree gates. Therefore, we have local $\poly$ degree energy-preserving gates being applied to $\poly(n)$ energy states, allowing us to use \cref{lem:assumption-satisfied} to simulate it in $\BQP$. Therefore, unless $\BQP \supseteq \PSPACE$, an efficient decomposition of arbitrary $\poly$ degree gates to $\poly$ degree local gates is not possible, even when restricted to $\poly(n)$ energy states.
\end{proof}

\begin{remark}
    Note that we can define $\hat{H}_{\mathrm{FF}} = \hat{H}_{B} \otimes \hat{n}$, and operate it on $\ket{\text{encoded data}}\ket{E^d}$, and still get fast-forwarding.
    Therefore, a natural extension to \cref{thm:pspace-hardness} is that simulating constant degree non-local energy-preserving operations on $\poly$ modes and exponential energy to constant additive precision is also $\PSPACE$-hard. A similar no-go can be shown for exponential energy states, if we can upper bound unitaries generated by simulating local constant-degree Hamiltonians on exponential-energy states to a class smaller than $\PSPACE$. Such a simulation remains open as of yet (\cref{sec:open-questions}).
    However, this does not neatly fall into $\BEQP{\mathsf{NLC}}{\exp}$ ($\mathsf{NLC}$ means non local constant degree) since we do not have an efficient bound on the preparation circuit for $\ket{q}$ where $q$ is an exponentially large number, as \cref{thm:cutoff-SK} is only efficient for $K = \poly$ cutoffs.
\end{remark}

\begin{theorem}\label{thm:inpspace}
    The coherent-input simulation problem in \cref{claim:pspace-complete} is in $\PSPACE$.
\end{theorem}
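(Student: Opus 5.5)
The plan is to show that a polynomial-space classical machine can estimate the acceptance probability of a $\BEQP{\NLP}{}$ circuit to inverse-exponential additive precision. We use the fact that every gate is energy-preserving, so the computation splits into independent finite-dimensional photon-number sectors.

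First we truncate the input. The initial state is $\ket{0}^{\otimes m}\otimes\ket{\alpha}$ with $|\alpha|^2\le\poly(\varsigma)$. Its photon number is Poisson distributed, so choosing $K=O(|\alpha|^2+\log(1/\epsilon))$ makes the weight above $K$ at most $\epsilon$, by the exponential Markov bound used in \cref{prop:prepare-Fock-states}. Since the whole circuit commutes with $\hN$, we can write
\be
\Pr[\hat n_1\in[a,b]]=\sum_{N\ge0}p_N\,\Pr_N[\hat n_1\in[a,b]],
\ee
where $p_N=e^{-|\alpha|^2}|\alpha|^{2N}/N!$. In sector $N$ the input is exactly the Fock state $\ket{0^m,N}$. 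We therefore only need to compute the sector-$N$ output probabilities for $N\le K=\poly(\varsigma)$, to $2^{-\poly}$ accuracy.

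Second, we simulate each sector in polynomial space. The sector $\calH^{(m+1)}_N$ has dimension $\binom{m+N}{N}$, which is exponential but indexed by polynomial-size occupation vectors. A vector in it therefore cannot be stored; instead we compute individual amplitudes recursively, in the style of $\BQP\subseteq\PSPACE$. Each gate is $e^{-itH}$, where $H$ has polynomial degree and polynomially many monomials. $H$ is sparse in the Fock basis: each monomial maps $\ket{\bm n}$ to at most one basis state, with a coefficient computable in polynomial time. On sector $N$ we have $\|H\|\le C_{\max}\cdot\#\text{monomials}\cdot N^{\deg/2}$ by \cref{lem:cutoff-monomial-bound}, which is at most $2^{\poly}$. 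We write
\be
e^{-itH}=\bigl(e^{-itH/r}\bigr)^r,
\ee
with $r=2^{\poly}$ chosen so that $\|tH/r\|\le1/2$, and approximate $e^{-itH/r}$ by a Taylor polynomial of degree $\poly$ with error $2^{-\poly}$. The recursion over path sums then has three layers: the $L$ gates, the $r=2^{\poly}$ repeated factors per gate, and the Taylor terms, each a sparse product of monomials. Evaluating a single matrix entry of a product of $2^{\poly}$ sparse operators can be done by a divide-and-conquer summation over intermediate basis states, as in Savitch's method. The recursion depth is $\log(r)\cdot L=\poly$, and each level stores one basis state and a $\poly$-bit number, so the space is polynomial. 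The output probability is $\sum_{\bm n: n_1\in[a,b]}|\langle\bm n|C|0^m,N\rangle|^2$, which is again a sum over basis states enumerable in polynomial space.

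The main obstacle is precision control across the exponentially many repeated squarings. Errors grow additively, at most $2^{\poly}\cdot 2^{-p}$ for unitary-norm-bounded factors, since each approximate factor has norm at most $1+2^{-p}$. Taking $p$ polynomially larger than $\log r$ therefore keeps the accumulated error at $2^{-\poly}$. A second concern is that the sums over intermediate basis states have exponentially many terms, each bounded, so the required bit growth must also be accounted for. This again gives only polynomially many bits, because the magnitude of every partial product is bounded by $\prod\|\cdot\|\le2^{\poly}$. Finally, PSPACE decides the promise by comparing the computed estimate to $1/2$. This is valid because the completeness–soundness gap is constant, and $\PrecBEQP$-style inverse-exponential gaps would also be handled. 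This establishes containment, and combined with \cref{thm:pspace-hardness} yields \cref{claim:pspace-complete}.
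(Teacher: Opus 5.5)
Your proposal is correct and follows essentially the paper's route: cut off the coherent input via its photon-number tail, use that energy-preserving gates are block diagonal, and evaluate scaling-and-squaring Taylor approximations of each $e^{-itH}$, the product over the $L$ gates, and the final sum over output occupation vectors by depth-first recursive sums over $\poly$-bit-indexed basis states with $\poly$-bit working precision. Your small variations---writing the acceptance probability as a Poisson mixture of sector probabilities with the single Fock input $\ket{0^m,N}$ (valid because both the circuit and the readout projector commute with $\hN$), and using the exponential-moment cutoff $K=O(E+\log(1/\epsilon))$ instead of plain Markov---only simplify the input bookkeeping and give inverse-exponential precision directly, which is how the paper later extends the same argument to $\PrecBEQP\subseteq\PSPACE$.
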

\begin{proof}
We give an explicit classical polynomial-space simulation. Consider first one gate $U=e^{iHt}$, where $H$ is specified by a polynomial-size monomial list with polynomial-bit coefficients and $t$ has a polynomial-bit description. The input is exactly that of \cref{scn:model-definition},
\be
    \ket{\psi}=\ket{\bm0}\otimes\ket{\bm\alpha},
    \qquad E=\|\bm\alpha\|^2=\poly(\varsigma),
\ee
where the real and imaginary parts of every $\alpha_j$ have polynomial-bit descriptions. Its Fock amplitude is
\be
    \langle\bm y|\psi\rangle
    =e^{-E/2}\prod_j\frac{\alpha_j^{y_j}}{\sqrt{y_j!}},
\ee
and is computable to polynomially many bits in polynomial space.

Fix a constant target error $\varepsilon>0$, and let $\Pi_{\leq K}$ be the projector onto the subspace of total photon number at most
\be
    K=\left\lceil \frac{E}{\delta^2}\right\rceil,
\ee
where $\delta>0$ is a sufficiently small constant chosen below. By Markov's inequality,
\be
    \langle\psi|(\II-\Pi_{\leq K})|\psi\rangle\leq \frac{E}{K+1}\leq \delta^2.
\ee
Consequently, the normalized truncated state $\ket{\psi_K}$ differs from $\ket{\psi}$ by $O(\delta)$ in trace distance. Since $H$ is energy-preserving, $[H,\Pi_{\leq K}]=0$, so this same bound holds after applying $U$. Thus it suffices to compute the desired output probabilities for $U\ket{\psi_K}$ to additive error $\varepsilon-O(\delta)$.

The truncated space $\mathcal H_{\le K}^{(m)}$ has dimension
\be
    D=\binom{m+K}{K}.
\ee
Although $D$ can be exponential, its basis states are occupation vectors $\bm r$ with $|\bm r|\leq K$, and hence can be indexed using $O(\log D)=\poly(\varsigma)$ bits. Number preservation makes $U|_{\mathcal H_{\le K}^{(m)}}$ block diagonal, with its $r$-photon block equal to $e^{iH_rt}$.

We next show that an entry of this truncated unitary is computable in polynomial space. Given two occupation vectors $\bm k,\bm\ell$, the corresponding entry is zero unless $|\bm k|=|\bm\ell|$. Otherwise, the matrix entries of $H_r$ are computed by iterating through the polynomial-size monomial description of $H$ and applying the ladder-operator formula. By \cref{lem:mat-exp-in-photon-block}, a scaling-and-squaring Taylor expansion computes entries of $e^{iH_rt}$ to $p$ bits using working precision polynomial in $p$, $\log D$, and the input length.

For completeness, this calculation need not store the exponentially large matrix $H_r$ or any of its powers. Evaluate every matrix product appearing in the Taylor expansion and the subsequent squaring recursively, using
\be\label{eq:recursion1}
    (AB)_{\bm k,\bm\ell}
    =\sum_{|\bm q|=r} A_{\bm k,\bm q}B_{\bm q,\bm\ell}.
\ee
A depth-first evaluation keeps only the two output indices, the current summation index, a polynomial-precision accumulator, and a recursion stack. The stack depth is the Taylor degree plus the number of squaring steps, both polynomial in the input length and $p$; each index uses $O(\log D)$ bits. Hence each entry of $U|_{\mathcal H_{\le K}^{(m)}}$ is computable to $p$ bits in polynomial space (and, possibly, exponential time). For a polynomial-size circuit $U_L\cdots U_1$, recursively use
\be\label{eq:recursion2}
    \langle\bm x|U_L\cdots U_1|\bm y\rangle
    =\sum_{|\bm z|\leq K}
      \langle\bm x|U_L|\bm z\rangle
      \langle\bm z|U_{L-1}\cdots U_1|\bm y\rangle.
\ee
This adds only polynomial recursion depth and polynomial-size indices, so the same polynomial-space bound holds for the full circuit.

It remains to compute a measurement probability. For each occupation vector $\bm x$ with $|\bm x|\leq K$, calculate
\be\label{eq:recursion3}
    a_{\bm x}=\sum_{|\bm y|\leq K}
    \langle\bm x|U|\bm y\rangle\langle\bm y|\psi_K\rangle,
\ee
by enumerating $\bm y$ and maintaining one polynomial-precision accumulator. The normalizing factor $\|\Pi_{\leq K}\psi\|^{-1}$ is computed in the same way by enumerating the truncated coherent-state probabilities. To obtain the probability of any fixed outcome on a constant set of modes, enumerate the occupation vectors $\bm x$ consistent with that outcome and accumulate $|a_{\bm x}|^2$. All enumeration counters use $O(\log D)$ bits. Choosing the gate-entry and input-amplitude precision to be $2^{-p}=O(\varepsilon/(L D^{L+4}))$ for a circuit of $L=\poly(\varsigma)$ gates makes the total numerical error from the nested sums at most $\varepsilon/2$; although the running time may be exponential, $p=O(L\log D+\log(L/\varepsilon))=\poly(\varsigma)$, so the space remains polynomial. Combining this with the $O(\delta)$ truncation error, and choosing $\delta$ sufficiently small, gives the required constant additive precision. Repeating the same procedure for the polynomially many outcomes on a constant number of modes yields their entire photon-number distribution.
\end{proof}

\noindent This completes the proof of \cref{claim:pspace-complete}.

\subsection{Bounding the complexity of energy-preserving computations}
\label{sec:bound_complexity}

\begin{table}[t]
    \centering
    \begin{tabular}{lllll}
        \toprule
        Precision & Energy     & Space      & Time Complexity          \\
        \midrule
        $O(1)$    & $\poly(\varsigma)$ & $O(1)$     & $\in\Pp$                 \\
        $O(1)$    & $O(1)$     & $\poly(\varsigma)$ & $\in\Pp$                 \\
        $O(1)$    & $\log(\varsigma)$  & $\log(\varsigma)$  & $\in\Pp$                 \\
        $O(1)$    & $\log(\varsigma)$  & $\poly(\varsigma)$ & $\in\QPp\cap\mathsf{BQP}$\\
        $O(1)$    & $\poly(\varsigma)$ & $\log(\varsigma)$  & $\in\QPp\cap\mathsf{BQP}$\\
        $O(1)$    & $\poly(\varsigma)$ & $\poly(\varsigma)$ & $\in\mathsf{BQP}$        \\
        \midrule
        $1/\exp(\varsigma)$ & $\poly(\varsigma)$ & $O(1)$     & $\in\Pp$               \\
        $1/\exp(\varsigma)$ & $\poly(\varsigma)$ & $\poly(\varsigma)$ & $\in\mathsf{PSPACE}$   \\
        \bottomrule
    \end{tabular}
    \caption{Computational trade-offs obtained in \cref{thm:classical_upper_bounds_precise} between required precision (additive error on output probabilities), energy (total average particle number), space (number of modes), and time complexity, when simulating \model, as a function of input size $\varsigma$. $\QP$ denotes quasi-polynomial time, i.e.\ $2^{\mathrm{polylog}(\varsigma)}$ time.}%
    \label{tab:tradeoffs}
\end{table}

In this section, we provide upper bounds on the computational complexity of simulating energy-preserving bosonic quantum computations (estimating output probabilities), depending on the number of modes, available energy, and required additive precision. The results illustrating the corresponding computational trade-offs are summarized in \cref{tab:tradeoffs}. Any computational complexity upper bound obtained is also valid for a lower precision, lower energy, or lower space.

Recall that the notation $\BEQP{}{E}$ on $f$ modes (\cref{def:cc-beq}) denotes the set of promise problems which can be solved in polynomial time by \model\ with input energy $E$ over $f$ modes using a universal gate set (local, constant-degree), with a constant promise gap, whereas $\PrecBEQP$ additionally assumes an inverse-exponential promise gap. If $E$ is not specified, it is assumed to be polynomial. %
We now state complexity upper bounds for \model, as energy, space and precision parameters are varied:

\begin{theorem}\label{thm:classical_upper_bounds_precise}
We have the following upper bounds on energy-preserving bosonic quantum computations:
\begin{itemize}
    \item $\BEQP{}{}$ with $O(1)$ modes is in $\Pp$.

    \item $\BEQP{}{\log}$ with $O(\log)$ modes is in $\Pp$.

    \item $\BEQP{}{O(1)}\subseteq\Pp$.

    \item $\BEQP{}{\log}\subseteq\QPp\cap\BQP$.

    \item $\BEQP{}{}=\BQP$.

    \item $\BEQP{}{}$ with $O(\log)$ modes is in $\QPp\cap\BQP$.

    \item $\PrecBEQP$ on $O(1)$ modes is in $\Pp$.

    \item $\PrecBEQP\subseteq\PSPACE$.
\end{itemize}
\end{theorem}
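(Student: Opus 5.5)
The common engine for all eight items is a truncation-plus-exact-simulation scheme. Given an input of energy $E$, the initial state is $\ket{\bm 0}\ket{\alpha}$. Every gate commutes with $\hN$, so by Markov's inequality (as in the proof of \cref{thm:inpspace}) the weight outside $\Pi_{\le K}$ is at most $E/(K+1)$ at every step of the circuit, since the gates preserve $\Pi_{\le K}$.

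For constant precision I would take $K=\lceil E/\delta^2\rceil$. For inverse-exponential precision I would instead use the Poisson tail of the coherent state: $\Pr[N>K]\le 2^{-K}e^{E}$, as in \cref{prop:prepare-Fock-states}. Then $K=O(E+\varsigma)$, still polynomial, gives error $2^{-\Omega(\varsigma)}$.

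The simulation is carried out on $\calH^{(m)}_{\le K}$, whose dimension is $D=\binom{m+K}{K}$. Each gate's restriction to this space is computed entrywise via \cref{lem:assumption-satisfied}, with precision $p$ chosen polynomial in $\varsigma$. Output probabilities are then obtained by explicit matrix–vector multiplication, in time $\poly(D,L,p)$, or in polynomial space via the recursive sums of \cref{thm:inpspace}.

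The complexity of each item then follows from bounding $D\le (m+K)^{\min(m,K)}$:
\begin{itemize}
\item $m=O(1)$ with $K=\poly$: $D=\poly$, so the problem is in $\Pp$. With $K=\poly$ from the exponential-tail cutoff, this also gives $\PrecBEQP$ on $O(1)$ modes in $\Pp$; here I would check that the numerical error from $p=\poly$ bits accumulates only to $\poly(D)L2^{-p}$.
\item $E=O(1)$: $K=O(1)$, so $D=O(m^{K})=\poly$, giving $\Pp$.
\item $E=O(\log)$ and $m=O(\log)$: $D\le\binom{O(\log\varsigma)}{O(\log\varsigma)}\le 2^{O(\log\varsigma)}=\poly$, giving $\Pp$.
\item $E=\log$ with $m=\poly$, or $E=\poly$ with $m=O(\log)$: $D\le \varsigma^{O(\log\varsigma)}$, quasi-polynomial, giving $\QPp$.
\end{itemize}
Intersecting with $\BQP$ uses \cref{thm:in_BQP}, since lower energy or fewer modes is a special case. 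The equality $\BEQP{}{}=\BQP$ is just \cref{thm:bqp-universal} together with \cref{thm:in_BQP}. Finally, $\PrecBEQP\subseteq\PSPACE$ follows by rerunning the polynomial-space recursion of \cref{thm:inpspace} with $K=O(E+\varsigma)$ and $p=O(L\log D+\varsigma)$; this keeps every index and accumulator at polynomial size while pushing truncation and rounding errors below the exponentially small gap.

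The main obstacle is the inverse-exponential-precision items. There I must check that the entry computation of \cref{lem:assumption-satisfied} and \cref{lem:mat-exp-in-photon-block} indeed delivers $2^{-\poly}$ accuracy with polynomially many bits when block dimensions are exponential (in the $\PSPACE$ case). I must also check that the completeness–soundness thresholds, computable to $p$ bits by assumption, can be compared against the computed probability. I would handle the first by noting that the forward error of the scaling-and-squaring bound depends only logarithmically on $D$, and the second by computing both quantities to precision finer than a quarter of the gap.
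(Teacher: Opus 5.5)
Your proposal is correct and follows essentially the same route as the paper: Markov (gentle-measurement) truncation at $K\approx E/\delta^2$ for constant precision, the $s=2$ exponential-moment (Poisson) tail giving $K=O(E+\log(1/\varepsilon))$ for $\PrecBEQP$, explicit simulation on $\calH^{(m)}_{\le K}$ with $D=\binom{m+K}{K}$ bounded case by case, and the polynomial-space recursion of \cref{thm:inpspace}. There is one small slip: $K=O(E+\varsigma)$ and $p=O(L\log D+\varsigma)$ only reach error $2^{-\Omega(\varsigma)}$, which can exceed a gap of $2^{-\poly(\varsigma)}$, so replace $\varsigma$ by $\log(1/(c(\varsigma)-s(\varsigma)))$ (still polynomial); separately, your worry about exponentially large blocks does not arise, since $\PrecBEQP$ gates are local and each gate's photon-number blocks have dimension $\poly(K)$.
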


\begin{proof} Let $\varsigma$ denote the input size and $m$ the number of modes.
The gentle measurement lemma for the number operator shows that it is enough to cut off the Fock state description of the input state at $K=E/\varepsilon^2$ total photon number to retain $\varepsilon$ additive precision in trace distance. This amounts to a total of $D=\binom{m+K}K$ Fock coefficients indexed by $\bm n$ such that $|\bm n|\le K$. Then, since all gates in the computation are local, the Fock amplitudes of the output state up to photon-number $K$ can be obtained in time $\poly(\varsigma,K)\binom{m+K}K$, which is also sufficient to guarantee an $\varepsilon$ precision in trace distance since the gates are energy-preserving.

The running time of this classical algorithm thus is $\poly(\varsigma,E)\binom{m+144E}{144E}$ for constant precision smaller than half the constant promise gap, which together with \cref{thm:in_BQP} proves all the $\BEQP{}{}$-related results stated in the theorem.

For higher precision ($\PrecBEQP$), we can refine these results by using the fact that Fock basis amplitudes of states in energy-preserving computations are decaying exponentially fast \cite{upreti2026exponentially}. This allows us to utilize the gentle measurement lemma for an operator $f(\hat N)$, where $f$ is a rapidly increasing non-negative function.

Let $\ket\psi=\sum_{\bm n}\psi_{\bm n}\ket{\bm n}$ and assume $\bra{\psi}f(\hat N)\ket{\psi}=f_\psi$. Then, for all $K\ge0$,
\be
f(K)\sum_{|\bm n|>K}|\psi_{\bm n}|^2\le\sum_{|\bm n|>K}f(|\bm n|)|\psi_{\bm n}|^2\le f_\psi,
\ee
so, whenever $f(K)>0$,
\be
\left\|\Pi_{\le K}\ket\psi-\ket\psi\right\|_2^2\le\frac{f_\psi}{f(K)}.
\ee
For coherent states, with the function $f_s:x\mapsto s^x$, for $s>1$, we have $\bra{\alpha}f_s(\hat n)\ket{\alpha}=e^{(s-1)|\alpha|^2}$ and thus $\bra{\bm\alpha}f_s(\hat N)\ket{\bm\alpha}=\bra{\bm\alpha}\hat U^\dag f_s(\hat N)\hat U\ket{\bm\alpha}=e^{(s-1)\|\bm\alpha\|^2}$, where $\hat U$ is any energy-preserving unitary. Hence,
\be
\label{eq:upper_coherent}
\left\|\Pi_{\le K}\hat U\ket{\bm\alpha}-\hat U\ket{\bm\alpha}\right\|_2^2\le\frac{e^{(s-1)\|\bm\alpha\|^2}}{s^K}.
\ee
Writing $E=\|\bm\alpha\|^2$, we obtain a precision $\varepsilon$ for
\be
\label{eq:upper_number}
K=\left\lceil\frac{2\log(\frac1\varepsilon)}{\log s}+\frac{(s-1)E}{\log s}\right\rceil,
\ee
for any $s>1$.

We thus obtain a classical algorithm to compute the Fock amplitudes of the truncated output state up to photon-number $K$ in time $\poly(\varsigma,K)\binom{m+K}K$, guaranteeing an $\varepsilon$ precision in trace distance. This shows that $\PrecBEQP$ on $O(1)$ modes is contained in $\Pp$, as stated in the theorem, but only gives containment of $\PrecBEQP$ in $\EXP$ in general.

To improve this last upper bound to $\PSPACE$, we use the polynomial-space simulation of \cref{thm:inpspace}. With $\poly(\varsigma)$ modes and gates, we choose an inverse-exponential precision with $\varepsilon\leq(c(\varsigma)-s(\varsigma))/8$, where $c(\varsigma)-s(\varsigma)$ denotes the gap in $\PrecBEQP$. Taking $s=2$ in \cref{eq:upper_number} gives a cutoff $K=O(E+\log(1/\varepsilon))=\poly(\varsigma)$. The truncated space has dimension $D=\binom{m+K}{K}$, so $\log D=\poly(\varsigma)$.

We now apply the recursion in \cref{eq:recursion1,eq:recursion2,eq:recursion3} at cutoff $K$. As in the proof of \cref{thm:inpspace}, choosing $p=O(L\log D+\log(L/\varepsilon))=\poly(\varsigma)$ bits suffices to compute the acceptance probability to numerical error at most $\varepsilon$, by summing over the output occupation vectors whose first component belongs to $[a,b]$. Including truncation, the total error is at most $2\varepsilon\leq(c(\varsigma)-s(\varsigma))/4$. Comparing the resulting probability with $(c(\varsigma)+s(\varsigma))/2$ therefore decides the promise problem in polynomial space, proving $\PrecBEQP\subseteq\PSPACE$.
\end{proof}

\clearpage
\printbibliography

\appendix
\crefalias{section}{appendix}

\section{Examples of energy-preserving simulation of active gates}
\label{app:examples}

Here we produce parametric approximations of some well-known active gates for illustrative purposes. We use the notations of \cref{sec:parametric-approx} and pick $\alpha\ge1$.

\subsection{Displacement}

For displacement, the active Hamiltonian is given by

\be H = i(\a - (\ad)) \otimes\II_b.\ee
The corresponding energy-preserving Hamiltonian is
\be \wtH =\frac{i}{\alpha} (\a\bd - \ad \b),\ee
and we denote the input state by
\be \ket{\Psi} = \ket{\psi}\ket{\alpha}.\ee

We have
\begin{subequations}
\begin{align}
    \norm{(H - \wtH) \ket{\Psi}} &= \norm{\a \ket{\Psi} - \ad \ket{\Psi} - \frac{\a\bd}{\alpha}\ket{\psi, \alpha} + \frac{\ad\b}{\alpha} \ket{\psi, \alpha}}\\
    &= \norm{\a \ket{\Psi} - \ad \ket{\Psi} - \frac{\a\bd}{\alpha}\ket{\psi, \alpha} + \frac{\ad\alpha}{\alpha} \ket{\psi, \alpha}}\\
    &= \norm{\a \ket{\psi,\alpha} - \frac{\a\bd}{\alpha}\ket{\psi, \alpha}}\\
    &= \norm{\a \ket{\psi}} \norm{\left(1 - \frac{\bd}{\alpha}\right)\ket{\alpha}}.\\
\end{align}
\end{subequations}
Now, rewriting $\ket{\alpha} = D(\alpha)\ket{0}$ and using $D^\dagger(\alpha)\b D(\alpha) = \b+\alpha$, we obtain

\begin{subequations}
\begin{align}
    \norm{(H - \wtH) \ket{\Psi}} &= \norm{\a \ket{\psi}} \norm{\left(1 - \frac{(\bd + \alpha)}{\alpha}\right)\ket{0}}\\
    &= O\left(\frac{\norm{\a \ket{\psi}}}\alpha\right).
\end{align}
\end{subequations}

\subsection{Squeezing}

For squeezing, the active Hamiltonian is given by

\be H = i(\a^2 - (\ad)^2) \otimes\II_b.\ee
The corresponding energy-preserving Hamiltonian is
\be \wtH =\frac{i}{\alpha^2} (\a^2(\bd)^2 - (\ad)^2\b^2),\ee
and we denote again the input state by
\be \ket{\Psi} = \ket{\psi}\ket{\alpha}.\ee

We have
\begin{subequations}
\begin{align}
    \norm{(H - \wtH) \ket{\Psi}} &= \norm{\a^2 \ket{\Psi} - (\ad)^2 \ket{\Psi} - \frac{\a^2(\bd)^2}{\alpha^2}\ket{\psi, \alpha} + \frac{(\ad)^2\b^2}{\alpha^2}\ket{\psi,\alpha}}\\
    &= \norm{\a^2 \ket{\Psi} - (\ad)^2 \ket{\Psi} - \frac{\a^2(\bd)^2}{\alpha^2}\ket{\psi, \alpha} + \frac{(\ad)^2\alpha^2}{\alpha^2} \ket{\psi, \alpha}}\\
    &= \norm{\a^2 \ket{\psi,\alpha} - \frac{\a^2(\bd)^2}{\alpha^2}\ket{\psi, \alpha}}\\
    &= \norm{\a^2 \ket{\psi}} \norm{\left(1 - \frac{(\bd)^2}{\alpha^2}\right)\ket{\alpha}}.\\
\end{align}
\end{subequations}

Now, rewriting $\ket{\alpha} = D(\alpha)\ket{0}$ and using $D^\dagger(\alpha)\b D(\alpha) = \b+\alpha$, we obtain

\begin{subequations}
\begin{align}
    \norm{(H - \wtH) \ket{\Psi}} &= \norm{\a^2 \ket{\psi}} \norm{\left(1 - \frac{(\bd + \alpha)^2}{\alpha^2}\right)\ket{0}}\\
    &= \norm{\a^2 \ket{\psi}} \norm{\left(\frac{(\bd)^2}{\alpha^2} +2\frac{\bd}{\alpha} \right)\ket{0}}\\
    &= \norm{\a^2 \ket{\psi}} \norm{\frac{\sqrt{2!}}{\alpha^2} \ket{2} +2\frac{\sqrt{1!}}{\alpha}\ket{1}}\\
    &= \norm{\a^2 \ket{\psi}} \sqrt{\frac{2!}{\alpha^4} + \frac{2^2\cdot 1!}{\alpha^2}}\\
    &= O\left(\frac{\norm{(1+\hat n)\ket{\psi}}}{\alpha}\right).
\end{align}
\end{subequations}

\subsection{Arbitrary single term}

For an arbitrary monomial, the active Hamiltonian is given by
\be H = ((\ad)^k\a^l + h.c.) \otimes\II_b ,\qquad k>l.\ee
The corresponding energy-preserving Hamiltonian is
\be \wtH = (\ad)^k\a^l\otimes\left(\frac{\b}{\alpha}\right)^{k-l} + (\ad)^l\a^k\otimes\left(\frac{\bd}{\alpha}\right)^{k-l} ,\qquad k>l,\ee
and we denote again the input state by
\be \ket{\Psi} = \ket{\psi}\ket{\alpha}.\ee

The same analysis follows: the first term cancels since $\b\ket\alpha=\alpha\ket\alpha$, and for the second we use $D^\dagger(\alpha)\bd D(\alpha) = \bd+\alpha$, which gives
\be
\norm{(H - \wtH) \ket{\Psi}} = \norm{(\ad)^l\a^k\ket\psi}\sqrt{\sum_{i=1}^{k-l}\binom{k-l}{i}^2 \frac{i!}{\alpha^{2i}}} = O\left(\frac{\norm{(1+\hat N)^{d/2}\ket\psi}}{\alpha}\right),
\ee
where $d=k+l$.

\section{Finite gate sets}
\label{app:finite-alphabet}

Our model allows tunable evolution times
(\cref{def:model,def:cc-beq}), so the gate families in
\cref{eq:GKerr} are admissible as stated.
Nevertheless, we show here that a finite subset of $\GKerr$ or $\GcrossKerr$ suffices, independent of cutoff and precision.
Let $\varphi=(\sqrt5-1)/2$ and, for each generator $H\in\{\hat n_1,\hat n_1^2, \hat n_1\hat n_2,-i(\hat a_1^\dagger\hat a_2-\hat a_2^\dagger\hat a_1)\}$, keep only $e^{2\pi i\varphi H}$ and its inverse.
These generators have integer spectra, so their rotations are $2\pi$-periodic.
There is a constant $C>0$ such that, for every $0<\eta<1$ and
$\theta\in\mathbb R$, there exist integers
$0\le j\le\lceil C/\eta\rceil$ and $\ell\in\mathbb Z$ satisfying
\begin{equation}
    |2\pi j\varphi-\theta-2\pi\ell|\le\eta.
\end{equation}
This follows from \cite[Theorem~3.1]{badziahin2023badly}, since
$\varphi=[0;1,1,\ldots]$ has bounded partial quotients.
Since the generators have cutoff norm $O(K^2)$, any tunable elementary gate
can therefore be approximated to cutoff error $\delta$ using
$O(K^2/\delta)$ fixed gates.
Applying this replacement only in the coarse compiler (\cref{lem:coarse-compiler}), before
Solovay--Kitaev refinement, retains the $\mathrm{polylog}(1/\epsilon)$ dependence on precision.

\section{Heisenberg picture with bosonic gates}\label{app:heisenberg}

Prior work (e.g.\ \cite{kalajdzievski2021exact,CJMMM26,CGMMNRS25}) relies on the following formula
\begin{equation}
e^{X} Y e^{-X} = \sum_{n=0}^\infty\frac{\adj^n_X(Y)}{n!}, \qquad \adj_X(Y) \coloneq [X,Y]
\end{equation}
to work with bosonic gates in the Heisenberg picture, which is a special case of the Baker--Campbell--Hausdorff formula.
Unfortunately, it does not apply as is to unbounded operators due to domain and convergence issues, and the cited works do not explicitly justify its use for the non-Gaussian gates considered.
In the following, we will give a justification.
With that, the energy bounds for Gaussian + $\Q^3$ circuits in \cite{CJMMM26,CGMMNRS25} become rigorous. This also justifies on Schwartz space the corresponding Heisenberg calculations in the \emph{exact}\footnote{The approximate decompositions in \cite[Section 3]{kalajdzievski2021exact} have the same underlying issues as \cite{lloyd_quantum_1999} we discuss in \cref{sec:pathology}.} synthesis part of \cite{kalajdzievski2021exact}.

\begin{theorem}\label{thm:Heisenberg}
Let $A = C(\Q_1,\P_1,\dots,\Q_m,\P_m)$ for $C\in\CC[x^{2m}]$, and (i) $H=\Q_j^2+\P_j^2$ or (ii) $H=R(\Q_1,\dots,\Q_m)$ for $R\in\RR[x^m]$,\footnote{These include the gates used in the exact synthesis results of \cite{kalajdzievski2021exact}.} such that
\begin{equation}\label{eq:ad-span}
    \dim V_{H,A} <\infty, \qquad V_{H,A}\coloneq \Span\bigl\{\adj^n_H(A)\bigm| n\in\NN_0\bigr\}
\end{equation}
as a linear vector space in $\End(\calS)$, where $\calS\coloneq \calS(\RR^m)$.
Then for all $\psi\in\calS$
\begin{equation}\label{eq:A(t)}
    e^{itH}A e^{-itH}\psi\ =\ \sum_{n=0}^\infty \frac{(it)^n}{n!}\adj_H^n(A)\psi.
\end{equation}
\end{theorem}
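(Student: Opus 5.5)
Because $V_{H,A}$ is finite-dimensional and $\adj_H$ maps it into itself, the right-hand side of \cref{eq:A(t)} is really a finite-dimensional matrix exponential. The plan is to prove the identity by showing that both sides solve the same linear ODE in a weak sense on $\calS$, and then invoking uniqueness for finite-dimensional linear systems.

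\textbf{Setting up the finite-dimensional system.} Choose a basis $B_1,\dots,B_N$ of $V_{H,A}$, where each $B_k$ is a polynomial in $\Q,\P$. Write $\adj_H(B_k)=\sum_l M_{lk}B_l$ with a constant matrix $M$, and $A=\sum_k a_kB_k$. The series $\sum_n \frac{(it)^n}{n!}\adj_H^n(A)$ then equals $\sum_l (e^{itM}a)_l B_l$. Applied to $\psi\in\calS$, it converges absolutely, and the coefficients $c(t)=e^{itM}a$ are smooth with $c'=iMc$.

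\textbf{Properties of the gates on $\calS$.} We need two facts.
\begin{itemize}
\item $e^{-itH}$ preserves $\calS$.
\item $t\mapsto e^{-itH}\psi$ is differentiable into $L^2$ with derivative $-iHe^{-itH}\psi$, where $H$ acts as the differential/multiplication operator.
\end{itemize}
In case (ii), $e^{-itH}$ is multiplication by $e^{-itR(x)}$ with $R$ real. This is smooth with polynomially bounded derivatives, so it preserves $\calS$, and both properties are direct. In case (i), $e^{-itH}$ is the harmonic-oscillator rotation (a fractional Fourier transform). It acts diagonally on Hermite functions with phases $e^{-it(2n+1)}$, so it preserves $\calS=\bigcap_k\calD(\hN^k)$ by the characterization in \cref{def:schwartz-space}. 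The same diagonal picture gives differentiability.

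\textbf{The weak ODE.} For fixed $\psi,\phi\in\calS$, define
\begin{equation*}
f(t)=\langle\phi, e^{itH}Ae^{-itH}\psi\rangle=\langle e^{-itH}\phi, A e^{-itH}\psi\rangle .
\end{equation*}
The formal adjoint of $A$ also preserves $\calS$, so we can move $A$ onto either side. Differentiating via the product rule then gives
\begin{equation*}
f'(t)=i\langle e^{-itH}\phi, [H,A]e^{-itH}\psi\rangle .
\end{equation*}
Here $H$ is symmetric on $\calS$, so it can be moved across the inner product, and $HA-AH=\adj_H(A)$ as operators on $\calS$. Applying the same computation to each $B_k$ yields the closed system $g_k(t)=\langle\phi,e^{itH}B_ke^{-itH}\psi\rangle$ with $g'=iM^{T}g$ in the appropriate index convention.

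By uniqueness for linear ODEs, $\sum_k a_k g_k(t)=\sum_l c_l(t)\langle\phi,B_l\psi\rangle$. Since $\calS$ is dense in $L^2$, this gives \cref{eq:A(t)} as a vector identity.

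\textbf{Main obstacle.} The delicate step is justifying the derivative of the sesquilinear pairing. We must show that $t\mapsto Ae^{-itH}\psi$ is continuous, and that difference quotients of $e^{-itH}\psi$ converge in a norm strong enough to pass through the unbounded operator $A$. I would handle this by proving that $e^{-itH}$ is continuous in $t$ in every seminorm $\|(1+\hN)^k\cdot\|$, using \cite[Lemma A.4]{CGMMNRS25} to bound polynomials by powers of $\hN$.
\begin{itemize}
\item In case (i) this is immediate, because $e^{-itH}$ commutes with $\hN$.
\item In case (ii) I would instead use the explicit multiplier: derivatives of $e^{-itR}$ are polynomials in $x$ and $t$ times $e^{-itR}$. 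This gives local uniform bounds on the Schwartz seminorms, and dominated convergence does the rest.
\end{itemize}
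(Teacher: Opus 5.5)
Your argument is correct, but it settles the key uniqueness step by a different mechanism than the paper.

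\textbf{The paper's route.} The paper builds the candidate $F(t)=\sum_j c_j(t)B_j$, with $c_j(t)=(e^{tM})_{j1}$. Because $F$ is a finite combination with smooth coefficients, it satisfies the generalized Heisenberg equation on $\calS$ essentially for free. The paper then identifies $F(t)$ with $e^{itH}Ae^{-itH}$ by invoking Arai's uniqueness theorem (\cref{cor:arai}). Its real work is verifying the hypotheses of Arai's admissible solution class: the invariance $e^{itH}\calS=\calS$, and strong continuity of $t\mapsto Be^{itH}\psi$ for $A$, $A^\dagger$ and the $B_j^\dagger$ (\cref{claim:continuous}).

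\textbf{Your route.} You instead differentiate the matrix elements $g_k(t)=\langle\phi,e^{itH}B_ke^{-itH}\psi\rangle$ of the genuine Heisenberg operators. Closure of $V_{H,A}$ under $\adj_H$ then gives the constant-coefficient system $g'=iM^{T}g$, and you finish with elementary linear-ODE uniqueness plus density of $\calS$.

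\textbf{Comparison.} The analytic input is the same in both routes: your case (i)/(ii) continuity arguments are essentially \cref{claim:continuous}, and you need them for the $B_k$ rather than for their adjoints. Your version is self-contained and uses the finite-dimensionality of $V_{H,A}$ directly to get uniqueness. The paper separates existence from uniqueness: finite-dimensionality only produces the candidate $F$, while the uniqueness step comes from a general theorem that does not depend on it.

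\textbf{A simplification.} Your ``main obstacle'' is slightly overstated, since you do not need difference quotients of $e^{-itH}\psi$ to converge in a norm strong enough to pass through $A$. Write $u(t)=e^{-itH}\phi$ and $v(t)=e^{-itH}\psi$, and split $\langle u(t+h),Bv(t+h)\rangle-\langle u(t),Bv(t)\rangle$ as $\langle u(t+h)-u(t),Bv(t+h)\rangle+\langle B^\dagger u(t),v(t+h)-v(t)\rangle$, using the formal adjoint in the second term. Then three facts suffice:
\begin{itemize}
\item $L^2$-differentiability of $u$ and $v$, which follows from $\calS\subseteq\calD(H)$;
\item $L^2$-continuity of $t\mapsto B_k v(t)$;
\item $B^\dagger u(t)\in L^2$ for fixed $t$.
\end{itemize}
Since $g$ is then differentiable with $g'=iM^{T}g$, the function $e^{-itM^{T}}g(t)$ has zero derivative. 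It is therefore constant, and no continuity of $g'$ is needed.
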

\begin{proof}
Let $B_1,\dots,B_d$ be a basis of $V_{H,A}$ with $B_1=A$.
Define $M\in\CC^{d\times d}$ as
\begin{equation}\label{eq:BM}
    i[H,B_j] = \sum_{k=1}^d M_{kj} B_k.
\end{equation}
Then treating $M$ as a $V_{H,A}\to V_{H,A}$ map, we get
\begin{equation}
    F(t) \coloneq \sum_{n=0}^\infty \frac{(it)^n}{n!} \adj_H^n(A) = \sum_{n=0}^\infty \frac{t^n M^n(A)}{n!} = (e^{tM}A).
\end{equation}
Then $F(0)=A$ and for all $\psi\in\calS$
\begin{equation}
    \frac{d}{dt} F(t)\psi = M F(t)\psi,
\end{equation}
where differentiability in $L^2(\RR^m)$ follows as $e^{tM}A$ is linear in $B_1,\dots,B_d$ with smooth coefficients in $t$, so we can write
\begin{equation}\label{eq:Ft}
    F(t) = \sum_{j=1}^d c_{j}(t) B_j,
\end{equation}
for smooth functions $c_j(\cdot)$, with $c_j(t) = (e^{tM})_{j1}$.
Thus, $\norm{F(t)\psi}$ is bounded on finite intervals, and we obtain the derivative
\begin{equation}
    \frac{d}{dt} F(t)\psi = \frac{d}{dt}\sum_{j=1}^d c_{j}(t) B_j\psi = \sum_{k=1}^d c_k(t)\sum_{j=1}^d M_{jk}B_j\psi = \sum_{k=1}^d c_{k}(t)i[H,B_k]\psi = i[H,F(t)]\psi,
\end{equation}
using \cref{eq:BM} and \cref{eq:Ft} in the last two steps.
Thus, $F(t)$ satisfies the generalized Heisenberg equation of motion \cref{eq:gen-Heisenberg} on $\calS$ with initial condition $F(0)=A$.
Applying \cref{claim:continuous} below to $A$, $A^\dagger$, and the $B_j^\dagger$, together with \cref{eq:Ft} and invariance of $\calS$, verifies the continuity hypotheses of \cref{cor:arai} and condition (X.2)(a) of \cref{def:solution-class} for $F$.
Thus, \cref{cor:arai}, applied to the closure of $A|_{\calS}$, gives $F(t) = A_H(t)$ on $\calS$, with $A_H(t)$ as in \cref{eq:AH}.

\begin{claim}
    $e^{itH}\calS = \calS$ for all $t\in\RR$.
\end{claim}
\begin{proof}
    Case (i) is trivial due to the characterization of $\calS(\RR^m) = \bigcap_{k=1}^\infty \calD(\Ntot^k)$ (see \cite[Section 2.2.1]{KKW16}; also explicitly shown in \cite[Lemma A.9]{CGMMNRS25}), and $\Q_j^2+\P_j^2 = 2\N_j + 1$ commutes with $\Ntot = \sum_{j=1}^m \N_{j}$.

    Case (ii) holds since the derivatives of $e^{it R(\Q_1,\dots,\Q_m)}$ grow polynomially at infinity  (see \cite[Definition 1.2.20]{lerner2014elements} and discussion thereafter).
\end{proof}

Since $F(t)$ decomposes into $B_j$ with smooth coefficients \cref{eq:Ft}, and $e^{itH}$ is a bijection on $\calS$, it only remains to prove continuity.
\begin{claim}\label{claim:continuous}
    Let $B \in\CC[\Q_1,\P_1,\dots,\Q_m,\P_m]$.
    Then $\lim_{t\to0} \norm{Be^{itH}\psi-B\psi} = 0$ for all $\psi\in\calS$.
\end{claim}
\begin{proof}
    For case (i), assume $H = \Q_1^2+\P_1^2=2\N_1+1$ and write $\psi$ in the Fock/Hermite basis
    \begin{equation}
        \psi = \sum_{\bfn\in\NN_0^m} c_{\bfn}\ket{\bfn}.
    \end{equation}
    Since $\psi\in \calD((1+\Ntot)^{k})$ for any $k$, and any polynomial Hamiltonian is bounded with respect to some $(1+\Ntot)^k$, we have
    \begin{equation}
        C\coloneq \sum_{\bfn} (2n_1+1)\abs{c_{\bfn}}\cdot \norm{B\ket{\bfn}} < \infty.
    \end{equation}
    Thus, we get with $\abs{e^{it}-1}\le \abs{t}$,
    \begin{equation}
        \norm{B(e^{itH} - 1)\psi} \le \sum_{\bfn}\abs{c_{\bfn}}\abs{t(2n_1+1)}\norm{B\ket{\bfn}}\le \abs{t}C \ \xrightarrow[t\to0]{}\ 0.
    \end{equation}
    For case (ii), recall the definition of the Schwartz space
    \begin{equation}
        \calS(\RR^m) \coloneq \Bigl\{ f\in C^\infty(\RR^m)\Bigm| \sup_{x\in\RR^m} \bigl|x^\alpha\partial^\beta f(x)\bigr| < \infty\ \forall \alpha,\beta\in\NN_{0}^m\Bigr\}.
    \end{equation}
    Since $\Q_j : \psi(x)\mapsto x_j\psi(x)$ and $\P_j : \psi(x)\mapsto -i\partial_{x_j}\psi(x)$ for all $\psi\in\calS$, we may restrict to observables $B = x^\alpha\partial^\beta$ with $\alpha,\beta\in\NN_0^m$.
    Then we have by the Leibniz rule
    \begin{equation}
        x^\alpha\partial^\beta e^{itR(x)}\psi(x) = \sum_{\gamma\le\beta}x^\alpha\binom{\beta}{\gamma} \left(\partial^\gamma e^{itR(x)}\right)\left(\partial^{\beta-\gamma}\psi(x)\right) \eqcolon \sum_{\gamma\le\beta} r_{\gamma}(t,x)e^{itR(x)}\psi_{\gamma}(x),
    \end{equation}
    where $r_{\gamma}(t,x)$ is a polynomial such that $r_{\gamma}(t,x)e^{itR(x)} = \binom{\beta}{\gamma}x^{\alpha}\partial^\gamma e^{itR(x)}$ and $\psi_{\gamma} \coloneq \partial^{\beta-\gamma}\psi\in\calS$.
    For each $\gamma$, we separate $r_{\gamma}$ in terms of powers of $t$, such that $r_{\gamma}(t,x) = \sum_{j=0}^k t^j \, r_{\gamma,j}(x)$ for polynomials $r_{\gamma,j}$.
    Then for $j>0$
    \begin{equation}
        t^j\,r_{\gamma,j}(x)\,e^{itR(x)}\psi_{\gamma}(x) = e^{itR(x)}\,t^j \bigl(r_{\gamma,j}(x)\psi_{\gamma}(x)\bigr)\  \ \xrightarrow[t\to0]{}\ 0
    \end{equation}
    in $L^2$ as $r_{\gamma,j}(x)\psi_{\gamma}(x)\in \calS$ and $e^{itR(x)}$ is a bounded operator.
    For the remaining $j=0$ term, the pointwise inequality $\abs{e^{iu}-1}\le\abs{u}$ gives
    \begin{equation}
        \norm{r_{\gamma,0}\,\bigl(e^{itR}-1\bigr)\psi_\gamma}_2
        \le \abs{t}\,\norm{R\,r_{\gamma,0}\,\psi_\gamma}_2
        \ \xrightarrow[t\to0]{}\ 0,
    \end{equation}
    since $Rr_{\gamma,0}\psi_\gamma\in\calS\subsetneq L^2$.
    Now we have shown that all terms of
    \begin{equation}
        x^\alpha\partial^\beta \bigl(e^{itR(x)}-1\bigr)\psi = \sum_{\gamma\le\beta} \left(r_{\gamma}(t,x)e^{itR(x)} - r_{\gamma}(0,x)\right)\psi_{\gamma}(x)
    \end{equation}
    converge to $0$ as $t\to0$ and thus continuity is proven.
\end{proof}

\end{proof}

The exact synthesis of \cite{kalajdzievski2021exact} also uses the following identity, which we also verify here for completeness.

\begin{corollary}
Let $A,H$ be as in \cref{thm:Heisenberg} such that $A$ is essentially self-adjoint on $\calS$.
Then
\begin{equation}\label{eq:unitary-exponent}
    e^{isH} e^{itA} e^{-isH} = e^{it A_H(s)}.
\end{equation}
\end{corollary}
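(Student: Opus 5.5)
The plan is to identify $e^{isH}e^{itA}e^{-isH}$ with the unitary group generated by the closure of $e^{isH}\bar A e^{-isH}$. Then I will show that this conjugated self-adjoint operator is the closure of $A_H(s)|_{\calS}$, which \cref{thm:Heisenberg} already identifies as $F(s)=\sum_n \frac{(is)^n}{n!}\adj_H^n(A)$ on $\calS$.

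\emph{Step 1: unitary conjugation.} Write $\bar A$ for the unique self-adjoint extension of $A|_{\calS}$, and set $W\coloneq e^{isH}$. The unitary conjugation $W\bar A W^{-1}$ is self-adjoint on $W\calD(\bar A)$. By the functional calculus, $W e^{it\bar A}W^{-1}=e^{itW\bar AW^{-1}}$ for all $t$. This is the abstract form of \cref{eq:unitary-exponent}, so it remains to identify $W\bar AW^{-1}$ with $A_H(s)$, interpreted as the closure of its restriction to $\calS$.

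\emph{Step 2: identification on $\calS$.} By the claim in the proof of \cref{thm:Heisenberg}, $W\calS=\calS$. Essential self-adjointness of $A$ on $\calS$ is preserved under unitary conjugation, so $W\bar AW^{-1}$ is essentially self-adjoint on $W\calS=\calS$. On this core it acts as $W A W^{-1}\psi$, which by \cref{eq:A(t)} (applied with $t=s$) equals $F(s)\psi$. Hence $A_H(s)|_{\calS}=WAW^{-1}|_{\calS}$ is essentially self-adjoint, and its closure is exactly $W\bar AW^{-1}$. Substituting this into Step 1 gives \cref{eq:unitary-exponent}, with $A_H(s)$ understood as this closure.

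\emph{Main obstacle.} The delicate point is the bookkeeping of domains and signs, not any estimate. First, \cref{thm:Heisenberg} gives $e^{itH}Ae^{-itH}$, so the sign convention in \cref{eq:AH} must match $W=e^{isH}$; if it does not, I would apply the theorem with $-s$. Second, I must make sure that the operator in the exponent on the right-hand side is defined as the self-adjoint closure from $\calS$, since $F(s)$ is a priori only a polynomial operator on $\calS$. The invariance $W\calS=\calS$ is what makes $\calS$ a core for the conjugated operator, and I would invoke it explicitly. No convergence issues arise, because the series in $F(s)$ is a finite-dimensional matrix exponential over $V_{H,A}$.
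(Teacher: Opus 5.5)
Your proof is correct, but it follows a different route from the paper. The paper also begins by noting that $A_H(s)$ is essentially self-adjoint on $\calS$ because $e^{isH}\calS=\calS$. It then argues dynamically: for $\psi\in\calS$, both $t\mapsto e^{isH}e^{itA}e^{-isH}\psi$ and $t\mapsto e^{itA_H(s)}\psi$ start at $\psi$ and satisfy $u'=iA_H(s)u$. They therefore agree by uniqueness of classical solutions of the abstract Cauchy problem, and the identity extends to $L^2$ by density.

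You instead use that unitary conjugation commutes with the functional calculus, $We^{it\bar A}W^{-1}=e^{itW\bar AW^{-1}}$ with $W=e^{isH}$. This reduces everything to identifying $W\bar AW^{-1}$ with the closure of $A_H(s)|_{\calS}$ through the core $W\calS=\calS$. Your route is shorter and needs no differentiation or Cauchy-problem uniqueness.

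It also isolates a point the ODE argument uses tacitly. Since $e^{it\bar A}$ need not preserve $\calS$, the step $\frac{d}{dt}e^{isH}e^{itA}e^{-isH}\psi=iA_H(s)\,e^{isH}e^{itA}e^{-isH}\psi$ already presupposes $e^{isH}\bar Ae^{-isH}=\overline{A_H(s)|_{\calS}}$. That is precisely your Step~2, and once it is known the conclusion is immediate. The paper's formulation mainly buys consistency with the Heisenberg-equation and uniqueness framework used elsewhere in that appendix.

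Two small remarks. First, the sign conventions already match, since \cref{eq:AH} defines $A_H(s)=e^{isH}Ae^{-isH}=WAW^{-1}$, so no $s\mapsto -s$ is needed. Second, your convergence comment is unnecessary. \cref{eq:A(t)} serves only to recognize $A_H(s)|_{\calS}$ as a concrete polynomial operator; it plays no role in proving the identity itself.
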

\begin{proof}
First, note that $A_H(s)$ is essentially self-adjoint on $\calS$ since $A$ is essentially self-adjoint on $\calS$ and $e^{isH}\calS=\calS$.
Let $\psi\in\calS$, $F(t) = e^{isH} e^{itA} e^{-isH}\psi$, and $G(t) =  e^{it A_H(s)}\psi$.
Then $F(0)=G(0)=\psi$ and $F'(t)= i(e^{isH} A e^{-isH})F(t) = i A_H(s) F(t)$ and $G'(t)= iA_H(s)G(t)$.
Then $F(t) = G(t)$ due to the uniqueness of the classical solution to the abstract Cauchy problem \cite[Proposition II.6.2]{EN2000} for $\psi\in\calD(A_H(s))$.
\cref{eq:unitary-exponent} extends to all of $L^2$ by density.
\end{proof}

\begin{corollary}[Mixed position and momentum generators]\label{cor:Heisenberg-mixed}
\Cref{thm:Heisenberg} also holds for $H=R(X_1,\dots,X_m)$ with $R$ a real polynomial and $X_j\in\{\Q_j,\P_j\}$ for each mode $j$. In particular, it applies to $H=\Q_{j-1}^2\P_j$ for $j\ge2$.
\end{corollary}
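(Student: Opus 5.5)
The plan is to reduce the mixed case to \cref{thm:Heisenberg}(ii) by conjugating with a Fourier transform on exactly those modes where $X_j=\P_j$. Let $S\subseteq\{1,\dots,m\}$ be the set of modes with $X_j=\P_j$, and let $\calF_S\coloneq\prod_{j\in S}e^{i\frac\pi2\hat n_j}$, a product of single-mode phase rotations. This is a passive Gaussian unitary with $\calF_S\P_j\calF_S^\dagger=\pm\Q_j$ and $\calF_S\Q_j\calF_S^\dagger=\mp\P_j$ for $j\in S$; we fix the sign convention so that $\calF_S\P_j\calF_S^\dagger=\Q_j$, absorbing any sign into $R$. Modes outside $S$ are left untouched. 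Since $\calF_S$ commutes with $\Ntot$, it maps $\calS=\bigcap_k\calD(\Ntot^k)$ bijectively onto itself, and the identities above hold on $\calS$ (this is case (i) of \cref{thm:Heisenberg}, or follows directly from the Fock-basis action of $\calF_S$). Consequently $\calF_S H\calF_S^\dagger=\widetilde R(\Q_1,\dots,\Q_m)\eqcolon\widetilde H$ on $\calS$, where $\widetilde R$ is again a real polynomial, and $e^{itH}=\calF_S^\dagger e^{it\widetilde H}\calF_S$ as unitaries by functional calculus for unitarily equivalent self-adjoint operators.

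Next we transport $A$ and the hypothesis. Put $\widetilde A\coloneq\calF_S A\calF_S^\dagger$, which on $\calS$ is again a polynomial in the $\Q_j,\P_j$. Since conjugation by $\calF_S$ is an algebra automorphism of $\End(\calS)$, we get $\calF_S\adj_H^n(A)\calF_S^\dagger=\adj_{\widetilde H}^n(\widetilde A)$. Hence $V_{\widetilde H,\widetilde A}=\calF_S V_{H,A}\calF_S^\dagger$, and in particular it is finite dimensional. \Cref{thm:Heisenberg}(ii) then gives, for every $\phi\in\calS$,
\begin{equation}
e^{it\widetilde H}\widetilde A e^{-it\widetilde H}\phi=\sum_{n\ge0}\frac{(it)^n}{n!}\adj^n_{\widetilde H}(\widetilde A)\phi .
\end{equation}
Taking $\phi=\calF_S\psi$ and applying $\calF_S^\dagger$ to both sides yields \cref{eq:A(t)} for $H$. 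On the left this uses $e^{-it\widetilde H}\calF_S=\calF_S e^{-itH}$. On the right, the sum is a finite linear combination of basis elements with analytic coefficients, exactly as in \cref{eq:Ft}, so $\calF_S^\dagger$ commutes with it termwise.

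For the example, $H=\Q_{j-1}^2\P_j$ becomes $\Q_{j-1}^2\Q_j$ after the rotation on mode $j$. When $A=\Q_j$, the relevant space $V_{H,A}$ is $\Span\{\Q_j,\Q_{j-1}^2\}$, which is finite dimensional, so the corollary applies.

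The main obstacle is domain bookkeeping. We must check that the operator identities used in conjugation hold on $\calS$ rather than only formally, and that $e^{itH}$ (defined through the self-adjoint closure of $H$) is genuinely unitarily equivalent to $e^{it\widetilde H}$. Both follow because $\calF_S$ is unitary with $\calF_S\calS=\calS$: the closure of $H|_\calS$ is carried to the closure of $\widetilde H|_\calS$, and the latter is self-adjoint as a real multiplication operator in position space.
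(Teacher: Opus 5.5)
Your proposal is correct and follows essentially the same route as the paper. You conjugate by a quarter-period phase rotation on the momentum modes, apply the position-polynomial case (ii) of \cref{thm:Heisenberg} to the conjugated pair, then transport the finite-dimensional commutator span, the unitary group and the series identity back on $\calS$. One sign correction: with $\mathcal F_S=\prod_{j\in S}e^{i\frac\pi2\hat n_j}$ you get $\mathcal F_S\P_j\mathcal F_S^\dagger=-\Q_j$, not $+\Q_j$; the paper uses $e^{-i\frac\pi2\hat n_j}$ to get $+\Q_j$, but as you note the sign only replaces $R$ by another real polynomial.
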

\begin{proof}
Let
\begin{equation}
    U=\exp\!\left(-\frac{i\pi}{2}\sum_{j:\,X_j=\P_j}\N_j\right),
    \qquad UX_jU^\dagger=\Q_j.
\end{equation} Fourier transforms preserve $\calS$ and exchange position and momentum operators up to sign. Consequently,
\begin{equation}
    \widetilde H\coloneq UHU^\dagger=R(\Q_1,\dots,\Q_m),
    \qquad \widetilde A\coloneq UAU^\dagger
\end{equation}
is a position-polynomial generator together with a polynomial observable. Since $\widetilde H=R(\Q_1,\dots,\Q_m)$ is essentially self-adjoint on $\calS$, and $U\calS=\calS$, the operator $H=U^\dagger\widetilde H U$ is also essentially self-adjoint on $\calS$. Moreover,
\begin{equation}
    e^{itH}=U^\dagger e^{it\widetilde H}U,
    \qquad \adj_{\widetilde H}^n(\widetilde A)=U\adj_H^n(A)U^\dagger
\end{equation}
on $\calS$. The commutator span remains finite-dimensional, because
\begin{equation}
    V_{\widetilde H,\widetilde A}=UV_{H,A}U^\dagger,
    \qquad \dim V_{\widetilde H,\widetilde A}=\dim V_{H,A}<\infty.
\end{equation} Each factor in $e^{itH}=U^\dagger e^{it\widetilde H}U$ preserves $\calS$, so $e^{itH}\calS=\calS$. For $\psi\in\calS$, we also have $U\psi\in\calS$, and the position-polynomial case of \cref{thm:Heisenberg} gives
\begin{equation}
    e^{itH}Ae^{-itH}\psi
    =U^\dagger e^{it\widetilde H}\widetilde A e^{-it\widetilde H}U\psi
    =U^\dagger\sum_{n=0}^\infty\frac{(it)^n}{n!}\adj_{\widetilde H}^n(\widetilde A)U\psi
    =\sum_{n=0}^\infty\frac{(it)^n}{n!}\adj_H^n(A)\psi.
\end{equation}
\end{proof}

\subsection{Uniqueness of the Generalized Heisenberg Equation}

The following summarizes Arai's Corollary 4.5 \cite{arai07} with several simplifying assumptions for our setting.
The \emph{Heisenberg operator} of $A$ with respect to $H$ is formally defined as
\begin{equation}\label{eq:AH}
A_H(t) := e^{itH} A e^{-itH}, \quad t \in \mathbb{R}.
\end{equation}

\begin{definition}[{\cite[Definition 4.1]{arai07}}]
Let $\mathcal{D}$ be a dense subspace of $L^2(\RR^m)$, and let $t \mapsto X(t)$ be an operator-valued function on $\mathbb{R}$ such that $\calD(X(t))$ is dense and $\mathcal{D} \subseteq \calD(X(t)) \cap \calD(X(t)^*) \cap \calD(H)$ for all $t \in \mathbb{R}$. We say $X(\cdot)$ obeys a \emph{generalized Heisenberg equation of motion} on $\calD$ if, for all $\phi, \psi \in \mathcal{D}$, the scalar function $t \mapsto \langle \phi, X(t)\psi \rangle$ is differentiable in $t \in \mathbb{R}$ with:
\begin{equation}\label{eq:gen-Heisenberg}
    \frac{d}{dt}\langle \phi, X(t)\psi \rangle = i \big( \langle H\phi, X(t)\psi \rangle - \langle X(t)^*\phi, H\psi \rangle \big).
\end{equation}
\end{definition}

\begin{definition}[{\cite[Definition 4.3]{arai07}}]\label{def:solution-class}
Assume $e^{itH}\mathcal{D} = \mathcal{D}$ for all $t\in\mathbb{R}$.
We say that an operator-valued function $X(\cdot)$ belongs to the admissible solution class $\sfX_{\mathcal{D}}$ if:
\begin{enumerate}[label=(X.\arabic*),itemindent=2em]
    \item $\mathcal{D} \subseteq \bigcap_{t\in\mathbb{R}} \big( \calD(X(t)) \cap \calD(X(t)^*) \cap \calD(H) \big)$.
    \item Either (a) holds for all $\psi\in\mathcal D$, or (b) holds for all $\psi\in\mathcal D$:
        \begin{enumerate}[label=(\alph*)]
            \item the mappings $t \mapsto X(t)\psi$ and $t \mapsto X(t)^* e^{itH}\psi$ are strongly continuous in $L^2(\mathbb{R}^m)$;
            \item the mappings $t \mapsto X(t)e^{itH}\psi$ and $t \mapsto X(t)^*\psi$ are strongly continuous in $L^2(\mathbb{R}^m)$. 
        \end{enumerate}
\end{enumerate}
\end{definition}

\begin{corollary}[{\cite[Corollary 4.5]{arai07}}]\label{cor:arai}
Let $H$ and $A$ be closed operators whose restrictions to $\mathcal{S}\equiv\calS(\RR^m)$ are polynomials in $\Q$ and $\P$ in $m$ modes. Assume $H$ is self-adjoint and that $e^{itH}\mathcal{S} \subseteq \mathcal{S}$ for all $t \in \mathbb{R}$.

Furthermore, assume that for all $\psi \in \mathcal{S}$, the $L^2(\mathbb{R}^m)$-valued mappings:
\begin{equation}
    t \mapsto A e^{itH}\psi \quad \text{and} \quad t \mapsto A^\dagger e^{itH}\psi
\end{equation}
are strongly continuous in $t \in \mathbb{R}$.

Then, the Heisenberg operator $X(t) = A_H(t)$ is the \emph{unique solution on $\mathcal S$} to the generalized Heisenberg equation of motion \cref{eq:gen-Heisenberg} on $\mathcal{S}$ that belongs to the class $\sfX_{\mathcal{S}}$ and satisfies the initial condition $X(0)|_{\mathcal{S}} = A|_{\mathcal{S}}$.
\end{corollary}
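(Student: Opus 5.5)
The statement to prove is the final corollary. It says that, under its hypotheses, $A_H(t)$ is the unique solution on $\calS$ of the generalized Heisenberg equation \cref{eq:gen-Heisenberg} within $\sfX_{\calS}$, subject to $X(0)|_\calS=A|_\calS$. The plan is to check that the hypotheses of Arai's general uniqueness result \cite[Corollary 4.5]{arai07} hold in our polynomial setting, with $\mathcal D=\calS$, and then to prove existence and uniqueness separately.

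First, the domain conditions. Every polynomial in $\Q,\P$ maps $\calS$ into $\calS$. Hence $\calS\subseteq\calD(A)\cap\calD(A^\dagger)\cap\calD(H)$, with $A^\dagger|_\calS$ equal to the formally adjoint polynomial. Since $e^{itH}\calS\subseteq\calS$ for all $t$, applying this to $-t$ as well gives $e^{itH}\calS=\calS$. It follows that $\calS\subseteq\calD(A_H(t))\cap\calD(A_H(t)^*)$, with $A_H(t)\psi=e^{itH}Ae^{-itH}\psi$ and $A_H(t)^*\psi=e^{itH}A^\dagger e^{-itH}\psi$ on $\calS$. This gives (X.1).

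Second, existence. Take $\phi,\psi\in\calS$ and set $f(t)=\langle e^{-itH}\phi,Ae^{-itH}\psi\rangle$. Because $\calS\subseteq\calD(H)$, the vectors $e^{-itH}\phi$ and $e^{-itH}\psi$ are differentiable with derivatives $-iHe^{-itH}(\cdot)$. The remaining point is that $t\mapsto Ae^{-itH}\psi$ is differentiable, or at least continuous. I would handle this by moving $A$ to the left slot as $A^\dagger$, which is legitimate since all vectors lie in $\calS$. Then I would use the assumed strong continuity of $t\mapsto A^\dagger e^{itH}\phi$ together with a difference-quotient argument, splitting the increment into one term where only the left vector varies and one where only the right vector varies. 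This yields \cref{eq:gen-Heisenberg}. The same assumed continuity gives (X.2)(b): $t\mapsto A_H(t)e^{itH}\psi=e^{itH}A\psi$ is trivially continuous, and $t\mapsto A_H(t)^*\psi=e^{itH}A^\dagger e^{-itH}\psi$ is continuous by the hypothesis together with strong continuity of the unitary group.

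Third, uniqueness, which I expect to be the main obstacle. Let $X$ be a second solution in $\sfX_\calS$ and set $Y=X-A_H$. Then $Y(0)=0$ on $\calS$. Fix $\phi,\psi\in\calS$ and consider $g(t)=\langle e^{itH}\phi,Y(t)e^{itH}\psi\rangle$. Formally, $g'(t)=0$: the Heisenberg equation cancels the two terms coming from differentiating the unitaries. Making this rigorous requires a two-variable chain rule, and this is exactly where (X.2)(a) or (X.2)(b) enters. Each gives the continuity needed to pass to the limit in the split difference quotient: one half is controlled by the equation, the other half by strong continuity of $Y(t)e^{itH}\psi$ or $Y(t)^*e^{itH}\phi$. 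Once $g'=0$ is established, $g\equiv0$. Since $e^{itH}\calS=\calS$ and $\calS$ is dense, this forces $Y(t)=0$ on $\calS$. I would follow Arai's argument closely here, and only verify that our polynomial and Schwartz-invariance assumptions replace his abstract hypotheses.
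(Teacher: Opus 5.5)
Your proposal is correct and takes the same route as the paper, which simply invokes Arai's Corollary~4.5 with $\mathcal D=\calS$. Your existence step (a split difference quotient using the assumed continuity of $t\mapsto A^\dagger e^{-itH}\phi$, then (X.2)(b) via $A_H(t)e^{itH}\psi=e^{itH}A\psi$) and your uniqueness step (constancy of $\langle e^{itH}\phi,X(t)e^{itH}\psi\rangle$, with (X.2)(a) or (b) controlling the two halves of the split quotient) are Arai's argument specialized to this setting.

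One small imprecision: $\calS\subseteq\calD(A^\dagger)$ does not follow merely from $A|_{\calS}$ being a polynomial when $A$ is an arbitrary closed extension, since a finite-rank closed extension can shrink $\calD(A^\dagger)$. It is, however, implicit in the hypothesis that $t\mapsto A^\dagger e^{itH}\psi$ is defined on $\calS$, and it holds automatically for the closure of $A|_{\calS}$, which is the case the paper actually uses.
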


\section{Effective energy-preserving descriptions of energy-preserving operations}
\label{sec:passive-description}

In this appendix, we show that for any arbitrary energy-preserving unitary, we can find a unitary generated by an energy-preserving polynomial which operates on an effective truncated photon space.
This result is inspired by \cite{arzani2025can}, and provides a constructive proof for the energy-preserving setting.

Let $m \ge 1$ be the number of modes, and let $N_{\mathrm{max}}\in \mathbb{N}$ denote the local dimension cutoff, and let the truncated finite-dimensional bosonic subspace be defined as $\mathcal{S} = \bigotimes_{k=1}^{m} \mathcal{H}_{N_{\mathrm{max}}}$ where $\mathcal{H}_{N_{\mathrm{max}}} = \mathrm{span} \{\ket{0}_k \dots \ket{N_{\mathrm{max}} - 1 }_k\}$. Here $N_{\mathrm{max}}$ is the local dimension, so the local photon cutoff is $N_{\mathrm{max}}-1$.
\begin{theorem}[Effective energy-preserving descriptions of energy-preserving operations]
\label{thm:passive-description}
If $\hat{H}$ is a Hermitian operator over $\mathcal{S}$ that is energy-preserving (i.e.\ $[\hh, \hat{N}] = 0$ on $\mathcal{S}$), then there exists a polynomial Hamiltonian $\ph(\hat{\bm a}, \hat{\bm a}^\dagger)$ such that:
\begin{enumerate}
    \item $\ph = \hh \oplus \hh'$, for some $\hat{H}'$ on $\mathcal{S}^{\perp}$.
    \item $[\ph, \hat{N}] = 0$ over $\mathcal H^{\otimes m}$.
    \item The degree of $\ph$ is at most $6mN_{\mathrm{max}}$.
    \item $\ph$ can be efficiently computed in the \textit{size} of $\hh$. Explicitly, it takes $O(N_{\mathrm{max}}^{4m})$ steps for a dense $\hh$, $sN_{\mathrm{max}}^{2m}$ for a $\hh$ with $s$ nonzero entries.
\end{enumerate}
\end{theorem}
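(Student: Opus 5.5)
The plan is to build $\ph$ entry by entry out of energy-preserving polynomial ``matrix units'' that act as $\ketbra{\bm n}{\bm m}$ on $\mathcal S$ and are block-diagonal with respect to the decomposition $\mathcal S\oplus\mathcal S^\perp$. Write
\begin{equation}
\hh=\sum_{\bm n,\bm m}h_{\bm n,\bm m}\ketbra{\bm n}{\bm m},
\end{equation}
where $\bm n,\bm m\in\{0,\dots,N_{\mathrm{max}}-1\}^m$. Since $[\hh,\hat N]=0$ on $\mathcal S$, only pairs with $|\bm n|=|\bm m|$ occur. In the single-mode case, the first step is to write down a polynomial in $\hat n$ that projects onto a single occupation:
\begin{equation}
L_k(\hat n)=\prod_{j\neq k,\;0\le j\le 2N_{\mathrm{max}}}\frac{\hat n-j}{k-j}.
\end{equation}
It has degree $O(N_{\mathrm{max}})$ in $\hat n$, equals $\ketbra{k}{k}$ on occupations $0,\dots,2N_{\mathrm{max}}$, and is an arbitrary diagonal polynomial value above that range. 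With $P_{\bm n}=\prod_j L_{n_j}(\hat n_j)$, the natural candidate for each entry is
\begin{equation}
\Anm=P_{\bm n}\prod_j\bigl((\hat a_j^\dagger)^{(n_j-m_j)_+}\,\hat a_j^{(m_j-n_j)_+}\bigr)P_{\bm m}\cdot c_{\bm n,\bm m},
\end{equation}
where $c_{\bm n,\bm m}$ is the inverse of the appropriate product of $\sqrt{\cdot}$ ladder factors. Then $\Anm$ acts as $\ketbra{\bm n}{\bm m}$ on $\mathcal S$ and changes $\hat n_j$ by exactly $n_j-m_j$, so it commutes with $\hat N$ because $\sum_j(n_j-m_j)=0$.

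The main obstacle is item 1. On $\mathcal S^\perp$, the polynomials $L_k$ are not zero. A state with some occupation $\ge N_{\mathrm{max}}$ whose $P_{\bm m}$ value is nonzero could therefore be mapped into $\mathcal S$, breaking the block structure. To repair this, I would multiply each $L_k$ by a polynomial that kills every occupation $N_{\mathrm{max}},\dots,2N_{\mathrm{max}}$ and arrange that it only needs to be correct up to that range. Concretely, I would use interpolation on occupations $0,\dots,2N_{\mathrm{max}}$ so that $L_k$ equals $\delta_{\cdot,k}$ there. The ladder part changes each $n_j$ by at most $N_{\mathrm{max}}-1$. Consequently, any input with some occupation $\ge N_{\mathrm{max}}$ either stays outside $\mathcal S$ or, if it is sent by the ladders to occupations in $[0,N_{\mathrm{max}})$, has started in $[N_{\mathrm{max}},2N_{\mathrm{max}})$ and is killed by $P_{\bm m}$. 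Inputs with occupation $\ge 2N_{\mathrm{max}}$ cannot reach $\mathcal S$ at all. A symmetric argument for the left factor $P_{\bm n}$ handles $\mathcal S\to\mathcal S^\perp$ leakage, so the operator is block diagonal and agrees with $\ketbra{\bm n}{\bm m}$ on $\mathcal S$. For the degree count, the two projector polynomials contribute $2\cdot 2\cdot(2N_{\mathrm{max}})$ per mode once $\hat n$ is written as $\hat a^\dagger\hat a$, and the ladder part adds less than $2N_{\mathrm{max}}$. With careful bookkeeping, for example by choosing the interpolation node set more economically, this should fit within $6mN_{\mathrm{max}}$; I would adjust the node range to meet that constant.

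Finally, I would set
\begin{equation}
\ph=\sum h_{\bm n,\bm m}\Anm,
\end{equation}
symmetrized as $\tfrac12(\ph+\ph^\dagger)$ so that it is Hermitian, noting that $\Anm^\dagger$ is of the same form with $\bm n$ and $\bm m$ swapped. Items 1 and 2 then follow from linearity together with the block property above. For item 4, there are at most $N_{\mathrm{max}}^{2m}$ entries in the dense case and $s$ in the sparse case, and each $\Anm$ is a product of precomputed single-mode polynomials. Expanding it costs $\poly$ per term, or $N_{\mathrm{max}}^{2m}$ if it is stored in a factored normal-ordered basis. This gives the stated $O(N_{\mathrm{max}}^{4m})$ and $sN_{\mathrm{max}}^{2m}$ counts.
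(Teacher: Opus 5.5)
Your proposal has a genuine gap at item 3, and it comes from the fix you introduce for item 1. Because you interpolate $L_k$ on the nodes $\{0,\dots,2N_{\mathrm{max}}\}$, each projector $P_{\bm n}$, $P_{\bm m}$ has degree $4N_{\mathrm{max}}$ per mode in $\hat a,\hat a^\dagger$. Your construction therefore has degree $8mN_{\mathrm{max}}+m(N_{\mathrm{max}}-1)$, and in general there is no cancellation of leading terms. That exceeds $6mN_{\mathrm{max}}$.

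The proposed remedy, ``adjust the node range to meet that constant'', is exactly the step that fails. Your argument for item 1 relies on $L_{m_j}$ vanishing at the occupations $N_{\mathrm{max}},\dots,2N_{\mathrm{max}}-2$. Shrinking the node set back removes that property, so you cannot recover the degree bound this way without losing block-diagonality. The enlarged node set also raises the number of terms per matrix entry from $N_{\mathrm{max}}^{2m}$ to $(2N_{\mathrm{max}}+1)^{2m}$. That is an extra factor of roughly $4^m$ in item 4.

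The missing idea is that the leakage you worry about never occurs; you have the roles of the two projectors reversed. Interpolate $L_k$ only on $\{0,\dots,N_{\mathrm{max}}-1\}$, and let $T=P_{\bm n}AP_{\bm m}$, with $A$ a ladder monomial that shifts occupations by $\bm n-\bm m$. On a Fock vector we have $T\ket{\bm r}\propto\ket{\bm r+\bm n-\bm m}$.
\begin{itemize}
\item If $\bm r\in\mathcal S$, the right factor is exact there: $P_{\bm m}\ket{\bm r}=\delta_{\bm r,\bm m}\ket{\bm r}$. So nothing leaks from $\mathcal S$ to $\mathcal S^\perp$.
\item If the output $\bm r'=\bm r+\bm n-\bm m$ lies in $\mathcal S$, the left factor is exact on it: $P_{\bm n}\ket{\bm r'}=\delta_{\bm r',\bm n}\ket{\bm r'}$. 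This forces $\bm r=\bm m\in\mathcal S$, so nothing leaks from $\mathcal S^\perp$ into $\mathcal S$.
\end{itemize}
Thus the values of $L_k$ above $N_{\mathrm{max}}-1$ never matter. With the small node set, each projector has degree at most $2m(N_{\mathrm{max}}-1)$. The ladder part has degree at most $2m(N_{\mathrm{max}}-1)$ for the bare $\prod_k(\hat a_k^\dagger)^{n_k}\hat a_k^{m_k}$ used in the paper, or at most $m(N_{\mathrm{max}}-1)$ for your minimal version. This gives the bound $6mN_{\mathrm{max}}$. Multiplying two $N_{\mathrm{max}}^m$-term projectors gives the per-entry cost $N_{\mathrm{max}}^{2m}$ claimed in item 4. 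This corrected version is exactly the paper's construction.
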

\begin{proof}

Because $\hh$ is a linear operator on $\mathcal{S}$, we can write it in the Fock basis as:

\be
\hh = \sum_{\vn, \vm \in I_{\mathcal{S}}} H_{\vn\vm} \ketbra{\vn}{\vm},
\ee
where $I_{\mathcal{S}}=\{0,\dots,N_\text{max}-1\}^m$.

We have $H_{\vn, \vm} = H_{\vm, \vn}^*$ and $H_{\vm, \vn} = 0$ for all $|\vn| \ne |\vm|$.

Construct the restricted projector

\be \hat{L}^{(N_{\mathrm{max}})}_{n}(\hat{n}_k) = \prod_{j=0, j\ne n}^{N_{\mathrm{max}} - 1} \frac{\hat{n}_k - j}{n-j},\ee
and note that for any Fock state $\ket{r}_k \in \mathcal{H}_{N_{\mathrm{max}}}$ where $0\le r<N_{\mathrm{max}}$, $\hat{L}^{(N_{\mathrm{max}})}_{n}(\hat{n}_k) \ket{r} = \delta_{n,r} \ket{r}$. Since $\LL$ is defined as a polynomial function of $\hat{n}_k$, it commutes with $\hat{n}_k$. The degree of $\LL$ is at most $2N_{\mathrm{max}}$, and it can be computed as a function of $\hat a_k, \hat a^\dagger_k$ in time $N_{\mathrm{max}}^2$, and contains at most $N_{\mathrm{max}}$ terms.

Now, we can build the $m$-mode subspace projector

\be \hat{\Pi}_{\vm} = \prod_{k=1}^{m}\LL_{m_k}(\hat{n}_k)\ee
and check that for any multimode Fock state $\ket{\vm} \in \mathcal{S}$, we have $\hat{\Pi}_{\vn}\ket{\vm} = \delta_{\vn, \vm} \ket{\vm}$. Furthermore, $\hat{\Pi}_{\vn}$ is a polynomial of degree at most $2mN_{\mathrm{max}}$ and commutes with $\hat{N}$, which can be computed in time $O(N_{\mathrm{max}}^m)$ containing at most $N_{\mathrm{max}}^m$ terms.

We now define the bare multimode ladder operator corresponding to a transition from state $\ket{\vm}$ to $\ket{\vn}$:
\begin{equation}
    \Anm = \prod_{k=1}^m (\hat{a}_k^\dagger)^{n_k} (\hat{a}_k)^{m_k}.
\end{equation}
When applied to $\ket{\vm}$, standard bosonic algebra yields
\begin{equation}
    \Anm \ket{\vm} = C_{\vn, \vm} \ket{\vn},
\end{equation}
where the scalar coefficient is $C_{\vn, \vm} = \prod_{k=1}^m \sqrt{n_k! m_k!} \neq 0$.
Taking the adjoint gives $\hat{A}_{\vn, \vm}^\dagger = \hat{A}_{\vm, \vn}$.

Now we investigate the global commutation relation between $\Anm$ and the total number operator $\hat N$. Using the standard identities $[\hat N, \hat{a}_k^\dagger] = \hat{a}_k^\dagger$ and $[\hat N, \hat{a}_k] = -\hat{a}_k$, we find:
\begin{equation}[\hat N, \Anm] = (|\vn| - |\vm|) \Anm.
\end{equation}
By the hypothesis that $\hat{H}$ is energy-preserving, non-zero matrix elements are restricted to states with the same total particle number ($|\vn| = |\vm|$). It follows that for all \textit{required} transition pairs $(\vn, \vm)$ where $H_{\vn, \vm} \neq 0$, we have:
\begin{equation}
    [\Anm, \hat{N}] = 0
\end{equation}
over the full infinite-dimensional Hilbert space $\mathcal{H}^{\otimes m}$. By construction, this is a polynomial of degree $|\vn|+|\vm| \le 2mN_{\mathrm{max}}$, with $1$ term calculated in $m$ time.

Finally, we define a polynomial operator $\Tnm$ to represent the exclusive transition $\ket{\vn}\!\bra{\vm}$ strictly within the truncated subspace $\mathcal{S}$:
\begin{equation}
    \Tnm = \frac{1}{C_{\vn, \vm}} \hat{\Pi}_{\vn}(\hat{\bm{n}}) \Anm \hat{\Pi}_{\vm}(\hat{\bm{n}}).
\end{equation}

By sandwiching the bare ladder operators $\Anm$ between the local number projectors $\hat{\Pi}_{\vn}$ and $\hat{\Pi}_{\vm}$, we ensure that this operator only acts non-trivially on the specific state $\ket{\vm}$ and correctly maps it to $\ket{\vn}$. The projectors perfectly suppress the spurious transitions that would otherwise occur when $\Anm$ acts on higher-excited states. Also note, $\Tnm^\dagger = \frac{1}{C_{\vn\vm}}\hat{\Pi}_{\vm} \Anm^\dagger \hat{\Pi}_{\vn} = \hat{T}_{\vm, \vn}$, as expected. It is clear to see that the degree of $\Tnm$ is at most $6mN_{\mathrm{max}}$. The cost of calculating this requires multiplying the two $N_{\mathrm{max}}^m$ term projectors, which requires time $N_{\mathrm{max}}^{2m}$.

Finally, we construct the full, globally defined polynomial Hamiltonian by summing over all matrix elements of the target finite-dimensional Hamiltonian $\hat{H}$:
\begin{equation}
    \ph(\hat{\bm{a}}, \hat{\bm{a}}^\dagger) = \sum_{\vn, \vm \in I_{\mathcal{S}}} H_{\vn, \vm} \hat{T}_{\vn, \vm}.
\end{equation}

By construction, $\ph|_\mathcal{S} =\hat H$, and $\ph$ acts as some arbitrary $\hat H'$ on $\mathcal{S}^\perp$, as required by the theorem.

If there are $s$ nonzero terms in $\hh$, the total cost of evaluating $\ph$ is $sN_{\mathrm{max}}^{2m}$. For dense energy-preserving Hamiltonians, $s = O(N_{\mathrm{max}}^{2m})$. Since the size of $\hh$ is $N_{\mathrm{max}}^{2m}$, the cost of evaluating $\ph$ is $\poly(\text{size}(\hh))$.
\end{proof}

\section{Evaluating Fock entries of energy-preserving Hamiltonians}
\label{app:assumption-satisfied-long}
\renewcommand{\va}{{\bm{\alpha}}}
\renewcommand{\vb}{{\bm{\beta}}}

\begin{lem}[Direct matrix-exponential computation on a photon-number block]
    \label{lem:mat-exp-in-photon-block}
    Let
    \be
        H=\sum_{\va,\vb} c_{\va,\vb}
        (\hat a^\dagger)^\va \hat a^\vb
    \ee
    be an \(m\)-mode Hermitian energy-preserving Hamiltonian of total degree at most
    $d\ge1$, where $|c_{\va, \vb}| \le C_{\mathrm{max}}$, with $C_{\mathrm{max}} > 1$.\textsuperscript{\hyperref[fn:rational-inputs]{\ref*{fn:rational-inputs}}}
    Given \(\bm k,\bm\ell\in\mathbb N_0^m\), \(t \in\mathbb Q^+\), and \(p\ge 1\):
    If \(|\bm k|\neq |\bm\ell|\), the entry is zero. For \(n=|\bm k|=|\bm\ell|\ge0\), the following bounds hold.
    Let
    \be
        D_n=\binom{n+m-1}{m-1}
    \ee
    be the dimension of the \(n\)-photon subspace.
    One can compute a number \(z\) satisfying
    \be
        \left|
        z-(\exp(iHt))_{\bm k,\bm\ell}
        \right|\le 2^{-p}
    \ee
    using
    \be
        O\!\left(D_n T_{m,d} d+D_n^3(p+s)\right)
    \ee
    arithmetic operations on numbers of
    \be
        O\!\left(p+s+\log D_n+\log(p+s)\right)
    \ee
    bits, where
    \be
        T_{m,d}
        =
        \sum_{r\le d/2}
        \binom{r+m-1}{m-1}^2
        =
        O_m(d^{2m-1})
    \ee
    is the number of energy-preserving monomials, and
    \be
        s=
        O\!\left(
            1
            +\log(1+|t|)
            +\log(C_{\max})
            +\log T_{m,d}
            +d\log(1+n)
        \right).
    \ee
    In particular, for fixed \(m,d,C_{\text{max}}\), this is
    \be
        O\!\left(
            (1+n^{3m})\bigl(p+\log(1+|t|)+ d \log(1+n)\bigr)
        \right)
    \ee
    arithmetic operations.
\end{lem}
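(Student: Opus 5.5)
The plan is to reduce the problem to a finite-dimensional matrix exponential on the $n$-photon block and then run a scaling-and-squaring Taylor scheme with explicit forward-error control. First, since $H$ commutes with $\hat N$ (\cref{lem:passive-wick}), $e^{iHt}$ is block diagonal, which gives zero entries whenever $|\bm k|\neq|\bm\ell|$. For $n=|\bm k|=|\bm\ell|$, enumerate the $D_n$ occupation vectors of total weight $n$ and assemble the $D_n\times D_n$ Hermitian block $H_n$. Each monomial $(\hat a^\dagger)^{\va}\hat a^{\vb}$ with $|\va|=|\vb|=r$ maps a basis state $\ket{\bm w}$ to at most one basis state, with coefficient $\prod_j\sqrt{w_j!/(w_j-\beta_j)!}\sqrt{(w_j-\beta_j+\alpha_j)!/(w_j-\beta_j)!}$. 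Evaluating this product costs $O(d)$ operations, so applying all $T_{m,d}$ monomials to all $D_n$ basis states costs $O(D_nT_{m,d}d)$ operations. This accounts for the first term of the bound.

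Next, bound the norm. By \cref{lem:cutoff-monomial-bound}-type estimates, each monomial has norm at most $(n+d)^{d/2}$ on the block, so $\|H_nt\|\le |t|\,C_{\max}T_{m,d}(n+d)^{d/2}=:2^{s_0}$, where $s_0=O(\log(1+|t|)+\log C_{\max}+\log T_{m,d}+d\log(1+n))$. This is exactly the parameter $s$. Choose the number of squarings $j=\max(0,s_0+1)$, so that $\|A\|\le 1/2$ for $A=iH_nt/2^j$. Truncate the Taylor series of $e^A$ at degree $q=O(p+s)$; the remainder is then at most $2^{-q}$. Computing $T_q(A)$ by Horner's rule takes $O(q)$ matrix products, and the $j$ squarings take $O(s)$ further products, each product costing $O(D_n^3)$. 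This yields the second term $O(D_n^3(p+s))$.

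The main obstacle is the forward-error analysis across the $j$ squarings, since squaring amplifies errors. I would use the fact that the exact factors are unitary. If $X$ satisfies $\|X-U\|\le\delta$ with $U$ unitary and $\delta\le1$, then $\|X^2-U^2\|\le 2\delta+\delta^2\le 3\delta$. Adding the rounding error $\eta$ of one floating-point product (of order $D_n2^{-w}$ for working precision $w$, using entry bounds $\le 2$), the error after $j$ squarings is at most $3^j(2^{-q}+\eta)$. More carefully, the recursion is $\delta_{i+1}\le(2+\delta_i)\delta_i+\eta$, which stays below $4^{j}(2^{-q}+\eta)$ provided that quantity is $\le 1$. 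It therefore suffices to take $q$ and $w=O(p+s+\log D_n+\log(p+s))$ with $2j+\log D_n$ guard bits. Since $j=O(s)$, this gives the stated bit length. Hermiticity of $H_n$, together with $t$ real, ensures the exact $e^{A}$ and all its powers are unitary, which is what keeps the intermediate norms bounded.

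Finally, read off the $(\bm k,\bm\ell)$ entry, whose error is bounded by the operator-norm error. For fixed $m,d,C_{\max}$ we have $D_n=O(1+n^{m-1})$ and $T_{m,d}=O(1)$, so the total operation count is dominated by $D_n^3(p+s)\le O((1+n^{3m})(p+\log(1+|t|)+d\log(1+n)))$. This gives the final stated form.
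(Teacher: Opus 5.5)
Your proposal is correct and follows essentially the same route as the paper: block-diagonal reduction to the $n$-photon block, a norm bound on $H_n$, scaling and squaring with a Horner-evaluated Taylor polynomial of degree $O(p+s)$, and the squaring recursion $e_{j+1}\le 3e_j+\|R_j\|$ against exactly unitary reference factors with $O(s+\log D_n+\log(p+s))$ guard bits. The paper additionally rounds $A$ to an exactly skew-Hermitian $\widehat A$, whereas you can equivalently absorb the square-root rounding of the entries into $e_0$.

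One small fix: on the $n$-photon block, each energy-preserving monomial of degree $2r$ has norm at most $n^{r}\le n^{d/2}$ (as in \cref{lem:cutoff-monomial-bound} with $K=n$). You should use this rather than $(n+d)^{d/2}$, because the latter adds an extra $d\log d$ term that the stated $s$ does not cover when $d\gg n$.
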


\begin{proof}
    Because \(H\) is energy-preserving, every monomial
    \be
        (\hat a^\dagger)^{\va} \hat a^{\vb}
    \ee
    appearing in \(H\) satisfies \(|\va|=|\vb|\). Therefore it preserves total
    photon number. Hence \(H\) is block diagonal with respect to the decomposition
    \be
        \mathcal H=\bigoplus_{n\ge 0}\mathcal H_n,\qquad
        \mathcal H_n=\operatorname{span}\{|\bm r\rangle:|\bm r|=n\}.
    \ee
    Consequently
    \be
        (\exp(iHt))_{\bm k,\bm\ell}=0,
    \ee
    unless \(|\bm k|=|\bm\ell|\). The vacuum block is scalar and handled trivially. Assume from now on that
    \be
        |\bm k|=|\bm\ell|=n>0.
    \ee
    Let \(H_n\) denote the restriction of \(H\) to \(\mathcal H_n\). The dimension of
    this block is
    \be
        D_n=\binom{n+m-1}{m-1}.
    \ee

    We first bound \(\|H_n\|\). For a monomial
    \be
        M_{\va,\vb}=(\hat a^\dagger)^\va \hat a^\vb,
    \ee
    with \(|\va|=|\vb|=r\), its action on a Fock basis vector is
    \be
        M_{\va,\vb}|\bm{q}\rangle
        =
        \left[
            \prod_{j=1}^m
            \sqrt{
                \frac{q_j!}{(q_j-\vb_j)!}
                \frac{(q_j-\vb_j+\va_j)!}{(q_j-\vb_j)!}
            }
        \right]
        |\bm{q}-\vb+\va \rangle,
    \ee
    with the convention that the coefficient is zero unless \(q_j \ge \beta_j\ \forall\ j\).
    Since $M$ is in normal form, the maximum coefficient on the application
    of an annihilation or creation operator gives a coefficient of at most
    $\sqrt{n}$. Therefore,
    \be
        \|M_{\va, \vb}|_{\mathcal H_n}\|
        \le (n)^r
        \le (n)^{d/2}.
    \ee
    The number of energy-preserving monomials of total degree at most \(d\) is
    \be
        T_{m,d}
        =
        \sum_{r\le d/2}
        \binom{r+m-1}{m-1}^2
        =
        O_m(d^{2m-1}).
    \ee
    Thus, using \(|c_{\va,\vb}|\le C_{\max}\), we have
    \be
        \|H_n\|
        \le
        C_{\max}T_{m,d}n^{d/2}.
    \ee
    Set
    \be
        A:=itH_n.
    \ee
    Since \(H_n\) is Hermitian, \(A\) is skew-Hermitian. Therefore \(U = \exp(A)\) is
    unitary.
    To evaluate $U$, we use a scaling-and-squaring method, which approximates
    $U$ by approximating $\left(\exp\left(\frac{iH_nt}{2^s}\right)\right)^{2^s}$.
    Choose
    \be
        s=\max\left\{
            0,
            \left\lceil
            \log_2\bigl((|t|+1)C_{\max}T_{m,d}n^{d/2}\bigr)
            \right\rceil+1
        \right\}.
    \ee
    Then
    \be
        B:=2^{-s}A
    \ee
    satisfies
    \be
        \|B\|\le \tfrac12.
    \ee
    This allows us to approximate $\exp(B)$ by its Taylor expansion.

    Let
    \be
        T_K(B)=\sum_{j=0}^K\frac{B^j}{j!}
    \ee
    be the degree-\(K\) Taylor approximation to \(e^B\). Since \(\|B\|\le 1\),
    \be
        \|e^B-T_K(B)\|
        \le
        \sum_{j>K}\frac{\|B\|^j}{j!}
        \le
        \sum_{j>K}\frac{1}{j!}.
    \ee
    Using the elementary bound \(j!\ge 2^{j-1}\) for \(j\ge 1\), we get
    \be
        \sum_{j>K}\frac{1}{j!}
        \le
        \sum_{j>K}2^{1-j}
        =
        2^{1-K}.
    \ee
    We now need to pick a large enough $K$ such that the error growth during squaring is controlled.

    We now account for finite-precision arithmetic. Let \(w\) be the working precision,
    and after evaluating the Taylor polynomial and after every matrix squaring, round
    each entry to \(w\) bits. Thus the implemented squaring stage has the form
    \be
        Y_{j+1}=\operatorname{round}_w(Y_j^2)=Y_j^2+R_j .
    \ee
    For matrices of norm \(O(1)\), a dense matrix multiplication followed by entrywise
    rounding to \(w\) bits has the conservative rounding bound, using the matrix-product error analysis of \cite[Sections 3.5 and 3.6]{Higham22},
    \be
        \|R_j\|\le D_n^2 2^{-w}
    \ee

    and requires $O(\log(D_n))$ guard bits.
    We choose
    \be
        w=p+\lceil s\log_2 3\rceil + \lceil 2\log_2 D_n\rceil+O(\log(p+s))+O(1),
    \ee
    so that
    \be
        \|R_j\|\le 2^{-(p+3)}3^{-s}
    \ee
    throughout the squaring stage.

    We approximate the matrix $A$ to \(\widehat A\) whose entries on and above the diagonal are within \(2^{-w}\) of those of \(A\),
    with purely imaginary diagonal, and set \(\widehat A_{\bm\ell\bm k}=-\overline{\widehat A_{\bm k\bm\ell}}\).
    Each entry of \(A\) is a sum of at most \(T_{m,d}\) products of \(t\), a coefficient, and a
    square root. Each product is of modulus at most \(2^{s}\), so this only requires computing
    these factors to \(O(w+s+\log T_{m,d})\) bits. Then \(\widehat A\) is exactly skew-Hermitian and, bounding
    the operator norm by the Frobenius norm \(\|\cdot\|_{\mathrm F}\),
    \be
        \|\widehat A-A\|\le\|\widehat A-A\|_{\mathrm F}\le D_n2^{-w}\le 2^{-(p+3)}3^{-s}.
    \ee
    In particular, \(\widehat B:=2^{-s}\widehat A\) satisfies \(\|\widehat B\|\le\frac12+2^{-(p+3)}\le 1\),
    so the Taylor bound above also holds for \(\widehat B\).

    Let \(U_j=e^{2^j\widehat B}\). Since \(\widehat B\) is skew-Hermitian, each \(U_j\) is unitary. Writing
    \be
        e_j=\|Y_j-U_j\|,
    \ee
    we have
    \be
        Y_j^2-U_j^2
        =
        U_j(Y_j-U_j)+(Y_j-U_j)U_j+(Y_j-U_j)^2,
    \ee
    and therefore, as long as \(e_j\le 1\),
    \be
        e_{j+1}
        \le
        3e_j+\|R_j\|.
    \ee
    Choosing the Taylor degree
    \be
        K=p+\lceil s\log_2 3\rceil+5
    \ee
    makes the truncation error \(\|e^{\widehat B}-T_K(\widehat B)\|\le 2^{1-K}\le 2^{-(p+4)}3^{-s}\).

    We evaluate \(T_K(\widehat B)\) by Horner's rule, \(Q_0=\II\), \(Q_i=\II+\frac{1}{K-i+1}\widehat BQ_{i-1}\), so that
    \(Q_K=T_K(\widehat B)\), rounding each step to \(w\) bits to obtain \(\widetilde Q_i\). The exact terms
    satisfy \(\|Q_i\| = O(1)\), and since \(\|\widehat B/(K-i+1)\|\le 1\), the errors
    \(f_i=\|\widetilde Q_i-Q_i\|\) obey \(f_i\le f_{i-1}+D_n^22^{-w}\).
    Hence \(f_K\le KD_n^22^{-w}\le 2^{-(p+4)}3^{-s}\), using the \(O(\log(p+s))\) guard bits in \(w\).
    Taking \(Y_0=\widetilde Q_K\) therefore gives
    \be
        e_0\le\|\widetilde Q_K-T_K(\widehat B)\|+\|T_K(\widehat B)-e^{\widehat B}\|\le 2^{-(p+3)}3^{-s}.
    \ee
    Iterating the recurrence gives
    \be
        e_s
        \le
        3^s e_0+\frac{3^s-1}{2}\max_j\|R_j\|
        \le
        2^{-(p+2)}.
    \ee
    Finally, \(U_s=e^{\widehat A}\), and for skew-Hermitian \(X,Y\), \(\|e^X-e^Y\|\le\|X-Y\|\). Hence
    \be
        \|Y_s-e^{A}\|
        \le
        e_s+\|\widehat A-A\|
        \le
        2^{-(p+2)}+2^{-(p+3)}
        \le
        2^{-p}.
    \ee
    Thus rounding after each squaring gives a valid \(2^{-p}\)-additive approximation
    to every entry, while using only \(O(p+s+\log D_n+\log T_{m,d}+\log(p+s))\)-bit numbers.

    It remains to count arithmetic operations. The block \(H_n\) has dimension
    \(D_n\). Constructing it explicitly by evaluating all entries costs at most
    \be
        O(D_n T_{m,d} d)
    \ee
    arithmetic operations. The Taylor polynomial \(T_K(\widehat B)\) can be evaluated by Horner's method \cite{horner1819}\footnote{Although the result was previously known to Chinese and Arab mathematicians \cite{libbrecht2005chinese}.} using \(K\) dense matrix multiplications,
    and the final squaring stage uses \(s\) further dense matrix multiplications. Hence
    the matrix-exponential part costs
    \be
        O(D_n^3(K+s))
        =
        O(D_n^3(p+s)).
    \ee
    Combining these gives
    \be
        O(D_nT_{m,d}d+D_n^3(p+s)).
    \ee
\end{proof}

\section{Application: gate decomposition for Bose--Hubbard dynamics}
\label{app:bose-hubbard}
In this appendix, we apply the \model\ to the simulation of Bose--Hubbard dynamics. We first consider the standard Bose--Hubbard Hamiltonian and derive an energy-preserving Lie--Trotter decomposition with explicit error and resource bounds on finite photon-number sectors, extending the result to states with bounded mean photon number. We then treat a coherently driven, non-number-preserving Bose--Hubbard model by introducing a coherent-state ancilla that converts the dynamics into an energy-preserving simulation, and finally compare our construction with related bosonic simulation approaches.
\subsection{Standard Bose--Hubbard Hamiltonian}

The Bose--Hubbard Hamiltonian describes the behavior of interacting bosonic particles trapped in a periodic lattice potential and is defined as
\begin{align}
    \label{eq:BoseHubbard}
    \hat H_{\mathrm{BH}}
    &= \hat H_{\mathrm{hop}}+\hat H_{\mathrm{ons}} \\
    \hat H_{\mathrm{hop}}
    &= -J\sum_{(i,j)\in\mathcal P}\big(\hat a_i^\dagger \hat a_j+\hat a_j^\dagger \hat a_i\big) \\
    \hat H_{\mathrm{ons}}
    &= \frac{U}{2}\sum_{i=1}^m \hat n_i(\hat n_i-1)-\mu\sum_{i=1}^m \hat n_i,
\end{align}
with $\hat n_i=\hat a_i^\dagger \hat a_i$ and $\hat N=\sum_{i=1}^m \hat n_i$. Moreover, $\hat H_{\mathrm{hop}}$ is the hopping (kinetic) term and $\hat H_{\mathrm{ons}}$ is the on-site interaction and potential term. $J$ is the hopping amplitude that describes boson mobility between modes (sites) in the lattice, $U$ is the on-site interaction strength and $\mu$ is the chemical potential, which essentially sets the number of particles. We fix real parameters $J,U,\mu$ and a set $\mathcal P$ of interacting (hopping) pairs $(i,j)$ of modes and assume bounded coordination: each mode $i$ participates in at most $\nu=O(1)$ pairs in $\mathcal P$ where $\nu$ is the maximum number of distinct hopping links per mode (site).

We work in the energy-preserving gate model generated by
\begin{equation}
    \mathcal G=\{\hat a_j^\dagger \hat a_k, \hat n_j\hat n_k: 1\le j,k\le m \}.
\end{equation}
Since $\hat n_i=\hat a_i^\dagger \hat a_i$ and $\hat n_i^2=\hat n_i\hat n_i$, the elementary Bose--Hubbard terms are generated by $\mathcal G$.

For $K\in \mathbb N$, let $\Pi_{\le K}$ denote the projector onto the subspace with total photon number at most $K$
\begin{equation}
    \mathcal H_{\le K}^{(m)}
    = \mathrm{span}\{ |n_1,\dots,n_m\rangle : n_1+\cdots+n_m\le K\}.
\end{equation}
Since $[\hat H_{\mathrm{BH}},\hat N]=0$, the subspace
$\mathcal H_{\le K}^{(m)}$ is invariant under $e^{-it\hat H_{\mathrm{BH}}}$.

\begin{theorem}[Energy-preserving simulation with photon cutoff $K$]
    \label{thm:BH-truncated}
    Fix $K\in \mathbb N$, $t\ge 0$, and simulation error $\varepsilon\in(0,1)$.
    There exists an energy-preserving circuit $\hat{\widetilde U}(t)$ generated by
    $\mathcal G$ such that for every state $\rho$ supported on $\mathcal H_{\le K}^{(m)}$, we have
    \begin{equation}
        \label{eq:BH-goal1}
        \norm{\hat{U}_{\mathrm{BH}}(t)\rho \hat{U}_{\mathrm{BH}}^{\dagger}(t)
            -
        \hat{\widetilde U}(t)\rho \hat {\widetilde U}^\dagger(t)}_1
        \le \varepsilon,
    \end{equation}
    where $ \hat{U}_{\mathrm{BH}}=e^{-it\hat H_{\mathrm{BH}}}$. Moreover, $\hat{\widetilde U}(t)$ can be chosen to be the first-order Lie–Trotter product formula with $r$ steps, where it suffices to take
 \begin{equation}
        \label{eq:choose_r}
        r\ge
        \frac{C_4 t^2 m\nu}{\varepsilon}
        \Bigl(|J|(|U|K^2+|\mu|K)+\nu |J|^2K\Bigr),
    \end{equation}
for a sufficiently large universal constant $C_4$. In particular, for fixed $\nu, J, U$ and $\mu$, the number of Trotter steps is polynomial in $m, t, K$ and $1/\epsilon$.
\end{theorem}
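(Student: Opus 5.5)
Since $\hat H_{\mathrm{BH}}$ and every term we split it into commute with $\hat N$, everything happens inside the finite-dimensional invariant space $\mathcal H_{\le K}^{(m)}$. There the cutoff seminorm $\|\cdot\|_{(K)}$ of \cref{def:cutoff-norm} is an honest operator norm, so no domain issues arise.

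\emph{Step 1: the splitting.} We take $A=\hat H_{\mathrm{hop}}$ and $B=\hat H_{\mathrm{ons}}$, and let $\hat{\widetilde U}(t)=\bigl(\prod_{(i,j)\in\mathcal P}e^{-itJ_{ij}/r}\,e^{-itB/r}\bigr)^r$, where $J_{ij}=-J(\hat a_i^\dagger\hat a_j+\hat a_j^\dagger\hat a_i)$. All factors of $e^{-itB/r}$ commute with one another, being functions of the $\hat n_i$, so it splits exactly into single-mode gates $e^{-it(\frac U2\hat n_i^2-(\frac U2+\mu)\hat n_i)/r}$, each generated by $\mathcal G$. Each hopping factor is a beamsplitter-type gate generated by $\hat a_i^\dagger\hat a_j$ and its adjoint.

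\emph{Step 2: the Trotter bound.} We apply the commutator form of \cref{lem:cutoff-trotter}. The error at time $t$ with $r$ steps is at most $\frac{t^2}{2r}\sum_{\ell<\ell'}\|[H_\ell,H_{\ell'}]\|_{(K)}$, where $H_\ell$ runs over the hopping pairs together with the single block $B$.

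\emph{Step 3: the commutators (main obstacle).} This is the real work. We need bounds that scale like $m\nu$ rather than the crude $(\sum_\ell\|H_\ell\|)^2\sim m^2$.

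\begin{itemize}
\item \emph{Hopping with on-site.} For hopping with on-site terms, $[J_{ij},B]$ is nonzero only through modes $i,j$. Explicitly, $[\hat a_i^\dagger\hat a_j,\hat n_i^2-\hat n_j^2]$ is $\hat a_i^\dagger\hat a_j$ times a linear function of $\hat n_i,\hat n_j$. Using $\|\hat a_i^\dagger\hat a_j\Pi_{\le K}\|\le K$ (as in \cref{lem:cutoff-monomial-bound}), each such commutator has cutoff norm $O(|J|(|U|K^2+|\mu|K))$. Summing over the $|\mathcal P|\le m\nu$ pairs gives $O(m\nu|J|(|U|K^2+|\mu|K))$.
\item \emph{Hopping with hopping.} $[J_{ij},J_{kl}]=0$ unless the pairs share a mode, and a shared mode yields a hopping term of the form $J^2\hat a_i^\dagger\hat a_l$. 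Bounding each nonzero commutator by $O(|J|^2K)$ and counting at most $m\nu^2$ overlapping pairs gives $O(m\nu^2|J|^2K)$.
\end{itemize}

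In both cases the key is to use the operator norm on the sector (at most $K$) rather than naive products of $\sqrt K$ factors, so that the exponents of $K$ come out as displayed.

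\emph{Step 4: from operator to trace norm.} Adding the two contributions, the operator-norm error on $\mathcal H_{\le K}^{(m)}$ is at most $\frac{Ct^2m\nu}{r}\bigl(|J|(|U|K^2+|\mu|K)+\nu|J|^2K\bigr)$. For $\rho$ supported on that space and unitaries $U,V$ that preserve it, the same identity used in \cref{lem:cutoff-to-ec-diamond} gives $\|U\rho U^\dagger-V\rho V^\dagger\|_1\le2\|(U-V)\Pi_{\le K}\|$. Choosing $C_4=2C$ and $r$ as in \eqref{eq:choose_r} then yields \eqref{eq:BH-goal1}. Polynomiality in $m,t,K,1/\varepsilon$ is then immediate for fixed $\nu,J,U,\mu$.
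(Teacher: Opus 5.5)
Your proposal is correct and follows the paper's proof. The shared steps are a first-order Lie--Trotter product on the invariant space $\mathcal H_{\le K}^{(m)}$, locality-based counting of the nonzero commutators (hopping--on-site giving $O(m\nu|J||U|K^2)$, overlapping hoppings giving $O(m\nu^2|J|^2K)$), and the factor-$2$ conversion from the cutoff operator norm to trace norm.

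The only difference is cosmetic. Grouping the commuting on-site terms into one block $B$ produces literally the same circuit and a slightly tighter sum: in $[J_{ij},B]$ the $\mu$ and linear terms combine into $\hat N$ and drop out, leaving $\tfrac U2[J_{ij},\hat n_i^2+\hat n_j^2]$. Note it is $\hat n_i^2+\hat n_j^2$, not $\hat n_i^2-\hat n_j^2$ as you wrote, though your conclusion is unaffected.
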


\begin{proof}
    We write
    \begin{equation}\label{eq:evolution-finite}
        \hat H_{\mathrm{BH}}=\sum_{(i,j)\in\mathcal P}\hat h_{ij}+\sum_{i=1}^m \hat g_i,
        \qquad
        \hat h_{ij}=-J(\hat a_i^\dagger \hat a_j+\hat a_j^\dagger \hat a_i),
        \qquad
        \hat g_i=\frac{U}{2}\hat n_i(\hat n_i-1)-\mu \hat n_i.
    \end{equation}

    We now approximate the evolution on the finite-dimensional subspace $\Pi_{\le K}\mathcal H$. Let the Trotter step size be $\delta=t/r$, where $r$ is the number of Trotter steps. The first-order Lie--Trotter formula is
    \begin{equation}\label{eq:trotter}
        \hat U_r(t)=
        \Big[
            \Big(\prod_{(i,j)\in\mathcal P} e^{-i\delta \hat h_{ij}}\Big)
            \left(\prod_{i=1}^m e^{-i\delta \hat g_i}\right)
        \Big]^r.
    \end{equation}
    We take $\hat{\widetilde U}(t)=\hat U_r(t)$. Each factor $e^{-i\delta \hat h_{ij}}$ and $e^{-i\delta \hat g_i}$ is an allowed
    energy-preserving gate, since $\hat h_{ij}$ is a real linear combination of generators
    $\hat a_j^\dagger \hat a_k$ and $\hat g_i$ is a polynomial in $\hat n_i$.

    We estimate all norms on the finite-dimensional subspace $\mathcal H_{\le K}^{(m)}$.
    Recall that $\norm{\cdot}_{(K)}$ denotes the operator norm on $\mathcal H_{\le K}^{(m)}$.
    Since $0\le \hat n_i\le \hat N\le K$ on the subspace,
    we have
    \begin{equation}
        \label{eq:normboundN}
        \norm{\hat n_i}_{(K)}\le K.
    \end{equation}
    Also, for every normalized $|\psi\rangle$ supported on $\Pi_{\le K}\mathcal H$
    \begin{align}
        \label{eq:normbound11}
        &\norm{\hat a_i\ket{\psi}}_2^2
        =
        \bra{\psi}\hat a_i^\dagger \hat a_i\ket{\psi}
        =
        \bra{\psi}\hat n_i\ket{\psi}
        \le K,
        &\norm{\hat a_i}_{(K)}\le \sqrt K.
    \end{align}
Similarly,
    \begin{align}\label{eq:normbound22}
        &\norm{\hat a_i^\dagger\ket{\psi}}_2^2
        =
        \bra{\psi}\hat a_i\hat a_i^\dagger\ket{\psi}
        =
        \bra{\psi}(\hat n_i+\II)\ket{\psi}
        \le K+1,
        &\norm{\hat a_i^\dagger}_{(K)}\le \sqrt{K+1}.
    \end{align}
Consequently,
    \begin{equation}
        \norm{\hat h_{ij}}_{(K)}
        \le
        |J|\bigl(
            \norm{\hat a_i^\dagger}_{(K)}\norm{\hat a_j}_{(K)}
            +
            \norm{\hat a_j^\dagger}_{(K)}\norm{\hat a_i}_{(K)}
        \bigr)
        \le 2|J|\sqrt{K(K+1)},
    \end{equation}
    and
    \begin{equation}
        \label{eq:normboundg}
        \norm{\hat g_i}_{(K)}
        \le \frac{|U|}{2}\norm{\hat n_i(\hat n_i-1)}_{(K)}+|\mu|\norm{\hat n_i}_{(K)}
        \le \frac{|U|}{2}K^2+|\mu|K.
    \end{equation}

    A first-order product-formula estimate, proved in \cref{lem:first-order-bound}, gives a universal constant $C>0$ such that
    \begin{equation}\label{eq:product-formula1}
        \norm{\hat{U}_{\mathrm{BH}}(t)-\hat{\widetilde U}(t)}_{(K)}
        \le
        \frac{Ct^2}{r}
        \sum_{\substack{\hat{X},\hat{Y}\in \{\hat h_{ij}\}\cup\{\hat g_i\}\\ [\hat{X},\hat{Y}]\neq 0}}
        \norm{[\hat X,\hat Y]}_{(K)}.
    \end{equation}
Since $\hat g_i$ is diagonal in the occupation basis, we have
    \begin{equation}
        [\hat g_i,\hat g_k]=0 \qquad \text{for all } i,k.
    \end{equation}
Also, by locality,
    \begin{equation}
        [\hat h_{ij},\hat g_k]=0 \quad \text{unless } k\in\{i,j\},
        \qquad
        [\hat h_{ij},\hat h_{k\ell}]=0 \quad \text{unless } \{i,j\}\cap\{k,\ell\}\neq \emptyset.
    \end{equation}
Using the canonical commutation relations and the bounds above, one obtains universal constants $C_1,C_2>0$ for large $K$ such that
    \begin{equation}
        \norm{[\hat h_{ij},\hat g_i]}_{(K)} + \norm{[\hat h_{ij},\hat g_j]}_{(K)}\le
        C_1 |J|\bigl(|U|K^2+|\mu|K\bigr),
    \end{equation}
    and for overlapping hopping terms,
    \begin{equation}
        \norm{[\hat h_{ij},\hat h_{ik}]}_{(K)}\le
        C_2 |J|^2 K.
    \end{equation}
Because each site has a degree at most $\nu$, the number of nonzero
    commutators involving a fixed edge $(i,j)$ is $O(\nu)$, and
    $|\mathcal P|\le m\nu/2$.
Hence
    \begin{equation}
        \label{eq:summationXY}
        \sum_{\substack{\hat X,\hat Y\in \{\hat h_{ij}\}\cup\{\hat g_i\}\\ [\hat X,\hat Y]\neq 0}}
        \norm{[\hat X,\hat Y]}_{(K)}
        \le
        C_3\, m\nu\Bigl(|J|\,|U|\,K^2 + |J|\,|\mu|\,K + \nu |J|^2 K\Bigr),
    \end{equation}
for a universal constant $C_3>0$. Hence, it suffices to choose $r$ as in \cref{eq:choose_r}. In particular,
    \begin{equation}\label{eq:choose_Nr}
        r=\poly(m,t,K,1/\varepsilon)
    \end{equation}
    for fixed $\nu,J,U,\mu$.  Choosing $r$ as in \cref{eq:choose_r} and considering \cref{eq:product-formula1} and \cref{eq:summationXY}, we obtain
    \begin{equation}
        \label{eq:truncation_error}
        \norm{\hat{U}_{\mathrm{BH}}(t)-\hat{\widetilde U}(t)}_{(K)}\le \varepsilon/2.
    \end{equation}

    Now let $\sigma$ be any state supported on $\Pi_{\le K}\mathcal H$. For unitaries $\hat U,\hat V$ on this finite-dimensional subspace,
    \begin{equation}\label{eq:unitary_UV}
        \norm{\hat U\sigma \hat U^\dagger-\hat V\sigma \hat V^\dagger}_1
        \le
        2\norm{\hat U-\hat V}_{(K)}.
    \end{equation}
    Indeed,
    \begin{equation}\label{eq:unitary_proof}
        \hat U\sigma \hat U^\dagger-\hat V\sigma \hat V^\dagger = (\hat U-\hat V)\sigma \hat U^\dagger+\hat V\sigma(\hat U^\dagger-\hat V^\dagger),
    \end{equation}
    and the triangle inequality together with $\|\sigma\|_1=1$ and $\|\hat U\|=\|\hat V\|=1$ gives \cref{eq:unitary_UV}.

    Applying \cref{eq:unitary_UV} with $\hat U=\hat{U}_{\mathrm{BH}}$ and $\hat V=\hat{\widetilde U}(t)$ and using \cref{eq:truncation_error} proves the claim in \cref{eq:BH-goal1}.
\end{proof}

\begin{corollary}[States with bounded expected photon number]
    \label{cor:BH-expected-number}
    Let $\rho$ be any state satisfying
    \begin{equation}
        \Tr(\rho \hat N)\le \bar E.
    \end{equation}
    Then for every cutoff $K\in \mathbb N$ and every energy-preserving circuit
    $\hat{\widetilde U}_{(K)}(t)$ satisfying the conclusion of
    \cref{thm:BH-truncated} on $\mathcal H_{\le K}$ with error $\varepsilon_K$,
    one has
    \begin{equation}
        \label{eq:BH-exp-bound}
        \norm{
            \hat{U}_{\mathrm{BH}}(t)\rho\, \hat{U}^{\dagger}_{\mathrm{BH}}(t)
            -
        \hat{\widetilde U}_{(K)}(t)\rho\, \hat{\widetilde U}_{(K)}^{\dagger}(t)}_1
        \le
        \varepsilon_K + 4 \sqrt{\frac{\bar E}{K+1}}.
    \end{equation}
    In particular, if $K+1\ge 64\bar E/\varepsilon^2$ and $\varepsilon_K\le \varepsilon/2$,
    then
    \begin{equation}
        \norm{
        \hat{U}_{\mathrm{BH}}(t)\rho\, \hat{U}^\dagger_{\mathrm{BH}}(t) - \hat{\widetilde U}_{(K)}(t)\rho\, \hat{\widetilde U}_{(K)}^{\dagger}(t)}_1
        \le \varepsilon.
    \end{equation}
    Hence, the Bose--Hubbard evolution can be simulated with complexity
    $\poly(m,t,\bar E,1/\varepsilon)$ on states whose total photon number is
    bounded in expectation.
\end{corollary}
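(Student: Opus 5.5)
The plan is to reduce \cref{cor:BH-expected-number} to \cref{thm:BH-truncated} via a gentle-measurement truncation, in essentially the same way as \cref{lem:cutoff-to-ec-diamond}.

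Fix $\rho$ with $\Tr(\rho\hat N)\le\bar E$ and set $\rho_K:=\Pi_{\le K}\rho\,\Pi_{\le K}$. Since $\hat N$ has spectrum in $\NN_0$, Markov's inequality gives
\be
\Tr[(\II-\Pi_{\le K})\rho]\le \frac{\bar E}{K+1}.
\ee
The gentle measurement lemma then yields
\be
\norm{\rho-\rho_K}_1\le 2\sqrt{\frac{\bar E}{K+1}}.
\ee
Next write $\Delta(X):=\hat U_{\mathrm{BH}}X\hat U_{\mathrm{BH}}^\dagger-\hat{\widetilde U}_{(K)}X\hat{\widetilde U}_{(K)}^\dagger$. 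Because $\Delta$ is a difference of two unitary channels, it satisfies $\norm{\Delta(X)}_1\le 2\norm{X}_1$. The triangle inequality therefore gives
\be
\norm{\Delta(\rho)}_1\le\norm{\Delta(\rho_K)}_1+2\norm{\rho-\rho_K}_1 .
\ee

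For the first term, $\rho_K/\Tr\rho_K$ is a state supported on $\mathcal H^{(m)}_{\le K}$, or $\rho_K=0$. We apply the conclusion \cref{eq:BH-goal1} of \cref{thm:BH-truncated} with error $\varepsilon_K$ and use linearity together with $\Tr\rho_K\le1$. This gives $\norm{\Delta(\rho_K)}_1\le\varepsilon_K$. The hypothesis is stated only for normalized states, so the normalization step is where a little care is needed. Both $\hat U_{\mathrm{BH}}$ and the Trotter circuit preserve photon number, so they genuinely map $\mathcal H^{(m)}_{\le K}$ to itself and no leakage term appears. Combining the two estimates gives
\be
\norm{\Delta(\rho)}_1\le\varepsilon_K+4\sqrt{\frac{\bar E}{K+1}},
\ee
which is \cref{eq:BH-exp-bound}.

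For the \emph{in particular} statement, the assumption $K+1\ge 64\bar E/\varepsilon^2$ gives $4\sqrt{\bar E/(K+1)}\le\varepsilon/2$, and combined with $\varepsilon_K\le\varepsilon/2$ the total error is at most $\varepsilon$. For the complexity claim, take $K=\lceil 64\bar E/\varepsilon^2\rceil=\poly(\bar E,1/\varepsilon)$ and feed it into \cref{eq:choose_r} with error $\varepsilon/2$. This gives $r=\poly(m,t,\bar E,1/\varepsilon)$ Trotter steps, each consisting of $O(m\nu)$ gates. There is no real obstacle in this argument; the only care needed is the normalization of $\rho_K$ and the fact that $\rho$ may have support on infinitely many photon-number sectors, and both are handled by working with trace-class operators and sector invariance.
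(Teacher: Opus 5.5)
Your proposal is correct and follows essentially the same argument as the paper: Markov plus gentle measurement to truncate to $\Pi_{\le K}\rho\,\Pi_{\le K}$, then \cref{thm:BH-truncated} applied to the normalized low-sector part and rescaled by $\Tr\rho_K\le 1$, then a triangle inequality using unitary invariance of the trace norm (your bound $\|\Delta(X)\|_1\le 2\|X\|_1$ is exactly the paper's two outer terms $\|\rho-\rho_{\le K}\|_1$). The one addition is that you choose $K=\lceil 64\bar E/\varepsilon^2\rceil$ and feed it into \cref{eq:choose_r}, which spells out the $\poly(m,t,\bar E,1/\varepsilon)$ complexity claim that the paper leaves implicit.
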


\begin{proof}
    Let $\Pi_{\le K}$ be the projector onto $\mathcal H_{\le K}$ and write
    \begin{equation}
        \rho_{\le K}=\Pi_{\le K}\rho \Pi_{\le K},
        \qquad
        \rho_{>K}=(\II-\Pi_{\le K})\rho (\II-\Pi_{\le K}).
    \end{equation}
    Since $[\hat H_{\mathrm{BH}},\hat N]=0$, both the ideal evolution
    and the energy-preserving simulator from \cref{thm:BH-truncated} preserve the
    decomposition into number sectors.

    We set
    \begin{equation}
        p_K=\Tr\bigl((\II-\Pi_{\le K})\rho\bigr),
    \end{equation}
    which is the probability that the state has more than $K$ photons.

    By Markov's inequality,
    \begin{equation}
        p_K \le \frac{\Tr(\rho \hat N)}{K+1}\le \frac{\bar E}{K+1}.
    \end{equation}
    Let
    \begin{equation}
        \widetilde\rho_{\le K}=
        \rho_{\le K}/(1-p_K), \quad p_K<1
    \end{equation}
    be the normalized version of the low-photon part of $\rho$. If $p_K=1$, then $\rho_{\le K}=0$ and the low-sector contribution is zero. So, $\widetilde\rho_{\le K}$ is supported on $\mathcal H_{\le K}$, and
    \cref{thm:BH-truncated} implies
    \begin{equation}
        \norm{
            \hat{U}_{\mathrm{BH}}(t)\widetilde\rho_{\le K} \hat{U}^\dagger_{\mathrm{BH}}(t)
            -
        \hat{\widetilde U}_{(K)}(t)\widetilde\rho_{\le K}\hat{\widetilde U}_{(K)}^{\dagger}(t)}_1
        \le \varepsilon_K.
    \end{equation}
    Multiplying by $(1-p_K)$ gives
    \begin{equation}
        \norm{
            \hat{U}_{\mathrm{BH}}(t)\rho_{\le K} \hat{U}^\dagger_{\mathrm{BH}}(t)
            -
        \hat{\widetilde U}_{(K)}(t)\rho_{\le K}\hat{\widetilde U}_{(K)}^{\dagger}(t)}_1
        \le \varepsilon_K.
    \end{equation}
    To control the contribution outside the cutoff without assuming that $\rho$ is block diagonal with respect to $\Pi_{\le K}$, we use the gentle measurement lemma. Since
$\Tr(\Pi_{\le K}\rho)=1-p_K$, we have
\begin{equation}
\norm{\rho-\rho_{\le K}}_1 =
\norm{
\rho-\Pi_{\le K}\rho\Pi_{\le K}}_1
\le 2\sqrt{p_K}.
\label{eq:gentle-cutoff}
\end{equation}
Using the triangle inequality and invariance of the trace norm under unitary conjugation
\begin{equation}
\begin{aligned}
&
\norm{\hat U_{\mathrm{BH}}(t)\rho\hat U_{\mathrm{BH}}^\dagger(t)
-
\hat{\widetilde U}_{(K)}(t)\rho
\hat{\widetilde U}_{(K)}^{\dagger}(t)}_1
\\
&\le
\norm{\rho-\rho_{\le K}}_1
+
\norm{
\hat U_{\mathrm{BH}}(t)\rho_{\le K}\hat U_{\mathrm{BH}}^\dagger(t)
-
\hat{\widetilde U}_{(K)}(t)\rho_{\le K}
\hat{\widetilde U}_{(K)}^{\dagger}(t)}_1
+
\norm{\rho-\rho_{\le K}}_1.
\end{aligned}
\label{eq:cutoff-triangle}
\end{equation}
For $p_K<1$, the low-sector bound gives a middle contribution at most
$(1-p_K)\varepsilon_K\le\varepsilon_K$. For $p_K=1$, one has
$\rho_{\le K}=0$, so the middle term vanishes. Therefore
\begin{equation}
\begin{aligned}
&
\norm{
\hat U_{\mathrm{BH}}(t)\rho\hat U_{\mathrm{BH}}^\dagger(t)
-
\hat{\widetilde U}_{(K)}(t)\rho
\hat{\widetilde U}_{(K)}^{\dagger}(t)}_1
\le
\varepsilon_K+4\sqrt{p_K}
\le
\varepsilon_K+
4\sqrt{\frac{\bar E}{K+1}}.
\end{aligned}
\label{eq:expected-energy-tail}
\end{equation}
This proves the claimed bound.
\end{proof}

We now establish the first-order product-formula bound
used in the proof above and in the active Bose--Hubbard
simulation below. We work on the truncated subspace $\Pi_{\le K}\mathcal H$, so all operators are bounded and all norms below are understood as
$\norm{\cdot}_{(K)}$.

\begin{lem}[First-order product-formula bound]\label{lem:first-order-bound}
Let
\begin{equation}
    \hat H=\sum_{\ell=1}^L \hat X_\ell
\end{equation}
be a sum of bounded Hermitian operators on $\Pi_{\le K}\mathcal H$, and define
\begin{equation}
    \hat S(\delta)=\prod_{\ell=1}^L e^{-i\delta \hat X_\ell}.
\end{equation}
Then there exists a universal constant $C>0$ such that, for every $t\in\mathbb R$, integer $r\ge 1$, and $\delta=t/r$,
\begin{equation}
    \norm{e^{-it\hat H}-\hat S(\delta)^r}_{(K)}
    \le
    \frac{Ct^2}{r}
    \sum_{1\le k<\ell\le L}\norm{[\hat X_k,\hat X_\ell]}_{(K)}.
\end{equation}
Equivalently, one may restrict the sum to pairs with $[\hat X_k,\hat X_\ell]\neq 0$.
\end{lem}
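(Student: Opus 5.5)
The plan is the standard two-stage argument: first bound the error of a single Trotter step, then telescope over the $r$ steps. Everything lives on the finite-dimensional space $\Pi_{\le K}\mathcal H$, where all $\hat X_\ell$ are bounded Hermitian operators. So the cutoff norm $\norm{\cdot}_{(K)}$ is an honest operator norm there, and all exponentials $e^{-i\delta \hat X_\ell}$ are unitary. In fact \cref{lem:cutoff-trotter} (from \cite{ChildsEtAl2021}) already states exactly this bound with $C=1/2$. The cleanest route is to invoke it with $t\mapsto -t$, which changes nothing since only $t^2$ appears. For completeness I would still sketch the self-contained argument.

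\emph{Telescoping.} Write $U=e^{-i\delta\hat H}$ and $S=\hat S(\delta)$. Then
\begin{equation}
U^r-S^r=\sum_{j=0}^{r-1}U^{r-1-j}(U-S)S^j .
\end{equation}
Since $U$ and $S$ are unitary, this gives $\norm{e^{-it\hat H}-\hat S(\delta)^r}_{(K)}\le r\,\norm{U-S}_{(K)}$. It therefore suffices to show
\begin{equation}
\norm{e^{-i\delta\hat H}-\hat S(\delta)}_{(K)}\le C\delta^2\sum_{k<\ell}\norm{[\hat X_k,\hat X_\ell]}_{(K)} .
\end{equation}
Multiplying by $r$ and using $\delta=t/r$ then yields the claimed $Ct^2/r$ bound.

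\emph{One-step bound.} Set $S(\tau)=\prod_\ell e^{-i\tau \hat X_\ell}$ and write $S(\tau)=e^{-i\tau\hat H}\,\mathcal E(\tau)$. Differentiating gives $S'(\tau)=-i\,\Gamma(\tau)S(\tau)$, where
\begin{equation}
\Gamma(\tau)=\sum_\ell P_{<\ell}(\tau)\,\hat X_\ell\,P_{<\ell}(\tau)^\dagger,
\end{equation}
and $P_{<\ell}$ is the partial product of the factors preceding $\hat X_\ell$. Duhamel's formula then gives
\begin{equation}
e^{-i\delta\hat H}-S(\delta)=i\int_0^\delta e^{-i(\delta-\tau)\hat H}\bigl(\Gamma(\tau)-\hat H\bigr)S(\tau)\,d\tau .
\end{equation}
Next, $\Gamma(\tau)-\hat H=\sum_\ell\bigl(P_{<\ell}\hat X_\ell P_{<\ell}^\dagger-\hat X_\ell\bigr)$. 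Each summand is a conjugation-minus-identity. Expanding $P_{<\ell}$ factor by factor and applying the one-variable Duhamel identity
\begin{equation}
e^{-is\hat X_k}\hat Y e^{is\hat X_k}-\hat Y=-i\int_0^s e^{-iu\hat X_k}[\hat X_k,\hat Y]e^{iu\hat X_k}\,du
\end{equation}
gives $\norm{\Gamma(\tau)-\hat H}_{(K)}\le|\tau|\sum_{k<\ell}\norm{[\hat X_k,\hat X_\ell]}_{(K)}$, using unitary invariance of the norm. Integrating over $\tau\in[0,\delta]$ produces the factor $\delta^2/2$, so $C=1/2$ works.

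The only step needing care is the ordering bookkeeping in $\Gamma$, so that exactly the commutators with $k<\ell$ appear, each once. Commuting pairs contribute zero, which justifies the restriction of the sum to pairs with $[\hat X_k,\hat X_\ell]\neq 0$. Everything is finite-dimensional, so no domain issues arise.
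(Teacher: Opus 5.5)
Your argument is correct and gives the same constant $C=1/2$, but you reach the one-step estimate by a different route than the paper.

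The paper first proves a two-operator bound, $\|e^{-i\delta(\hat A+\hat B)}-e^{-i\delta\hat A}e^{-i\delta\hat B}\|_{(K)}\le\frac{\delta^2}{2}\|[\hat A,\hat B]\|_{(K)}$, by differentiating $\hat F(s)=e^{-is(\hat A+\hat B)}e^{is\hat B}e^{is\hat A}$. It then handles $L$ terms by telescoping over the partial sums $\hat Y_q=\sum_{\ell\le q}\hat X_\ell$. The two-term bound is applied to each pair $(\hat Y_{q-1},\hat X_q)$, and only at the end is $[\hat Y_{q-1},\hat X_q]=\sum_{k<q}[\hat X_k,\hat X_q]$ split by the triangle inequality.

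You instead treat all $L$ factors at once. You write down the generator $\Gamma(\tau)$ of the product formula and compare it with $\hat H$ by variation of parameters. You then strip the conjugations in $P_{<\ell}\hat X_\ell P_{<\ell}^\dagger$ one factor at a time. This works because each single-factor conjugation acts on the bare $\hat X_\ell$, with the other unitaries removed by unitary invariance.

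What each route buys: the paper's keeps the calculus to one two-operator computation. Before its final triangle inequality it also gives the sharper intermediate bound $\frac{t^2}{2r}\sum_q\|[\sum_{k<q}\hat X_k,\hat X_q]\|_{(K)}$, which can exploit cancellations among commutators. Your route lands directly on the sum of individual commutator norms, which is exactly what the lemma states. It is also the variation-of-parameters template used for higher-order commutator-scaling bounds.

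One caveat on your proposed shortcut. \cref{lem:cutoff-trotter} as stated in the paper assumes each summand is an energy-preserving Hamiltonian. The present lemma allows arbitrary bounded Hermitian operators on $\Pi_{\le K}\mathcal H$, which may mix photon-number sectors. So you should either cite the underlying finite-dimensional result of \cite{ChildsEtAl2021} directly, or rely on your self-contained argument, which needs no such restriction.
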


\begin{proof}
It suffices to consider $t\ge 0$, since negative times are covered by replacing each $\hat X_\ell$ with $-\hat X_\ell$.
We first bound the error of a single Trotter step.

\proofstep{Two-term case.}
Let $\hat A$ and $\hat B$ be bounded Hermitian operators, and define
\begin{equation}
    \hat F(s)=e^{-is(\hat A+\hat B)}e^{is\hat B}e^{is\hat A},
    \qquad s\in[0,\delta].
\end{equation}
Since $\hat F(0)=\II$, setting $s=\delta$ and multiplying on the right by $e^{-i\delta \hat A}e^{-i\delta \hat B}$ gives
\begin{equation}
    e^{-i\delta(\hat A+\hat B)}-e^{-i\delta \hat A}e^{-i\delta \hat B}
    =
    \bigl(\hat F(\delta)-\II\bigr)e^{-i\delta \hat A}e^{-i\delta \hat B},
\end{equation}
and hence
\begin{equation}
    \norm{e^{-i\delta(\hat A+\hat B)}-e^{-i\delta \hat A}e^{-i\delta \hat B}}_{(K)}=
    \norm{\hat F(\delta)-\II}_{(K)}.
\end{equation}
Differentiating gives
\begin{equation}
    \hat F'(s)
    =
    e^{-is(\hat A+\hat B)}
    \Bigl(-i(\hat A+\hat B)+i\hat B+ie^{is\hat B}\hat Ae^{-is\hat B}\Bigr)
    e^{is\hat B}e^{is\hat A}.
\end{equation}
Therefore
\begin{equation}
    \norm{\hat F'(s)}_{(K)}
    =
    \norm{e^{is\hat B}\hat Ae^{-is\hat B}-\hat A}_{(K)}.
\end{equation}
Using the identity
\begin{equation}
    e^{is\hat B}\hat Ae^{-is\hat B}-\hat A
    =
    i\int_0^s e^{iu\hat B}[\hat B,\hat A]e^{-iu\hat B} du,
\end{equation}
we obtain
\begin{equation}
    \norm{\hat F'(s)}_{(K)}
    \le
    \int_0^s \norm{[\hat A,\hat B]}_{(K)}du
    =
    s \norm{[\hat A,\hat B]}_{(K)}.
\end{equation}
Integrating from $0$ to $\delta$,
\begin{equation}
    \norm{\hat F(\delta)-\II}_{(K)}
    \le
    \int_0^\delta \norm{\hat F'(s)}_{(K)}ds
    \le
    \frac{\delta^2}{2}\norm{[\hat A,\hat B]}_{(K)}.
\end{equation}
Thus
\begin{equation} \label{eq:upper_commutator}
    \norm{e^{-i\delta(\hat A+\hat B)}-e^{-i\delta \hat A}e^{-i\delta \hat B}}_{(K)}
    \le
    \frac{\delta^2}{2}\norm{[\hat A,\hat B]}_{(K)}.
\end{equation}

\proofstep{General case.}
Now let
\begin{equation}
    \hat Y_q=\sum_{\ell=1}^q \hat X_\ell,
    \qquad q=1,\dots,L.
\end{equation}
Using the telescoping identity
\begin{equation}
    e^{-i\delta \hat Y_L}-\prod_{\ell=1}^L e^{-i\delta \hat X_\ell}
    =
    \sum_{q=2}^L
    \left(
        e^{-i\delta \hat Y_q}
        -
        e^{-i\delta \hat Y_{q-1}}e^{-i\delta \hat X_q}
    \right)
    \prod_{\ell=q+1}^L e^{-i\delta \hat X_\ell},
\end{equation}
and the unitary invariance of the norm, we get
\begin{equation}
    \norm{e^{-i\delta \sum_{\ell=1}^L \hat X_\ell}-\hat S(\delta)}_{(K)}
    \le
    \sum_{q=2}^L
    \norm{e^{-i\delta \hat Y_q}-e^{-i\delta \hat Y_{q-1}}e^{-i\delta \hat X_q}}_{(K)}.
\end{equation}
Applying the two-term bound \cref{eq:upper_commutator} with $\hat A=\hat Y_{q-1}$ and $\hat B=\hat X_q$ gives
\begin{equation}
    \norm{e^{-i\delta \hat Y_q}-e^{-i\delta \hat Y_{q-1}}e^{-i\delta \hat X_q}}_{(K)}
    \le
    \frac{\delta^2}{2}\norm{[\hat Y_{q-1},\hat X_q]}_{(K)}.
\end{equation}
Since
\begin{equation}
    [\hat Y_{q-1},\hat X_q]=\sum_{k=1}^{q-1}[\hat X_k,\hat X_q],
\end{equation}
the triangle inequality yields
\begin{equation}
    \norm{[\hat Y_{q-1},\hat X_q]}_{(K)}
    \le
    \sum_{k=1}^{q-1}\norm{[\hat X_k,\hat X_q]}_{(K)}.
\end{equation}
Therefore
\begin{equation}
    \norm{e^{-i\delta \sum_{\ell=1}^L \hat X_\ell}-\hat S(\delta)}_{(K)}
    \le
    \frac{\delta^2}{2}
    \sum_{q=2}^L \sum_{k=1}^{q-1}\norm{[\hat X_k,\hat X_q]}_{(K)},
\end{equation}
that is
\begin{equation}\label{eq:upper_delta}
    \norm{e^{-i\delta \sum_{\ell=1}^L \hat X_\ell}-\hat S(\delta)}_{(K)}
    \le
    \frac{\delta^2}{2}
    \sum_{1\le k<\ell\le L}\norm{[\hat X_k,\hat X_\ell]}_{(K)}.
\end{equation}

\proofstep{$r$ Trotter steps.}
Set $\delta=t/r$. Then
\begin{equation}
    e^{-it\hat H}-\hat S(\delta)^r
    =
    \sum_{q=0}^{r-1}
    \hat S(\delta)^q
    \bigl(e^{-i\delta \hat H}-\hat S(\delta)\bigr)
    e^{-i\delta \hat H(r-1-q)}.
\end{equation}
Taking norms and using that $e^{-i\delta \hat H}$ and $\hat S(\delta)$ are unitary on $\Pi_{\le K}\mathcal H$,
\begin{equation}
    \norm{e^{-it\hat H}-\hat S(\delta)^r}_{(K)}
    \le
    r \norm{e^{-i\delta \hat H}-\hat S(\delta)}_{(K)}.
\end{equation}
Applying \cref{eq:upper_delta},
\begin{equation}
    \norm{e^{-it\hat H}-\hat S(\delta)^r}_{(K)}
    \le
    r\cdot \frac{\delta^2}{2}
    \sum_{1\le k<\ell\le L}\norm{[\hat X_k,\hat X_\ell]}_{(K)}.
\end{equation}
Since $\delta=t/r$, we have $r\delta^2=t^2/r$, and hence
\begin{equation}
    \norm{e^{-it\hat H}-\hat S(\delta)^r}_{(K)}
    \le
    \frac{t^2}{2r}
    \sum_{1\le k<\ell\le L}\norm{[\hat X_k,\hat X_\ell]}_{(K)}.
\end{equation}
This proves the claim with $C=\frac12$. Replacing the full sum by the sum over nonzero commutators only does not change the value.
Applying this lemma to $\hat H=\hat H_{\mathrm{BH}}=\sum_{(i,j)\in\mathcal P}\hat h_{ij}+\sum_{i=1}^m \hat g_i$
gives \cref{eq:product-formula1}.
\end{proof}

\subsection{Active Bose--Hubbard Hamiltonian}

In \cref{thm:BH-truncated}, we considered the standard Bose--Hubbard Hamiltonian, which is already energy-preserving.
We now show how to simulate a simple active/driven variant in the same energy-preserving model by adding a coherent drive and using a coherent state ancilla.

We define the active Bose--Hubbard Hamiltonian
\begin{equation}
    \label{eq:active-BH}
    \hat H_{\mathrm{actBH}}= \hat H_{\mathrm{BH}} + \hat H_{\mathrm{drv}},
\end{equation}
\begin{equation}
    \label{eq:BH-drive}
    \hat H_{\mathrm{drv}}=
    \eta \sum_{i=1}^m (\hat a_i + \hat a_i^\dagger),
\end{equation}
where $\eta\in\mathbb R$ is the drive strength and measures how strongly the external classical field (such as a laser) couples to the bosonic modes and $\hat H_{\mathrm{BH}}$ is the standard Bose--Hubbard Hamiltonian defined in \cref{eq:BoseHubbard}. The only non-energy-preserving part of $\hat H_{\mathrm{actBH}}$ is the coherent drive \cref{eq:BH-drive}.

\paragraph{Energy-preserving Hamiltonian with one coherent ancilla.} We introduce one ancilla mode $\hat b$, prepared in a coherent state $\ket{\alpha}_b$ with
real $\alpha>0$, and define the energy-preserving Hamiltonian
\begin{equation}
    \label{eq:preserving-active-BH}
    \hat{\widetilde H}_\alpha=
    \hat H_{\mathrm{BH}}\otimes\II_b +\hat R_\alpha,
\end{equation}
\begin{equation}
    \label{eq:Ralpha}
    \hat R_\alpha = \frac{\eta}{\alpha}\sum_{i=1}^m
    \bigl(\hat a_i \hat b^\dagger + \hat b \hat a_i^\dagger \bigr).
\end{equation}
Every monomial in \cref{eq:preserving-active-BH} has the same number of creation and annihilation
operators, so $\hat{\widetilde H}_\alpha$ is energy-preserving on the system plus ancilla. Hence, it is an allowed energy-preserving Hamiltonian in \model.

For a system state $\ket{\psi}$, we define the active evolution
\begin{equation}
    \label{eq:ideal-active-evolution}
    \ket{\psi(t)} = e^{-it\hat H_{\mathrm{actBH}}}\ket{\psi},
    \qquad
    E(t) = \sup_{0\le s\le t}\bra{\psi(s)}\hat N\ket{\psi(s)},
\end{equation}
where $\hat N=\sum_{i=1}^m \hat n_i$ is the system total photon number operator.

The quantity $E(t)$ can be bounded directly in terms of the initial
mean photon number. To make the argument rigorous, first restrict to
the finite-number subspace $\Pi_{\le K}\mathcal H$ and define
\begin{equation}
\hat H_{\mathrm{actBH}}^{(K)}
=
\Pi_{\le K}\hat H_{\mathrm{actBH}}\Pi_{\le K}.
\end{equation}
On this finite-dimensional subspace, all operators are bounded, so for
$n_K(s)=\langle\psi_K(s)|\hat N|\psi_K(s)\rangle$, we may differentiate to obtain
\begin{equation}
\left|\frac{d}{ds}n_K(s)\right|=
\left|
i\langle\psi_K(s)|
[\hat H_{\mathrm{actBH}}^{(K)},\hat N]
|\psi_K(s)\rangle
\right|
\le
2|\eta|\sqrt{m\,n_K(s)}.
\end{equation}
Here we used $[\hat H_{\mathrm{BH}},\hat N]=0$ and the
Cauchy--Schwarz inequality. Hence, integrating the corresponding
inequality for $\sqrt{n_K(s)}$ gives
\begin{equation}
\sqrt{n_K(s)}
\le
\sqrt{n_K(0)}+|\eta|\sqrt{m}\,s.
\end{equation}
Taking $K\to\infty$ for any normalized input with
$\langle\psi|\hat N|\psi\rangle<\infty$ yields
\begin{equation}
E(t)
\le
\left(
\sqrt{E(0)}+|\eta|\sqrt{m}\,t
\right)^2.
\end{equation}
The same cutoff argument justifies the Duhamel identity used below:
it is first applied to the bounded truncated operators on a common
finite-number subspace and the cutoff is then removed.
\begin{theorem}[Energy-preserving simulation of active Bose--Hubbard Hamiltonian]
    \label{thm:active-BH-passive}
     For every normalized system state $|\psi\rangle$ with $\langle\psi|\hat N|\psi\rangle<\infty$ and $t\ge0$,
    \begin{equation}
        \label{eq:main-active-BH-bound}
        \norm{e^{-it\hat{\widetilde H}_\alpha}\bigl(\ket{\psi}\otimes\ket{\alpha}_b\bigr)
        -\bigl(e^{-it\hat H_{\mathrm{actBH}}}\ket{\psi}\bigr)\otimes\ket{\alpha}_b}\le \frac{|\eta| t \sqrt{m E(t)}}{\alpha}.
    \end{equation}
\end{theorem}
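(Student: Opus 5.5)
The plan is a Duhamel comparison in the spirit of \cref{thm:passive-compilation}, specialised to the drive term so that the constant is explicit. Write $\Psi(s)=\ket{\psi(s)}\otimes\ket{\alpha}_b$ with $\ket{\psi(s)}=e^{-is\hat H_{\mathrm{actBH}}}\ket\psi$. Formally
\begin{equation}
e^{-it\hat{\widetilde H}_\alpha}\Psi(0)-\Psi(t)=-i\int_0^t e^{-i(t-s)\hat{\widetilde H}_\alpha}\bigl(\hat{\widetilde H}_\alpha-\hat H_{\mathrm{actBH}}\otimes\II_b\bigr)\Psi(s)\,ds,
\end{equation}
so the error is at most $\int_0^t\norm{(\hat{\widetilde H}_\alpha-\hat H_{\mathrm{actBH}}\otimes\II_b)\Psi(s)}\,ds$. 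To justify this I would use the finite-number truncation described just before the theorem. First run the argument with $\hat H_{\mathrm{actBH}}^{(K)}$ and the compressed $\hat{\widetilde H}_\alpha$ on a common finite photon-number subspace, truncating the ancilla as well, where every operator is bounded. Then remove the cutoffs by strong convergence, as in the sector-wise argument of \cref{thm:passive-compilation}.

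The key step is the residual. The Bose--Hubbard parts cancel exactly, leaving
\begin{equation}
\Bigl(\hat R_\alpha-\eta\sum_i(\hat a_i+\hat a_i^\dagger)\Bigr)\Psi(s)=\eta\sum_i\Bigl(\hat a_i\ket{\psi(s)}\otimes\bigl(\tfrac{\hat b^\dagger}{\alpha}-1\bigr)\ket\alpha+\hat a_i^\dagger\ket{\psi(s)}\otimes\bigl(\tfrac{\hat b}{\alpha}-1\bigr)\ket\alpha\Bigr).
\end{equation}
The second term vanishes because $\hat b\ket\alpha=\alpha\ket\alpha$. For the first, write $\ket\alpha=\hD(\alpha)\ket0$ and use $\hD(\alpha)^\dagger\hat b^\dagger\hD(\alpha)=\hat b^\dagger+\alpha$ to get $(\hat b^\dagger/\alpha-1)\ket\alpha=\hD(\alpha)\ket1/\alpha$, which has norm exactly $1/\alpha$. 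The residual therefore equals $(\eta/\alpha)\bigl(\sum_i\hat a_i\ket{\psi(s)}\bigr)\otimes\hD(\alpha)\ket1$.

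Its norm is $(|\eta|/\alpha)\norm{\sum_i\hat a_i\ket{\psi(s)}}$. By the triangle inequality and Cauchy--Schwarz this is at most $(|\eta|/\alpha)\sqrt{m\sum_i\norm{\hat a_i\ket{\psi(s)}}^2}=(|\eta|/\alpha)\sqrt{m\bra{\psi(s)}\hat N\ket{\psi(s)}}\le(|\eta|/\alpha)\sqrt{mE(t)}$. Integrating over $[0,t]$ gives \cref{eq:main-active-BH-bound}. Note that $E(t)$ is finite by the a priori bound $E(t)\le(\sqrt{E(0)}+|\eta|\sqrt m\,t)^2$ derived above the theorem.

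The main obstacle is the rigorous justification of Duhamel for these unbounded operators, in particular that $\ket{\psi(s)}$ lies in the domain of each $\hat a_i$ and depends continuously on $s$. I would handle this with the cutoff-then-limit route already sketched before the theorem. In the truncated problem the residual bound holds with $E_K$ in place of $E$, and $E_K\to E$ as $K\to\infty$. The truncation of the ancilla costs a tail of the Poisson distribution, which vanishes in the limit.
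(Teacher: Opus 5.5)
Your proposal is correct and follows the paper's proof: Duhamel's formula reduces the error to $\int_0^t\frac{|\eta|}{\alpha}\norm{\sum_i\hat a_i\ket{\psi(s)}}\,ds$, which Cauchy--Schwarz bounds by $\frac{|\eta|t\sqrt{mE(t)}}{\alpha}$, and the unbounded-operator issues are handled by the same cutoff-then-limit justification the paper sketches just before the theorem. The only, cosmetic, difference is that the paper first conjugates by $\hD_b(\alpha)$, so the ancilla starts in vacuum and the residual is $\hat R_\alpha(\ket{\psi(s)}\otimes\ket0_b)=\frac{\eta}{\alpha}\sum_i\hat a_i\ket{\psi(s)}\otimes\ket1_b$, whereas you work in the lab frame and get exactly $\hD_b(\alpha)$ applied to this vector.
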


\begin{proof}
    We apply $\hat D^\dagger_b(\alpha)=e^{\alpha \hat b-\alpha \hat b^\dagger}$ on the ancilla as $\hat D^\dagger_b(\alpha)\ket{\alpha}=\ket{0}_b$. Since displacement is unitary, it preserves norms so that the simulation error can be bounded equivalently in this displaced frame. Consequently, it suffices to bound
    \begin{equation}
        \norm{
            e^{-it(\hat H_{\mathrm{actBH}}\otimes\II_b+\hat R_\alpha)}
            (\ket{\psi}\otimes\ket{0}_b)
        - (e^{-it\hat H_{\mathrm{actBH}}}\ket{\psi})\otimes\ket{0}_b},
    \end{equation}
    where we used $\hat D^\dagger_b(\alpha) \hat{\widetilde H}_\alpha \hat D_b(\alpha)=\hat H_{\mathrm{actBH}}\otimes\II_b+\hat R_\alpha$.

    Set $\ket{\psi(s)}=e^{-is\hat H_{\mathrm{actBH}}}\ket{\psi}.$ Let $A=\hat H_{\mathrm{actBH}}\otimes\II_b$ and $B=\hat R_\alpha$. Then Duhamel's formula gives
    \begin{equation}
        e^{-it(A+B)}-e^{-itA}
        =
        -i\int_0^t e^{-i(t-s)(A+B)}\,B\,e^{-isA}\,ds.
    \end{equation}
    Applying both sides to the initial state $ \ket{\psi}\otimes\ket{0}_b $, taking norms and using unitarity,
    \begin{equation}
        \norm{
            e^{-it(A+B)}(\ket{\psi}\otimes\ket{0}_b)
            -
        e^{-itA}(\ket{\psi}\otimes\ket{0}_b)}
        \le
        \int_0^t
        \norm{
        B e^{-isA}(\ket{\psi}\otimes\ket{0}_b)} ds.
    \end{equation}
    Since
    \begin{equation}
        e^{-isA}(\ket{\psi}\otimes\ket{0}_b)
        =
        \bigl(e^{-is\hat H_{\mathrm{actBH}}}\ket{\psi}\bigr)\otimes\ket{0}_b
        =\ket{\psi(s)}\otimes\ket{0}_b,
    \end{equation}
    we obtain

    \begin{equation}
        \label{eq:duhamel-active-BH}
        \norm{ e^{-it(\hat H_{\mathrm{actBH}}\otimes\II_b+\hat R_\alpha)} (\ket{\psi}\otimes\ket{0}_b)
        - (e^{-it\hat H_{\mathrm{actBH}}}\ket{\psi})\otimes\ket{0}_b} \le \int_0^t \norm{
        \hat R_\alpha\bigl(\ket{\psi(s)}\otimes\ket{0}_b\bigr)} ds.
    \end{equation}
    Since $\hat b\ket{0}_b=0$ and $\hat b^\dagger\ket{0}_b=\ket{1}_b$,
    \begin{equation}
        \hat R_\alpha\bigl(\ket{\psi(s)}\otimes\ket{0}_b\bigr)
        = \frac{\eta}{\alpha} \left(\sum_{i=1}^m \hat a_i \ket{\psi(s)}\right)\otimes \ket{1}_b.
    \end{equation}
    Hence
    \begin{equation}
        \norm{
        \hat R_\alpha\bigl(\ket{\psi(s)}\otimes\ket{0}_b\bigr)} = \frac{|\eta|}{\alpha}
        \norm{\sum_{i=1}^m \hat a_i \ket{\psi(s)}}.
    \end{equation}
    Using the Cauchy--Schwarz inequality, we obtain
    \begin{equation}
        \norm{ \sum_{i=1}^m \hat a_i \ket{\psi(s)}}^2 = \sum_{i,j=1}^m \bra{\psi(s)}\hat a_i^\dagger \hat a_j\ket{\psi(s)}\le
        m \sum_{i=1}^m \bra{\psi(s)}\hat a_i^\dagger \hat a_i\ket{\psi(s)}= m \bra{\psi(s)}\hat N\ket{\psi(s)}.
    \end{equation}
    Therefore

    \begin{equation}
        \norm{\hat R_\alpha\bigl(\ket{\psi(s)}\otimes\ket{0}_b\bigr)} \le \frac{|\eta| \sqrt m}{\alpha} \sqrt{\bra{\psi(s)}\hat N\ket{\psi(s)}}
        \le \frac{|\eta| \sqrt{m\,E(t)}}{\alpha}.
    \end{equation}
    Plugging this into \cref{eq:duhamel-active-BH} yields \cref{eq:main-active-BH-bound}. The final error bound is similar to the displacement example in \cref{app:examples}.
\end{proof}

\paragraph{Energy-preserving gate decomposition.}
The energy-preserving Hamiltonian \cref{eq:preserving-active-BH} decomposes into elementary energy-preserving terms:
\begin{equation}
    \label{eq:preserving-decomposition}
    \hat{\widetilde H}_\alpha = \sum_{(i,j)\in\mathcal P}\hat h_{ij} +\sum_{i=1}^m \hat g_i +\sum_{i=1}^m \hat d_i,
\end{equation}
where $\hat h_{ij}$ is an energy-preserving hopping term and $\hat g_i$ is an on-site energy-preserving polynomial, both defined in \cref{eq:evolution-finite}, and each $\hat d_i$ is a beamsplitter coupling between site $i$ and the ancilla mode $b$
\begin{align}
    \hat d_i= \frac{\eta}{\alpha}\bigl(\hat a_i\hat b^\dagger+\hat b \hat a_i^\dagger \bigr).
\end{align}
Hence, the active Bose--Hubbard simulation reduces to an energy-preserving Hamiltonian simulation problem.

Fix a total number cutoff $K$ and let $\Pi_{\le K}$ be the projector onto the joint system-ancilla subspace with at most $K$ photons.
We now approximate the evolution on the finite-dimensional subspace using the Lie--Trotter formula
\begin{equation}
    \label{eq:active-BH-trotter}
    \hat{\widetilde U}_r(t)= \Big[
        \Big(\prod_{(i,j)\in\mathcal P} e^{-i\delta \hat h_{ij}}\Big)
        \left(\prod_{i=1}^m e^{-i\delta \hat g_i}\right)
        \left(\prod_{i=1}^m e^{-i\delta \hat d_i}\right)
    \Big]^r,
    \qquad
    \delta=t/r.
\end{equation}
Since all factors are energy-preserving gates, $\hat{\widetilde U}_r(t)$ is an allowed energy-preserving circuit.
For the energy-preserving operators considered here, the cutoff norm of \cref{def:cutoff-norm} satisfies $\norm{X}_{(K)} = \norm{X\Pi_{\le K}}$. A first-order product-formula estimate in \cref{lem:first-order-bound} gives
\begin{equation}
    \label{eq:active-BH-PF-commutator}
    \norm {e^{-it\hat{\widetilde H}_\alpha}-\hat{\widetilde U}_r(t)} _{(K)} \le \frac{t^2}{2r}
    \sum_{u<v} \norm{[\hat H_u,\hat H_v]}_{(K)},
\end{equation}
where $\hat H_u$ and $\hat H_v$ denote terms from the collection in \cref{eq:preserving-decomposition}.

The Hamiltonian in Eq.\eqref{eq:preserving-decomposition} contains $L=|\mathcal P|+2m$ terms. Unlike the standard Bose--Hubbard case, the drive terms
$\hat d_i$ all share the ancillary mode $\hat b$ and therefore need
not commute even when they act on different system modes. Indeed, for
$i\neq j$,
\begin{equation}
[\hat d_i,\hat d_j] =
\frac{\eta^2}{\alpha^2}
\left(
\hat a_i^\dagger\hat a_j -
\hat a_j^\dagger\hat a_i
\right).
\end{equation}
We therefore use the conservative bound over all pairs of terms.
If $\lambda_K$ is an upper bound on the sector norm of every term in
Eq.\eqref{eq:preserving-decomposition}, then
\begin{equation}
\sum_{u<v}
\norm{
[\hat H_u,\hat H_v]}_{(K)}
\le
\frac{L(L-1)}{2}\,2\lambda_K^2
=
L(L-1)\lambda_K^2,
\label{eq:active-commutator-sum}
\end{equation}
where, for every pair $u<v$,
\begin{equation}
\norm{[\hat H_u,\hat H_v]}_{(K)}
\le
2\norm{\hat H_u}_{(K)}
\norm{\hat H_v}_{(K)}
\le
2\lambda_K^2,
\end{equation}
where $\lambda_K$ is chosen as a uniform upper bound on the sector
norms of all terms in Eq.\eqref{eq:preserving-decomposition}. In particular, we may take
\begin{equation}
\lambda_K
=
\max\left\{
2|J|K,\,
\frac{|U|}{2}K(K-1)+|\mu|K,\,
\frac{2|\eta|}{\alpha}K
\right\},
\end{equation}
so that
\begin{equation}
\|\hat H_u\|_{(K)}\le\lambda_K,
\end{equation}
where the terms in $\{...\}$ provide upper bounds on the sector norms of $\hat{h}_{ij}$, $\hat g_i$ and $\hat d_i$ for large $K$, respectively (see \cref{eq:normboundN}-\cref{eq:normboundg}). Therefore,
\begin{equation}
\label{eq:active-BH-PF-final}
\norm{e^{-it\hat{\widetilde H}_\alpha}-\hat{\widetilde U}_r(t)}_{(K)}
\le\frac{t^2}{2r}L(L-1)\lambda_K^2,
\end{equation}
where for a target Trotter error $\varepsilon_{\rm Trot}>0$ 
\begin{equation}
r=\poly(m,t,K,\eta / \alpha,1 /\varepsilon_{\mathrm{Trot}})
\end{equation}
for fixed $\nu,J,U,\mu$.

For the actual input state
$|\Phi\rangle=|\psi\rangle\otimes|\alpha\rangle_b$, let
$\Pi_{\le K}$ denote the projector onto the joint system-ancilla
subspace with total photon number at most $K$, and define
\begin{equation}
p_K
=
\langle\Phi|(\II-\Pi_{\le K})|\Phi\rangle.
\end{equation}
Since the mean total photon number of $|\Phi\rangle$ is
$E(0)+\alpha^2$, Markov's inequality gives
\begin{equation}
p_K
\le
\frac{E(0)+\alpha^2}{K+1}.
\end{equation}
Both $e^{-it\hat{\widetilde H}_\alpha}$ and $\hat{\widetilde U}_r(t)$
preserve the joint total photon number. Hence, using the product-formula
bound on $\Pi_{\le K}$ and the trivial bound
$\|\hat U-\hat V\|\le 2$ on its complement, we obtain
\begin{equation}
\begin{aligned}
\left\|
\left(
e^{-it\hat{\widetilde H}_\alpha}
-\hat{\widetilde U}_r(t)
\right)|\Phi\rangle
\right\|&\le
\frac{t^2}{2r}L(L-1)\lambda_K^2
+
2\sqrt{p_K}
\\
&\le
\frac{t^2}{2r}L(L-1)\lambda_K^2
+
2\sqrt{\frac{E(0)+\alpha^2}{K+1}}.
\end{aligned}
\end{equation}
\paragraph{Final error bound.} The total simulation error of an active Bose--Hubbard Hamiltonian using energy-preserving gates with the Trotter formula splits into the \emph{ancilla approximation error}, scaling as $O(\alpha^{-1})$ via \cref{eq:main-active-BH-bound} and the \emph{energy-preserving product-formula error}, scaling as $O(t^2/r)$ on each fixed total number sector via \cref{eq:active-BH-PF-final}. So, we have
\begin{equation}
\label{eq:final_error}
\norm{\hat U_{\mathrm{actBH}}(t)\ket{\psi}\otimes\ket{\alpha}_b-\hat{\widetilde U}_r(t)\ket{\psi}\otimes\ket{\alpha}_b} \le \frac{|\eta|t\sqrt {m E(t)}}{\alpha}
+ \frac{L(L-1)\lambda_K^2 t^2}{2r}+2\sqrt{\frac{E(0)+\alpha^2}{K+1}}.
\end{equation}

\subsection{Comparison with related approaches}
The present work is complementary to the truncation framework of Tong \emph{et al.} \cite{TongEtAl2022} where the Hamiltonian is decomposed, with respect to a local occupation number, into a part that changes the local quantum number and a part that preserves it. Their truncation bounds apply under additional assumptions: the initial state is supported, or well approximated, within a bounded local occupation sector, and the number-changing part of the Hamiltonian
satisfies the local-growth conditions of their framework. Under
these hypotheses, the required local cutoff can scale
polylogarithmically with $1/\varepsilon$ for the bosonic models
covered by their analysis. In particular, their direct truncation theorem
does not apply to arbitrary Bose--Hubbard
dynamics.
In contrast, our analysis exploits the exact total-number conservation of the standard Bose--Hubbard Hamiltonian and works directly in the invariant cutoff space $\mathcal H_{\leq K}^{(m)}$. Our approach consequently gives a first-order product-formula simulation whose number of Trotter steps is polynomial in $m,t,K$, and $1/\epsilon$, with linear dependence on $1/\epsilon$.

Kalajdzievski \emph{et al.} \cite{KalajdzievskiWeedbrookRebentrost2018} instead construct an explicit bosonic gate decomposition of the Bose--Hubbard evolution using Fourier, quadratic, cubic, quartic and controlled-phase gates. They also extend their construction to include the leading dipole-dipole interaction,
thereby showing how the same bosonic gate-decomposition strategy can be applied to an extended
Bose--Hubbard-type Hamiltonian with additional intersite interactions. Their elementary gate set therefore contains both particle energy-preserving and particle number-changing bosonic operations. Although the target Bose--Hubbard dynamics conserves the total particle number, their bosonic gate
decomposition does not preserve it gate by gate. In contrast, every factor in our standard Bose--Hubbard decomposition preserves the total particle number individually. Their implementation is also based on a first-order product-formula construction, with the required number of time slices scaling polynomially with the lattice size, evolution time, and inverse precision. Our approach yields polynomial scaling in these parameters, while additionally expressing the resource bound in terms of the conserved particle-number cutoff $K$. Since the two constructions employ different elementary gate sets and different treatments of the infinite-dimensional Hilbert space, their gate-count scalings are not directly interchangeable.
For the active Bose--Hubbard Hamiltonian, our work goes beyond the standard Bose--Hubbard setting treated in \cite{KalajdzievskiWeedbrookRebentrost2018}. While that work decomposes the standard Bose--Hubbard evolution into bosonic gates that include both active and energy-preserving operations, here we additionally consider an explicit number-changing coherent drive. Rather than implementing this drive directly by an active gate, we replace it by an energy-preserving interaction with a single coherent ancillary mode. The resulting enlarged Hamiltonian is compatible with the energy-preserving gate model, while \cref{thm:active-BH-passive} quantifies the additional approximation error through its dependence on the coherent-state amplitude $\alpha$. Thus, the coherent-ancilla construction provides an energy-preserving realization of driven Bose--Hubbard dynamics that is not considered in the gate decomposition of \cite{KalajdzievskiWeedbrookRebentrost2018}.

\section{Approximate Hermitian diagonalization}
\label{app:finite-precision-diagonalization}

The group commutator decomposition (\cref{lem:GC-decomp}) in our Solovay--Kitaev algorithm (\cref{sec:SK-details}) requires an efficient algorithm for approximate diagonalization of Hermitian matrices.
For our purposes, it suffices to give an approximately unitary matrix $\widehat{Q}$ and a diagonal matrix $\Lambda$, such that $\widehat Q\Lambda\widehat Q^\dagger\approx H$ (\cref{eq:finite-precision-diagonalization}).
It turns out the Jacobi method~\cite[Section~8.5]{golubMatrixComputations2013} suffices to achieve polynomial bit complexity.

\begin{theorem}
\label{thm:finite-precision-diagonalization}
Given a Hermitian $H\in\mathbb Q(i)^{d\times d}$ with $\norm{H}\leq1$, whose real and imaginary entry parts have numerators and denominators of at most $b$ bits, and an integer $p\geq1$, one can deterministically compute $\widehat Q\in\mathbb Q(i)^{d\times d}$ and a real rational diagonal matrix $\Lambda$ satisfying
\begin{equation}\label{eq:finite-precision-diagonalization}
    \norm{\widehat Q^\dagger\widehat Q-\II}\leq2^{-p},
    \qquad
    \norm{H-\widehat Q\Lambda\widehat Q^\dagger}\leq2^{-p},
\end{equation}
in $d^3\poly(b+p+\log d)$ bit operations.
\end{theorem}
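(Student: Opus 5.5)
We will run the classical cyclic Jacobi method in exact rational arithmetic with rounding to a fixed working precision $w=p+O(\log d)+O(\log(b+p))$ after every elementary operation. We maintain a working matrix $A_k$, initialized to $A_0=H$, and an accumulated transform $\widehat Q_k$, initialized to $\widehat Q_0=\II$. At each step we pick an off-diagonal index pair $(j,k)$. We compute a complex Givens (Jacobi) rotation $J$ that zeroes the $(j,k)$ entry, and update
\[
A\leftarrow \mathrm{round}_w(J^\dagger A J),\qquad \widehat Q\leftarrow\mathrm{round}_w(\widehat Q J).
\]
We round $\cos\theta$ and $e^{i\phi}\sin\theta$ to $w$ bits. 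Each rotation touches only two rows and two columns, so one rotation costs $O(d)$ arithmetic operations, and one sweep over all pairs costs $O(d^3)$ operations.

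The first step is convergence. By the standard analysis of cyclic Jacobi in Golub--Van Loan \S8.5, the off-diagonal Frobenius mass $\mathrm{off}(A)$ first decreases linearly and then, once $\mathrm{off}(A)$ is small relative to the eigenvalue gaps, quadratically. A subtlety is that quadratic convergence needs separated eigenvalues. To avoid any dependence on gaps, we would use the weaker but unconditional bound for the classical (largest-pivot) Jacobi method: each rotation gives $\mathrm{off}(A_{k+1})^2\le(1-2/(d(d-1)))\,\mathrm{off}(A_k)^2$. Reaching $\mathrm{off}\le 2^{-p-2}$ then takes $O(d^2 p)$ rotations, hence $O(d^3p)$ arithmetic operations once pivot search is included. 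Pivot search naively costs $O(d^2)$ per rotation; we would amortize it by maintaining a row-maximum array updated in $O(d)$ per rotation, keeping the total at $d^3\poly(p)$. We expect this to be the main obstacle. If the row-maximum bookkeeping fails to give $O(d)$ per step in the worst case, we fall back to the threshold cyclic Jacobi variant, which also has a gap-free linear rate per sweep.

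The second step is the rounding analysis. Each rotation is exactly unitary before rounding. Rounding perturbs $\widehat Q$ and $A$ by at most $O(d\,2^{-w})$ in operator norm, by Lemma~\ref{lem:b-precision}, and all intermediate norms stay $O(1)$. Over $N=O(d^2p)$ rotations, these errors accumulate additively. This gives
\[
\|\widehat Q^\dagger\widehat Q-\II\|\le O(N d 2^{-w}),\qquad \|H-\widehat Q A_N\widehat Q^\dagger\|\le O(N d 2^{-w}).
\]
The second bound follows by telescoping, using $A_{k+1}\approx J^\dagger A_kJ$ and $\widehat Q_{k+1}\approx\widehat Q_kJ$. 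Taking $w=p+O(\log d+\log p)$ makes both errors at most $2^{-p-1}$.

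The last step sets $\Lambda=\diag(A_N)$, rounded to real rationals. Dropping the off-diagonal part costs at most $\|\widehat Q\|^2\,\mathrm{off}(A_N)\le 2^{-p-1}$. The input bit length $b$ enters only in the initial rounding of $H$ to $w$ bits. Each arithmetic operation is on $O(w+b)$-bit numbers, so the total cost is $d^3\poly(b+p+\log d)$ bit operations.
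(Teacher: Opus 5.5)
Your overall route is the paper's: classical largest-pivot Jacobi, the gap-free per-rotation contraction $\operatorname{off}(A_{k+1})^2\le(1-2/(d(d-1)))\operatorname{off}(A_k)^2$, additive accumulation of rounding errors (the exact rotations are unitary), and working precision $w=O(p+\log d)$. There is, however, a genuine gap at the step you flag yourself. It is exactly the step that decides between $d^3$ and $d^4$.

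\textbf{The gap: pivot selection.} A row-maximum array cannot be maintained in $O(d)$ per rotation. A rotation on the pair $(u,v)$ changes both entries $(i,u)$ and $(i,v)$ in every row $i\notin\{u,v\}$. Their combined mass is preserved, but it may split so that both fall below the old second-largest entry of row $i$. Whenever the stored maximum of row $i$ was in column $u$ or $v$, you therefore have to rescan row $i$. This can happen for $\Theta(d)$ rows, costing $\Theta(d^2)$ per rotation and $\Theta(d^4p)$ overall.

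The fallback does not rescue this. To get $d^3$ from sweeps you would need a per-sweep contraction by a constant factor, and no such gap-free bound for cyclic or threshold Jacobi is available to cite. The elementary adaptive-threshold argument guarantees only one classical rotation's worth of decrease per $\Theta(d^2)$-cost sweep, which again gives $d^4$. As written, the complexity claim is therefore unproved.

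\textbf{Two ways to close it.}
\begin{itemize}
\item The paper stores all $|(A_k)_{uv}|^2$, $u<v$, in an indexed max-heap. Each of the $O(d)$ changed entries is then updated in $O(\log d)$, giving $O(d\log d)$ per rotation and $d^3\poly(b+p+\log d)$ in total.
\item You can keep your $O(d)$-per-rotation target by storing off-diagonal row masses $r_i=\sum_{j\ne i}|a_{ij}|^2$ instead of row maxima. For $i\notin\{u,v\}$, row $i$ of $J^\dagger AJ$ is row $i$ of $A$ multiplied by the unitary $J$, and $a_{ii}$ is unchanged, so $r_i$ is invariant (up to rounding, correctable in $O(1)$ per row). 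Rows $u$ and $v$ are recomputed in $O(d)$. The largest entry in the row of largest $r_i$ then satisfies $|a_{uv}|^2\ge\operatorname{off}(A)^2/(d(d-1))$, which is all the contraction bound uses.
\end{itemize}

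\textbf{Two smaller points.}
\begin{itemize}
\item Combine your convergence and rounding steps into the perturbed recursion $\operatorname{off}(A_{k+1})\le\varrho\operatorname{off}(A_k)+\eta$. Its fixed point $\eta/(1-\varrho)\le d^2\eta$ shows that the per-step rounding error must be of order $2^{-p}/d^2$.
\item Before stopping, the pivot satisfies $|a_{uv}|>2^{-p-2}/d$. This keeps the divisions in the rotation formula well conditioned at $O(p+\log d)$ bits.
\end{itemize}
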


\begin{proof}
The case $d=1$ is immediate. Suppose $d\geq2$ and put $\epsilon=2^{-p}$. We use Jacobi iteration, selecting an off-diagonal entry of largest magnitude at each step~\cite[Section~8.5]{golubMatrixComputations2013}. Rounding errors in Jacobi iteration were analyzed by Wilkinson~\cite[Chapter~5, Sections~18--20]{Wilkinson1965}. For completeness, we give a self-contained analysis for complex Hermitian matrices, with explicit bounds on the number of iterations and the working precision, yielding the stated bit-complexity bound.

We first show that each exact Jacobi rotation reduces the off-diagonal part of the matrix. We then bound the accumulated rounding errors and establish the required working precision and running time.

For a Hermitian matrix $B$, write
\begin{equation}
    \operatorname{off}(B)=\sqrt{\sum_{j\ne k}|B_{jk}|^2},
    \qquad \norm{B}_{\mathrm F}=\sqrt{\sum_{j,k}|B_{jk}|^2}.
\end{equation}
Thus, $\operatorname{off}(B)$ is the Frobenius norm of the off-diagonal part of $B$.
A Jacobi step selects an off-diagonal entry $B_{uv}$, $u<v$, of largest magnitude and applies a unitary acting only on coordinates $u$ and $v$ to eliminate this entry. If $B_{uv}=0$, set $J=\II$. Otherwise put $z=B_{uv}$, $x=|z|$, and $y=(B_{vv}-B_{uu})/2$. Let $J$ equal the identity outside its $u,v$ block, with
\begin{equation}
    t=\frac{\operatorname{sgn}(y)x}{|y|+\sqrt{y^2+x^2}},
    \qquad
    J_{\{u,v\},\{u,v\}}
    =\frac1{\sqrt{1+t^2}}
    \begin{pmatrix}
        1 & (z/x)t\\
        -(\overline z/x)t & 1
    \end{pmatrix},
    \label{eq:jacobi-bit-rotation}
\end{equation}
where $\operatorname{sgn}(0)=1$. Then $J$ is unitary.
We show that each Jacobi iteration decreases $\operatorname{off}(\cdot)$ by a factor of $\varrho$:
\begin{equation}
    \operatorname{off}(J^\dagger BJ)\leq\varrho\operatorname{off}(B),
    \qquad
    \varrho=\sqrt{1-\frac{2}{d(d-1)}},
    \qquad
    1-\varrho\geq d^{-2}.
    \label{eq:jacobi-bit-contraction}
\end{equation}
We follow the argument in~\cite[Eq.~(8.5.2) and Section~8.5.3]{golubMatrixComputations2013}, generalized to complex Hermitian matrices.
Let $C = J^\dagger B J$.
For $x>0$, direct multiplication gives
\begin{equation}
    (J^\dagger BJ)_{uv}
    =\frac{z}{x(1+t^2)}\bigl[x(1-t^2)-2yt\bigr]=0,
\end{equation}
since $t$ satisfies $xt^2+2yt-x=0$. The $vu$ entry also vanishes by Hermitian symmetry.
Hence $C_{\{u,v\},\{u,v\}}$ is diagonal (trivial for $x=0$).
Unitary invariance gives $\norm{C}_{\mathrm F}^2=\norm{B}_{\mathrm F}^2$ and
\begin{equation}
    \norm{C_{\{u,v\},\{u,v\}}}_{\mathrm{F}}^2 = C_{uu}^2+C_{vv}^2
    =B_{uu}^2+B_{vv}^2+|B_{uv}|^2+|B_{vu}|^2
    =B_{uu}^2+B_{vv}^2+2|B_{uv}|^2.
\end{equation}
Since $C_{kk}=B_{kk}$ for $k\notin\{u,v\}$, we obtain
\begin{equation}
        \operatorname{off}(C)^2
        =\norm{C}_{\mathrm F}^2-\sum_k C_{kk}^2
        =\norm{B}_{\mathrm F}^2-\left(\sum_k B_{kk}^2
          +2|B_{uv}|^2\right)
        =\operatorname{off}(B)^2-2|B_{uv}|^2.
\end{equation}
Since the largest squared magnitude is at least the average over the $d(d-1)/2$ entries above the diagonal, we have $\operatorname{off}(B)^2=2\sum_{j<k}|B_{jk}|^2\leq d(d-1)|B_{uv}|^2$, and \cref{eq:jacobi-bit-contraction} follows.

Choose
\begin{equation}
    k_{\max}=4d^2\bigl(p+\lceil\log_2d\rceil+4\bigr),
    \qquad
    \eta=\frac{\epsilon}{100(k_{\max}+d^2+1)}.
\end{equation}
Store all computed real and imaginary parts as \emph{dyadic rationals}, meaning integers divided by powers of two. Round $H$ to a Hermitian dyadic matrix $H_0$ with $\norm{H_0-H}_{\mathrm F}\leq\eta$, and set $\widehat Q_0=\II$. At each step choose an off-diagonal entry of largest magnitude in $H_k$, and approximate its exact rotation $J_k$ by a dyadic matrix $\widehat J_k$, leaving entries outside the chosen block unchanged. Compute
\begin{equation}
    \begin{aligned}
        H_{k+1}&=\operatorname{round}(\widehat J_k^\dagger H_k\widehat J_k)
        =J_k^\dagger H_kJ_k+E_k,\qquad \norm{E_k}_{\mathrm F}\leq\eta,\\
        \widehat Q_{k+1}&=\operatorname{round}(\widehat Q_k\widehat J_k)
        =\widehat Q_kJ_k+F_k,\qquad \norm{F_k}_{\mathrm F}\leq\eta.
    \end{aligned}
    \label{eq:jacobi-bit-updates}
\end{equation}
Here products are evaluated exactly before entrywise rounding, preserving Hermitian symmetry for $H_{k+1}$. Thus $E_k,F_k$ include both rotation approximation and product rounding. Below we justify these bounds using a fixed precision of $O(p+\log d)$ bits.

Stop when
\begin{equation}
    \operatorname{off}(H_k)\leq\epsilon/8.
    \label{eq:jacobi-bit-stopping}
\end{equation}
By \cref{eq:jacobi-bit-contraction,eq:jacobi-bit-updates},
\begin{equation}
    \operatorname{off}(H_{k+1})\leq\varrho\operatorname{off}(H_k)+\eta.
\end{equation}
Applying this bound repeatedly gives
\begin{equation}
    \operatorname{off}(H_k)\leq\varrho^k\operatorname{off}(H_0)+\frac{\eta}{1-\varrho}
             \leq2\sqrt d\,\varrho^k+d^2\eta.
\end{equation}
Since $\varrho\leq e^{-1/d^2}$, we have $2\sqrt d\,\varrho^{k_{\max}}\leq\epsilon/16$. Together with $d^2\eta\leq\epsilon/100$, this ensures termination after $k_*\leq k_{\max}$ rotations.

Let $Q_k=J_0\cdots J_{k-1}$. Since each exact $J_k$ is unitary, \cref{eq:jacobi-bit-updates} accumulates the errors $E_k,F_k$ without amplifying them. Including the initial rounding of $H$, this gives
\begin{equation}
    \norm{H_k-Q_k^\dagger H Q_k}_{\mathrm F}\leq(k+1)\eta,
    \qquad
    \norm{\widehat Q_k-Q_k}_{\mathrm F}\leq k\eta.
    \label{eq:jacobi-bit-accumulated-errors}
\end{equation}
In particular, $\norm{H_k},\norm*{\widehat Q_k}\leq2$ throughout. Output $\widehat Q=\widehat Q_{k_*}$ and $\Lambda=\operatorname{diag}(H_{k_*})$. Put $Q=Q_{k_*}$ and $R=\widehat Q-Q$. By \cref{eq:jacobi-bit-accumulated-errors}, $\norm{R}\leq\alpha:=k_*\eta$. Expanding $\widehat Q=Q+R$ gives
\begin{equation}
    \norm{\widehat Q^\dagger\widehat Q-\II}
    =\norm{Q^\dagger R+R^\dagger Q+R^\dagger R}
    \leq2\alpha+\alpha^2.
    \label{eq:jacobi-bit-unitarity}
\end{equation}
Using $\norm{H_{k_*}-\Lambda}\leq\operatorname{off}(H_{k_*})$, $\norm{\Lambda}\leq2$, and \cref{eq:jacobi-bit-stopping,eq:jacobi-bit-updates}, we obtain
\begin{align}
    \norm{H-\widehat Q\Lambda\widehat Q^\dagger}
    &\leq\norm{Q^\dagger HQ-H_{k_*}}+\norm{H_{k_*}-\Lambda}
      +\norm{R\Lambda Q^\dagger+Q\Lambda R^\dagger+R\Lambda R^\dagger}\notag\\
    &\leq\norm{H_0-H}_{\mathrm F}+\sum_{j=0}^{k_*-1}\norm{E_j}_{\mathrm F}
      +\operatorname{off}(H_{k_*})+2(2\alpha+\alpha^2)\notag\\
    &\leq(k_*+1)\eta+\epsilon/8+2(2\alpha+\alpha^2).
    \label{eq:jacobi-bit-reconstruction}
\end{align}
The right-hand sides of \cref{eq:jacobi-bit-unitarity,eq:jacobi-bit-reconstruction} are at most $\epsilon$, since $\alpha\leq\epsilon/100$ and $(k_*+1)\eta\leq\epsilon/100$.

To bound the required precision, note that before stopping the selected entry satisfies $x=|(H_k)_{uv}|>\epsilon/(8d)$. Hence every denominator in \cref{eq:jacobi-bit-rotation} is at least $\epsilon/(8d)$ and every square-root argument is at least $(\epsilon/(8d))^2$. The bounds $x,|y|\leq2$ also keep all intermediate magnitudes bounded by a universal constant. Binary arithmetic, using bisection for square roots, therefore computes a dyadic approximation with
\begin{equation}
    \norm{\widehat J_k-J_k}_{\mathrm F}\leq\xi:=\eta/100,
    \qquad
    w=O\!\left(\log\frac d\epsilon+\log\frac1\eta\right)=O(p+\log d)
\end{equation}
binary digits after the point.

Round each real and imaginary entry part to error at most $\eta/(4d)$, keeping the diagonal of $H_{k+1}$ real and rounding conjugate entries consistently. This contributes at most $\eta/2$ in Frobenius norm. Together with $\norm*{\widehat J_k-J_k}_{\mathrm F}\leq\xi$ and $\norm{H_k},\norm*{\widehat Q_k}\leq2$, this gives
\begin{align}
    \norm{E_k}_{\mathrm F}
    &\leq\norm{\widehat J_k^\dagger H_k\widehat J_k-J_k^\dagger H_kJ_k}_{\mathrm F}+\eta/2\notag\\
    &\leq\norm{H_k}(2\xi+\xi^2)+\eta/2
      \leq4\xi+2\xi^2+\eta/2\leq\eta,\\
    \norm{F_k}_{\mathrm F}
    &\leq\norm*{\widehat Q_k}\xi+\eta/2
      \leq2\xi+\eta/2\leq\eta.
\end{align}
Induction on $k$ justifies both the norm bounds and the rounding procedure, using \cref{eq:jacobi-bit-accumulated-errors}.

To select the largest off-diagonal entry efficiently, store the squared magnitudes $|(H_k)_{uv}|^2$, $u<v$, in a priority queue~\cite[Section~6.5]{cormen2009introduction}, implemented via an indexed max-heap that stores the heap position of each pair $(u,v)$. This data structure gives access to the largest stored value in $O(1)$ scalar operations and allows the value associated with any given pair $(u,v)$ to be updated in $O(\log d)$ scalar operations. Separately maintain the sum $\operatorname{off}(H_k)^2$. Each rotation changes only $O(d)$ matrix entries, so updating these quantities costs $O(d\log d)$ scalar operations per iteration. The matrix updates require only $O(d)$ scalar operations because each rotation acts on two coordinates. All stored entries and intermediate dyadic products have $O(w+\log d)$ bits, and each rotation is computable in $\poly(w)$ bit operations. Including initial rounding, the total cost is therefore
\begin{equation}
    d^2\poly(b+w)+k_{\max}d\log d\,\poly(w)
    =d^3\poly(b+p+\log d).\qedhere
\end{equation}
\end{proof}

\end{document}